\setlist[enumerate,1]{label={(\alph*)}}
\setlist[enumerate,2]{label={(\roman*)}}
\newtheorem{thm}{Theorem}[section]
\newtheorem{prop}[thm]{Proposition}
\newtheorem{pr}[thm]{Proposition}
\newtheorem{lemma}[thm]{Lemma}
\newtheorem{cor}[thm]{Corollary}
\newtheorem{st}{Step}
\newtheorem{open}{Open problem}
\newtheorem{construction}{Construction$\backslash$Notation}
\newtheorem{conj}{Conjecture}
\newcommand{\p}{\partial}
\theoremstyle{definition}
\newtheorem{definition}[thm]{Definition}
\newtheorem{nn}[thm]{Notation}
\theoremstyle{remark}
\newtheorem{rmk}[thm]{Remark}
\newtheorem{ex}[thm]{Example}
\newtheorem{obs}[thm]{Observation}
\newcommand{\R}{\mathbb R}
\newcommand{\C}{\mathbb C}
\newcommand{\Q}{Q}
\newcommand{\q}{q}
\newcommand{\FFF}{\Psi}
\newcommand{\mmm}{\texttt{m}}
\newcommand{\PPP}{{P_0}}
\newcommand{\Z}{\mathbb Z}
\newcommand{\CM}{{\mathcal{M}}}
\newcommand{\oCM}{{\overline{\mathcal{M}}}}
\newcommand{\RCM}{{\mathcal{M}}^{\mathbb R}}
\newcommand{\oRCM}{{\overline{\mathcal{M}}}^{\mathbb{R}}}
\newcommand{\SC}{\text{Cont}}
\newcommand{\oCC}{{\overline{\mathcal{C}}}}
\newcommand{\CL}{{\mathbb{L}}}
\newcommand{\CCCL}{{\mathcal{L}}}
\newcommand{\SL}{{\mathbb{S}}}
\newcommand{\CF}{{\mathcal{F}}}
\newcommand{\CS}{{\mathcal{S}}}
\newcommand{\CB}{{\mathcal{B}}}
\newcommand{\CG}{{\mathcal{G}}}
\newcommand{\RCG}{{\mathcal{G}}^\mathbb{R}}
\newcommand{\pu}{{\partial^!}}
\newcommand{\blangle}{{\big\langle}}
\newcommand{\bblangle}{{\big\langle}{\big\langle}}
\newcommand{\brangle}{{\big\rangle}}
\newcommand{\bbrangle}{{\big\rangle}{\big\rangle}}
\newcommand{\D}{\mathcal{IT}}
\newcommand{\oR}{\mathcal{R}}
\newcommand{\F}{\mathcal{F}}
\newcommand{\Y}{\mathcal{X}}
\newcommand{\CCCG}{\mathcal{R}}
\newcommand{\oSRc}{{\mathcal{OSR}}}
\newcommand{\oSR}{\mathcal{SR}}
\newcommand{\oRc}{{\mathcal{OR}}}
\newcommand{\sig}{\sigma}
\newcommand{\K}{K}
\newcommand{\Or}{\text{alt}}
\newcommand{\tw}{\text{tw}}
\newcommand{\Norm}{\text{Norm}}
\newcommand{\NNN}{\text{Norm}}
\newcommand{\RGamma}{\Gamma^{\mathbb{R}}}
\newcommand{\B}{\mathcal{B}}
\newcommand{\I}{\mathcal{I}}
\newcommand{\BBB}{{0}}
\newcommand{\III}{{0}}
\newcommand{\Spin}{\text{Spin}}
\newcommand{\comb}{{\text{comb}}}
\newcommand{\Rcomb}{\text{comb}^\R}
\newcommand{\N}{\mathcal V}
\newcommand{\CBB}{{\widetilde{\CB}}}
\newcommand{\PP}{{\mathcal{P}_0}}
\newcommand{\pp}{{\mathbf{p}}}
\newcommand{\HN}{\text{HN}}
\DeclareMathOperator{\ior}{\text{int}}
\renewcommand{\mathring}{\ior}
\begin{document}
\title{The combinatorial formula for open gravitational descendents}
\author{Ran J. Tessler}
\address{Weizmann Institute for Science, \newline Rehovot, Israel}
\email{ran.tessler@weizmann.ac.il}

\begin{abstract}
In \cite{PST,ST},
descendent integrals on the moduli space of Riemann surfaces with boundary are defined.
It was conjectured in \cite{PST} that the generating function of these integrals satisfies the open KdV equations.
In this paper we prove a formula for these integrals in terms of sums of Feynman diagrams. This formula is a generalization of Kontsevich's combinatorial formula \cite{Kont} to the open setting. In order to overcome the main challenges of the open setting, which are orientation questions and the existence of boundary and boundary conditions, new techniques are developed. These techniques, which are interesting in their own right, include a characterization of graded spin structure in terms of open and nodal Kasteleyn orientations and a new formula for the angular form of $S^{2n-1}-$bundles.

Based on the work presented here, the conjecture of \cite{PST} was proved in \cite{BT}.
\end{abstract}

\maketitle

\pagestyle{plain}

\tableofcontents
\section{Introduction}
The study of the intersection theory on the moduli space of open Riemann surfaces was initiated in~\cite{PST}. The authors constructed a descendent theory in genus~$0$ and obtained a complete description of it. In all genera, they conjectured that the generating series of the descendent integrals satisfies the open KdV equations. This conjecture can be considered as an open analog of Witten's famous conjecture~\cite{Witten}.

The construction of the positive genus analog will appear in \cite{ST}, and is reviewed here. 
A physical interpretation of these constructions can be found in \cite{DijkWit}.

In this paper, after recalling the constructions of \cite{PST},\cite{ST}, we prove a formula for all the descendent integrals as sums over amplitudes of special Feynman diagrams which we call odd critical nodal ribbon graphs. With this formula one can effectively calculate all the open descendents.

Based on this formula, the conjecture of \cite{PST} is proved in \cite{BT}, and a calculation of finer invariants, is performed in \cite{ABT}.

\subsection{Witten's conjecture}\label{wittc}
\subsubsection{Intersection numbers}

Denote by $\CM_{g,l}$ the moduli space of compact connected Riemann surfaces with $l$ distinct marked points.
P.~Deligne and D.~Mumford defined a natural compactification of it via stable curves in~\cite{DM} in 1969.
Given $g,l,$ a stable curve is a compact connected complex curve with $l$ distinct marked points and finitely many singularities, all of which are simple nodes. We require the automorphism group of the surface to be finite, and the marked points and nodes are all distinct.
The moduli space of stable curves of fixed $g,l$ is denoted $\oCM_{g,l}.$
It is known that this space is a non-singular
complex orbifold of complex dimension $3g-3+l.$ For the basic theory the reader is referred to~\cite{DM,HM}.

In his seminal paper~\cite{Witten}, E.~Witten, motivated by theories of $2$-dimensional quantum gravity, initiated new directions in the study of $\oCM_{g,l}$. For each marking index $i$
he considered the tautological line bundles
$$\CL_i \rightarrow \oCM_{g,l}$$
whose fiber over a point
$$[\Sigma,z_1,\ldots,z_l]\in \oCM_{g,l}$$
is the complex cotangent space $T_{z_i}^*\Sigma$ of $\Sigma$ at $z_i$. Let
$$\psi_i\in H^2(\oCM_{g,l},\mathbb{Q})$$
denote the first Chern class of $\CL_i$, and write
\begin{equation}\label{products}
\blangle\tau_{a_1} \tau_{a_2} \cdots \tau_{a_l}\brangle_g^c:=\int
_{\overline {M}_{g,l}} \psi_1^{a_1} \psi_2^{a_2} \cdots \psi_l^{a_l}.
\end{equation}
The integral on the right-hand side of~\eqref{products} is well-defined, when the stability condition
$$2g-2+l >0$$
is satisfied, all the $a_i$'s are non-negative integers, and the dimension constraint
\begin{equation*}
3g-3+l=\sum_i a_i
\end{equation*}
holds. In all other cases $\blangle\prod_{i=1}^{l} \tau_{a_i}\brangle_g^c$ is defined to be zero.
The intersection products \eqref{products} are often called {\em descendent integrals} or {\em intersection numbers}.

Let $t_i$ (for $i\geq 0$) and $u$ be formal variables, and put
$$\gamma:=\sum_{i=0}^{\infty} t_i \tau_i.$$
Let
$$F^c_g(t_0, t_1,\ldots):=\sum_{n=0}^{\infty} \frac{\blangle\gamma^n\brangle^c_g}{n!}$$
be the generating function of the genus $g$ descendent integrals \eqref{products}. The bracket $\blangle\gamma^n\brangle^c_g$ is defined by the monomial expansion and the
multilinearity in the variables $t_i$.
The generating series
\begin{equation}\label{v34}
F^c:=\sum_{g=0}^{\infty} u^{2g-2} F_g^c
\end{equation}
is called the {\it (closed) free energy}. The exponent $\tau^c:=\exp(F^c)$ is called the {\it (closed) partition function}.

\subsubsection{KdV equations}

Set $\blangle\blangle\tau_{a_1} \tau_{a_2} \cdots \tau_{a_l}\brangle\brangle^c:=\frac{\p^l F^c}{\p t_{a_1}\p t_{a_2}\cdots\p t_{a_l}}$. Witten's conjecture~(\cite{Witten}) says that the closed partition function~$\tau^c$ becomes a tau-function of the KdV hierarchy after the change of variables~$t_n=(2n+1)!!T_{2n+1}$. In particular, it implies that the closed free energy~$F^c$ satisfies the following system of partial differential equations $(n\geq1)$:
\begin{align*}\label{eq:KdV equations}
(2n+1)u^{-2} &\blangle\blangle\tau_n \tau_0^2 \brangle\brangle^c = \\\notag
&\blangle\blangle\tau_{n-1} \tau_0\brangle\brangle^c\blangle\blangle\tau_0^3\brangle\brangle^c +
2\blangle\blangle\tau_{n-1}\tau_0^2\brangle\brangle^c\blangle\blangle\tau_0^2\brangle\brangle^c+
\frac{1}{4}\blangle\blangle\tau_{n-1} \tau_0^4\brangle\brangle^c.
\end{align*}
These equations are known in mathematical physics as the KdV equations. E.~Witten~(\cite{Witten}) proved that the intersection numbers~\eqref{products} satisfy the string equation
\begin{gather*}
\left\langle\tau_0 \prod_{i=1}^{l} \tau_{a_i}\right\rangle^c_g =
\sum_{j=1}^{l} \left\langle\tau_{a_j-1} \prod_{i\neq j} \tau_{a_i}\right\rangle^c_g,
\end{gather*}
for $2g-2+l>0$.
E.~Witten has shown that the KdV equations, together with the string equation determine the closed free energy $F^c$ completely.
R.~Dijkgraaf, E.~Verlinde and H.~Verlinde (\cite{DVV}) reformulated an alternative description to Witten's conjecture, in terms of the Virasoro algebra, and they have shown that the two descriptions are equivalent.

\subsection{Kontsevich's Proof}

Witten's conjecture was proved by M.~Kontsevich~\cite{Kont}.
The proof of~\cite{Kont} consisted of two parts.
The first part was to prove a combinatorial formula for the gravitational descendents. Let~$\CCCG_{g,n}$ be the set of isomorphism classes of trivalent ribbon graphs of genus~$g$ with~$n$ marked faces. Denote by~$V(G)$ the set of vertices of a graph $G\in \CCCG_{g,n}$. Introduce formal variables $\lambda_i$,~$i\in [n]$. For an edge $e\in E(G),$ let $\lambda(e):=\frac{1}{\lambda_i+\lambda_j},$ where~$i$ and~$j$ are the numbers of faces adjacent
to~$e$. The following formula holds
\begin{equation}\label{eq:Kontsevich's formula}
\sum_{a_1,\ldots,a_n\geq 0}\left<\prod_{i=1}^{n} \tau_{a_i}\right>_g^c\prod_{i=1}^n\frac{(2a_i-1)!!}{\lambda_i^{2a_i+1}} = \sum_{G\in
\CCCG_{g,n}}\frac{2^{|E(G)|-|V(G)|}}{|\text{Aut}(G)|}\prod_{e\in E(G)}\lambda(e).
\end{equation}
The second step of Kontsevich's proof was to translate the combinatorial formula into a matrix integral.
Then, by using non trivial analytical tools and the theory of the KdV hierarchy, he was able to prove that $F^c$ satisfies the KdV equations~\eqref{eq:KdV equations}.
Other proofs for Witten's conjecture were given, see for example ~\cite{Mir,OP}.
\subsection{Open intersection numbers and the open KdV equations}

\subsubsection{Open intersection numbers}

In \cite{PST} R. Pandharipande, J. Solomon and the author constructed an intersection theory on the moduli space of stable marked disks. Let~$\oCM_{0,k,l}$ be the moduli space of stable marked
disks with~$k$ boundary marked points and~$l$ internal marked points. This space carries a natural structure of a compact smooth oriented manifold with corners. One can easily define
the tautological line bundles $\CL_i,$ for an internal marking $i,$ as in the closed case.

In order to define gravitational descendents, we must specify boundary conditions. 
The main construction in~\cite{PST} is a construction of boundary conditions for $\CL_i\to\oCM_{0,k,l}.$ In~\cite{PST}, vector spaces $\CS_i = \CS_{i,0,k,l}$ of
\emph{multisections} of $\CL_i\to\partial\oCM_{0,k,l},$ which satisfy the following requirements, were defined. Suppose $a_1,\ldots,a_l$ are non-negative integers with $2\sum_i
a_i=\dim_\R\oCM_{0,k,l}=k+2l-3,$ then
\begin{enumerate}
\item\label{it:canonic1}
For any generic choice of multisections $s_{ij}\in\CS_i,$ for $1\leq j\leq a_i,$ the multisection
$$
s=\bigoplus_{\substack{i\in [l]\\1\leq j\leq a_i}} s_{ij}
$$
vanishes nowhere on $\partial\oCM_{0,k,l}.$
\item\label{it:canonic2}
For any two such choices $s$ and $s'$ we have
$$\int_{\oCM_{0,k,l}}e(E,s) = \int_{\oCM_{0,k,l}}e(E,s'),$$
where $E=\bigoplus_i\CL_i^{a_i}$ and $e(E,s)$ is the relative Euler class.
\end{enumerate}
The multisections $s_{ij}$, as above, are called \emph{canonical}. With this construction the open gravitational descendents in genus~$0$ are defined by
\begin{equation}
\label{openproducts}
\blangle\tau_{a_1} \tau_{a_2} \cdots \tau_{a_l}\sigma^k\brangle_0^o:=2^{-\frac{k-1}{2}}\int_{\oCM_{0,k,l}}e(E,s),
\end{equation}
where $E$ is as above and $s$ is canonical.

In a forthcoming paper \cite{ST}, J.~Solomon and R.T. define a generalization for all genera. Suppose $g,k,l$ are such that
\begin{equation}\label{eq:restrictions}
2g-2+k+2l>0, 2|g+k-1.
\end{equation}
In \cite{ST} a moduli space $\oCM_{g,k,l},$ which classifies stable surfaces with boundaries and some extra structure, is constructed (see Subsection \ref{subsec:graded} for a precise definition). The moduli space $\oCM_{g,k,l}$ is a smooth oriented compact orbifold with corners, of real dimension
\begin{gather}\label{open dimension}
3g-3+k+2l.
\end{gather}
Note that naively, without adding an extra structure, the moduli of real stable curves of positive genus is non orientable.

Again, on $\oCM_{g,k,l}$ one defines vector spaces $\CS_i=\CS_{i,g,k,l}$, for
$i\in[l],$ for which the genus $g$ analogs of requirements \ref{it:canonic1},\ref{it:canonic2} from above hold.
Write
\begin{equation}
\label{openproducts}
\blangle\tau_{a_1} \tau_{a_2} \cdots \tau_{a_l}\sigma^k\brangle_g^o:=2^{-\frac{g+k-1}{2}}\int_{\oCM_{g,k,l}}e(E,s),
\end{equation}
for the corresponding higher genus descendents. Introduce one more formal variable $s$. The \emph{open free energy} is the generating function
\begin{equation}\label{eq:gen_func}
F^o(s,t_0,t_1,\ldots;u) := \sum_{g=0}^\infty u^{g-1} \sum_{l=0}^\infty
\frac{\blangle {\gamma^l \delta^k}
\brangle_g^o}{n!k!},
\end{equation}
where $\gamma:=\sum_{i\ge 0} t_i\tau_i$, $\delta:=s\sigma$, and again we use the monomial expansion and the multilinearity in the variables $t_i,s.$

The descriptions of $\oCM_{g,k,l}$ and its construction, and of the boundary conditions and their construction are given in Section \ref{sec:geom}.
Throughout this article we shall write $\blangle {\cdots}\brangle$ for $\blangle {\cdots}\brangle_g^o,$ as closed descendents will not be considered, and the genus can be read from the numbers $k,l,a_1,\ldots,a_l.$
\subsubsection{Open KdV}

The following initial condition follows easily from the definitions (\cite{PST}):
\begin{gather}\label{eq:open initial condition}
\left.F^o\right|_{t_{\ge 1}=0}=u^{-1}\frac{s^3}{6}+u^{-1}t_0 s.
\end{gather}
In~\cite{PST} the authors conjectured the following equations:
\begin{align}\label{gvvt}
\frac{\partial F^o}{\partial t_0} =& \sum_{i=0}^{\infty}t_{i+1}\frac{\partial F^o}{\partial t_i} + u^{-1}s,\\
\frac{\partial F^o}{\partial t_1} =& \sum_{i=0}^{\infty}\frac{2i+1}{3}t_{i}\frac{\partial F^o}{\partial t_i} + \frac{2}{3}s\frac{\partial F^o}{\partial s}+\frac{1}{2}.\label{eq:open dilaton}
\end{align}
They were called the \emph{open string} and the \emph{open dilaton} equation correspondingly. These equation were geometrically proved in \cite{PST} for $g=0,$ and for all genera in \cite{ST}.

Put
$\blangle\blangle\tau_{a_1} \tau_{a_2} \cdots \tau_{a_l} \sigma^k
\brangle\brangle^o :=
\frac{\p^{l+k} F^o}{\p t_{a_1}\p t_{a_2}\cdots\p t_{a_l}\p s^k}.$ The main conjecture in \cite{PST} was

\begin{conj}[Open KdV conjecture]\label{conjecture:open KdV}
The following system of equations is satisfied:
\begin{multline}\label{eq:openkdv}
(2n+1)u^{-1}\bblangle \tau_n \bbrangle^o=u\bblangle
\tau_{n-1} \tau_0\bbrangle^c  \bblangle \tau_0\bbrangle^o -\frac{u}{2} \bblangle \tau_{n-1}\tau_0^2 \bbrangle^c\\
 + 2    \bblangle \tau_{n-1}\bbrangle^o
\bblangle\sigma\bbrangle^o + 2 \bblangle \tau_{n-1} \sigma \bbrangle^o,\quad n\ge 1.
\end{multline}
\end{conj}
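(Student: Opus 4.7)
The plan is to deduce Conjecture \ref{conjecture:open KdV} from the combinatorial formula that the present paper establishes — an expression of $\blangle\tau_{a_1}\cdots\tau_{a_l}\sigma^k\brangle_g^o$ as a weighted sum over odd critical nodal ribbon graphs — following the same overall route that Kontsevich used to pass from \eqref{eq:Kontsevich's formula} to the closed KdV hierarchy. Thus the geometric problem is traded for a purely combinatorial/analytic one: to show that the generating function $F^o$, assembled from graph sums, together with $F^c$, satisfies the open KdV system \eqref{eq:openkdv} subject to the initial condition \eqref{eq:open initial condition}.

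First I would repackage the graph sum, via a Laplace transform in auxiliary parameters $\lambda_i$ attached to internal faces, into an open analog of Kontsevich's matrix integral. The edges of a closed ribbon graph contribute factors $\frac{1}{\lambda_i+\lambda_j}$, which appear as Wick contractions for a Hermitian matrix integral; the boundary faces and half-edges of an open ribbon graph should be represented by additional Gaussian integrations — for instance, over a rectangular or vector block — coupled to the bulk Hermitian matrix. The parity constraint (oddness) on the critical nodal ribbon graphs and the factor $2^{-(g+k-1)/2}$ in \eqref{openproducts} must be absorbed into the integrand so that its formal expansion reproduces $\exp(F^o+u^{-2}F^c)$ under the change of variables $t_n=(2n+1)!!\,T_{2n+1}$.

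Next I would derive loop (Schwinger--Dyson) equations for this open matrix integral by performing infinitesimal changes of variables on the Hermitian and boundary blocks. These manipulations yield an open Virasoro-type system. The two lowest constraints should recover the geometrically established open string and open dilaton equations \eqref{gvvt}, \eqref{eq:open dilaton}, providing a useful consistency check; the higher constraints, combined with \eqref{eq:open initial condition}, are then to be shown equivalent to \eqref{eq:openkdv}. The closed correlators $\bblangle\tau_{n-1}\tau_0\bbrangle^c$ and $\bblangle\tau_{n-1}\tau_0^2\bbrangle^c$ on the right-hand side of \eqref{eq:openkdv} should emerge as the 'bulk' piece of the loop equations — contributions in which no boundary variable is contracted — while the mixed terms involving $\bblangle\cdots\bbrangle^o$ come from loop-equation contributions that do involve the boundary block.

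The main obstacle is the faithful translation of the combinatorial formula into a matrix model: the odd/critical constraints on nodal ribbon graphs and the precise bookkeeping of boundary half-edges and $\sigma$-insertions must correspond exactly to the Wick expansion of a carefully chosen integrand, and the powers of $2$ and the square-root normalisation in \eqref{openproducts} must reproduce the specific coefficients $(2n+1)$, $2$, and $\tfrac12$ appearing in \eqref{eq:openkdv}. Once the matrix model is pinned down, extracting KdV from the Virasoro constraints is standard; the delicate step is setting up the correct open integrand — or, as an alternative, using topological-recursion style arguments on the graph sum directly. This program is what \cite{BT} carries out on top of the combinatorial formula proved in the present paper.
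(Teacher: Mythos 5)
The statement you are asked to prove is stated in the paper as a \emph{conjecture}, not a theorem, and the paper contains no proof of it. The paper's own contribution is Theorem \ref{thm:comb_model} (the combinatorial formula expressing open descendents as weighted sums over odd critical nodal ribbon graphs), and the open KdV equations \eqref{eq:openkdv} are explicitly deferred to \cite{BT}, which proves the conjecture \emph{using} that combinatorial formula as input. So there is no argument in the present text against which your proposal can be checked step by step; what you have written is a sketch of the program carried out in \cite{BT}, not a proof contained in or implied by this paper.

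On the merits of the sketch: the high-level route you describe (Laplace transform of the graph sum to a Kontsevich-type matrix model, then extracting an integrable-hierarchy statement) is indeed the strategy of \cite{BT}, and you correctly flag the two genuinely hard points — (i) encoding the oddness/criticality constraints, the Catalan weights on boundary edges, and the powers of $2$ into a single Gaussian integrand, and (ii) matching the resulting constraints to \eqref{eq:openkdv} given the initial condition \eqref{eq:open initial condition}. However, every load-bearing step is left as ``would derive'' or ``should emerge.'' In particular, the claim that Schwinger--Dyson / Virasoro manipulations on the open block will produce exactly \eqref{eq:openkdv} is asserted rather than argued, and this is where the actual content lies: \cite{BT} does not simply read off KdV from loop equations, but identifies the open partition function with a \emph{wave function} of the KdV hierarchy (cf.\ also \cite{Bur}), which is a different and more delicate mechanism than the closed-sector reduction from Virasoro to KdV. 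Absent that identification, your outline does not close the gap between ``matrix model with bulk and boundary blocks'' and ``system \eqref{eq:openkdv} holds.'' As a blind attempt it correctly locates the difficulties and correctly credits \cite{BT}, but it should be understood as a roadmap pointing outside this paper rather than as a proof of the statement.
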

\noindent In~\cite{PST} equations~\eqref{eq:openkdv} were called the \emph{open KdV equations}. It is easy to see that~$F^o$ is fully determined by the open KdV equations~\eqref{eq:openkdv}, the initial condition~\eqref{eq:open initial condition} and the closed free energy~$F^c$. They have also conjectured a Virasoro-type conjecture which also fully describes the open descendents. Both conjectures were proved in \cite{PST} for $g=0.$ In \cite{Bur} Buryak has proved the equivalence of the two conjectures.
Based on the work presented here, the conjecture was proven for all genus in \cite{BT}, see Subsection \ref{subsec:further} below for more details.

\subsection{The open combinatorial formula}\label{opencomb}
%
Here and below the genus of a Riemann surface with boundary $\Sigma,$ smooth or nodal, is defined as the usual genus of the doubled surface obtained from gluing two copies of $\Sigma$ along the common boundary $\partial\Sigma.$
\begin{definition}\label{def:smooth_trivalent}
Let $g,k,l$ be non-negative integers which satisfy conditions \ref{eq:restrictions},
and let $\B,\I$ be sets with $|\B|=k,|\I|=l,$ and let $(\Sigma,\{x_i\}_{i\in\B},\{z_i\}_{i\in\I}),$ be a genus $g$ surface with boundary, whose genus is $g,~\B$ is its set of boundary markings, and $\I$ its set of internal markings.
A \emph{$(g,\B,\I)$-smooth trivalent ribbon graph} is an embedding $\iota:G\to\Sigma$ of a connected graph~$G$ into $(\Sigma,\{x_i\}_{i\in\B},\{z_i\}_{i\in\I}),$ such that
\begin{enumerate}
\item $\{x_i\}_{i\in \B}\subseteq \iota(V(G)),$ where $V(G)$ is the set of vertices of~$G.$ We henceforth consider $\{x_i\}$ as vertices.
\item The degree of every $x_i$ is $2.$
\item The degree of any vertex $v\in V(G)\setminus\{x_i\}_{i\in\B}$ is $3.$
\item $\partial\Sigma\subseteq \iota(G)$.
\item If $l \geq 1,$ then $$\Sigma\setminus\iota(G)=\coprod_{i\in \I} D_i,$$ where each $D_i$ is a topological open disk, with $z_i\in D_i$. We call the disk $D_i$ the face marked $i.$
\item If $l=0$, then $\iota(G)=\partial\Sigma,$ and $k=3.$ Such a component is called \emph{trivalent ghost}.
\end{enumerate}
The genus $g(G)$ of the graph~$G$ is the genus of $\Sigma$. The number of the boundary components of~$G$ or~$\Sigma$ is denoted by $b(G)$ and $V^I(G)$ stands for the set of internal vertices. 
Denote by~$B(G)$ the set of boundary marked points~$\{x_i\}_{i\in\B},~I(G)\simeq\I$ is the set of faces.

%
%
%
%
%
\end{definition}
\begin{definition}
An \emph{odd critical nodal ribbon graph} is $G=\left(\coprod_i G_i\right)/N$, where
\begin{enumerate}
\item $\iota_i\colon G_i\to\Sigma_i$ are smooth trivalent ribbon graphs.
\item $N\subset (\cup_i V(G_i))\times(\cup_i V(G_i))$ is a set of \emph{ordered} pairs of boundary marked points $(v_1,v_2)$ of the $G_i$'s which we identify. After the identification of the vertices~$v_1$ and~$v_2$ the corresponding point in the graph is called a node. The vertex~$v_1$ is called the legal side of the node and the vertex~$v_2$ is called the illegal side of the node.
\item Ghost components do not contain the illegal sides of nodes.
\item For any component $G_i,$ any boundary component of it contains an odd number of points which are either marked points or legal sides of nodes.
%
\end{enumerate}
We require that elements of $N$ are disjoint as sets (without ordering).

The set of edges $E(G)$ is composed of the internal edges of the $G_i$'s and of the boundary edges.
The boundary edges are the boundary segments between successive vertices which are not the illegal sides of nodes. For any boundary edge $e$ we denote by $m(e)$ the number of the
illegal sides of nodes lying on it. The boundary marked points of~$G$ are the boundary marked points of~$G_i$'s, which are not nodes. The set of boundary marked points of~$G$ will
be denoted by~$B(G),$ the set of faces by $I(G).$
%

An odd critical nodal ribbon graph is naturally embedded into the nodal surface $\Sigma=\left(\coprod_i\Sigma_i\right)/N$. The genus of the graph is defined as the genus of $\Sigma.$
A $(g,k,l)-$odd critical nodal ribbon graph is a connected odd critical nodal ribbon graph, together with a pair of bijections, $\mmm^B:B(G)\to[k],\mmm^I:I(G)\to[l],$
called markings. 

Two marked odd critical nodal ribbon graphs $\iota\colon G\to\Sigma,~\iota'\colon G'\to\Sigma'$ are isomorphic, if there is an orientation preserving homeomorphism
$\Phi\colon(\Sigma,\{z_i\},\{x_i\})\to(\Sigma',\{z'_i\},\{x'_i\}),$ of marked surfaces, and an isomorphism of graphs $\phi\colon G\to G'$, such that
\begin{enumerate}
\item
$\iota'\circ\phi = \Phi\circ\iota.$
\item
The maps preserve the markings.
\end{enumerate}
\end{definition}

In Figure~\ref{fig:nodalgraph} there is a nodal graph of genus~$0$, with~$5$ boundary marked points, $6$ internal marked points, three components, one of them is a ghost, two nodes, where a
plus sign is drawn next to the legal side of a node and a minus sign is drawn next to the illegal side.

\begin{figure}
\centering
\includegraphics[scale=.4]{./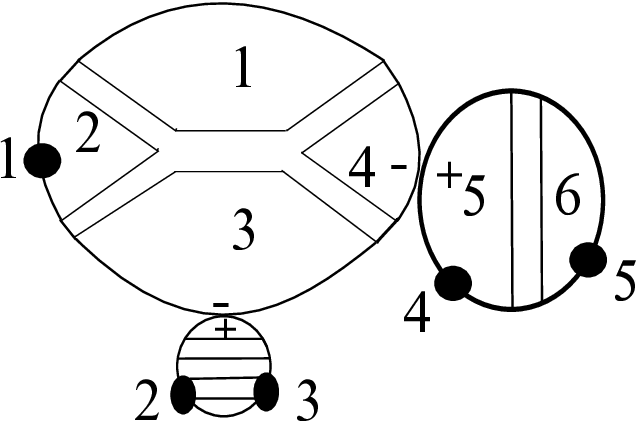}
\caption{A nodal ribbon graph.}
\label{fig:nodalgraph}
\end{figure}

\begin{nn}\label{nn:oR_intro}
Denote by $\oRc^m_{g,k,l}$ the set of isomorphism classes of odd $(g,k,l)-$critical nodal ribbon graphs with $m$ legal nodes.
\end{nn}
\begin{rmk}
In Section \ref{sec:strebel} we have to consider more general ribbon graphs, and the notions of this subsection are defined in a different but equivalent way.
\end{rmk}
The goal of this paper is to prove the following theorem

\begin{thm}\label{thm:comb_model}
Fix $g,k,l\ge 0$ which satisfy conditions \ref{eq:restrictions}. Let $\lambda_1,\ldots,\lambda_l$ be formal variables. Then we have
\begin{multline}\label{eq:combinatorial formula}
2^{\frac{g+k-1}{2}}\sum_{a_1,\ldots,a_l\ge 0}\blangle\tau_{a_1} \tau_{a_2} \cdots \tau_{a_l}\sigma^k\brangle_g^o\prod_{i=1}^l\frac{2^{a_i}(2a_i-1)!!}{\lambda_i^{2a_i+1}}\\
=\sum_{m\geq 0}\sum_{G=\left(\coprod_i G_i\right)/N\in \oRc^m_{g,k,l}}\frac{\prod_i 2^{|V^I(G_i)|+g(G_i)+b(G_i)-1}}{|\text{Aut}(G)|}\prod_{e\in E(G)}\lambda(e),
\end{multline}
where
\[
\lambda(e):=
\begin{cases}
\frac{1}{\lambda_i+\lambda_j},&\text{$e$ is an internal edge between faces $i$ and $j$};\\
\frac{1}{m+1}\binom{2m}{m}\lambda_i^{-2m-1},&\text{$e$ is a boundary edge of face $i$ and $m(e)=m$};\\
1,&\text{$e$ is a boundary edge of a ghost}.
\end{cases}
\]
\end{thm}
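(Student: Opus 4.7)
\medskip

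\noindent\textbf{Proof proposal.} The plan is to follow the scheme of Kontsevich's proof of the closed combinatorial formula \eqref{eq:Kontsevich's formula}, adapted to the open setting, with the key new ingredient being an analysis of the canonical boundary multisections in terms of nodal ribbon graphs.

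First, I would set up an open Strebel theory. Given perimeters $\lambda_1,\dots,\lambda_l>0$ around the internal marked points, there should be a unique Jenkins--Strebel quadratic differential on the doubled surface that is real with respect to the anti-holomorphic involution, so its horizontal foliation descends to an arc system on $(\Sigma,\partial\Sigma,\{x_i\},\{z_i\})$ whose complement is a disjoint union of disks around the $z_i$. This produces a cell decomposition of $\oCM_{g,k,l}\times \R_{>0}^l$ indexed by $(g,B,I)$-smooth trivalent ribbon graphs with positive edge lengths. This is the content of Sections~\ref{sec:strebel} and~\ref{sec:critical} as announced in the paper.

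Second, I would use the Kontsevich $2$-form $\omega_\lambda:=\sum_i \lambda_i^2\psi_i$, whose cohomology class is represented on each Strebel cell by an explicit piecewise linear form built from edge-length and perimeter coordinates. The Laplace-transform identity
\[
\int_0^\infty x^{2a}e^{-\lambda x^2/2}\,dx = \frac{(2a-1)!!\sqrt{\pi/2}}{\lambda^{a+1/2}}
\]
together with the dimension constraint \eqref{open dimension} converts the integral of $\exp(\omega_\lambda)$ against the canonical Euler class $e(E,s)$ into the generating series appearing on the left-hand side of \eqref{eq:combinatorial formula}, once the normalization $2^{-(g+k-1)/2}$ of open descendents is absorbed. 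On each Strebel cell the same integral factorizes as a product over edges, matching the right-hand side.

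Third, and this is where the real work sits, I would analyze the boundary conditions. A generic choice of canonical multisections $s_{ij}\in\CS_i$ vanishes on $\oCM_{g,k,l}$ only on loci that correspond geometrically to nodal degenerations, and the relative Euler class localizes there. Pulling this picture through the open Strebel cell decomposition, the loci of vanishing correspond to smooth trivalent ribbon graphs glued along identified boundary marked points, i.e.\ to odd critical nodal ribbon graphs in $\oRc^m_{g,k,l}$. The parity requirement that each boundary component carry an odd number of boundary marked points plus legal sides of nodes is forced by the requirement that the section extend without zeros across each smooth boundary arc; the legal/illegal asymmetry at a node records which side of the degeneration carries the nonvanishing extension of the canonical section. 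Verifying that precisely these graphs arise, with the correct signs and no extra contributions, is the main technical obstacle, and I expect it to require a careful recursive argument using the definitions of $\CS_i$ from Section~\ref{sec:geom} together with a detailed orientation bookkeeping.

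Finally, I would compute the edge weights directly. Internal edges yield the standard propagator $\tfrac{1}{\lambda_i+\lambda_j}$ by gaussian integration in the edge-length coordinate. For a boundary edge on face $i$ carrying $m$ illegal sides of nodes, integration over the positions of the $m$ illegal sides along the edge, subject to the ordering and parity constraints imposed by the combinatorial structure, yields the factor $\tfrac{1}{m+1}\binom{2m}{m}\lambda_i^{-2m-1}$; the Catalan number is the standard count of non-crossing matchings producing admissible configurations on a boundary arc of face $i$. Ghost components contribute trivially since they carry no internal structure. The overall prefactor $\prod_i 2^{|V^I(G_i)|+g(G_i)+b(G_i)-1}$ collects the automorphism and orientation normalizations of each component and matches the $2^{|E|-|V|}$ factor of \eqref{eq:Kontsevich's formula} on ghost-free pieces. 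Assembling these pieces and dividing by $|\mathrm{Aut}(G)|$ gives \eqref{eq:combinatorial formula}.
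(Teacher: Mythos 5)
Your macro-level outline agrees with the paper's on a few points: the symmetric Jenkins--Strebel stratification of $\oCM_{g,k,l}\times\R_+^l$ by trivalent ribbon graphs, the fact that special canonical multisections pull back from the combinatorial moduli, and a final Laplace transform. But at every step where the actual work happens, your mechanism is different from the paper's, and in several places it would not go through.

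\textbf{The Euler class does not "localize" at nodal loci.} The paper does not obtain the nodal-graph contributions by identifying the zeros of $s$ with nodal degenerations. Rather, it invokes Proposition~\ref{prop:basic prop},
\[
\int_M e(E,s)=\int_M\Omega+\int_{\partial M}s^*\Phi,
\]
with an \emph{explicit} angular form $\Phi$ for a direct sum of $S^1$-bundles (Theorem~\ref{thm:angular}). The key step is then an iterated Stokes argument: because the canonical boundary conditions are pulled back along $\CB$, the boundary integral over a nodal cell decomposes (Lemma~\ref{lem:simplifying_phi}) and recombines (Lemma~\ref{lem:big_computation}) into a sum of bulk integrals over cells with one more node plus new boundary integrals, giving a recursion $S_m=A_{m+1}+S_{m+1}$. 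Your localization picture would require you to know where $s$ vanishes, which the construction of special canonical multisections deliberately does \emph{not} control; only the boundary restrictions are constrained. Without the angular-form identity and the integration-by-parts recursion, the nodal terms do not appear.

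\textbf{The Catalan factor does not count non-crossing matchings.} In the paper, the factor $\frac{1}{m+1}\binom{2m}{m}\lambda_i^{-2m-1}$ is the one-sided exponential Laplace transform of $\ell^{2m}/(m!(m+1)!)$, and the weight $1/(m!(m+1)!)$ itself is produced by the convolution identity of Proposition~\ref{prop:id} (equivalently, $F^2=\lambda F-1$ for the generating function $F$, i.e.\ the Catalan recursion) applied at each step of the integration by parts. There is no combinatorial matching count hiding in the boundary edge; there is an analytic convolution. Relatedly, your transform $\int_0^\infty x^{2a}e^{-\lambda x^2/2}dx$ is not the one used: the paper needs $\int_0^\infty p^{2a}e^{-\lambda p}dp=(2a)!/\lambda^{2a+1}$ and the identity $(2a)!/a!=2^a(2a-1)!!$, not a Gaussian.

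\textbf{The odd-parity constraint and the power of $2$ come from Kasteleyn orientations, which you never mention.} Your explanation for why only odd graphs contribute (nonvanishing extension across boundary arcs forcing the parity) is not what happens. In the paper, the bulk integrals over cells of even graphs vanish because $c_{spin}(G)=0$ for such graphs (Lemma~\ref{lem:vanishing_for_even_bdries}): this is a sign cancellation over the $2^{|V^I|}$ Kasteleyn orientations in an equivalence class, proved by comparing the explicit form $\mathfrak{o}_{(G,[\K])}$ under a flip of the lifting on an even boundary. Likewise, the prefactor $2^{|V^I(G_i)|+g(G_i)+b(G_i)-1}$ is not "automorphism/orientation bookkeeping" in a vague sense; the $2^{|V^I|}$ is the number of Kasteleyn orientations in a class (Lemma~\ref{lem:vertex_flip}), and $2^{(g+b-1)/2}$ is the computed value of $c_{spin}$ for odd graphs (Lemma~\ref{lem:c_spin}), both requiring the identification of graded spin structures with equivalence classes of Kasteleyn orientations (Theorem~\ref{thm:Kesteleyn_is_spin}) and the explicit ratio $\bigl(\bigwedge dp_i\wedge\bar\omega^d/d!\bigr):\mathfrak{o}_{(G,[\K])}$ from Lemma~\ref{lem:power_of_2_number_1}. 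Without this machinery you cannot produce either the parity restriction or the correct power of~$2$.

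In short: the skeleton of your plan (Strebel cells, pullback of boundary conditions, Laplace transform) is consistent with the paper, but the three ingredients that actually make the formula come out — the explicit angular form and iterated Stokes recursion, the convolution identity behind the Catalan weights, and the Kasteleyn-orientation computation of $c_{spin}$ — are absent or replaced by heuristics that do not withstand scrutiny.
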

\begin{rmk}
The invariants of \cite{PST,ST} are defined as integrals of relative Euler classes, relative to canonical boundary conditions, over the moduli of graded surfaces, which are oriented orbifolds with corners.
Theorem \ref{thm:comb_model} is proven based on these definitions; more precisely, it assumes that the moduli spaces of graded surfaces are oriented orbifolds with corners, that the orientations satisfy some compatibility properties along nodal strata, and that (special) canonical multisections can be found.
Since \cite{ST}, which proves these assumptions in the positive genus case, has not appeared yet, in addition to defining everything we use, we also review the arguments.

First, the fact that the moduli of graded surfaces are smooth orbifolds with corners is a technical result, whose proof imitates of the proof of Theorem 2 of \cite{Zernik}, and is provided in Subsection \ref{subsub:moduli}. Second, the construction of special canonical boundary conditions is similar to the proof of Lemma 3.53~(a) in \cite{PST}, and appears in Subsection \ref{subsub:bc}.

On the other hand, proving that the high genus moduli is orientable, constructing the orientations and showing their properties is more involved, and is based on the discovery of the open Arf invariant in \cite{ST0}. However, in Section \ref{sec:critical} and Subsection \ref{sec:power_of_2}, we provide completely different proofs for the orientability and the orientation properties we need, using the stratification of the moduli and properties of Kasteleyn orientations.

It is also worth mentioning that one of the main results of \cite{PST,ST} is the independence of the open intersection numbers on choices. This fact is also a byproduct of the proof of Theorem \ref{thm:comb_model}, which uses just the defining properties of canonical boundary conditions and not a specific canonical multisection.
\end{rmk}

\subsubsection{Examples}
$\langle\tau_1\tau_0\sigma\rangle_0=1.$ Thus, for $g=0,k=1,l=2$ the left hand side of Equation \ref{eq:combinatorial formula} with $\lambda_1=\lambda,\lambda_2=\mu,$ is $\frac{2}{\lambda\mu^3}+\frac{2}{\mu\lambda^3}.$
The right hand side receives contributions from several graphs, see Figure \ref{fig:ex}, $(a).$
The two non nodal contributions in the first line are $\frac{1}{\lambda(\lambda+\mu)\mu^2}+\frac{1}{\mu(\lambda+\mu)\lambda^2}.$ The two non nodal contributions in the second line are
$\frac{2}{2\lambda^3(\lambda+\mu)}+\frac{2}{2\mu^3(\lambda+\mu)}.$ The nodal ones sum to $\frac{1}{\lambda\mu^3}+\frac{1}{\mu\lambda^3}.$ And the two sides agree.

The second example is of $\langle\tau_1\rangle_1=\frac{1}{2}.$ Consider case $(b)$ in Figure \ref{fig:ex}. The left hand side is $\frac{1}{\lambda^3}.$ Non nodal terms do not contribute, as the single relevant graph (leftmost graph of Example $b$) is not odd. The nodal contribution is exactly $\frac{1}{\lambda^3}.$

The last example $(c),$ is of $\langle\tau_2\sigma^5\rangle=8.$ The left hand side gives $\frac{384}{\lambda^5}.$
$24$ non nodal diagrams, one for each cyclic order of the boundary points, contribute $\frac{24}{\lambda^5}.$ There are $120$ diagrams with a single node, one for each order, each contributes $\frac{1}{\lambda^5}.$ There are $120$ diagrams with two nodes, each contribute $\frac{2}{\lambda^5},$ where $2$ comes from the Catalan term.
\begin{figure}
\centering
\includegraphics[scale=.4]{./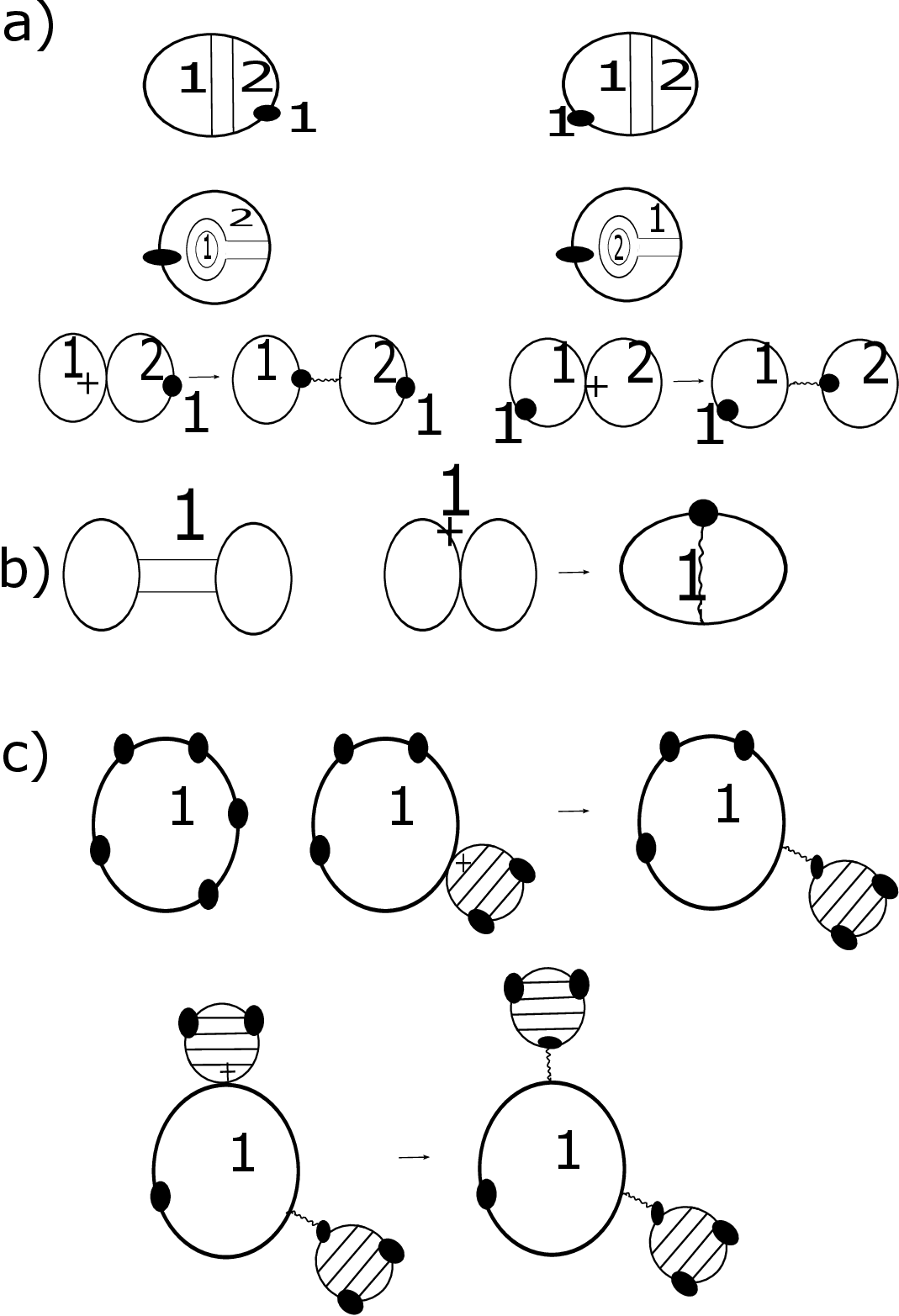}
\caption{Examples of contributing graphs.}
\label{fig:ex}
\end{figure}

\subsection{Proof of the conjecture and related works}\label{subsec:further}
Some recent developments, related works and open question are summarized below.
\begin{enumerate}
\item\textit{Proof of the open KdV conjecture.}
Based on the combinatorial formula presented here, the conjecture of \cite{PST} has been proven in \cite{BT}:
First, the combinatorial formula was transformed to a formula of matrix integrals, and then, by analytical tools and ideas from the theory of integrable hierarchies, the integral was shown to satisfy the open Virasoro constraints (which are equivalent to the open KdV equations by \cite{Bur}).
\item\textit{Boundary descendents.}
\cite{Bur2} showed that the string solution of the open KdV equation is closely related to the wave function of the KdV hierarchy.
In \cite{Bur} a more general generating function, which is a tau-function of the Burgers-KdV system, was introduced. It was conjectured that this function should correspond to an open intersection theory which includes \emph{descendents of boundary marked points}. Such a theory can be constructed, extending the construction of \cite{ST}, and, based on the combinatorial construction in this paper and on \cite{BT}, this theory can be shown to satisfy the Burgers-KdV hierarchy. The definition of the extended theory, its calculation and the proof of its relation with the Burgers-KdV hierarchy will appear soon.
\item\textit{Kontsevich-Penner matrix model, Refined open intersection numbers.}
An alternative description of the solution of the Burgers-KdV equations in terms of matrix integrals was found algebraically by A. Alexandrov in \cite{Alexandrov} in terms of the $N=1$ specification of the Kontsevich-Penner tau function.
\begin{open}\label{open2}
Is there a direct geometric way to derive Alexandrov's solution of the open KdV equations from the geometric construction of \cite{PST},\cite{ST}?
\end{open}
The combinatorial construction presented here was used in \cite{ABT} to write a formula for more refined open intersection numbers. The main conjecture of \cite{ABT}, which is a strengthening of a conjecture of \cite{Saf16a}, is that the generating series of the refined open numbers equals the Kontsevich-Penner tau function.
\item\textit{Open $r-$spin.}
In the recent work \cite{BCT1,BCT2}, a far reaching generalization of \cite{PST} to an intersection theory over the moduli of $r-$spin disks has appeared. The potential of the genus $0$ open $r-$spin integrals was shown to be closely related to the wave function of the $r$KdV hierarchy, and an all-genus generalization was conjectured. In a work in progress with Gross and Kelly this construction is being generalized to open FJRW theory, and the genus $0$ intersection numbers are explained using mirror symmetry.
\begin{open}\label{open3}
Generalize the formula presented in this work to the case of open $r-$spin intersection numbers.
\end{open}
\item\textit{Other interpretations of the theory.}
There were several related works in the physics literature; we mention two. In \cite{DijkWit} Dijkgraaf and Witten provide a physical interpretation to the open intersection theory of \cite{PST,ST}. In \cite{Rim}, Bawane, Muraki and Rim describe a solution for the open KdV equations in terms of minimal gravity on the disk.

In \cite{Saf16b}, Safnuk gives an interpretation of the $N=1$ specification of the Kontsevich-Penner tau function (which is, as explained above, a solution of the Burgers-KdV hierarchy) in terms of combinatorially defined volumes of moduli spaces.
\item\textit{Similar formulas for other OGW invariants.}
There are two newer works which present formulas for open GW invariants in terms of summation over graphs with boundary nodes and are of the same flavour as the formula given here, and the refined formula of \cite{ABT}.
\cite{ZernikPrep} presents an equivariant localization calculation of OGW disk invariants for the pair $(\C\mathbb{P}^{2n},\R\mathbb{P}^{2n}).$
\cite{BPTZ} constructs the stationary OGW theory of $(\C\mathbb{P}^1,\R\mathbb{P}^1)$, derives a localization formula for all intersection numbers, including descendents, and in \cite{BPTZ2} it is used it to prove a correspondence with open Hurwitz theory. Both formulas contain corners contributions, in addition to the naive contributions, in resemblance to \eqref{eq:combinatorial formula}. To the best knowledge of the author, such formulas have not appeared in literature before. Formulas for open GW invariants have appeared in the past, usually in the context of equivariant localization (see the calculations of \cite{KatzLiu} as a prototypic example). In the older formulas which involved graph summation, the graphs were dual to topological stable marked surfaces with boundaries (which parameterized fixed point loci). These surfaces included disk components which were connected by internal nodes to the closed part. There were no boundary nodes. The amplitudes of such graphs were usually similar to the analogous amplitudes in the closed case (and the disk contribution was usually more or less the square root of the sphere contribution). In the formulas of this work and of \cite{ABT,BPTZ,ZernikPrep} the boundary nodes contribute an additional factor to the amplitudes. 
It would be interesting to gain a general understanding of this new type of expressions, 
to understand when are they expected to appear, and to analyze them.
\end{enumerate}

\subsection{Plan of the paper}
In Section \ref{sec:geom} the constructions of \cite{PST},\cite{ST} are reviewed.
In particular, graded spin surfaces are defined, as well as their moduli space $\oCM_{g,k,l},$ tautological line bundles and special canonical boundary conditions. With these in hand, the open intersection numbers are then defined.

In section \ref{sec:sphere} the notions of sphere bundles and angular forms are recalled. It is explained how to calculate the integral of the relative Euler class, relative to nowhere vanishing boundary conditions. The main result of this section is an explicit formula for a representative of the angular form of a sphere bundle. This formula is the starting point of the paper.

Section \ref{sec:strebel} is devoted to constructing an open analog of Strebel's stratification. Symmetric stable Jenkins-Strebel differentials are defined, and used to stratify the moduli space of open surfaces, and then the moduli of graded surfaces.
In addition combinatorial sphere bundles are constructed.
It is then shown that special canonical multisections are pulled back from the combinatorial moduli. The main result of this section is that the open descendent integrals can be calculated as integrals over the combinatorial moduli.

Section \ref{sec:critical} describes in more details the cells in the stratification which will eventually contribute to the open descendents. Extended Kasteleyn orientation are defined, and equivalence classes of them are shown to be equivalent to the data of a graded spin structure. The Kasteleyn orientations are used to provide a more explicit description of the contributing cells, of the boundary conditions and of the orientation of the moduli. As a byproduct, an alternative proof that the moduli $\oCM_{g,k,l}$ is canonically oriented is given. The analysis of orientations is an important ingredient in the proof.

The last section, Section \ref{sec:last}, proves the combinatorial formula, Theorem \ref{thm:comb_model}.
With the aid of the explicit angular form constructed in Section \ref{sec:sphere} an integral representation of the open gravitations descendent is given. The integral depends explicitly on the boundary conditions. The properties of special canonical multisections are then used to iteratively integrate by parts, until an integrated form of the combinatorial formula, Theorem \ref{thm:int_form}, is obtained. 
Finally, by performing a detailed study of the Kasteleyn orientations and multiplicative constants they contribute\footnote{This study also applies to the closed case, and gives a conceptual calculation in terms of discrete spin structures of a constant appearing in Kontsevich's work \cite[Appendix C]{Kont}, which was the subject of several other works.}, we are able to apply Laplace transform to the integrated formula and obtain the main theorem, Theorem \ref{thm:comb_model}.

\subsection{Acknowledgments}
\[\text{The author dedicates this paper to his late grandfather, and first teacher, Yaki Shkolnik.}\]

The author would like to thank J. Solomon for the encouragement to work on this problem, and for many stimulating discussions.
In addition, the author would also like to thank A.~Alexandrov, A.~Buryak, K.~F.~Gutkin, D.~Kazhdan, A.~Lavon, A.~Okounkov, R.~Pandharipande, E.~Shustin, A.~Solomon, A.~Zernik and D.~Zvonkine for interesting discussions related to the work presented here. The author wishes to thank the anonymous reviewers for careful reading of this work, and for many valuable remarks that vastly improved the article. 

The author (incumbent of the Lillian and George Lyttle Career Development Chair)  was supported by the ISF grant No. 335/19 and by a research grant from the Center for New Scientists of Weizmann Institute.
In addition, he was supported by ISF Grant 1747/13 and ERC Starting Grant 337560 in the group of J.~P.~Solomon at the Hebrew university of Jerusalem.

\section{The moduli, bundles and intersection numbers}\label{sec:geom}
This section briefly summarizes the required definitions and results from \cite{PST,ST}.

\subsection{General conventions and notations}
For $l\in\mathbb{N}$ we write $[l] = \{1,2,\ldots,l\}.$ The set $[0]$ will denote the empty set.

Throughout this article a map $\mmm:A\to\Z,$ from an arbitrary set $A$ which is injective away from $\mmm^{-1}(0)$ will be called a \emph{marking} or a \emph{marking of $A$}. Given a marking, we shall identify elements of $\mmm^{-1}(\Z\setminus\{0\})$ with their images.

In what follows, the markings will be used to mark points in surfaces, half edges in dual graphs and vertices in ribbon graphs. The reason we allow non injective marking functions is that we will have to perform many graph or surface operations that will create new marked points. There will be no natural way to mark these new points, and therefore we will mark them all by $0.$

We will encounter many types of graphs in the next sections. Dual graphs, to be defined in Section \ref{sec:geom}, will be denoted by capital Greek letters. Ribbon graphs, to be defined in Sections \ref{sec:strebel},\ref{sec:critical}, will be denoted by capital English letters.

Many of the objects in this paper, such as surfaces or graphs, will have natural notions of genus, boundary labels and internal labels. A \emph{(g,B,I)-}object is an object whose genus is $g,$ the set of boundary labels is $B,$ and the set of internal labels is $I.$ Similarly, in the closed setting, a \emph{(g,I)-}object is an object whose genus is $g$ and the set of internal labels is $I.$

Given a permutation $\pi$ on a set $S,$ we write $s/\pi$ the $\pi-$cycle of $s\in S.$ For $a\in S/\pi,$ write $\pi^{-1}(a)$ for the elements which belong to the cycle $a.$

We shall sometimes use the shorthand notation $\mathbf y$ to denote a sequence $\{y_i\}_{i\in[r]},$ if the sequence we are referring to is understood from context.


\subsection{Open surfaces and their moduli space}
\subsubsection{Stable open surfaces}
We recall the notion of a \emph{stable marked open surface}.

\begin{definition}
We define a \emph{smooth pointed surface} to be a triple
\[
\left(\Sigma,\mathbf{x},\mathbf{z}\right)=\left(\Sigma, \{x_i\}_{i\in \B}, \{z_i\}_{i\in \I}\right)
\]
where
\begin{enumerate}
\item
$\Sigma$ is a Riemann surface, possibly with boundary.
\item
An injection $\B\to \partial \Sigma,~~i\mapsto x_i,$ where $\B$ is a finite set.
\item
An injection $\I\to \mathring{ \Sigma},~~i\mapsto z_i,$ where $\I$ is a finite set.
\end{enumerate}
%
In case $\partial\Sigma\neq\emptyset,$ we say that $\Sigma$ is an open surface. Otherwise it is closed.
We sometimes omit the marked points from our notations. Given a smooth pointed surface $\Sigma$, we write $B\left(\Sigma\right)$ for the set $\B,$ and sometimes also for the set $\{z_i\}_{i\in \B}$. We similarly define $I\left(\Sigma\right).$

A smooth closed pointed surface $\Sigma$ is called \emph{stable} if
\[
2g(\Sigma)+|I\left(\Sigma\right)|>2.
\]

A smooth open pointed surface $\Sigma$ is called \emph{stable} if
\[
2g(\Sigma)+|B\left(\Sigma\right)|+2|I\left(\Sigma\right)|> 2.
\]
\end{definition}

\begin{rmk}
$\Sigma$ is canonically oriented, as a Riemann surface. In case $\partial\Sigma\neq\emptyset,$ it is endowed with a canonical induced orientation.
\end{rmk}
\begin{definition}\label{def:dobuling}
For a pointed Riemann surface $\left(\Sigma, \{x_i\}_{i\in \B}, \{z_i\}_{i\in \I}\right),$ where in case $\Sigma$ is closed $\B=\emptyset,$ we denote by $\left(\overline{\Sigma}, \{x_i\}_{i\in \B}, \{\bar{z}_i\}_{i\in \I}\right)$ the same surface with opposite complex structure.
The \emph{doubling} of an open $\Sigma$ is
\[
\Sigma_\C = \Sigma\coprod_{\partial\Sigma}\overline\Sigma,
\]
the surface obtained by \emph{Schwarz reflection principle along the boundary $\partial\Sigma.$}
For an open connected $\Sigma$ we define the \emph{genus} $g(\Sigma)$ to be the genus of $\Sigma_\C.$ For $\Sigma$ closed and connected the genus is just the usual genus. In case $\Sigma$ is disconnected its genus is defined as the sum of the genera of its connected components.
\end{definition}
\begin{rmk}\label{rmk:top_type_open}
Note that for open surfaces the topological type is determined by two numbers, the doubled genus $g$ and the number of boundary components $h,$ and not only by the genus. $h$ is constrained by
\[h=g+1~(\text{mod }~2),~~0\leq h\leq g+1,\]
and for any $(g,h)$ satisfying these constraints there is a topological type of open surfaces.
\end{rmk}

\begin{definition}\label{def:prestable}
A \emph{pre-stable} surface is a tuple
\[
\Sigma = \left(\left\{\Sigma_\alpha\right\}_{\alpha\in\mathcal{O}\cup \mathcal{C}}, \sim = \sim_B\cup\sim_I,\text{CB}\right)
\]
where
\begin{enumerate}
\item
$\mathcal{O}$ and $\mathcal{C}$ are finite sets.
For $\alpha\in\mathcal{O},~\Sigma_\alpha$ is an open smooth pointed surface; for $\alpha\in\mathcal{S},~\Sigma_\alpha$ is a closed smooth pointed surface.
\item
$\sim = \sim_B\cup\sim_I,$ where $\sim_B$ is an equivalence relation on $\bigcup_\alpha B(\Sigma_\alpha),$ with equivalence classes of size at most $2,$ and $\sim_I$ is an equivalence relation equivalence relation on $\bigcup_\alpha I(\Sigma_\alpha),$ with equivalence classes of size at most $2$. We write $B(\Sigma), I(\Sigma)$ for the equivalence classes of size $1$ of $\sim_B,\sim_I$ respectively.
\item
$\text{CB}(\Sigma)$ is a subset of $ I(\Sigma).$
\end{enumerate}
Elements of $B(\Sigma)$ are called \emph{boundary marked points}. Elements of $I(\Sigma)\setminus \text{CB}(\Sigma)$ are called \emph{internal marked points}.
The $\sim_B$ (resp. $\sim_I$) equivalence classes of size $2$ are called boundary (resp. interior) nodes, elements which belong to these equivalence classes are called \emph{half nodes}.
Element of $\text{CB}$ are called contracted boundaries.
The equivalence classes of $\sim, (\sim_B,\sim_I)$ are collectively called \em{special (special boundary, special internal) points} of $\Sigma.$

We also write $\Sigma = \coprod_{\alpha\in\mathcal{O}\cup\mathcal{C}}\Sigma_\alpha/\!\!\sim.$ If $\mathcal{O}$ is empty and $\text{CB}$ is empty, $\Sigma$ is called a \emph{pre-stable closed surface}. Otherwise it is called a \emph{pre-stable open surface}.

A pre-stable surface is \emph{marked}, if in addition it is endowed with \emph{markings} $\mmm^B:B(\Sigma)\to \Z,~\mmm^I:I(\Sigma)\setminus{\text{CB}}\to \Z.$ Write $\mmm=\mmm^I\cup \mmm^B.$
Recall that a marking is injective outside of the preimage of $0.$

A pre-stable marked surface is called a \emph{stable marked surface} if each of its constituent smooth surfaces $\Sigma_\alpha$ is stable.

The \emph{doubled surface} $\Sigma_\C$ of a stable open surface is defined as
\[
\Sigma_\C = (\coprod_{\alpha\in\mathcal{O}}(\Sigma_\alpha)_\C \coprod_{\alpha\in\mathcal{C}}\Sigma_\alpha\coprod\overline{\Sigma}_\alpha  )/\sim_\C,
\]
where \[\sim_\C=\left(\sim_B\cup\sim_I\cup\sim_{\bar I}\cup\sim_{\text{CB}}\right)\] is defined as follows. $\sim_{\bar I}$ identifies internal marked points of $\{\overline{\Sigma}_\alpha\}_{\alpha\in\mathcal{C}}$ if and only if $\sim_I$ identifies the corresponding marked points in $\{\Sigma_\alpha\}_{\alpha\in\mathcal{C}}.~\sim_{\text{CB}}$ identifies $z_i\in\Sigma_\alpha,\bar{z}_i\in\overline{\Sigma}_{\alpha}$ whenever $i\in \text{CB}(\Sigma).$
$\Sigma_\C$ is endowed with an involution $\varrho,$ with $\bar{z}_i=\varrho(z_i),$ whose fixed point set is $\partial\Sigma\cup\text{CB}(\Sigma),$ and such that $\Sigma\simeq\Sigma_\C/\varrho.$
Write $D(\Sigma)=(\Sigma_\C,\varrho).$

$\Sigma$ is connected if the underlying space, $\coprod_{\alpha\in\mathcal{D}\cup\mathcal{S}}\Sigma_\alpha/\!\!\sim$ is.
$\Sigma$ is smooth if $\text{CB}(\Sigma)=\emptyset,$ and $\sim$ has only equivalence classes of size $1.$

The \emph{normalization} $\NNN(\Sigma)$ of the stable marked surface $\Sigma$ is the surface $\left(\left\{\Sigma_\alpha\right\}_{\alpha\in\mathcal{O}\cup \mathcal{C}}, \sim',
\text{CB}',\mmm'\right)$
where $\sim'$ has only size $1$ equivalence classes, $\text{CB}'$ is empty, and the marking $\mmm'$ agrees with $\mmm$ whenever is defined, and otherwise $\mmm^{'I}=\III,\mmm^{'B}=\BBB.$
For a marked point marked $i\neq0,$ write $\Sigma_i$ for the component of $\NNN(\Sigma)$ which contains marked point $z_i.$


A \emph{topological stable marked surface}, open or closed, is defined in the same way, only with $\Sigma_\alpha$ being topological surfaces rather than Riemann surfaces.
\end{definition}
In what follows, our defaultive choice of marking function $\mmm$ is a bijection $\mmm^I:I(\Sigma)\to[n]$ in case $\Sigma$ is closed, and if $\Sigma$ is open we usually take bijections $\mmm^I:I(\Sigma)\setminus\text{CB}(\Sigma)\to[l],~\mmm^B:B(\Sigma)\to[k]$. Therefore whenever a surface is written as $(\Sigma,z_1,\ldots,z_n)$ or $(\Sigma,x_1,\ldots,x_k,z_1,\ldots,z_l)$ we implicitly mean it is marked, and that the indices of the marked points represent the markings.

See figure \ref{fig:2_4} for examples of pre-stable surfaces and their normalizations.
\begin{figure}
\centering
\includegraphics[scale=.4]{./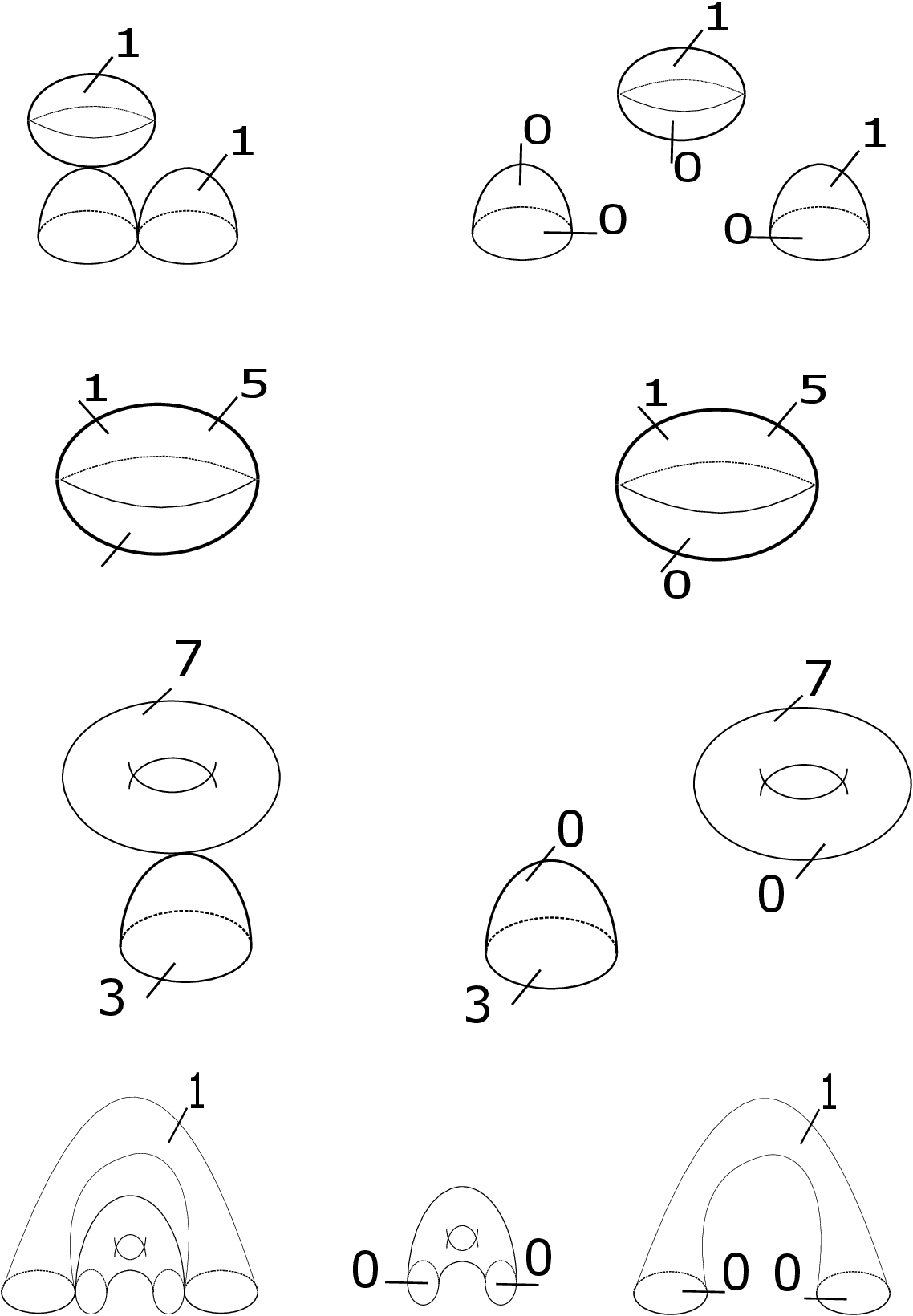}
\caption{In this diagram in every row the leftmost picture is a pre-stable surface, and on the right side of the same row is the normalization. In the top row there is a pre-stable marked surface with boundary, and its normalization into two stable marked disks and a prestable marked sphere. In the second row there is a stable sphere with an (unmarked) contracted boundary. Its normalization is a stable sphere with three markings. In the third row there is a stable surface with boundary which is normalized into a disk and a torus. The last row contains a stable surface whose normalization is the union of a cylinder and a genus $3$ surface with boundary.}
\label{fig:2_4}
\end{figure}

We sometimes identify $D(\Sigma),$ and $\Sigma_\C.$ 
%

\begin{definition}
An \emph{isomorphism} between two pre-stable marked surfaces, $\Sigma = \left(\left\{\Sigma_\alpha\right\}_{\alpha\in\mathcal{O}\cup \mathcal{C}}, \sim, \text{CB}, \mmm\right)$ and $\Sigma' = \left(\left\{\Sigma'_\alpha\right\}_{\alpha\in\mathcal{O}'\cup \mathcal{C}'}, \sim', \text{CB}', \mmm'\right)$
is a tuple
$
f = \left(f^{\mathcal{O}},f^{\mathcal{C}},\{f^\alpha\}_{\alpha\in\mathcal{O}\cup\mathcal{C}}\right)
$
such that
\begin{enumerate}
\item $f^\mathcal{O}:\mathcal{O}\to\mathcal{O}',~f^\mathcal{C}:\mathcal{C}\to\mathcal{C}'$ are bijections between the sets which index the components of the surfaces.
\item
For $\alpha\in\mathcal{O},$ 
$f^\alpha:\Sigma_\alpha\to\Sigma'_{f^{\mathcal{O}}(\alpha)}$ is a biholomorphism which induces a bijection on the sets of special points.
For $\alpha\in\mathcal{C},~f^\alpha:\Sigma_\alpha\to\Sigma'_{f^{\mathcal{C}}(\alpha)}$ is a biholomorphism which induces a bijection on the sets of special points.
\item $x\sim y$ for $x\in \Sigma_\alpha,~y\in \Sigma_\beta$ if and only if $f^\alpha(x)\sim'f^\beta(y).$
\item For any special point $x\in\Sigma_\alpha,$ $\mmm'(f^\alpha(x))=\mmm(x).$
\item
$\bigcup_\alpha f^\alpha(\text{CB})=\text{CB}'.$
\end{enumerate}
We denote by $\text{Aut}(\Sigma)$ the group of the automorphisms of $\Sigma.$

An isomorphism between stable topological surfaces is similarly defined, only that the maps $f^{\alpha}$ are required to be homeomorphisms rather than biholomorphisms.
\end{definition}

\subsubsection{Stable graphs}
It is useful to encode some of the combinatorial data of stable marked surfaces in graphs.
\begin{definition}
A (not necessarily connected) \emph{pre-stable dual graph $\Gamma$} is a tuple
\[\left(V = V^{O}\cup V^{C} ,H = H^{B}\cup H^{I}, \sig_0, \sim = \sim_B\cup\sim_I,g,H^{\text{CB}}, \mmm=\mmm^B\cup \mmm^I\right),\]
where
\begin{enumerate}
\item
$V^O , V^C$ are finite sets called \em{open and closed vertices}, respectively.
\item
$H^B,H^I$ are finite sets of \em{boundary and internal half edges}.
\item
$\sig_0:H\to V$ associates any half edge to its vertex.
\item
$\sim_B$ is an equivalence relation on $H^B$ with equivalence classes of sizes $1$ or $2.$
Denote by $T^B$ the equivalence classes of size $1$ of $\sim_B.$
$\sim_I$ is an equivalence relation on $H^I$ with equivalence classes of sizes $1$ or $2.$
Denote by $T^I$ the equivalence classes of size $1$ of $\sim_I.$
\item
$H^{\text{CB}}\subseteq T^I.$
\item
$g:V\to\Z_{\geq 0}$ is a genus assignment.
\item
$\mmm^B:T^B\to\Z,~\mmm^I:T^I\setminus H^{\text{CB}}\to\Z$ are markings.
\end{enumerate}
We call $T^B$ \emph{boundary tails}, $H^{\text{CB}}$ \emph{contracted boundaries}, and $T^I\setminus H^{\text{CB}}$ \emph{internal tails}. Set $T=T^I\cup T^B.$
$\sim_B$ induces a fixed point free involution on $H^B\setminus T^B.$
Similarly, $\sim_I$ induces a fixed point free involution on $H^I\setminus T^I.$
We denote this involution on $H\setminus T$ by $\sig_1.$
We set $E^B = (H^B\setminus T^B)/{\sim_B},$ the set of boundary \em{edges}. We define $E^I=(H^I\setminus T^I)/{\sim_I}\cup{H^{\text{CB}}}.$ We put $E=E^I\cup E^B,$ the set of edges.
We denote by $\sig_0^B$ the restriction of $\sig_0$ to $H^B,$ in a similar fashion we define $\sig_0^I.$

We require that for all $h\in H^B,~\sig_0(h)\in V^O.$


We say that $\Gamma$ is connected if its underlying graph, $\left(V,E\right)$ is connected.

For a vertex $v$ we set $k(v) = |(\sig_0^B)^{-1}(v)|.$ It is defined to be $0$ if $v$ is closed. We set $l(v) = |(\sig_0^I)^{-1}(v)|.$ Write $\text{CB}(v)$ for the number of contracted boundaries of $v.$ A dual graph is \emph{closed} if $V^O=H^{\text{CB}}=\emptyset,$ and otherwise it is \emph{open}.

The \em{genus of a stable connected closed dual graph} $\Gamma$ is defined by
\[
g(\Gamma) = \sum_{v\in V^C} g(v) +|E^I|-|V^C|+1.
\]
The \emph{genus of a stable connected open dual graph} $\Gamma$ is defined by
\[
g(\Gamma) = \sum_{v\in V^O} g(v) + 2\sum_{v\in V^C} g(v) + |E^B|+2|E^I|-|H^{\text{CB}}|-|V^O|-2|V^C|+1.
\]

A closed vertex $v\in V^C$ is \em{stable} if
\[2g(v)+l(v)>2.\]

An open vertex $v\in V^O$ is \em{stable} if
\[2g(v)+k(v)+2l(v)>2.\]

A dual graph $\Gamma$ is stable if all its vertices are.

$\NNN(\Gamma),$ the \emph{normalization} of the graph $\Gamma$ is the unique stable graph $\left(V' ,H' , \sig'_0, \sim' ,g',H^{'\text{CB}}, \mmm'\right)$
with $V'=V,H'=H,\sig'_0=\sig_0,g'=g,H^{'\text{CB}}=\emptyset,$ and $\sim'$ has only classes of size $1.$
$\mmm'$ agrees with $\mmm,$ whenever $\mmm$ is defined. 
Otherwise $\mmm'=0.$

For $i\in \text{Image}(\mmm^I)\setminus\{0\},$ we denote by $v_i(\Gamma)$ the connected component of $\NNN(\Gamma)$ which contains the tail marked $i.$
\end{definition}
It is easy to see that the genus is always non negative.
Figure \ref{fig:2_6} illustrates several dual graphs and their normalizations. Note that open vertices without boundary half edges are allowed.

\begin{definition}
An \emph{isomorphism} between graphs \[\Gamma = (V,H, \sig_0, \sim,g,H^{\text{CB}},\mmm),~\Gamma' = (V',H', \sig'_0, \sim',g',H^{'\text{CB}},\mmm')\] is a pair $f = \left(f^V,f^H\right)$
such that
\begin{enumerate}
\item
$f^V:V\to V'$ is a bijection; $f^H:H\to H'$ is a bijection.
\item
${g'}\circ f = g.$
\item
$h_1\sim h_2 \Leftrightarrow f(h_1)\sim' f(h_2).$
\item
$\sig'_0 = f\circ \sig_0.$
\item
$\mmm'\circ f = \mmm.$
\item
$f(H^{\text{CB}})=H^{'\text{CB}}.$
\end{enumerate}
We denote by $\text{Aut}(\Gamma)$ the group of the automorphisms of $\Gamma.$
\end{definition}

To each stable marked surface $\Sigma$ we associate an isomorphism class of connected stable graphs as follows. We set $V^O = \mathcal O$ and $V^C = \mathcal C.$
$H^B=\bigcup_\alpha B \left(\Sigma_\alpha\right),~H^I=\bigcup_\alpha I \left(\Sigma_\alpha\right).~H^{\text{CB}}=\text{CB}(\Sigma).$
The definitions of $g,\sim,\sig_0,\mmm$ are straightforward. 
In particular, a tail marked $a$ is associated to a marked point labeled $a.$ An edge between two vertices corresponds to a node between their corresponding components. See Figure \ref{fig:2_6} for the dual graphs which correspond to the surfaces of Figure \ref{fig:2_4}.
Note that this correspondence is at the level of isomorphism classes of topological stable surfaces, and that a surface is closed precisely if its corresponding graph is closed.
\begin{definition}\label{def:genus_stable_surf}
The graph associated to a stable surface $\Sigma$ is denoted by $\Gamma\left(\Sigma\right)$.
The \emph{genus} of a stable surface $\Sigma$ is defined as the genus of $\Gamma(\Sigma).$
\end{definition}
Observe that the genus of a stable closed surface agrees with the standard definition, while the genus of a stable open surface equals the standard genus of its doubled surface. The genus of a stable surface equals the genus of the surface obtained by smoothing its nodes, including the contracted boundaries which are smoothed to boundary components.
Observe also that $\NNN(\Gamma(\Sigma))=\Gamma(\NNN(\Sigma)),$ and that for any internal marked point which is marked $i\neq 0,$ $v_i(\Gamma(\Sigma))=\Gamma(\Sigma_i),$ where $\Sigma_i$ is the component of $\Sigma$ which contains marked point $z_i.$

\begin{figure}
\centering
\includegraphics[scale=.3]{./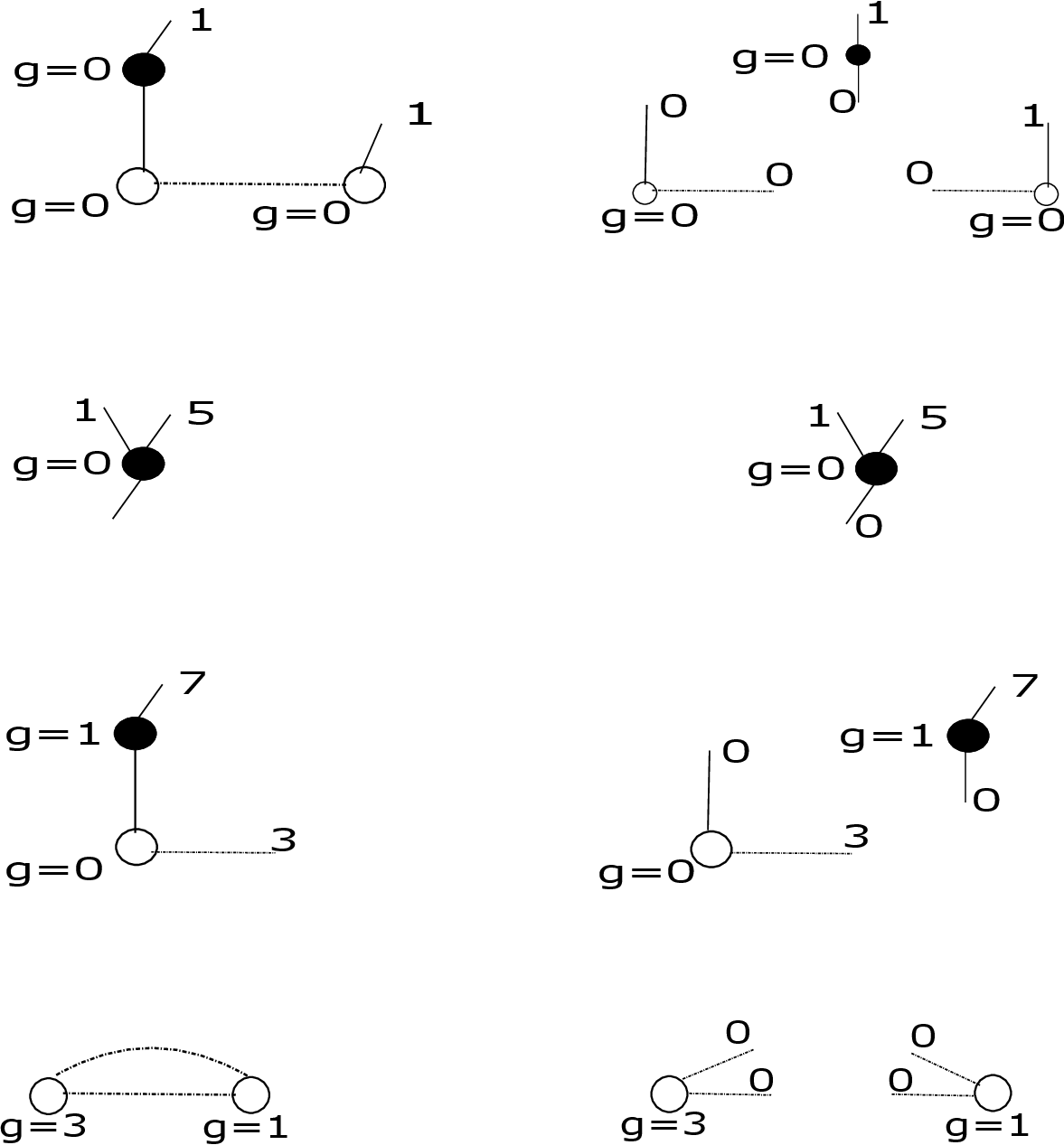}
\caption{This diagram presents the dual graphs which correspond to the surfaces from Figure \ref{fig:2_4}, under the correspondence of Definition \ref{def:genus_stable_surf}. Again the right hand side of each row is the normalization of the left hand side. Black vertices correspond to closed components, and empty vertices to open. The genus of the vertex is written next to it. Boundary edges or half edges are drawn as dashed lines, and the other edges or half edges are internal (the case of contracted boundary is included). The label of a tail is written next to it.
The genus of the dual graphs in the left hand side are, going from top to bottom, $0,0,2,5.$}
\label{fig:2_6}
\end{figure}

Throughout this paper we will sometimes write 'graph' instead of 'dual graph' when the meaning is clear from the context.
Dual graphs will be denoted by capital Greek letters, to help us distinguish them from another kind of graphs we shall meet below, ribbon graphs, which will be denoted by capital English letters.

We denote by $\RCG_{g,k,l}$ the set of isomorphism classes of all stable graphs of genus $g,$ with $k$ boundary tails, $l$ internal tails and
\[
\text{Image}(\mmm^B)=[k];\text{Image}(\mmm^I)=[l].
\]
We write $\RCG$ for the set of isomorphism classes of all stable graphs. Note that the cases $k=0$ or $l=0$ are not excluded, as surfaces without boundary or internal marked points will be considered in what follows.

\begin{nn}
Given nonnegative integers $k,l$ with $2g+k+2l>2,$ denote by $\RGamma_{g,k,l}$ the stable graph with $V^O = \left\{*\right\}, V^C = \emptyset,$ with
\[
g(*) = g,~T^B=H^B\simeq[k], T^I=H^I\simeq[l],
\]
where the equivalences with $[k],~[l]$ are obtained using $\mmm^B,\mmm^I,$ respectively. We similarly define $\Gamma_{g,n}$ as the closed graph with a single vertex of genus $g,$ and $T^I=H^I\simeq[n].$
\end{nn}

\begin{definition}
A stable dual graph is \emph{effective} if
\begin{enumerate}
\item Any internal half edge is a tail or a contracted boundary.
\item Any vertex without internal tails has exactly three boundary half edges and genus $0.$
\item Different vertices without internal half edges are not adjacent.
\end{enumerate}
A surface is called \emph{effective} if it is associated to an effective graph.
\end{definition}
The notion of effectiveness will be important later on, when we construct the combinatorial moduli space using Jenkins-Strebel differentials. On moduli strata which correspond to effective dual graphs, the map to the combinatorial moduli is a homeomorphism. This fact will turn out to be useful when we come to translate the geometric intersection numbers to combinatorial expressions.

In the leftmost column of Figure \ref{fig:2_4}, only the sphere from the second row is effective: The surface from the first row has an internal node, and in addition it is not stable, the surface from the third row also has an internal node as well, the surface from the lowest row has a component without internal markings, which is not a disk with three boundary markings. Equivalently, in the leftmost column of Figure \ref{fig:2_6} only the second graph is effective.
Additional examples of effective and non effective surfaces and graphs are illustrated in Figure \ref{fig:2_10}.
\begin{figure}
\centering
\includegraphics[scale=.4]{./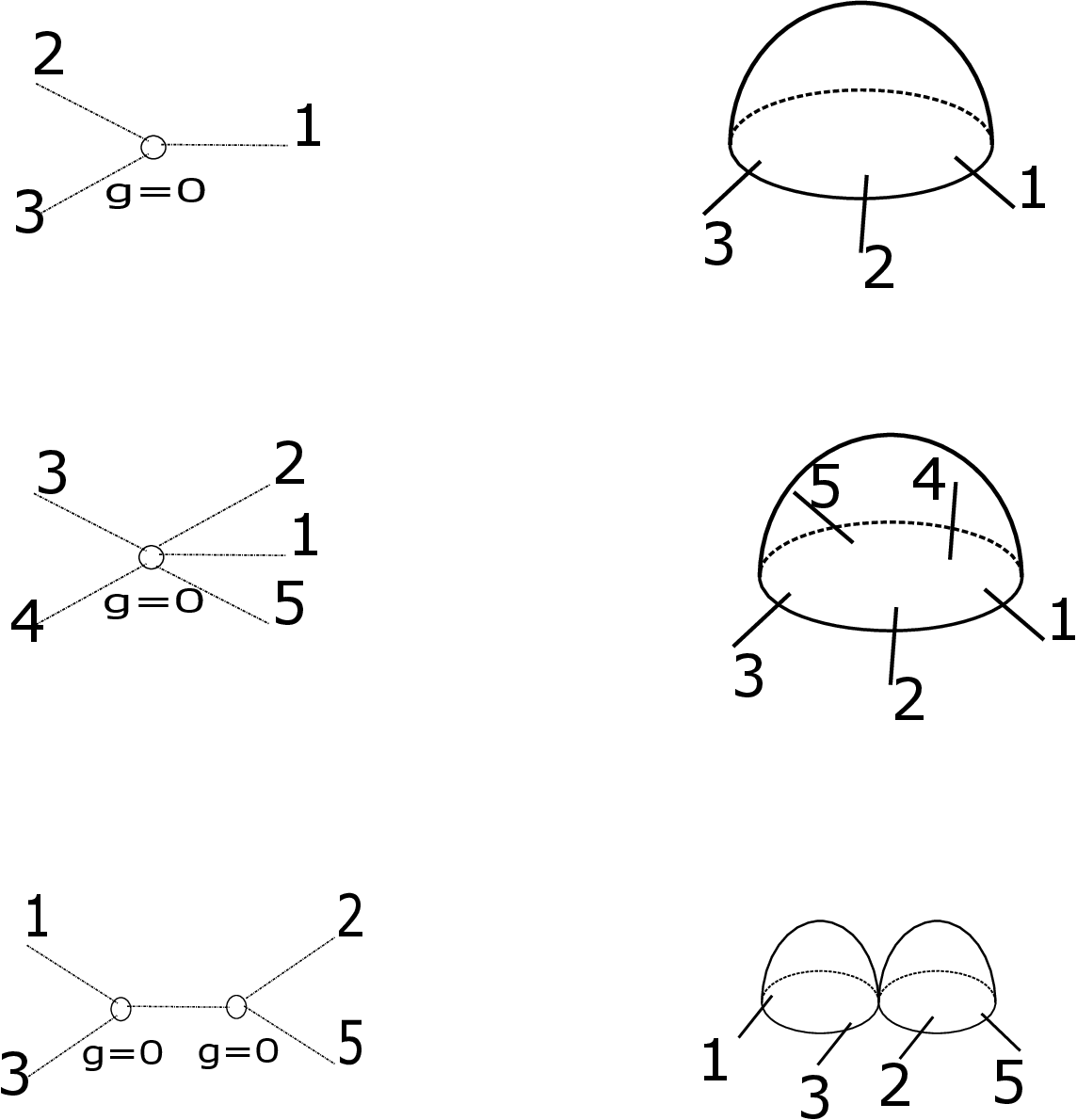}
\caption{Every row in this diagram illustrates a dual graph and the corresponding surface. Only the first row represent an effective graph/surface. Note that the cyclic order of boundary markings on boundaries cannot be read from the dual graph data.
}
\label{fig:2_10}
\end{figure}

\subsubsection{Some graph operations}
For the purpose of the next definition, write, for a vertex $v$ in a dual graph $\Gamma,$ $\varepsilon(v)=1$ if $v$ is open, and otherwise $\varepsilon(v)=2.$ For an edge $e$ set $\varepsilon(e)=0$ unless $e$ is an internal edge connecting two open vertices, and then put $\varepsilon(e)=1.$
\begin{definition}\label{def:smoothing}
Consider a stable graph $\Gamma.$
The \emph{smoothing} of $\Gamma$ at $f\in E$ is the stable graph
\[
d_f\Gamma = \Gamma' = \left(V', H', \sim',s'_0, g',\mmm'\right)
\]
defined as follows.
Suppose $f\notin H^{\text{CB}}(\Gamma)$ is the $\sim-$equivalence class $\{h_1,h_2\},$ write $\sig_0(h_1)=v_1,\sig_0(h_2)=v_2.$
The vertex set is given by
\[
V'= \left(V\setminus\left\{v_1,v_2\right\}\right)\cup\left\{v\right\}.
\]
The new vertex $v$ is closed if and only if both $v_1$ and $v_2$ are closed.
\[
H' = H\setminus\{h_1,h_2\}.
\]
and $\sim'$ is the restriction of $\sim$ to $H'.$
For $h\in \sig_0^{-1}(\{v_1,v_2\})$ we define $\sig'_0(h)=v.$ Otherwise $\sig'_0(h)=\sig_0(h).$
For any tail $t,~\mmm'(t)=\mmm(t).$
\[g'(v)=
\begin{cases}
g(v_1)+1+\varepsilon(f), &\mbox{if}~v_1=v_2 ,\\
g(v_1)+g(v_2)+\varepsilon(f), &\mbox{if}~v_1\neq v_2,~\varepsilon(v_1)=\varepsilon(v_2),\\
\varepsilon(v_1)g(v_1)+\varepsilon(v_2)g(v_2),&\mbox{otherwise.}
\end{cases}
\]
When $f\in H^{\text{CB}},$ a contracted boundary of vertex $v$, then $V'=V,~H'=H\setminus\{f\},H^{'\text{CB}}=H^{\text{CB}}\setminus\{f\}.$ We update $\sim',\sig'_0,\mmm'$ as above.
We put $g'(w)=g(w),$ for $w\neq v,$ and $g'(v)=g(v)+1$ if $v$ is open, otherwise we set $g'(v)=2g(v)$ and declare $v$ to be open. 

Observe that there is a natural proper injection $H'\hookrightarrow H$, so we may identify $H'$ with a subset of $H.$ This identification induces identifications of tails and of edges. Using the identifications, we extend the definition of smoothing in the following manner.
Given a set $S=\left\{f_1 , \ldots, f_n\right\}\subseteq E\left(\Gamma\right)$, define the smoothing at $S$ as
\[
d_S\Gamma = d_{f_n}\left(\ldots d_{f_2}\left(d_{f_1}\Gamma\right)\ldots\right).
\]
Observe that $d_S\Gamma$ does not depend on the order of smoothings performed.
\end{definition}

\begin{definition}\label{def:smoothing-top}
A stable topological surface $\Sigma'$
is a \emph{smoothing} of a topological stable marked surface $\Sigma$ at an internal node $z_\nu\sim z_\mu,$ if there exists a simple closed path $\gamma\hookrightarrow\Sigma',$ and a map $\varphi:\Sigma'\to\Sigma$ which takes $\gamma$ to the node, and restricts to an orientation preserving homeomorphism $\varphi:\Sigma'\setminus\gamma\simeq\Sigma\setminus\{z_\mu,z_\nu\}.$ In this case we say that $\gamma$ is contracted to the node.
~We say that $\gamma$ \emph{degenerates} to $z_\nu$ when this time $\gamma$ is an \emph{oriented} simple closed path in $\Sigma',$ if $\gamma$ is contracted to the node, and
the $\varphi-$preimage of a small enough neighborhood of $z_\nu$ lays in the left of $\gamma.$
~The definitions of smoothing in a boundary node, or degeneration to a boundary half node are analogous, only with a simple arc that connects two boundary points.

A topological stable surface $\Sigma'$ is the smoothing of a topological stable surface $\Sigma$ at a contracted boundary $z_\nu,$ if there exists a boundary component $\partial\Sigma'_\nu,$ and $\varphi:\Sigma'\to\Sigma,$ such that $\varphi(\partial\Sigma'_\nu)=z_\nu,$ and $\varphi:\Sigma'\setminus\partial\Sigma'_\nu\simeq\Sigma\setminus z_\nu.$
\end{definition}
Note that if $e$ is the edge of $\Gamma(\Sigma)$ which corresponds to the node $z_\nu\sim z_\mu$ in $\Sigma,$ then $\Gamma(\Sigma')=d_{e}\Gamma(\Sigma),$ where $\Sigma'$ is the smoothing of $\Sigma$ in that node. Similarly for smoothing in contracted boundaries.

Note that in case $\Gamma = d_S\Gamma'$, then $H'$ is canonically a subset of $H,$ and we have a natural identification between $E\left(\Gamma\right)$ and $E\left(\Gamma'\right)\setminus S$.

We can now define boundary maps
\begin{gather*}
\pu : \RCG \to 2^{\RCG},\qquad\partial : \RCG \to 2^{\RCG},
\end{gather*}
by putting
\[\pu \Gamma = \left\{\left.\Gamma' \right| \exists S\subseteq E\left(\Gamma'\right),\, \Gamma = d_S\Gamma'\right\},~~
\partial\Gamma =\pu\Gamma\setminus\{\Gamma\}.\]
These maps naturally extend to maps $2^{\RCG} \to 2^{\RCG}.$


\subsubsection{Moduli of open surfaces}
In this paper we consider orbifolds with corners; we follow the definitions of \cite[Section 3]{Zernik} (which build on the works \cite{Joyce,Joyce2}).

%
%
\begin{nn}
For $\Gamma \in \RCG$, denote by $\RCM_\Gamma$ the set of isomorphism classes of stable marked genus $g$ surfaces with associated graph $\Gamma.$

Define
\[
\oRCM_\Gamma = \coprod_{\Gamma'\in\pu\Gamma}\RCM_{\Gamma'}.
\]

We abbreviate $\oRCM_{g,k,l}=\oRCM_{\RGamma_{g,k,l}},\RCM_{g,k,l}=\RCM_{\RGamma_{g,k,l}}.$ We similarly define $\oCM_{g,n},~\CM_{g,n},$ which are just the usual Deligne-Mumford moduli spaces of stable and smooth curves respectively.

For $i\in \text{Image}(\mmm^I)\setminus\{0\},$ 
write $\CM_{v_i(\Gamma)}$ for the moduli of the graph $v_i(\Gamma),$ and denote by $v_i:\CM_\Gamma\to\CM_{v_i(\Gamma)}$ the natural map,
which on the level of objects sends $\Sigma\to\Sigma_i.$
\end{nn}

The space $\oRCM_{g,k,l}$ is a compact smooth orbifold with corners of real dimension
\[
\dim_\R \oRCM_{g,k,l} = k + 2l +3g - 3.
\]
We attribute this result to Amitai Netser Zernik \cite[Section 2]{Zernik}.
His setting is slightly different. He considers open stable genus $0$ maps to homogeneous varieties, and he proves that the moduli space of these maps is an orbifold with corners. In our case the target space is a point, but the genus is arbitrary. This change does not affect his results or techniques, since they only rely on convexity of the corresponding closed moduli problem, that is, on the fact the moduli space of (complex) stable maps is a smooth orbifold, which clearly holds for $\oCM_{g,n}.$
We review the argument.
Consider the following sequence:
\begin{equation}
\label{eq:moduli_no_spin_sequence}
\oCM_{g,k,l}^\R\stackrel{(4)}{\hookrightarrow}
\widetilde{\mathcal{M}}_{g,k,2l}^{\R}\stackrel{(3)}{\to}
\widetilde{\mathbb{R}\mathcal{M}}_{g,k,2l}\stackrel{(2)}{\to}
\overline{\mathbb{R}\mathcal{M}}_{g,k,2l}\stackrel{(1)}{\to}\overline{\mathcal{M}}_{g,k+2l}.
\end{equation}
We define the moduli spaces and maps appearing in \eqref{eq:moduli_no_spin_sequence} as follows.

\textbf{Step 1: }
First, $\overline{\mathbb{R}\mathcal{M}}_{g,k,2l}$ is the fixed locus of the involution on $\oCM_{g,k+2l}$ defined by
\[(C;z_1, \ldots, z_{k+2l}) \mapsto (\overline{C}; z_1, \ldots, z_k, z_{k+l+1},\ldots, z_{k+2l},z_{k+1},\ldots,z_{k+l}).\]
where $\overline{C}$ is the same smooth curve $C,$ but with the conjugate complex structure. This is a compact smooth real orbifold, as it is the fixed locus of an anti-holomorphic involution over a smooth complex orbifold. More details on the fixed point functor on stacks can be found in \cite[Section 2.5]{Zernik}. This orbifold parameterizes isomorphism types of stable marked curves with a conjugation.

\textbf{Step 2: }
The next step is to cut $\overline{\mathbb{R}\mathcal{M}}_{g,k,2l}$ along strata which parameterize surfaces with at least one real node. 
These strata form a real normal crossing divisor, as they are the fixed point loci of the previous involution, applied to the normal crossing divisor of nodal strata in $\oCM_{g,k+2l}.$ The cutting procedure is via the \emph{real hyperplane blow-up} of \cite[Section 3.3]{Zernik}, and it is proven there that the result of this blow up is an orbifold with corners which we denote by $\widetilde{\mathbb{\R}\mathcal{M}}_{g,k,2l}$.

\textbf{Step 3: }
$\widetilde{\mathbb{\R}\mathcal{M}}_{g,k,2l}$ is made of several connected component. Consider those components whose generic point is a real curve $C$ with a conjugation $\varrho$ such that $C\setminus C^\varrho$ is disconnected.  Then $\widetilde{\mathcal{M}}_{g,k,2l}^{\R}$ is the disconnected $2$-to-$1$ cover of the union of those connected components, given, at the level of objects $(C,\varrho),$ by the choice of a \emph{distinguished half}, a connected component of $C\setminus C^\varrho$.  Thanks to the real blow up procedure, this choice extends naturally to the boundary strata. The resulting space is still a compact orbifold with corners, as a degree $2$ cover of such a space.

\textbf{Step 4: }
$\oCM_{g,k,l}^\R$ is the submoduli of $\widetilde{\mathcal{M}}_{g,k,2l}^{\R}$ made of connected components such that the marked points $w_{k+1},\ldots, w_{k+l}$ lie in the distinguished half. This final space is a compact orbifold with corners, as it is the union of connected components of a compact orbifold with corners.

Set theoretically $\oCM_{g,k,l}^\R$ is naturally identified with the moduli space of stable marked open $(g,k,l)-$surfaces, and therefore we identify this moduli with $\oCM_{g,k,l}^\R.$  
The construction endows the moduli space $\oCM_{g,k,l}^\R$ with topology and an orbifold with corners structures. For the dimension see, for example, \cite[Theorem 1.2]{Liu}.

In general the space $\oRCM_{g,k,l}$ is non orientable and disconnected.
A stable marked surface with $b$ boundary nodes or contracted boundaries belongs to a corner of the moduli space $\oRCM_{g,k,l}$ of codimension $b.$ For further reading about the nodal strata of the real and open moduli spaces we refer the reader to \cite[Section~3]{Liu}.

\begin{nn}
Denote by $D:\oRCM_{g,k,l}\to\oCM_{g,k+2l}$ the moduli-level doubling map $\Sigma\to \Sigma_\C,$ which is the composition of the maps of \eqref{eq:moduli_no_spin_sequence}.
\end{nn}


\subsection{Graded surfaces and their moduli space}\label{subsec:graded}
We present here the extra structure needed for the definition of intersection theory for open Riemann surfaces, following \cite{ST,ST0}.
\subsubsection{Smooth graded surfaces}\label{subsub:smooth}
Let $\Sigma$ be a smooth closed genus $g$ surface. A \emph{spin structure} twisted in $\{z_i\}_{i\in \I_1},~\I_1\subseteq\I,$
is a complex line bundle $\CCCL\to \Sigma$ together with an isomorphism
\[
b:\CCCL^{\otimes 2}\simeq\omega_{\Sigma}\left(-\sum_{i\in \I_1} z_i\right),
\]
where $\omega_{\Sigma}(-\sum_{i\in \I_1} z_i)$ is the canonical bundle twisted in $\{z_i\}_{i\in {\I_1}}.$

Let $\Sigma$ be a smooth genus $g$ open surface. A \emph{real spin structure} twisted in $\{x_i\}_{i\in \B_1},\{z_i\}_{i\in {\I_1}},$ where $\B_1\subseteq\B,$ and $\I_1\subseteq\I,$
is a triple $(\CCCL,b,\widetilde\varrho),$ where $(\CCCL,b)$ is a spin structure on the doubled surface $D(\Sigma)=(\Sigma_\C,\varrho)$ twisted in $\{x_i\}_{i\in \B_1},\{z_i,\bar{z}_i\}_{i\in {\I_1}},$ i.e., $\CCCL\to\Sigma_\C$ is a line bundle, and
\[
b:\CCCL^{\otimes 2}\simeq\omega_{\Sigma_\C}\left(-\sum_{i\in \B_1}x_i-\sum_{i\in \I_1} (z_i+\bar{z}_i)\right)
\]
is an isomorphism,
where $\omega_{\Sigma_\C}(-\sum_{i\in \B_1}x_i-\sum_{i\in \I_1} (z_i+\bar{z}_i))$ is the canonical bundle twisted in $\{x_i\}_{i\in \B_1},\{z_i,\bar{z}_i\}_{i\in {\I_1}}$.
\\The map $\widetilde\varrho:\CCCL\to\CCCL,$ is an involution which lifts $d\varrho,$ the induced involution on $\omega_{\Sigma_\C}.$

$\widetilde\varrho,d\varrho$ restrict to conjugations on the fibers of $$\CCCL\to {\Sigma_\C^\varrho}\simeq\partial\Sigma,\quad\omega_{\Sigma_\C}(-\sum_{i\in \B_1}x_i)\to {\Sigma_\C^\varrho}\simeq\partial\Sigma.$$
These conjugations give rise to a $\varrho-$invariant real subbundle.
The real line bundle \[\omega_{\Sigma_\C}(-\sum_{i\in \B_1}x_i)^\varrho\to {\Sigma_\C^\varrho}\] is oriented: Take any nowhere vanishing section $\xi\in\Gamma(T {\Sigma_\C^\varrho}\to {\Sigma_\C^\varrho}),$ which points in the direction of the orientation on ${\Sigma_\C^\varrho},$ induced from its identification with $\partial\Sigma.$ The orientation of $\omega^\varrho_{\Sigma_\C}|_{\Sigma_\C^\varrho\setminus{i\in \B_1}},$ is defined by a section $\hat\xi$ which satisfies $\hat\xi (\xi)> 0.$ 
Such a section is said to be \emph{positive}. Thus,
using $b,$ it is seen that for any connected component of ${\Sigma_\C^\varrho}\setminus\{x_i\}_{i\in \B_1},$ either $\hat\xi$ or $-\hat\xi$ has a root in $\CCCL^{\widetilde\varrho}.$
In the case that for each connected component of ${\Sigma_\C^\varrho}\setminus A,$ where $A\supseteq\{x_i\}_{i\in \B_1}$ is a finite set of points, the positive sections have roots in $\CCCL^{\widetilde\varrho}$, we say that $(\CCCL,\widetilde\varrho)$ is \emph{compatible} away from $A.$ In case $A=\{x_i\}_{i\in \B_1}$ we say that the structure is compatible.

\begin{prop}\label{prop:arf_1_for_bdry_pts}
If $\B_1\neq\emptyset$ then there are no compatible real twisted spin structures.
\end{prop}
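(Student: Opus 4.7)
The plan is to obtain a contradiction by analyzing the local behavior of the real structure of $\CCCL$ in a neighborhood of a single boundary twist point $x_i\in\B_1$. The key observation will be that the ``choice of sign'' (whether $+\hat\xi$ or $-\hat\xi$ admits a square root) is forced to flip as one crosses $x_i$, so with any boundary twist at all, one of the two adjacent arcs violates the compatibility condition.

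First, I would set up a local model near $x_i$: pick a $\varrho$-invariant holomorphic coordinate $x$ with $x(x_i)=0$ so that $\varrho$ acts as complex conjugation and $\partial\Sigma_\C$ is locally $\{x\in\R\}$. Then $\omega_{\Sigma_\C}(-x_i)$ is locally generated by $x\,dx$, and a $d\varrho$-invariant generator differs from this by a real nonvanishing factor. Choose a local $\tilde\varrho$-equivariant trivialization $\ell$ of $\CCCL$, i.e.\ $\tilde\varrho(\ell)=\ell$ on the boundary; then $b(\ell^{\otimes 2})$ is a $d\varrho$-fixed local generator of $\omega(-x_i)$, of the form $g(x)\cdot x\, dx$ with $g$ real on $\partial\Sigma_\C$ and $g(x_i)\ne0$.

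Next, any real local section of $\CCCL^{\tilde\varrho}$ on the boundary has the shape $s=f(x)\ell$ with $f$ real, so $b(s^{\otimes 2})=f(x)^2 g(x)\cdot x\, dx$. Evaluating against the positive tangent vector $\partial_x$ gives $f^2 g\cdot x$, whose sign is $\mathrm{sign}(g(x_i))\cdot \mathrm{sign}(x)$ near $x_i$. Thus on one of the two arcs meeting $x_i$ the square of any real section of $\CCCL^{\tilde\varrho}$ is a positive multiple of $\hat\xi$, while on the other arc it is a negative multiple. In particular the arc on one side admits a root of $+\hat\xi$ in $\CCCL^{\tilde\varrho}$, and the arc on the other side admits a root of $-\hat\xi$, but not conversely.

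Finally, suppose $\B_1\neq\emptyset$ and fix any $x_i\in\B_1$. By the previous step, at least one of the two arcs of $\partial\Sigma_\C\setminus\{x_j\}_{j\in\B_1}$ incident to $x_i$ has a root of $-\hat\xi$ rather than of $+\hat\xi$ in $\CCCL^{\tilde\varrho}$, so the positive section $\hat\xi$ fails to admit a root on this arc. This contradicts the definition of compatibility, which demands that $\hat\xi$ has a root on every component of $\partial\Sigma_\C\setminus\{x_j\}_{j\in\B_1}$. The only slightly delicate point is step one — verifying that a $\tilde\varrho$-equivariant local generator $\ell$ exists and that the sign of $g(x_i)$ is invariant under rescaling $\ell$ by a nonvanishing real function — but this is essentially the local existence of a real structure for an antiholomorphic involution and is immediate.
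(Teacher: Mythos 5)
Your proposal is correct and follows essentially the same route as the paper's own proof: both fix a $\varrho$-invariant nowhere-vanishing local section of $\CCCL$ near $x_i$, compare $b(s^{\otimes 2})$ with the $\varrho$-real local generator $z\,dz$ of $\omega(-x_i)$, observe that the ratio is a nowhere-vanishing conjugation-invariant (hence real, constant-sign) holomorphic function, and conclude that the sign of $b(s^{\otimes 2})$ against the positive tangent direction must flip across $x_i$, contradicting compatibility on one of the two adjacent arcs.
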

\begin{proof}
Suppose $i\in\B_1.$
Let $U$ be a contractible $\varrho-$invariant neighborhood of $x_i,$ which contains no other marked points.
One can find a $\varrho-$invariant section $s\in\Gamma(\CCCL\to U),$ which vanishes nowhere in $U,$ possibly after replacing $U$ by a smaller neighborhood.
In $\varrho-$anti-invariant local coordinates around $x_i,$ the real section $zdz$ generates $\omega_{\Sigma_\C}(U).$
Write $f(z)=zdz/b(s^{\otimes 2}),$ this is a nowhere vanishing holomorphic function in $U.$ Moreover, $f$ is conjugation invariant, and hence real on $U^\varrho.$
In particular, it does not change sign there. But this is impossible for a compatible structure since $zdz$ is positive on exactly one component of $U^\varrho\setminus\{x_i\}.$
%
%
%
\end{proof}
%

Given a compatible real spin structure,
a \emph{lifting} of the spin structure is a choice of a section in $$\Gamma(S^0(\CCCL^{\widetilde\varrho})\to{{\Sigma_\C^\varrho}\setminus\{x_i\}_{i\in\B}}),$$ where $S^0$ stands for the rank zero sphere bundle.
We say that the lifting \emph{alternates} in $x_j,$ and that $x_j$ is a \emph{legal} point, if this choice cannot be extended to $\Gamma(S^0(\CCCL^{\widetilde\varrho})\to{{\Sigma_\C^\varrho}\setminus\{x_i\}_{i\in\B\setminus\{j\}}}).$
Otherwise the lifting does not alternate in $x_j$ and $x_j$ is an \emph{illegal} point.

\begin{definition}\label{def:def2}
A \emph{twisted closed smooth spin surface} is a closed smooth surface $(\Sigma,\{z_i\}_{i\in\I}),$
together with a twisted spin structure twisted in $\{z_i\}_{i\in {\I_1}}.$
In case $\I_1=\emptyset,$ we call it a \emph{closed smooth spin surface}.

A \emph{twisted open smooth spin surface} is a smooth open surface $(\Sigma,\{x_i\}_{i\in\B},\{z_i\}_{i\in\I}),$
together with a compatible twisted real spin structure twisted in $\{z_i\}_{i\in {\I_1}}.$
In case $\I_1=\emptyset,$ we call it an \emph{open smooth spin surface}.
A \emph{(twisted) smooth spin surface with a lifting} is a (twisted) open spin surface, together with a lifting. A lifting with all boundary points being legal is called a \emph{grading}. A surface with a non twisted spin structure, that is, $\I_1=\B_1=\emptyset$, and a grading is called a \emph{graded surface}.
An isomorphism of twisted spin surfaces is an isomorphism of the underlying surfaces and of the line bundles which respects the twists, commutes with the maps between the canonical lines in the expected sense and, in the open case, also the with the involutions. An isomorphism of twisted spin surfaces with a lifting is an isomorphism of the twisted spin surfaces, which takes the lifting to the lifting in the target, and respects the alternations.
\end{definition}
We will see below in Proposition \ref{prop:parity_disks} that the only obstruction to the existence of a graded spin structure is the parity of $g+k:$ In a graded spin structure, $g+k$ must be odd.

\subsubsection{Stable graded surfaces}
We follow the terminology of \cite{Jarvispin}. 
Let $\Sigma=\{\Sigma_\alpha\}_{\alpha\in\mathcal{C}}$ be a stable  closed surface. A \emph{spin structure} twisted in $\{z_i\}_{i\in {\I_1}},~\I_1\subseteq\I,$
is a rank $1$ torsion free sheaf $\CCCL$ over $\Sigma$ together with a map
\[
b:\CCCL^{\otimes 2}\rightarrow\omega_{\Sigma}\left(-\sum_{i\in \I_1} z_i\right),
\]
where $\omega_{\Sigma}\left(-\sum_{i\in \I_1} z_i\right)$ is the dualizing sheaf, twisted in $\{z_i\}_{i\in {\I_1}}.$

We require
\begin{enumerate}
\item
\[\text{deg}(\CCCL)=\frac{\text{deg}(\omega_{\Sigma})-|\I_1|}{2}.\]
\item
$b$ is an isomorphism on the locus where $\CCCL$ is locally free.
\item
For any point $p$ where $\CCCL$ is not free the length of $\text{coker}(b_p)$ is $1.$
\end{enumerate}
In particular, $b$ is an isomorphism away from nodes.
Nodes where $b$ is not an isomorphism are called \emph{Neveu-Schwarz} (NS), at these nodes the last requirement says exactly that $b$ vanishes in order $2.$ The other nodes are called \emph{Ramond}.

Let $\Sigma=\{\Sigma_\alpha\}_{\alpha\in\mathcal{C}\cup\mathcal{O}}$ be a stable open $(g,k,l)-$surface. A \emph{real spin structure} twisted in $\{x_i\}_{i\in \B_1},\{z_i\}_{i\in {\I_1}},$ with $\I_1\subseteq\I,$ and $\B_1\subseteq\B,$
is a triple $(\CCCL,b,\widetilde\varrho),$ where $(\CCCL,b)$ is a spin structure over the doubled surface $D(\Sigma)=({\Sigma_\C},\varrho),$ and $\widetilde\varrho:\CCCL\to\CCCL,$ is an involution which lifts $d\varrho,$ the induced involution on $\omega_{\Sigma_\C}.$
Note that this means, in particular, that $b$ is a map
\[
b:\CCCL^{\otimes 2}\rightarrow\omega_{\Sigma_\C}\left(-\sum_{i\in \B_1}x_i-\sum_{i\in \I_1} (z_i+\bar{z}_i)\right),
\]
where $\omega_{\Sigma_\C}\left(-\sum_{i\in \B_1}x_i-\sum_{i\in \I_1} (z_i+\bar{z}_i)\right)$ is the dualizing sheaf, twisted in $\{x_i\}_{i\in \B_1},\{z_i,\bar{z}_i\}_{i\in {\I_1}},$
and that
\[\text{deg}(\CCCL)=\frac{\text{deg}(\omega_{\Sigma_\C})-2|\I_1|-|\B_1|}{2}.\]

\begin{rmk}\label{rmk:ramond}
Suppose $\Sigma$ is a nodal curve, open or closed, $z$ a node with preimages $z_\nu,z_\mu\in \Norm(\Sigma).$ Then there are natural residue maps $\text{res}_\eta:(\Norm^*\omega_\Sigma)_{z_\eta}\simeq\C.$ These induce an isomorphism $a:(\Norm^*\omega_\Sigma)_{z_\mu}\simeq(\Norm^*\omega_\Sigma)_{z_\nu},$ by $\text{res}(v)+\text{res}(a(v))=0.$
In the Ramond case, we also have an isomorphism $\widetilde{a}:(\Norm^*\CCCL)_{z_\mu}\to(\Norm^*\CCCL)_{z_\nu},$ and $\text{res}(b(v^{\otimes2}))+\text{res}(b(\widetilde{a}(v)^{\otimes2}))=0.$
For more details see \cite{Jarvispin}.
\end{rmk}

When $z\in\Sigma\subset\Sigma_\C$ is a contracted boundary which is Ramond, $d\varrho,\widetilde\varrho$ lift to complex anti-linear isomorphisms between the fibers of $\Norm^*\omega_{\Sigma_\C},$ $\Norm^*\CCCL$ in $z_\pm,$ where $z_+$ is the preimage of $z$ in $\Norm(\Sigma),$ and $z_-$ is the preimage of $z$ in $\Norm(\bar\Sigma).$
By composing with $a,\widetilde{a}$ we get anti-linear involutions on the fibers at $z_+.$
This defines real lines which we denote by $(\omega_\Sigma^\varrho)_{z_+},(\CCCL^{\widetilde\varrho})_{z_+},$ together with maps $\text{res}:(\omega_\Sigma^\varrho)_{z_+}\simeq\sqrt{-1}\R,$ where $\sqrt{-1}$ is the root of $-1$ in the upper half plane, and
$b^2:(\CCCL^{\widetilde\varrho})_{z_+}\to(\omega_\Sigma^\varrho)_{z_+},$ defined by $b^2(v)=b(v^{\otimes 2}).$

We say that the real spin structure is \emph{compatible in a contracted boundary $z$} if $z$ is a Ramond node of $\Sigma_\C$ and the image of $b^2$ is in the positive imaginary half line $\text{res}^{-1}(\sqrt{-1}\R_{\geq0}).$

The real spin structure is \emph{compatible} if it is compatible in contracted boundaries and away of special boundary points. Compatibility away from special points is defined as in the smooth case.

A \emph{lifting} of a compatible real spin structure is a choice of a section $$s\in\Gamma(S^0(\CCCL^{\widetilde\varrho})\to{{\Sigma_\C^\varrho}\setminus (\cup_{\alpha\in\mathcal{O}}B(\Sigma_\alpha))}),$$ where $S^0$ stands for the rank zero sphere bundle.
~The notions of alternations and of legal marked point or a legal half node are as in the smooth case.

Note that the definition of the lifting includes, for any contracted boundary node $z,$ a \emph{choice of a lifting for the contracted boundary}, i.e.,  with the above notations and identifications, a choice of direction in $(\CCCL^{\widetilde\varrho})_{z_+}$  which is mapped by $\text{res}\circ b^2$ to the ray $\sqrt{-1}\R_{\geq0}.$

\begin{prop}\label{prop:spin_on_comps}
\begin{enumerate}
\item
A real spin structure on a stable surface, twisted or not, induces a real spin structure, possibly twisted, on any open component of the normalization and a possibly twisted spin structure on any closed component of it. For any node of $\Sigma,$ the induced structure is either twisted in both of its preimages in the normalization, or not twisted in both.
The former case is the Ramond case, the latter is Neveu-Schwarz.
If there are no Ramond nodes then the spin structures on the closed components of the normalization, together with the real spin structures on its open components, determine the real spin structure on $\Sigma.$ 
\item
If the real spin structure is compatible, then so is the induced structure on any open component of the normalization. In this case, in particular, there are no twists in boundary marked points, and no boundary Ramond nodes. In case there are no Ramond internal nodes but there may be contracted boundaries, compatible spin structures on the normalization determine the compatible spin structure on $\Sigma.$ 
\item
A lifting on $\Sigma$ induces a lifting on the normalization. A lifting on the normalization, together with a choice of a direction in $(\text{res}\circ b^2)^{-1}(\sqrt{-1}\R_{\geq0})$ for the preimage $z_+$ of any contracted boundary, induces a lifting on $\Sigma.$ 
\end{enumerate}
\end{prop}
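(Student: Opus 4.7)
The strategy is to pull everything back along the normalization map $\Norm\colon \Norm(\Sigma_\C)\to\Sigma_\C$ and analyze the local structure at each node, using the fact that a rank $1$ torsion free sheaf on a nodal curve is locally either free (giving a line bundle after pullback with no twist added) or non-free (in which case the pullback picks up a twist of one point on each of the two branches, and $b$ vanishes to order $2$ at the node). Because $\Norm$ commutes with $\varrho$ (nodes map to nodes, and $\varrho$ either fixes a node setwise or swaps it with its conjugate), the involution $\tilde\varrho$ on $\CCCL$ lifts to an involution $\Norm^*\tilde\varrho$ on $\Norm^*\CCCL$ compatible with the decomposition of $\Norm(\Sigma_\C)$ into open components (mapped to themselves) and conjugate pairs of closed components (swapped). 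This gives the real spin structures on the open components and the spin structures (paired by $\varrho$) on the closed components asserted in (a), and the Ramond/Neveu-Schwartz dichotomy reads off from the two possible local types of torsion free sheaf. The symmetry of this local classification under $\sigma_1$ forces the twist to appear on both preimages or on neither, giving the second sentence of (a).

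For the recovery statement in (a), I would observe that at a Neveu-Schwartz node the pullback sheaf together with $b$ is already the universal model, so the original $\CCCL$ is reconstructed canonically as the pushforward; only at a Ramond node is there genuine gluing data (an identification of the two fibers compatible with the isomorphism $\bar a$ of Remark~\ref{rmk:ramond}), so the absence of Ramond nodes makes reconstruction unique, modulo checking that the pushforward of a $\varrho$-equivariant structure is $\varrho$-equivariant, which follows from the equivariance of $\Norm$. The data of the twists at nodes on the normalization must also be consistent with the degree formula, which is automatic from how the twist count changes under $\Norm^*$.

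For (b), compatibility is a condition on sections of $(\CCCL^{\tilde\varrho})|_{\partial\Sigma_\C \setminus A}$, and $\Norm$ restricts to the identity on $\partial\Sigma_\C$ minus finitely many boundary half-nodes; hence a compatible global section pulls back to a compatible section on each open component, proving descent of compatibility. The absence of twists at boundary marked points on each open component of the normalization is an immediate application of Proposition~\ref{prop:arf_1_for_bdry_pts}, since a compatible twisted real spin structure with nonempty $\B_1$ cannot exist on a single open component. The absence of boundary Ramond nodes is the main technical step: I would argue locally exactly as in the proof of Proposition~\ref{prop:arf_1_for_bdry_pts}. A boundary node of $\Sigma$ lifts to a node of $\Sigma_\C$ fixed by $\varrho$; if it were Ramond, then $\CCCL$ would be locally free and $b$ an isomorphism in a $\varrho$-invariant neighborhood, so one could choose a $\varrho$-invariant nonvanishing section $s$ and write $\omega/b(s^{\otimes 2})=f$ for a nowhere vanishing conjugation-invariant holomorphic function $f$, which is thus real on the fixed locus. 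But the two branches of the real locus at the node have locally opposite orientations relative to the canonical residue, so the sign of $f$ would have to change, a contradiction. Reconstruction in the compatible, no-internal-Ramond case then follows by the same gluing argument as in (a) together with the observation that the positive-residue condition at shrunk boundaries is a condition that lives on the normalization.

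Part (c) is the easiest: restriction of a lifting to the boundary of each normalization component (minus boundary marked points and boundary half nodes) gives a lifting, and conversely, since the section lives on $\partial\Sigma_\C$ minus boundary marked points of open components and the normalization map is an isomorphism on this locus, a collection of compatible liftings on the components glues to a lifting on $\Sigma_\C$. The main obstacle in the whole argument is the local analysis of the Ramond case at a boundary node in (b); once that sign obstruction is nailed down, the rest is formal bookkeeping with pullback and pushforward along $\Norm$.
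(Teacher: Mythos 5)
Your strategy reproduces the structure of the paper's proof quite closely (which, being terse, mostly cites \cite{Jarvispin} for (a) and Proposition~\ref{prop:arf_1_for_bdry_pts} and Remark~\ref{rmk:ramond} for (b)), and your fleshing out of the local pullback/pushforward analysis, the Ramond/Neveu--Schwartz dichotomy, and the equivariance of the normalization map is a sound way to unpack those references. Two remarks. First, for the absence of boundary Ramond nodes your route is genuinely different from the paper's: the paper passes to the normalization, observes that a boundary Ramond node becomes a twisted boundary marked point of a smooth open component carrying a \emph{compatible} twisted structure, and invokes Proposition~\ref{prop:arf_1_for_bdry_pts} directly; you instead rerun the sign argument of that proposition locally at the node, using the generator $dx/x$ of the dualizing sheaf in place of $z\,dz$. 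Both work --- the sign flip of $dx/x$ along each real branch of the node is exactly the same obstruction --- but the paper's reduction is cleaner because you only need Proposition~\ref{prop:arf_1_for_bdry_pts} once and never have to analyze the dualizing sheaf of a nodal curve by hand. Second, and this is the one place I would push back: your treatment of the reconstruction at shrunk boundaries is slightly off. You say ``the positive-residue condition at shrunk boundaries is a condition that lives on the normalization,'' but that is not how the uniqueness arises. A shrunk boundary is a Ramond node of $\Sigma_\C$, so the gluing data is \emph{not} canonical: as in Remark~\ref{rmk:ramond}, given the structure on the normalization there are exactly two candidate identifications $\pm\bar a$ of the fibers, and the compatibility condition (the positive-residue constraint) is a condition on that gluing, satisfied by exactly one of the two. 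That dichotomy --- one of $\pm\bar a$ is compatible, the other is not --- is what makes the compatible structure on $\Sigma_\C$ uniquely determined by the structures on the normalization. As written, your phrasing leaves a gap precisely at the point where the choice of sign has to be eliminated; the paper spells this step out explicitly and you should too.
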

\begin{proof}
The fact that the twisted spin structure induces one on the normalization by pull back, and is induced by one, when there are no Ramond nodes is already true in the closed case, see for example \cite{Jarvispin}. Moreover, it is shown there that given the structures on the normalization and the identifications of the stalks in preimages of nodes, see Remark \ref{rmk:ramond}, the twisted spin structure on the surface is determined. The involution extends uniquely by continuity.

The second claim follows from the fact that one can examine compatibility away from special points. Ramond boundary nodes can not appear by Proposition \ref{prop:arf_1_for_bdry_pts}.
If $z$ is a contracted boundary, there is a single, up to sign, possible identification map $\widetilde{a},$ as in Remark \ref{rmk:ramond}.
Now, if $\widetilde{a}$ makes the contracted boundary compatible, with respect to the involution, $-\widetilde{a}$ will make it not compatible, and vice versa.
The last statement is evident.
\end{proof}

\begin{definition}
A \emph{twisted closed stable spin surface} is a closed stable surface $(\Sigma,\{z_i\}_{i\in\I}),$
together with a spin structure twisted in $\{z_i\}_{i\in {\I_1}}.$
In case $\I_1=\emptyset,$ we call it a \emph{stable closed spin surface}.
A \emph{twisted open stable spin surface} is a stable open surface $(\Sigma,\{x_i\}_{i\in\B},\{z_i\}_{i\in\I}),$
together with a compatible real spin structure twisted in $\{z_i\}_{i\in {\I_1}}.$
In case $\I_1=\emptyset,$ we call it a \emph{stable open spin surface}.
A \emph{(twisted) stable spin surface with a lifting} is a (twisted) open spin surface, together with a lifting such that for any boundary node, exactly one half node is legal. If all the boundary marked points are legal, the lifting is called a \emph{grading}. A (twisted) stable spin surface with a grading is \emph{effective} if the underlying surface is, and, for any component of the normalization, with genus $0,~3$ special boundary points and no special internal points, its special points are legal. A \emph{stable graded surface} is a (non-twisted) stable spin surface with a grading.
The isomorphism notions are as in the smooth case.
\end{definition}
The legality condition on the nodes may seem peculiar at first glance. However this is the condition which allows smoothing the stable graded surface at a boundary node. The closed analog of it is that the twists at the two half nodes of the same node must agree. In a nutshell, as we will see in the next subsection, in a twisted spin surface any closed path which does not pass through special points has a well defined notion of parity. By pinching the surface in that path, a node is formed, and this node is NS or Ramond according to the parity of the pinched path. Similarly, any \emph{oriented} arc between boundary points, which avoids special points, also has a well defined notion of parity. We will see in Proposition \ref{Q_alternates} that this parity changes if the orientation of the arc changes. By pinching the arc one obtains a surface with a new boundary node. The boundary node is NS, but the legality of its half nodes is determined by the parity of the corresponding oriented arcs. See Lemma \ref{lem:Q_q_and_bridges} for an exact statement. Interestingly, when the node is separating the legality can be determined from the parity considerations of Proposition \ref{prop:parity_disks}. Since in $g=0$ all nodes are separating, the genus $0$ theory could have been defined without referring to the graded spin structure.
These points will be discussed more in \cite{ST0}.
\begin{figure}
\centering
\includegraphics[scale=.4]{./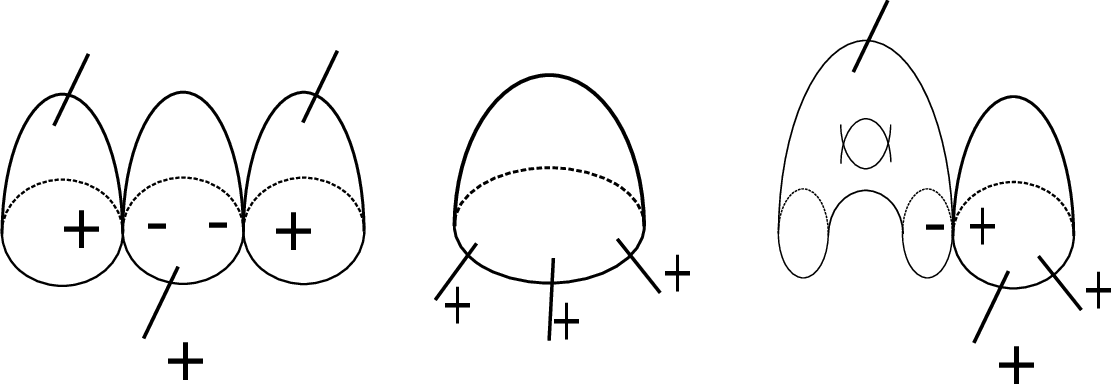}
\caption{In this figure we consider three stable graded spin surfaces. In these cases the underlying surfaces are effective. '+' near a boundary marking or a boundary half node indicates a legal point, while '-' indicates an illegal point. We omit the labels. The graded surface on the left is non effective, since in the normalization the middle components has one legal boundary marking, two illegal boundary markings and no internal markings. On the other hand, the remaining two graded surface are effective.
}
\label{fig:2_20}
\end{figure}
\begin{nn}
Denote by $\Spin(\Sigma)$ the set of isomorphism classes of graded spin structures on a stable open surface $\Sigma.$
\end{nn}
The definition of graded surfaces, together with Proposition \ref{prop:spin_on_comps}, yield the corollary
\begin{cor}\label{cor:spin_on_comps}
If $\Sigma$ has no internal nodes there is a bijection between $\Spin(\Sigma)$ and
\begin{enumerate}
\item isomorphism types of spin structures with a lifting on $\NNN(\Sigma),$ twisted precisely at preimages of contracted boundaries, such that any boundary marked point of $\Sigma$ is legal as a point of $\NNN(\Sigma),$ and for any node of $\Sigma$ exactly one half node in $\NNN(\Sigma)$ is legal.
\item A choice of a direction in $(\text{res}\circ b^2)^{-1}(\sqrt{-1}\R_{\geq0})$ for the preimage $z_+$ of any contracted boundary.
\end{enumerate} 
\end{cor}

\subsubsection{An alternative definition for the smooth case}\label{subsub:alter}
In this subsection we provide an alternative definition for smooth spin surfaces with a lifting.
This definition will be easier to work with.
Let $(\Sigma,\{x_i\}_{i\in \B},\{z_j\}_{j\in \I})$ be a smooth, open or closed, pointed Riemann surface. Choose any Riemannian metric on it.
\begin{nn}\label{nn:T1}
Denote by $T^1\Sigma$ the $S^1-$bundle of $T\Sigma.$
For a simple smooth arc or a simple smooth closed path $\gamma\subset\Sigma$ we denote the $S^0-$bundle of $T\gamma$ by $T^1\gamma.$

When the arc or path $\gamma$ is oriented, $T^1\gamma$ will stand for length $1$ oriented tangent vector field to $\gamma.$ In particular we shall use the notation $T^1\partial\Sigma$ for the branch of $T^1\partial\Sigma$ which covers the direction of the induced orientation on the boundary.
\end{nn}
We consider $T^1\Sigma$ as the $S^1-$subbundle of length $1$ vectors of $T\Sigma.$ Similarly for $T^1\gamma.$ We also consider $T^1\gamma$ as a $S^0-$subbundle of $T^1\Sigma|_\gamma.$ In what follows we use these identifications without mentioning a choice of metric. Different metrics will give rise to equivalent structures, and in fact, one can make these definitions metric-independent by considering the $S^0$ and $S^1$ bundles as sub quotients of the corresponding vector bundles.

For a point $p\in\Sigma,$ a vector $w\in T_p\Sigma,$ and an angle $\theta \in \R/2\pi\R,$ let $r_\theta w = r_\theta(p)w,$ be the operator of rotation by $\theta$ in the counterclockwise direction. We shall omit $p$ from the notation when it is clear from context. The operator $r_\theta(p)$ is induced on $T^1_p\Sigma,$ and we shall use the same notation.

If $u,w$ are two tangent vectors at $p$ denote the counter clockwise angle from $u$ to $w$ by $\measuredangle(u,w).$

For a smooth arc $\gamma:[0,1]\to\Sigma,$ there exists a canonical trivialization \[\varsigma:[0,1]\times S^1\to T^1\Sigma|_\gamma,\] defined by
$$\varsigma(t,\theta)=(\gamma(t),e^{i\theta}v_t),~v_t=(T^1)_{\gamma(t)}\gamma.$$
This trivialization defines a continuous family of maps
$$\{p(\gamma)_{s}^{t}:T^1_{\gamma(s)}\Sigma\to T^1_{\gamma(t)}\Sigma\}_{0\leq s,t\leq 1},$$
uniquely determined by the condition
\[
p_2(\varsigma^{-1}(\gamma(s),v)) = p_2(\varsigma^{-1}(\gamma(t),p(\gamma)_{s}^{t}v)),
\]
where $p_2$ is the projection on the second coordinate. One can extend the trivialization to the piecewise smooth context by approximation.
In case $s=0,t=1$ we omit them from the notation and write $p(\gamma).$ One can easily verify, in the piecewise smooth case, that
if $\gamma$ is composed of smooth sub arcs, $\gamma_i:[a_i, a_{i+1}]\to\Sigma,$ where $a_0=0<a_1<\ldots<a_n=1,$ and $\theta_{i+1}$ is $\measuredangle(\dot{\gamma}_i|_{\gamma_{i+1}(a_{i+1})},\dot{\gamma}_i|_{\gamma_i(a_{i+1})}),$ then
\[
p(\gamma) = p(\gamma_{n-1})r_{\theta_{n-1}}p(\gamma_{n-2})\dots r_{\theta_1}p(\gamma_0).
\]
We shall denote such $\gamma$ by $\gamma_1\to\gamma_2\to\ldots\gamma_n.$
For a \emph{closed} piecewise smooth path $\gamma,$ we slightly change the definition of $p$ to be
\[
p(\gamma) = r_{\theta_0}p(\gamma_{n-1})r_{\theta_{n-1}}p(\gamma_{n-2})\dots r_{\theta_1}p(\gamma_0),
\]
and note that this is in fact the identity map.
We shall denote such $\gamma$ by $\gamma_1\to\gamma_2\to\ldots\gamma_n\to\gamma_1.$
\begin{definition}
A \emph{twisted spin structure} $\SL\to\Sigma\setminus\{z_j\}_{j\in I}$ on a smooth marked $\Sigma$ is a $S^1-$bundle on $\Sigma\setminus\{z_j\}_{j\in I}$ together with a degree $2$ cover bundle map
\[
\pi=\pi^\SL:\SL\to T^1\Sigma|_{\Sigma\setminus\{z_j\}_{j\in I}}.
\]
\end{definition}
For a point $p\in\Sigma,$ a vector $w\in\SL_p,$ and an angle $\theta \in \R/4\pi\Z,$ let $R_\theta w = R_\theta(p)w,$ be the operator of rotation by $\theta$ in the counterclockwise direction. We shall omit $p$ from the notation when it is clear from context.

The \emph{parallel transport} along $\gamma:[0,1]\to\Sigma$ is the unique continuous family of maps
$$\{P(\gamma)_{s}^{t}:\SL_{\gamma(s)}\to\SL_{\gamma(t)}\}_{0\leq s,t\leq 1},$$
which covers $\{p(\gamma)_{s}^{t}\}.$
We shall sometimes call $P(\gamma)_{0}^{1}v$ the \emph{parallel transport of $v$ along $\gamma$}, and write it as $P(\gamma)v.$
\begin{rmk}\label{rmk:commutation_of_trans_and_rot}
Note that $R$ covers $r$ in the sense that
if $\pi(s)=v,$ for $s\in\SL_p,v\in T^1_p\Sigma$ then
\[
\pi(R_\theta(p)s)=r_\theta(p)v=r_{\theta(\text{mod } 2\pi)}(p)v.
\]

Observe that $R_\alpha R_\beta = R_{\alpha+\beta}.$ In addition, $P,R$ commute:
\[
R_\theta (\gamma(t))P(\gamma)_s^t v = P(\gamma)_s^t R_\theta (\gamma_s)v.
\]
\end{rmk}
\begin{definition}\label{def:spin_using_q}
A (twisted) spin structure $\SL$ is associated with a function
\[
\q = \q^\SL:H_1(\Sigma\setminus\{z_j\}_{j\in I},\mathbb{Z}_2)\to\Z_2,
\]
defined as follows. For $x\in H_1(\Sigma\setminus\{z_j\}_{j\in I},\mathbb{Z}_2),$ take a piecewise smooth connected representative $\gamma.$ Then $p(\gamma)$ is the identity. Hence $P(\gamma)$ is either the identity or minus the identity. We define $\q(x)=\q(\gamma)$ to be $1$ in the former case, otherwise it is $0.$

For any internal marked point $z_j,$ take a small disk $D_j$ which surrounds it and contains no other marked points in its closure. We define the \emph{twist} in $z_j$ to be $\q(\partial D_j).$
\end{definition}

The following well known theorem was proven by Johnson \cite{Johnson}, and states that $\q$ is a \emph{quadratic enhancement} of the Poincar\'e pairing $\langle\alpha,\beta\rangle.$
\begin{thm}\label{lem:arf_props}
$\q$ is a well defined function on $H_1(\Sigma\setminus\{z_j\}_{j\in I},\mathbb{Z}_2).$ For $\alpha,\beta\in H_1(\Sigma\setminus\{z_j\}_{j\in I},\mathbb{Z}_2),$
\[
\q(\alpha+\beta)=\q(\alpha)+\q(\beta)+\langle\alpha,\beta\rangle,
\]
\end{thm}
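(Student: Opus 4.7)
The plan is to verify both assertions — well-definedness on $H_1(\Sigma\setminus\{z_j\}_{j\in\I},\Z_2)$ and the quadratic refinement — by manipulating the parallel transport operator $P$ on piecewise smooth representatives. The crucial structural fact (already noted in the text) is that for every closed piecewise smooth loop $\gamma$ the base rotation satisfies $p(\gamma)=\mathrm{id}$, so $P(\gamma)$ covers the identity and lies in the discrete two-element set $\{\mathrm{id},-\mathrm{id}\}\subset\mathrm{Aut}(\SL|_{\gamma(0)})$, which is what lets $\q$ be $\Z_2$-valued in the first place.

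Well-definedness will proceed in three steps. First, $\q(\gamma)\in\Z_2$ is unambiguous for a single closed piecewise smooth $\gamma$ by the observation above. Second, if $\gamma_0$ and $\gamma_1$ are homotopic in $\Sigma\setminus\{z_j\}_{j\in\I}$ through closed piecewise smooth loops, then $t\mapsto P(\gamma_t)$ is continuous into the discrete set $\{\pm\mathrm{id}\}$, hence constant. Third, to descend to homology mod $2$, every class is realised by a disjoint union of simple loops; parallel transports along disjoint loops act on disjoint fibres of $\SL$, so $\q$ is additive on disjoint unions; and $\q$ vanishes on a simple loop bounding an embedded disk $D\subset\Sigma\setminus\{z_j\}_{j\in\I}$, because the tangent framing extends radially over $D$ and lifts to $\SL|_D$ as a trivialised $2$-cover, forcing $P(\gamma)=\mathrm{id}$.

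For the quadratic refinement, take transverse piecewise smooth representatives $\alpha,\beta$ meeting in $n$ points $p_1,\dots,p_n$, so that $\langle\alpha,\beta\rangle\equiv n\pmod{2}$ by the definition of the mod $2$ intersection pairing. Resolve each crossing compatibly with the orientation of $\Sigma$ to obtain a piecewise smooth cycle $\gamma$ in the homology class $\alpha+\beta$. Applying the concatenation formula for $p$ displayed in the preceding notation, together with Remark~\ref{rmk:commutation_of_trans_and_rot} (which says $P$ and the rotations $R_\theta$ commute and that $R_\theta$ covers $r_\theta$), one sees that each local resolution alters the running tangent angle by $\pm\pi$, and this lifts upstairs to multiplication by $R_{\pm\pi}=-\mathrm{id}$. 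Accumulating these contributions over all crossings and reassembling the product, one obtains $P(\gamma)=(-\mathrm{id})^n\,P(\alpha)\,P(\beta)$, whence $\q(\alpha+\beta)\equiv\q(\alpha)+\q(\beta)+n\pmod{2}$ as required.

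The main obstacle will be the local bookkeeping at the crossings: one must verify that each of the two possible transverse resolutions at a single crossing contributes a net angle increment of exactly $\pm\pi$ (not $0$ or $2\pi$), and that the chosen resolutions globally assemble into a single closed cycle whose piecewise smooth concatenation matches the formula used — the global reassembly in particular may require splitting $\alpha\cup\beta$ at the $p_i$ and reglueing the arcs in the prescribed cyclic order. Once this local computation is done, the homotopy invariance established in the first part converts the resulting identity between specific representatives into the claimed identity on homology classes.
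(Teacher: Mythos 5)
The paper does not prove this theorem: it explicitly defers to Johnson's paper (``The following theorem appears in \cite{Johnson}, we shall use it as a black box throughout the article'') and uses it without further argument, so any proof you supply is necessarily going a different route. Your general strategy — use that $P(\gamma)\in\{\pm\mathrm{id}\}$ on closed loops, establish homotopy invariance by continuity into a discrete set, descend to homology by analyzing bounding cycles, and prove the quadratic refinement by resolving transverse crossings and tracking the accumulated rotations — is indeed the standard approach and is close in spirit to Johnson's argument. However, as written the proposal contains several genuine errors that would not survive the ``local bookkeeping'' step you flag as the main obstacle.

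First, the claim that a simple loop bounding an embedded disk forces $P(\gamma)=\mathrm{id}$ is wrong on two counts. The tangent framing of $\gamma$ does not extend radially over $D$ (a radial vector field has an index-one zero, which is precisely the obstruction); and by Proposition~\ref{prop:arf1} the correct value is $P(\gamma)=R_{2\pi}=-\mathrm{id}$, not $\mathrm{id}$. Under the paper's convention $\q=1\Leftrightarrow P=\mathrm{id}$, so $\q=0$ for bounding loops corresponds to $P=-\mathrm{id}$: the conclusion you want is right, but the intermediate claim and the deduction are both reversed. Second, the identity $R_{\pm\pi}=-\mathrm{id}$ is false: in the spin bundle $R$ is $4\pi$-periodic, $R_{2\pi}$ is the nontrivial deck transformation, and $R_\pi^{\,2}=R_{2\pi}\neq\mathrm{id}$, so $R_\pi$ is a square root of the deck transformation, not the deck transformation itself. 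Your formula ``each crossing contributes $R_{\pm\pi}=-\mathrm{id}$'' is therefore doubly wrong: the angle increment at a transverse crossing is not $\pm\pi$ in general (it depends on the crossing angle, and its contribution only becomes a universal sign after a careful cancellation against the smooth parts), and even a $\pi$-rotation would not equal $-\mathrm{id}$.

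Third, the proposed output $P(\gamma)=(-\mathrm{id})^n\,P(\alpha)\,P(\beta)$ is off by a sign with the paper's conventions. Writing $P(\cdot)=(-\mathrm{id})^{1+\q(\cdot)}$, that formula yields $\q(\alpha+\beta)=\q(\alpha)+\q(\beta)+n+1$, which is the negation of the claimed identity. It also contradicts your own additivity claim for disjoint loops ($n=0$). The missing $-\mathrm{id}$ comes from the act of joining $\alpha$ and $\beta$ into a single closed curve — compare again with Proposition~\ref{prop:arf1}, which assigns $P=-\mathrm{id}$, not $\mathrm{id}$, to each contractible loop — and omitting it is exactly the kind of error the crossing-resolution bookkeeping must be able to catch. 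Finally, the descent from curves to $H_1(\Sigma\setminus\{z_j\},\Z_2)$ is more delicate than ``bounding a disk'': two $\Z_2$-homologous connected cycles need not be homotopic, and the boundary relating them is a $\Z_2$-chain whose two-cell boundaries may intersect each other, so the well-definedness statement and the quadratic formula cannot be cleanly separated; Johnson's proof intertwines them, and a correct argument here must do the same.
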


\begin{prop}\label{prop:arf1}
If $\gamma:[0,1]\to\Sigma\setminus\{z_j\}_{j\in I}$ is a piecewise smooth closed curve which bounds a contractible domain, then $P(\gamma)_0^1=R_{2\pi}.$
Moreover, suppose $\Sigma$ is a disk with a piecewise smooth boundary $\gamma.$ Let $\SL\to T^1\Sigma|_\gamma$ be a double cover by a $S^1$ bundle $\SL.$
Then $\SL$ can be extended to a non-twisted spin structure on $\Sigma$ if and only if $P(\gamma)_0^1=R_{2\pi}.$ In this case the extension is unique.
In particular, the spin structure can be extended to a marked point $z_i$ if and only if its twist is $0,$ in that case the extension is unique.
\end{prop}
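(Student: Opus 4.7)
The plan is to reduce everything to an explicit local computation on a disk, using the canonical trivialization.

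For the first statement, I would use that $\gamma$ bounds a contractible subdomain $D\subset\Sigma\setminus\{z_j\}_{j\in I}$, so the spin structure $\SL|_D$ is defined and $T^1\Sigma|_D$ is trivializable as $D\times S^1$ (contractible base). Fix such a disk trivialization and compare it to the canonical trivialization $\varsigma$ along $\gamma$; the transition between them is governed by the angle $\phi(t)=\measuredangle(\text{fixed disk frame},v_t)$. By Hopf's Umlaufsatz for piecewise smooth simple closed curves, this angle undergoes total rotation $\pm 2\pi$ around $\gamma$, where the exterior angle contributions $\theta_i$ at corners are precisely those incorporated into the modified definition of $p(\gamma)$ given in the paper. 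In the flat disk trivialization, parallel transport around $\gamma$ is the identity, so after lifting through the double cover $\pi\colon\SL\to T^1\Sigma$, the discrepancy between the canonical and disk trivializations at $t=0$ and $t=1$ becomes exactly $R_{\pm 2\pi}=R_{2\pi}$, which is $P(\gamma)_0^1$. If $\gamma$ is not simple I would first homotope it through $\Sigma\setminus\{z_j\}$ within $D$ to a simple smooth convex curve, invoking that $P$ depends only on the homotopy class (through piecewise smooth curves) rel endpoints.

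For the second statement, the forward direction ($\SL$ extends $\Rightarrow P(\gamma)=R_{2\pi}$) is an immediate application of the first statement to the extended spin structure on $\Sigma$. For the reverse direction, I would classify double covers of $T^1\Sigma|_\gamma\simeq S^1\times S^1$ by $H^1(S^1\times S^1;\Z_2)\cong\Z_2^{2}$, the two factors recording base and fiber monodromy. Since $\SL$ is required to be a fiberwise nontrivial ($S^1\to S^1$ connected) cover, fiber monodromy is $1$. The canonical loop $(\gamma(t),v_t)\subset T^1\Sigma|_\gamma$ has rotation index $1$ in the disk trivialization, so its class in $H_1(S^1\times S^1;\Z_2)$ is $(1,1)$; pairing with the cohomology class of $\SL$ yields $(\text{base}){+}(\text{fiber})\in\Z_2$, which equals $1$ precisely when $P(\gamma)=R_{2\pi}$. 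Since fiber monodromy is already $1$, this forces base monodromy $=0$, and therefore $\SL|_\gamma$ agrees with the restriction of the standard disk spin structure $D\times S^1\to D\times S^1$, $(p,\phi)\mapsto(p,2\phi)$, which provides the extension. Uniqueness is immediate: on the simply connected base $D$ there is a unique fiberwise nontrivial double cover of $T^1\Sigma|_D$.

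Finally the consequence for marked points follows by applying the second statement to a small disk $D_i$ containing $z_i$, with $\gamma=\partial D_i$, and recalling that the twist at $z_i$ is defined as $\q(\partial D_i)$, which is $0$ precisely when $P(\partial D_i)=R_{2\pi}$.

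The main obstacle is careful bookkeeping of the piecewise-smooth modification of $p(\gamma)$, in particular verifying that the lifted corner rotations $R_{\theta_i}$ assemble so that the total rotation, read in $\R/4\pi\Z$, comes out to $2\pi$ (not $0$) modulo the ambiguity in lifting each $\theta_i$ from $\R/2\pi\Z$ to $\R/4\pi\Z$; I resolve this by choosing the unique continuous lift along the smooth subtrajectories, so the total lifted rotation equals $\sum\theta_i+\sum\int_{\gamma_i}\kappa$ which by Hopf is $\pm 2\pi$ exactly.
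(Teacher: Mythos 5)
Your proof is correct, but it takes a genuinely different route from the paper's. The paper proves the first statement by a one-line invocation of the black-box Theorem \ref{lem:arf_props} (Johnson's theorem that $\q$ is well-defined on $H_1(\Sigma\setminus\{z_j\};\Z_2)$ and satisfies the Arf relation): taking $\alpha=\beta$ gives $\q(0)=2\q(\alpha)+\langle\alpha,\alpha\rangle=0$ since the self-pairing vanishes on an orientable surface, and since $\gamma$ bounds, $[\gamma]=0$ so $\q([\gamma])=\q(0)=0$, i.e.\ $P(\gamma)=R_{2\pi}$. The remaining parts are declared "simple and omitted." You instead bypass the black box entirely: the first part is a direct computation via Hopf's Umlaufsatz and the uniqueness of the fiberwise-nontrivial double cover over a contractible base, and the second part is done by classifying double covers of $T^1\Sigma|_\gamma\simeq S^1\times S^1$ by $H^1(S^1\times S^1;\Z_2)$ and pairing against the class of the tangent lift $(\gamma,\dot\gamma)$, which has rotation index $\pm 1$. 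This is more elementary and self-contained (you are in effect re-deriving, in this local model, the fact the paper outsources to \cite{Johnson}), and it does supply arguments for the parts the paper omits. The only imprecision is notational, in the final paragraph: the continuous (approximation) lift of the corner rotation $r_{\theta_i}$ should be $R_{\epsilon_i}$ where $\epsilon_i\in(-\pi,\pi)$ is the Umlaufsatz exterior angle, which need not equal the $\theta_i\in[0,2\pi)$ you write (they agree only mod $2\pi$, which is not enough in $\R/4\pi\R$); but since you already insist on "the unique continuous lift," the sum you actually compute is $\sum\epsilon_i+\sum\int\kappa=\pm 2\pi$, and $R_{\pm 2\pi}=R_{2\pi}$ in $\R/4\pi\R$, so the conclusion is unaffected. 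One more small remark: since $\gamma$ bounds a contractible domain, it is automatically embedded, so your fallback homotopy argument for non-simple $\gamma$ is superfluous here, though harmless.
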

The first part follows from Theorem \ref{lem:arf_props}, by taking $\alpha=\beta=[\gamma].$
The other parts are also simple and will be omitted.

\begin{definition}
Let $(\Sigma,\SL)$ be an open marked Riemann surface, together with a (twisted) spin structure. Suppose $\partial\Sigma\neq\emptyset.$
A \emph{lifting} is a choice of a section
\[
s: \partial\Sigma\setminus\{x_i\}_{i\in \B}\to\SL|_{\partial\Sigma\setminus\{x_i\}_{i\in \B}}
\]
which covers the \emph{oriented} $T^1(\partial\Sigma\setminus\{x_i\}_{i\in \B}).$

For $j\in \B,$ suppose $i:(-\frac{1}{2},\frac{1}{2})\to \partial\Sigma$ is a smooth orientation preserving embedding with $i(0)=x_j,$ and $x_b\notin i(-\frac{1}{2},\frac{1}{2}),~b\neq j.$
In case
\[
\lim_{x\to0^-}s(x)\neq\lim_{x\to0^+}s(x),
\]
we say that the structure \emph{alternates} in $x_j,$ and that $x_j$ is a \em{legal point}. Otherwise $x_j$ is \emph{illegal} and the structure does not alternate.
We extend the definition of $s$ to the boundary marked points by $s(x)=\lim_{x\to0^+}s(x).$

A \em{smooth spin surface with a lifting} $(\Sigma,\{x_i\}_{i\in \B},\{z_i\}_{i\in \I},\SL,s)$ is a smooth open Riemann surface together with a spin structure and a lifting.
A \em{smooth graded surface} is a smooth spin surface with a lifting, such that all boundary marked points are legal.

%
%
\end{definition}

The notion of alternation can be generalized in the following manner.
\begin{definition}\label{def:Q}
A \emph{bridge} is a piecewise smooth simple arc which meets the boundary only in its two distinct endpoints $x,y\in\partial\Sigma\setminus\{x_i\}_{i\in\B}.$
Suppose we orient the bridge and parameterize it as
\[\gamma:[0,1]\to\Sigma,~\gamma(0)=x,~\gamma(1)=y.\] 
Define $\Q(\gamma)\in\Z_2$ by the equation
\begin{equation}\label{eq:Q_def}
R_{2\pi-\alpha_y}(y)P(\gamma) R_{\alpha_x}(x) s(x)= R_{2\pi\Q(\gamma)}(y)s(y).
\end{equation}
where 
\[\alpha_x=\measuredangle((T^1)_x\partial\Sigma,(T^1)_x\gamma)\in[0,\pi],\quad
\alpha_y=\measuredangle((T^1)_y\partial\Sigma,(T^1)_y\gamma)\in[\pi,2\pi].\]
$\Q(\gamma)$ depends on the orientation but not on the parametrization. An oriented bridge with $\Q=1$ is called \emph{a legal side of the bridge}, otherwise it is called an \emph{illegal side}.
\end{definition}

\begin{prop}\label{Q_alternates}
Let $\Sigma$ be a smooth open spin surface with a lifting.
Let $\gamma$ be and denote by $\bar\gamma$ the same bridge with opposite orientation.
Then $\Q(\gamma)+\Q(\bar\gamma)=1.$ Thus, any bridge has exactly one legal side and exactly one illegal side.
\end{prop}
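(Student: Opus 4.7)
The plan is to unfold the definition of $\Q$ for both $\gamma$ and $\bar\gamma$ and directly compare. The first step is to relate the angles at $x$ and $y$ that appear in the two definitions. Since the induced orientation on $\partial\Sigma$ places the interior of $\Sigma$ on the left of $T^1\partial\Sigma$, the tangent of $\gamma$ at its starting point $x$ must point into $\Sigma$ (so $\alpha_x\in(0,\pi)$), while the tangent at its endpoint $y$ points outward (so $\alpha_y\in(\pi,2\pi)$). Reversing orientation replaces $T^1\gamma$ by $-T^1\gamma$ at both endpoints, and the requirement $\alpha\in(0,2\pi)$ then forces
\[
\alpha_x^{\bar\gamma}=\alpha_x+\pi,\qquad \alpha_y^{\bar\gamma}=\alpha_y-\pi.
\]
This sign bookkeeping is the crux of the argument.

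Next, I would use two basic facts: $P(\bar\gamma)=P(\gamma)^{-1}$ (functoriality of parallel transport) and the commutation $R_\theta(y)P(\gamma)=P(\gamma)R_\theta(x)$ from Remark~\ref{rmk:commutation_of_trans_and_rot}. Solving \eqref{eq:Q_def} for $s(y)$ yields
\[
s(y)=R_{2\pi(1-\Q(\gamma))-\alpha_y}(y)\,P(\gamma)\,R_{\alpha_x}(x)\,s(x),
\]
and I would substitute this into the analogous defining equation for $\Q(\bar\gamma)$,
\[
R_{\pi-\alpha_x}(x)\,P(\gamma)^{-1}\,R_{\alpha_y-\pi}(y)\,s(y)=R_{2\pi\Q(\bar\gamma)}(x)\,s(x).
\]

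After the substitution, the two rotations at $y$ combine to $R_{\pi-2\pi\Q(\gamma)}(y)$. Pushing this past $P(\gamma)^{-1}$ to the fiber over $x$ via the commutation relation allows all rotations to be collected on the left of $s(x)$. The $\alpha_x$ contributions cancel and the overall angle simplifies to $2\pi(1-\Q(\gamma))$, giving
\[
R_{2\pi(1-\Q(\gamma))}(x)\,s(x)=R_{2\pi\Q(\bar\gamma)}(x)\,s(x).
\]
Because $R$ lives in $\R/4\pi\R$ with $R_{2\pi}=-\id\neq\id$, this identity forces $1-\Q(\gamma)\equiv\Q(\bar\gamma)\pmod 2$, which is the claim.

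The only real obstacle is distinguishing $\R/2\pi\R$ (the base rotations $r$) from $\R/4\pi\R$ (the spin rotations $R$): the element $R_{2\pi}$ acts nontrivially on $\SL$, so the precise lift of the angle $\alpha^{\bar\gamma}\in(0,2\pi)$ matters. The constraint on angle range together with the geometric observation that start-tangent and end-tangent of a bridge lie in opposite half-planes of $T^1\partial\Sigma$ fixes this lift, and it is exactly this $2\pi$-shift that produces the $+1$ in the final relation.
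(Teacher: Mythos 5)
Your proof is correct and follows essentially the same route as the paper: unfold Equation~\eqref{eq:Q_def} for both $\gamma$ and $\bar\gamma$, use the relations $\alpha'_x=\alpha_x+\pi,\ \alpha'_y=\alpha_y-\pi$, substitute one equation into the other, commute all rotations to one fiber, and read off the parity from $R_{2\pi}\neq\id$ in $\R/4\pi\R$. The only cosmetic differences are that you substitute the $\gamma$-equation into the $\bar\gamma$-equation (the paper does the reverse), you use $P(\bar\gamma)=P(\gamma)^{-1}$ in place of the paper's appeal to Proposition~\ref{prop:arf1} for the curve $\gamma\to\bar\gamma\to\gamma$ (which amounts to the same identity), and you helpfully explain why the $\pm\pi$ signs in the angle shifts come out the way they do from the position of $\alpha_x,\alpha_y$ in $(0,\pi)$ vs.\ $(\pi,2\pi)$, a point the paper leaves implicit.
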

\begin{proof}
Work with the notations of Definition \ref{def:Q}. For $w\in\{x,y\},~\alpha'_w$ is defined by $\alpha'_w=\measuredangle((T^1)_w\partial\Sigma,(T^1)_w\bar{\gamma}).$
Observe that $\alpha'_x=\alpha_x+\pi,~\alpha'_y=\alpha_y-\pi.$
Apply $R_{2\pi\Q(\bar{\gamma})}(y)$ to the left hand side of \eqref{eq:Q_def}.
By Remark \ref{rmk:commutation_of_trans_and_rot} the left hand side becomes
\[
R_{2\pi\Q(\bar{\gamma})}(y)R_{2\pi-\alpha_y}(y)P(\gamma) R_{\alpha_x}(x) s(x)=
R_{2\pi-\alpha_y}(y)P(\gamma) R_{\alpha_x}(x) R_{2\pi\Q(\bar{\gamma})}(x)s(x).
\]
Using Equation \eqref{eq:Q_def} for $\bar{\gamma},$ Remark \ref{rmk:commutation_of_trans_and_rot} again, and the relations between $\alpha_x,\alpha'_x$ and $\alpha_y,\alpha'_y,$ the last expression simplifies to $R_\pi P(\bar{\gamma})R_\pi P(\gamma)s(y).$ By Proposition \ref{prop:arf1}, applied to the piecewise smooth closed curve $\gamma\to\bar \gamma\to \gamma,$ this is just $R_{2\pi}(y)s(y).$

Apply $R_{2\pi\Q(\bar{\gamma})}(y)$ to the right hand side of \eqref{eq:Q_def}, we obtain $R_{2\pi(\Q(\gamma)+\Q(\bar{\gamma}))}(y)s(y).$
Thus, \[R_{2\pi}(y)s(y)=R_{2\pi(\Q(\gamma)+\Q(\bar{\gamma}))}(y)s(y),\] and the claim follows.
%
%
%
%
%
\end{proof}

\begin{prop}\label{prop:parity_disks}
\begin{enumerate}
\item
Suppose $(\Sigma,\{z_i\}_{i\in \I},\SL)$ is a genus $g$ closed spin surface. Suppose that exactly $l_1$ marked points have twist $1.$
Then $l_1$ is even.
For any closed Riemann surface $(\Sigma,\{z_i\}_{i\in \I}),$ there exist $2^{2g}$ distinct non-twisted spin structures on $\Sigma.$
\item
Suppose $(\Sigma,\{x_i\}_{i\in \B},\{z_i\}_{i\in\I},\SL,s)$ is a genus $g$ open spin surface with a lifting. Suppose that exactly $k_+$ of the boundary marked points are legal, and $l_1$ internal marked points have twist $1.$
Then
\[
l_1 = g+1+k_+(\text{mod } 2).
\]
For any $(\Sigma,\{x_i\}_{i\in \B},\{z_i\}_{i\in\I})\in\RCM_{g,k,l}$ with $2|g+k+1,$ there exist exactly $2^g$ graded structures on $\Sigma.$
\end{enumerate}
\end{prop}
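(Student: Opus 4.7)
The plan for part (a) is to apply the quadratic identity of Theorem \ref{lem:arf_props} to the homological relation that ``the sum of the small loops around internal marked points bounds.'' Explicitly, $\sum_{i\in\I}[\partial D_i]$ vanishes in $H_1(\Sigma\setminus\{z_i\},\Z_2)$ because it bounds $\Sigma\setminus\bigsqcup_i D_i$; since the loops are pairwise disjoint, all Poincar\'e pairings are zero, and Theorem \ref{lem:arf_props} gives $\sum_i \q([\partial D_i])\equiv 0\pmod 2$, i.e.\ $l_1$ is even. The count $2^{2g}$ is standard: non-twisted spin structures on a closed surface form a torsor over $H^1(\Sigma,\Z_2)\cong\Z_2^{2g}$, and at least one exists.

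For part (b), the analogous relation $\sum_i[\partial D_i]+[\partial\Sigma]=0$ in $H_1(\Sigma\setminus\{z_i\},\Z_2)$ gives $l_1+\q([\partial\Sigma])\equiv 0\pmod 2$. The key technical step is to compute, for each boundary component $\partial\Sigma_c$ carrying $k^c$ marked points of which $k_+^c$ are legal, that $\q([\partial\Sigma_c])\equiv 1+k_+^c\pmod 2$. Fix a regular base point $x_0\in\partial\Sigma_c$ and compare, as $t$ traverses one loop, the lifting $s(\gamma(t))$ with the parallel transport $P(\gamma|_{[0,t]})(s(x_0))$: both are sections of $\SL$ over the oriented tangent direction, and on each smooth segment they differ by a constant in $\{\id,R_{2\pi}\}$. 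At each legal marked point $s$ jumps by $R_{2\pi}$ while parallel transport remains smooth, and at illegal points neither jumps, so after one full loop $P(\partial\Sigma_c)(s(x_0))=R_{2\pi}^{k_+^c}s(x_0)$; by definition of $\q$ this yields the claim. Summing gives $l_1+b+k_+\equiv 0\pmod 2$, and the Euler-characteristic computation $\chi(\Sigma_\C)=2\chi(\Sigma)$ rewrites the double genus as $g=2g'+b-1$, so $b\equiv g+1\pmod 2$ and the parity follows. This parallel-transport bookkeeping is the main subtlety of the proof, since one must distinguish the smoothness of $P$ through marked points from the allowed alternation of $s$.

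For the count of graded structures, non-twisted spin structures on $\Sigma$ form a torsor over $H^1(\Sigma,\Z_2)\cong\Z_2^{2g'+b-1}$. The boundary monodromy map to $(\Z_2)^b$ has image the affine hyperplane $\{\sum_c m_c\equiv b\pmod 2\}$ (by Poincar\'e duality together with $\sum_c[\partial\Sigma_c]=0$ in $H_1(\Sigma,\Z_2)$), so each admissible monodromy vector is attained by exactly $2^{2g'}$ spin structures. The preceding computation shows that a lifting with all $x_i$ legal exists iff $m_c\equiv k^c\pmod 2$ for every $c$; the profile $(k^c\bmod 2)_c$ is admissible iff $k\equiv b\pmod 2$, equivalently $2\mid g+k+1$. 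For such a spin structure, the alternation requirement at every boundary marked point determines the lifting up to a choice of sheet on each boundary component, giving $2^b$ liftings; the sheet-swap automorphism of $\SL$ identifies each with its global negative, leaving $2^{b-1}$ isomorphism classes. Multiplying, $2^{2g'}\cdot 2^{b-1}=2^{2g'+b-1}=2^g$, as claimed.
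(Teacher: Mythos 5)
Your proof is correct. For the parity statements, your route is essentially the paper's: the paper also applies Theorem~\ref{lem:arf_props} to the null-homologous sum of the small circles $C_i$ and (in the open case) of the boundary-parallel curves $C_b$, asserts ``one easily sees'' that $\q(C_b)=1+k_+^b$, and notes $b\equiv g+1\pmod 2$ from the doubled Euler characteristic. Your detailed parallel-transport bookkeeping for $\q(C_b)$ — tracking the $R_{2\pi}$ jumps of $s$ at legal points against the continuity of $P$ — fills in exactly what the paper leaves implicit; this matches Remark~\ref{rmk:boundary_and_arf}, which carries out the same unwinding in the ribbon-graph setting.

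Where you genuinely go beyond the paper is the count $2^g$ of graded structures. The paper defers this to \cite{ST} and to the Kasteleyn computation of Example~\ref{ex:macheta}. Your argument — (i) spin structures form an $H^1(\Sigma,\Z_2)\cong\Z_2^{2g'+b-1}$-torsor, (ii) the boundary-monodromy map is an affine surjection onto the hyperplane cut out by $\sum_c[\partial\Sigma_c]=0$, so each admissible monodromy profile is hit $2^{2g'}$ times, (iii) an all-legal lifting exists iff $m_c\equiv k^c$ for every $c$, which is admissible iff $2\mid g+k+1$, and (iv) such liftings come in $2^b$ and the sheet-swap $R_{2\pi}$ (the only bundle automorphism over a fixed generic $\Sigma$) pairs them to $2^{b-1}$ isomorphism classes — is correct, and $2^{2g'}\cdot 2^{b-1}=2^g$. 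This is a cleaner, self-contained cohomological argument than chasing the Kasteleyn combinatorics, and it also explains structurally why the hypothesis $2\mid g+k+1$ is exactly the existence condition. The one implicit assumption, shared with the paper's statement, is that $\Sigma$ has trivial automorphism group; otherwise one must further quotient by $\operatorname{Aut}(\Sigma)$.
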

\begin{proof}
For the first claim, let $\{C_i\}$ be a family of non intersecting circles around each marked point. Then $\sum C_i$ is homologous to $0$. 
By Theorem \ref{lem:arf_props}, $\q(\sum C_i)=\sum \q(C_i)=0.$ For the number of spin structures see, for example, \cite{Jarvispin}.

Regarding the second claim, let $C_i$ be as above, and for any boundary component $\partial\Sigma_b,$ let $C_b$ be a curve surrounding this boundary, disjoint from it, but isotopic to it in $\Sigma\setminus\mathbf{z}.$ By the definitions of $\q,\Q$ one easily sees that $\q(C_b)$ is $1$ plus the number of legal marked points of $\partial\Sigma_b.$
Again \[\sum \q(C_i)+\sum \q(C_b)=0~(\text{mod }2),\] but this sum equals $l_1+k_++b~(\text{mod~}2),$ where $b$ is the number of boundaries. It is easy to see that $b=g+1~(\text{mod } 2).$
For the number of graded structures see \cite{ST0}.
We will also obtain it as a byproduct in Subsection \ref{sec:kasteleyn}, see the end of Example \ref{ex:macheta}.
\end{proof}

\begin{lemma}\label{lem:equiv_of_defs}
The definitions of smooth spin surfaces with a lifting, twisted or not, graded or not, given in this subsection are equivalent to the analogous ones given in Subsection \ref{subsub:smooth}.
\end{lemma}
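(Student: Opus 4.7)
The plan is to build mutually inverse constructions between the two notions and verify they preserve twisting, compatibility, and legality. Both definitions describe essentially the same geometric object---an order-$2$ double cover of $T^1\Sigma$ with a real structure---viewed either holomorphically, as a square root of the canonical bundle on the complex double (Subsection \ref{subsub:smooth}), or topologically, as an $S^1$-bundle cover of the unit tangent bundle on $\Sigma$ (Subsection \ref{subsub:alter}).

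From $(\CCCL,b,\tilde\varrho)$ to $\SL$: fix a $\varrho$-invariant Hermitian metric on $\omega_{\Sigma_\C}$ (by averaging), use $b$ to induce one on $\CCCL^{\otimes 2}$ away from the twist divisor, and take square roots of norms to get a metric on $\CCCL$ on the complement of the twist locus. Define $\SL$ as the unit circle bundle of this metric restricted to $\Sigma\setminus\{z_j\}_{j\in\I}$. The squaring map on unit circles, composed with $b$ and the canonical Hermitian identification $S^1(\omega_\Sigma)\simeq T^1\Sigma$, yields the required double cover $\pi$, and $\tilde\varrho$ restricts on $\partial\Sigma_\C$ to an antilinear involution whose two unit fixed points per fiber comprise $S^0(\CCCL^{\tilde\varrho})$. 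Conversely, from $\SL$: form $\CCCL^{\C}:=\SL\times_{U(1)}\C$ on $\Sigma\setminus\{z_j\}_{j\in\I}$ via the $S^1$-action $R_\theta$; the cover $\pi$ and the metric identify $\CCCL^{\C\otimes 2}$ with $T\Sigma$, whose dual yields $b$ (with twist at each $z_j$ where $\q$ of a small loop equals $1$), and Proposition \ref{prop:arf1} extends $\CCCL^{\C}$ holomorphically across the untwisted marked points. Double to $\Sigma_\C$ by gluing $\CCCL^{\C}$ on $\Sigma$ to its complex conjugate $\overline{\CCCL^{\C}}$ on $\bar\Sigma$ along $\partial\Sigma$ via the lifting, producing $\tilde\varrho$ as the tautological antiholomorphic swap.

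It remains to match the discrete invariants. The twist at an internal $z_j$ in the $\CCCL$ picture (appearance of $z_j$ in the twist divisor) corresponds, via Proposition \ref{prop:arf1}, to $\q(\partial D_j)=1$. Compatibility of $(\CCCL,\tilde\varrho)$---existence of roots of positive sections on each boundary segment---translates to existence of a continuous section of $\SL$ covering $T^1\partial\Sigma$ on that segment, since the two $\tilde\varrho$-fixed unit vectors over a boundary point correspond bijectively under $\pi$ to the two preimages of the oriented tangent. A lifting $s$ in either sense is then the same data, a choice of one of the two sheets continuously over $\partial\Sigma\setminus\{x_i\}_{i\in\B}$. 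Finally, at a boundary marked point $x_j$, the oriented tangent $T^1\partial\Sigma$ extends smoothly through $x_j$, so $s$ either extends continuously across $x_j$ or flips sheet; this sheet-flip is equivalent, via $b$, to the positive real section of $\CCCL^{\tilde\varrho}$ failing to extend smoothly across $x_j$, which is precisely the alternation condition of Subsection \ref{subsub:smooth}. In particular ``graded'' (all boundary points legal) in one definition agrees with ``graded'' in the other.

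The main obstacle is the sign-tracking in the gluing over $\partial\Sigma$: one must verify that the antilinear involution reconstructed from $\SL$ together with the lifting genuinely agrees, fiberwise, with the $\tilde\varrho$ prescribed on the double, and that the Hermitian identifications chosen above intertwine correctly with the squaring map. This is a diagram chase once one fixes conventions for $S^1(\omega_{\Sigma_\C})\simeq T^1\Sigma_\C$ and for the orientation of $T^1\partial\Sigma$, but it is the only place where real care is needed; the rest of the equivalence is formal from the double-cover interpretation of a spin structure.
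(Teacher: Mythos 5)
Your approach matches the paper's intent: build the $S^1$-bundle $\SL$ out of the sheaf $\CCCL$ and transport the lifting, then verify that twisting, compatibility and legality correspond. The paper's own proof is the one-line hint ``$\SL$ is the $S^1$-bundle of $\CCCL^*$'' with details deferred to \cite{ST}, so you have essentially filled in that outline. However, there is a concrete inconsistency you should fix before the ``sign-tracking'' you flag at the end can go through.

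In the direction $(\CCCL,b,\tilde\varrho)\mapsto\SL$ you set $\SL=S^1(\CCCL)$ and then compose the squaring map with the Hermitian identification $S^1(\omega_\Sigma)\simeq T^1\Sigma$. That last identification is conjugate-linear, hence reverses the orientation of the circle fibers: concretely, under your composite the counterclockwise rotation $R_\theta$ on $\SL$ covers the \emph{clockwise} rotation $r_{-\theta}$ on $T^1\Sigma$, contradicting the compatibility $\pi(R_\theta s)=r_\theta\pi(s)$ that Remark \ref{rmk:commutation_of_trans_and_rot} relies on (and which feeds into the definitions of $\q$, $\Q$, alternation and legality). In the reverse direction you correctly build $\CCCL^\C$ with $(\CCCL^\C)^{\otimes 2}\simeq T\Sigma$ and then dualize to recover $\CCCL$; so $\CCCL^\C$ is $\CCCL^*$, not $\CCCL$, and the round-trip $\CCCL\to S^1(\CCCL)\to(\cdot)$ lands on the dual rather than on $\CCCL$ itself. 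The two halves of your proposed correspondence are therefore not mutually inverse as written. The fix is exactly the paper's convention: take $\SL=S^1(\CCCL^*)$ in the forward direction, using the complex-linear isomorphism $(\CCCL^*)^{\otimes 2}\simeq T\Sigma$ coming from $b$, so that squaring is directly an orientation-preserving double cover of $T^1\Sigma$ with no conjugate-linear step. Once this is done your verification of the twist via $\q(\partial D_j)$, of compatibility via $\tilde\varrho$-fixed unit sections, and of alternation via sheet flips at $x_j$ all go through as you describe, and the ``only place where real care is needed'' has already been handled.
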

Starting with a real spin structure $\CCCL$ in the sense of Subsection \ref{subsub:smooth}, $\SL$ is just the $S^1-$ bundle of $\CCCL^*,$ and the lifting is the reduction of the lifting to that bundle. See \cite{ST0} for more details, and for the rather straightforward proof of equivalence.
\subsubsection{A comment about the alternative definition in the stable case}
In the stable case, by Proposition \ref{prop:spin_on_comps}, the sheaf $\CCCL$ and the graded data determine the spin structures and liftings on the normalization, hence by Lemma \ref{lem:equiv_of_defs}, determines the data of $\SL,s$ for each component. However, it is determined by it, again, using the same lemma and proposition, only when there are no Ramond nodes.
Even when there are such nodes, the data of $\SL,s$ for each component determines $\CCCL$ and the graded data up to a finite choice of identification maps between stalks of half nodes and liftings at the preimages of the contracted boundaries, as explained in the proof of Proposition \ref{prop:spin_on_comps}.
Therefore, since working with the $S^1-$bundle and its lifting is more convenient, throughout this paper we shall usually write $(\Sigma,\SL,s)$ to indicate a spin structure with a lifting, and leave $\CCCL$ implicit.
We shall sometimes even leave $\SL,s$ implicit.

\subsubsection{Spin graphs}
It is useful to encode some of the combinatorial data of spin surfaces with a lifting in graphs.
\begin{definition}
A \emph{(pre-)stable spin graph $\Gamma$ with a lifting} is a (pre-)
stable graph
\[\Gamma = \left(V,H ,\sim = \sim_B\cup\sim_I\right),\]
together with a twist map $\tw:H^I\to\Z_2,$ and an alternation map $\Or:H^B\to\Z_2.$
we require
\begin{enumerate}
\item
$\tw(h)=\tw(\sig_1(h)),$ for any $h\in H^I\setminus T^I.$
\item
$\Or(h)+\Or(\sig_1(h))=1,$ for any $h\in H^B\setminus T^B.$
\item
$\forall h\in H^{\text{CB}},~\tw(h)=1.$
\item
For $v\in V^O,$ then
\[
\sum_{h\in (\sig_0^B)^{-1}(v)}\Or(h)+\sum_{h\in (\sig_0^I)^{-1}(v)}\tw(h)=g(v)+1 (mod~2).
\]
\item
For $v\in V^C$
\[
\sum_{h\in \sig_0^{-1}(v)}\tw(h)=0.
\]
\end{enumerate}
A boundary half edge $h,$ and in particular a tail with $\Or(h)=0$ is said to be \emph{illegal}, otherwise it is \emph{legal}.

We say that the graph is \em{stable} if $\Gamma$ is stable.
We call $\Gamma$ a \emph{graded graph} if $\Or(t)=1$ for all $t\in T^B,~\tw(t)=0$ for all $t\in T^I\setminus H^{\text{CB}}.$ 

$\Gamma$ is \emph{effective} if its underlying graph is effective, $\Or(t)=1$ for all $t\in T^B,$ and for any $v\in V^O$ without internal half edges its three boundary half edges have $\Or=1.$

\end{definition}
The normalization $\NNN(\Gamma)$ is just the normalization of the underlying graph $\Gamma,$ with the maps $\tw,\Or$ defined on the tails of $\NNN(\Gamma)$ by their values on the corresponding half edges of $\Gamma.$
As in the spinless case, whenever an internal tail of $\Gamma$ is marked $i\neq0,$ the graph $v_i(\Gamma)$ is the component of $\NNN(\Gamma)$ which contains tails $i,$ but with the additional data of $\tw,~\Or.$

When it is clear from the context that the dual graph under consideration is a spin graph with a lifting, we sometimes omit the maps $\tw,\Or$ from the notations.
\begin{figure}
\centering
\includegraphics[scale=.4]{./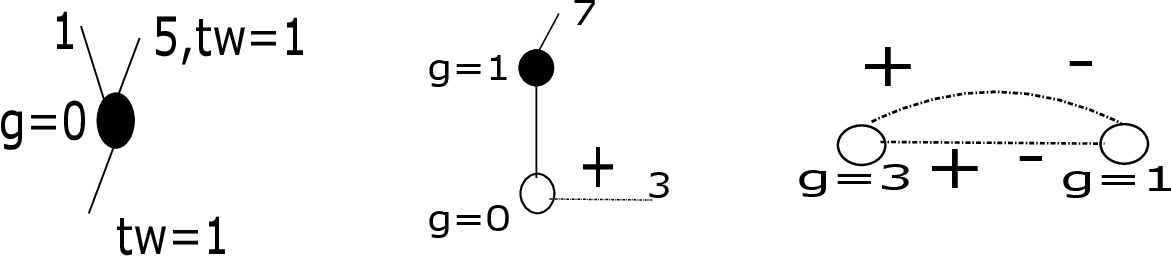}
\caption{Three examples of graded dual graphs. The numbers stands for the markings, all twists are $0$ unless '$\tw=1$' is written next to an element of $H^I.$ In order to avoid confusion, legal half edges, the elements $h\in H^B$ with $\Or(h)=1$ are decorated by $+$ signs.}
\label{fig:2_37}
\end{figure}

\begin{definition}
An \emph{isomorphism} between 
spin graphs with a lifting $(\Gamma,\tw,\Or)$ and $(\Gamma',\tw',\Or')$ is a tuple
\[
f = \left(f^V,f^H\right)
\]
such that
\begin{enumerate}
\item
$f:\Gamma\to\Gamma'$ is an isomorphism of stable graphs.
\item
$\tw' = \tw\circ f^H;~\Or' = \Or\circ f^H|_{H^B}.$
\end{enumerate}
We denote by $\text{Aut}(\Gamma)$ the group of the automorphisms of $\Gamma=(\Gamma,\tw,\Or).$
\end{definition}

We denote by $\CG$ the set of isomorphism classes of all spin graphs with a lifting.
We have a natural map
\[
\widetilde{\text{for}}_{\text{spin}}:\CG\to\RCG,~\widetilde{\text{for}}_{\text{spin}}(\Gamma,\tw,\Or)=\Gamma.
\]
Write ${\text{for}}_{\text{spin}}$ for its restriction to graded graphs.
We denote by $\CG_{g,k,l}$ the set of isomorphism classes of graded graphs with $\text{Image}(\mmm^B)=[k],\text{Image}(\mmm^I)=[l].$
Define $\Gamma_{g,k,l}$ as the unique connected graded dual graph with a single open vertex of genus $g,$ exactly $k$ boundary tails marked by $[k],$ exactly $l$ internal tails marked by $[l],~H^{\text{CB}}=\emptyset,$ and no further half edges.

To each graded stable marked surface $\Sigma$ we associate a graded stable graph $(\Gamma,\tw,\Or)$ as follows.
First, $\Gamma =\Gamma\left(\Sigma\right).$ Let $w\in\Sigma_\alpha$ be any special point of this component. It corresponds to some half edge $h.$
If $h\in H^I,$ then $\tw(h)$ is defined to be the twist in $w.$ If $h\in H^B,$ then $\Or(h)=1$ if and only if $h$ is legal.
For shortness we denote the graded stable graph corresponding to $\Sigma$ by $\Gamma\left(\Sigma\right),$ omitting $\tw,\Or$ from the notation.
Note that $\NNN(\Gamma(\Sigma))=\Gamma(\NNN(\Sigma)),$ and whenever a $i\neq 0$ marks an internal marked point, then $v_i(\Gamma(\Sigma))=\Gamma(\Sigma_i).$

We can also extend the graph operations to the graded case.
The \emph{smoothing} of a stable spin graph with a lifting $(\Gamma,\Or,\tw),$ at $f\in E$ is the stable graph
\[
d_f\Gamma = (\Gamma',\Or',\tw')
\]
such that $d_f(\Gamma) = \Gamma'.$ Recall we may identify $H'$ as a subset of $H.$ We define $\tw',\Or'$ as the restrictions of $\tw,\Or$ with respect to this identification. Given a set $S=\left\{f_1 , \ldots, f_n\right\}\subseteq E\left(\Gamma\right)$, define the smoothing at $S$ as
\[
d_S\Gamma = d_{f_n}\left(\ldots d_{f_2}\left(d_{f_1}\Gamma\right)\ldots\right).
\]
Note that again in case $\Gamma = d_S\Gamma'$, then $H'$ is canonically identified as a subset of $H,$ and $\Or,\tw$ respect this identification.

Again we define
$\pu : \CG \to 2^{\CG},$ and $\partial : \CG \to 2^{\CG},$
by
\[\pu \Gamma = \left\{\left.\Gamma' \right| \exists S\subseteq E\left(\Gamma'\right),\, \Gamma = d_S\Gamma'\right\},~~
\partial\Gamma =\pu\Gamma\setminus\{\Gamma\}.\]
And again these maps naturally extend to maps $2^{\CG} \to 2^{\CG}.$
%
%
%
%

\subsubsection{$\oCM_{g,k,l}$}\label{subsub:moduli}
\begin{nn}
For $\Gamma \in \CG$, denote by $\CM_\Gamma$ the set of isomorphism classes of marked stable spin surfaces with a lifting, associated to graph $\Gamma.$

Define
\[
\oCM_\Gamma = \coprod_{\Gamma'\in\pu\Gamma}\CM_{\Gamma'}.
\]
Define 
$\oCM_{g,k,l}=\oCM_{\Gamma_{g,k,l}}.$ Similarly define 
$\CM_{g,k,l}$ as the subspace parameterizing smooth surfaces.

For a marking $i,$ denote by $v_i:\CM_\Gamma\to\CM_{v_i(\Gamma)}$ the canonical map $[\Sigma]\to[\Sigma_i].$
Observe that in order to define this map we have used Proposition \ref{prop:spin_on_comps}. If $\Sigma$ has a contracted boundary, then $\Sigma_i$ has a marked Ramond point which corresponds to it. The passage from $\Sigma$ to $\Sigma_i$ \emph{forgets the lifting at contracted boundaries}.
%
\end{nn}
\begin{thm}[\cite{ST}]
The space $\oCM_{g,k,l}$ is a compact smooth orbifold with corners of real dimension $3g-3+k+2l.$ It is endowed with a canonical orientation.\end{thm}
We note that $\oCM_{g,k,l}$ is in general disconnected. Different connected components correspond to different topologies with the same doubled genus, to different partitions of the boundary points between boundary components, and sometimes also to different connected components of graded spin structures.

The main difficulty in this theorem is the proof of orientability. The properties of the canonical orientation will be detailed in Theorem \ref{lem:geometric_orientability} below. In Theorem \ref{thm:orientability}, Proposition \ref{prop: for_monster_calc} and Corollary \ref{cor:or_agree} below we will provide a different proof for the orientability and for the properties of the canonical orientations.
We now briefly review the proof that $\oCM_{g,k,l}$ is a compact smooth orbifold with corners. As in the spinless case, we rely on \cite{Zernik}. We also refer the reader to \cite[Lemma 3.5]{BCT1}, where a similar procedure, also based on \cite{Zernik}, is applied to the moduli of $r-$spin disks.

Our starting point is the fact that in the closed setting the moduli space $\oCM_{g,n}^{1/2}$ of twisted spin curves is a smooth orbifold, see, for example \cite{Jarvis2}.
Consider the following sequence:
\begin{equation}
\label{eq:spin_mod_Csequence}
\oCM_{g,k,l}\stackrel{(5)}{\to}\widehat{\mathcal{M}}_{g,k,l}\stackrel{(4)}
{\hookrightarrow}\widetilde{\mathcal{M}}_{g,k,l}\stackrel{(3)}{\to}
\widetilde{\R\mathcal{M}}_{g,k,l}\stackrel{(2)}{\to}
\overline{\R\mathcal{M}}_{g,k+2l}^{1/2}\stackrel{(1)}{\to}\overline{\mathcal{M}}_{g,k+2l}^{'1/2}.
\end{equation}
As in the spinless case we explain the notations throughout the steps below.

\textbf{Step 1: }
First, $\overline{\mathcal{M}}_{g,k+2l}^{'1/2}$ is the suborbifold of $\oCM_{g,k+2l}^{1/2},$ the moduli of stable marked $2-$spin curves, given by the condition that all the markings have twist $0.$ Inside this space, $\overline{\R\mathcal{M}}_{g,k+2l}^{1/2}$ is the fixed locus of the involution defined by
\[(C;w_1, \ldots, w_{k+2l}, \SL) \mapsto (\overline{C}; w_1, \ldots, w_k, w_{k+l+1},\ldots, w_{k+2l},w_{k+1},\ldots,w_{k+l}, \overline{\SL}),\]
where $\overline{C}$ and $\overline{\SL}$ are the same as $C$ and $\SL$ but with the conjugate complex structure. $k$ is required to satisfy $2\nmid g+k.$
As the fixed locus of an anti-holomorphic involution $\overline{\R\mathcal{M}}_{g,k+2l}^{1/2}$ is a smooth compact real orbifold.  It parameterizes isomorphism types of marked spin curves with an involution $\widetilde\varrho$ covering the conjugation $\varrho$ on $C$, and $0$ twists.

\textbf{Step 2: }
The next step is to cut $\overline{\R\mathcal{M}}_{g,k+2l}^{1/2}$ along the real simple normal crossings divisor consisting of curves with at least one real node, via the real hyperplane blow-up of \cite{Zernik}. As in the spinless case, this yields an orbifold with corners $\widetilde{\R\mathcal{M}}_{g,k,l}$.

\textbf{Step 3: }
Consider the subset of $\widetilde{\R\mathcal{M}}_{g,k+2l}^{1/2}$ whose generic point is a smooth marked real spin curves with nonempty real locus.  Then $\widetilde{\mathcal{M}}_{g,k,l}$ is the disconnected $2$-to-$1$ cover of this subset given, as in the spinless case, by the choice of a distinguished half $\Sigma,$ a connected component of $C\setminus C^\varrho$. Note that $C=D(\Sigma).$

\textbf{Step 4: }
Inside $\widetilde{\mathcal{M}}_{g,k,l}$, we denote by $\widehat{\mathcal{M}}_{g,k,l}$ the union of connected components such that the marked points $w_{k+1},\ldots, w_{k+l}$ lie in the distinguished half, and that the spin structure is compatible. The generic point in this orbifold has isotropy $\Z_2,$ coming, in the level of objects, from scaling the fibers of $\SL$ by $-1.$

\textbf{Step 5: }
Finally, $\oCM_{g,k,l}$ is the degree $2$ cover of $\widehat{\mathcal{M}}_{g,k,l}$ given by a choice of grading. The choice of the grading cancels the global $\Z_2$ isotropy, since the $-1$ map is no longer an automorphism, as it does not preserve the grading. As a cover, $\oCM_{g,k,l}$ is also endowed by an orbifold with corners structure.

$\oCM_\Gamma$ is a suborbifold with corners, which is the closure of $\CM_\Gamma,$ for any $\Gamma\in\CG_{g,k,l}.$ The map ${\text{For}}_{\text{spin}}$ is an orbifold branched cover.
A graded surface with $b$ boundary nodes and contracted boundaries belongs to a corner of the moduli space $\oCM_{g,k,l}$ of codimension $b.$ Thus $\partial \oCM_{g,k,l}$ consists of graded stable surfaces with at least one boundary node or contracted boundary.
For details see \cite{ST}. We should note that the same argument applies for the more general setting of the moduli space of twisted spin surfaces with a lifting. These more general moduli spaces are also smooth orbifolds with corners, but in general they are not orientable.
\begin{rmk}\label{rmk:spin_is_const}
By Proposition \ref{prop:parity_disks}, the degree of the map ${\text{For}}_{\text{spin}}$ is $2^g.$ The automorphism group of the underlying surface acts on the set of spin structures.
When the surface is smooth this group is generically trivial, but when it is not, it may happen that the fiber of ${\text{For}}_{\text{spin}}$ is of cardinality smaller than $2^g.$ Still, even in this case its weighted cardinality, which takes into account the isotropies, is $2^g,$ so that the orbifold degree in the smooth case is constant.
When the topology becomes nodal the number of graded spin structures on a given underlying surface may change. But still, for any graded dual graph $\Gamma$ the degree of ${\text{For}}_{\text{spin}}$ restricted to $\CM_\Gamma$ is generically constant, and, when isotropy groups are taken into account, it is always constant. This constant is a power of $2$ which can be calculated from the graph structure of $\Gamma$ using, for example, Proposition \ref{prop:spin_on_comps} and the first paragraph in its proof, which relate spin structures on a stable surface and twisted spin structures on its normalization.
\end{rmk}


The \emph{universal curve} $\oCC_{g,k,l}\to\oCM_{g,k,l}$ is the space whose fiber over $[\Sigma]\in\oCM_{g,k,l}$ is $\Sigma.$
Its topology can be defined as in the closed case.

The following simple lemma is useful for understanding the geometry of $\oCM_{g,k,l},$ see \cite{ST0,ST} for details.
\begin{lemma}\label{lem:Q_q_and_bridges}
\begin{enumerate}
\item
$\q,\Q$ are isotopy invariants, in the sense that if $(\Sigma_s)_{0\leq s \leq 1}$ is a path in $\oCM_{g,k,l},$ and $(\gamma_{t,s})_{0\leq s,t\leq 1}$ is a continuous family
of simple paths $\gamma_{\cdot,s}\subseteq\Sigma_s\hookrightarrow\oCC_{g,k,l},$ which miss the special points, and which are either all bridges or all closed.
Then in case they are all bridges then $\Q(\gamma_{\cdot,s})$ is fixed, for any continuous choice of orientations on $\gamma_{\cdot,s},$ if they are all closed, then $\q(\gamma_{\cdot,s})$ is fixed.
\item\label{it:legal_illegal}
Suppose now that $(\Sigma_s)_{0\leq s \leq 1}$ is a path in $\oCM_{g,k,l},$ and $(\gamma_{t,s})_{0\leq s,t\leq 1}$ is a continuous family
of paths $\gamma_{\cdot,s}\subseteq\Sigma_s\hookrightarrow\oCC_{g,k,l},$ which for $s<1$ are simple and miss the special points, and are either all bridges or all closed.
Assume $\gamma_{\cdot,1}$ is a constant path mapped to a node or a contracted boundary.
Then if $\gamma_{\cdot,s}$ are all closed, then the node is internal or a contracted boundary and its twist is $\q(\gamma_{\cdot,s}),$ for any $s<1.$
If  $\gamma_{\cdot,s}$ are all open, then the node is a boundary node. In this case, the illegal side of the bridges degenerate to the
illegal half node, in the sense of Definition \ref{def:smoothing-top}.

In particular, by Proposition \ref{Q_alternates}, exactly one of the half nodes of each boundary node is legal.
\item\label{it:qQ_determine_spin}
Two graded spin structures on $\Sigma,$ without a Ramond node which give rise to the same pair $(\q,\Q)$ are isomorphic.
\end{enumerate}
\end{lemma}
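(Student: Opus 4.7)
The plan is to handle the three parts in order. For part (1), I would argue by continuity: both $\q$ and $\Q$ are defined via the parallel transport $P$ together with rotations $R_\theta$ on the $S^1$-bundle $\SL$, and these operators depend continuously on the smooth surface $\Sigma_s$ and on the piecewise smooth path $\gamma_{\cdot,s}$ (once a continuous choice of orientation is fixed, where needed). Since the two invariants take values in the discrete group $\Z_2$, continuity forces them to be constant along the family $s\in[0,1]$.

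For part (2), consider first the closed case. As $s\to 1$, the loops $\gamma_{\cdot,s}$ shrink to a point $p\in\Sigma_1$; for $s$ close to $1$ they lie in a small neighborhood of $p$ and encircle exactly one preimage of $p$ in the normalization of $\Sigma_s$. By the definition of twist in terms of $\q$ on a small encircling loop, combined with part (1), the common value of $\q(\gamma_{\cdot,s})$ equals the twist at $p$. Since the $\gamma_{\cdot,s}$ avoid $\partial\Sigma_s$, the limit point $p$ must be either an internal node or a shrunk boundary, which matches the statement. For the bridge case, the two boundary endpoints of $\gamma_{\cdot,s}$ converge to a single boundary point $p$, so $p$ is a boundary node. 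To decide which half-node is legal, I would compute $\Q(\gamma_{\cdot,s})$ in the limit: choose a small $\varrho$-invariant disk around the node whose boundary together with $\gamma_{\cdot,s}$ cuts a piecewise smooth closed curve on the normalized surface bounding a contractible region, then apply Proposition \ref{prop:arf1}. Tracing the section $s$ of the lifting along the two boundary arcs through Equation \ref{eq:Q_def} from Definition \ref{def:Q} shows that $\Q(\gamma_{\cdot,s})=1$ exactly when the terminal endpoint $y$ approaches the legal half-node; hence the illegal side degenerates to the illegal half-node, as claimed.

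For part (3), I would reduce to the smooth case by Proposition \ref{prop:spin_on_comps}: with no Ramond nodes, a compatible graded spin structure is determined by its restriction to each component of the normalization plus the boundary half-node data, and the latter is recorded by applying $\Q$ to small bridges across each boundary node. On each smooth component the doubled surface is closed, and Johnson's theorem (Theorem \ref{lem:arf_props}) identifies the set of spin structures, up to isomorphism, with the set of quadratic refinements of the mod-$2$ intersection form; thus equality of $\q$ determines the $S^1$-bundle $\SL$ up to isomorphism. With $\SL$ fixed, the lifting $s$ along each component of $\partial\Sigma\setminus\{x_i\}$ is determined up to a single sign, and that sign is pinned down by $\Q$ evaluated on bridges joining distinct components. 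Assembling these isomorphisms across the normalization gives an isomorphism of the two graded spin surfaces.

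The main obstacle is the bookkeeping in part (2) for the bridge case: one must control the rotation angles $\alpha_x,\alpha_y$ as the bridge shrinks toward the boundary node, verify that the chosen disk neighborhoods give a contractible domain to which Proposition \ref{prop:arf1} applies, and confirm that the lifting $s$ extends continuously across the degeneration so that the $\Q$ value read off in the limit really is the one attached to the correct half-node. Once these continuity matters are settled, each of the three parts follows from the tools already in place, namely continuity of parallel transport, Proposition \ref{prop:arf1}, Proposition \ref{prop:spin_on_comps}, and Theorem \ref{lem:arf_props}.
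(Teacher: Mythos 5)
The paper does not prove this lemma; immediately before stating it the text says only ``see \cite{ST} for details,'' so there is no internal proof here to compare your sketch against. Judged on its own, the outline for parts~(1) and~(3) is sound, with one caveat: you read Theorem~\ref{lem:arf_props} as identifying spin structures with quadratic refinements of the intersection form, but as stated that theorem only asserts that $\q$ \emph{is} such a refinement. The bijection is an additional fact (the content of Johnson's theorem), resting on the observation that double covers of $T^1\Sigma|_{\Sigma\setminus\{z_j\}}$ form a torsor over $H^1(\Sigma\setminus\{z_j\};\Z_2)$, whose action shifts $\q$ by the dual cocycle; with that supplied, your reduction to the normalization via Proposition~\ref{prop:spin_on_comps} and the use of $\Q$ on bridges between boundary components to pin down the relative signs of the lifting is the right strategy.

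The real gap is the one you flag yourself in part~(2), but it is sharper than your description suggests. In the bridge case, the assertion ``$\Q(\gamma_{\cdot,s})=1$ exactly when the terminal endpoint $y$ approaches the legal half-node'' is geometrically off: both endpoints of the bridge converge to the single boundary node as $s\to1$, so neither endpoint distinguishes a half-node. What must actually be shown, to match Definition~\ref{def:smoothing-top}, is that when $\gamma$ is oriented so that $\Q(\gamma)=0$, a small neighborhood of the illegal half-node (in the normalization of the limit surface) pulls back to the \emph{left} of $\gamma$ in the smoothing. Your plan of applying Proposition~\ref{prop:arf1} on a small contractible region near the vanishing bridge and unwinding the rotations in Equation~\ref{eq:Q_def} is the right computation, but it requires an explicit description of how the lifting $s$ of the spin bundle extends across the degeneration, i.e., a precise statement of what it means for a family of graded spin structures to converge to a nodal graded spin surface, and in particular which half-node inherits the alternation. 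That convergence statement is substantive structure of $\oCM_{g,k,l}$ (supplied in \cite{ST}), not a mere continuity remark, and without it the limit value of $\Q$ cannot be read off. So the sketch is an honest outline, but part~(2) for bridges is not yet a proof.
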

\begin{rmk}
A classification of all pairs $(\q,\Q)$ is given in \cite{ST0}.
\end{rmk}

\begin{nn}\label{nn:tilde_forget_spin}
We denote by $\widetilde{\text{For}}_{\text{spin}}$ the canonical map
\[\widetilde{\text{For}}_{\text{spin}}:\oCM_\Gamma\to\oRCM_{{\text{for}}_{\text{spin}}(\Gamma)}\]
defined by forgetting the twisted spin structure and the lifting. Write ${\text{For}}_{\text{spin}}$ for the restriction to graded moduli. The definitions of $\widetilde{\text{For}}_{\text{spin}}, ~{\text{For}}_{\text{spin}}$ make sense also when $\Gamma$ is closed (and then the lifting is trivial).
\end{nn}

We end this subsection with a brief illustration of the phenomenon underlying the branched cover property of the map $\text{For}_{\text{spin}}.$ The branching phenomenon occurs along strata which parameterize surfaces with \emph{internal} nodes, and therefore happens, and from the same geometric reasoning, also in the setting of the closed $2-$spin intersection theory. We shall explain it in this setting for the simplicity of notations. 
Let $\Sigma_0$ be a curve with a single non separating node, and let $\Sigma_1$ be its smoothing, so that $\Sigma_0$ is obtained from $\Sigma_1$ by pinching at some simple smooth closed path $\gamma.$
Let $(\Sigma_t)_{t\in[0,1]}$ be a path in the moduli of curves, interpolating between $\Sigma_1,\Sigma_0.$ This path induces an identification of $H_1(\Sigma_t,\Z_2)$ for $t>0,$ which in the limit $t\to 0$ corresponds to the surjection obtained by taking the quotient $[\gamma_t]=0,$ where $[\gamma_t]$ is the generator of $H_1(\Sigma_t,\Z_2)$ which corresponds to $[\gamma]\in H_1(\Sigma_1,\Z_2)$ under this isomorphism.
Let $\alpha$ be any element of $H^1(\Sigma_1,\Z_2)$ satisfying $\langle\alpha,\gamma\rangle=1.$ Denote by $\alpha'$ the element in $H^1(\Sigma_0,\Z_2)$ which corresponds to $\alpha$ after the pinching, via the aforementioned surjection.
Let $B_1$ be an ordered basis of $H_1(\Sigma_1,\Z_2)$ whose first two elements are $[\gamma]$ and $[\alpha],$ and the remaining basis elements do not intersect $[\gamma].$ Define, for $t>0$ $B_t$ as the image of $B_1$ under the isomorphism, and extend to $t=0$ via the mentioned surjection.
Now choose any spin structure of $\Sigma_0$ which gives all markings twist $0$ and makes the node NS. Recall that spin structures on smooth curves are determined by the map $\q$ of Definition \ref{def:spin_using_q}, using the rule of Theorem \ref{lem:arf_props}, and any map which satisfies this rule gives rise to such a spin structure. Recall also that spin structures on $\Sigma_0$ which give all markings twist $0$ and make the node NS are in bijection with spin structures on the normalization of $\Sigma_0,$ giving all of its special points twist $0.$
Assign a number $\q(\beta)$ to any element $\beta\in B_1\setminus\{\gamma\},$ and put $\q(\gamma)=0.$ Recall Lemma \ref{lem:Q_q_and_bridges}. The identifications between the different $B_t,~t>0$ define a spin structure $\SL_t$ on $\Sigma_t$ for any $t>0.$ It extends to a spin structure on $\Sigma_0$ with a NS node. We can also define spin structure $\SL'_t,~t>0$ whose restrictions to $B_t$ are the same, except for the elements which correspond to $\alpha,$ on which they are opposite. Both $(\Sigma_t,\SL_t)_{t\in[0,1]}$ and $(\Sigma_t,\SL'_t)_{t\in[0,1]}$ are paths in the moduli of $2-$spin curves which have the same limit point $(\Sigma_0,\SL_0)=(\Sigma_0,\SL'_0),$ and which cover the same path $(\Sigma_t)_{t\in[0,1]}$ in the moduli of curves. The existence of the paths is due to the fact that $\q(\alpha')$ is undefined, and this data loss is the reason for the appearance of the branched cover phenomenon.

In case of a separating NS node, this argument no longer works, however in this case the automorphism group of the spin structure becomes larger: Scaling the fibers of the spin bundle by $-1$ on each one of the two components is an automorphism. This growth of the automorphism group implies that the orbifold degree of the restriction of $\text{For}_{\text{spin}}$ to such strata decreases.
\subsection{The line bundles $\CL_i$}
\begin{definition}
Let $\Gamma$ be a stable graph with an internal tail marked $i\neq 0.$
The line bundle $\CL_i\to\oRCM_\Gamma$ is the line bundle whose fiber at $(\Sigma,\{x_j\}_{j\in\B},\{z_j\}_{j\in\I})\in\oRCM_\Gamma$ is $T^*_{z_i}\Sigma.$
This bundle can also be defined by pulling back the corresponding relative cotangent line over the closed moduli space, via the doubling map.

Let $\Gamma$ be a spin graph with a lifting and an internal tail marked $i\neq0.$
The line bundle $\CL_i\to\oCM_\Gamma$ is the line bundle whose fiber at $(\Sigma,\{x_j\}_{j\in\B},\{z_j\}_{j\in\I})\in\oCM_\Gamma$ is $T^*_{z_i}\Sigma.$
Equivalently, this bundle can be defined as the pullback of $\CL_i\to\oRCM_{{\widetilde{\text{for}}}_{\text{spin}}(\Gamma)}$ by the map ${\widetilde{\text{For}}}_{\text{spin}}.$
\end{definition}

\subsection{Boundary conditions and intersection numbers}\label{subsub:bc}
We begin with a simple observation
\begin{obs}\label{obs:forgetful_illegal}
Let $(\Sigma,\SL,s)$ be a smooth marked surface with a spin structure and a lifting, $\Sigma'$ the marked surface obtained by forgetting points $\{x_b\}_{b\in\B'}$ where $\B'$ is a 
subset of illegal boundary marked points. Then $\SL$ is canonically a (twisted) spin structure for $\Sigma',$ and $s$ canonically extends to a lifting on $\Sigma'.$
In particular, a marked point is legal for $(\Sigma',\SL,s)$ if and only if it is legal for $(\Sigma,\SL,s).$
\end{obs}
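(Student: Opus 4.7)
The plan is to observe that forgetting illegal boundary points requires essentially no change in the data $(\SL,s)$, only an extension by continuity at the forgotten points. I would organize the argument in three short steps.

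First, I would handle the spin structure. The bundle $\SL$ is defined over $\Sigma\setminus\{z_j\}_{j\in\I}$ together with a $2$-cover map to $T^1\Sigma$. Since forgetting boundary marked points does not change the underlying pointed Riemann surface away from the internal marked points, i.e.\ $\Sigma'\setminus\{z_j\}_{j\in\I}=\Sigma\setminus\{z_j\}_{j\in\I}$ as a Riemann surface, the same $\SL$ with the same projection serves as a (twisted) spin structure on $\Sigma'$. Nothing needs to be checked here beyond invoking the definition.

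Second, I would extend the lifting. The given $s$ is a section over $\partial\Sigma\setminus\{x_i\}_{i\in\B}$, while a lifting on $\Sigma'$ should be a section over $\partial\Sigma\setminus\{x_i\}_{i\in\B\setminus\B'}$. For each $b\in\B'$ the point $x_b$ is illegal by hypothesis, so by the very definition of ``illegal'' the one-sided limits $\lim_{x\to0^-}s(x)$ and $\lim_{x\to0^+}s(x)$ in a small chart around $x_b$ coincide. Thus $s$ admits a unique continuous extension across $x_b$, and doing this simultaneously for every $b\in\B'$ produces the canonical section $s'$ on $\partial\Sigma\setminus\{x_i\}_{i\in\B\setminus\B'}$. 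Uniqueness of the extension makes it automatically $Aut(\Sigma,\SL,s)$-equivariant; the $Aut(\Sigma,\SL,s)$-invariance of $\B'$ is precisely what guarantees that the automorphism group descends to act on $(\Sigma',\SL,s')$ so the construction is canonical at the level of isomorphism classes.

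Third, I would verify the statement about legality of remaining marked points. For $j\in\B\setminus\B'$, the extension $s'$ agrees with $s$ on a deleted neighborhood of $x_j$ (we have only altered $s$ at the points $x_b$, $b\in\B'$, which are distinct from $x_j$). Hence the one-sided limits of $s'$ at $x_j$ are identical to those of $s$, and the alternation condition defining legality is unchanged. There is no genuine obstacle; the only point that requires a sentence of care is the compatibility with $\mathrm{Aut}(\Sigma,\SL,s)$, which follows immediately from the uniqueness of the continuous extension combined with the invariance hypothesis on $\B'$.
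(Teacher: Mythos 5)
Your proof is correct, and since the paper records this as an Observation with no written argument, what you have supplied is precisely the routine verification that is left implicit. The three pieces are exactly the right ones: (i) the underlying surface $\Sigma\setminus\{z_j\}_{j\in\I}$ is unchanged by forgetting boundary marked points, so the bundle $\SL$ and its $2$-cover map to $T^1\Sigma$ are already a (twisted) spin structure for $\Sigma'$; (ii) at each forgotten $x_b$ the one-sided limits of $s$ agree by the definition of ``illegal,'' giving the unique continuous extension $s'$ across $x_b$; and (iii) $s'$ agrees with $s$ near every retained $x_j$, so the alternation condition, hence legality, is unchanged. Your remark that the $Aut(\Sigma,\SL,s)$-invariance hypothesis on $\B'$ is exactly what is needed for the construction to descend to the automorphism group is the one place where a reader might want a word of care, and you handle it correctly via uniqueness of the extension.
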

\begin{definition}\label{def:abs_vertex1}
Consider $\Gamma\in\CG_{g,k,l}$ and $i\in[l],$ and let $v=i/\sigma_0$ be the vertex of $\Gamma$ which contains the tail marked $i.$
Denote by $v_i^*(\Gamma)$ the following graph, which will be called the \emph{abstract vertex of $i$ in $\Gamma,$} or just the \emph{abstract vertex} for shortness.
\begin{enumerate}
\item
$V(v_i^*(\Gamma)) =\{*\},$ a singleton. It is open if and only if $v$ is.
\item
$T^I(v_i^*(\Gamma)) = (\sig_0^I)^{-1}(v).$ Any internal tail of $v_i^*(\Gamma)$ which corresponds to a tail marked by $j\in[l]$ is be marked $j,$ otherwise it is marked $\III.$ The twist of any tail of $v_i^*(\Gamma)$ is the same as the twist of the corresponding half edge of $v.~H^{\text{CB}}=\emptyset.$
\item
$T^B(v_i^*(\Gamma)) = \{h\in (\sig_0^B)^{-1}(v)|~\Or(h)=1\},$ and all of these boundary tails are marked $\BBB.$
\item
$g(v_i^*(\Gamma))=g(v),~E(v_i^*(\Gamma))=\emptyset.$
\end{enumerate}
Define the map ${\text{for}}_{illegal}:\CG\to\CG,$ which forgets all tails $t\in T^B$ with $\Or(t)=0.$ As a consequence of Observation \ref{obs:forgetful_illegal}, it induces a map at the level of moduli spaces, which will be denoted by ${\text{For}}_{illegal}.$

Write $\Phi_{\Gamma,i}={\text{For}}_{illegal}\circ v_i:\CM_\Gamma\to\CM_{v_i^*(\Gamma)}.$ This map extend to a map $\oCM_\Gamma\to\oCM_{v_i^*(\Gamma)},$ and we also denote the extension by $\Phi_{\Gamma,i}.$
\end{definition}
\begin{figure}
\centering
\includegraphics[scale=.4]{./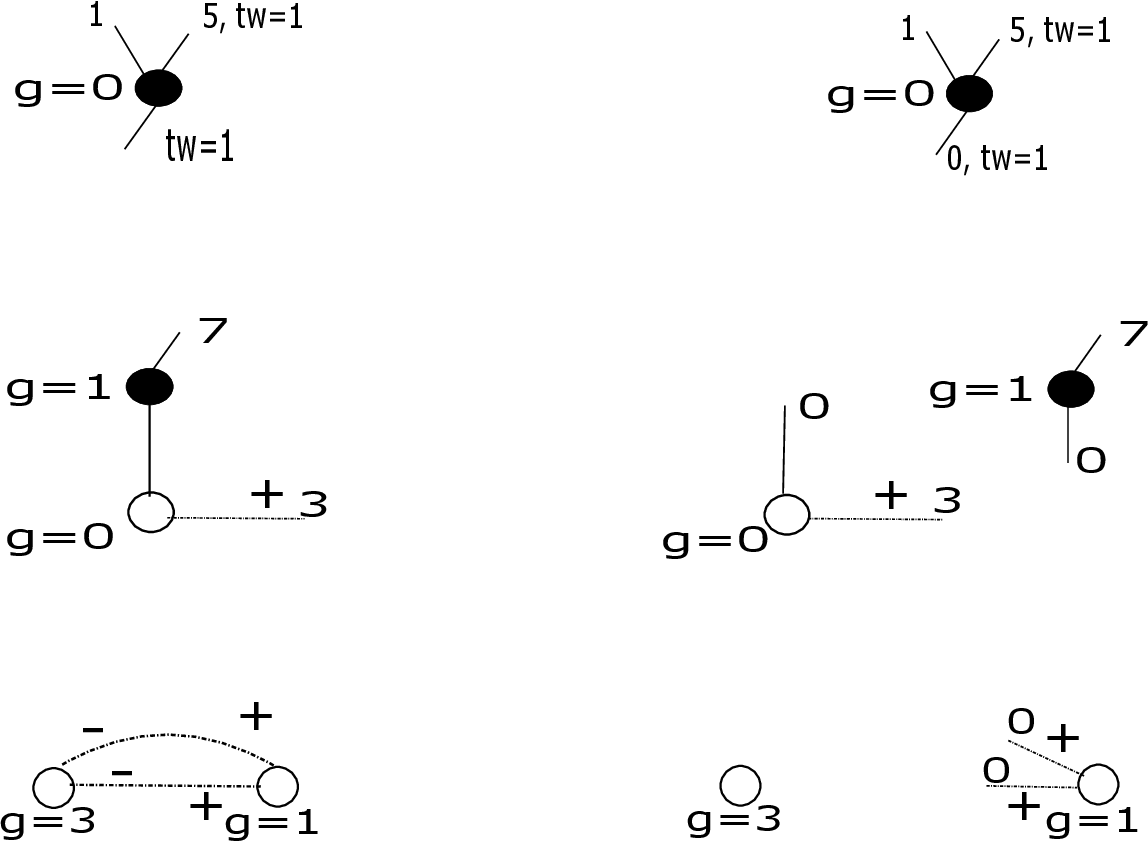}
\caption{In this figure to the right side of each dual graded spin graph its corresponding abstract vertices are shown. Again, half edges $h$ with $\Or(h)=1$ are decorated with '+'.}
\label{fig:2_49}
\end{figure}
At the level of surfaces, $\Phi_{\Gamma,i}(\Sigma),$ for $\Sigma\in\oCM_\Gamma,$ is the graded smooth surface obtained from $\Sigma$ by normalizing the nodes which correspond to the edges of $\Gamma,$ taking the component of $z_i,$ forgetting all illegal half nodes which were formed, renaming all remaining special points by $0,$ and forgetting the lifting at preimages of contracted boundaries, 
see Figure \ref{fig:2_50}
\begin{figure}
\centering
\includegraphics[scale=.4]{./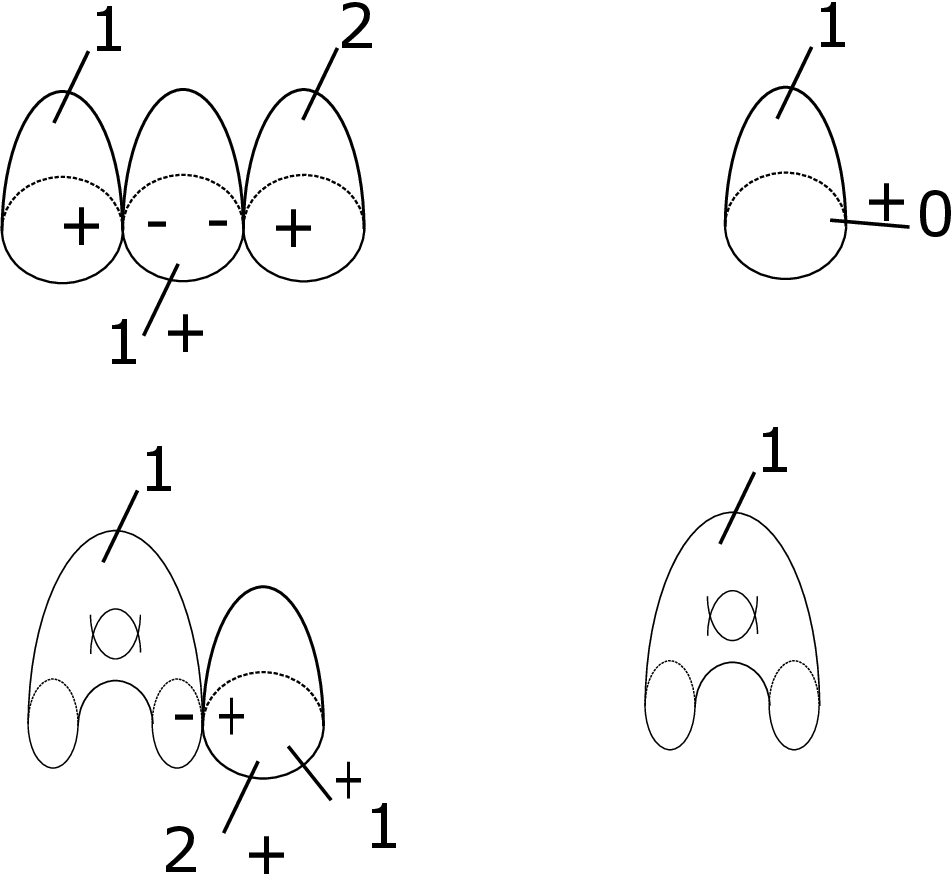}
\caption{In this figure the graded surface on the right of each row is the image (as a moduli point) of $\Phi_{\Gamma,1}(\Sigma)$ where $\Sigma$ is the corresponding surface to the left of the same row, and $\Gamma$ is the dual graph which corresponds to $\Sigma.$}
\label{fig:2_50}
\end{figure}

\begin{obs}\label{obs:tautological_under_base}
For $\Gamma$ as above,
the two orbifold line bundles, $\CL_i\to\CM_\Gamma,$ and $\Phi^*_{\Gamma,i}(\CL_i\to\CM_{v_i^*(\Gamma)})$ are canonically isomorphic.
\end{obs}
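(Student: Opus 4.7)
The plan is to verify the isomorphism fiberwise and then note that the identifications assemble tautologically into a global bundle map, since the content of the observation is essentially that $\CL_i$ depends only on the local complex geometry at $z_i$. Fix $\Sigma=(\Sigma,\{x_j\},\{z_j\})\in\CM_\Gamma$. By definition of pullback, the fiber of $\Phi^*_{\Gamma,i}\CL_i$ at $\Sigma$ is $T^*_{z_i}(\Phi_{\Gamma,i}(\Sigma))$, while the fiber of $\CL_i\to\CM_\Gamma$ at $\Sigma$ is $T^*_{z_i}\Sigma$. I would then unwind $\Phi_{\Gamma,i}=For_{illegal}\circ v_i$. The map $v_i$ sends $\Sigma$ to the component $\Sigma_i$ of $\NNN(\Sigma)$ carrying $z_i$; since $z_i$ is an internal marked point (hence disjoint from the nodes), a neighborhood of $z_i$ in $\Sigma$ is canonically identified with a neighborhood of $z_i$ in $\Sigma_i$, giving a tautological $T^*_{z_i}\Sigma=T^*_{z_i}\Sigma_i$. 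The subsequent application of $For_{illegal}$ only forgets boundary marked points; the underlying Riemann surface and complex structure near $z_i$ are untouched, so $T^*_{z_i}\Sigma_i=T^*_{z_i}(\Phi_{\Gamma,i}(\Sigma))$ as well.

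It remains to check that these fiberwise identifications are induced by an actual morphism of line bundles. The cleanest way is to observe that the universal curve near the $z_i$-section is preserved by the family version of $\Phi_{\Gamma,i}$: the fibered product of $\oCC_{v_i^*(\Gamma)}\to\oCM_{v_i^*(\Gamma)}$ with $\CM_\Gamma$ along $\Phi_{\Gamma,i}$ receives a canonical map from an open neighborhood of the $z_i$-section in $\oCC_\Gamma$, inducing the required isomorphism of relative cotangent sheaves at $z_i$. I do not anticipate any real obstacle here; the only place where one must be slightly careful is the stabilization implicit in $For_{illegal}$, but since the component carrying $z_i$ is automatically stable (it already has an internal marked point) and stabilization only affects components that become unstable, it leaves a neighborhood of $z_i$ untouched.
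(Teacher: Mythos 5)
Your argument is correct and is the natural one for this observation (the paper itself defers the proof to \cite{PST,ST}, which argue along exactly these lines): normalization and $For_{illegal}$ both leave a neighborhood of $z_i$ untouched, so the fiberwise identification of cotangent spaces is tautological, and the universal-curve argument promotes it to a bundle isomorphism. The one slightly imprecise step is the parenthetical justification for stability of the component carrying $z_i$: having an internal marked point alone is not sufficient (a genus-$0$ open component with only $z_i$ and no boundary tails would be unstable after forgetting). The actual reason stability persists in the graded setting is the parity constraint $\sum_{h\in(\sig_0^B)^{-1}(v)}\Or(h)\equiv g(v)+1\ (\mathrm{mod}\ 2)$ on graded graphs, which for a genus-$0$ open vertex forces at least one legal boundary half-edge to survive $For_{illegal}$; combined with $z_i$ this yields $k+2l\geq 3>2$. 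This is not a serious gap, since the conclusion (stabilization never touches the $z_i$-component) is right, but the stated justification should be replaced by the parity argument.
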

For a proof, see \cite{PST} (it is proven there for the $g=0$ case, but the same argument works in general).

In order to define the open intersection numbers we need to define, following \cite{PST,ST} special canonical multisections. We first recall what multisections are, and refer the reader to \cite[Appendix A]{BCT2} for more details and references.
\begin{definition}\label{def:multisec}
Let $E\to M$ be a orbibundle over an orbifold with corners, and we identify $E$ with its total space. A \emph{multisection} is a function $\kappa:E\to\mathbb{Q}_{\geq 0}$ which satisfies the following properties.
For any $p\in M,$ let $(F\to U)/G$ be a local model for $E\to M$ in a neighborhood of $p,$ where $U\simeq\R^m\times\R_{\geq 0}^{n-m},$ $p$ is identified with $0,$ $F\simeq U\times \R^h,$ the map $\pi:F\to U$ is the projection and $G$ is a finite group acting linearly on the pair, commuting with $\pi.$ Denote by $\hat\kappa$ the pullback of $\kappa$ to a $G-$invariant function on $F.$ Then,
\begin{enumerate}
\item for all $y\in U,$\[\sum_{v\in \pi^{-1}(y)}\hat\kappa(v)=1.\]
\item We can find sections $s_1,\ldots,s_N:U\to F,$ perhaps after replacing $U$ with a smaller neighborhood of $0,$ and non negative rational numbers $\mu_1,\ldots,\mu_N,$ such that for all $y\in U,$~$v\in \pi^{-1}(y)$
    \[\hat\kappa(v)=\sum_{i|s_i(y)=v}\mu_i.\]
\end{enumerate}
The sections $s_1,\ldots,s_N$ are called \emph{local branches} and the numbers $\mu_1,\ldots,\mu_N$ are their \emph{weights}. The locus where $\kappa\neq 0,$ which is locally the union of its local branches, is called the \emph{support} of the multisection. The elements in the support of $\kappa$ which lie in the fiber $E_p$ of $E$ over $p$ form set of \emph{values} of the multisection at $p$. 
\end{definition}
Although the support does not, in general, capture all the information of the multisection, we usually refer to the multisection $\kappa$ by its support $s,$ and write $s(x)$ for the values of the multisection at $x.$
If $N=1$ for all $p\in M$ then the multisection is just a usual section.
The multisection is smooth (piecewise smooth) if all its local branches are smooth (piecewise smooth). Many of the natural operations and properties of sections of vector bundles generalize to multisections of orbibundles in a natural way. These include addition of multisections, multiplications by functions $f: M\to \R$ and most transversality statements.
We say that the multisection is nowhere vanishing if none of its branches vanishes, or equivalently $\kappa(x,0)= 0$ for all $x\in M.$ The multisection is transverse to zero if all its branches are transverse to the zero section, and it has isolated zeroes, if all its local branches have isolated zeroes. $x$ is a zero of the multisection if $\kappa(x,0)\neq 0,$ that is, at least one of the local branches at $x$ vanishes at $x.$ The zero locus of a multisection is the set of its zeroes.
\begin{definition}\label{def:special_canonical}
Suppose $A\subseteq\CG_{g,k,l}$ is a collection of graphs with at least one boundary edge. A piecewise smooth multisection $s$ of $\CL_i\to\cup_{\Gamma\in A}\oCM_{\Gamma}$ is called
\emph{special canonical} on $\cup_{\Gamma\in A}\oCM_{\Gamma}$ if for all $\Lambda\in\pu\Gamma$
\[
s|_{\CM_\Lambda} = \Phi_{\Lambda,i}^*s^{v_i^*(\Lambda)},
\]
for some piecewise smooth multisection $s^{v_i^*(\Lambda)}$ of $\CL_i\to \CM_{v_i^*(\Lambda)}.$

In case $A\subseteq\CG_{g,k,l}$ is the collection of all graphs with at least one boundary edge, we say that $s$ as above is \emph{special canonical}.

A multisection $s=\bigoplus_{i\in[l],j\in[a_i]} s_{ij},$ of $\bigoplus_i\CL_i^{\oplus a_j}$ is special canonical if each component $s_{ij}$ is special canonical.
\end{definition}
Intuitively, being special canonical means that the multisection depends only on the irreducible component of $z_i$ in the normalization, after forgetting the locations of the illegal boundary half nodes and the liftings at contracted boundaries.

Still following \cite[Appendix A]{BCT2}, let $p\in M$ be an internal point, and let $s$ be a multisection with isolated zeroes. We assume that $E,M$ are oriented and than $\text{rk}(E)=\dim(M).$
Take a local model $(F\to U)/G$ for the neighborhood of $p$ as in Definition \ref{def:multisec}. Choose a metric on $U,$ a metric on the fibers $\R^h,$ and let $\pi':F\to\R^h$ be the projection on the $\R^h$ component. Let $B$ be a small ball around $0$ (which is identified with $p$), which contains no zero of $s$ except, possibly, $0.$ Denote by $S$ the unit sphere in $\R^h.$ We use the orientations of $M,~E$ to endow $S,\partial B$ with the induced orientations as the boundaries of oriented balls.
We define $\text{deg}_p(s_i)$, the local degree of $s_i$ at $p$ as the degree of the map $t:\partial B\to S,$ where \[t(x)=\frac{\pi'(s_i(x))}{|\pi'(s_i(x))|}.\]
This definition is independent of choices.
The \emph{weight} of $p$ in the zero locus of $s$ is defined as
\begin{equation}\label{eq:eps_p}\epsilon_p=\frac{1}{|G|}\sum_{i=1}^N\mu_i\text{deg}_p(s_i).\end{equation}
If $s$ has a finite zero locus $\{p_1,\ldots,p_t\}$, then the \emph{weighted signed zero count} of $s$ is $\sum_{i=1}^t\epsilon_{p_i}(s)\in\mathbb{Q}.$

Let $s$ be a piecewise smooth multisection of $E\to \partial M,$ where $E\to M$ is an oriented orbibundle over a compact oriented orbifold with corners. Suppose $s$ vanishes nowhere.
For \emph{any} piecewise smooth multisection $\widetilde{s}$ extending $s$ to the interior of $M$ with isolated zeroes, the weighted signed zero count of $\widetilde{s}$ is the same.
This follows from standard cobordism arguments (see, for example \cite[Section 3]{Fuk}, for the case $\partial M=\emptyset$ the addition of boundary does not complicate the argument\footnote{In \cite{Fuk} the definition of multisections is slightly different, as a section to the symmetric product of the orbifold vector bundle. However a multisection in our terminology induces in a natural way a multisection in the terminology of that paper, and the definitions of the zero counts agree.}), and it is also a consequence of Proposition \ref{prop:basic prop}, whose proof is sketched below. We denote this number by $\int_Me(E,s)$ and call it the \emph{integral of the relative Euler class} of $E$ relative to $s.$
\begin{rmk}
The \emph{relative Euler class} $e(E,s)\in H^{n}(M,\partial M,\mathbb{Q}),$ where $E\to M$ is an oriented orbibundle over a compact oriented orbifold with corners with $\text{rk}(E)=\dim (M)=n,$ is defined whenever $s$ is a nowhere vanishing boundary condition for $E\to M.$
Integrating, or capping with the fundamental class, gives by Poincar\'e-Lefschetz duality, an element of $H_0(M,\mathbb{Q})\simeq\mathbb{Q}.$ This element is precisely what we defined as the integral of the relative Euler class. For our needs the definition of the relative Euler class itself is not needed.
See the appendix in \cite{BCT2} for more details and references.

The integral relative Euler class can be defined for orbifold sphere bundles rather than orbifold vector bundles, for example by using an embedding of the sphere bundle into the vector bundle using a choice of a metric for the vector bundle, and inducing the boundary conditions by this embedding. The resulting integrals are the same when working with a vector bundle $E$ or with its associated sphere bundle. We shall use these two notions interchangeably throughout the paper.
\end{rmk}
\begin{obs}\label{obs:functoriality_Euler}
Suppose $E\to M$ is an oriented orbibundle over a compact oriented piecewise smooth orbifold with corners with $\text{rk}(E)=\dim (M)=n$ and $s$ a nowhere vanishing multisection of $E\to\partial M.$ Let $f:N\to M$ be a surjection between compact oriented piecewise smooth orbifolds with corners of dimension $n$ which maps $\partial N$ onto $\partial M.$ Suppose that $f$ is generically of degree $1,$ meaning that outside of a subspace $K\subset M$ which is a union of finitely many compact suborbifolds of $M$ of real codimension $1$ $f$ is injective. Then
\[\int_Ne(f^*E,f^*s)=\int_Ne(E,s).\]
\end{obs}
Indeed, standard transversality arguments show that a generic piecewise extension of $s$ to $M$ will have no zeroes in $K.$ Using the pull back to $N$ of such a generic extension proves the claim.

The following theorem has appeared in \cite{PST} in the genus $0$ case, and will appear in \cite{ST} for all genera.
\begin{thm}\label{thm:PST}
Suppose $a_1,\ldots,a_l\geq 0$ are integers which sum to $\frac{k+2l+3g-3}{2}.$ Then one can choose multisections
$\{s_{ij}\}_{i\in[l],j\in[a_i]}$ such that
\begin{enumerate}
\item
For all $i,j~s_{ij}$ is a special canonical multisection of $\CL_i\to\partial\oCM_{g,k,l}.$
\item
The multisection $s = \bigoplus_{i,j}s_{ij}$ vanishes nowhere.
\end{enumerate}
Moreover, for any two choices $\{s_{ij}\},\{s'_{ij}\}$ which satisfy the above requirements we have
\[
\int_{\oCM_{g,k,l}}e(\bigoplus_i\CL_i^{\oplus a_j},s) = \int_{\oCM_{g,k,l}}e(\bigoplus_i\CL_i^{\oplus a_j},s')
\]
where 
$s' = \bigoplus_{i,j}s'_{ij}.$
\end{thm}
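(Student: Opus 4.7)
The plan is to build the special canonical multisections by induction over the stratification $\partial\oCM_{g,k,l} = \coprod_\Lambda \CM_\Lambda$, exploiting the pullback structure to reduce at each stratum to a choice of multisections on the abstract vertex moduli $\CM_{v_i^*(\Lambda)}$. Independence of the chosen boundary conditions then follows from a standard Stokes-type argument: I interpolate between any two valid families through a generic path of special canonical multisections that remains nowhere zero, whence the relative Euler class integral is independent of the endpoint.

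First I would construct auxiliary piecewise-smooth multisections $\tilde s_{ij}^{W}$ of $\CL_i \to \CM_W$ for each abstract vertex graph $W$ that occurs as $v_i^*(\Lambda)$ for some $\Lambda \in \pu\Gamma_{g,k,l}$. These form a finite poset under boundary containment; I proceed by induction on depth, starting from the deepest (smallest-dimensional) $W$'s and extending inward, at each stage fixing the restriction to the already-constructed boundary and extending by a generic piecewise smooth multisection. Then set $s_{ij}|_{\CM_\Lambda} := \Phi_{\Lambda,i}^* \tilde s_{ij}^{v_i^*(\Lambda)}$. The compatibility of $\Phi_{\Lambda,i}$ with smoothing operations ensures that these restrictions glue to a piecewise smooth multisection on $\partial\oCM_{g,k,l}$ that is special canonical on every stratum.

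To achieve nowhere vanishing, observe that on a stratum $\CM_\Lambda$ of codimension $c \geq 1$, the zero locus of $s = \bigoplus_{i,j} s_{ij}$ equals $\bigcap_i \Phi_{\Lambda,i}^{-1}(Z_i^\Lambda)$, where $Z_i^\Lambda \subset \CM_{v_i^*(\Lambda)}$ is the common zero set of $\{\tilde s_{ij}^{v_i^*(\Lambda)}\}_{j \in [a_i]}$. For generic $\tilde s_{ij}^{v_i^*(\Lambda)}$ the locus $Z_i^\Lambda$ has real codimension $2 a_i$. The maps $\Phi_{\Lambda,i}$ for varying $i$ factor through different components of the normalization of the generic surface (recorded by the vertices of $\Lambda$), so the pullbacks $\Phi_{\Lambda,i}^{-1}(Z_i^\Lambda)$ intersect transversely with joint expected real codimension $\sum_i 2a_i = k+2l+3g-3$. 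Since $\dim_\R \CM_\Lambda = k+2l+3g-3-c$, this expected codimension strictly exceeds the dimension of $\CM_\Lambda$, forcing the intersection to be empty for generic choices. A Baire-category argument then allows the choices to be made generic on all abstract vertex moduli simultaneously.

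For the invariance under the boundary condition, given two valid families $s$ and $s'$, I form the family $s^t := (1-t)s + t s'$ on $[0,1] \times \partial\oCM_{g,k,l}$ (refining the multisection structure if needed), then apply a small generic $t$-dependent perturbation within the class of special canonical multisections to ensure $s^t$ is nowhere zero for all $t$; the same dimension count, now on $[0,1]\times\CM_\Lambda$ (one extra parameter, still strictly below $k+2l+3g-3$), shows this is possible. Applying Stokes' theorem to the transgression of the relative Euler class over the parameter interval yields the equality $\int e(E,s) = \int e(E,s')$. The hard part will be establishing the transversality claim for the collection of forgetful maps $\{\Phi_{\Lambda,i}\}_i$ across every stratum simultaneously in the orbifold-with-corners category, together with ensuring that the piecewise smooth multisection machinery (weighted branches, common refinements, and convex combinations) is robust enough to support both the inductive construction and the homotopy argument; in particular, one must verify that the genericity results apply to multisections rather than honest sections, which is the standard setting but requires care when orbifold isotropy groups permute the branches.
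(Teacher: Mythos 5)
This theorem is not proved in the paper: it is quoted as a black box from \cite{PST} (genus $0$) and \cite{ST} (all genera), so your proposal can only be judged on its own merits. The construction half of your argument (inductive choice of generic multisections on the abstract vertex moduli $\CM_{v_i^*(\Lambda)}$, pulled back by $\Phi_{\Lambda,i}$, with emptiness of the joint zero locus on each boundary stratum by a codimension count) is essentially the standard route and is sound: for any boundary stratum $\Lambda$ one has $\dim_\R\CM_\Lambda\le \dim_\R\oCM_{g,k,l}-1<\sum_i 2a_i$, so a generic special canonical choice vanishes nowhere on $\partial\oCM_{g,k,l}$.

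The invariance half, however, has a genuine gap exactly where the whole notion of ``special canonical'' earns its keep. You justify the nowhere-vanishing of the interpolating family $s^t$ by a dimension count on $[0,1]\times\CM_\Lambda$, claiming its dimension is ``still strictly below $k+2l+3g-3$.'' For a codimension-one boundary stratum this is false: $\dim_\R\bigl([0,1]\times\CM_\Lambda\bigr)=1+(3g-3+k+2l-1)=3g-3+k+2l=\sum_i 2a_i$, so the naive count predicts a nonempty, zero-dimensional vanishing locus for a generic homotopy, and indeed for arbitrary (non-canonical) boundary conditions the relative Euler number does change across such walls --- this is precisely why the theorem would be false without the pullback constraint. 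The correct argument must use that $s^t$ is special canonical for every $t$ (a convex combination of pullbacks along the same $\Phi_{\Lambda,i}$ is again a pullback), so its zero locus in $[0,1]\times\CM_\Lambda$ is the preimage of a zero locus in $[0,1]\times\prod_{v:\,I_v\ne\emptyset}\CM_{v^*(\Lambda)}$, and one must then show $\sum_{v}\dim\CM_{v^*(\Lambda)}\le\dim_\R\oCM_{g,k,l}-2$. That inequality is not formal: it uses that every boundary node has exactly one illegal half-node (Proposition \ref{Q_alternates}), and that this illegal half-node is either forgotten by $For_{illegal}$ from a component carrying some $z_i$ (dropping the dimension of the relevant factor by one), or else sits on an unmarked component whose moduli has positive dimension --- the parity constraint on gradings rules out an illegal half-node on a rigid unmarked disk with three boundary special points. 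Without this spin/legality input your homotopy step does not close, so you should replace the count on $[0,1]\times\CM_\Lambda$ by the count on the product of abstract vertex moduli and supply the parity argument.
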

For completeness, and since \cite{ST} is yet to appear, we will shortly review the proof of first claim in the theorem. We will not review the 'Moreover' part, since it will be a consequence of our main theorem, Theorem \ref{thm:comb_model}, which calculates the integral of the relative Euler class, and obtains an answer which does not involve the special canonical multisection, without relying on the assumption that the integral is independent of the multisection.

The proof that nowhere vanishing special canonical boundary conditions exist has two steps.
The first step shows that for any boundary point $p\in \partial\oCM_{g,k,l}$ there exists a special canonical multisection such that none of its branches vanishes at $p.$ This step is the heart of the argument, it is similar but not identical to \cite[Proposition 3.49(a)]{PST} and we will review it in the next paragraph.
The second step uses the multisections constructed in the first step to construct nowhere vanishing boundary conditions: Using the first step, and compactness one can find finitely many canonical multisections $s_1,\ldots,s_N$ of $E=\bigoplus_{i\in[l],j\in[a_i]}\CL_i^{\oplus a_i},$ such that for any boundary point $p\in \partial\oCM_{g,k,l},$ and any choice of local branches $s'_i$ of $s_i$ at $p,$ the vectors $(s'_i)_p,~i\in [N]$ span the fiber $E_N.$ Then, by a standard transversality argument, a generic linear combination of $s_1,\ldots,s_N$ will be a nowhere vanishing canonical multisection. By generic we mean that the subset of linear combinations of $s_1,\ldots,s_N$ with this property is residual in set of all possible linear combinations. The proof of this step is identical to \cite[Lemma 3.53(a)]{PST}, and we refer the interested reader there.

We turn to explain the first step. Fix $p\in \partial\oCM_{g,k,l}$ and $i\in [l].$ Suppose $p$ belongs to the stratum $\CM_\Gamma,$ for some graded spin dual graph $\Gamma.$ corresponds to the graded surface $\Sigma.$ Let $u\in(\CL_i)_p$ be an arbitrary non zero vector. Finally, let $[\Sigma']$ be the image of $p=[\Sigma]$ under the map $\Phi_{\Gamma,i},$ and write $G=\text{Aut}(\Sigma).$ The action of $G$ lifts to an action on the cotangent of the $i^{th}$ marking, that is, on $(\CL_i)_{[\Sigma']},$ the fiber of $\CL_i$ at $[\Sigma'].$ By Observation \ref{obs:tautological_under_base}, the fibers of $\CL_i$ at $[\Sigma'],[\Sigma]$ are isomorphic, canonically up to the action of $G$ on $(\CL_i)_{[\Sigma']}.$ Thus, the $G-$action lifts also to $(\CL_i)_{[\Sigma]}.$ Write \[\mathbf{u}=\{u_1,\ldots,u_m\}=\{g\cdot u|g\in G\}.\]

We will construct a special canonical multisection of $\CL_i$ whose branches at $p$ have values $\mathbf{u},$ with equal weights. Set \[V_{g,k,l}=\{v_i^*(\Lambda)|\Lambda\in\pu\Gamma_{g,k,l}\},\]i.e. $V_{g,k,l}$ is the collection of abstract vertices $v_i^*(\Lambda)$ for any graded spin graph $\Lambda$ that corresponds to a stratum of $\oCM_{g,k,l}.$ We will construct for any $v\in V_{g,k,l}$ a special canonical multisection $s^v$ for $\CL_i\to\oCM_{v}.$
These multisections are required to be compatible in the following sense. Let $v\in V_{g,k,l},$ let $\Lambda\in\partial v$ be a graph which corresponds to a boundary stratum of $\oCM_v,$ and let $v'=v_i^*(\Lambda).$ It is easy to see that $v'\in V_{g,k,l}.$ We require, for all such $v,\Lambda$ that
\begin{equation}\label{eq:compatibility}
s^v|_{\CM_\Lambda}=\Phi_{\Lambda,i}^*s^{v'}.
\end{equation}
These constraints, for different $\Lambda$ are compatible. See the explanation in the beginning of the proof of \cite[Proposition 3.49]{PST}, which extends to our setting.
This construction will provide, in particular, a construction of a special canonical multisection for $v_i^*(\Gamma_{g,k,l}),$ which is the same graded dual graph as $\Gamma_{g,k,l}$ except that the boundary tails which are marked $\BBB.$ The pull-back of this section by the canonical map
\[\oCM_{\Gamma_{g,k,l}}\to\oCM_{v_i^*(\Gamma_{g,k,l})},\]
which changes the boundary markings to $\BBB$ will be the required multisection.

Write $v^*=v^*_i(\Gamma)$ and $a=\dim (\CM_{v^*}),$ where $\Gamma$ is the dual graph which corresponds to $\Sigma.$
The construction of multisections $s^v,~v\in V_{g,k,l},$ will be by induction on $d=\dim \CM_{v}.$
The basis is $d=-1,$ which holds trivially since there are no such vertices.
Suppose we have constructed multisections with the required properties for all $v'$ with $\dim\CM_{v'}<d.$
Consider $v\in V_{g,k,l}$ with $d=\dim\CM_{v}.$ Note that $v$ needs not to be an open vertex, and may even have internal tails with $\tw=1.$ Write $\Upsilon=\coprod_{\Lambda\in\partial v}\CM_\Lambda.$ Define first $s^v|_{\Upsilon}$ according to \eqref{eq:compatibility}, where the right hand side of the compatibility equations is already defined by induction. We now extend $s^v$ to the whole moduli space $\oCM_v.$ Here we separate into cases. If $v\neq v^*,$ we extend arbitrarily. If $v=v^*$ we extend arbitrarily, but under the requirement that $s^{v^*}_{[\Sigma']}=\mathbf{u},$ meaning that each $u_i$ appears in some branches of $s^{v^*},$ and with the same total weight. This can be done for example in the following way. Let $\rho:\oCM_{v^*}\to [0,1]$ be a smooth function which is $1$ near $[\Sigma']$ and $0$ near $\Upsilon.$ Let $s'$ be an arbitrary extension of the already defined $s^{v^*}|_{\Upsilon}$ to $\oCM_{v^*},$ and $s''$ an arbitrary multisection of $\CL_i\to\oCM_{v^*}$ which has the required values $[\Sigma'].$ Then one can take \[s^{v^*}=\rho s''+(1-\rho)s'.\] The induction follows\footnote{In the proof of the corresponding claim in \cite{PST} the multisection were also required to satisfy some invariance under symmetry groups. In our case, since we work with orbifolds and orbibundles, this invariance is part of the definition of being a multisection, see the appendix of \cite{BCT2}. In \cite{PST} the orbifoldness was implicit, and was a result of forgetting the boundary markings. In higher genus even the moduli with injective markings is an orbifold.}, and thus also the proof.

For the benefit of the reader we now explain the difference between this proof and the proof of \cite[Proposition 3.49(a)]{PST}, and the intuitive reason for why canonical boundary conditions should give rise to well defined intersection numbers. In \cite{PST} there were no contracted boundaries and all boundary nodes where separating. In this case the definition of canonical boundary conditions can be given without spin structure, only by using parity considerations: For each node precisely one half node is forgotten, and the forgotten half nodes are chosen in the unique way which leaves on each connected component of genus $s$ of the normalized surface a total number of unforgotten special boundary points whose parity is $s+1~(\text{mod}~2).$ This numerical reasoning cannot work when there are non separating nodes. However, as it turns out, this parity notion neatly generalizes to the notion of a graded spin structure, and the forgotten half nodes are precisely the illegal ones. The importance of this scheme is that it forces the boundary conditions to be pulled back from a real codimension $2$ space rather than from a codimension $1$ space (the codimension is with respect to the dimension of the whole moduli space).

This idea cannot work in case of moduli strata which parameterize surfaces with a contracted boundary component. However, for such surfaces, for any contracted boundary component, there are two possible choices of liftings. Moreover, by the 'Moreover' part of Theorem \ref{lem:geometric_orientability} below, the boundary strata of the moduli which correspond to the different choices of liftings come with opposite orientations. Since, in the definition of the base, the lifting in such points is forgotten, the boundary conditions should be the same, for these two boundary strata\footnote{Essentially this discussion says that such codimension $1$ boundaries can be glued, and that the integrals can be calculated with respect to the glued moduli space. In an earlier version of this manuscript we have chosen this path, but we believe that this gluing is less elegant than the equivalent choice of unglued boundaries we make here. The cost of this choice is that there are now additional boundary conditions to impose and to analyze.}.

These two properties are strong enough to guarantee that the integrals are well defined:
The dimension reduction, together with a standard transversality argument, enables one to construct a homotopy between any two choices of canonical boundary conditions $s,s'$, which does not vanish on boundary strata which correspond to surfaces with a boundary node. It may vanish on boundary strata which correspond to surfaces with contracted boundaries, but these vanishings cancel in pairs, which differ in the liftings of these contracted boundaries. This homotopy argument thus shows that $s,s'$ determine the same integral. In the course of proof of Theorem \ref{thm:comb_model} this independence will become manifest.

Based on Theorem \ref{thm:PST} can now define open intersection numbers.
\begin{definition}\label{def:int_nums}
With the notations of Theorem \ref{thm:PST},
define the open intersection number
\[
\langle\tau_{a_1}\ldots\tau_{a_l}\sigma^k\rangle_g :=  2^{-\frac{g+k-1}{2}}\int_{\oCM_{g,k,l}}e(\bigoplus_i\CL_i^{\oplus a_j},s),
\]
where $s$ is a nowhere vanishing special canonical multisection.
\end{definition}
The power of $2$ is a normalization factor chosen in \cite{PST} which makes some initial conditions nicer, but has no geometric or algebraic importance.

Since we define the intersection numbers to be $0$ unless the numerical condition of Theorem \ref{thm:PST} holds, the genus is determined from knowing $k,l,a_1,\ldots, a_l$ and for this reason we will usually omit it from the notation and simple write $\langle\tau_{a_1}\ldots\tau_{a_l}\sigma^k\rangle$ for $\langle\tau_{a_1}\ldots\tau_{a_l}\sigma^k\rangle_g.$

\subsection{The orientation of $\oCM_{g,k,l}$}
As mentioned above, the spaces $\oCM_{g,k,l}$ were proved to be orientable, and moreover were given canonical orientations.
In order to state properties of these orientations that will be required for later, we need the following definition.
\begin{definition}\label{def:induced_or}
Let $M$ be an oriented orbifold with corners. Then $\partial M$ is also orientable. The \emph{induced orientation} on $\partial M,$ is defined by the exact sequence
\[
0\to N\to TM|_{\partial M}\to T\partial M\to 0,
\]
where $N,$ the dimension $1$ normal bundle of $\partial M$ in $M,$ is oriented by taking the outward normal as a positive direction and the orientation on $TM$ is the given one. 
\end{definition}
{For the benefit of the reader, we recall the construction of the induced orientation also in terms of local coordinates. Let $p$ be a boundary point which is not a corner, a local neighborhood of $p$ is diffeomorphic to $(\R_{\geq 0}\times \R^{n-1})/G,$ for some finite group $G$ which acts on $\R^n,$ and under the diffeomeorphism $p$ is mapped to the origin. By the orientability assumption $G$ acts in an orientation preserving manner, and we may assume that the orientation induced on $\R^n$ by the diffeomorphism is the standard one. Since $p$ is a boundary point, $\{0\}\times\R^{n-1}$ is preserved by $G,$ and since $G$ acts on $\R_{\geq 0}\times \R^{n-1},$ by definition $\R_{\geq 0}\times \R^{n-1}$ is preserved. Take an oriented frame $(v_1, v_2,\ldots, v_n)$ for $\R^n,$ which is in the class of the standard orientation, such that $v_1$ has negative first coordinate and the remaining vectors of the frame have the first coordinate $0.$ Then $(v_2,\ldots, v_n)$ is a frame for $\{0\}\times \R^{n-1}.$ For $g\in G,$ $(gv_1,\ldots,gv_n)$ is in the same orientation class as the original frame. Since $\R_{\geq 0}\times \R^{n-1},$ is preserved under $G,$ $gv_1$ has a negative first coordinate. Since $\{0\}\times\R^{n-1}$ is preserved under $G,$ the first coordinate of each $gv_i,~i\geq 2$ is $0,$ we obtain that \[(gv_2,\ldots,gv_n),~(v_2,\ldots,v_n)\]are in the same orientation class. This class is defined as the orientation frame, which defines the local orientation at $p.$
We extend the orientation to the whole boundary by continuity.}

The next theorem, proven in \cite{ST}, describes some useful properties of the canonical orientations of $\oCM_{g,k,l},$ properties that characterize these orientations uniquely.
\begin{thm}\label{lem:geometric_orientability}
There is a unique choice of orientations $\mathfrak{o}_\Gamma,$ for any graded graph $\Gamma$ all of whose connected components contain a single vertex,
which satisfy the following requirements:
\begin{enumerate}
\item
The $0-$ dimensional spaces $\oCM_\Gamma,$ for $\Gamma\in\{\Gamma_{0,1,1},\Gamma_{0,3,0}\},$ are oriented positively.
\item
If $\Gamma=\{\Gamma_1,\ldots,\Gamma_r\},$ the connected components, then $\mathfrak{o}_\Gamma=\boxtimes_{i=1}^r\mathfrak{o}_{\Gamma_i}.$
\item
Let $\Gamma$ be a graded stable graph with a single boundary edge, $e,$ and put $\Lambda = d_e\Gamma$. Denote by $\Gamma'$ the graph obtained by detaching that edge into two tails $t,~t'$ with $\Or(t)=1,~\Or(t')=1,$ and forgetting the tail $t.$
Note that we have a fibration $\CM_\Gamma\to\CM_{\Gamma'},$ whose fiber over the graded surface $\Sigma\in\CM_{\Gamma'},$ is naturally identified with $\partial\Sigma\setminus{\{x_i\}}_{i\in B(\Gamma')}.$
Then the induced orientation on $\CM_{\Gamma}$ as a codimension $1$ boundary of $\oCM_{\Lambda},$
agrees with the orientation on $\CM_{\Gamma}$ induced by the fibration $\CM_\Gamma\to\CM_{\Gamma'},$ where the base is given the orientation $\mathfrak{o}_{\Gamma'},$ and the fiber over $\Sigma$ gets the orientation of $\partial\Sigma.$
\end{enumerate}
Moreover, these orientations have the following additional property.
For $\Gamma$ as above, let $C$ be a connected component of $\oCM_\Gamma$ which parameterizes surfaces with at least one boundary component which contains no boundary markings. Let $C'$ be another connected component which parameterizes surfaces which differ from those of $C$ only at the grading in that boundary component, which is opposite. There is a natural map $\FFF:C\to C'$ which maps a stable graded marked surface to the same surface only with the opposite grading at this boundary component.
Let $C_\times,~C'_\times$ be the boundary strata of $C,~C'$ respectively which parameterize surfaces in which this boundary component is contracted,
and let $\mathfrak{o}_{C_\times},~\mathfrak{o}_{C'_\times}$ be the orientations induced on these subspaces by $C,~C'$ respectively.
Then $\FFF$ maps $C_\times$ bijectively on $C'_\times,$ \[\mathfrak{o}_{C'_\times}=-\FFF_*\mathfrak{o}_{C_\times}.\]
\end{thm}
The difficulty in this theorem lies in the existence and the 'Moreover' parts, which will be proven by other means below. Given the existence, the uniqueness follows easily using induction on dimension. In \cite{ST} also the behavior of the orientations with respect to strata with internal nodes is explained, but it is not required for our needs.

\section{Sphere bundles and relative Euler class}\label{sec:sphere}
Given a rank $n$ complex vector bundle $\pi:E\to M,$ and a metric on it, one can define the sphere bundle $\pi:S=S(E)=S^{2n-1}(E)\to M$ whose fiber $S_p$ at $p\in M$ is the set of length $1$ vectors in $E_p,$ the fiber of $E$ at $p,$ with the induced orientation.
Given a sphere bundle $S\to M,$ its \emph{linearization} is the space
\[
S\times\R_{\geq 0}/\sim,
\]
where $(v,r)\sim(v',r')$ if either $r=r'=0,$ or $v=v',r=r'.$ This space can be endowed with a natural linear structure, a metric and a projection to $M.$ When $S=S(E)$ the linearization of $S$ recovers $E.$
The sphere bundle of $E$ can be defined also without referring to a metric, by removing the $0-$section and taking the quotient by the $\R_+$ action. Different metrics give rise to isomorphic sphere bundles

\begin{definition}\label{def:angular}
An \emph{angular form} for $E$ (or for $S$) is a $(2n-1)-$form $\Phi$ on $S$ which satisfies the following two requirements:
\begin{enumerate}
\item
$\int_{S_p}\Phi = 1,$ for all $p\in M.$
\item
$d\Phi = -\pi^*\Omega$ where $\Omega$ is some $2n-$form on $M.$
\end{enumerate}
Then, the form $\Omega$ is a local representative of the top Chern form of $E\to M,$ and will be called the \emph{Euler form} which corresponds to $\Phi.$
Denote by $\Phi$ also the form on $E\setminus M,$ where we identify $E$ and its total space, defined by $P^*\Phi,$ where $P:E\setminus M\to S(E),$ is the map
\[
(p,v)\to (p,v/|v|),~p\in M,v\in E\setminus M.
\]
\end{definition}
It is straightforward that
\begin{obs}\label{obs:extending_angular}
The form $|v|\Phi$ extends to a form on all the total space of $E.$
\end{obs}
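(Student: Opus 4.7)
The approach would be a direct computation in local coordinates near the zero section. Choose a local trivialization $E|_U \cong U \times \R^{2n}$ with Cartesian fiber coordinates $v = (v_1, \ldots, v_{2n})$, and introduce the polar decomposition $r = |v|$, $u = v/r \in S^{2n-1}$. In these coordinates the map $P : E \setminus M \to S$ is simply $(x, v) \mapsto (x, u)$, so the pullback $P^*\Phi$ depends on the fiber variables only through the angular coordinates $u$ and contains no $dr$.

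The main step is to convert $P^*\Phi$ back to Cartesian fiber coordinates using $u_i = v_i/r$ and $du_i = dv_i/r - v_i(\sum_j v_j\,dv_j)/r^3$, and to read off the pole order along $\{v = 0\}$. A standard calculation shows that the coefficients of $P^*\Phi$ develop poles of order exactly $r^{-(2n-1)}$; the model case is the pullback of the volume form on the sphere, $P^* dV_{S^{2n-1}} = r^{-2n}\sum_i (-1)^{i-1} v_i\, dv_1\wedge\cdots\widehat{dv_i}\cdots\wedge dv_{2n}$. Multiplication by $|v| = r$ then reduces the pole order to $r^{-(2n-2)}$.

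For the line bundle case $n = 1$, which is the relevant case for the tautological bundles $\CL_i$ central to the paper, this means $|v|\Phi$ has bounded coefficients and so extends as a bounded form on the total space of $E$; more generally the extension is as a locally integrable form, since $r^{-(2n-2)}$ is integrable against the fiber volume element $r^{2n-1}\,dr$. The only nontrivial step is the singularity analysis of $P^*\Phi$, which amounts to a careful but mechanical pullback computation, so the observation is indeed ``straightforward'' as stated. The main obstacle, if one wishes to be completely precise, is to fix the exact regularity class in which the extension is meant (bounded, $L^\infty$, continuous, or integrable), and to verify that the class chosen is consistent with the use of Stokes' theorem and integration by parts in the later sections.
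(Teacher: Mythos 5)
Your local-coordinate computation is correct, and since the paper offers no argument for this observation beyond calling it ``straight forward,'' your analysis is a genuine contribution rather than a comparison. The pole-order count is right: writing $\Phi$ locally as a smooth combination of the $du_i$'s and base forms, each $du_i = dv_i/r - v_i(\sum_j v_j\,dv_j)/r^3$ carries one factor of $r^{-1}$, so a $(2n-1)$-form purely in the fiber directions has coefficients of order $r^{-(2n-1)}$, and multiplying by $|v| = r$ brings this down to $r^{-(2n-2)}$. You are also right to focus on $n=1$: the only place the observation is invoked is to extend each $r_i\alpha_i$, where $\alpha_i$ is the angular form of the $S^1$-bundle of a single line bundle $L_i$.

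Your closing caveat about the regularity class is the substantive point, and it deserves to be sharpened: even for $n=1$, the form $r\alpha$ is bounded but \emph{not} continuous at the zero section. In the flat model $\alpha = d\theta/2\pi$ one has $r\alpha = (x\,dy - y\,dx)/(2\pi r)$, whose coefficients $x/r$, $y/r$ have no limit as $v\to 0$. So ``extends to a form'' must be read as ``extends with $L^\infty$ coefficients'' (for $n=1$) or ``with $L^1_{\mathrm{loc}}$ coefficients'' (for general $n$), which is what the integration-by-parts arguments later in the paper actually require. One thing worth adding to close the loop: the terms that appear in the explicit angular form of Theorem~\ref{thm:angular} are $r_i^2\alpha_i$ and $r_j\,dr_j\wedge\alpha_j$, and these extend not merely boundedly but \emph{smoothly} across the locus $r_i=0$ (in the flat model, $r^2\alpha = (x\,dy - y\,dx)/2\pi$ and $r\,dr\wedge\alpha = dx\wedge dy/2\pi$). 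So Observation~\ref{obs:extending_angular} as recorded is the weakest statement sufficient to fix notation, and the subsequent computations in fact rely on the stronger smooth extensions of these particular combinations, which your pole-order argument establishes with one extra power of $r$.
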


We will use the following claim.
\begin{pr}\label{prop:basic prop}
Let $E\to M$ be a real oriented rank $2n$ vector bundle on a smooth oriented manifold with boundary $M$ of real dimension $2n$. Let $\Phi$ be an angular form, and $\Omega$ its corresponding Euler form. Given a nowhere vanishing section $s\in\Gamma(E\to\partial M),$ one can define the integral of the relative Euler class, 
and it holds that
\[
\int_Me(E,s) = \int_M \Omega +\int_{\partial M}s^*\Phi.
\]
Moreover, the statement also holds if $E\to M$ is an orbifold vector bundle over an orbifold with corners, and $s$ is a nowhere vanishing multisection over the boundary.
\end{pr}
This claim is well known, in the case of manifolds, and the extension to orbifolds is straight forward.
We briefly recall the proof of the claim for manifolds, referring the reader to \cite[Ch. 11]{BottTu} for further details, we then explain the changes required for handling the orbifold case. As usual we are interested in the integral of the relative Euler class, rather than the class itself.

We wish to calculate $\int_Me(E,s),$ the weighted number of zeroes of an extension of $s$ to $M$ to a section with isolated zeroes.
Let $\bar{s}$ be such an extension, and let $p_1,\ldots, p_m$ be its zeroes. By choosing diffeomorphisms from neighborhood of $p_1,\ldots, p_m$ to open sets in $\R^n,$ we can define, for small enough $r,$ $M_r=M\setminus \bigcup_{i=1}^m B_r(p_i),$ where $B_r(p)$ is the ball around $p,$ and sections $s_r$ which are the restrictions of $\bar{s}$ to $\partial M_r.$ By taking $r$ to be even smaller we may assume that the balls are disjoint.
By Stokes's theorem, $\bar{s}$ being a global section over $M_r,$ and the definition of the angular form,
\begin{align*}\int_M\Omega&=\lim_{r\to 0}\int_{{M_r}}\Omega=\lim_{r\to 0}\int_{M_r}\bar{s}^*\pi^*\Omega\\
&=-\lim_{r\to 0}\int_{M_r}\bar{s}^*d\Phi=-\lim_{r\to 0}\int_{\partial M_r}s_r^*\Phi=
-\int_{\partial M}s^*\Phi+\sum_{i=1}^m\lim_{r\to 0}\int_{\partial B_r(p_i)}s_r^*\Phi.\end{align*}
For each $i=1,\ldots, m,$ and small enough $r,$ $\int_{\partial B_r(p_i)}s_r^*\Phi$ is the order of vanishing of $\bar{s}$ at $p_i$ (see \cite[Theorem 11.16]{BottTu}). Thus, the RHS of the previous equation equals $\int_M e(E,s)-\int_{\partial M}s^*\Phi,$ as needed.

The argument works also in the orbifold case. One first shows that Stokes's theorem generalizes to the case of orbifolds with corners and multisections of the vector bundle $\Lambda^\bullet(T^*M)$ instead of sections of this bundle (differential forms). For differential forms over orbifolds with corners this is shown for example in \cite{SaraJake}. The extension to multisections is proven similarly. Then, the integral around $p_i$ becomes, in the local model and notations of Definition \ref{def:multisec},
\[\sum_{i=1}^N\mu_i\int_{\partial B(0)}\bar{s}_i^*\Phi,\] $B\subset U$ a small ball around $0,$ and $\bar{s}_i,~\mu_i$ the local branches and weights.
But this is precisely the weight \eqref{eq:eps_p} in the definition of $\int_Me(E,s),$ so again the result follows. 

Suppose now that $E = \bigoplus_{i=1}^n L_i,$ is the sum of $n$ complex line bundles $L_i.$ Choose a metric for $E$ for which the line bundles $L_i$ are pairwise orthogonal.
Write $\alpha_i$ for an angular form for $S_i=S(L_i),$ and $\omega_i$ for the corresponding Euler form, i.e. the curvature of $L_i.$
Define the functions
$$r_i:E\to \R,$$ to be the length of the projection of $(p,v)\in E$ to $L_i.$ The sphere bundle can be described as the set of vectors which satisfy $\sum r_i^2 = 1.$
For convenience, denote by $\omega_i, r_i\alpha_i$ the pull-backs of $\omega_i,r_i\alpha_i$ to the total space of $E$ and of $S(E),$ where for the latter form we use Observation \ref{obs:extending_angular}
%

As far as we know, the following theorem has not appeared in the literature before.
\begin{thm}\label{thm:angular}
The following form,
\begin{equation}\label{eq:angular}
\Phi = \sum_{k=0}^{n-1} 2^k k!\sum_{i\in[n]}r_i^2\alpha_i\wedge
\sum_{\substack{I\subseteq[n]\setminus\{i\}\\ |I|=k}}\bigwedge_{j\in I}(r_jdr_j\wedge\alpha_j)\bigwedge_{h\notin I\cup\{i\}}\omega_h.
\end{equation}
is an angular form for $E$ whose corresponding Euler form is $\bigwedge_{i=1}^n\omega_i.$
\end{thm}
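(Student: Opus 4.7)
My plan is to verify the two defining properties of an angular form (Definition~\ref{def:angular}) in turn. The form $\Phi_L$ is globally smooth on $S(E)$: each summand involves only the combinations $r_j^2\alpha_j$ and $r_j dr_j\wedge\alpha_j = \tfrac12\big[d(r_j^2\alpha_j)+r_j^2\omega_j\big]$, both of which extend smoothly to all of $E$ by Observation~\ref{obs:extending_angular}.

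For the fibre-integration property I would restrict $\Phi_L$ to a fibre $S_p\simeq S^{2n-1}\subset\C^n$. Since every $\omega_h$ is pulled back from $M$, only the top term survives:
\[
\Phi_L|_{S_p} \;=\; 2^{n-1}(n-1)!\sum_{i=1}^n r_i^2\alpha_i\bigwedge_{j\neq i}(r_j dr_j\wedge\alpha_j).
\]
With $u_j=r_j^2$ and $\alpha_j=d\theta_j/(2\pi)$, each summand separates into the simplex integral $\int_{\Delta^{n-1}}(1-\sum_{j\neq i}u_j)\prod_{j\neq i}\tfrac12 du_j=1/(2^{n-1}n!)$ and a torus integral producing $(2\pi)^n$, so each contributes $1/(2^{n-1}n!)$; summing over the $n$ choices of $i$ yields $1/(2^{n-1}(n-1)!)$, cancelling the prefactor.

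For the curvature identity $d\Phi_L=-\pi^*(\bigwedge_i\omega_i)$ I would compute on $E$ minus the zero section and then restrict to $S(E)$, where the relations $R^2:=\sum r_i^2=1$ and $\eta:=\sum r_i dr_i=0$ hold. From the identities $d(r_i^2\alpha_i)=2r_i dr_i\alpha_i-r_i^2\omega_i$, $d(r_j dr_j\wedge\alpha_j)=r_j dr_j\wedge\omega_j$ and $d\omega_h=0$ I would split $d\Phi_L=X+Y+Z$ according to which factor is differentiated. Writing $B_p := \sum_{|I|=p}\prod_{j\in I}(r_j dr_j\alpha_j)\prod_{h\notin I}\omega_h$, a telescoping computation gives
\[
X-R^2\sum_{k=0}^{n-1}2^k k!\,B_k \;=\; 2^n n!\,B_n - R^2 B_0 + (1-R^2)\sum_{k=1}^{n-1}2^k k!\,B_k,
\]
which on $S(E)$ collapses to $-B_0=-\bigwedge_i\omega_i$, since $R^2=1$ and $B_n=0$ there (the $n$ one-forms $r_i dr_i$ satisfy the one linear relation $\eta=0$, so their $n$-fold wedge vanishes).

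The main obstacle is to show that the remaining residual terms $Y_{\text{other}}+Z$ cancel on $S(E)$. I would re-index the sum in $Z$ so its running index set matches that of $Y_{\text{other}}$, and for each $(I,i)$ with $i\in I$ combine $Y_{\text{other}}(I,i)$ with all terms $Z(I,i,j_0)$, $j_0\notin I$; the combination telescopes to
\[
-2^{|I|}|I|!\,r_i^2\alpha_i\wedge\Bigl[r_i dr_i+\sum_{j_0\notin I}r_{j_0}dr_{j_0}\Bigr]\wedge\prod_{j\in I\setminus\{i\}}(r_j dr_j\wedge\alpha_j)\prod_{h\notin I}\omega_h.
\]
Modulo $\eta=0$ the bracketed one-form equals $-\sum_{j\in I\setminus\{i\}}r_j dr_j$, and wedging any $r_{j^*}dr_{j^*}$ with $j^*\in I\setminus\{i\}$ into $\prod_{j\in I\setminus\{i\}}(r_j dr_j\wedge\alpha_j)$ produces a repeated $dr_{j^*}\wedge dr_{j^*}=0$; the boundary case $|I|=n$ has no admissible $j_0$ and is dispatched again by $B_n=0$. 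Summing over $(I,i)$ completes the cancellation and hence the proof.
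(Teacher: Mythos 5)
Your proof is correct and follows essentially the same strategy as the paper: restrict to the fibre to check normalization, then compute $d\Phi_L$ by the Leibniz rule and cancel everything but $-\bigwedge_i\omega_i$ using the two constraints $\sum r_i^2=1$ and $\sum r_i dr_i=0$ on $S(E)$. Your bookkeeping (collecting $X+Y_{\text{part}}$ into powers $B_k$ and then showing the residual $Y_{\text{other}}+Z$ vanishes by the repeated-$dr_{j^*}$ argument) is a cleaner reorganization of the paper's telescoping sum of $d_1,d_2,d_{3,l}$, and you carry out the fibre integral explicitly where the paper merely asserts it, but the underlying idea is identical; one small nit: in the final step both $Y_{\text{other}}$ and the reindexed $Z$ only range over $|I|\le n-1$, so the "boundary case $|I|=n$" you mention never actually arises there (the $B_n=0$ fact is used only in the $X+Y_{\text{part}}$ half).
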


\begin{proof}
We first need to show that the integration on a fiber gives $1.$ Since $\omega_i$ are pulled back from the base, for all $i,$ the only term in $\Phi$ that may have a non zero integral over a fiber is the term
$$ \Phi^{\text{top}}=2^{n-1} (n-1)!\sum_{i\in[n]}r_i^2\alpha_i
\bigwedge_{j\neq i}(r_jdr_j\wedge\alpha_j).$$
We wish to show that $\int_{S(E_p)}\Phi^{\text{top}}=1,$ for an arbitrary $p\in M.$ We first integrate all the $\alpha_i$ terms. By using that $\alpha_i$ is an angular form for $L_i$ the integral of $\alpha_i$ is $1,$
and we are left with calculating
\[\int_{\sum r_i^2=1}2^{n-1} (n-1)!\sum_{i\in[n]}r_i^2
\bigwedge_{j\neq i}r_jdr_j.\]
By changing the variables to $t_i=r_i^2,~dt_i=2r_idr_i$ the integral becomes
\[(n-1)!\int_{\substack{\sum t_i=1,\\t_1,\ldots,t_n\geq 0}}\sum_{i=1}^nt_i\bigwedge_{j\neq i}dt_j=
n!\int_{\substack{\sum_{i=1}^{n-1} t_i\leq1,\\t_1,\ldots,t_{n-1}\geq 0}}(1-\sum_{i=1}^{n-1}t_i)\bigwedge_{1\leq j\leq n-1}dt_j=
n!\int_{\substack{\sum_{i=1}^{n} t_i\leq1,\\t_1,\ldots,t_{n}\geq 0}}\bigwedge_{1\leq j\leq n}dt_j,\]
where in the first equality we have used the symmetric role of the variables $t_i$ and then eliminated $t_n,$ and in the second equality we have used that \[1-\sum_{i\leq n-1}t_i=\int_{0\leq s\leq 1-\sum_{i\leq n-1}t_i}ds.\]
The left hand side is just $n!$ times the Euclidean volume of the $n-$simplex \[\{t_1+\ldots+t_n\leq 1|t_1,\ldots,t_n\geq 0\}.\] It is well known that this volume is $\frac{1}{n!},$ and the first property of the angular form follows.

For the second property, we will now show that when calculating $d\Phi,$ one gets a telescopic sum which turns out to be equal $\bigwedge\omega_i.$ Write,
$$S_{I,i} :=  2^k k! r_i^2\alpha_i
\bigwedge_{j\in I}(r_jdr_j\wedge\alpha_j)\bigwedge_{h\notin I\cup\{i\}}\omega_h,$$
the contribution for given $I,i\notin I,$ where $k=|I|.$ Taking the derivative, as $\omega_i$ and $r_i dr_i$ are closed, only $r_i^2$ or $\alpha_j$ may contribute.
We obtain
$$dS_{I,i}=d_1S_{I,i}+d_2S_{I,i}+\sum_{l\in I}d_{3,l}S_{I,i},$$
where
$$d_1 S_{I,i}:=  2^{k+1} k! r_i dr_i\wedge\alpha_i
\bigwedge_{j\in I}(r_j dr_j\wedge\alpha_j)\bigwedge_{h\notin I\cup\{i\}}\omega_h,$$
$$d_2 S_{I,i}:= - 2^{k} k! r_i^2\omega_i
\bigwedge_{j\in I}(r_j dr_j\wedge\alpha_j)\bigwedge_{h\notin I\cup\{i\}}\omega_h,$$
$$d_{3,l} S_{I,i}:= - 2^{k} k! r_i^2\alpha_i \wedge r_l dr_l\wedge \omega_l
\bigwedge_{j\in I\setminus{\{l\}}}(r_j dr_j\wedge\alpha_j)\bigwedge_{h\notin I\cup\{i\}}\omega_h,$$
for $l\in I.$ The third contribution appears only when $I\neq\emptyset.$

Now, fixing $I,$ one has
\begin{equation}\label{eq:d1}
\sum_{i\in I} d_1 S_{I\setminus{\{i\}},i}=  k 2^{k} (k-1)!
\bigwedge_{j\in I}(r_j dr_j\wedge\alpha_j)\bigwedge_{h\notin I}\omega_h,
\end{equation}

\begin{align}
\label{eq:d2}
\notag\sum_{i\notin I} d_2 S_{I,i} &= -\sum_{i\notin I} 2^{k} k! r_i^2
\bigwedge_{j\in I}(r_j dr_j\wedge\alpha_j)\bigwedge_{h\notin I}\omega_h  \\&=-(1-\sum_{i\in I}r_i^2) 2^{k} k!
\bigwedge_{j\in I}(r_j dr_j\wedge\alpha_j)\bigwedge_{h\notin I}\omega_h\\\notag
&=-2^{k} k! \left(\bigwedge_{j\in I}(r_j dr_j\wedge\alpha_j)\bigwedge_{h\notin I}\omega_h\right. \\ \nonumber
    &\qquad\qquad\quad -\left.\sum_{i\in I} r_i^3 dr_i\wedge\alpha_i\bigwedge_{j\in I\setminus{\{i\}}}(r_j dr_j\wedge\alpha_j)\bigwedge_{h\notin I}\omega_h\right),
\end{align}
where we have used $\sum r_i^2=1$ in the second equality.
And, fixing $I,i\notin I,$
\begin{align}\label{eq:d3}
\notag\sum_{l\notin I\cup{\{i\}}}d_{3,l} S_{I\cup\{l\},i} &= -\sum_{l\notin I\cup{\{i\}}} 2^{k+1} (k+1)! r_i^2\alpha_i \wedge r_l dr_l\wedge \omega_l
\bigwedge_{j\in I}(r_j dr_j\wedge\alpha_j)\bigwedge_{h\notin I\cup\{i,l\}}\omega_h
\\&=\notag- 2^{k+1} (k+1)!\sum_{l\in I\cup\{i\}} r_l dr_l\wedge r_i^2\alpha_i
\bigwedge_{j\in I}(r_j dr_j\wedge\alpha_j)\bigwedge_{h\notin I\cup\{i\}}\omega_h
\\&=-2^{k+1} (k+1)!r_i^3 dr_i\wedge\alpha_i
\bigwedge_{j\in I}(r_j dr_j\wedge\alpha_j)\bigwedge_{h\notin I\cup\{i\}}\omega_h,
\end{align}
where the identity $\sum r_i dr_i=0$ was used for the second equality. The last passage follows from noting that except for the $l=i$ term, for all other $l\in I$ we will get a monomial with two $dr_l$ terms.

Summing equations \eqref{eq:d1},\eqref{eq:d2},\eqref{eq:d3}, over all possibilities for $I,$ and in \eqref{eq:d3} also for $i\notin I,$ we see that:
\begin{itemize}
\item \eqref{eq:d1} vanishes if $I=\emptyset.$ For $I\neq\emptyset$ the contribution of \eqref{eq:d1} cancels with the first term in the right hand side of \eqref{eq:d2} for the same $I.$
\item For a given $J\neq \emptyset,$ the sum of \eqref{eq:d3} over all pairs $(I,i)$ with $i\in J,~I=I\setminus\{i\}$ cancels with the second term of \eqref{eq:d2} with $I=J.$
\item For $I=\emptyset,$ the second term of \eqref{eq:d2} vanishes.
\end{itemize}
Thus, the only term which is left uncancelled is $\bigwedge\omega_i,$ coming from the first term of \eqref{eq:d2} with $I=\emptyset.$
Hence,
$$d\Phi=\sum_{I,i}dS_{I,i}= -\bigwedge\omega_i.$$
As needed.
\end{proof}
\begin{rmk}\label{rmk:phi_as_poly}
In what follows we will sometimes use forms on $S(E)$ which are defined similarly to $\Phi,$ but depend a subset of its arguments. For this reason it will be useful to extend $\Phi$ and similar expressions to multilinear functions in the variables $r_i,dr_i,\alpha_i,\omega_i,~i=1,\ldots,n,$ without imposing $\sum r_i^2=1,~\sum r_idr_i=0.$

Without putting these constraints the right hand side of \eqref{eq:d2} gets a correction of
\[2^k k!\left(1-\sum_{h\in [n]}r_h^2\right)
\left(\bigwedge_{j\in I}r_jdr_j\wedge{\alpha}_{j}\right)\wedge\bigwedge_{h\notin I}{\omega}_{j},\]
while the right hand side of \eqref{eq:d3} gets a correction of
\[2^{k+1}(k+1)!\left(\sum_{l\in [n]}r_l dr_l\right)\wedge
r_i^2{\alpha}_{i}\wedge
\left(\bigwedge_{j\in I} r_j dr_j \wedge{\alpha}_{j}\right)\wedge\bigwedge_{h\in [n]\setminus( I\cup\{i\})}{\omega}_h
\]
Summing the first correction over all $I,$ and adding the sum of the second correction over all $I,i\notin I,$ we obtain
\begin{align*}
&Z=\left(1-\sum_{h\in [n]}r_h^2\right) \\
&\qquad\wedge\sum_{m\geq 0} 2^m m!\sum_{|I|=m,I\subseteq [n]}
\left(\bigwedge_{j\in I}r_jdr_j\wedge{\alpha}_{j}\right)\wedge\bigwedge_{j\in [n]\setminus I}{\omega}_{j}\\
&\qquad+\left(\sum_{h\in [n]}r_h dr_h\right)\wedge
\sum_{i\in [n]\setminus\{h\}}r_i^2{\alpha}_{i}\wedge
\sum_{m\geq 0}
2^{(m+1)}(m+1)!\\
&\qquad\qquad\cdot\sum_{|I|=m, I\subseteq [n]\setminus\{i,h\}}\wedge
\left(\bigwedge_{j\in I} r_j dr_j \wedge{\alpha}_{j}\right)\wedge\bigwedge_{j\in [n]\setminus( I\cup\{i\})}{\omega}_j
\end{align*}
Therefore, without imposing $\sum r_i^2=1,~\sum r_idr_i=0$ we have
\[d\Phi= Z-\bigwedge_{i=1}^n\omega_i.\] Clearly $Z$ vanishes if we do make these assumptions.
\end{rmk}

\begin{construction}\label{nn:spherization}
Suppose $S_1,\ldots,S_l\to M$ are piecewise smooth $S^1$ bundles over a piecewise smooth orbifold with corners.
Denote by $S(S_1,\ldots,S_l)\to M$ the $2l-1-$sphere bundle on $M,$ whose fibers are \[S(S_1,\ldots,S_l)_x
 =\{(r_1,P_1,r_2,P_2,\ldots,r_l,P_l)|P_i\in (S_i)_x,~r_i\geq 0,~\sum r_i^2=1\}/\sim,
\]
where $\sim$ is the equivalence relation generated by \[(r_1,P_1,\ldots,0,P_i,\ldots,r_l,P_l)\sim(r_1,P_1,\ldots,0,P'_i,\ldots,r_l,P_l),\]
and with the natural topology.
\end{construction}

\section{Symmetric Jenkins-Strebel stratification}\label{sec:strebel}
In the remainder of the article all open spin surfaces we will encounter, twisted or not, will have a lifting. Similarly, we will encounter several types of graphs, the dual graphs we have defined above, ribbon graphs and nodal graphs. These graphs will also be classified as open or closed and will sometimes carry spin structures, twisted or not. All the open spin graphs we shall meet will have a lifting. For this reason we will sometimes slightly abuse notations and omit the suffix 'with a lifting' from the terminology. We will also usually omit the addition 'twisted'. It will be clear from the context if we mean a closed or open object, twisted or not etc.
\subsection{JS stratification for the closed moduli}
\subsubsection{JS differential and the induced graph}
In this subsection we briefly describe the stratification of moduli of closed stable curves following \cite{Kont,Zvonk,Looij}.

%
Let $\Sigma$ be a nodal Riemann surface with $2g-2+n\geq0.$ A meromorphic section $\gamma$ of the tensor square of the cotangent bundle defined on each component of the normalization of $\Sigma$ can be written in a local coordinate $z$ as $f(z)dz^2.$ If $\gamma$ has a double pole at $w\in\Sigma,$ the \emph{residue} of $\gamma$ at $w$ is the coefficient of $\frac{dz^2}{(z-w)^2},$ in the expansion of $\gamma$ around $w.$ The residue is independent of the choice of the local coordinate.
A \emph{quadratic differential} $\gamma$ is
such a section which has at most double poles, all the poles are located either at the marked points or at the nodes, and for any node, the residues of $\gamma$ at its two branches are the same.

Let $\gamma$ be a quadratic differential, and $w\in\Sigma$ a point which is neither a zero nor a pole. In a neighborhood $U$ we can take its unique, up to sign, square root $\sqrt{\gamma}.$ This is a $1-$form, hence can be integrated along a path. This defines a map
\[
g:U\to\C,~~g(z)=\int_w^z\sqrt\gamma,
\]
where the integral is taken along any path in $U.$

A \emph{horizontal trajectory} is the preimage of $\R\subset\C,$ and it is a smooth path containing $w$ in its interior.
It turns out that the notion of horizontal trajectories can be defined also in the case where $w$ is a zero of order $d\geq -1,$ where as usual a zero of order $-m$ is a pole of order $m.$ In this case there are exactly $d+2$ horizontal rays leaving $w.$
When $w$ is a pole of order $2,$
if its residue is $-\left(\frac{p}{2\pi}\right)^2,$ for some $p\in \R_+,$ there is a family of nonintersecting horizontal trajectories surrounding it, whose union is a topological open disk, punctured at $w.$
Moreover, with respect to the metric defined by $|\sqrt{\gamma}|,$ the perimeter of each of these trajectories is $p.$
\begin{ex}\label{ex:2_pts}
Let $\Sigma$ be the Riemann sphere. For all $p>0,$
\[\gamma_p = -\left(\frac{p}{2\pi}\right)^2\left(\frac{dz}{z}\right)^2,\]
is a quadratic differential, whose only poles are in $0,\infty$ and whose horizontal lines are the sets $|z|=r,$ for $r>0,$ whose lengths are indeed $p.$
Their union is an open punctured disk.
It should be noted that actually this is the only quadratic differential on the sphere, up to scaling, which invariant under the reflection in the equator whose only poles are at $0$ and $\infty.$
\end{ex}

\begin{definition}\label{def:stable_JS}
Let $(\Sigma,z_1,\ldots,z_n,z_{n+1},\ldots,z_{n+n_0})$ be a marked genus $g$ nodal Riemann surface with $2g-2+n\geq0,$ where the subscript of $z_i$ indicates its marking. 
Let $p_1,\ldots,p_n$ be positive reals, and $p_i=0,$ for $i>n.$
A \emph{marked component} is a smooth component of the curve with at least one marked point $z_i,i\in[n].$ The other components are called \emph{unmarked}.
A \emph{Jenkins-Strebel differential}, or a \emph{JS-}differential for shortness, is a quadratic differential $\gamma$ such that
\begin{enumerate}
\item
$\gamma$ is holomorphic outside of special points. In nodes it has at most simple poles and in the $i^{th}$ marked point it has a double pole with residue $-(p_i/2\pi)^2.$ In particular, if $p_i=0$ there is at most a simple pole at that point.
\item
$\gamma$ vanishes identically on unmarked components.
\item
Let $\Sigma'$ be any marked component of $\Sigma.$ When $p_i\neq 0,$ if $D_i$ is the punctured disk which is the union of horizontal trajectories surrounding $z_i\in\Sigma',$ then
\[
\bigcup_{i\in[n]} \overline{D}_i = \Sigma'.
\]
\end{enumerate}
\end{definition}
The following theorem was proved in \cite{Streb} for the smooth case, the nodal case was treated in \cite{Looij,Zvonk}.
\begin{thm}\label{thm:JS}
Given a stable marked surface $(\Sigma,z_1,\ldots,z_{n+n_0}),$ with $n>0$ and $\pp=(p_1,\ldots,p_{n+n_0})\in\R_+^n\times(0,\ldots,0)$ as above, JS differential exists and is unique.
\end{thm}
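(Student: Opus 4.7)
The plan is to reduce to Strebel's classical theorem on a single smooth compact Riemann surface and handle the global nodal structure componentwise. On any unmarked smooth component $\Sigma'$ the condition (b) of Definition~\ref{def:stable_JS} forces $\gamma|_{\Sigma'}\equiv 0$, so there is nothing to construct there. Each marked smooth component $\Sigma_\alpha$ carries finitely many marked points $\{z_{i_j}\}$ of $(\Sigma,\mathbf{z})$ with $p_{i_j}>0$ and finitely many nodal points $\{w_l\}$. I would construct $\gamma$ separately on each $\Sigma_\alpha$, requiring a double pole of residue $-(p_{i_j}/2\pi)^2$ at each $z_{i_j}$ and allowing an a priori unspecified simple pole at each $w_l$; the additional constraint (c) that the closed horizontal trajectories fill out $\Sigma_\alpha$ then forces a unique such differential. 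This reduces the problem to the smooth case.

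For the smooth case on $\Sigma_\alpha$, I would invoke Strebel's extremal length/variational argument. Consider the class $\mathcal F$ of Borel measurable conformal densities $\rho|dz|$ on $\Sigma_\alpha\setminus\{z_{i_j}\}$ such that every loop $\gamma_{j}$ freely homotopic to a small circle around $z_{i_j}$ has $\int_{\gamma_j}\rho\,|dz|\ge p_{i_j}$. Minimize the area functional $A(\rho)=\int_{\Sigma_\alpha}\rho^2\,\frac{i}{2}dz\wedge d\bar z$ over $\mathcal F$. Standard lower-semicontinuity and boundedness arguments produce a minimizer $\rho_*$; Teichmüller-style variational analysis shows that $\rho_*=|\sqrt{\gamma_\alpha}|$ for a meromorphic quadratic differential $\gamma_\alpha$ holomorphic away from the $z_{i_j}$ and the $w_l$, with the prescribed double poles, and with the property that almost every horizontal trajectory is closed around some $z_{i_j}$; Example~\ref{ex:2_pts} provides the local model, and the global foliation structure then forces condition (c) of Definition~\ref{def:stable_JS}. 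Simple poles at the nodes $w_l$ are tolerated by the extremal problem because they do not affect any of the constraint lengths (a simple pole contributes only a logarithmic singularity to the associated metric).

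For uniqueness I would argue directly. Suppose $\gamma,\gamma'$ are two JS differentials with the same data. On each unmarked component both vanish. On a marked component $\Sigma_\alpha$, both induce conformal metrics in $\mathcal F$ with the same area (namely the total perimeter sum $\sum_j p_{i_j}\cdot(\text{height of the cylinder around }z_{i_j})$ under a height-weight computation), so both are minimizers of $A$. Strict convexity of $\rho\mapsto A(\rho)$ on the convex cone of admissible densities implies $|\sqrt{\gamma}|=|\sqrt{\gamma'}|$ pointwise, whence $\gamma/\gamma'$ is a holomorphic function on $\Sigma_\alpha$ of constant modulus one; matching the residues at the $z_{i_j}$ fixes this constant to be $1$, so $\gamma=\gamma'$.

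The main obstacle, and the genuinely analytic core, is the smooth Strebel theorem itself: producing the minimizer and proving that it arises from a holomorphic quadratic differential with exactly the prescribed double-pole structure. Everything nodal is a matter of bookkeeping (component-by-component reduction and allowing simple poles at the nodes), and the uniqueness reduces to strict convexity together with normalization of residues; by contrast, verifying that the extremal metric is genuinely $|\sqrt{\gamma_\alpha}|$ for a meromorphic $\gamma_\alpha$ requires the full force of Strebel's analysis of horizontal foliations, carried out in \cite{Streb} and extended to the nodal setting in \cite{Looij,Zvonk}.
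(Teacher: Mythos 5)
The paper does not supply a proof of Theorem~\ref{thm:JS}: it records that the smooth case is Strebel's theorem \cite{Streb} and that the nodal extension appears in \cite{Looij,Zvonk}, and then uses the statement as a black box. Your sketch is a faithful outline of what those references do --- reduce to one marked smooth component at a time, set up a weighted extremal-length problem with length constraints around the marked points with $p_i>0$, extract a minimizing admissible density, prove the minimizer is the modulus of a meromorphic quadratic differential with the prescribed double poles and at worst simple poles at nodes and zero-perimeter markings, and handle the nodal surface componentwise since Definition~\ref{def:stable_JS} imposes no gluing condition across nodes beyond allowing the simple poles.

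The one place where your argument as written has a gap is the uniqueness step. You assert that the metrics $|\sqrt{\gamma}|$ and $|\sqrt{\gamma'}|$ coming from two JS differentials with the same data have the same area, citing $\sum_j p_{i_j}\cdot(\text{height of the cylinder})$, and conclude that both are minimizers. But the cylinder heights are determined by each differential individually, not by the input $(\Sigma,\mathbf z,\pp)$, so the equal-area claim is exactly what needs to be proven, not an a priori fact. The correct route, which is Strebel's, is to prove that the metric attached to \emph{any} JS differential realizes the infimum of the area functional $A$ over the admissible class $\mathcal{F}$; this is established via the length--area (Gr\"otzsch) inequality applied cylinder by cylinder, and is the nontrivial input. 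Once both differentials are known to be minimizers, strict convexity of $A$ gives $|\sqrt{\gamma}|=|\sqrt{\gamma'}|$ almost everywhere, and your closing argument --- a holomorphic ratio of constant modulus, normalized by the residues at the $z_{i_j}$ --- finishes. With that repair, the outline is consistent with the cited sources.
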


Given $(\Sigma,\mathbf{z}),\pp$ as above, define the decorated surface $\widetilde\Sigma,$ and the map $K_{n_0}:\Sigma\to\widetilde\Sigma$ as follows.
$\widetilde\Sigma$ is obtained from $\Sigma$ by contracting any unmarked component to a point, and decorating any such point by its \emph{genus defect} and \emph{marking defect}. The genus defect is the genus of the preimage of the point in $\Sigma,$ and if that preimage is a single point it is defined to be $0.$ The marking defect is the set of marked points in this preimage, which is labeled by a subset of $[n+n_0]\setminus[n].$
We should stress that $\gamma$ needs not to vanish on a preimage of a node in the normalization, but it can have at most a simple pole there. Thus, from the discussion about horizontal trajectories, each node or unmarked component and in particular any point $z_i$ for $i>n$ must be mapped to a point which touches at least one horizontal trajectory. Note also that an unmarked component always touches a node (unless $n=0$ and then the whole surface is unmarked).

The JS differential $\gamma$ induces a metric graph on $\widetilde\Sigma$ whose vertices are zeroes of order $d\geq-1$\footnote{We consider a simple pole as a zero of order $-1,$ and a point which is neither a zero nor a pole to be a zero of order $0.$} of $\gamma,$ including the images of unmarked components, and whose  edges are the horizontal trajectories, with their intrinsic length. These embedded graphs can be fully described.
\begin{definition}\label{def:stable_closed_ribbon}
A $(g,(n,n_0))-$\emph{stable closed ribbon graph} is a graph $G=(V,H,s_0,s_1,g,f),$ where
\begin{enumerate}
\item
$V$ is the set of vertices, $H$ is the set of half edges.
\item
$s_0$ is a permutation of the half edges emanating from each vertex.
\item
$s_1$ is a fixed point free involution of $H.$
\item
A map $g:V\to\Z_{\geq 0},$ called the \emph{genus defect}.
\item
A map $f:[n+n_0]\setminus[n]\to V.$ 
\end{enumerate}
The \emph{faces} of the graph are $s_2-$equivalence class of half edges, where $s_2=s_0^{-1}s_1.$ We write $F=H/s_2.$ The edges are $E=H/s_1.$
The \emph{genus} of $G$ can be defined as follows. Glue disks along the faces to obtain a surface $\widetilde\Sigma.$ The genus of $G$ is the (arithmetic) genus of $\widetilde\Sigma$ plus the sum of genus defects in vertices. The \emph{marking defect} of a vertex $v$ is defined as $f^{-1}(v).$
We require
\begin{enumerate}
\item
For a vertex $v$ of degree $1$ or of degree $2,$ but such that the assigned permutation is a transposition,
\[
g(v)+|f^{-1}(v)|\geq 1.
\]
\item
The genus of the graph is $g.$
\item
The number of faces is $n.$
\end{enumerate}
A \emph{stable metric ribbon graph} is a stable ribbon graph together with a metric
\[
\ell:E\to\R_+.
\]
We usually write $\ell_e$ instead of $\ell(e).$

A graph is \emph{smooth} if all the vertices' permutations $s_0$ are cyclic, all genus defects are $0$ and all marking defects are of size at most $1.$ The ribbon graph is connected if the underlying graph is.
We define isomorphisms and automorphisms in the expected way. Write $\text{Aut}(G)$ for the automorphism group of $G.$
\end{definition}
Note that case $(a)$ above occurs when $v$ is either the image of a contracted unmarked component, or the image of one of the points $p_i,~i>n.$
\begin{rmk}\label{rmk:measure_dist}
To a stable metric ribbon graph one can associate in a natural way a decorated metric space made of a disjoint union of closed intervals, one for each $e\in E,$ modulo the identification of endpoints dictated by the graph structure. The vertices, which are the equivalence classes of endpoints of intervals are endowed with genus and marking defects, and the closed interval which corresponds to the edge $e$ is associated a metric structure which makes it isometric to the interval $[0,\ell_e]\subset\R.$
The associated decorated metric space is unique up to the expected notion of isomorphism.
Stable metric ribbon graph which arise from a JS differential (we will see in Theorem \ref{thm:minimal_compacs_closed} below that all stable metric ribbon graphs arise this way) are endowed with this additional structure of isometries between the embedded edges and intervals of $\R.$ This will be used below, when we give coordinates to the combinatorial $S^1$-bundles. For more details we refer the reader to \cite{Zvonk}.
\end{rmk}

\begin{nn}\label{nn:Aut}
Throughout this article, given a ribbon graph, possibly with extra structure such as a graded ribbon graph, or a nodal graph, which will be defined later, we shall write $[h]$ for the class of the half edge or the edge $h$ under the action of the automorphism group. We similarly define $[A]$ for a subset of edges or half edges.
\end{nn}
\begin{rmk}\label{rmk:normalization-closed.differential}
If $\NNN:\NNN(\Sigma)\to\Sigma$ is the normalization of $\Sigma,$ and $\gamma$ is the JS differential on $\Sigma$ with prescribed perimeters,
then $\NNN^*\gamma$ is a JS differential, hence the unique JS differential, on $\NNN(\Sigma),$ with the same perimeters, and such that marked points which are preimages of nodes have $0-$ perimeter.
\end{rmk}

\subsubsection{Combinatorial moduli}
For a closed stable ribbon graph $G,$ write $\CM_G$ for the set of all metrics on $G,$ write $\CM_G(\pp)$ for the set of all such metrics where the $i^{th}$
face has perimeter $p_i.$ 
$\CM_G\simeq\R_+^{E(G)}/\text{Aut}(G)$ canonically, and this identification endows it with a smooth structure.

For $e\in E(G),$ the edge between vertices $v_1,v_2,$ define the graph $\partial_eG,$ the edge contraction, as follows. Write $h_1,h_2$ for the two half edges of $e.~V(\partial_eG)=V(G)\setminus\{v_1,v_2\}\cup\{v_1v_2\},H(\partial_e G)=H(G)\setminus \{h_1,h_2\}.~s'_1,g',f'$ are just $s_1,g,f$ when restricted to vertices and half edges of $G.$
For the new vertex $v=v_1v_2,~f'(v)=f(v_1)\cup f(v_2),g'(v)=g(v_1)+g(v_2)$ whenever $v_1\neq v_2,$ otherwise it is $g(v_1)+\delta,$ where $\delta=1$ if $h_1,h_2$ belong to different $s_0-$cycles, or else $0.$
For any half edge $h,~h/s_1\neq e,$ define $s_2'(h)$ to be the first half edge among $s_2(h),s_2^2(h),\ldots,$ which is not a half edge of $e.$ We then put $s'_0=s_1' (s'_2)^{-1}.$

Edge contractions commute with each other, and allow us to define a cell complex $\oCM_G=\coprod_{G'}\CM_{G'},$ where the union is over all graphs obtained from $G$ by edge contractions, and we glue the cell $\CM_{G'}$ of $G'=\partial_{e_1,\ldots,e_r}G$ to the cell $\CM_G$ along $\ell_{e_1}=\ldots=\ell_{e_r}=0.$ 
We similarly define $\oCM_G(\pp).$

Write $\CM^{\text{comb}}_{g,(n,n_0)}=\coprod\CM_{G},$ where the union is taken over smooth closed $(g,(n,n_0))$ ribbon graphs.
Write $\oCM^{\text{comb}}_{g,(n,n_0)}=\coprod\oCM_{G}/\sim=\coprod\CM_{G},$ where the union is taken over all closed stable $(g,(n,n_0))$ ribbon graphs, and $\sim$ is induced by edge contractions.
Define $\oCM^{\text{comb}}_{g,(n,n_0)}(\pp),\CM^{\text{comb}}_{g,(n,n_0)}(\pp)$ by constraining the perimeters to be $p_i.$
In all cases we define the cell attachment using edge contractions, and the resulting spaces are piecewise smooth Hausdorff orbifolds, see \cite{Looij,Zvonk} for details.



Set $\comb=\comb_{n_0}$ as the canonical maps
\[
\comb:\oCM_{g,n+n_0}\times\R_+^n\to\oCM^{\text{comb}}_{g,(n,n_0)},~\comb_{\pp}:\oCM_{g,n+n_0}\to\oCM^{\text{comb}}_{g,(n,n_0)}(\pp),
\]
which sends a stable curve and a set of perimeters to the corresponding graph.

We have, \cite{Kont,Looij,Zvonk},
\begin{thm}\label{thm:minimal_compacs_closed}
Suppose $n>0.$ The maps $\comb,\comb_{\pp}$ are continuous surjections of topological orbifolds. $\comb_{\pp}$ takes the fundamental class to a fundamental class.
Moreover, the cell complex topology described above is the finest topology with respect to which $\comb$ is continuous.
The maps are isomorphisms onto their images when restricted to $\CM_{g,n+n_0}\times\R_+^n,\CM_{g,n+n_0}.$

More generally, suppose $\Gamma$ is a closed dual graph with the property that any vertex without a tail marked by $[n]$ is of genus $0,$ and has exactly $3$ half edges, and any two such vertices are not adjacent. Then, with the same proofs, $\comb,\comb_{\pp}$ restricted to $\CM_\Gamma\times\R_+^n,\CM_\Gamma$ are isomorphisms onto their image.
\end{thm}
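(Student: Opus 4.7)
The plan is to proceed in several stages, starting from the smooth stratum and then extending by compactness and cell-by-cell analysis. First, for smooth curves, Strebel's theorem (Theorem \ref{thm:JS}) applied with all perimeters $p_i>0$ and with the extra $n_0$ marked points assigned residue zero yields, for each $(\Sigma,\mathbf{z})\in\CM_{g,n+n_0}$ and $\pp\in\R_+^n$, a \emph{unique} JS differential. Its critical graph is a smooth closed $(g,(n,n_0))$-ribbon graph whose metric is canonical. This gives a well-defined set-theoretic map $\comb_\pp\colon\CM_{g,n+n_0}\to\CM^{comb}_{g,(n,n_0)}(\pp)$ and similarly $\comb$. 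Standard results on continuous dependence of JS differentials on the moduli parameters (continuity of zeros and of the periods of $\sqrt{\gamma}$ in families) give continuity on the smooth stratum. A set-theoretic inverse is constructed by the reverse Kontsevich-Penner procedure: given a metric ribbon graph, glue cylinders of the prescribed circumferences into each face and thus recover the Riemann surface together with its JS differential; uniqueness in Strebel's theorem shows this gives the inverse, so $\comb_\pp$ is an isomorphism onto its image on $\CM_{g,n+n_0}$.

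For the nodal strata, I would use Remark \ref{rmk:normalization-closed.differential}: the JS differential on $\Sigma$ pulls back to the JS differential on $\Norm(\Sigma)$ where the preimages of nodes carry zero perimeter. Thus each nodal curve is sent to the ribbon graph obtained by taking the disjoint union of the smooth critical graphs of its marked components, with the contracted unmarked components encoded by vertices recording the lost genus and marking defect --- exactly the combinatorial data of a stable ribbon graph. This gives surjectivity of $\comb_\pp$ onto all cells and shows the map factors through the cell complex. Continuity across strata follows from the classical fact that when a smooth family of marked Riemann surfaces degenerates to a nodal one, the JS differentials converge in measure and the lengths of trajectories that pinch tend to zero; the limiting metric graph is obtained by collapsing the vanishing edges and taking the normalization. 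This is precisely the incidence relation encoded by the edge-contraction maps $\partial_e$, so $\comb_\pp$ is continuous as a map into $\oCM^{comb}_{g,(n,n_0)}(\pp)$ equipped with the cell topology.

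Given these two ingredients, surjectivity is immediate: every $(g,(n,n_0))$ stable ribbon graph arises from some nodal curve by the inverse gluing procedure applied face by face and inserting unmarked components to account for genus and marking defects. The claim that $\comb_\pp$ is an orbifold map and takes fundamental class to fundamental class follows because $\comb_\pp|_{\CM_{g,n+n_0}}$ is a homeomorphism onto an open dense subset of $\oCM^{comb}_{g,(n,n_0)}(\pp)$, of the same real dimension $6g-6+2n+2n_0$, and both spaces are oriented compact topological orbifolds; degree considerations then force the induced map on top homology to be an isomorphism. The finest-topology assertion is formal: the cell topology on $\oCM^{comb}_{g,(n,n_0)}$ is by construction the quotient topology of the disjoint union $\coprod_G \oCM_G$ under edge-contraction identifications, and any topology finer than this would disagree with one of the continuous cell-level maps built from $\comb$.

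For the more general statement about a dual graph $\Gamma$ satisfying the three bullet conditions, the point is that every vertex of $\Gamma$ carrying no $[n]$-tail is of genus $0$ with exactly $3$ half-edges, and such vertices are isolated among themselves. On the corresponding normalized components the JS differential vanishes identically (they are the ``unmarked components'' of Definition \ref{def:stable_JS}), while on the marked components Theorem \ref{thm:JS} applies uniquely with assigned perimeters. The bijection between $\CM_\Gamma$ (resp.\ $\CM_\Gamma\times\R_+^n$) and the corresponding locus in $\oCM^{comb}_{g,(n,n_0)}(\pp)$ (resp.\ $\oCM^{comb}_{g,(n,n_0)}$) is then obtained exactly as in the smooth case, with the gluing-back procedure producing a unique nodal curve of type $\Gamma$. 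The main obstacle I anticipate is not the algebraic uniqueness --- which is handed to us by Theorem \ref{thm:JS} --- but the careful verification of continuity (and of the orbifold structure) across the boundary strata, where one must control both the degeneration of the horizontal foliation as edges shrink and the orbifold automorphism groups of surfaces versus those of the associated ribbon graphs; once those match up cell by cell, all remaining claims fall into place.
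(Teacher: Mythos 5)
The paper does not supply its own proof of Theorem~\ref{thm:minimal_compacs_closed}; it is stated as a citation to Kontsevich, Looijenga and Zvonkine. So there is no in-paper argument to compare against, and your sketch should be measured against the arguments in those references, which it reproduces in broad outline: Strebel's theorem (here Theorem~\ref{thm:JS}) gives a canonical metric ribbon graph on the smooth locus; the gluing-back of cylinders gives the inverse there; the normalization remark (Remark~\ref{rmk:normalization-closed.differential}) handles the nodal strata; and a degree argument gives the fundamental-class statement. That is indeed the route taken in the cited papers, and your identification of where the work lies --- continuity of $\comb$ across strata, matching orbifold automorphism groups cell by cell --- is accurate.

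Two points are worth flagging, though neither is a fatal defect. First, the continuity across strata is not a ``classical fact'' that you can simply invoke: the convergence of Jenkins--Strebel differentials under degeneration, and the identification of the limiting metric graph with the one obtained by contracting the vanishing edges, is precisely the technical heart of Looijenga's and Zvonkine's papers, and your sketch folds this into a single sentence. As long as you are treating the theorem as a black box (as the paper does), this is acceptable, but it is not a shortcut one could take if a self-contained proof were required. Second, your finest-topology argument is not quite on target. The cell topology is defined as the quotient of $\coprod_G\oCM_G$ by edge contractions, whereas the assertion is that it coincides with the quotient topology induced by the map $\comb$ itself (i.e., $U$ is open iff $\comb^{-1}(U)$ is); these are two \emph{a priori} different quotients. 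The bridge between them is that $\comb$ is a continuous surjection of a compact space onto a Hausdorff space (for fixed $\pp$), hence a closed, therefore quotient, map; one then checks that this quotient topology agrees with the cell topology stratum by stratum using the already-established homeomorphisms on each $\CM_\Gamma$. Your phrasing ``any topology finer than this would disagree with one of the continuous cell-level maps built from $\comb$'' skips that identification.
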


\subsubsection{Tautological line bundles and associated forms}
\begin{definition}\label{def:CF_i}
Suppose $p_i>0.$ Define the space $$\CF_i(\pp)\to\oCM_{g,n}^{\text{comb}}(\pp)$$ as the collection of pairs $(G,\ell,q),$ where $(G,\ell)\in\oCM^{\text{comb}}_{g,n}(\pp)$ and $q$ is a boundary point of the $i^{th}$ face. These spaces, glue together to the bundle $\CF_i\to\oCM^{\text{comb}}_{g,n}.$
Define $\phi_j$ to be the distance from $q$ to the $j^{th}$ vertex, taken along the arc from $q$ in the counterclockwise direction, so that $0<\phi_1<\phi_2<\ldots<\phi_N<p_i,$ where $N$ is the number of edges in the $i^{th}$ face, counted with multiplicities, and the distances are measured using the identifications of the edges with subintervals of $\R,$ see Remark \ref{rmk:measure_dist}.
Write $\ell_j=\phi_{j+1}-\phi_j.$
Orient the fibers with the clockwise orientation.

Define the following $1-$form and $2-$form on each cell of $\oCM_{g,(n,n_0)}^{\text{comb}}(\pp)$
\begin{equation}\label{eq:alpha_i_omega_i}
\alpha_i = \sum_{j=1}^N\frac{\ell_j}{p_i}d\left(\frac{\phi_j}{p_i}\right),\quad
\omega_i = -d\alpha_i=\sum_{1\leq a<b\leq N}d\left(\frac{\ell_a}{p_i}\right)\wedge d\left(\frac{\ell_b}{p_i}\right).
\end{equation}
\end{definition}
Later we will integrate forms which are made out of $\alpha_i,\omega_i,$ and we will perform Laplace transform over $\pp.$ For this reason it will be convenient to define the scaled versions of $\alpha_i,\omega_i$ which do not contain $p_i$ in their denominators. We thus put\[ \bar{\alpha}_i=p_i^2\alpha_i,\quad\bar{\omega}_i=p_i^2\omega_i,\quad\bar{\omega}=\sum_i\bar{\omega}_i.\]

The bundles $\CF_i$ carry natural piecewise smooth structures. Moreover,
\cite{Kont} says (see also Theorem $5$ in \cite{Zvonk})
\begin{thm}\label{thm:zvonk}
\begin{enumerate}
\item For $i\in[n],~\comb^*\CF_i\simeq S^1(\CL_i)$ canonically.
\item $\alpha_i,\omega_i$ are piecewise smooth angular $1-$form and Euler $2-$form for $\CF_i.$
\end{enumerate}
%
%
\end{thm}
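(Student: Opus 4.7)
My plan is to establish the two parts independently: part (1) via a geometric identification coming from the local normal form of the JS differential near $z_i$, and part (2) by a direct computation with the explicit formulas for $\alpha_i$ and $\omega_i$.

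For part (1), near each marked point $z_i$, Theorem \ref{thm:JS} together with the local model in Example \ref{ex:2_pts} guarantees that the JS differential takes the form $\gamma=-(p_i/2\pi)^2(dz/z)^2$ in a local coordinate $z$ vanishing at $z_i$, with this coordinate determined up to multiplication by a unit complex number. Introducing the flat coordinate $\tau=(p_i/2\pi)\log z$, the $i$-th face uniformizes to a half-cylinder $\{\mathrm{Re}(\tau)>C\}/(p_i\Z)$ whose combinatorial boundary at $\mathrm{Re}(\tau)=C$ is parameterized by $\mathrm{Im}(\tau)\in \R/p_i\Z$. Sending a boundary point at arclength $\phi$ to the unit tangent direction $e^{2\pi i\phi/p_i}\partial_z|_{z_i}\in T^1_{z_i}\Sigma$ is independent of the residual $U(1)$ freedom in $z$, and dualizing (reconciling the clockwise orientation on the fiber with the natural complex orientation on $T_{z_i}^*\Sigma$) produces the canonical isomorphism $\comb^*\CF_i\simeq S^1(\CL_i)$. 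I would extend this across nodal strata using Remark \ref{rmk:normalization-closed.differential}, since the JS differential on a nodal surface pulls back to the JS differential on the normalization.

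For part (2), introduce the dimensionless coordinates $x_j=\phi_j/p_i$ and $L_j=\ell_j/p_i$ per cell, so that $\alpha_i=\sum_j L_j\,dx_j$ and $L_j=x_{j+1}-x_j$ cyclically, with $x_{d+1}=x_1+1$. On a fiber over a fixed metric graph the $L_j$ are constant and every $dx_j$ restricts to $dx_1$, so $\alpha_i|_{\mathrm{fiber}}=(\sum_j L_j)\,dx_1=dx_1$. As $q$ sweeps clockwise over the cell lying on a boundary edge $e$, $\phi_1$ runs from $0$ to $\ell_e$; summing over all edges of face $i$ yields $\int_{\mathrm{fiber}}\alpha_i=\sum_e \ell_e/p_i=1$, confirming the angular normalization. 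For the differential, a direct computation using $L_j=x_{j+1}-x_j$ gives
\[
d\alpha_i=\sum_j dL_j\wedge dx_j=\sum_j(dx_{j+1}-dx_j)\wedge dx_j=\sum_j dx_{j+1}\wedge dx_j,
\]
while substituting $dL_a=dx_{a+1}-dx_a$ into $\omega_i=\sum_{a<b}dL_a\wedge dL_b$ and using the cyclic constraint $\sum_a dL_a=0$ reduces the sum to the same cyclic bilinear expression with opposite sign, yielding $d\alpha_i=-\omega_i$. Since $\omega_i$ depends only on the edge lengths, it is a pullback from the base; combined with the normalization, this exhibits $\alpha_i$ as an angular $1$-form and $\omega_i$ as its Euler $2$-form in the sense of Definition \ref{def:angular}.

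The main obstacle will be the global patching in part (1): verifying that the local identification of face-boundary points with unit tangent directions is equivariant under the automorphisms of the metric ribbon graph, compatible with edge contractions as one passes to lower-dimensional cells of $\oCM^{comb}_{g,(n,n_0)}$, and extends continuously through strata where the JS differential degenerates or where unmarked components contract. Piecewise smoothness across cell boundaries in part (2) is milder: the transition at a vertex-crossing relabels $\phi_j\mapsto \phi_{j+1}\pmod{p_i}$, and the formulas are expressed through cyclically invariant sums, so as a face-edge length collapses ($L_j\to 0$) the relevant term in $\alpha_i$ and $\omega_i$ vanishes uniformly, giving the required gluing.
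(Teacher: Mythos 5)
The paper does not actually prove this statement: Theorem~\ref{thm:zvonk} is quoted directly from Kontsevich~\cite{Kont} (with Zvonkine~\cite{Zvonk} given as a second reference for the detailed form), and the text preceding the theorem says exactly that. So there is no in-paper proof to compare against; what you have written is a reconstruction of the argument that the paper chooses to cite. That said, your sketch captures the standard line of reasoning correctly, so let me comment on its substance rather than on a nonexistent in-paper proof.

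Your part (2) computation is sound. Writing $\alpha_i=\sum_j L_j\,dx_j$ with $L_j=x_{j+1}-x_j$ cyclically (so $x_{n+1}=x_1+1$ and $dx_{n+1}=dx_1$), the restriction to the fiber kills every $dL_j$, all the $dx_j$ become equal, and $\sum_j L_j=1$ gives the angular normalization $\int_{\text{fiber}}\alpha_i=1$; the identity $d\alpha_i=\sum_j dx_{j+1}\wedge dx_j=-\sum_{a<b}dL_a\wedge dL_b$ is a direct, if slightly tedious, substitution, and it holds (I verified it for small $n$). One warning on the phrasing: when you say ``$\phi_1$ runs from $0$ to $\ell_e$ on the cell over $e$,'' keep in mind that crossing a vertex relabels $\phi_1\mapsto\phi_d$ modulo $p_i$; what is invariant is that $d\phi_1/p_i$ is a well-defined piecewise smooth form whose total integral over the fiber is $1$. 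Your statement that $\omega_i$ is ``a pullback from the base'' should also be sharpened slightly: $\omega_i=\sum_{a<b}d(\ell_a/p_i)\wedge d(\ell_b/p_i)$ involves $p_i$, which is a function of the edge lengths of the graph, so it is a form on the base $\oCM^{comb}_{g,n}$ (not on the open cell $\R_+^{E}$) and the two descriptions of the base need to be matched up; this is a wrinkle, not an error.

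For part (1), your identification through the flat coordinate $\tau=(p_i/2\pi)\log z$ is the right idea, and you correctly land on the \emph{tangent} circle $T^1_{z_i}\Sigma$ rather than the cotangent circle, which is why the $U(1)$ ambiguity $z\mapsto e^{i\theta}z$ cancels. You are honest that the global patching (equivariance under $Aut(G)$, compatibility across edge-contraction faces, continuity across unmarked-component degenerations) is not carried out; that is indeed where the bulk of the work in Kontsevich and Zvonkine lies, and Remark~\ref{rmk:normalization-closed.differential} is exactly the tool for handling the nodal strata. One more thing worth flagging: the paper orients $\CF_i$ clockwise and records in the remark following the theorem that Kontsevich's convention is opposite (yielding $S^1(\CL_i^*)$); your dualization step is where that orientation bookkeeping must be made explicit. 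As a reconstruction of a cited black-box result your proposal is reasonable, but it is not a self-contained proof, and the paper never claimed to supply one.
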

\begin{rmk}
In \cite{Kont} $\CF_i$ was given the opposite orientation and the equivalence was hence to the bundle $S^1(\CL_i^*),$ which is canonically $S^1(\CL_i)$ with the opposite orientation.
\end{rmk}
Thus, combined with Theorem \ref{thm:minimal_compacs_closed} we see that all descendents may be calculated combinatorially on $\oCM^{\text{comb}}_{g,n}.$
In fact, all descendents can be calculated as integrals over the highest dimensional cells of $\oCM_{g,n}^{\text{comb}}.$ These are parameterized by trivalent ribbon graphs.

\subsection{JS Stratification for the open moduli}
\subsubsection{Symmetric JS differentials}
Motivated by Definition \ref{def:stable_JS} and Example \ref{ex:2_pts} we define
\begin{definition}\label{def:sym_JS}
Let $(\Sigma,\{z_i\}_{i\in\I\cup\PP},\{x_i\}_{i\in\B})$ be a stable open marked Riemann surface, $\pp=(p_i)_{i\in\I\cup\PP}\in\R_+^{\I}\times(0,\ldots,0).$ 
A \emph{symmetric JS differential} on $\Sigma$ is the restriction to $\Sigma$ of the unique JS differential of $D(\Sigma)$ whose residues at $z_i,\bar{z}_i$ are $-(p_i/2\pi)^2,$ which are $0$ for $i\in \PP$.
We extend the definition to the case $g=0,\I=[1],\PP=\B=\emptyset,$ where the differential is defined to be the restriction of the section $\gamma_{p_1}$ of Example \ref{ex:2_pts}.
\end{definition}
The existence and uniqueness follow from Theorem \ref{thm:JS} and the discussion in Example \ref{ex:2_pts}.

As before, the symmetric JS differential defines a cell decomposition of $D(\Sigma),$ in the smooth case, and in general a metric graph embedded in $\widetilde{D(\Sigma)},$ the surface obtained from $D(\Sigma)$ by contracting components with no $z_i,\bar{z}_i,~i\in\I$ ,whose complement is a disjoint union of disks. Note that $\widetilde{D(\Sigma)}$ inherits the conjugation from $D(\Sigma),$ which we also denote by $\varrho.$ The uniqueness forces the decomposition to be $\varrho-$invariant.
\begin{lemma}\label{lem:sym_JS}
The $\varrho-$fixed locus of $\widetilde{D(\Sigma)}$ is a union of (possibly closed) horizontal trajectories and isolated vertices.
Any $\varrho-$fixed point is a zero the differential of an even order, possibly $0.$
\end{lemma}
\begin{proof}
The case $g=0,\I=[1],\PP=\B=\emptyset$ follows from the discussion in Example \ref{ex:2_pts}. In other cases, take an arbitrary point in $\widetilde{D(\Sigma)}^\varrho.$
It cannot belong to the disk cell of any $z_j,$ since otherwise it would have belonged to the cell of $\bar{z}_j$ as well. Thus, $\widetilde{D(\Sigma)}^\varrho$ is contained in the one-skeleton of the decomposition.
Consider $p\in \widetilde{D(\Sigma)}^\varrho.$ If $p$ is an isolated vertex in the $\varrho-$fixed locus, then by connectivity it must be incidenet to some non $\varrho-$fixed horizontal trajectory which, without loss of generality, lies in the image of $\Sigma^o$ in $\widetilde{D(\Sigma)}.$ Suppose it touches $r$ such trajectories. Then it also touches their $\varrho-$conjugate trajectories, which lie in the image of $\overline{\Sigma}^o$ in $\widetilde{D(\Sigma)}.$ Thus, $2r$ horizontal trajectories emanate from $p,$ for $r\geq 1,$ hence $p$ is a a zero of order $2r-2\geq 0.$ The second case is that $p$ is not isolated, so it lies on the image of $\partial\Sigma$ in $\widetilde{D(\Sigma)},$ which, as explained, is contained in the $1-$skeleton. In this case, at least two horizontal trajectories which are contained in the image of $\partial\Sigma$ emanate out of $p,$ one to its left and one to its right. In addition, there are also $r\geq 0$ such trajectories in the image of $\Sigma^o,$ and because of symmetry there are also $r$ such trajectories in the image of $\overline{\Sigma}^o.$ In total, there are $2r+2$ horizontal trajectories emanating from $p,$ which means that it is a zero of order $2r\geq 0.$
\end{proof}

Lemma \ref{lem:sym_JS} has the following corollary
\begin{cor}\label{cor:forgetting_a_bdry_point_differential}
Suppose $\Sigma,\pp$ are as above, and $\gamma$ is the associated symmetric JS differential. Assume that for some $i\in\B,$ forgetting $x_i$ makes no component of $\Sigma$ unstable. Denote by $\Sigma'$ the resulting surface, and let $\iota:\Sigma'\to\Sigma$ be the natural map between the surfaces.
Then if $\gamma,\gamma'$ are the unique JS differentials for $\Sigma,\Sigma'$ respectively, with the prescribed perimeters, then
\[
\gamma'=\iota^*\gamma.
\]
\end{cor}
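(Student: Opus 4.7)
The plan is to invoke the uniqueness half of Theorem \ref{thm:JS} on the doubled surface. Since by assumption forgetting $x_i$ destabilizes no component, $\iota:\Sigma'\to\Sigma$ is the identity at the level of underlying (possibly nodal) Riemann surfaces, differing from $\Sigma$ only by having one fewer boundary marked point. Consequently the doubling $D(\iota)\colon D(\Sigma')\to D(\Sigma)$ is also the identity on the underlying surfaces and forgets the single fixed marked point $x_i\in D(\Sigma)^{\varrho}$. Let $\tilde\gamma$ denote the unique JS differential on $D(\Sigma)$ with residues $-(p_i/2\pi)^2$ at $z_i,\bar z_i$ for $i\in\I$ and zero residue at all other marked points; by Definition \ref{def:sym_JS} we have $\gamma=\tilde\gamma|_\Sigma$. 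Define $\gamma'$ analogously for $\Sigma'$ via a symmetric JS differential $\tilde\gamma'$ on $D(\Sigma')$.

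The core step is to verify that $D(\iota)^*\tilde\gamma$ satisfies every clause of Definition \ref{def:stable_JS} on $D(\Sigma')$ with the same perimeter data. Most clauses transfer automatically because the set of special points of $D(\Sigma')$ is a subset of that of $D(\Sigma)$: holomorphy outside special points, the prescribed double poles at $z_i,\bar z_i$ for $i\in\I$, the at-most-simple poles at the nodes, and vanishing on unmarked components are all inherited. The only condition requiring real work is holomorphy at the point $x_i$ (which is special for $D(\Sigma)$ but not for $D(\Sigma')$). This is exactly where I would invoke Lemma \ref{lem:sym_JS}: boundary marked points are zeros of $\tilde\gamma$ of even order $2r\geq 0$, so in particular $\tilde\gamma$ is holomorphic at $x_i$. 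The disk-cover condition (iii) of Definition \ref{def:stable_JS} is unaffected, since the set of points with positive perimeter, and hence the union $\bigcup \overline{D}_j$, does not involve $x_i$.

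Since the marked-versus-unmarked status of components is determined by the presence of an internal marked point of positive perimeter, and forgetting the boundary point $x_i$ changes neither the set of components nor this status, $D(\iota)^*\tilde\gamma$ fulfills the definition of a symmetric JS differential for $(\Sigma',\pp)$. By the uniqueness asserted in Theorem \ref{thm:JS}, it must equal $\tilde\gamma'$, and restricting to $\Sigma'$ gives $\iota^*\gamma=\gamma'$.

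The main obstacle, such as it is, lies in the bookkeeping around the forgotten point $x_i$: one must be sure that the doubled surface genuinely only loses this marked point (no auxiliary contractions or component identifications happen), which is precisely what the non-destabilization hypothesis guarantees, and one must rule out a simple pole of $\tilde\gamma$ at $x_i$, which is exactly where the even-order-zero output of Lemma \ref{lem:sym_JS} is crucial.
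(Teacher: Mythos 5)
Your proof is correct and takes the same route as the paper: the key observation in both is that Lemma \ref{lem:sym_JS} forces the symmetric JS differential to be holomorphic (an even-order zero, in particular not a pole) at the boundary marked point $x_i$, so that $\iota^*\gamma$ already satisfies the defining conditions of the JS differential on $\Sigma'$, whence equality by uniqueness. The paper's proof is a one-line version of exactly this argument; you have simply unpacked the verification on the doubled surface.
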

Indeed, both $\gamma,\gamma'$ are JS differentials on $\Sigma',$ since there is no pole in $x_i.$ Hence they must be equal.

Remark \ref{rmk:normalization-closed.differential} has the following consequence
\begin{cor}\label{cor:normalization-open.differential}
If $\NNN:\NNN(\Sigma)\to\Sigma$ is the normalization of $\Sigma,$ and $\gamma$ is the JS differential on $\Sigma$ with prescribed perimeters,
then $\NNN^*\gamma$ is the unique JS differential, on $\NNN(\Sigma),$ with the same perimeters, and such that marked points which are preimages of nodes have $0-$ perimeter.
\end{cor}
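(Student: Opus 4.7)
The plan is to bootstrap the claim from its closed analogue, Remark \ref{rmk:normalization-closed.differential}, by passing through the doubled surface. By Definition \ref{def:sym_JS}, the symmetric JS differential $\gamma$ on $\Sigma$ is the restriction of the unique JS differential $\tilde\gamma$ on $D(\Sigma)=\Sigma_{\C}$ whose double pole residues are $-(p_i/2\pi)^2$ at $z_i,\bar z_i$ for $i\in\I$ and $0$ elsewhere. The strategy is to pull $\tilde\gamma$ back via normalization on the closed side, then descend to $\NNN(\Sigma)$ using the involution.

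First I would verify that normalization and doubling commute, i.e.\ that there is a canonical identification
\[
D(\NNN(\Sigma))\;\simeq\;\NNN(D(\Sigma))
\]
compatible with the respective involutions. This is a local check at each node/shrunk boundary: an interior node of $\Sigma$ contributes a $\varrho$-orbit of two nodes in $D(\Sigma)$, each of which is separated by normalization; a boundary node and a shrunk boundary each contribute a single $\varrho$-fixed node in $D(\Sigma)$, again separated by normalization. In all cases, the preimages land on $\partial\NNN(\Sigma)$ or in its interior exactly as one would produce them by doubling $\NNN(\Sigma)$, and the identifications respect the involutions.

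Second, I would apply Remark \ref{rmk:normalization-closed.differential} to $\tilde\gamma$ on the closed nodal surface $D(\Sigma)$: its pullback to $\NNN(D(\Sigma))$ is the unique JS differential on that surface having the prescribed perimeters at the original marked points and $0$ perimeter at all marked points arising from nodes. Uniqueness forces this pullback to be $\varrho$-invariant, since the same pullback construction applied to $\varrho^*\tilde\gamma=\tilde\gamma$ gives another JS differential with the same data. Transporting through the identification above, we obtain a $\varrho$-invariant differential on $D(\NNN(\Sigma))$ which, by Definition \ref{def:sym_JS}, is the symmetric JS differential on $\NNN(\Sigma)$ with the stated perimeter data. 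Its restriction to $\NNN(\Sigma)$ agrees with $\NNN^*\gamma$, since restriction and pullback commute with the normalization map.

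The main obstacle will be the bookkeeping in the commutation statement $D(\NNN(\Sigma))\simeq\NNN(D(\Sigma))$, especially at shrunk boundaries, which are a somewhat subtle feature of the open theory: one must check that the Ramond node in $D(\Sigma)$ corresponding to a shrunk boundary is separated by normalization into a pair of preimages whose identification with the boundary points produced by doubling $\NNN(\Sigma)$ is consistent. Once this local verification is in hand, everything else is formal, and the uniqueness of $\NNN^*\gamma$ as a JS differential with the stated perimeters follows immediately from the uniqueness in Theorem \ref{thm:JS}.
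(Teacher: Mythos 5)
Your proposal is correct and takes essentially the approach the paper intends: the paper gives no explicit proof, simply stating that the corollary is a consequence of Remark \ref{rmk:normalization-closed.differential}, and the natural way to derive it is exactly your route through the doubled surface, using Definition \ref{def:sym_JS}, the commutation $D(\NNN(\Sigma))\simeq\NNN(D(\Sigma))$, and the uniqueness in Theorem \ref{thm:JS}. One small imprecision: since $\varrho$ is antiholomorphic, the symmetry of $\tilde\gamma$ is better stated as $\overline{\varrho^*\tilde\gamma}=\tilde\gamma$ rather than literal $\varrho$-invariance, but this does not affect the argument.
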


\begin{rmk}
Although throughout the article we will be mainly interested in internal markings with positive perimeters, markings of perimeter zero occur naturally when one considers normalizations, see Proposition \ref{prop:for_base_JS}. In the open intersection theory the normalizations are crucial for the definition of intersection numbers, Definition \ref{def:special_canonical}, and therefore considering markings with zero perimeters is unavoidable. In addition, since boundary markings carry no descendents, we to not lose from fixing their perimeters to be zero, and it simplifies calculations. For these reasons throughout this section we shall allow marked points to have perimeter $0,$ at the cost of making the notations somehow more cumbersome.
\end{rmk}

\subsubsection{Open Ribbon graphs}
\begin{nn}\label{nn:isotopy_classes}
Let $I,B$ be finite sets.
Denote the set of isotopy types of open connected genus $g$ smooth oriented marked surfaces, with $I$ being the set of internal marked points and $B$ being the set of boundary marked points by $\D(g,I,B)$.
Write $\D(g,I)$ for the set of isotopy types of closed connected genus $g$ smooth oriented marked surfaces, which is just a singleton.
\end{nn}
\begin{definition}\label{def:open_stable_ribbon}
An \emph{open ribbon graph} is a tuple
\[G=(V=V^I\cup V^B,H=H^I\cup H^B,s_0,s_1,f=f^{I}\cup f^{B}\cup f^{\PPP}, g,d)\]
where
\begin{enumerate}
\item
$V^I$ is the set of internal vertices, $V^B$ the set of boundary vertices.
\item
$H^B$ is the set of boundary half edges, $H^I$ is the set of internal half edges;
$s_1$ is a fixed point free involution on $H$ whose equivalence classes are the edges, $E.~E^B$ is the set of edges which contain a boundary half edge.
\item
A permutation $s_0$ assigned to each vertex, should be thought of as a cyclic order of the half edges issuing each vertex.
We write $s_0$ also for the product of all these permutations.

We denote by $\widetilde V$ the set of cycles of $s_0.$
Write $\widetilde V^I$ for cycles which do not contain boundary half edges. Set $\widetilde V^B=\widetilde V\setminus \widetilde V^I.$
Put by $N:\widetilde V\to V$ the map which takes a cycle to the vertex which contains its half edges, and let $N^{\PPP},N^{B}$ be the restrictions to $\widetilde V^I,\widetilde V^B,$ respectively. 
\item
A map $f^{B}:\B\to V^B,$ where $\B$ is a finite set.
\item
A map $f^{\PPP}:\PP\to V,$ where $\PP$ is a finite set.
\item
An injection $f^{I}:\I\hookrightarrow H/s_2,$ where $s_2:=s_0^{-1}s_1.$
\item
A map $g:V\to\Z_{\geq 0},$ called the \emph{genus defect}.
\item
For any $v\in V^B,$ an element \[d(v)\in \D(g(v),(f^{\PPP})^{-1}(v)\cup(N^{\PPP})^{-1}(v),(f^{B})^{-1}(v)\cup(N^{B})^{-1}(v)).\]
For any $v\in V^I,$ the unique element $d(v)\in \D(g(v),(f^{\PPP})^{-1}(v)\cup(N^{\PPP})^{-1}(v)).~d$ is called the \emph{topological defect} of $v.$
\end{enumerate}
Write $\text{deg}(v)$ for the degree of the vertex $v.$
A \emph{closed contracted component} is a vertex $v\in V^I$ with $$2g(v)+|(f^{\PPP})^{-1}(v)|+|N^{-1}(v)|>2.$$ Denote their collection by $\SC^C(G).$
An \emph{open contracted component} is a vertex $v\in V^B$ with $$2\left(g(v)+|(f^{\PPP})^{-1}(v)|+|(N^{\PPP})^{-1}(v)|\right)+|(f^{B})^{-1}(v)|+|(N^{B})^{-1}(v)|>2.$$ Denote their collection by $\SC^O(G).$

We have the following requirements.
\begin{enumerate}
\item
Any half edge appears in the permutation $s_0$ of exactly one vertex.

We define a graph whose vertices are the elements of $V$ and whose half edges are the elements of $H.$ A half edge is connected to a vertex if and only if it appears in the vertex's permutation $s_0.$
\item
$N(\widetilde V^B)\subseteq V^B.$
\item
If $h\in H^B,$ then $s_1h\notin H^B.$
\item
$s_2$ preserves the partition $H=H^I\cup H^B.$
The image of $f^{I}$ is exactly $H^I/s_2.$
\item
For $v\in V^I,$ if the degree of $\text{deg}(v)=1,$ or $\text{deg}(v)=2$ but $|N^{-1}(v)|=1,$ then $|(f^{\PPP})^{-1}(v)|+g(v)\geq 1.$
%
\item
For $v\in V^B,$
if $v$ has at least one boundary edge and $\text{deg}(v)=2$ then $|(f^{\PPP})^{-1}(v)|+|(f^{B})^{-1}(v)|+g(v)\geq 1.$
\item
Any vertex of degree $0$ is a \emph{contracted component}.
%
\end{enumerate}
We call the elements of $H^B/s_2$ \emph{boundary components}, and the elements of $F=H^I/s_2$ are called \emph{faces}. $b(G)=|H^B/s_2|$ is the number of boundary components. The \emph{marking defect} of $v\in V$ is defined as $(f^\PPP)^{-1}(v)\cup(f^\B)^{-1}(v).$
The sets $\I,\PP,\B$ are called the sets of internal markings, internal markings of perimeter $0,$ and boundary markings respectively. $\B$ is also denoted by $B(G),$ define $I(G),\PPP(G)$ similarly.
An \emph{internal node} is either a contracted component with at least one edge and no boundary edges, or an internal vertex whose assigned permutation is not transitive.
%
A boundary vertex $v$ without boundary half edges, with an empty marking defect and such that $g(v)=0,|N^{-1}(v)|=1$ is called a \emph{contracted boundary}. We denote their collection by $\text{CB}(G).$
A boundary vertex $v$ which is either a contracted component with at least one boundary edge, 
or that whose assigned permutation is not transitive is called a \emph{boundary node}.
A \emph{boundary marked point} is an image of $f^{B}$ which is not a node. An \emph{internal marked point of perimeter $0$} is an image of $f^{\PPP}$ which is not a node.
A \emph{boundary half node} is a $(N^{B})^{-1}-$preimage of a node. Denote their collection by $\HN(G).$
A vertex which is either a node or a contracted component, or the $f-$image of a unique element in $\PP\cup\B$ is called a \emph{special point}.


%

We write $i(h)=h/s_2,$ and $H_i =\{h\in H | i(h)=i\}.$
%

An \emph{open metric ribbon graph} is an open ribbon graph together with a positive metric $\ell:E\to\R_+.$ We sometimes write $\ell_h,h\in H$ instead of $\ell_{h/s_1}.$

\emph{Markings} of an open ribbon graph are markings,
\[
\mmm^I: \I\cup\PP\to \Z,~\mmm^B:\B\to\Z,
\]
such that $\mmm^I(\PP)=0,\mmm^I(\I)\subset\Z_{\neq0}.$
A graph together with a marking is called a \emph{marked graph}.

An isomorphism of marked graphs, and an automorphism of a marked graph are the expected notions. $\text{Aut}(G)$ denotes the group of automorphisms of $G.$ A metric is \emph{generic} if $(G,\ell)$ has no automorphisms.

A ribbon graph is said to be closed if $V^B=0,$ it is said to be connected if the underlying graph is connected.
\end{definition}
The maps $f^B, ~f^\PPP$ should be thought of as the associations of the boundary marked points and the internal marked points of perimeter $0$ respectively, to the vertices of the graph formed by the symmetric JS differential. Requirements $(e),~(f)$ in this definition are the open counterparts of Requirement $(a)$ of Definition \ref{def:stable_closed_ribbon}.
Note that a half edge $h$ is canonically oriented away from its base point $h/s_0.$
Throughout the paper we identify boundary marked points, which are vertices, with their (unique) preimages in $B(G)=\B.$

\begin{rmk}
Here, unlike the closed case, the genus defect is not enough to classify surfaces with contracted components. In particular, there are several topologies for a given genus, as mentioned in Remark \ref{rmk:top_type_open}, and the set of topologies grows as we add boundary marked points, which may be divided between different boundary components. 
\end{rmk}
Figure \ref{fig:ribbon4_17} shows some examples of ribbon graphs.
\begin{figure}
\centering
\includegraphics[scale=.4]{./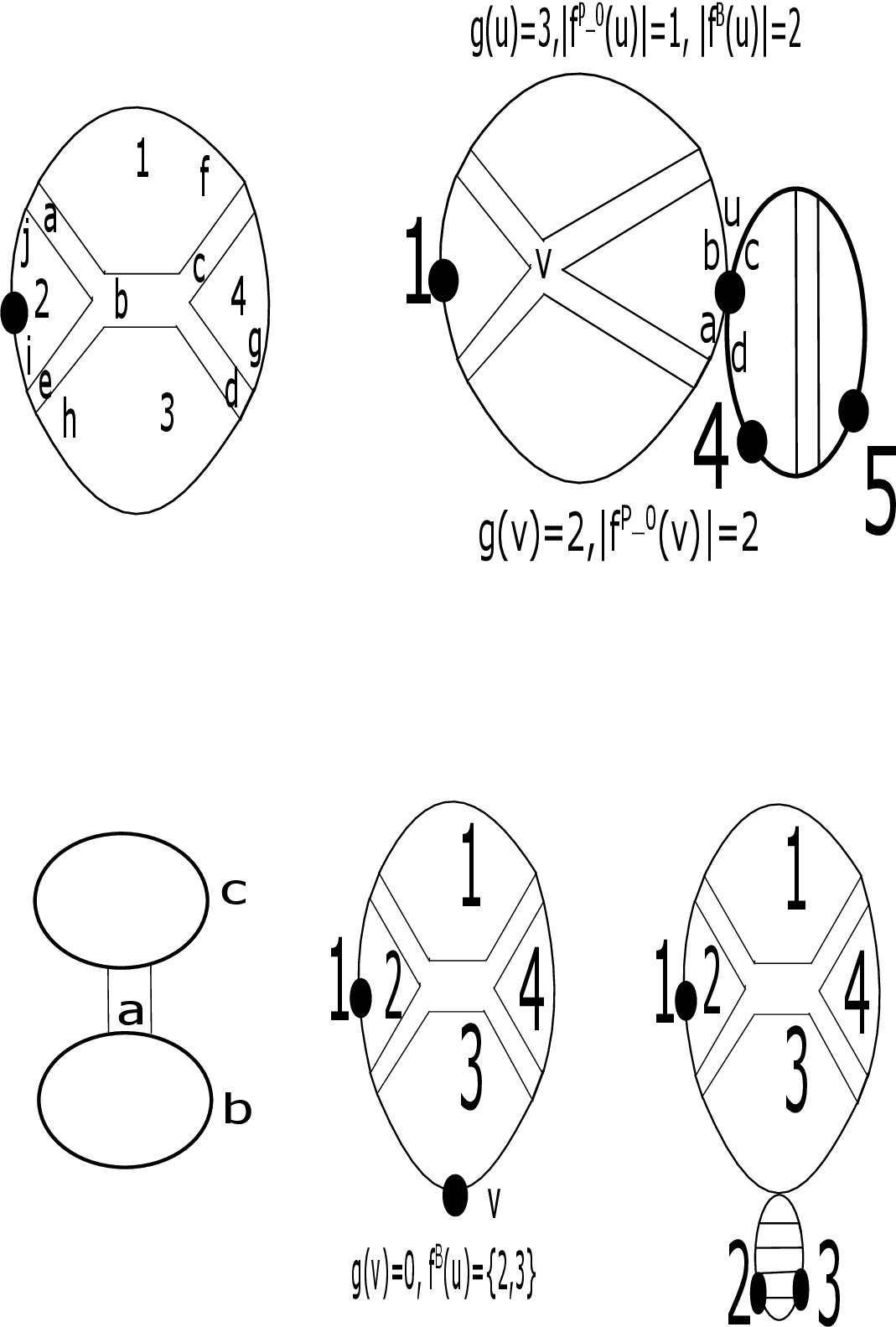}
\caption{Examples of ribbon graphs. Internal edges are drawn as strips. The top left corner shows a ribbon graph with one boundary markings and four internal markings (the name of a half edge appears next to the vertex from which it emanates). Its underlying surface is a disk, and the boundary edges are $s_1f,s_1g,s_1h,s_1i,s_1j.$ The cyclic orders in the internal vertices are $s_1a,s_1b,e$ and $s_1b,s_1d,c.$ Face $1,$ for example, is the $s_2-$cyclic order $a,b,c,f.$
In the low right corner a ribbon graph on a cylinder is drawn. It has one face, the $s_2-$cycle $a,c,s_1a,b$ and two boundary components, each made of a single boundary edge $s_1b,$ and $s_1c.$
The ribbon graph in the top right corner has one boundary node $u$, which is also an open contracted component and an internal node $v$ which is also a contracted component. The permutation of half edges at $u$ is $(ab)(cd).$ The contracted component is open, of genus defect $3,$ has an internal marking of perimeter $0,$ and four special boundary points, the markings $1,2$ and the half nodes $(ab),(cd).$ The topological defect can be any topology which corresponds to doubled genus $3,$ one internal marking and four boundary markings. $v$ has genus defect $2$ and two perimeter $0$ internal markings.
The middle low picture has an open contracted component at $v,$ it is a contracted disk with two boundary markings $2,3$ and a boundary half node and no special internal points.
Contracted component which are disks with three boundary markings and no internal markings will play an important role in what follows. We shall therefore draw such components as disks cut by parallel lines, as in the bottom right picture.
}
\label{fig:ribbon4_17}
\end{figure}

\begin{nn}\label{nn:Sigma_G}
By gluing disks along the faces, any open ribbon graph gives rise to a topological open oriented surface $\Sigma_G.$ This surface is a union of smooth surfaces, identified in a finite number of points. One can easily define its double, $D(\Sigma_G)=(\Sigma_G)_\C,$ as in the non topological case.
\end{nn}
\begin{definition}\label{def:genus_of_ribbon}
The genus of the open graph $G$ is defined by
\[
g(G):=g((\Sigma_G)_\C)+\sum_{v\in V^B}g(v)+2\sum_{v\in V^I}g(v).
\]
The graph is \emph{stable} if $2g-2+|\B|+2(|\I|+|\PP|)>0.$
\end{definition}

For a stable open surface $(\Sigma,\{z_i\}_{i\in\I\cup\PP},\{x_i\}_{i\in\B}),$ define the marked components to be components with at least one $z_i,~i\in\I.$
The other components are unmarked. Define the decorated surface $\widetilde\Sigma=K_{\B,\PP}(\Sigma),$ and the map $K_{\B,\PP}:\Sigma\to\widetilde\Sigma$ to be the surface obtained by contracting unmarked components to points, and $K_{\B,\PP}$ is the quotient map. We decorate any point $p$ in $\widetilde\Sigma$ by its genus defect, marking defect, and the topological defect which can be defined by the genus, boundary markings and topological type of the surface obtained by smoothing the nodes in $K_{\B,\PP}^{-1}(p).$
\begin{rmk}
This definition agrees with the one given for closed surfaces, in the sense that one can also define the doubling $D$ of $\widetilde\Sigma$ in a natural way, and then
$D(\widetilde\Sigma)\simeq\widetilde{D(\Sigma)}.$
\end{rmk}

\begin{definition}\label{def:ghost_effective_trivalent}
A \emph{ghost} is a ribbon graph without half edges. 
A \emph{smooth open ribbon graph} is a stable open ribbon graph such that none of its connected components contains a node or a contracted boundary.

A stable ribbon graph, open or closed, is \emph{effective} if
\begin{enumerate}
\item Any genus defect is $0.$
\item There are no internal nodes
\item Contracted components or ghost components $v$ must have \[(N^{\PPP})^{-1}(v)=\emptyset,~|(N^{B})^{-1}(v)|+|(f^{B})^{-1}(v)|=3.\]
\end{enumerate}
The graph is \emph{trivalent} if
\begin{enumerate}
\item it is effective,
\item $\PP=\emptyset,$
\item it has no contracted boundaries,
\item all vertices which are not special boundary points are trivalent,
\item and for every special boundary point all the $s_0-$cycles are of length $2.$
 \end{enumerate}

A boundary marked point or a boundary half node in a trivalent graph $G$ which is not a ghost is said to \emph{belong to} a face $i$ if its unique internal half edge belongs to that face.

\end{definition}
In Figure \ref{fig:ribbon4_17} the diagrams on the left represent smooth graphs, and all but the top right are effective.
\begin{rmk}\label{rmk:ghost}
The only non zero open intersection number which does not involve internal markings is the genus $0$ intersection number with three boundary markings $\langle\sigma^3\rangle^o_0.$ The graph which corresponds to this picture is precisely the trivalent ghost.
\end{rmk}

The following proposition is a consequence of Lemma \ref{lem:sym_JS}, and the closed theory, the proof is in the appendix.
\begin{prop}\label{prop:streb_gives_graph}
Let $\Sigma$ be a stable open marked Riemann surface. The unique symmetric JS differential of $\Sigma$ defines a unique metric graph $(G,\ell)$ embedded in $K_{\B,\PP}(\Sigma).$
This graph is an open ribbon graph, whose vertices are $K_{\B,\PP}-$images of zeroes of the differential, its edges are $K_{\B,\PP}-$images of horizontal trajectories. The boundary edges, if there are any, are embedded in the boundary and cover it, and the defects of vertices agree with the defects of their image in $K_{\B,\PP}(\Sigma),$ in particular boundary nodes go to boundary nodes. Under this embedding the orientation of any half edge $h\in s_1H^B$ agrees with the orientation induced on $\partial K_{\B,\PP}(\Sigma).$
Topologically $K_{\B,\PP}(\Sigma)\simeq\Sigma_G.$

Moreover, for any stable $(g,\B,\I\cup\PP)-$metric graph is the graph associated to some stable open $(g,\B,\I\cup\PP)-$surface and a set of perimeters $\pp.$
This surface is unique if the graph is smooth or effective.
\end{prop}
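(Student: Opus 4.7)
The plan is to bootstrap off the closed theory applied to the double $D(\Sigma)$. By Definition \ref{def:sym_JS}, the symmetric JS differential $\gamma$ on $\Sigma$ is the restriction of the unique JS differential $\gamma_\C$ on $D(\Sigma)$ with the prescribed (symmetric) perimeters, and by uniqueness $\gamma_\C$ is $\varrho$-equivariant. Invoking the closed case (Theorem \ref{thm:JS} together with the standard Strebel description), $\gamma_\C$ defines a metric ribbon graph $(\tilde G_\C,\tilde\ell_\C)$ embedded in the contracted surface $\widetilde{D(\Sigma)}$, whose vertices are the zeroes of $\gamma_\C$ (including images of unmarked components, carrying genus and marking defects) and whose edges are the horizontal trajectories. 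Since $\gamma_\C$ is $\varrho$-invariant, so is the graph.

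Next I would extract the open graph. By Lemma \ref{lem:sym_JS}, every point of $\partial\Sigma$ either lies on a horizontal trajectory or is a zero of $\gamma$ of even order, so $\partial K_{\B,\PP}(\Sigma)$ is contained in the $1$-skeleton of $\tilde G_\C$, and each boundary component is a union of edges alternating between vertices (the zeroes, whether marked boundary points or not). Taking the quotient $\tilde G_\C/\varrho$ inside $\widetilde{D(\Sigma)}/\varrho = K_{\B,\PP}(\Sigma)$ produces the embedded graph $G$; its internal half edges and vertices are the $\varrho$-orbits of size two, while the boundary half edges, boundary vertices, and shrunk components come from the $\varrho$-fixed ones. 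The cyclic orders $s_0$ are inherited from the orientation of $\Sigma$, and the faces are the disk cells around the marked points $z_i$ with $i\in\I$. The fact that the orientation of each $h\in s_1 H^B$ agrees with the induced boundary orientation is an immediate consequence of using the outward orientation of $\Sigma$ when passing from $D(\Sigma)$ to $\Sigma$. The defects at each vertex match because the surface collapsed at that vertex in $K_{\B,\PP}(\Sigma)$ is exactly (via the doubling) the collapsed component at the corresponding vertex in $\widetilde{D(\Sigma)}$; in particular internal (resp.\ boundary) nodes of $\Sigma$ give internal (resp.\ boundary) nodes of $G$. That $K_{\B,\PP}(\Sigma)=\Sigma_G$ topologically follows from gluing disks to faces, since each face is by construction a topological disk around some $z_i$.

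For the converse, given a stable open metric ribbon graph $(G,\ell)$, I would construct $\Sigma$ by: gluing a topological disk to each face of $G$, then at each vertex $v\in V^I\cup V^B$ replacing a neighborhood of $v$ by a surface realizing the prescribed topological defect $d$, with the half edges at $v$ and the internal markings in $(f^\PP)^{-1}(v)\cup(f^\B)^{-1}(v)$ attached according to the combinatorial data in $d$. The resulting topological open surface doubles to a closed surface carrying a canonical $\varrho$-invariant cell structure with cells being the faces (and their doubles) of $G$; on each such disk cell one puts the pullback of the quadratic differential $-(p_i/2\pi)^2(dz/z)^2$ of Example \ref{ex:2_pts}, matching on edges by isometry. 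This yields a $\varrho$-invariant JS differential on the double, whose uniqueness (Theorem \ref{thm:JS}) together with Corollary \ref{cor:normalization-open.differential} promotes the underlying topological surface to a (unique up to isomorphism) complex open curve when $G$ is smooth or effective: in these cases no choice is left in how to glue unmarked components or treat internal nodes. In the remaining cases, the ambiguity in the choice of topological defect $d$ is exactly what already appears in the data of $G$, so the construction is forced once $G$ is given.

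The main obstacle I expect is the bookkeeping at nonsmooth vertices: showing that the topological defect $d$ assigned to a boundary vertex $v$ of $G$ built from $\gamma$ really coincides with the topological type of the component of $K_{\B,\PP}^{-1}(v)$, in particular that boundary half nodes in $HN(G)$ correspond exactly to half nodes of $\Sigma$ with the correct open/closed partition. This requires carefully comparing the $\varrho$-action on the normalization of $D(\Sigma)$ with the $s_0$-cycle structure at $v$, and most cleanly is handled by reduction to the smooth case via Corollary \ref{cor:normalization-open.differential} plus a case-check on the three types of special vertices (internal nodes, boundary nodes, shrunk components/boundaries).
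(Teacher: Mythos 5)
Your proposal follows essentially the same route as the paper's appendix proof: translate to the doubled surface $D(\Sigma)$, observe that the JS differential is $\varrho$-equivariant by uniqueness, apply the closed Strebel theory to $\widetilde{D(\Sigma)}$, and then pass to the distinguished half to extract the open ribbon graph, with boundary data and defects read off from the $\varrho$-fixed locus. The converse direction (reconstructing $\Sigma$ from $(G,\ell)$ by gluing disks to faces, inserting the prescribed defect surfaces at singular vertices, and using uniqueness of the doubled JS differential) likewise matches the paper's reconstruction of the doubled curve $X$ followed by stabilization, including the observation that smoothness or effectivity removes the ambiguity in the unmarked components.
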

We sometimes identify the graph with its image under the embedding. In particular, throughout this article we shall consider an edge as a trajectory in the surface, and a half edge $h$ as trajectory oriented outward from $h/s_0.$
\begin{nn}
With the notations of the above observation, denote by $\Rcomb_\pp$ the map between surfaces and open metric ribbon graphs, defined by $(G,\ell)=\Rcomb_\pp(\Sigma).$
Write also $(G,\ell)=\Rcomb(\Sigma,\pp).$
\end{nn}

\begin{definition}\label{def:norm}
The \emph{normalization} $\NNN(G)$ of a stable connected open ribbon graph $G$ is the unique smooth, not necessarily connected, open ribbon graph, defined in the following way. If $G$ is smooth, $\NNN(G)=G.$
Otherwise
the vertex set is $\widetilde V^I\cup\widetilde V^B\cup \SC^C(G)\cup \SC^O(G),$ contracted components are isolated vertices in the graph, and the half edges are $H^I\cup H^B.$
The genus and topological defects of vertices in ${\widetilde V^I\cup\widetilde V^B}$ are $0.$

For a contracted component $v,$ the genus and topological defects are given by $g^{\NNN(G)}(v)=g(v),~d^{\NNN(G)}(v)=d(v).$ The marking defect and the maps $f^{\PPP,v},~f^{B,v}$ are derived from $d^{\NNN(G)}(v).$ In particular $B(v)=(N^{B})^{-1}(v)\cup(f^{B})^{-1}(v).$ The permutations $s_0^v,s_1^v$ are the trivial permutations, and the set $I(v)=\emptyset.$

For any connected component $C$ of $\NNN(G),$ not in $\SC^C(G)\cup \SC^O(G),$ define $s_0=s_0^C,~s_1=s_1^C,f^{I}=f^{I,C}$ as those induced from $G.$ 
Let $\PPP(C)$ be the union of the set of elements of $\PP$ which map to vertices whose unique $N-$preimage is in $C,$ and the set of preimages of internal nodes of $C,$ i.e., internal vertices $v$ of $C$ such that $|N^{-1}(N(v))|>1.$
In other words, we can write $\PPP(C)=(\PPP(C)\cap\PP)\cup (\PPP(C)\setminus\PP).$ 
We define $f^{\PPP}=f^{\PPP,C}:\PPP(C)\to V^I(C)$ as follows.
On $\PPP(C)\cap\PP$ we put $f^{\PPP,C}(p_i)=N^{-1}(f^{\PPP}(p_i)),$ where $f^{\PPP}$ of the right hand side is the function from the definition of $G,$ while on $\PPP(C)\setminus\PP,$ 
the preimages of a nodes, we set $f^{\PPP,C}(v)=v.$ 
Define $B(C),~f^{B}=f^{B,C}:B(C)\to V^B(C)$ similarly.

The normalization $\NNN(G)$ of a marked graph is the marked graph whose underlying graph is the normalization of the underlying graph of $G,$ new marked points are marked $\BBB.$

Write $\Norm:\NNN(G)\to G$ to be the evident normalization map.
\end{definition}
Observe that the normalization of a trivalent graph is trivalent, and that if $v$ is a contracted component which touches at least one edge in $G,$ then $|\Norm^{-1}(v)|=|N^{-1}(v)|+1.$

Figure \ref{fig:norm_ribbon4_25} shows the normalizations of the graphs in the right column of Figure \ref{fig:ribbon4_17}.
\begin{figure}
\centering
\includegraphics[scale=.4]{./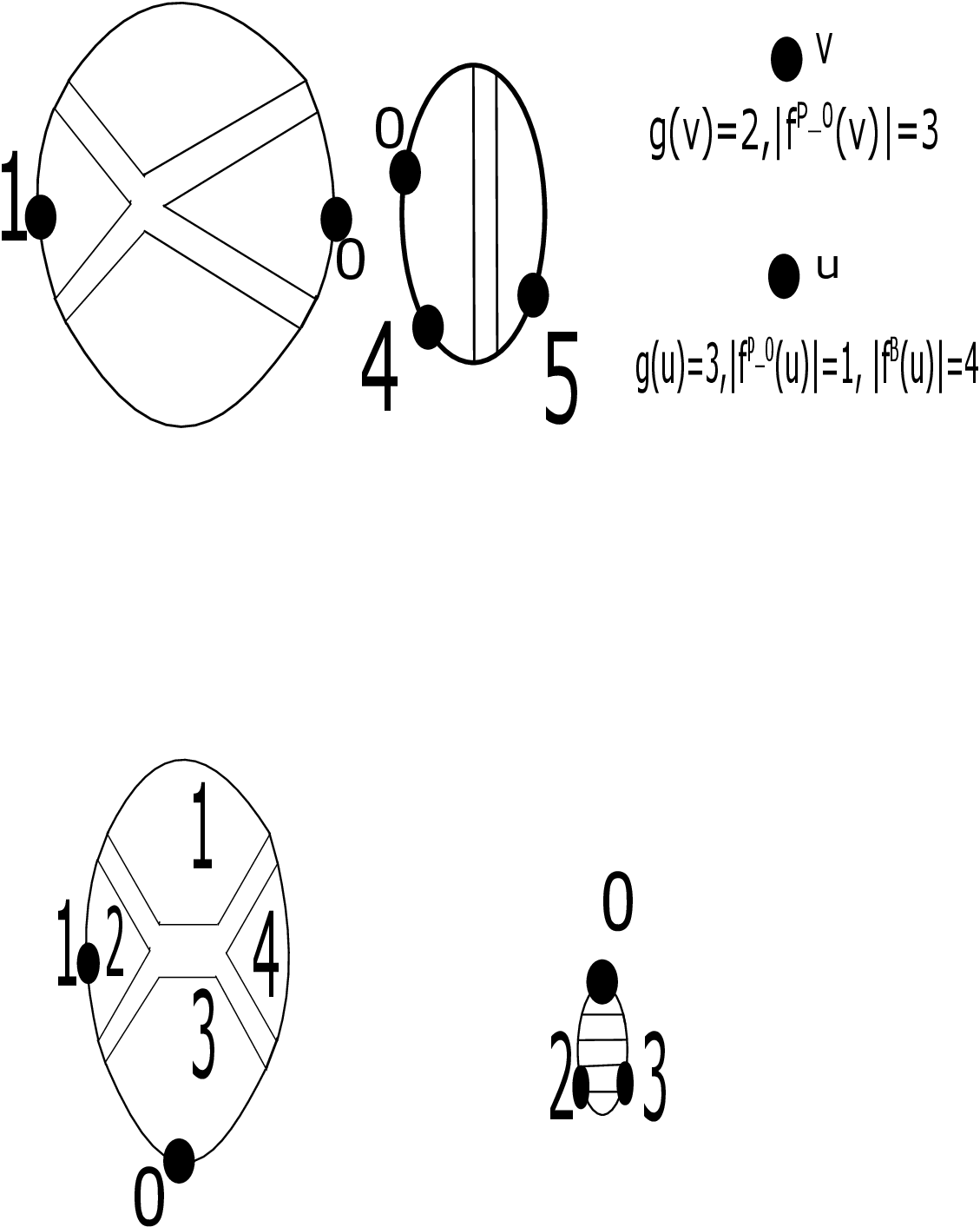}
\caption{The normalizations of the graphs in the right column of Figure \ref{fig:ribbon4_17}. The upper normalization has four components, two are contracted components. The one which corresponds to $v$ has three internal points of perimeter $0,$ the original two and the node. The one which corresponds to $u$ has $4$ boundary markings, the original two and two that corresponds to half nodes. The lower normalization is made of two components. New special points in both normalizations are labeled $0.$}
\label{fig:norm_ribbon4_25}
\end{figure}
\begin{nn}\label{nn:extended_s_1}
There is a canonical injection $B(G)\hookrightarrow B(\NNN(G)).$ On $B(\NNN(G))\setminus B(G)$ there is a fixed point free involution which we also denote by $s_1,$ which on preimages of a node which is not a contracted component just interchanges its two preimages. If $v$ is a contracted component, its new boundary markings correspond to elements $u\in(N^{B})^{-1}(v).$ Any such $u$ corresponds also to a unique marking $w$ in another \emph{non contracted} component. Write $s_1u=w,~s_1w=u.$
\end{nn}

\subsubsection{Moduli of open metric graphs}
For a stable open ribbon graph $G,$ denote by $\RCM_G$ the set of all metrics on $G,$ write $\RCM_G(\pp)$ for the set of all such metrics were the $i^{th}$
face has perimeter $p_i.$
Note that $\RCM_G=\R_+^{E(G)}/\text{Aut}(G)$ canonically. 

\begin{construction}
For $e\in E(G),$ the edge between vertices $v_1,v_2,$ one can define the graph $\partial_eG,$
as the graph obtained by contracting $e$ to a point, identifying its vertices to give a new vertex $v_1v_2$ and updating the permutations and marking defects as in the closed case.
When $v_1,v_2$ are internal, then so is $v_1v_2.$ The genus defect is updated as in the closed case, and this determines the whole defect.
Suppose $v_1$ is a boundary vertex. Then so is $v_1v_2.$
If $v_2\neq v_1,$ then $g(v_1v_2)=g(v_1)+g(v_2),$ if $v_2\in V^B,$ and otherwise $g(v_1v_2)=g(v_1)+2g(v_2).$
When $v_1=v_2,$ let $h_1,h_2$ be the half edges of $e.$ Let $\widetilde{h}_i\in N^{-1}(v_1)$ be the $s_0-$cycle of $h_i.$
Then $g(v_1v_2)=g(v_1)+\delta,$ where
\[
\delta=\begin{cases}
0,&\mbox{if }\widetilde{h}_1=\widetilde{h}_2,\\
1,&\mbox{if }\widetilde{h}_1\neq \widetilde{h}_2,~\widetilde{h}_1,\widetilde{h}_2\in\widetilde{V}^B,\\
2,&\mbox{otherwise. }
\end{cases}
\]
$d(v_1v_2)\in \D=\D(g(v_1v_2),I_{v_1v_2},B_{v_1v_2}),$ or $d(v_1v_2)\in \D=\D(g(v_1v_2),I_{v_1v_2}),$
where $B_{v_1v_2}=(f^{B})^{-1}(v_1v_2)\cup(N^{B})^{-1}(v_1v_2),~I_{v_1v_2}=(f^{\PPP})^{-1}(v_1v_2)\cup(N^{\PPP})^{-1}(v_1v_2).$
These two sets are already known from what we have constructed so far.
In particular, whenever $\D$ is trivial, which is always the case for internal vertices, and for boundary vertices it happens when $2g(v_1v_2)+2|I_{v_1v_2}|+|B_{v_1v_2}|\leq 2,$ we know $d(v_1v_2).$
For shortness we will not describe the general update of the topological defect. We do describe a special case of particular importance. Suppose $e\in E^B,$ and $v_1\neq v_2$ are boundary vertices with $d(v_i)\in \D(0,\emptyset,B_i)$ where $|B_i|=2.$ This is the case when each $v_i$ is a marked point or a boundary node which is not a contracted component.
Write $B_i=\{\widetilde{h}_i,a_i\},$ where $\widetilde{h}_i$ is as above. Suppose $h_2\in H^B,$ that is, its orientation disagrees with the orientation of the boundary.
Then $d(v_1v_2)\in \D(0,\emptyset,\{a,a_1,a_2\}),$ where $a$ is the new cycle of $s_0h_2,$ obtained from concatenating $\widetilde{h}_1,\widetilde{h}_2$ after erasing $h_1,h_2,~d(v_1v_2)$ is the element which corresponds to cyclic order $a\to a_1\to a_2.$

Suppose $E'=\{e_1,\ldots,e_r\}\subseteq E,$ then there is an identification between $E(G)\setminus E'$ and $E(\partial_{e_1,\ldots,e_r}G).$ Throughout this paper we shall use this identification without further comment.
\end{construction}
Figure \ref{fig:nn_cons_2} illustrates several examples of edge contractions.
\begin{figure}
\centering
\includegraphics[scale=.4]{./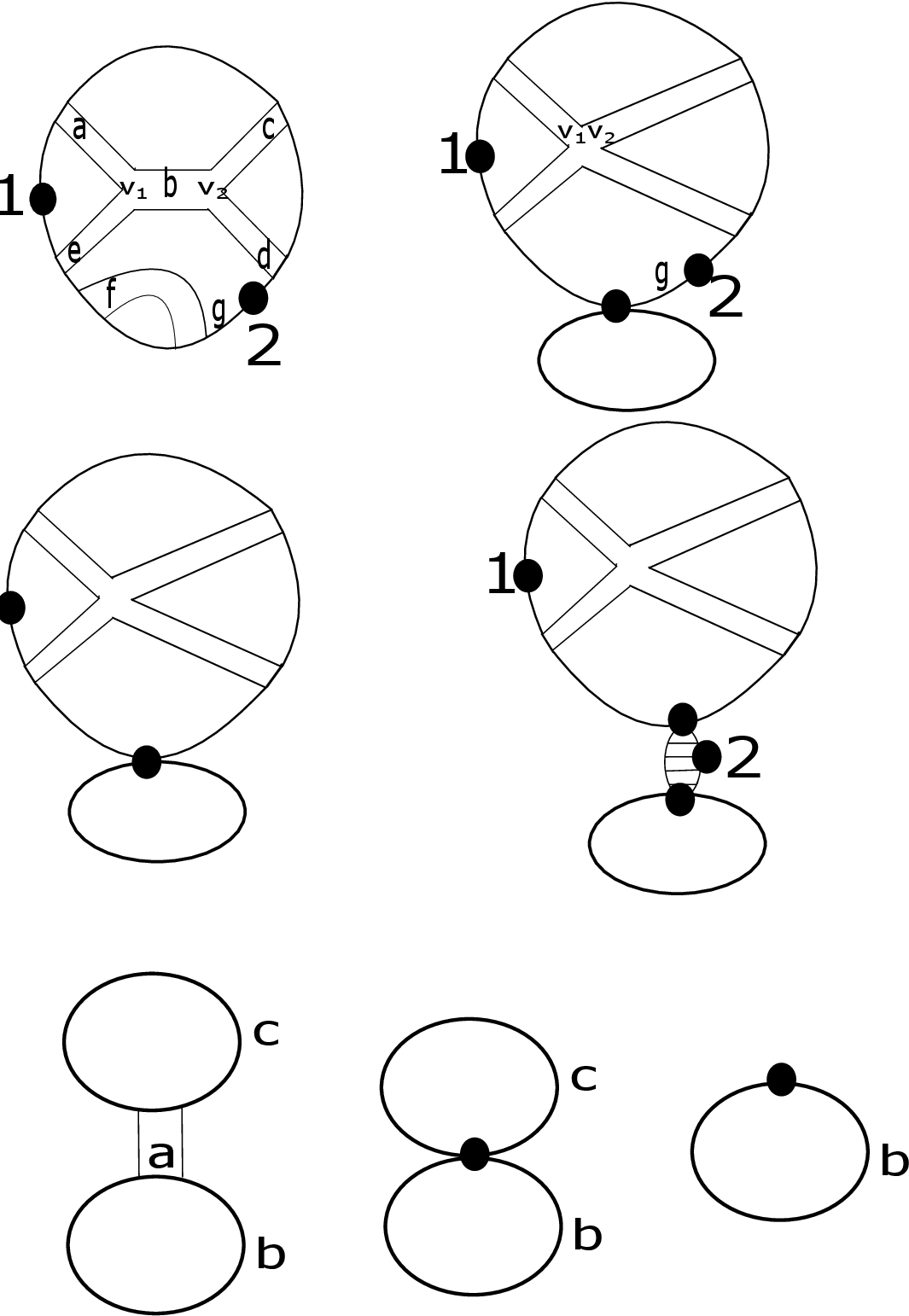}
\caption{Examples of edge contractions. Contracting the internal edges $b,f$ of the smooth graph on the top left corners gives rise to the nodal graph on the top right corner. The vertex $v_1v_2$ corresponds to the permutation $(ae)(cd).$ By further contracting the boundary edge $g$ between the boundary node and the marked point $2$ we obtain the graph in the left hand of the middle row. The boundary node there corresponds to a contracted component which contains two nodes and the marking $2.$ The graph in the right hand side of the same row is equivalent to the left hand one, only that the ghost is illustrated and there the cyclic order of half nodes is seen.
In the bottom left corner a genus $1$ ribbon graph is drawn. After contracting the edge $a$ we obtain a nodal graph. Further contracting $c$ we obtain the graph on the right corner, which contains an open contracted component. The genus defect of the contracted component is $1$ and its topological defect is that of a cylinder with one special boundary point - the node.
}
\label{fig:nn_cons_2}
\end{figure}

For a stable open ribbon graph $G,$ we define the orbifold cell complex $\oRCM_{G}$ as the cell complex whose cells are $\RCM_{G'},$ for all graphs $G'$ obtained from $G$ by edge contractions. The cell $\RCM_G$ which corresponds to contracting the empty subset of $E(G)$ is included. If $G',G''$ are two such cells, and $G''$ is obtained from $G'$ by contracting the edges $\{e_1,\ldots,e_r\},$ then the corresponding cell $\RCM_{G''}$ is the boundary of the cell $\RCM_{G'}$ glued to it along $\ell_{e_1}=\ldots=\ell_{e_r}=0.$ In this case we say that $\RCM_{G''}$ is a face of $\RCM_{G'}.$
Write ${\oRCM}^{\text{comb}}_{g,k,l}=\coprod\oRCM_{G}/\sim=\coprod\RCM_{G},$ where the union is over all open $(g,k,l)-$ribbon graphs, and $\sim$ is induced by the canonical injections $\oRCM_{G'}\hookrightarrow\oRCM_{G},$ over pairs $(G,G')$ where $G'$ is obtained from $G$ by edge contractions.
Write ${\RCM}^{\text{comb}}_{g,k,l}$ for the locus which is the union over smooth graphs.
Define $\oRCM_G(\pp),~{\oRCM}^{\text{comb}}_{g,k,l}(\pp),~{\RCM}^{\text{comb}}_{g,k,l}(\pp)$ by restricting perimeters to be $p_i.$
In these cases we also define the cell attachments using edge contractions.


The pointwise maps $\Rcomb$ induce moduli maps
\[
\Rcomb:{\oRCM}_{g,k,l}\times\R_{+}^l\to{\oRCM}^{\text{comb}}_{g,k,l},~\Rcomb_{\pp}:{\oRCM}_{g,k,l}\to{\oRCM}^{\text{comb}}_{g,k,l}(\pp),
\]
which send a stable open surface and a set of perimeters to the corresponding graph.

\begin{lemma}\label{lem:minimal_compacs_open}
${\oRCM}^{\text{comb}}_{g,k,l}$ with the cell structure defined above is a piecewise smooth Hausdorff orbifold with corners.
This is the finest topology on the moduli of $(g,k,l)-$graphs such that the map $\Rcomb$ is continuous. ${\oRCM}^{\text{comb}}_{g,k,l}(\pp)$ is compact for any $\pp.$
$\Rcomb:\RCM_{g,k,l}\times\R_{+}^l\simeq{\RCM}^{\text{comb}}.$
Moreover, the analogous claims remain true if we declare some, but not all, of the internal marked points to have perimeters $0.$
In fact, for any effective dual graph $\Gamma$ 
the map $\Rcomb$ restricted to $\RCM_\Gamma\times\R_{+}^l$ is an isomorphism onto its image.
\end{lemma}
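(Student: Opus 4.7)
The plan is to reduce Lemma \ref{lem:minimal_compacs_open} to the closed analog, Theorem \ref{thm:minimal_compacs_closed}, via the doubling construction. The key observation is that by Definition \ref{def:sym_JS} the symmetric JS differential on $\Sigma$ is the restriction of the unique JS differential on $D(\Sigma)=\Sigma_\C$ with the prescribed $\varrho$-invariant perimeters, so the resulting metric graph $G$ is canonically identified with the $\varrho$-invariant data of the closed metric graph $D(G)$ on $\tilde{D(\Sigma)}$. Thus I would set up a doubling map $D^{comb}\colon \oRCM^{comb}_{g,k,l}\to \oCM^{comb}_{g_\C,(2l,n_0)}$ covered by the doubling on moduli $D\colon\oRCM_{g,k,l}\to\oCM_{g_\C,2l+n_0}$, and check that the square with $\Rcomb$ and $\comb$ commutes. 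This reduces each claim to the closed one once I identify the image of $D^{comb}$ with the fixed locus of the natural involution $\varrho^{comb}$ on the closed combinatorial moduli.

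First I would verify, cell by cell, that the $\varrho$-invariant cells of $\oCM^{comb}_{g_\C,(2l,n_0)}$ are exactly the cells $\RCM_G$ of open ribbon graphs: boundary edges of $G$ correspond to $\varrho$-fixed edges of $D(G)$ whose lengths are free, while interior edges come in $\varrho$-pairs with equal lengths; faces come in conjugate pairs and vertices are images of $\varrho$-orbits. This gives a cellular identification $\oRCM^{comb}_{g,k,l}\simeq (\oCM^{comb}_{g_\C,(2l,n_0)})^{\varrho^{comb}}$ compatible with edge contractions on both sides. The local structure of an orbifold with corners at a point of $\oRCM^{comb}_{g,k,l}$ is then read off from the local model at the corresponding $\varrho$-invariant point in the closed combinatorial moduli, where boundary edges become corner directions (with $\ell_e\geq 0$ outward normal) and interior $\varrho$-paired edges become smooth directions; the Hausdorff and piecewise smooth orbifold-with-corners properties are transferred from Theorem \ref{thm:minimal_compacs_closed}.

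Next I would address the topology and continuity statement. Since $\comb_{n_0}$ is continuous and its topology is the finest making it continuous (Theorem \ref{thm:minimal_compacs_closed}), and the diagram commutes with the doubling, the cell-complex topology on $\oRCM^{comb}_{g,k,l}$ is also the finest one making $\Rcomb$ continuous: any weaker topology on the open side would pull back via $D$ to a weaker topology on the closed side violating its universal property. Compactness of $\oRCM^{comb}_{g,k,l}(\pp)$ follows because $D(\oRCM_{g,k,l})$ is compact (as $\oRCM_{g,k,l}$ is) and maps continuously onto $\oRCM^{comb}_{g,k,l}(\pp)$ through $\Rcomb_{\pp}$; alternatively, its image under $D^{comb}$ is a closed subset of the compact space $\oCM^{comb}_{g_\C,(2l,n_0)}(\pp,\bar\pp)$. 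For the isomorphism $\Rcomb\colon\RCM_{g,k,l}\times\R_+^l\simeq\RCM^{comb}$, the closed statement supplies a bijection at the level of doubles; injectivity and surjectivity on the open side follow from the uniqueness in Proposition \ref{prop:streb_gives_graph} and its converse, and the bicontinuity from the one in the closed case.

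Finally, the extension to some marked points of perimeter zero is handled by using Corollary \ref{cor:forgetting_a_bdry_point_differential} and Remark \ref{rmk:normalization-closed.differential} / Corollary \ref{cor:normalization-open.differential}, which identify the differential with the one on the partial forgetful/normalized surface; for an effective dual graph $\Gamma$, the smoothness and trivalence-at-ghost requirements ensure that the closed statement in Theorem \ref{thm:minimal_compacs_closed} applies to $D(\Gamma)$, and the same $\varrho$-invariant argument yields an isomorphism $\RCM_\Gamma\times\R_+^l\simeq\Rcomb(\RCM_\Gamma\times\R_+^l)$. The main obstacle I anticipate is bookkeeping the correct local orbifold model at strata containing shrunk boundaries and boundary nodes, where $\varrho$-fixed cells double in a subtler way (a shrunk boundary of length zero corresponds to a Ramond-type closed node whose smoothing parameter is real, not complex); this is where I would have to argue most carefully that the induced orbifold-with-corners structure from the closed moduli really gives the expected half-dimensional model on the open side, rather than invoking the closed theorem as a black box.
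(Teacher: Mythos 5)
The doubling strategy is a reasonable choice, and it is consistent with how the paper treats the related Proposition~\ref{prop:streb_gives_graph} in the Appendix via the \emph{Half} correspondence between doubled and open ribbon graphs. The paper itself does not give a detailed proof of Lemma~\ref{lem:minimal_compacs_open} (it cites the closed-case references), so you are supplying more than the paper does. That said, there are two concrete problems with the argument as written.

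First, the assertion that the image of $D^{comb}$ \emph{equals} the fixed locus of $\varrho^{comb}$, i.e. the claimed ``cellular identification $\oRCM^{comb}_{g,k,l}\simeq (\oCM^{comb}_{g_\C,(2l,n_0)})^{\varrho^{comb}}$'', is false. A $\varrho^{comb}$-fixed cell of the closed combinatorial moduli is a closed metric ribbon graph $(G,\ell)$ together with an orientation-reversing automorphism $\phi$ swapping conjugate faces; nothing forces the induced real structure on $\Sigma_G$ to be \emph{separating}. Non-separating real ribbon graphs (real curves whose real locus does not bound, even though it may contain the real marked points) occur already in low genus and are not doubles of open surfaces. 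What is true is that the image of $D^{comb}$ is a sub-cell-complex of the fixed locus, and since separating/non-separating is a topological invariant of $(\Sigma_G,\phi)$ preserved by edge contraction, it is a union of connected components of the fixed locus. Once you make this correction the local-model and Hausdorff arguments still go through (they are local and can be read off near the image), but you need to state and use this weaker form, not the false equality.

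Second, the ``finest topology'' paragraph does not prove the claim. Endowing a subset of $\oCM^{comb}_{g_\C,(2l,n_0)}$ with a different topology does not change the topology on the ambient space, so there is no contradiction with the universal property of $\comb$ on the closed side; the sentence ``would pull back via $D$ to a weaker topology on the closed side'' is not a valid inference. The statement that the cell topology is the finest one making $\Rcomb$ continuous amounts to saying $\Rcomb$ is a quotient map for the cell topology, and this has to be checked directly, exactly as in the closed case, by analyzing preimages of cells. Finally, you correctly flag the most delicate point yourself (local models near boundary nodes and shrunk boundaries, where the doubling introduces real Ramond nodes and the $\varrho$-fixed slice of the closed local chart must be checked to be a corner chart, not merely a boundary chart); this needs to be actually carried out rather than deferred, since it is precisely the place where ``restrict the closed orbifold-with-corners structure to the fixed locus'' is least automatic.
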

The proof is similar to the closed case, see \cite{Zvonk,Looij} for a proof of the analogous theorem.

\subsection{JS Stratification for the graded moduli}
\subsubsection{Graded ribbon graphs}
For a metric, open or closed ribbon graph, $(G,\ell),$ write
\[\widetilde{Z}_{G,\ell}=\pi_0(\widetilde{\text{For}}_{\text{spin}}^{-1}((\Rcomb)^{-1}(G,\ell)), {Z}_{G,\ell}=\pi_0({\text{For}}_{\text{spin}}^{-1}((\Rcomb)^{-1}(G,\ell)),\]
where the maps $\widetilde{\text{For}}_{\text{spin}},~{\text{For}}_{\text{spin}}$ are defined in Notation \ref{nn:tilde_forget_spin}. 
For any two generic metrics $\ell,\ell'$ the sets $Z_{G,\ell},Z_{G,\ell'}$ are isomorphic, see Remark \ref{rmk:spin_is_const}. When $G$ has non trivial automorphisms the sets are non canonically isomorphic.
For any $G$, let $Z_G$ be the set $Z_{G,\ell}$ for a fixed generic $\ell.$
Define $\widetilde Z_G$ similarly.

\begin{definition}\label{def:graded_ribbon}
A \emph{metric spin ribbon graph with a lifting} $(G,z,\ell)$ is a metric ribbon graph together with an element $z\in \widetilde Z_{G,\ell}.$ The graph is called \emph{graded} when $z\in Z_{G,\ell}.$ A graded graph is a pair $(G,z),z\in Z_G.$ Similarly, in the closed setting, a \emph{metric spin ribbon graph} $(G,z,\ell)$ is a metric ribbon graph together with $z\in \widetilde Z_{G,\ell}.$

The \emph{normalization} $\NNN(G,z,\ell)$ of $(G,z,\ell)$ is the smooth, not necessarily connected graph $\coprod(G_i,\ell_i,z_i),$ where $(G_i,\ell_i)$ are the components of $\NNN(G,\ell),$ and $z_i\in \widetilde{Z}_{G_i,\ell_i}$ are induced from $z$ by Proposition \ref{prop:spin_on_comps}. A half node is \emph{legal} if it is legal as a marked point in the graded structure of $\NNN(G,z).$
\end{definition}
It follows from Proposition \ref{prop:streb_gives_graph} that
a graded surface, together with perimeters $\{p_i\}_{i\in\I},$ defines a unique graded metric graph $(G,z,\ell),$ where $(G,\ell)$ is embedded in $K_{\B,\PP}({\text{For}}_{\text{spin}}(\Sigma)),$
as in Proposition \ref{prop:streb_gives_graph} and $z$ is the class of graded spin structures which contains the graded structure of $\Sigma.$ When $(G,\ell)$ generic and effective, all possible automorphisms of $(G,\ell)$ leave all half edges in place, and may only act non trivially only on isolated contracted components, which are of genus $0.$ Thus, the action of this automorphism group on $Z_G=Z_{G,\ell}$ is trivial, and hence, in this case, $Z_G$ isomorphic to $\Spin(\Sigma),$ and any element $z$ of it corresponds to a unique graded structure.

Moreover, by Corollary \ref{cor:spin_on_comps}, if in addition $G$ has no contracted boundaries, then $Z_{G}$ is in one to one correspondence with isomorphism classes of tuples $(\SL_1,\ldots,\SL_r)$ where each $\SL_i$ is a spin structure with a lifting on the $i^{th}$ component of $\NNN(\Sigma)$ such that all original boundary marked points are legal and for any boundary node of $\Sigma$ exactly one half is legal.
\begin{definition}\label{def:oR}
A spin ribbon graph with a lifting $(G,z),$ with or without a metric $\ell,$ is called \emph{effective} if $G$ is effective, and $z$ is a spin structure with a lifting in which for every contracted component $v\in V(G),$ all boundary marked points of the isolated component in $\Norm^{-1}(v)$ are legal. In case $v$ is not isolated, it is equivalent to all half nodes in $(N^{B})^{-1}(v)$ being illegal. An effective graded graph $(G,z)$ is \emph{trivalent} if $G$ is trivalent.
The graph is \emph{smooth} if its underlying graph is. These definitions extend to the closed case, without the assumptions on boundary nodes.

Denote by $\oSR^0$ the set of isomorphism classes of graded smooth trivalent ribbon graphs, 
write $\oR^0$ for the set of their underlying open ribbon graphs. Denote by $\oSR^0_{g,k,l}\subseteq\oSR^0$ the subset whose faces are marked $[l]$ and the boundary points are marked by $[k].$ Define $\oR^0_{g,k,l}$ similarly. 

Let $\oSRc^0_{g,k,l}$ be the collection of all graphs in $\oSR^0_{g,k,l}$ with an odd number of boundary marked points on each boundary component. Define $\oRc^0_{g,k,l}$ similarly.
\end{definition}
Note that in a trivalent graph, by definition if $v$ is a contracted component, the unique ghost component in $\Norm^{-1}(v)$ has all its marked points \emph{legal}.

Recall that smooth graded surfaces have no internal markings of twist $1$ or illegal boundary markings. Therefore an immediate corollary of Proposition \ref{prop:parity_disks}, which can be taken as an alternative definition of $\oR^0_{g,k,l},$ is
\begin{cor}\label{cor:parity_oR}
$\oR^0_{g,k,l}\neq\emptyset$ precisely when $2|g+k-1.$ Every trivalent smooth graph satisfying this constraint belongs to $\oR^0_{g,k,l}.$
\end{cor}

\begin{nn}
We define the map $\comb$ between graded surfaces and graded metric ribbon graphs by $\comb(\Sigma,\SL,s,\pp)=(G,z,\ell)$ where $(G,\ell)=\Rcomb(\Sigma,\pp)$ and $z\in Z_{G,\ell}$ is the corresponding class.
Write $\comb_\pp=\comb(-,-,-,\pp).$ Write ${\text{For}}_{\text{spin}}^{\text{comb}}(G,z,\ell)=(G,\ell).$
\end{nn}

\begin{prop}\label{prop:for_base_JS}
Suppose $\comb(\Sigma,\pp)=(G,z,\ell).$
\begin{enumerate}
\item
Then $\comb(\NNN(\Sigma),\pp)=\NNN(G,z,\ell),$ where preimages of nodes in $\Sigma$ will be internal markings of perimeter $0.$
\item
Suppose $\Sigma'$ is obtained from $\Sigma$ by forgetting an illegal marked point $x_\nu$ whose removal makes no component unstable. Suppose that $x_\nu$ is mapped to vertex $v$ of $G.$ Write $(G',z',\ell')=\comb(\Sigma',\pp).$ Then $(G',\ell')$ is obtained from
$(G,z,\ell)$ by the following procedure. If $\text{deg}(v)=2,$ and $v$ has a zero genus defect and marking defect $\{\nu\},$ remove $v$ from the graph, unite its two edges $e_1,e_2$ to one edge $e,$ define $\ell'(e)=\ell(e_1)+\ell(e_2)$ and for the other edges put $\ell'=\ell.$ Otherwise the graph and metric do not change, but the marking $\nu$ is removed from the marking defect of $v$.
$z'$ is the image of $z$ under the natural map $Z_{G,\ell}\to Z_{G',\ell'}$ obtained from Observation \ref{obs:forgetful_illegal} with $\B'=\{\nu\}.$
\end{enumerate}
\end{prop}
\begin{proof}
The first item is a consequence of Corollary \ref{cor:normalization-open.differential}. The second follows from Corollary \ref{cor:forgetting_a_bdry_point_differential} and Observation \ref{obs:forgetful_illegal}.
\end{proof}

\subsubsection{Combinatorial moduli for graded surfaces, bundles and forms}
Denote by $\oCM_{g,k,l}^{\text{comb}}$ the set of metric graded $(g,k,l)-$ribbon graphs. Write $\oCM_{g,k,l}^{\text{comb}}(\pp)$ for the subspace of graphs with fixed perimeters $\pp.$ Define $\CM_{g,k,l}^{\text{comb}}$ as the subspace of smooth graphs. Define similarly $\CM_{g,k,l}^{\text{comb}}(\pp).$
The pointwise maps $\comb$ induce moduli maps
\[
\comb:\oCM_{g,k,l}\times\R_{+}^l\to\oCM^{\text{comb}}_{g,k,l},~\comb=\comb_{\pp}:\oCM_{g,k,l}\to\oCM^{\text{comb}}_{g,k,l}(\pp),
\]
which send a stable graded surface and a set of perimeters to the corresponding graph.
Endow these spaces with the finest topology so that $\comb$ is continuous.

We now study the cell structure of $\oCM_{g,k,l}^{\text{comb}}.$ Recall that a metric $\ell$ is generic if the metric graph has no automorphisms. In particular, in the open and connected setting, metrics which give all edges different lengths are generic.
For a generic $\ell\in\RCM_G,$ choose $z\in Z_G=Z_{G,\ell},$ define $\CM_{(G,z)}$ to be the connected component of $({\text{For}}_{\text{spin}}^{\text{comb}})^{-1}(\RCM_G)$ which contains $(G,z,\ell).$

The map ${\text{For}}_{\text{spin}}^{\text{comb}}$ is continuous. Moreover, by the same reasoning as in the non combinatorial case, see the discussion in the end of Subsection \ref{subsub:moduli}, it is an orbifold branched cover, and over any $\RCM_G$ it is an orbifold cover.

Thus, $({\text{For}}_{\text{spin}}^{\text{comb}})^{-1}(\RCM_G)$ is an orbibundle over $\RCM_G,$ with a generic fiber $Z_G.$
Since $\RCM_G =\R_+^{E(G)}/\text{Aut}(G),$ such a bundle must be of the form
\[({\text{For}}_{\text{spin}}^{\text{comb}})^{-1}(\RCM_G) \simeq (\R_+^{E(G)}\times Z_G)/\text{Aut}(G),\]
for some action of $\text{Aut}(G)$ we now explain.

Let $C\subseteq\RCM_G$ be the locus of generic metrics, and $\overline{C}\subseteq\R^{E(G)}$ its preimage under the quotient by $\text{Aut}(G).$ Except from some borderline cases which can be treated separately its complement is of real codimension at least $3.$ Over $C$ the fiber of the bundle is always of size $|Z_G|.$ Denote this fiber bundle by $E,$ and let $\overline{E}\to\overline{C}$ be its pullback to $\overline{C}.$ $\pi_1(\overline C)$ is trivial, as $\R^{E(G)}\setminus\overline{C}$ is of codimension at least $3.$ Thus $\overline{E}$ must be trivial, and is hence isomorphic to $\overline{C}\times Z_G.$

Let $\overline{\ell}\in\overline{C}$ be any point, let $\ell$ be its image in $C.$
Recall that as an orbispace, $\text{Aut}(G)\simeq \pi_1(\overline{C}/\text{Aut}(G),\ell),$ and this isomorphism can be made explicit as follows:
For $g\in \text{Aut}(G),$ choose any path $\bar{\gamma}^g:[0,1]\to\overline{C},$ with $\bar{\gamma}_0^g=\overline{\ell}\in\R_+^{E(G)}, \bar{\gamma}_1^g=g\cdot\overline{\ell},$
and set $\gamma_g$ to be its $\bar{\gamma}^g$ to $C.$

Parallel transport $z=z_0$ along $\gamma^g$ to get $z_1.$ This can be done as the fiber is $0-$dimensional.
Define $g\cdot(\overline{\ell},z)=(g\cdot\overline{\ell},z_1).$
This action is independent of choices, and can be defined continuously over all $\overline{E}.$ This gives us the orbibundle structure over $C.$ Again by continuity, it can be uniquely extended to an action on $\R^{E(G)}_+\times Z_G.$

Note that in particular, we have defined an action of $\text{Aut}(G)$ on $Z_G.$
Define the group $\text{Aut}(G,z)$ as the subgroup of $\text{Aut}(G)$ which leaves $z$ invariant. Then $\CM_{(G,z)}\simeq\R^{E(G)}_+/\text{Aut}(G,z).$
Define $\CM_{(G,z)}(\pp)$ as the subspace where the perimeters are $\pp.$ 

For $e\in E(G),$ define the \emph{edge contraction} to be $\partial_e(G,z)=(\partial_eG,\partial_ez),$
where $\partial_ez\in Z_{\partial_e G}$ using the cell structure of $\oRCM_G$ and the topology of $\oCM_{g,k,l}^{\text{comb}}.$ Explicitly, fix $\pp$ and take an arbitrary continuous path $([\Sigma_t])_{t\in[0,1]}\subset\oCM_{g,k,l},$ such that $\text{comb}([\Sigma_t])\in \CM_{(G,z)}$ for $t>0$ and ${\text{For}}_{\text{spin}}(\text{comb}([\Sigma_0]))\in\RCM_{\partial_eG}.$
Suppose that $\text{comb}([\Sigma_0])\in\CM_{(\partial_eG,z')}.$ Then $z'=\partial_ez,$ and this definition is easily seen to be independent of choices.

An explicit combinatorial description for the special case of trivalent graphs appears in Subsection \ref{subsec:codim_geq_1}.

As in the spinless case $\oCM_{(G,z)},$ the closure of $\CM_{(G,z)}$ in $\oCM_{g,k,l}^{\text{comb}},$ is the union of cells $\CM_{(G',z')},$ where $(G',z')$ is obtained from $(G,z)$ by edge contractions, and the attachment of the cells is also defined via the edge contractions, i.e. $\CM_{(G',z')}$ is glued to $\CM_{(G,z)}$ along $\ell_{e_1}=\ldots=\ell_{e_r}=0,$ if $e_1,\ldots,e_r$ are the edges of $G$ which are contracted to obtain $G'.$
%
In this case we say that $\CM_{(G',z')}$ is a face of $\CM_{(G,z)}.$ We similarly define $\oCM_{(G,z)}(\pp).$ 
Again as in the spinless case we can now define the orbifold cell complex structure on $\oCM^{\text{comb}}_{g,k,l}.$ \[\oCM^{\text{comb}}_{g,k,l}=\coprod\oCM_{(G,z)}/\sim=\coprod\CM_{(G,z)},\] where the union is over all connected components which correspond to graded $(g,k,l)-$ribbon graphs, and $\sim$ is induced by edge contractions. We similarly define the orbifold cell complex structure on $\oCM^{\text{comb}}_{g,k,l}(\pp).$ In both cases the cell structure agrees with the topology.
Denote the quotient-by-$\sim$-map by $\Xi.$

A graph $(G,z)$ corresponds to a boundary stratum of $\oCM_{g,k,l}^{\text{comb}},$ that is $\CM_{(G,z)}\subseteq\comb(\partial\oCM_{g,k,l}\times\R_+^l)$ if and only if it has at least one boundary node or contracted boundary. In this case we call it a \emph{boundary graph}. All of the above constructions extend to the setting of spin ribbon graphs with a lifting, and to (closed) spin ribbon graphs.

\begin{lemma}\label{lem:minimal_compacs_spin}
Suppose $2|g+k-1.$ Then
$\oCM_{g,k,l}^{\text{comb}},~\oCM_{g,k,l}^{\text{comb}}(\pp)$ are piecewise smooth Hausdorff orbifolds with corners, the latter is compact.

The maps $\comb,\comb_{\pp}$ are isomorphisms onto their images when restricted to the open dense subsets $\CM_{g,k,l}\times\R_{+}^l,\CM_{g,k,l}.$

$\comb_{\pp}$ induces an orientation on $\oCM_{g,k,l}^{\text{comb}},$ with this orientation $\text{deg}(\comb_{\pp})=1.$

Analogous claims are true if we declare some, but not all, of the internal marked points to have perimeters $0.$ Analogous claims are also true if we allow some internal markings to be Ramond or if we consider the case of closed (twisted) spin surfaces.
In addition, for an effective dual spin graph with a lifting $\Gamma,$ the maps $\comb,\comb_{\pp}$ restricted to $\CM_\Gamma\times\R_{+}^l,\CM_\Gamma$ are isomorphisms onto their images.
\end{lemma}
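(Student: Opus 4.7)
The plan is to deduce the lemma from its non-graded counterpart, Lemma~\ref{lem:minimal_compacs_open}, together with the structural facts about spin structures and liftings collected in Propositions~\ref{prop:spin_on_comps} and~\ref{prop:parity_disks} and the isotopy invariance statements of Lemma~\ref{lem:Q_q_and_bridges}. The rough picture is that $For_{spin}^{comb}\colon\oCM^{comb}_{g,k,l}\to\oRCM^{comb}_{g,k,l}$ is set-theoretically a finite cover whose fibre over $(G,\ell)$ is $Z_{G,\ell}$; once this is promoted to an orbifold (branched) cover, all of the topological and orientation statements descend from the corresponding statements on the unspinned side.

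First I would analyse the map $For^{comb}_{spin}$ locally. Fix a point $(G,z,\ell)$ with $(G,\ell)$ generic, so $Z_{G,\ell}\simeq\mathrm{Spin}(\Sigma)$ for any $\Sigma\in\Rcomb^{-1}(G,\ell)$ by the remark following Definition~\ref{def:graded_ribbon}, and $|Z_{G,\ell}|=2^g$ by Proposition~\ref{prop:parity_disks}. The isotopy invariance part of Lemma~\ref{lem:Q_q_and_bridges} implies that, along any continuous path of generic underlying graphs, the set $Z_{G,\ell}$ deforms as a locally constant sheaf, giving $For^{comb}_{spin}$ the structure of a finite \'etale cover over the top-dimensional cells of $\oRCM^{comb}_{g,k,l}$. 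Over deeper cells the fibre may drop, corresponding to the phenomenon that liftings on the two sides of a boundary edge can merge; but using item~\ref{it:legal_illegal} of Lemma~\ref{lem:Q_q_and_bridges}, which identifies which half-node degenerates to the legal/illegal side, one sees that this is precisely a branched cover structure, and endows $\oCM^{comb}_{g,k,l}$ with the Hausdorff orbifold-with-corners structure pulled back from $\oRCM^{comb}_{g,k,l}$. Compactness of $\oCM^{comb}_{g,k,l}(\pp)$ follows because the cover is finite and $\oRCM^{comb}_{g,k,l}(\pp)$ is compact by Lemma~\ref{lem:minimal_compacs_open}.

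Next I would establish continuity and the isomorphism statements for $\comb,\comb_\pp$. Continuity follows from continuity of $\Rcomb$ (Lemma~\ref{lem:minimal_compacs_open}) plus Proposition~\ref{prop:for_base_JS}, which ensures that the spin data assigned to the graph is compatible with normalization and with forgetting illegal points, hence varies continuously as the surface deforms. Over the smooth locus $\CM_{g,k,l}\times\R_+^l$ the map $\Rcomb$ is a homeomorphism onto its image; combined with the fact that, by Corollary~\ref{cor:spin_on_comps} and the discussion after Definition~\ref{def:graded_ribbon}, $Z_{G,\ell}$ is canonically in bijection with $\mathrm{Spin}(\Sigma)$ on the smooth locus, this upgrades to $\comb$ being an isomorphism onto its image. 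The same argument, using the effective case of Lemma~\ref{lem:minimal_compacs_open} and the definition of an effective graded dual graph, handles the last sentence of the lemma and also the variant where some internal markings have perimeter zero (here one only needs to check that the additional marked points with $p_i=0$ correspond, via Proposition~\ref{prop:streb_gives_graph} and Corollary~\ref{cor:normalization-open.differential}, to special vertices of the graph).

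Finally, for the orientation and degree statement, the canonical orientation of $\oCM_{g,k,l}$ from Lemma~\ref{lem:geometric_orientability} pushes forward along $\comb_\pp$ to an orientation of $\oCM^{comb}_{g,k,l}(\pp)$, well-defined because $\comb_\pp$ is an oriented isomorphism on the dense smooth locus. By construction the degree of this push-forward on a top cell is $+1$, hence $\deg(\comb_\pp)=1$ globally. The main obstacle I anticipate is bookkeeping at the strata parameterising surfaces with boundary nodes and shrunk boundaries: one must verify in detail that the cell structure induced by edge contractions on the graded side matches the boundary-strata structure of $\oCM_{g,k,l}$, in particular that the identification of the two liftings at a shrunk boundary mentioned in Remark~\ref{rmk:shrunk_bdry} is compatible with the map $\comb_\pp$, and that no spurious non-Hausdorff behaviour appears at Ramond-type boundary strata. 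Proposition~\ref{prop:spin_on_comps} and item~\ref{it:legal_illegal} of Lemma~\ref{lem:Q_q_and_bridges} are the tools that resolve this, but the case analysis is the main technical content of the proof.
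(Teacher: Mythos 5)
The paper does not actually prove this lemma: immediately after stating it, the text says only ``The proof is similar to the closed case and will be omitted. The orientation on $\oCM_{g,k,l}^{comb}$ will be constructed explicitly later.'' So there is no argument in the paper to compare yours against; I will assess your sketch on its own terms and against the surrounding material.

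Your overall strategy---deduce the graded statement from the non-graded Lemma~\ref{lem:minimal_compacs_open} by studying $For_{spin}^{comb}$ as a finite (branched) cover, using Propositions~\ref{prop:spin_on_comps}, \ref{prop:parity_disks}, \ref{prop:for_base_JS}, Corollary~\ref{cor:spin_on_comps} and the isotopy-invariance clauses of Lemma~\ref{lem:Q_q_and_bridges}---is the natural route and tracks the paper's own development closely. You also correctly single out the shrunk-boundary (Ramond) strata and the cell-structure bookkeeping as the real technical content, which is precisely where Remark~\ref{rmk:shrunk_bdry} already warns that a subtlety lies. Two points merit tightening. First, when you ``pull back'' the Hausdorff orbifold-with-corners structure from $\oRCM^{comb}_{g,k,l}$ along the finite branched cover, you are implicitly asserting that the pullback topology agrees with the one the paper actually imposes on $\oCM_{g,k,l}^{comb}$ (the finest topology making $\comb$ continuous); that agreement is true but needs to be said, since otherwise you may have constructed a different space. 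Second, your stated mechanism for branching (``liftings on the two sides of a boundary edge can merge'') is not the right picture: the lemma itself asserts $For_{spin}^{comb}$ is an honest cover over \emph{every} cell $\RCM_G$, and by Lemma~\ref{lem:Q_q_and_bridges}\ref{it:legal_illegal} a boundary node has exactly one legal side, so the fibre over an ordinary bridge-contraction stratum does not collapse. The genuine source of branching is the loop/shrunk-boundary degeneration, where Remark~\ref{rmk:shrunk_bdry} identifies two graded strata because the lifting at the Ramond node is not part of the data; you reach this only at the end, after an earlier misdiagnosis. With those two points repaired, the sketch fills the gap the paper leaves open in a reasonable way.
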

The proof is similar to the closed case and will be omitted. The orientation on $\oCM_{g,k,l}^{\text{comb}}$ will be constructed explicitly later.

The combinatorial $S^1-$bundles $\CF_i,~i\in[l],$ are defined as in Definition \ref{def:CF_i}. Again these carry a natural piecewise smooth structure, compatible with the natural piecewise smooth structures on $\oCM_{g,k,l}^{\text{comb}}.$
The forms $\alpha_i,\omega_i,\bar{\alpha}_i,\bar{\omega}_i,\bar{\omega}$ defined as in Definition \ref{def:CF_i} and Equation \eqref{eq:alpha_i_omega_i}.
\begin{definition}\label{def:l-sets}
Let $S\subseteq\mathbb{N}$ be a finite set. A $(S,l)-$set $L$ is a function $L : S\rightarrow [l].$ We write $S=\text{Dom}(L).$
In case $S=[d],$ we simply write a $(d,l)-$set. We say that $L$ is a $l-$set if the set $S$ is understood from the context. 

Given two $l-$sets, $L,L',$ we write
\[
L'\subseteq L,
\]
and say that $L'$ is a subset of $L,$ and write $L'\subseteq L,$ if
\[
\text{Dom}(L')\subseteq \text{Dom}(L)~\text{and}~ L|_{\text{Dom}(S')}=L'.
\]
In this case we define the $l-$set $L\setminus L'$ by
\[
L\setminus L' :\text{Dom}(L)\setminus \text{Dom}(L')\to[l], ~(L\setminus L')(s)=L(s).
\]
In case $j\in \text{Dom}(L)$ we write $j\in L.$ For $i\in[l]$ we put
\[
L_i = L^{-1}(i).
\]

\end{definition}
The $(S,l)-$sets will be used to encode direct sums of tautological lines in the following way.
\begin{nn}\label{nn:comb_sphere}
Recall Construction \ref{nn:spherization}. To any $(S,l)-$set $L$ we associate a vector bundle $E_L$ and a sphere bundle $S_L$ given by
\[E_L = \sum_{i\in S} \CL_{L(i)}\to\oCM_{g,k,l},~~S_L=S((\CF_{L(i)})_{i\in S}).\]
We will also consider the sphere bundle $S(E_L)$ associated to $E_L.$

Define an angular form $\Phi_L$ for $S_L$ by Formula \eqref{eq:angular}, and using Kontsevich's forms for the copy $\CF_{L(i)},$ of the $L(i)^{th}~S^1-$bundle.
Explicitly,
\begin{align*}
\Phi_{L}(\{{r}_i\}_{i\in S},& \{\hat{\alpha}_i\}_{i\in S},\{\hat{\omega}_i\}_{i\in S}) =\\\notag&
\sum_{k=0}^{|S|-1} 2^k k!\sum_{i\in S}{r}_i^2\hat{\alpha}_{i}
\sum_{I\subseteq S\setminus\{i\}, |I|=k}\bigwedge_{j\in I}(r_jdr_j\wedge\hat{\alpha}_{j})\bigwedge_{h\notin I\cup\{i\}}\hat{\omega}_{h},
\end{align*}
where $\hat\omega_{i}$ is Kontsevich's 2-form $\omega_{L(i)},$ and $\hat{\alpha}_i$ is a \emph{copy} of Kontsevich's 1-form $\alpha_{L(i)}.$ We refer to it as a copy, since for $i_1,i_2\in L_j,$ $\hat\alpha_{i_1},\hat\alpha_{i_2}$ are given by the same formula of the angular $1-$form of $\CF_{j},$ but with different $\phi$ variables.
Write $$\omega_L  = -d\Phi_L=\bigwedge_{i\in S}\omega_{L(i)},~p^{2L}=\prod_{i\in S} p^2_{L(i)},~\bar{\omega}_L=\pp^{2L}\omega_L,~\bar{\Phi}_L=\pp^{2L}\Phi_L.$$
When $S\neq [d]$ we will sometimes omit the assumption that $\sum_{i\in S} r_i^2=1,$ and then $-d\Phi_L$ gets a correction, see Remark \ref{rmk:phi_as_poly}.

When it is not clear from context, we write $\alpha^G_j$ to indicate the specific graph $G.$ The same remark goes to the other forms.
\end{nn}

Exactly as in the closed case, we have
\begin{lemma}\label{lem:zvonk_open}
\begin{enumerate}
\item For $i\in[l],~\comb^*\CF_i\simeq S^1(\CL_i)$ canonically. As a result, $\comb^*S_L\simeq S(E_L)$ canonically.
\item $\alpha_i,\omega_i$ are piecewise smooth angular $1-$form and Euler $2-$form for $S^1(\CL_i).~\Phi_L$ is an angular form of $S_L,~\omega_L$ its Euler form.
\item For $(G,z)\in\oSR_{g,k,l}^0,$ there is a canonical identification \[(\CF_i\to\oCM_{(G,z)})\simeq\Xi^*(\CF_i\to\oCM_{g,k,l}^{\text{comb}}).\]  Similarly for the bundles $S_L.$
\end{enumerate}
\end{lemma}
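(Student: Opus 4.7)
The proof follows the lines of Kontsevich's closed case (Theorem \ref{thm:zvonk}), exploiting the fact that by definition of symmetric JS differentials (Definition \ref{def:sym_JS}), the combinatorial graph $\Rcomb(\Sigma,\pp)$ locally near any interior marked point $z_i$ coincides with Kontsevich's combinatorial graph for the doubled surface $D(\Sigma)$ near $z_i$. In particular, the face $F_i$ of the open graph is identical, as an embedded punctured disk in a neighborhood of $z_i$, to the face $F_i$ of the closed graph on $D(\Sigma)$.

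For part (1), I would invoke Kontsevich's canonical identification directly. Near $z_i$ one can choose a local holomorphic coordinate $w$ in which the JS differential reads $-(p_i/2\pi)^2(dw/w)^2$; the horizontal trajectories filling $F_i$ are concentric circles $|w|=r$, and the boundary circle of $F_i$ (traversed in the direction induced by the orientation) is canonically identified with the unit circle in $T_{z_i}^*\Sigma$ via this coordinate, independent of the choice of $w$ (see \cite{Kont,Zvonk}). This gives $\Rcomb^*\CF_i \simeq S^1(\CL_i)$ on the open combinatorial side; composing with $For^{comb}_{spin}$ one obtains $\comb^*\CF_i \simeq S^1(\CL_i)$ on the graded side. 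The sphere-bundle version $\comb^*S_L\simeq S(E_L)$ follows formally from Construction \ref{nn:spherization}, since both sides arise from the pairwise-orthogonal direct sum of the corresponding line/circle bundles weighted by radii $r_i$.

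For part (2), given the identification from (1), the assertions about $\alpha_i$ and $\omega_i$ reduce to calculations on a single fiber of $\CF_i$, which is a circle of perimeter $p_i$ parametrized by $\phi_i\in[0,p_i]$. A direct computation gives
\[
\int_{\text{fiber}}\alpha_i = \sum_j \frac{\ell_j}{p_i}\int_{\phi_j/p_i}^{\phi_{j+1}/p_i} d\left(\frac{\phi}{p_i}\right) = \sum_j \frac{\ell_j^2}{p_i^2}\cdot\frac{1}{\ell_j/p_i} = 1
\]
(more properly, one checks $\sum_j \ell_j/p_i=1$ after integrating piecewise), and $d\alpha_i=-\omega_i$ by the same identities as in \cite{Kont,Zvonk}. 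The statement that $\Phi_L$ is an angular form of $S_L$ with Euler form $\omega_L=\bigwedge_i\omega_{L(i)}$ is then immediate from Theorem \ref{thm:angular} applied to the orthogonal sum of the $S^1$-bundles $\CF_{L(i)}$ with their angular forms $\alpha_i$ and Euler forms $\omega_i$.

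Part (3) is essentially tautological. By construction $\oCM_{(G,z)}=\R_{\geq 0}^{E(G)}/Aut(G,z)$, the quotient map $\Xi$ is a local homeomorphism onto its image, and the bundle $\CF_i$ is defined cell by cell as the boundary circle of the $i$-th face in the metric graph; the identification $\CF_i\to\oCM_{(G,z)}\simeq \Xi^*(\CF_i\to\oCM^{comb}_{g,k,l})$ therefore holds pointwise on each simplex and globalizes trivially, and the same applies to $S_L$. The one point demanding care is the main obstacle of the whole argument: checking that Kontsevich's identification in part (1) is continuous (indeed piecewise smooth) across the cell boundaries where edges contract, so that the local isomorphisms glue into a globally defined bundle isomorphism — but this is exactly the content already handled in \cite{Kont,Zvonk} for the closed case and transfers verbatim through the doubling identification, since the combinatorial behavior near an interior marked point is insensitive to the presence of the boundary.
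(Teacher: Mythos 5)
Your approach---reducing the open case to Kontsevich's closed-case result via the doubling construction, noting that the symmetric JS differential is by definition locally the closed one near each interior marked point, then invoking Theorem~\ref{thm:angular} for the $\Phi_L$ statement and a cell-by-cell argument for part (3)---is exactly how the paper treats this lemma, which it simply states as following ``exactly as in the closed case'' with no further proof. One caveat: the displayed computation of $\int_{\text{fiber}}\alpha_i$ is not correct as written, since the $\phi_j$ are all functions of the single fiber coordinate $q$ and cannot be treated as integration limits of a single variable $\phi$; away from vertices one has $d(\phi_j/p_i)|_{\text{fiber}}=-dq/p_i$ for every $j$, so $\alpha_i|_{\text{fiber}}=-\left(\sum_j \ell_j/p_i\right)dq/p_i=-dq/p_i$, and the integral equals $1$ because of the clockwise orientation declared in Definition~\ref{def:CF_i}. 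Since you correctly defer to~\cite{Kont,Zvonk} for this verification, the error does not affect the logic of your argument.
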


\begin{nn}
Recall Proposition \ref{prop:for_base_JS}.
Let $(G,z,\ell)$ be a metric spin ribbon graph with a lifting. Define the graph $\CBB(G,z,\ell)=(\CBB G,\CBB z,\CBB \ell)$ by first taking the normalization of $(G,z,\ell),$ and then forgetting isolated components, the lifting data in contracted boundaries, and the new illegal marked points. 
Let $\CBB:\CM_{(G,z)}\to\CM_{(\CBB G,\CBB z)}$ the induced map on the moduli.
\end{nn}
Observe that
\begin{obs}\label{obs:identificationCBB}
For any spin ribbon graph with a lifting $(G,z),$ and a face marked $i,
~\CF_i\to\CM_{(G,z)}\simeq\CBB^*\left(\CF_i\to\CM_{\CBB(G,z)}\right)$ canonically. A similar claim holds for $S_L.$
\end{obs}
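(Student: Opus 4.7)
The plan is to show that the $S^1$-fiber of $\CF_i$ over a point of $\CM_{(G,z)}$, namely the boundary cycle of face $i$ in $G$, is intrinsically the same metric circle as the boundary cycle of face $i$ in $\CBB(G,z,\ell)$, and that this pointwise identification varies piecewise smoothly over the cell. First I would verify that the face marked $i$ survives $\CBB$. Since a face is (by definition) an $s_2$-orbit of internal half-edges with positive perimeter $p_i$, it lies in a component of $\NNN(G)$ that is not an isolated shrunk component, so that component is retained by $\CBB$. Moreover, the face itself, viewed as a cyclic sequence of internal half-edges, corresponds to an $s_2$-cycle of the new $s_0$ in $\CBB G$, simply because normalization only splits vertices along $s_0$-cycles (preserving each cycle) and the edge-merging from forgetting illegal degree-$2$ marked points only alters $s_0$ by composing consecutive half-edges.

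Next I would analyse the two modifications performed by $\CBB$ on the boundary circle $\partial F_i$ of face $i$, regarded as a metric circle of circumference $p_i=\sum_{h\in H_i}\ell_h$. The normalization step leaves every edge and its length fixed (it only restructures vertices when $s_0$ is intransitive or when a shrunk component is detached), so the metric circle $\partial F_i$ is literally unchanged. The forgetful step, described in Proposition \ref{prop:for_base_JS}(2), removes an illegal degree-$2$ marked point $v$ sitting on $\partial F_i$ between two edges $e_1,e_2$, replacing them by a single edge $e$ of length $\ell(e_1)+\ell(e_2)$; the underlying metric circle $\partial F_i$ is again unchanged, only its subdivision into edges is coarser. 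In particular there is a tautological isometry $\iota_{G,z,\ell}\colon \partial F_i^{G}\to \partial F_i^{\CBB G}$ given by the identity on the set of points.

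With this preparation in hand, I would define the bundle isomorphism
\[
\Psi\colon \CF_i\big|_{\CM_{(G,z)}}\longrightarrow \CBB^*\bigl(\CF_i\to\CM_{(\CBB G,\CBB z)}\bigr),\qquad (G,z,\ell,q)\longmapsto \bigl(\CBB(G,z,\ell),\,\iota_{G,z,\ell}(q)\bigr),
\]
and check piecewise smoothness. On the cell $\CM_{(G,z)}$, the bundle $\CF_i$ is trivialized by arc-length $\phi\in[0,p_i)$ measured counter-clockwise from a chosen base vertex of $\partial F_i$, and the same trivialization on $\CM_{(\CBB G,\CBB z)}$ uses the image base vertex in $\partial F_i^{\CBB G}$. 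In these trivializations $\Psi$ acts either as the identity on $\phi$, or as a translation by the total length of the forgotten arcs lying between the two base vertices; in either case the translation is a linear function of the $\ell_e$'s, so $\Psi$ is piecewise smooth and fiberwise orientation-preserving. Canonicity is manifest because no choice other than the combinatorial data of the face was made.

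Finally, the assertion for $S_L$ follows formally: by Construction \ref{nn:spherization} $S_L$ is assembled from the bundles $\CF_{L(i)}$ as the $\sum r_i^2=1$ locus modulo the collapsing of circle factors where $r_i=0$. Applying the canonical isomorphism $\Psi_{L(i)}$ to each factor descends to an isomorphism of the quotients, yielding the canonical identification $S_L\simeq \CBB^* S_L$. The only genuinely delicate point --- ensuring canonicity --- reduces to the observation above that $\iota_{G,z,\ell}$ is the identity map on the underlying set of points of $\partial F_i$, so no auxiliary choice is needed.
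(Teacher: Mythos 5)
Your proposal is correct and follows essentially the same route as the paper, which justifies the observation in one line by "the natural identification of the boundary of the $i^{th}$ faces in $G$ and $\CBB G$"; you simply spell out that identification (faces survive normalization and the forgetting of illegal degree-$2$ points, with only the edge subdivision coarsening) and its piecewise-smooth, choice-free nature. No gaps.
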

The observation follows from the natural identification of the boundary of the $i^{th}$ faces in $G,\CBB G.$

\begin{prop}
A special canonical multisection $s$ of $S(E_L)$ is a pull back of a multisection $s'$ of $S_L.$
\end{prop}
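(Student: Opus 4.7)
The plan is to define $s'$ on $\oCM_{g,k,l}^{comb}$ so that $\comb^* s' = s$. Since $\comb^* S_L \simeq S(E_L)$ canonically by Lemma \ref{lem:zvonk_open}, and since $\comb$ restricts to an isomorphism on the open dense stratum $\CM_{g,k,l}$ by Lemma \ref{lem:minimal_compacs_spin}, the section $s'$ is already forced on the combinatorial smooth cells. The content of the proposition is that this tentative definition extends consistently across the lower-dimensional cells of $\oCM_{g,k,l}^{comb}$, which reduces to showing that whenever two surfaces $\Sigma_1,\Sigma_2\in\oCM_{g,k,l}$ satisfy $\comb(\Sigma_1)=\comb(\Sigma_2)$, one has $s(\Sigma_1)=s(\Sigma_2)$ under the canonical identification of fibers coming from $\comb^* S_L \simeq S(E_L)$.

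Because $s=\bigoplus s_{ij}$ decomposes into multisections $s_{ij}$ of the line bundles $\CL_{L(i)}$, I will work one summand at a time. Fix a component $s_{ij}$ and a surface $\Sigma\in\CM_\Lambda$ in some boundary stratum $\Lambda\in\partial\Gamma_{g,k,l}$. The special canonical hypothesis gives
\[
s_{ij}(\Sigma) = s^{v^*_{L(i)}(\Lambda)}_{ij}\bigl(\Phi_{\Lambda,L(i)}(\Sigma)\bigr),
\]
so it is enough to show that the right-hand side depends on $\Sigma$ only through $\comb(\Sigma)$. The key input here is Proposition \ref{prop:for_base_JS}: part (1) says that $\comb$ intertwines geometric normalization with combinatorial normalization, and part (2) says that $\comb$ commutes with forgetting a stability-preserving illegal boundary marked point. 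Chaining these, the combinatorial graph of $\Phi_{\Lambda,L(i)}(\Sigma)=For_{illegal}(v_{L(i)}(\Sigma))$ agrees with the component of $\CBB(\comb(\Sigma))$ containing the face labelled $L(i)$, with illegal boundary tails erased. Combined with Observation \ref{obs:identificationCBB}, which identifies $\CF_{L(i)}$ on $\CM_{(G,z)}$ with $\CBB^*\bigl(\CF_{L(i)}|_{\CM_{\CBB(G,z)}}\bigr)$ canonically, the value $s^{v^*_{L(i)}(\Lambda)}_{ij}(\Phi_{\Lambda,L(i)}(\Sigma))$ is indeed determined by $\comb(\Sigma)$ alone.

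Granting this, each $s_{ij}$ descends through $\comb$ to a well-defined multisection of $\CF_{L(i)}$, and assembling these summands yields the required $s'$ on $S_L$ with $\comb^* s' = s$. Piecewise smoothness of $s'$ follows from the piecewise smoothness of $s$ together with the finest-topology characterization of $\oCM_{g,k,l}^{comb}$ in Lemma \ref{lem:minimal_compacs_spin}. The main obstacle I anticipate is bookkeeping: ensuring that the canonical identifications implicit in Proposition \ref{prop:for_base_JS} and Observation \ref{obs:identificationCBB} mesh consistently across strata of different topological types — in particular, reconciling the treatment of boundary nodes (which collapse under $\CBB$) with that of internal nodes and shrunk boundary components, where several geometric dual graphs may map to a single combinatorial graph. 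Once this bookkeeping is in place, no further analytic content is needed.
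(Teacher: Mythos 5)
Your proposal is correct and takes essentially the same approach as the paper's proof: both reduce the claim to the commutativity of $\comb$ with normalization and forgetting illegal points (Proposition \ref{prop:for_base_JS}), combined with the bundle identification of Observation \ref{obs:identificationCBB}, and both use that $\comb$ is an isomorphism on the smooth locus (Lemma \ref{lem:minimal_compacs_spin}) to pin down $s'$. The paper packages this as an explicit commutative square
\[
\xymatrix{
\comb^{-1}\CM_{(G,z)} \ar[r]^{\Phi_\Gamma} \ar[d]^{\comb} &\comb^{-1}\CM_{(\CBB G,\CBB z)} \ar[d]^{\comb}\\
\CM_{(G,z)} \ar[r]^{\CBB} &\CM_{(\CBB G,\CBB z)}
}
\]
whose right vertical arrow is an isomorphism because $(\CBB G,\CBB z)$ is smooth, and then sets $s'|_{\CM_{(G,z)}}=\CBB^*s''$ where $s''$ pulls back to $s$; your descent argument is the same diagram chase, stated pointwise, and your closing concern about cell-by-cell consistency is exactly what the paper dispatches with the phrase that the multisections ``evidently glue.''
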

\begin{proof}
Take $\CM_\Gamma\subseteq\partial\oCM_{g,k,l},$ let $i_1,\ldots,i_r$ be labels of internal tails, one for each vertex of $\Gamma.$
$\comb(\CM_\Gamma\times\R_+^l)=\coprod_{(G,z)}\CM_{(G,z)},$ where the union is taken over some graded graphs $(G,z).$ Consider one of them, denote it by $(G,z).$
Write \[\Phi_\Gamma=\prod_{j=1}^r \Phi_{\Gamma,i}.\]
Consider the diagram
\begin{equation}\label{eq:comb_canonical_seq}
\xymatrix{
\comb^{-1}\CM_{(G,z)} \ar[r]^{\Phi_\Gamma} \ar[d]^{\comb} &\comb^{-1}\CM_{(\CBB G,\CBB z)} \ar[d]^{\comb}\\
\CM_{(G,z)} \ar[r]^{\CBB} &\CM_{(\CBB G,\CBB z)} .
}
\end{equation}
This diagram commutes, by Proposition \ref{prop:for_base_JS}. 
Now $(\CBB G,\CBB z)$ is smooth, hence the right vertical arrow is an isomorphism, by Lemma \ref{lem:minimal_compacs_spin}. A special canonical multisection over $\CM_\Gamma\times\R_+^l$ is pulled back via $\Phi_\Gamma,$ from $S(E_L)\to\prod_{j=1}^r \CM_{v_i^*(\Gamma)}\times\R_+^l.$ Let $s$ be special canonical, we now construct $s'$ with $s=\comb^*s'$. Write $s|_{\comb^{-1}\CM_{(G,z)}}=\Phi_\Gamma^*(\comb^*(s''))$ where $s''$ is a multisection of $S_L\to \CM_{(\CBB G,\CBB z)}.$ Define $s'|_{\CM_{(G,z)}}=\CBB^*s''.$
These multisections for different strata evidently glue.
\end{proof}
\begin{definition}
A \emph{special canonical multisection} of $S_L\to\oCM_{g,k,l}^{\text{comb}}$ is a multisection $s$ with $\comb^*s$ being special canonical.
~A \emph{special canonical multisection} of $S_L\to\oCM_{(G,z)}$ is a $\Xi-$pull back of a special canonical multisection on $\oCM_{g,k,l}^{\text{comb}}.$
~Write $s^{(G,z)}$ for the restriction of $s$ to $\oCM_{(G,z)}.$
\end{definition}
The proof of proposition yields the following immediate corollary.
\begin{cor}\label{cor:main_prop_of_comb_section}
Suppose $(G,z)$ is a boundary $(g,k,l)-$graded ribbon graph, $s$ is a special canonical multisection of $S_L,$ where $L$ is a $(d,l)-$set, restricted to the boundary cell $\CM_{(G,z)}$
then $s=\CBB^*s'$ where $s'$ is a multisection of $S_L\to\CM_{\CBB(G,z)}.$
\end{cor}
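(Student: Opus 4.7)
The plan is to extract the corollary directly from the argument already given for the preceding proposition, since it is essentially a restatement of a formula that was produced in the course of that proof. Concretely, I would unpack the definition of ``special canonical'' on the combinatorial side and then run the proposition's argument in reverse.

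First, I would fix the dual graph $\Gamma$ corresponding to the boundary cell $\CM_{(G,z)}$ (the graph whose smooth strata double-cover $\RCM_G$), so that $\comb^{-1}\CM_{(G,z)} \subseteq \CM_\Gamma \times \R_+^l$. By the definition of special canonical on $\oCM_{g,k,l}^{comb}$, the pullback $\comb^* s$ is a special canonical multisection of $S(E_L)$ in the geometric sense. Hence, on the stratum $\comb^{-1}\CM_{(G,z)}$, the preceding proposition's proof applies verbatim: there exists a multisection $\tilde s$ of $S_L$ on the target of $\Phi_\Gamma$ with $\comb^* s = \Phi_\Gamma^*(\tilde s)$, and because $(\CBB G,\CBB z)$ is smooth Lemma~\ref{lem:minimal_compacs_spin} says the right vertical arrow of diagram~\eqref{eq:comb_canonical_seq} is an isomorphism, so $\tilde s = \comb^* s''$ for a unique $s''$ on $\CM_{\CBB(G,z)}$.

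Next, I would exploit the commutativity of diagram~\eqref{eq:comb_canonical_seq}, namely $\CBB\circ\comb=\comb\circ\Phi_\Gamma$, to rewrite
\[
\comb^* s \big|_{\comb^{-1}\CM_{(G,z)}} \;=\; \Phi_\Gamma^* \comb^* s'' \;=\; \comb^* \CBB^* s''.
\]
The conclusion is then obtained by dividing out the $\comb^*$: by Lemma~\ref{lem:minimal_compacs_spin} the map $\comb$ is a surjection from $\comb^{-1}\CM_{(G,z)}$ onto $\CM_{(G,z)}$ (in fact an orbifold branched cover), and the topology on the combinatorial side is by construction the finest making $\comb$ continuous, so pull-back along $\comb$ is faithful on multisections. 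Hence $s|_{\CM_{(G,z)}} = \CBB^* s''$, and setting $s':=s''$ gives the corollary.

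I do not expect a genuine obstacle: the one point requiring mild care is the faithfulness of $\comb^*$ on multisections over $\CM_{(G,z)}$, which follows from surjectivity of $\comb$ together with the fact that the fiber of $\comb$ consists of identifications built into the combinatorial quotient, so any two multisections with equal pullback agree. Everything else is a direct reuse of the diagram~\eqref{eq:comb_canonical_seq} and Lemma~\ref{lem:minimal_compacs_spin}.
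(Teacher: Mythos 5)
Your argument matches the paper's intent exactly: the paper simply asserts the corollary is ``immediate'' from the proof of the preceding proposition, and what you have written is precisely that proof, spelled out — run the construction from diagram~\eqref{eq:comb_canonical_seq}, use that the right vertical arrow is an isomorphism since $\CBB(G,z)$ is smooth, then divide out the surjective $\comb^*$. The one small imprecision is in calling the restriction of $\comb$ over $\CM_{(G,z)}$ an orbifold branched cover (Lemma~\ref{lem:minimal_compacs_spin} ascribes that property to $For_{spin}^{comb}$, not to $\comb$ over arbitrary boundary strata), but all you actually use is surjectivity, which is correct and suffices for faithfulness of $\comb^*$.
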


The main result of this section is that the descendents can be calculated over the combinatorial moduli.
\begin{lemma}\label{lem:combinatorial_translation0}
Let $s$ be a special canonical multisection for $S(E_L).$ Denote by $s'$ the multisection on $S_L$ with $s=\comb^*s'.$
Then
\begin{align}\label{eq:comb_translate0}
\int_{\oCM_{g,k,l}}e(S(E_L),s) = \int_{\oCM_{g,k,l}^{\text{comb}}(\pp)}e(S_L,s').
\end{align}
The orientations are the ones induced on the combinatorial moduli by $\comb_*.$
\end{lemma}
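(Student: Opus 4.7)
The plan is to reduce everything to a degree-one change of variables. The key inputs are Lemma \ref{lem:zvonk_open}(1) (canonical identification $\comb^*S_L \simeq S(E_L)$), Lemma \ref{lem:minimal_compacs_spin} ($\comb_\pp$ is surjective, an isomorphism on the smooth locus $\CM_{g,k,l}$, and has degree $1$ with respect to the induced orientation on $\oCM_{g,k,l}^{comb}$), and Theorem \ref{thm:PST} (special canonical multisections vanish nowhere on $\partial\oCM_{g,k,l}$).

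First I would show that the pair $(S_L, s')$ is a legitimate pair of sphere bundle and nowhere-vanishing boundary section on the target. Because $\comb$ is surjective and $s = \comb^* s'$ is nowhere vanishing on $\partial\oCM_{g,k,l}$, the multisection $s'$ must be nowhere vanishing on $\comb(\partial\oCM_{g,k,l})$, which contains $\partial\oCM_{g,k,l}^{comb}$; indeed, any boundary graph in $\oCM_{g,k,l}^{comb}$ (one with a boundary node) arises as the image of some nodal graded surface under $\comb$. Hence $e(S_L, s')$ is a well-defined relative Euler class on $(\oCM_{g,k,l}^{comb}, \partial\oCM_{g,k,l}^{comb})$.

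Second, by naturality of the relative Euler class under pullback along maps of pairs of orbifolds with corners, $\comb^* e(S_L, s') = e(\comb^*S_L, \comb^* s') = e(S(E_L), s)$. Now I would pick a representative of $e(S_L, s')$ using Proposition \ref{prop:basic prop}: $e(S_L, s')$ is represented by $\omega_L + d(\rho \cdot (s')^*\Phi_L)$ for a suitable cutoff, so the integral over $\oCM_{g,k,l}^{comb}(\pp)$ equals $\int \omega_L + \int_\partial (s')^*\Phi_L$. The pullback expression is the analogous representative for $e(S(E_L), s)$ on $\oCM_{g,k,l}$. Since $\comb_\pp$ is orientation-preserving of degree $1$ (equivalently, an isomorphism away from strata of real codimension at least one in the target), the integral of any top-degree form is preserved:
\[
\int_{\oCM_{g,k,l}} \comb_\pp^*\eta = \int_{\oCM_{g,k,l}^{comb}(\pp)} \eta.
\]
Applying this to both the bulk and boundary representatives yields the claimed equality (interpreted as integration over a fixed-perimeter slice of $\oCM_{g,k,l}^{comb}$).

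The main obstacle I expect is handling the loci where $\comb$ fails to be injective, namely boundary strata of $\oCM_{g,k,l}$ parameterizing surfaces with boundary nodes or shrunk boundaries, whose images may be boundary cells of $\oCM_{g,k,l}^{comb}$ lying in the closure of several open cells. The cleanest way to handle this is to observe that the non-isomorphism locus is of positive codimension in both source and target, so it contributes zero to the integration of smooth top-degree forms; the degree-$1$ statement of Lemma \ref{lem:minimal_compacs_spin} already encapsulates this. One must also verify that the pulled-back angular form $\comb^*\Phi_L$ and the identification of $S(E_L)$ with $\comb^*S_L$ are compatible with the piecewise smooth structures, which is the content of Lemma \ref{lem:zvonk_open}.
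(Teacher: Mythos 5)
Your proof is correct and follows essentially the same route as the paper, which simply cites Lemma \ref{lem:minimal_compacs_spin}, Lemma \ref{lem:zvonk_open}, and functoriality of the relative Euler class. You have merely spelled out the degree-one change-of-variables and the well-posedness of $e(S_L,s')$ in more detail, including the correct observation that the right-hand side should be read as integration over a fixed-perimeter slice.
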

The proof is an immediate consequence of Lemmas \ref{lem:minimal_compacs_spin}, \ref{lem:zvonk_open} and Observation \ref{obs:functoriality_Euler}.

\subsubsection{Intersection numbers as integrals over the combinatorial moduli}
We can now use the natural piecewise linear structure on $\oCM_{g,k,l}^{\text{comb}}$ and the associated bundles
to write an explicit integral formula for them.
\begin{definition}\label{def:bridges_and_cont_half_edges}
A \emph{boundary loop} in a graded graph $(G,z)$ is a boundary edge which is a loop. We denote the collection of these elements by $\text{Loop}(G).$ 
A \emph{bridge} in a graded graph $(G,z)$ is either a boundary edge between two distinct special legal boundary points
or an internal edge between two boundary vertices, see Figure \ref{fig:for4_43} and the left hand sides of Figure \ref{fig:Feynman}, rows (d),(e). Denote by $\text{Br}(G,z)$ the set of bridges of $(G,z).$ Usually we shall omit $z$ from the notation and write $\text{Br}(G)$ instead.
A \emph{compatible sequence of bridges} $\{e_1,\ldots,e_r\}$ is a sequence of bridges such that $e_{i+1}$ is a bridge in $\partial_{e_1,\ldots,e_i}G$ for all $i.$

Suppose $e$ is a bridge and $h\in H^I$ satisfies $h/s_1=e.$
Set $h'=s_2h.$ We define $\partial_eh\in \HN(\partial_eG)$ ($\HN$ is defined in Definition \ref{def:open_stable_ribbon}) to be the unique vertex $v\in V(\NNN(\partial_eG))$ with $h'/s_0=v,$ where we consider $h'$ as an edge of $\NNN(\partial_eG),$ using the canonical identification, see Figure \ref{fig:for4_43} and the right hand sides of rows (d),(e) of Figure \ref{fig:Feynman}. When there is $h\in H^B$ with $h/s_1=e,$ contracting $e$ creates a contracted component $v,$ which is identified with a ghost component of $\NNN(G),$ see Figures \ref{fig:for4_43} and \ref{fig:Feynman} (d) again.
We denote by $\partial_e h\in B(v)$ the marking which is the $s_0-$cycle of $s_2(s_1h)$ in $(N^{B})^{-1}(v).$ This is equivalent to writing $\partial_e h = s_1\partial_e(s_1h),$ recalling Notation \ref{nn:extended_s_1}.
\end{definition}
\begin{figure}
\centering
\includegraphics[scale=.3]{./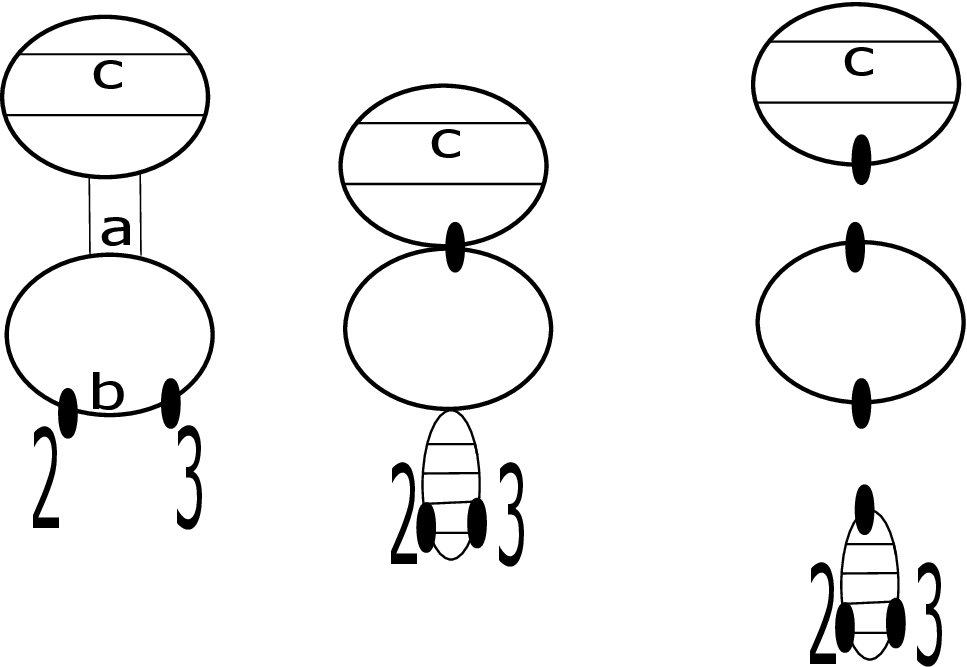}
\caption{Bridges and their contractions. On the left hand side three compatible bridges are drawn, $a,b,c.$ In the middle picture $b,c$ are contracted, and on the right hand the normalization is presented. If $h_b$ is the boundary half edge which corresponds to $b,$ then $\partial_b h$ corresponds to the half node in the ghost component of the normalization. If $h_1,h_2$ are the half edges of $c$ then $\partial_c h_1,\partial_c h_2$ are the two half nodes in the normalization of the node which corresponds to $c.$}
\label{fig:for4_43}
\end{figure}
\begin{figure}
\centering
\includegraphics[scale=.3]{./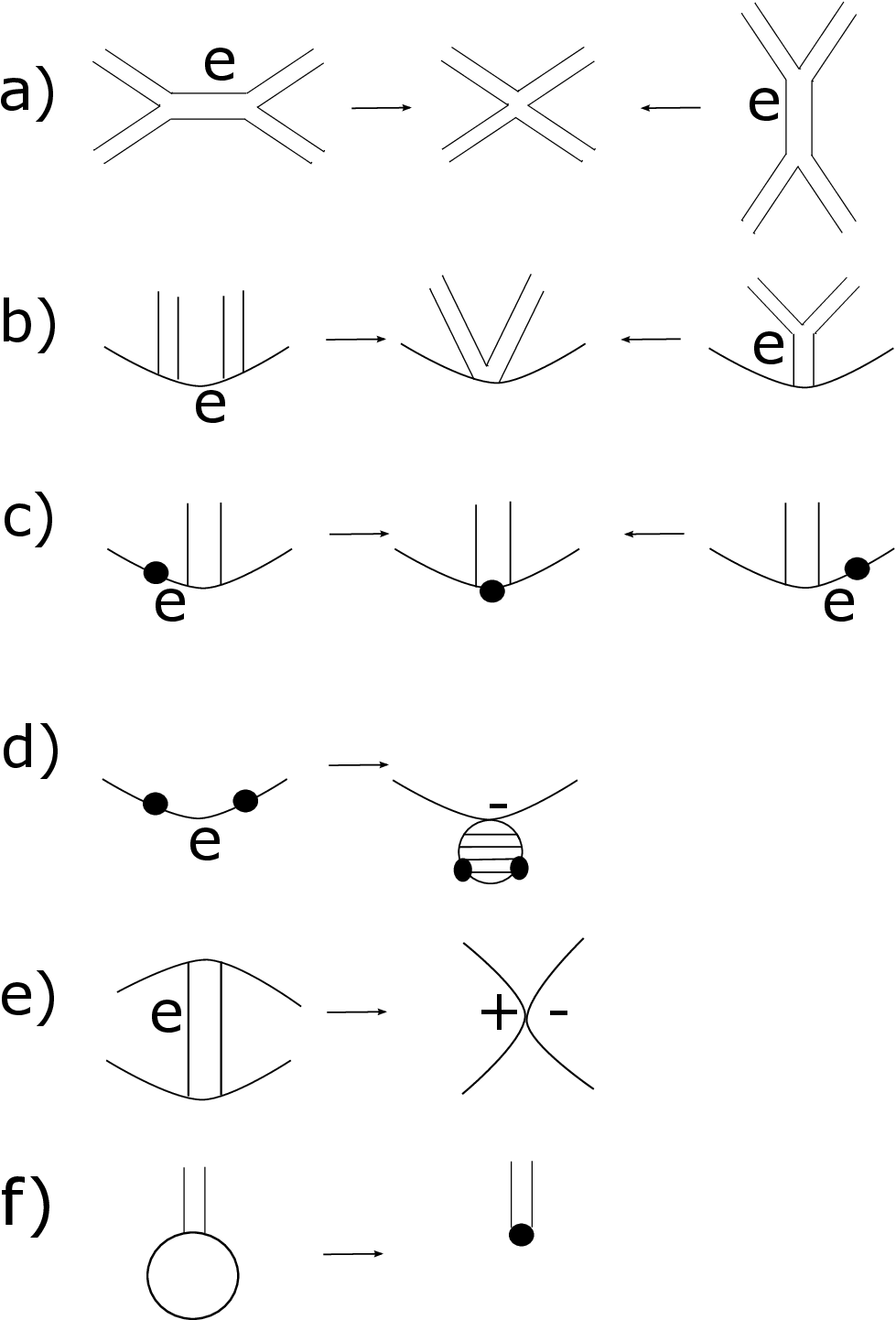}
\caption{Edge contractions and Feynman moves. In rows $d,e$ bridge contractions are presented. In row f, a boundary contraction is shown.
In the remaining rows it is shown how the other types of contracted edges can be obtained as the results of two different contractions.}
\label{fig:Feynman}
\end{figure}

The following observation is immediate
\begin{obs}\label{obs:highest_bdry_cels}
\begin{enumerate}
\item\label{it:full}
$\dim\CM_{(G,z)}(\pp)=\dim\oCM_{g,k,l}$ if and only if $(G,z)\in\oSR^0_{g,k,l}.$
\item\label{it:bdry}
In addition, $(G,z)$ is a boundary graph if and only if it can be represented as $\partial_{e_1,\ldots,e_r}(G',z'),$ where $(G',z')\in\oSR^0_{g,k,l},$ and at least one $e_i$ is a bridge or a loop.
The only boundary graphs whose $(G,z)$ whose moduli is of full dimension $\dim\oCM_{g,k,l}-1,$ are those which can be written as $\partial_e(G',z'),$ for $(G',z')\in\oSR^0_{g,k,l},~e\in \text{Br}(G')\cup\text{Loop}(G').$
\item\label{it:trivalent}
If $\{e_1,\ldots,e_r\}$ is a compatible sequence of bridges in a trivalent graph $(G,z)$ then $\partial_{e_1,\ldots,e_r}(G,z)$ is trivalent.
Any trivalent graph can be written as $\partial_{e_1,\ldots,e_r}(G,z),$ where $(G,z)$ is smooth trivalent and $\{e_1,\ldots,e_r\}$ is a compatible sequence of bridges. This representation is unique up to reordering the bridges in the sequence.
\end{enumerate}
\end{obs}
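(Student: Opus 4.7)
The proof combines a dimension count for (a) with a direct analysis of edge contractions for (b) and (c).

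For (a), I start from $\dim \CM_{(G,z)}(\pp) = |E(G)| - l$, which follows from the cell description $\CM_{(G,z)} = \R_+^{E(G)}/\mathrm{Aut}(G,z)$ together with the $l$ independent perimeter constraints. A standard Euler characteristic computation on a smooth trivalent open ribbon graph (using $2|E(G)| = 3|V^I(G)| + 2k$ and the relation $|V(G)| - |E(G)| + l = \chi(\Sigma_G)$ together with Definition \ref{def:genus_of_ribbon}) yields $|E(G)| = 3g - 3 + k + 3l$, so the dimension matches $\dim \oCM_{g,k,l}$. Conversely, by Lemma \ref{lem:minimal_compacs_spin}, $\comb_\pp$ is a surjective map from an orbifold of dimension $3g-3+k+2l$ onto $\oCM_{g,k,l}^{comb}(\pp)$, so cells of full dimension must exist and hence are exactly those coming from smooth trivalent graphs; any higher-valence internal vertex, positive genus defect, node, or shrunk boundary strictly reduces the edge count and hence the dimension of the corresponding cell.

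For (b), I use that $\partial\oCM_{g,k,l}$ parameterizes surfaces with at least one boundary node. By Proposition \ref{prop:for_base_JS} together with Proposition \ref{prop:streb_gives_graph}, contracting an edge $e$ of an ambient graph $(G',z') \in \oSR^0_{g,k,l}$ creates a boundary node on the surface side exactly when $e$ is a bridge: a boundary bridge between two special legal boundary points contracts to a boundary shrunk component carrying boundary edges, whereas an internal bridge between two boundary vertices produces a non-transitive $s_0$ at a boundary vertex; in both cases the resulting combinatorial configuration is precisely what encodes a boundary node. Non-bridge contractions produce only interior degenerations (interior nodes or non-boundary shrunk components). So $(G,z)$ is a boundary graph iff it is obtained by contracting at least one bridge from a top-dimensional $(G',z')$. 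For the codimension-one claim, contracting a single edge is the only way to drop dimension by one, and this edge must be a bridge to make the result a boundary graph, giving the stated form $\partial_e(G',z')$.

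For (c), preservation of trivalence under a bridge contraction follows by a brief case check: a boundary bridge between two legal marked points contracts to a degree-$2$ boundary vertex whose $s_0$-cycles are both of length $2$ (inherited from the original length-$2$ cycles), while an internal bridge between two boundary vertices similarly contracts to a boundary vertex whose $s_0$-cycles are of length $2$ in the resulting configuration. Uniqueness follows by induction on the number of boundary nodes: each boundary node of a trivalent graph admits a unique opening, inverting a bridge contraction, to a trivalent graph with one fewer boundary node, and iterating produces a unique smooth trivalent parent together with a compatible sequence of bridges (determined up to the order of contractions, which is irrelevant since edge contractions commute). The only subtle point is tracking the graded structure $z$ through these operations, but this is built into the definition $\partial_e(G,z) = (\partial_e G, \partial_e z)$ using the cell structure, whose explicit form in the trivalent case is described in Subsection \ref{subsec:codim_geq_1}.
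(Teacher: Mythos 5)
Your proposal is essentially correct, and since the paper dismisses this observation as ``immediate'' with no proof, you are really just supplying the bookkeeping. The Euler--characteristic computation in (a) gives the right count $|E(G)|=3g-3+k+3l,$ the characterization of boundary graphs via bridge contractions in (b) matches the paper's definitions, and the induction on the number of boundary nodes in (c) is the right idea, consistent with what the paper later formalizes as Observation~\ref{obs:base_of_nodal_graphs} in the Kasteleyn language.

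A few small points to tighten. In (a), the half-edge identity $2|E(G)|=3|V^I(G)|+2k$ is only correct if $V^I(G)$ is taken to mean \emph{all} non-marked vertices. In the paper's conventions $V^I$ denotes strictly interior vertices, and a smooth trivalent graph typically also has non-special trivalent vertices lying on $\partial\Sigma$ (each with two boundary half-edges and one interior half-edge); these must be counted on the right-hand side with weight $3$ as well. Once you write the correct identity $2|E(G)|=2k+3n$ where $n$ is the number of non-marked vertices (interior or boundary), the same substitution into $|V|-|E|+l=\chi(\Sigma_G)=1-g$ recovers $|E(G)|=3g-3+k+3l,$ so the conclusion is unaffected. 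Similarly, in (c) the object produced by contracting a boundary bridge is, in the paper's terminology, an open \emph{shrunk component} (a ghost) with three boundary markings, not merely a degree-$2$ boundary vertex; your description of its single length-$2$ $s_0$-cycle is right, but the phrase ``both of length $2$'' is misleading since there is only one. Finally, in (b) it would be cleaner to note explicitly that the single non-bridge boundary degeneration that might worry one --- contracting a boundary loop, which shrinks an entire boundary component --- produces a shrunk boundary, and by Remark~\ref{rmk:shrunk_bdry} this is \emph{not} a boundary stratum of the graded moduli; that closes the case analysis.
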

See Figure \ref{fig:Feynman} (d),(e) for examples.

Recall Definition \ref{def:int_nums}.
Using Observation \ref{obs:highest_bdry_cels}, Lemma \ref{lem:combinatorial_translation0} and Proposition \ref{prop:basic prop},
we immediately get
\begin{lemma}\label{lem:combinatorial_translation}
Let $L$ be a $(d,l)$-set where $d=\frac{3g-3+k+2l}{2},$ and let $s$ be a special canonical multisection for $S_L.$
Then
\begin{align*}
&2^{\frac{g+k-1}{2}}
\langle \tau_{a_1}\ldots\tau_{a_l}\sigma^k \rangle = \\
&\sum_{(G,z)\in\oSR^0_{g,k,l}}\int_{\CM_{(G,z)}(\pp)}{\omega}_L+\sum_{\substack{(G,z)\in\oSR^0_{g,k,l},\\ [e]\in [\text{Br}(G)\cup\text{Loop}(G)]}}
\int_{\oCM_{\partial_e(G,z)}(\pp)}s^*{\Phi}_L.
\end{align*}
equivalently,
\begin{align*}
&\pp^{2L}2^{\frac{g+k-1}{2}}
\langle \tau_{a_1}\ldots\tau_{a_l}\sigma^k \rangle = \\
&\sum_{(G,z)\in\oSR^0_{g,k,l}}\int_{\CM_{(G,z)}(\pp)}\bar{\omega}_L+
\sum_{\substack{(G,z)\in\oSR^0_{g,k,l},\\ [e]\in [\text{Br}(G)\cup\text{Loop}(G)]}}
\int_{\oCM_{\partial_e(G,z)}(\pp)}s^*\bar{\Phi}_L.
\end{align*}
The orientations are the ones induced on the combinatorial moduli by $\comb_*.$
\end{lemma}
\begin{rmk}
The formalism of piecewise linear forms and their integration is treated, for instance, in \cite{Zvonk}.
\end{rmk}

\begin{construction}\label{cons:z_e}
For later purposes we now define \emph{Feynman moves} in edges.
Suppose that $G$ is a trivalent graph, $e\in E\setminus \text{Br}(G).$ If $e$ is a boundary edge, we require that least one of its vertices is not a special point.
Define the graph $G_e=G,$ in case $e$ is a boundary loop. Otherwise, define $G_e$ as the graph obtained from $G$ by first contracting $e$ and then reopening it in the unique different possible way, see Figure \ref{fig:Feynman} (a),(b),(c). 

Let $(G,z)$ be a graded trivalent graph. For a boundary loop $e$ define the graded structure $z_e\in Z_{G}$ as the graded structure which is identical to $z$ except that the lifting on the boundary component $e$ is opposite. For an edge $e\notin \text{Br}(G)\cup\text{Loop}(G)$ write $z_e\in Z_{G_e}$ for the graded structure on $G_e,$ defined by the following proposition.
\begin{prop}\label{prop:G_and_G_e}
For $(G,z)$ and $e$ as above there is a unique graded structure $z_e$ such that
~if $G$ is smooth, $\CM_{(G_e,z_e)}$ is the unique codimension $0$ cell of $\oCM_{g,k,l}^{\text{comb}}$ adjacent to $\CM_{(G,z)}$ along $\CM_{\partial_e(G,z)}.$
For non smooth $G,$ write $(G,z)=\partial_{e_1,\ldots,e_r}(H,w),~e_1,\ldots,e_r\in E(H),$ with $(H,w)$ trivalent and smooth. Then
\[(G_e,z_e)=\partial_{e_1,\ldots,e_r}(H_e,w_e).\]
\end{prop}
\begin{proof}
For a smooth trivalent $G,$ and an edge $e,$ $\partial_e\CM_{(G,z)}$ is a codimension $1$ face, hence, since $\oCM_{g,k,l}^{\text{comb}}$ is an orbifold with corners, this face must be adjacent to at most one additional codimension $0$ cell. Since $e$ is neither a boundary loop nor a bridge, this face is not contained in the boundary of the moduli, hence has it is adjacent to two codimension $0$ cells. Since ${\text{For}}_{\text{spin}}^{\text{comb}}$ is continuous, this cell must be of the form $\CM_{(G,z_e)}$ for some graded structure $z_e\in Z_G\setminus{z}$ or $\CM_{(G_e,z_e)},$ for $z_e\in Z(G_e).$
~The map $\CM_{g,k,l}^{\text{comb}}\simeq\CM_{g,k,l}\to\RCM_{g,k,l}\simeq{\RCM_{g,k,l}}^{\text{comb}},$ when restricted to the open dense set of generic metrics is a (non branched) covering map, as there are no automorphisms to the objects, and since the neighboring cell in ${\RCM_{g,k,l}}^{\text{comb}}$ to $\RCM_G$ along $\partial_e\RCM_G$ is $\RCM_{G_e},$ the neighboring cell of $\CM_{(G,z)}$ along the boundary $\partial_e\CM_{(G,z)}$ must be $\CM_{(G_e,z_e)}.$
The rest of the claim follows from the cell structure and Observation \ref{obs:highest_bdry_cels}, part \ref{it:trivalent}.
\end{proof}

The operations $G\to G_e,(G,z)\to(G_e,z_e)$ are called \emph{Feynman moves}.
\end{construction}

\section{Trivalent and critical nodal graphs}\label{sec:critical}
It follows from Lemma \ref{lem:combinatorial_translation} that all intersection numbers can be calculated
as integrals over the highest dimensional cells of $\oCM_{g,k,l}^{\text{comb}},$ and of $\partial\oCM_{g,k,l}^{\text{comb}}.$
In the next section we will describe an iterative integration formula for the integrals. We will see that the cells that contribute to this iterative process are those which correspond to trivalent graded ribbon graphs. Analyzing their contribution is done by using a new type of graphs which we define below and name \emph{critical nodal graphs}.
It turns out that both for trivalent graded graphs, and for critical nodal graphs, the extra data of the graded spin structure can be described in an explicit combinatorial manner. In this section we shall provide this combinatorial interpretation, use it to describe the boundary conditions
and to write an explicit expression for the canonical orientations.

\subsection{Kasteleyn orientations}\label{sec:kasteleyn}
From here until the end of this subsection fix a graph $G\in\oR^0_{g,k,l},$ where $\oR^0_{g,k,l}$ was defined in Definition \ref{def:oR}.
\begin{definition}
Consider the set $\mathcal{A}$ of all assignments $H^I\to\Z_2.$
A \emph{vertex flip} is the involution $f_v:\mathcal{A}\to\mathcal{A}$ defined as follows. For $A\in\mathcal{A},~f_vA$ is the assignment which satisfies the following condition.
$f_vA(h) \neq A(h)$ if and only if exactly one of the vertices of $h,~h/s_0,s_1(h)/s_0$ is $v.$

A \emph{Kasteleyn orientation on $G$} is an assignment $\K\in \mathcal{A}$ which satisfies the following conditions.
\begin{enumerate}
\item\label{it:K1}
If $h$ belongs to a boundary edge, that is $s_1h\in H^B,$ then
$$\K(h)=1.$$
\item\label{it:K2}
For other half edge $h$
\[\K(h)+\K(s_1(h))=1.\]
\item\label{it:K3}
For every face $i,$
\[
\sum_{h\in H_i}\K(h)=1.
\]
\end{enumerate}
For convenience extend $\K$ to $H^B$ by $0,$ so that Property \ref{it:K2} holds for any half edge.
$\K(G)$ will stand for the set of all Kasteleyn orientations of $G.$
Vertex flips act on the set $\K(G).$
Two Kasteleyn orientations are \emph{equivalent} if they differ by vertex flips.
Write $[\K(G)]$ for the set of equivalence classes of Kasteleyn orientations, and $[\K]$ for the equivalence class of $\K.$
\end{definition}
\begin{obs}\label{obs:bridges_under_equivalent_kasteleyn}
Equivalent assignments give the same value to any half edge of a bridge.
\end{obs}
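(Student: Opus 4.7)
The plan is to unwind the definition of a vertex flip. A single flip $f_v$ alters $A(h)$ exactly when one of the two endpoints $h/s_0, s_1(h)/s_0$ equals $v$ and the other does not. Any composition of vertex flips therefore depends only on the subset $S\subseteq V$ of vertices appearing an odd number of times in the sequence, and the composite flip $f_S$ satisfies
\[
(f_S A)(h) \neq A(h) \ \Longleftrightarrow\ \bigl|S\cap\{h/s_0,s_1(h)/s_0\}\bigr|\ \text{is odd}.
\]
So the observation reduces to showing that for a bridge $e=\{h,s_1h\}$ with endpoints $v_1=h/s_0$ and $v_2=s_1(h)/s_0$, the cardinality $|S\cap\{v_1,v_2\}|$ is even whenever $f_S$ relates two equivalent assignments.

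In the case of a boundary bridge the statement is immediate: one of $h, s_1h$ lies in $H^B$, whereupon Condition \ref{it:K1} of the Kasteleyn definition pins the value on the internal side to $1$, while the extension convention pins the value on the boundary side to $0$. Both values are thus intrinsic to any Kasteleyn orientation and are therefore invariant under any equivalence.

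For an internal bridge both endpoints $v_1,v_2$ are boundary vertices. Here I exploit the fact that the equivalence moves one Kasteleyn orientation to another: applying Condition \ref{it:K1} to every boundary edge of $G$ forces, for each such edge with endpoints $u_1,u_2$, that $|S\cap\{u_1,u_2\}|$ be even. This system of equations is precisely the statement that $S$, restricted to the set of boundary vertices, is a union of full connected components of the subgraph of boundary edges, i.e.\ a union of whole boundary components of $\partial G$. Since the endpoints $v_1,v_2$ of an internal bridge both lie on boundary components, $|S\cap\{v_1,v_2\}|$ takes a predetermined value modulo $2$.

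The only step that demands more than a bookkeeping check is the sub-case in which $v_1$ and $v_2$ lie on \emph{distinct} boundary components of $G$. Here I would combine the Condition \ref{it:K1} constraint above with the face condition, Condition \ref{it:K3}, summed over the two faces meeting the bridge $e$: the telescoping cancellation in Condition \ref{it:K3} (already used implicitly in the paper to show that vertex flips preserve face sums) forces the two components' contributions to $|S\cap\{v_1,v_2\}|\bmod 2$ to match, yielding an even total. This is the main place where care is needed; the rest is essentially a direct reading of the definitions.
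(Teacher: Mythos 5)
Your final sub-case does not work, and the detour that leads you there rests on a misreading of the equivalence relation. In the paper the equivalence on $\K(G)$ is generated by flips at \emph{internal} vertices only: this is what the correspondence with ``vertex lift flips in a vertex $v\in V^I$'' in Lemma \ref{lem:vertex_flip} says (whence $|\K(\Sigma,\SL,s)|=2^{|V^I(G)|}$), it is stated explicitly in the extended Kasteleyn definition (``differ by the action of internal vertex flips''), and it is what makes the count of $2^{2g_s+b-1}$ classes in Example \ref{ex:macheta} come out right. Since both endpoints of a bridge are boundary vertices (by Definition \ref{def:bridges_and_cont_half_edges}, an internal bridge joins two boundary vertices, and a boundary bridge is a boundary edge, whose half-edge values are pinned by Condition \ref{it:K1} and the extension convention anyway), an internal vertex flip $f_v$, $v\in V^I$, never alters the value on a half edge of a bridge. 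That one sentence is the whole proof; the paper accordingly gives none.

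The concrete error in your argument is the claim that Condition \ref{it:K3} forces $|S\cap\{v_1,v_2\}|$ to be even when $v_1,v_2$ lie on distinct boundary components. Condition \ref{it:K3} is preserved by $f_S$ for \emph{every} $S\subseteq V$: the boundary of a face is a closed walk, and the number of its edges having exactly one endpoint in $S$ (i.e.\ the number of crossings of the cut determined by $S$) is always even, so the face sums never change. Hence Condition \ref{it:K3} imposes no constraint on $S$ beyond what Condition \ref{it:K1} already gives, namely that $S\cap V^B$ is a union of whole boundary components. Taking $S$ to be exactly one boundary component then preserves all three Kasteleyn conditions yet flips the value on any internal bridge joining that component to a different one — so under the enlarged equivalence you are implicitly working with, the observation is actually false. (Geometrically such a flip reverses the lifting on that boundary component and therefore changes the graded structure; the $2^{b-1}$ inequivalent choices on the bridges $b_{j,j+1}$ in Example \ref{ex:macheta} are exactly these.) The repair is not a finer telescoping argument but the restriction to internal vertex flips.
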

\begin{definition}
The \emph{legal side} of a bridge $e$ is the half edge $h\in s_1^{-1}(e)$ with $\K(h)=0.$ The other side is \emph{illegal}.
\end{definition}
The main goal of this subsection is to show that there is a natural bijection between $\oSR^0_{g,k,l}$ and $\{(G,[\K])|G\in\oR_{g,k,l}^0,[\K]\in[\K(G)]/\text{Aut}(G)\}.$

We first show how a graded structure induces an element in $[\K(G)].$
Take a graded surface $(\Sigma,\SL,s)$ whose corresponding embedded ribbon graph, defined by the JS differential, is $G.$
\begin{definition}\label{def:lifting_ribbon}
Let $v\in V^I,$ and $\{h_i\}_{i=1,2,3},$ are its three half edges, ordered so that $s_0 h_i = h_{i+1}.$
A \emph{choice of lifting for $v$} is a choice of lifts, $l_{h_i}\in\SL_v$ for the \emph{oriented} $T^1_v h_i$ (see Notation \ref{nn:T1}), for which
\begin{equation}\label{eq:choice_lift}
l_{h_{i+1}} = R_{\theta_i+2\pi}l_{h_i},~i=1,2,
\end{equation}
where $\theta_i = \measuredangle(T_v h_i,T_v h_{i+1}).$

Let $\partial\Sigma_b$ be a boundary component. Write $H_b = \{h_i\}_{i=1}^m,$ where $h_i\in H^I,$ are the half edges which are embedded in $\partial\Sigma_b,$ ordered so that $h_{i+1}=s_1(s_2^{-1}(s_1(h_i))).$ Put $v_i = h_i/s_0.$
A \emph{lifting for $\partial\Sigma_b$} is the unique choice of lifts $l_h\in\SL_{v_i}$ of $T_{v_i}^1 h,$ for any $i$ and any $h\in H_{v_i},$ which satisfies the following requirements.
\begin{enumerate}
\item
For $h=h_i\in s_1H_b, l_h=s(v_i).$
\item\label{it:lift3}
If $v_i$ is not a marked point, let $f=s_0 h_i,$ and put $\theta = \measuredangle(h_i,f).$ Then $l_f = R_{\theta+2\pi} l_{h_i},$ and $l_{s_0^{-1}h_i}=R_\pi l_{h_i}.$
\item
If $v_i$ is a marked point $l_{s_0^{-1}h_i}=R_{3\pi} l_{h_i}.$
\end{enumerate}
A \emph{choice of a lifting} is a choice of lifting for any vertex, and a lifting for any boundary component of the graph.
\end{definition}
Note that given a choice of a lifting in a vertex $v,$ \eqref{eq:choice_lift} holds also for $i=3,$ since composing \eqref{eq:choice_lift} for $i=1,2,3,$ yields \[R_{6\pi+\sum_{i=1}^3\theta_i}l_{h_1}=R_{8\pi}l_{h_1}=l_{h_1},\]
where the second passage follows from $\sum\theta_i=2\pi,$ and the last passage uses that $R_{4\pi}$ is the identity map. This also shows that a choice of a lifting for an internal vertex does not depend on the choice of which half edge is taken to be $h_1.$
In addition, note that a lifting of a boundary does not depend on choices.

Figure \ref{fig:lifting} illustrates the three types of liftings described above.
\begin{figure}
\centering
\includegraphics[scale=.4]{./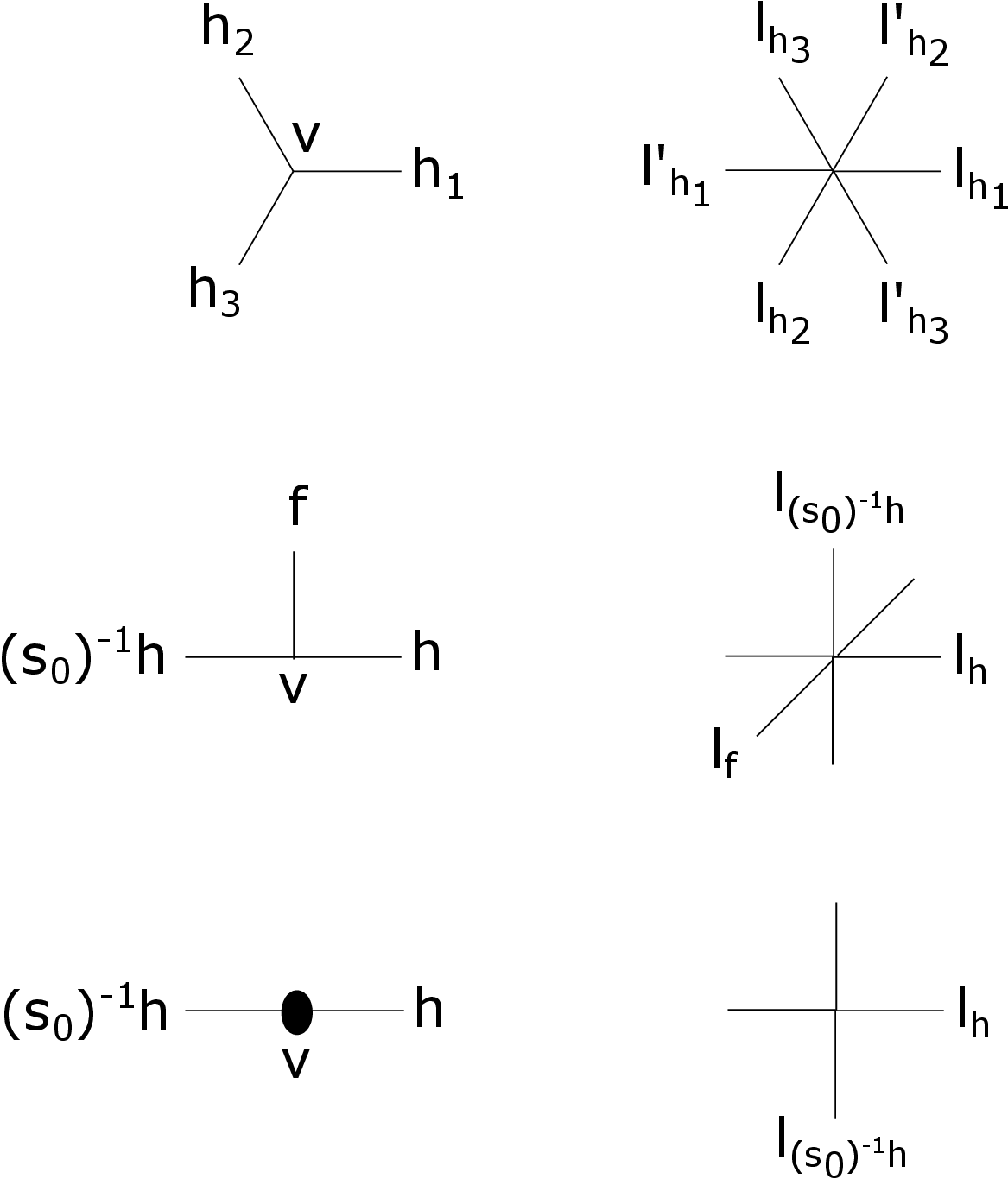}
\caption{In this figure the three types of liftings from Definition \ref{def:lifting_ribbon} are illustrated. The left column represent the local picture at the surface, while the right column represents the corresponding picture at the level of the spin fiber. Each vector on the left side has two preimages on the right side (where the angles between consecutive vectors on the right are half of those from the left). On the top row an internal trivalent vertex $v$ is drawn. For $v$ there are two possible lifts, $\{l_{h_1},l_{h_2},l_{h_3}\}$ and $\{l'_{h_1},l'_{h_2},l'_{h_3}\}.$
In the middle row $v$ is a trivalent boundary vertex and in the lowest row $v$ is a boundary marked point. In both of these cases the horizontal line in the left column represents the boundary, and in both cases $l_h$ is determined from the data of the grading, so there is no choice in the liftings, and they are as in the figure.}
\label{fig:lifting}
\end{figure}


The next observation is a consequence of the definition of the graded boundary conditions.
\begin{obs}\label{obs:bdry_edges}
Consider a lifting for the boundary $\partial\Sigma_b.$ With the above notations,
if $v_i$ is a marked point, then $l_{h_i}=R_{2\pi}P(h_{i-1})l_{h_{i-1}}.$
If $v_i$ is a boundary vertex which is not a marked point, then $l_{h_i}=P(h_{i-1})l_{h_{i-1}}.$
In both cases $R_\pi P(h_{i-1})l_{h_{i-1}}=l_{s_1(h_{i-1})}=l_{s_0^{-1}h_i}.$
\end{obs}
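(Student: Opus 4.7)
The strategy is to use the fact that $\pi:\SL\to T^1\Sigma$ is a \emph{discrete} double cover, so any continuous section over a path in $\Sigma$ is automatically its own parallel transport, together with the fact that the grading section $s$ is continuous on each boundary arc between marked points and jumps by $R_{2\pi}$ precisely at legal marked points (by the very definition of legality).

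First I would note that $h_{i-1}$ embeds in $\partial\Sigma_b$ as a smooth arc from $v_{i-1}$ to $v_i$ whose interior contains no marked points. The restriction of $s$ to this arc is a continuous section of $\SL$ covering the positively oriented $T^1\partial\Sigma$. Since the cover is discrete, this section must coincide with parallel transport of its endpoint values: writing $s(v_{i-1}^{+})$ and $s(v_i^{-})$ for the one-sided limits of $s$ approached from the $h_{i-1}$ side, one has the identity $P(h_{i-1})\,s(v_{i-1}^{+}) = s(v_i^{-})$. This is the core analytic input.

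Next I would trace through conditions (i)--(iii) of the definition of a lifting to express $l_{h_{i-1}}$ and $l_{h_i}$ in terms of the grading section at the respective vertices. Condition (i) identifies the lift attached to the boundary half-edge at each vertex with the value of $s$ there; condition (ii) in the non-marked case gives the internal rule $l_{s_0^{-1}h_i} = R_\pi l_{h_i}$ which, together with the identification $s_0^{-1}h_i = s_1(h_{i-1})$ at a boundary vertex, ties $l_{h_i}$ to $s(v_i^{+})$. Combined with Step 1, one obtains $l_{h_i} = P(h_{i-1})\,l_{h_{i-1}}$ whenever $s$ is continuous at $v_i$, that is, whenever $v_i$ is not a marked point. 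If instead $v_i$ is a (legal) marked point, legality forces $s(v_i^{+}) = R_{2\pi}\, s(v_i^{-})$, and the same chain of identifications produces the extra factor: $l_{h_i} = R_{2\pi}\,P(h_{i-1})\,l_{h_{i-1}}$.

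The final assertion $R_\pi P(h_{i-1})\,l_{h_{i-1}} = l_{s_1(h_{i-1})} = l_{s_0^{-1}h_i}$ then follows formally in both cases. When $v_i$ is non-marked, condition (ii) gives $l_{s_0^{-1}h_i} = R_\pi l_{h_i} = R_\pi P(h_{i-1})\,l_{h_{i-1}}$. When $v_i$ is marked, condition (iii) gives $l_{s_0^{-1}h_i} = R_{3\pi}l_{h_i} = R_{3\pi}R_{2\pi}\,P(h_{i-1})\,l_{h_{i-1}} = R_\pi P(h_{i-1})\,l_{h_{i-1}}$, using $R_{4\pi}=\mathrm{id}$; the extra $R_{2\pi}$ in condition (iii) relative to (ii) is exactly compensated by the $R_{2\pi}$ alternation absorbed on the other side.

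The main obstacle is not any deep geometric input but the careful bookkeeping in the middle step: one must match the orientation conventions for half-edges $h\in s_1 H^B$ (oriented with the boundary) against their $s_1$-partners in $H^B$ (oriented against it), and keep straight which one-sided limit $s(v^{\pm})$ enters each slot, so that the rotation factors in conditions (i)--(iii) cleanly assemble into a single $R_{2\pi}$ discrepancy concentrated at legal marked points.
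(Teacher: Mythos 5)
Your argument is correct and is essentially the paper's own (non-)proof made explicit: the paper dismisses this observation with the single remark that it "is a consequence of the definition of the graded boundary conditions," and your write-up is exactly the required unfolding — identifying $l_{h_i}$ with $s(v_i)$ via condition (i), noting that a continuous lift of the oriented tangent field over the open edge must coincide with parallel transport so that $P(h_{i-1})l_{h_{i-1}}=s(v_i^-)$, and locating the single $R_{2\pi}$ discrepancy at legal marked points, with the final assertion following from the $R_\pi$ versus $R_{3\pi}$ rules at non-marked versus marked vertices. No gaps.
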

\begin{rmk}\label{rmk:boundary_and_arf}
Iterating Observation \ref{obs:bdry_edges} over all boundary vertices, we are led to the single constraint $l_{h_i} = R_{2 k_b \pi}l_{h_i},$ where $k_b$ is the number of boundary marked points of the boundary component $\partial\Sigma_b.$ By unwinding the alternations in boundary marked points, we see that $\q(\gamma)=k_b+1,$ for $\gamma$ a simple closed path isotopic to $\partial\Sigma_b.$
\end{rmk}

A choice of a lifting induces an assignment $\K\in\mathcal{A}$ as follows. $\K(h)=1,$ if $s_1h\in H^B.$
For an internal half edge $h,$ considered as an arc from $u$ to $v,$ we have lifts $l_h,l_{s_1(h)}$ of $T^1_u h,T^1_v s_1h$ respectively.
Now, $R_{\pi}P(h)l_h$ also covers $T^1_v s_1h,$ hence it equals either $l_{s_1(h)}$ or $R_{2\pi}l_{s_1(h)}.$ In the first case we define $\K(h) = 1,$ otherwise $\K(h)=0.$
Write $\K(\Sigma,\SL,s)$ for the set of all assignments of $G$ induced by choices of liftings.

\begin{definition}
A \emph{vertex lift flip} in a vertex $v\in V^I$ is the involution of the set of choices of lifts, which takes one choice to the choice which differs exactly in the lift at $v.$
\end{definition}

We have the following lemma
\begin{lemma}\label{lem:vertex_flip}
If $C,C'$ are two choices of lifts which differ by a vertex lift flip in $v,$ the corresponding assignments $\K,\K'$ differ by a vertex flip $f_v.$
The vertex flips act commutatively freely transitively on $\K(\Sigma,\SL,s).$ The correspondence between choices of lifts and $\K(\Sigma,\SL,s)$ is a bijection.
As a conclusion $|\K(\Sigma,\SL,s)|=2^{|V^I(G)|}.$
\end{lemma}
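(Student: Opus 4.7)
The plan is as follows. All four assertions follow from a single direct computation combined with a short connectivity argument. First I would verify that a vertex lift flip at $v \in V^I$ induces the vertex flip $f_v$ on the induced assignment. The two solutions to the cyclic relation $l_{h_{i+1}} = R_{\theta_i + 2\pi} l_{h_i}$ at a trivalent internal $v$ differ precisely by applying $R_{2\pi}$ uniformly to all three lifts (this uses $\theta_1+\theta_2+\theta_3 = 2\pi$ together with $R_{8\pi} = R_0$), so a lift flip replaces $l_h$ by $R_{2\pi} l_h$ for every $h$ with $h/s_0 = v$. By Remark~\ref{rmk:commutation_of_trans_and_rot}, the comparison $R_\pi P(h) l_h \in \{l_{s_1 h}, R_{2\pi} l_{s_1 h}\}$ defining $\K(h)$ picks up exactly one factor of $R_{2\pi}$ precisely when one of $l_h, l_{s_1 h}$ is rotated but not the other; this is equivalent to exactly one of $h/s_0, s_1 h/s_0$ equalling $v$, matching $f_v$. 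For half edges with $s_1 h \in H^B$, $\K(h)$ is fixed at $1$ by convention; since boundary edges are embedded in $\partial\Sigma$ both their endpoints lie in $V^B$, so $v \in V^I$ is not among them and $f_v$ fixes the corresponding entry as well.

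The $f_v$'s manifestly commute because each acts coordinatewise on $\mathcal{A}$. For simple transitivity I would next prove that the group $\Z_2^{V^I(G)}$ generated by the $f_v$'s acts \emph{freely} on $\mathcal{A}$. Given a nonempty $S \subseteq V^I(G)$, extend its characteristic function $\chi_S$ by $0$ on $V^B$. The combined flip $\prod_{v \in S} f_v$ alters $A(h)$ on exactly those half edges with $\chi_S(h/s_0) + \chi_S(s_1 h/s_0) = 1$. Since $G \in \oR^0_{g,k,l}$ is connected and boundary edges have both endpoints in $V^B$, every internal vertex is connected to some boundary vertex by a walk consisting of internal edges. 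Such a walk starting from a vertex of $S$ must cross some edge across which $\chi_S$ changes, producing a half edge witnessing a nontrivial flip. Hence the combined flip is nontrivial, and the action is free.

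Finally, the set of choices of lifting is parameterized by $\{0,1\}^{V^I(G)}$, because each internal vertex admits exactly two lift choices and the lifting on each boundary component is uniquely determined by $s$ via conditions (i)--(iii). By the first step the map from choices of liftings to $\mathcal{A}$ is $\Z_2^{V^I(G)}$-equivariant, and freeness from the second step forces this map to be injective. Surjectivity onto $\K(\Sigma, \SL, s)$ holds by definition, so the map is a bijection, the group action is simply transitive on $\K(\Sigma, \SL, s)$, and $|\K(\Sigma, \SL, s)| = 2^{|V^I(G)|}$.

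The main obstacle is the freeness argument in the second step, which hinges on the combinatorial fact that in an open ribbon graph boundary edges have both endpoints in $V^B$, so internal vertices can communicate with the boundary only through internal edges. Once this observation is in place the remainder of the proof is bookkeeping, and the four assertions are packaged together.
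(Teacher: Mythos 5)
Your proof is correct, and the freeness argument takes a genuinely different route from the paper's. The paper views each $\tilde f_v$ as the coboundary $\partial e_v$ of the indicator $0$-cochain in the relative cochain complex of $(\Sigma, \partial\Sigma)$, then invokes Poincar\'e--Lefschetz duality $H^0(\Sigma, \partial\Sigma) \simeq H_2(\Sigma)^* = 0$ to conclude that the map $A \mapsto \sum_{v\in A} \tilde f_v$ is injective. You instead give a direct combinatorial argument: a shortest walk from a vertex of $S$ to $\partial\Sigma$ consists of internal edges only (precisely because boundary edges have both endpoints in $V^B$), and since $\chi_S$ changes from $1$ to $0$ along it, some internal edge witnesses a nontrivial flip. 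This is really a hands-on verification of the vanishing of the relevant relative cohomology class for a connected surface with boundary; it has the advantage of being elementary and self-contained, whereas the paper's version is more conceptual and makes the topological input (nonempty boundary) transparent. You also spell out the correspondence between lift flips and vertex flips (via $R_{2\pi}$ acting uniformly on the three lifts at $v$), which the paper simply declares straightforward; your note that boundary half-edges are unaffected because $v \in V^I$ cannot be an endpoint of a boundary edge is the right point to check there. The final counting step via equivariance plus freeness matches the paper's conclusion.
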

\begin{proof}
The first assertion, the commutativity and transitivity of the action are straightforward. The rest will from proving that the action is free.
In order to show this, note that we can think of $\K(\Sigma,\SL,s)$ as subset of $\Z_2^{H^I}.$ This is a vector space and a vertex flip $f_v$ is just an addition of an element $\widetilde f_v\in \Z_2^{H^I},$ which is $s_1-$invariant, and zero everywhere except for edges with exactly one of their ends is $v.$ Thus, we can also think of $\widetilde f_v$ as a function from $E$ to $\Z_2,$ which vanishes identically on boundary edges. In other words, $\widetilde f_v$ is canonically a $1-$cochain with coefficients in $\Z_2$ relative to boundary. In fact, if $\delta$ is the coboundary operator on the relative cochain complex defined on $\Sigma$ by the $1-$skeleton $G,$ then $\widetilde f_v = \delta e_v,$ where $e_v$ is the $0-$cochain which is $1$ only at $v.$ If the action of vertex flips were not free, there would have been a subset $A\subseteq V^I$ such that
$$\sum_{v\in A} \widetilde f_v = 0,$$
or equivalently
$$\delta\sum_{v\in A} e_v = 0,$$
so $\sum_{v\in A} e_v $ is $\delta-$closed in $H^0(\Sigma,\partial\Sigma)\simeq H_2(\Sigma)^*,$ by Poincar\'e-Lefschetz duality. But $H_2(\Sigma)=0,$ which means $A=\emptyset.$
\end{proof}

We now study $\K(\Sigma,\SL,s)$ more carefully.
\begin{prop}\label{prop:kasteleyn_in_vertices}
Fix $\K\in\K(\Sigma,\SL,s).$

For $h\in H^I.$ Put $v=h/s_0,u=(s_1 h)/s_0,~f = s_0^{-1}s_1 h,$ and, in case $u$ is not a marked point, $f' = s_0s_1 h.$ Write
$\theta = \measuredangle(P(h)T^1_{v}h,T^1_{u}f)\in(-\pi,\pi)$ and $\alpha = \measuredangle(f',f)\in(0,2\pi),$ if $u$ is not a marked point.
Let $l_h,~l_f$  denote the lifts of $T^1_{v}h,~T^1_{u}f,$ respectively, induced by $\K,$ and when $u$ is not a marked point, let $l_{f'},$ be the lift of $T^1_{u}f'.$ Finally, let $\varepsilon = \K(h).$ Then we have the following equalities,
\begin{enumerate}
\item\label{it:for_kasteleyn_lemma}
$l_f = R_{2\pi\varepsilon+\theta}P(h)l_h.$
\item\label{it:for_arf}
If $u$ is not a marked point, $l_{f'} = R_{2\pi(1+\varepsilon)+\theta-\alpha}P(h)l_h,$ and $\theta-\alpha\in(-\pi,\pi).$
\end{enumerate}

For $h\in H^B,$ from $v$ to $u,$ write $f=s_2h.$
If $u$ is a marked point then $R_{2\pi}P(h)l_h=l_f.$
If $u$ is not a marked point, write $f'=s_0s_1h$ and $\theta = \measuredangle(P(h)T^1_{v}h,T^1_{u}f')\in(-\pi,0).$
Then $P(h)l_h=l_f,~R_{\theta+2\pi}l_h=l_{f'}.$
\end{prop}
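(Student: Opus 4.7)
The plan is to translate each defining relation---the one for $\K(h)$, and the lifting prescriptions at internal and boundary vertices---into the stated identities, while tracking the $\R/4\pi\R$ representatives carefully.

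For (i), the definition of $\K$ given just before the proposition states that $R_\pi P(h) l_h$ equals $l_{s_1 h}$ if $\varepsilon = 1$ and $R_{2\pi} l_{s_1 h}$ if $\varepsilon = 0$, which I would write uniformly as $l_{s_1 h} = R_{(2\varepsilon - 1)\pi} P(h) l_h$. Next, at the internal trivalent vertex $u$, I would apply the lifting rule $l_{s_0 x} = R_{\measuredangle(T_u x, T_u s_0 x) + 2\pi} l_x$ to $x = f$ (so that $s_0 f = s_1 h$) and invert, obtaining $l_f = R_{-\measuredangle(T_u f, T_u s_1 h) - 2\pi}\, l_{s_1 h}$. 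Since $P(h) T^1_v h = -T^1_u s_1 h$, the angle $\measuredangle(T_u f, T_u s_1 h)$ equals $\pi - \theta$ modulo $2\pi$, and the restriction $\theta \in (-\pi, \pi)$ pins down the representative. Composing the two equalities then yields (i).

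For (ii), I would iterate the rotation relation at $u$: using $s_0 s_1 h = f'$, one gets $l_{f'} = R_{\measuredangle(T_u s_1 h, T_u f') + 2\pi}\, l_{s_1 h}$. The three counterclockwise angles at the trivalent vertex $u$ sum to $2\pi$, which, together with the identification from (i) and $\alpha = \measuredangle(f', f)$, gives $\measuredangle(T_u s_1 h, T_u f') = \pi + \theta - \alpha$. Chaining with (i) produces $l_{f'} = R_{2\pi(1 + \varepsilon) + \theta - \alpha} P(h) l_h$, and the geometric constraint $\measuredangle(T_u s_1 h, T_u f') \in (0, 2\pi)$ translates precisely to $\theta - \alpha \in (-\pi, \pi)$.

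For the boundary case, both endpoints of $h \in H^B$ lie on $\partial\Sigma$ and the trajectory of $h$ sits in $\partial\Sigma$, so $P(h)$ is parallel transport along the boundary. Observation~\ref{obs:bdry_edges} applied at $u$ then yields $l_f = R_{2\pi} P(h) l_h$ when $u$ is a marked point and $l_f = P(h) l_h$ otherwise. The identity $l_{f'} = R_{\theta + 2\pi} P(h) l_h$ at a non-marked $u$ follows from the lifting prescription $l_{s_0 h_i} = R_{\measuredangle(h_i, s_0 h_i) + 2\pi} l_{h_i}$ taken with $h_i = s_1 h$ and $s_0 h_i = f'$, after substituting $l_{s_1 h} = P(h) l_h$; the range $\theta \in (-\pi, 0)$ reflects that $f'$ points into the interior relative to the boundary tangent at $u$. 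The main obstacle throughout will be keeping straight the $\R/4\pi\R$ versus $\R/2\pi\R$ conventions, but once the signs from $P(h) T^1_v h = -T^1_u s_1 h$ are accounted for, each assertion reduces to a short substitution.
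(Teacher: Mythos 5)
Your argument is correct and follows the same route as the paper's: encode $\K(h)$ as the rotation relating $P(h)l_h$ to $l_{s_1 h}$, then use the vertex lifting rule at $u$ to pass to $l_f$ and $l_{f'}$, tracking the $(0,2\pi)$ representatives of the counterclockwise angles; the boundary case is handled by Observation \ref{obs:bdry_edges} as the paper indicates. One small imprecision: when $h/s_1$ is a boundary edge, $u$ is a trivalent \emph{boundary} vertex rather than an internal one, but the rotation formula $l_{s_0 x}=R_{\measuredangle(x,s_0 x)+2\pi}l_x$ you invoke still holds there (it is consistent with the boundary lifting prescription $l_{s_0^{-1}h_i}=R_\pi l_{h_i}$ since the backward boundary tangent makes angle $\pi$ with the forward one), so the argument goes through unchanged.
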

\begin{proof}
We prove for $h\in H^I.$ The proof for boundary half edges is similar and follows from Observation \ref{obs:bdry_edges}.
\begin{align*}
\K(h) = \varepsilon &\Leftrightarrow  R_{\pi}P(h)l_{h} = R_{(1+\varepsilon)2\pi}l_{s_1(h)}\\
&\Leftrightarrow  R_{\pi}P(h)l_{h} = R_{(1+\varepsilon)2\pi}(R_{2\pi + \pi-\theta} l_{f}) \\
&\Leftrightarrow  R_{\theta}P(h)l_{h} = R_{\varepsilon 2\pi}l_{f},
\end{align*}
where the equivalence in the second line follows from the definition of a choice of lift in a vertex, while the equivalence with the last line is a consequence of Remark \ref{rmk:commutation_of_trans_and_rot}. The second claim follows from $l_{f'}=R_{-2\pi-\alpha}l_f$ and the cyclic order of the half-edges.
\end{proof}

We now prove
\begin{lemma}\label{lem:kasteleyn_props}
If $\K\in\K(\Sigma,\SL,s),$ then $\K$ is a Kasteleyn orientation.
\end{lemma}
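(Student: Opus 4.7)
The plan is to verify the three defining properties (\ref{it:K1})--(\ref{it:K3}) for an arbitrary $\K$ arising from a choice of liftings. Property (\ref{it:K1}) holds by construction, since the assignment $\K$ is defined to set $\K(h)=1$ precisely when $s_1 h\in H^B$.

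For (\ref{it:K2}) on an internal edge $\{h,s_1h\}$, I would apply the defining equivalence $\K(h)=\varepsilon\iff R_\pi P(h)l_h=R_{(1+\varepsilon)2\pi}l_{s_1h}$ both to $h$ and to $s_1 h$, and substitute one into the other. Using that rotations $R_\theta$ commute with parallel transport (Remark \ref{rmk:commutation_of_trans_and_rot}), the two identities combine into
\[
R_{2\pi(1+\K(h)+\K(s_1h))}P(s_1h)P(h)\,l_h=l_h.
\]
The concatenation $h\to s_1h$ is a piecewise smooth contractible closed loop (with $\pi$ turning angles at both endpoints), so Proposition \ref{prop:arf1} gives $R_\pi P(s_1h)R_\pi P(h)=R_{2\pi}$, and hence $P(s_1h)P(h)=\id$. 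Substituting back forces $\K(h)+\K(s_1h)\equiv 1\pmod 2$. On a boundary edge, property (\ref{it:K1}) together with the convention $\K|_{H^B}=0$ makes (\ref{it:K2}) automatic.

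For (\ref{it:K3}), I would list the half-edges of face $i$ in cyclic order $h_1,\ldots,h_n$ with $h_{j+1}=s_2 h_j$, and iterate part \ref{it:for_kasteleyn_lemma} of Proposition \ref{prop:kasteleyn_in_vertices} once around the face to obtain
\[
l_{h_1}=R_{2\pi\sum_j\K(h_j)+\sum_j\theta_j}\,P(h_n)\cdots P(h_1)\,l_{h_1},
\]
where the $\theta_j\in(-\pi,\pi)$ are the turning angles at the endpoints of the $h_j$'s. Separately, the piecewise smooth parallel transport formula identifies the monodromy of the closed curve $h_1\to h_2\to\cdots\to h_n\to h_1$ with $R_{\sum_j\theta_j}P(h_n)\cdots P(h_1)$, and by Proposition \ref{prop:arf1}, applied to this contractible loop bounding the face disk, this monodromy equals $R_{2\pi}$. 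Comparing the two expressions yields $2\pi\sum_j\K(h_j)\equiv 2\pi\pmod{4\pi}$, i.e.\ $\sum_{h\in H_i}\K(h)\equiv 1\pmod 2$.

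The hard part will be property (\ref{it:K3}) when face $i$ meets $\partial\Sigma$: at vertices which are boundary marked points (respectively unmarked boundary vertices), the lifting rule contains the factor $R_{3\pi}$ (respectively $R_\pi$) rather than the internal $R_{\theta+2\pi}$, so the iteration step at such a vertex must draw on the boundary part of Proposition \ref{prop:kasteleyn_in_vertices} rather than its internal part \ref{it:for_kasteleyn_lemma}. A careful bookkeeping of these substitutions against the turning angles and the alternations at legal boundary points (Proposition \ref{Q_alternates}) must be carried out to confirm that the total phase still collapses to the $R_{2\pi}$ predicted by the contractibility of the face.
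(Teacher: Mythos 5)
Your treatment of Properties~(\ref{it:K1}) and~(\ref{it:K2}) coincides with the paper's: Property~(\ref{it:K1}) is immediate from the construction, and Property~(\ref{it:K2}) follows by applying Proposition~\ref{prop:arf1} to the closed loop $h\to\bar h\to h$, yielding $R_\pi P(s_1 h)R_\pi P(h)=R_{2\pi}$ and hence $\K(h)+\K(s_1h)=1$. Your outline for Property~(\ref{it:K3}) -- iterate Proposition~\ref{prop:kasteleyn_in_vertices}, part~\ref{it:for_kasteleyn_lemma}, around the face and compare against the monodromy $R_{2\pi}$ given by Proposition~\ref{prop:arf1} for the contractible loop bounding the face -- is also the paper's argument, stated in the form $P(\gamma)=R_{2\pi(1+\q(\gamma))}$ with $\q(\gamma)=0$.

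However, you stop short of a complete proof of~(\ref{it:K3}) at exactly the point where you claim a ``hard part'' remains, and that claim rests on a misreading. You worry that when a face touches $\partial\Sigma$ the iteration must invoke ``the boundary part of Proposition~\ref{prop:kasteleyn_in_vertices} rather than its internal part,'' i.e., the statement for $h\in H^B$. But the face $i$ consists precisely of the $s_2$-orbit $H_i\subseteq H^I$: by the axioms of an open ribbon graph, $s_2$ preserves the partition $H=H^I\cup H^B$ and faces are $s_2$-orbits of \emph{internal} half edges only. Consequently every $h_j$ in the loop $h_1\to\cdots\to h_m\to h_1$ around the face lies in $H^I$, and Proposition~\ref{prop:kasteleyn_in_vertices}, part~\ref{it:for_kasteleyn_lemma}, applies verbatim at every step -- it is stated for arbitrary $h\in H^I$ with no restriction on the type of vertex $u=(s_1h)/s_0$; the $R_{3\pi}$ and $R_\pi$ factors that arise at boundary marked points and unmarked boundary vertices are already absorbed into the \emph{proof} of that proposition, not left for the user to handle. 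Likewise, Proposition~\ref{Q_alternates} concerns $\Q$ along bridges and plays no role here. Thus the iteration yields $l_{h_1}=R_{2\pi\sum_j\K(h_j)}\,R_{\theta_1}P(h_m)\cdots R_{\theta_2}P(h_1)\,l_{h_1}$ without any case distinction, and Proposition~\ref{prop:arf1} identifies the rightmost product with $R_{2\pi}$, forcing $\sum_{h\in H_i}\K(h)$ odd. That closing comparison, which you flagged as requiring further bookkeeping, is in fact a one-line step, and the proof is otherwise complete.
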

\begin{proof}
Property \ref{it:K1} of Kasteleyn orientations is just Observation \ref{obs:bdry_edges}.
Property \ref{it:K2} is reduced, thanks to Remark \ref{rmk:commutation_of_trans_and_rot} and the construction of $\K,$ to
\[
R_\pi P(s_1(h))R_\pi P(h) = R_{2\pi},
\]
but this follows from Proposition \ref{prop:arf1} applied to the piecewise smooth curve closed $h\to\bar h\to h,$ where $\bar h$ is $h$ with the opposite orientation.

For property \ref{it:K3}, let $h_1,\ldots h_m$ be an ordering of $H_i$ such that $s_2(h_j)=h_{j+1}.$ Set $v_j=h_j/s_0.$ Let $l_{h_j}$ be the lift of $T^1_{v_j}h_j$ determined by $\K,$ using Lemma \ref{lem:vertex_flip}.
Proposition \ref{prop:arf1} applied to the piecewise smooth curve $\gamma_i=h_1\to h_2\to\ldots h_m\to h_1,$ is equivalent to $P(\gamma_i)l_{h_1} = R_{2\pi}l_{h_1}.$ Put $\theta_{j+1} = \measuredangle(P(h_j)T^1_{v_j}h_j, T^1_{v_{j+1}}h_{j+1})\in(-\pi,\pi).$ Now, by Proposition \ref{prop:kasteleyn_in_vertices},
\[
R_{\theta_{j+1}}P(h_j)l_{h_j} = R_{\varepsilon_j 2\pi}l_{h_{j+1}},~~\varepsilon\in\Z_2,
\]
where $\varepsilon_j= \K(h_j).$
Iterating this equation, for $j=1,\ldots, m,$ we get
\begin{align*}
l_{h_1}=R&_{2\pi\varepsilon_m+\theta_{1}}P(h_m)R_{2\pi\varepsilon_{m-1}+\theta_{m}}P(h_{m-1})\dots R_{2\pi\varepsilon_{1}+\theta_{2}}P(h_{1})l_{h_1}=\\
&\,\,\,\,\,\,=R_{2\pi\sum_{i=1}^m\varepsilon_i}R_{\theta_{1}}P(h_m)R_{\theta_{m}}P(h_{m-1})\dots R_{\theta_{2}}P(h_{1})l_{h_1}.
\end{align*}
On the other hand, $R_{\theta_{1}}P(h_m)R_{\theta_{m}}P(h_{m-1})\dots R_{\theta_{2}}P(h_{1})=R_{2\pi(1+\q(\gamma_i))},$ by the definition of $\q.$ But $\q(\gamma_i)=0,$ since $\gamma_i$ is trivial in the homology of $\Sigma,$
so $\sum_{i=1}^m\varepsilon_i=\sum_{i=1}^m\K(h_i)$ must be odd.

\end{proof}

\begin{thm}\label{thm:Kesteleyn_is_spin}
Let $G,\Sigma$ be as above. There is a bijection between $\Spin(\Sigma),$ the set of isomorphism classes of graded spin structures on $\Sigma,$ and $[K(G)].$
\end{thm}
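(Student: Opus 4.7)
The plan is to define a map $\Phi\colon \Spin(\Sigma)\to [\K(G)]$ by sending a graded spin structure $(\SL,s)$ to the class of any Kasteleyn orientation in $\K(\Sigma,\SL,s)$, and to show it is a bijection. Well-definedness is immediate from the preceding lemmas: Lemma \ref{lem:vertex_flip} says that all assignments coming from choices of liftings of the fixed $(\SL,s)$ form a single orbit under vertex flips, and Lemma \ref{lem:kasteleyn_props} shows each such assignment is a Kasteleyn orientation, so $[\K(\Sigma,\SL,s)]$ is a genuine element of $[\K(G)]$, and isomorphic graded structures obviously give the same class.

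For injectivity I would extract the pair $(\q,\Q)$ of the spin structure directly from $\K$, and then apply Lemma \ref{lem:Q_q_and_bridges}(\ref{it:qQ_determine_spin}). Since by Proposition \ref{prop:streb_gives_graph} the embedded graph $G$ is a deformation retract of $\Sigma\setminus\mathbf z$ (the complement of $G$ is a disjoint union of disks around the $z_i$), every class in $H_1(\Sigma\setminus\mathbf z,\Z_2)$ is represented by a piecewise-smooth cycle $\gamma=h_1\to h_2\to\cdots\to h_m\to h_1$ of oriented half-edges. Iterating Proposition \ref{prop:kasteleyn_in_vertices}(\ref{it:for_kasteleyn_lemma}) along $\gamma$, exactly as in the proof of Lemma \ref{lem:kasteleyn_props}, yields
\[
P(\gamma) = R_{2\pi \sum_j \K(h_j)}\, R_{\theta_1}P(h_m)\cdots R_{\theta_2}P(h_1) = R_{2\pi(1+\q(\gamma)+\sum_j \K(h_j))},
\]
so $\q(\gamma) = 1+\sum_j \K(h_j)\pmod 2$. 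An analogous computation, using also Proposition \ref{prop:kasteleyn_in_vertices}(\ref{it:for_arf}) and Observation \ref{obs:bdry_edges} at the two endpoints of a bridge, expresses $\Q$ on any bridge supported on $G$ as a similar $\Z_2$-sum of $\K$-values. Both formulas are invariant under vertex flips (each flip at $v\in V^I$ changes $\K$ on the half-edges of the edges incident to $v$; any cycle or bridge passing through $v$ uses an even number of such edges), so the class $[\K]$ determines $(\q,\Q)$ on all representatives supported on $G$, and hence on all of $H_1$ and on all isotopy classes of bridges. Lemma \ref{lem:Q_q_and_bridges}(\ref{it:qQ_determine_spin}) then forces the graded structure to be uniquely determined by $[\K]$.

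For surjectivity I would argue by a dimension count. By Proposition \ref{prop:parity_disks} we have $|\Spin(\Sigma)|=2^g$, so it suffices to prove $|[\K(G)]|\le 2^g$, i.e.\ $|\K(G)|\le 2^{g+|V^I(G)|}$. The set $\K(G)$ is the solution set of the $\Z_2$-linear system given by conditions K1, K2, K3 inside $\Z_2^{H^I}$, so this reduces to an Euler-characteristic bookkeeping on the graph $G\hookrightarrow\Sigma$, analogous to the one in the proof of Lemma \ref{lem:vertex_flip} (there the relevant cohomology was $H^0(\Sigma,\partial\Sigma)\simeq H_2(\Sigma)^*=0$; here the relevant group is $H^1(\Sigma,\partial\Sigma;\Z_2)$, whose dimension is $g$). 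The freeness of the vertex-flip action on all of $\K(G)$ follows from the same $H_2(\Sigma)=0$ argument as in Lemma \ref{lem:vertex_flip}. Combined with injectivity from the previous paragraph, this forces $\Phi$ to be a bijection.

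The main obstacle is the surjectivity/counting step: checking that the three families of $\Z_2$-linear relations K1--K3 are essentially independent and cut out a $(g+|V^I|)$-dimensional affine subspace of $\Z_2^{H^I}$. An alternative that sidesteps a global dimension count is to invert the construction locally: starting from any $\K\in\K(G)$, build an $S^1$-bundle $\SL$ over $\Sigma$ together with a boundary lifting $s$ by declaring the parallel-transport data at every vertex, along every edge, and along every boundary arc according to the formulas of Proposition \ref{prop:kasteleyn_in_vertices} and Observation \ref{obs:bdry_edges}; condition K2 guarantees consistency along each edge, K3 guarantees that Proposition \ref{prop:arf1} is satisfied around each face (so the bundle extends across the punctures $z_i$ without twist), and K1 together with the analogue of Remark \ref{rmk:boundary_and_arf} guarantees that $s$ is a genuine lifting making every $x_i$ legal, yielding a graded spin structure whose $\K$-class is the one we started with.
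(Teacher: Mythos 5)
Your overall structure is sound and your ``alternative'' surjectivity route is in fact the route the paper takes: the paper proves the theorem by exhibiting the inverse map $\Spin\colon[\K(G)]\to\Spin(\Sigma)$ directly, building the $S^1$-bundle $\SL|_G$ with transition data $R_{2\K(e)\pi-\pi}$ on each internal edge, using Property~(3) of a Kasteleyn orientation together with Proposition~\ref{prop:arf1} to extend the bundle across each face, and taking $\{l_h\}_{h\in s_1H^B}$ as the grading; it then checks that $[\K]$ and $\Spin$ are mutually inverse and that equivalent Kasteleyn orientations give isomorphic structures. Your injectivity argument via $(\q,\Q)$ and Lemma~\ref{lem:Q_q_and_bridges}(\ref{it:qQ_determine_spin}) is a valid \emph{alternative} to this last step that the paper does not spell out, and it has the virtue of isolating exactly which geometric data $[\K]$ recovers. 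One small error there: the formula you obtain for a closed $\gamma=h_1\to\dots\to h_m\to h_1$ cannot be $\q(\gamma)=1+\sum_j\K(h_j)$. Items~(\ref{it:for_kasteleyn_lemma}) and~(\ref{it:for_arf}) of Proposition~\ref{prop:kasteleyn_in_vertices} contribute $R_{2\pi\varepsilon}$ and $R_{2\pi(1+\varepsilon)}$ respectively, so a general cycle (unlike the face cycles in the proof of Lemma~\ref{lem:kasteleyn_props}, where every turn is of type $s_2$) picks up an extra correction depending on the numbers of good and bad turns, as Proposition~\ref{prop:arf} records. Since that correction depends only on $\gamma$ and not on $\K$, your conclusion that $[\K]$ determines $\q$ and $\Q$ survives, but the displayed identity is wrong as stated.

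The genuine gap is in your primary surjectivity route. You claim $\dim_{\Z_2}\K(G)\le g+|V^I(G)|$ ``reduces to Euler-characteristic bookkeeping,'' but the content of that bookkeeping is precisely what must be verified, and it is not automatic. Counting naively: the variables live in $\Z_2^{H^I}$ with $|H^I|=2|E^I|+|E^B|$, and K1, K2, K3 impose $|E^B|+|E^I|+l$ linear conditions; if these were independent one would get $|E^I|-l$, which by $\chi(\Sigma)=|V|-|E|+l=1-g$ and $|V^B|=|E^B|$ equals $g+|V^I|-1$, one less than required. The count only works because the $l$ conditions K3 have exactly one dependency modulo K1 and K2: the $\Z_2$-sum $\sum_i\sum_{h\in H_i}\K(h)$ equals $|E^B|+|E^I|\pmod 2$ by K1, K2, while K3 forces it to equal $l\pmod 2$; compatibility of these is equivalent, after another degree count, to the parity hypothesis $2\mid g+k-1$. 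You would still need to show there are no further dependencies, and more importantly you would need to observe that the parity restriction~\eqref{eq:restrictions} is exactly what makes $\K(G)$ nonempty and of the right size --- this is the substantive geometric input, and ``Euler characteristic bookkeeping'' glosses over it. The paper's explicit-inverse argument is free of this subtlety and I would not regard your dimension count as a complete proof of surjectivity without these verifications.
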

\begin{proof}
Given a graded spin structure $(\SL,s)$ on $\Sigma,$ we have constructed an equivalence class of Kasteleyn orientations, and this equivalence class depends only on the isomorphism type of $(\SL,s),$ so that we get a map
\[
[\K]: \Spin(\Sigma)\to [\K(G)].
\]
We shall construct a map $\Spin$ in the other direction.

Fix $\K\in\K(G).$
~We first construct the restriction of the spin bundle to $G,$ the $1-$skeleton of $\Sigma.$
For any vertex $v,$ write
\[
N_v = \cup_i\{h'_i\},
\]
where $h'_i$ are the open half edges emanating from $v,$ after removing their second endpoint. We define $\Spin(\K)|_{N_v}$ as the trivial spin cover of $T^1\Sigma|_{N_v}.$ On any fiber of $\Spin(\K)$ there is an action of $\R/4\pi\Z,$ denote it by $R_\theta.$

For a vertex $v,$ choose sections $l_{h_i}: h'_i\to \Spin(\K)|_{h'_i},$ which cover $T^1_vh_i,$ such that
for any $h_i\notin H^B,$
\[
R_{2\pi+\theta_i}(v)l_{h_i}(v) = l_{s_0(h_{i})}(v),
\]
where $\theta_i = \measuredangle(T^1_v h_i,T^1_v s_0(h_{i})).$

The transition map $g_{e',s_1(e)'}:\Spin(\K)|_{e'}\to \Spin(\K)|_{s_1(e)'}$
is given by identifying
$R_{2\K(e)\pi-\pi}l_h$ and $l_{s_1h},$ and extending using the $\R/4\pi\Z$ action.

It follows from construction, and from Property \ref{it:K3} of Kasteleyn orientations that for each $i\in[l],$ the spin structure on the boundary of face $i$ of $G,$ which is a topological disk, satisfies Proposition \ref{prop:arf1}, and hence can be extended uniquely to the face. Thus, we have constructed a spin structure on $\Sigma.$ The section $\{l_h\}_{h\in s_1 H^B},$ is evidently a grading. Call this graded spin structure $\Spin(\K).$
It can be verified easily that equivalent Kasteleyn orientations give rise to isomorphic graded spin structure, and that the maps $[\K],\Spin$ are inverse to each other.
\end{proof}

Now that we know that the data of an equivalence class of Kasteleyn orientations is equivalent to the data of a graded spin structure,
we may try to calculate $\q,\Q$ using $\K.$
\begin{definition}
Let $\gamma=(h_1\to\ldots\to h_m(\to h_1))$ be an open (closed) directed path in $G\in\oR^0_{g,k,l}$ without backtracking, that is, the directed edge $s_1 h$ cannot follow $h$ in the path. Put $v_i=h_i/s_0.$
We say that $\gamma$ makes a \emph{bad turn} at $v_i$ if either \begin{enumerate}
\item $h_{i-1}\in H^I$ and $h_i\neq s_2 h_{i-1},$ or
\item $h_{i-1}\in H^B$ and $h_i= s_0s_1h_{i-1},$
\end{enumerate}
where $i-1$ is taken modulo $m$ in the closed case. Otherwise it makes a \emph{good turn}. $BT(\gamma)$ is the number of bad turns.
\end{definition}
See figure \ref{fig:badturns} for illustrations of good and bad turns.
\begin{figure}
\centering
\includegraphics[scale=.4]{./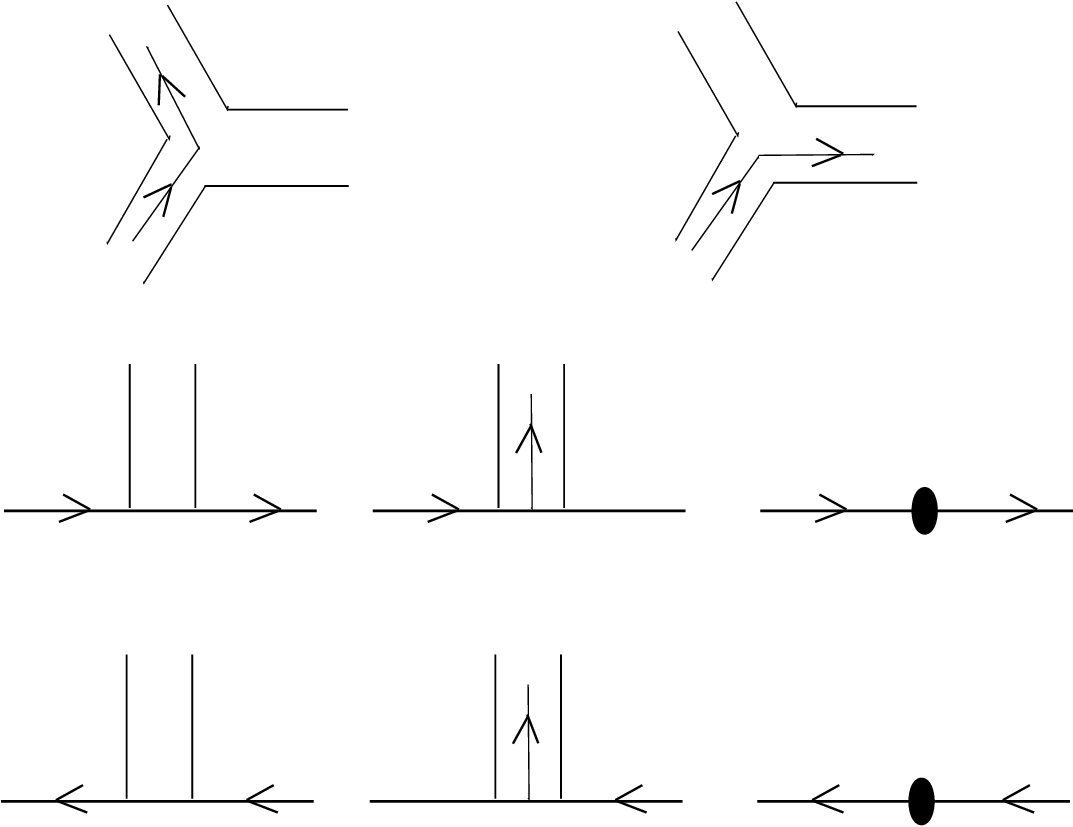}
\caption{Good and bad turns. In this figure a line with an arrow represents a half edge in a directed path, and the orientation is always counterclockwise. In the upper row an internal vertex is drawn. In the left image a good turn is drawn, while the right is a bad turn.
In the middle row the horizontal line is the boundary, and the surface lies above it. The oriented half edges in the boundary belong to $s_1 H^B.$ Only the leftmost image represents a bad turn.
In the lower row the oriented half edges in the boundary component are boundary half edges. The image on the left is a good turn, while the other two are bad.}
\label{fig:badturns}
\end{figure}
\begin{prop}\label{prop:arf}
Fix $[\K].$ With the conventions of the previous definition,
\begin{enumerate}
\item
For $\gamma$ closed,
$\q(\gamma) = \q_{\K}(\gamma):= 1+ \sum_i \K(h_i) + BT(\gamma),$ for any $\K\in[\K].$
\item
For $\gamma$ open, with $h_1,h_m\in s_1H^B,$ let $\widetilde\gamma$ be the sub arc obtained from $\gamma$ after removing small neighborhoods of its endpoints, then
$\Q(\widetilde\gamma) = \Q_{\K}(\gamma) :=1+ \sum_i \K(h_i) + BT(\gamma),$ for any $\K\in[\K].$ \end{enumerate}
\end{prop}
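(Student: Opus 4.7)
The plan is to prove both parts simultaneously by iterating the local formulas of Proposition \ref{prop:kasteleyn_in_vertices} along $\gamma$, accumulating the rotations in the spin cover, and then comparing with the definitions of $\q$ (closed case) or $\Q$ (open case).

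First I would check that the right-hand side is invariant under the choice of representative $\K\in[\K]$, so that it makes sense to work with any single representative. A vertex flip $f_v$ changes $\K(h)$ precisely on half edges $h$ having exactly one endpoint at $v$; for each passage of $\gamma$ through an interior vertex of the path (closed case) or through any vertex (open case not counting endpoints) this changes the parity of $\sum_i\K(h_i)$ by an even number, since the two half edges $h_{i-1}$ and $s_1 h_{i-1}$ (the latter not in the sum, but accounting for the two incidences at $v_i$) contribute in pairs. The boundary contributions at the endpoints in the open case are handled separately using the definition of legal/illegal sides on bridges. Hence $\q_\K$ and $\Q_\K$ descend to $[\K]$.

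Next, fix $\K$ and the corresponding choice of lifts $\{l_h\}$ via Lemma~\ref{lem:vertex_flip}. For each transition $h_{i-1}\to h_i$ at a vertex $v_i$, Proposition~\ref{prop:kasteleyn_in_vertices} yields an equation of the form
\[
l_{h_i}\;=\;R_{\Delta_i}\,P(h_{i-1})\,l_{h_{i-1}},
\]
where $\Delta_i=\phi_i+2\pi Y_i$, $\phi_i$ is the geometric turning angle at $v_i$ (i.e.\ the same angle that enters the piecewise-smooth definition of $p(\gamma)$), and $Y_i\in\Z_2$ is a combinatorial quantity read off from the proposition: one finds $Y_i=\K(h_{i-1})$ for a bad turn with $h_{i-1}\in H^I$ (part \ref{it:for_kasteleyn_lemma}), $Y_i=1+\K(h_{i-1})$ for a good turn with $h_{i-1}\in H^I$ (part \ref{it:for_arf}), and analogous values in the $H^B$ cases, computed from the boundary formulas at the end of Proposition~\ref{prop:kasteleyn_in_vertices} together with the convention $\K|_{H^B}\equiv 0$ and property~\ref{it:K1}. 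Using Remark~\ref{rmk:commutation_of_trans_and_rot} that $R$ and $P$ commute, iterating around the closed path gives
\[
R_{2\pi\sum_i Y_i}\,P(\gamma)\,l_{h_1}\;=\;l_{h_1},
\]
since $P(\gamma)=R_{\sum_i\phi_i}\prod_i P(h_{i-1})$ for piecewise smooth $\gamma$. Combining this with the convention $P(\gamma)=R_{2\pi(1+\q(\gamma))}$ yields
\[
\q(\gamma)\;\equiv\;1+\sum_i Y_i\pmod{2}.
\]

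The main combinatorial step is then to verify case by case that $\sum_i Y_i\equiv \sum_i \K(h_i)+BT(\gamma)\pmod 2$. I would organize the cases by the type of $h_{i-1}$ (internal versus boundary) and whether the turn is bad or good according to Definition of the bad turn; the identities $\K(s_1 h)=1$ for $h\in H^B$ and $\K(h)+\K(s_1 h)=1$ allow one to rewrite the contributions of $Y_i$ entirely in terms of $\K(h_i)$ plus the indicator of a bad turn. At marked boundary points, the extra $R_{2\pi}$ in Proposition~\ref{prop:kasteleyn_in_vertices} (coming from the alternation of the lifting at legal boundary markings) is exactly what matches the ``bad turn at a marked vertex'' contribution in $BT(\gamma)$.

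For the open case with $h_1,h_m\in s_1H^B$, the same iteration computes $P(\gamma)\,l_{h_1}$ in terms of $l_{h_m}$, but now the starting and ending lifts are identified via the grading $s$ (so $l_{h_1}=s(x)$, $l_{h_m}=s(y)$ at the endpoints). Plugging this into the defining equation~\eqref{eq:Q_def} of $\Q(\tilde\gamma)$,
\[
R_{2\pi-\alpha_y}(y)\,P(\gamma)\,R_{\alpha_x}(x)\,s(x)\;=\;R_{2\pi\Q(\tilde\gamma)}(y)\,s(y),
\]
and using the same decomposition $\Delta_i=\phi_i+2\pi Y_i$ to replace $P(\gamma)$, all of the $\alpha_x,\alpha_y$ terms conspire with the geometric angles $\phi_i$ to cancel, leaving $\Q(\tilde\gamma)\equiv 1+\sum_i Y_i$ exactly as before. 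The hardest step will be the meticulous case analysis in the transition table for $Y_i$, especially verifying the $H^B$ cases where the formulas of Proposition~\ref{prop:kasteleyn_in_vertices} have a different shape and where the interaction with the grading alternation at marked boundary points must be tracked carefully; everything else is bookkeeping.
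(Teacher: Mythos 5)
Your plan mirrors the paper's own proof in structure: pick a representative $\K\in[\K]$, pass to the associated choice of lifts via Lemma~\ref{lem:vertex_flip}, iterate Proposition~\ref{prop:kasteleyn_in_vertices} along $\gamma$, commute the extra rotations through $P(\gamma)$ using Remark~\ref{rmk:commutation_of_trans_and_rot}, and compare with the characterization $P(\gamma)=R_{2\pi(1+\q(\gamma))}$ (resp.\ Equation~\eqref{eq:Q_def} in the open case). The preliminary vertex-flip invariance check is not needed --- the paper's argument holds verbatim for every $\K\in[\K]$, so invariance is automatic --- but it is harmless.

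There is, however, a concrete gap in the step you defer as ``bookkeeping.'' With the $Y_i$-values you read off from Proposition~\ref{prop:kasteleyn_in_vertices} --- $Y_i=\K(h_{i-1})$ when $h_i=s_2h_{i-1}$ (part~\ref{it:for_kasteleyn_lemma}), and $Y_i=1+\K(h_{i-1})$ otherwise (part~\ref{it:for_arf}) --- the $+1$ contributions to $\sum Y_i$ attach to the non-$s_2$ transitions. Under the stated definition of a bad turn ($h_i = s_2 h_{i-1}$), those are the \emph{good} turns, so
\[
\sum_i Y_i \;\equiv\; \sum_i\K(h_i) + \bigl(m - BT(\gamma)\bigr) \pmod 2,
\]
which differs from your target $\sum\K(h_i)+BT(\gamma)$ by $m$ and is not the same when $m$ is odd. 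A quick sanity check: the boundary of a triangular face is a closed null-homotopic path ($\q=0$) with $m=3$, $\sum\K(h_i)$ odd, and every step an $s_2$-step; your identity together with the stated convention would give $\q\equiv 1$. The paper's displayed transition formula in this proof implicitly uses the opposite convention for $bt_{j+1}$ --- it must, to be consistent with the computation in Lemma~\ref{lem:kasteleyn_props}, where the $s_2$-step carries no $+1$. So either read ``bad turn'' as $h_i\neq s_2h_{i-1}$, or correct your target to $\sum Y_i\equiv\sum\K(h_i)+m+BT(\gamma)$; once the convention is fixed, the case analysis (including the $H^B$ cases via Observation~\ref{obs:bdry_edges} and the last part of Proposition~\ref{prop:kasteleyn_in_vertices}) closes exactly as in the paper.
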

We defined $\widetilde\gamma$ in order to avoid marked points as endpoints.
\begin{proof}
Fix $\K\in[\K].$ Recall the correspondence between Kasteleyn orientations and lifts, \ref{lem:vertex_flip}, and take the corresponding lift $l.$
Put $\theta_{j+1} = \measuredangle(P(h_j)T^1h_j, T^1h_{j+1})\in(-\pi,\pi),$ $\varepsilon_j= \K(h_j),$ and $bt_{j+1}\in\Z_2$ is $1$ if and only if $\gamma$ makes a bad turn in $v_{j+1}.$
Proposition \ref{prop:kasteleyn_in_vertices} is equivalent, in these notations, to 
\begin{equation}\label{eq:BT}
R_{\theta_{j+1}}P(h_j)l_{h_j} = R_{(\varepsilon_j +bt_{j+1})2\pi}l_{h_{j+1}}.
\end{equation}
When $\gamma$ is closed, iterating \eqref{eq:BT}, for $j=1,\ldots, m,$ we get that
\begin{align*}
	l_{h_1} &= R_{2\pi(\varepsilon_m+bt_1)+\theta_{1}}P(h_m)R_{2\pi(\varepsilon_{m-1}+bt_m)+\theta_{m}}P(h_{m-1})\cdots \\
    &\;\quad\cdots R_{2\pi(\varepsilon_{1}+bt_2)+\theta_{2}}P(h_{1})l_{h_1}\\
    &=R_{2\pi\sum_{i=1}^m\varepsilon_i+bt_i}R_{\theta_{1}}P(h_m)R_{\theta_{m}}P(h_{m-1})\dots R_{\theta_{2}}P(h_{1})l_{h_1}\\
    &=R_{2\pi (BT(\gamma)+\sum_{i=1}^m\varepsilon_i)}R_{(1+\q(\gamma))2\pi}l_{h_1}=R_{2\pi(\q(\gamma)+1+BT(\gamma)+\sum_{i=1}^m\varepsilon_i)}l_{h_1},
\end{align*}
where 
the one before last passage uses the definition of $\q,$ Definition \ref{def:spin_using_q}.

Similarly, when $\gamma$ is open, iterating \eqref{eq:BT} over $j=1,\ldots, m-1,$ and applying the same reasoning, this time using Definition \ref{def:Q}, we obtain
\[l_{h_m}=R_{2\pi(BT(\gamma)+\sum_{i=1}^{m-1}\varepsilon_i+\Q(\gamma))}l_{h_1}=R_{2\pi(1+BT(\gamma)+\sum_{i=1}^{m}\varepsilon_i+\Q(\gamma))}l_{h_1},\]
where we used $\varepsilon_m=\K(h_m)=1.$ As needed.
\end{proof}
\begin{rmk}
The first case of the proposition appeared before in \cite{Cimasoni}.
Note that although the formula depends on the orientation of $\gamma,$ the result is orientation independent in the closed case. Indeed, flipping the orientation changes each $\K(h)$ to $\K(s_1 h) = \K(h)+1,$ and interchanges the                                 
sets of good turns and of bad turns. Thus, the total change is the number of edges plus the number of vertices of $\gamma,$ that is, a change by $2m=0.$                                                                                                              
A similar argument shows that in the open case the result changes by $1$ when the orientation is flipped.                                                                                                                                                              
\end{rmk}                                                                                                                                                                                                                                                             

\begin{definition}
An automorphism $\phi:G\to G$ defines an action, $\phi_*$ on $\K(G),[\K(G)]$ by
\[
(\phi_*\K)(h)=\K(\phi^{-1}(h)).
\]
An automorphism $\phi$ of $(G,[\K])$ is an automorphism $\phi$ of $G$ for which $\phi_*[\K]=[\K].$ We write $\text{Aut}(G,[\K])$ for the group of these automorphisms.
\end{definition}
\begin{prop}\label{prop:cells_and_kasteleyn}
For any $G\in\oSR_{g,k,l}^0,$ the map \[\coprod_{z\in Z_{G}/\text{Aut}(G)}\CM_{(G,z)}\to\coprod_{[\K]\in[\K(G)]/\text{Aut}(G)}\R_+^{E(G)}/\text{Aut}(G,[\K]),\]
which takes a metric graded graph $(G,z,\ell)$ to $([\K],\ell),$ where $[\K]$ is the Kasteleyn orientation associated to the graded spin structure of $\comb^{-1}(G,z,\ell)$ is a homeomorphism.
\end{prop}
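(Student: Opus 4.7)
The plan is to upgrade Theorem \ref{thm:Kesteleyn_is_spin} from a pointwise bijection between graded spin structures and $[\K(G)]$ into an isomorphism of cell structures on the moduli, with the only nontrivial content being that the associated Kasteleyn class is locally constant on each cell. Fix $G\in\oSR^0_{g,k,l}$.

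\textbf{Step 1: A well-defined map at the level of connected components.} For a generic $\ell\in\R_+^{E(G)}$, Lemma \ref{lem:minimal_compacs_spin} gives that $\comb$ is an isomorphism onto its image when restricted to the smooth locus. Hence for each $z\in Z_{G,\ell}$ there is a unique (up to isomorphism) graded surface $(\Sigma,\SL,s)$ with $\comb(\Sigma,\SL,s,\pp)=(G,z,\ell)$, to which Theorem \ref{thm:Kesteleyn_is_spin} associates $[\K]([\Sigma,\SL,s])\in[\K(G)]$. This defines a map $\kappa_\ell\colon Z_{G,\ell}\to [\K(G)]$.

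\textbf{Step 2: $\kappa_\ell$ is independent of $\ell$.} As $\ell$ varies continuously in $\R_+^{E(G)}$, the resulting family of graded surfaces varies continuously in $\CM_{(G,z)}$. By the isotopy invariance of $\q,\Q$ (Lemma \ref{lem:Q_q_and_bridges}) and by Lemma \ref{lem:Q_q_and_bridges}(\ref{it:qQ_determine_spin}), the isomorphism class of the graded spin structure is locally constant along this family. Now the Kasteleyn class $[\K]$ is determined by $\q_\K,\Q_\K$ (Proposition \ref{prop:arf}) and, conversely, Theorem \ref{thm:Kesteleyn_is_spin} says $[\K]$ and the graded structure determine each other. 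Thus $\ell\mapsto\kappa_\ell(z)$ is locally constant, hence constant on each cell $\CM_{(G,z)}$. The resulting map $\kappa\colon Z_G\to[\K(G)]$ is well-defined.

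\textbf{Step 3: $Aut(G)$-equivariance and bijectivity.} The construction of $[\K]$ from a graded surface in the proof of Theorem \ref{thm:Kesteleyn_is_spin} is functorial in isomorphisms of the surface realizing $G$. Composing with the identification of surface automorphisms that realize $\phi\in Aut(G)$, one checks that $\kappa(\phi_*z)=\phi_*\kappa(z)$, so $\kappa$ is $Aut(G)$-equivariant. Theorem \ref{thm:Kesteleyn_is_spin} shows $\kappa$ is a bijection on each fiber, so it induces a bijection
\[
\bar\kappa\colon Z_G/Aut(G)\xrightarrow{\sim} [\K(G)]/Aut(G),
\]
and matches stabilizers: $Aut(G,z)=Aut(G,[\K])$ whenever $\kappa(z)=[\K]$.

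\textbf{Step 4: Homeomorphism.} By definition $\CM_{(G,z)}=\R_+^{E(G)}/Aut(G,z)$. The proposed map on each summand is given by $(z,\ell)\mapsto(\kappa(z),\ell)$, i.e.\ the identity on metrics combined with $\bar\kappa$ on labels, from $\R_+^{E(G)}/Aut(G,z)$ to $\R_+^{E(G)}/Aut(G,[\K])$. Since the two quotient groups agree (Step 3) and the map on the metric factor is the identity, each summand is mapped homeomorphically to the corresponding summand, and the total map is a homeomorphism.

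The main obstacle is Step 2, the constancy of $\kappa_\ell(z)$ as $\ell$ varies in the cell; once this is in hand (via the topological invariance statements for $\q$ and $\Q$ together with Theorem \ref{thm:Kesteleyn_is_spin} and Proposition \ref{prop:arf}), the remaining equivariance and stabilizer-matching steps are formal, and the homeomorphism assertion reduces to a trivial identification of quotients of $\R_+^{E(G)}$.
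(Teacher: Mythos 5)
Your proposal is correct and takes essentially the same approach as the paper: the core of the argument (your Step 2) is precisely the paper's proof, which uses isotopy invariance of $\q,\Q$ from Lemma \ref{lem:Q_q_and_bridges} together with Proposition \ref{prop:arf} and Theorem \ref{thm:Kesteleyn_is_spin} to show the Kasteleyn class is constant along paths in $\comb^{-1}(\CM_{(G,z)})$. Your Steps 3 and 4 simply make explicit the formal equivariance and stabilizer-matching bookkeeping that the paper leaves implicit after its reduction ``It is enough to show that along a path \dots''.
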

\begin{proof}
It is enough to show that along a path $(\Sigma_t)_{0\leq t \leq 1}$ in $\comb^{-1}(\CM_{(G,z)}),$ the equivalence classes $[\K_t]=[\K_t(\Sigma_t,\SL_t,s_t)]\in[\K(G)]$ are the same.
Take $\K_0\in[\K(\Sigma_0,\SL_0,s_0)].$ This determines the maps $\Q_0,\q_0$ by Proposition \ref{prop:arf}, and the fact that any piecewise smooth path may be isotoped to a non backtracking one on the $1-$skeleton $G\hookrightarrow\Sigma_0.$ Now, varying $(\Sigma_t,\SL_t,s_t)$ is equivalent to varying the metric $\ell_t$ on $G,$ in the component $\CM_{(G,z)}$ continuously. But then it is evident that the maps $\Q_t,\q_t$ determined by $\K_0$ on the paths in resulting embedded graph do not change.
By Lemma \ref{lem:Q_q_and_bridges} we see that $[\K_t]=[\K_0].$
\end{proof}
In light of Proposition \ref{prop:cells_and_kasteleyn}, we can redefine $\oSR^0$ and the related combinatorial moduli spaces.
\begin{nn}
From now on we write \[\oSR_{g,k,l}^0=\{(G,[\K])|G\in\oR^0_{g,k,l},[\K]\in[\K(G)]/\text{Aut}(G)\}.\]
Define $\CM_{(G,[\K])}=\R_+^{E(G)}/\text{Aut}(G,[\K]),$ the moduli of metrics on $G,$ together with a fixed equivalence class of Kasteleyn orientations. $\CM_{(G,[\K])}\hookrightarrow\oCM_{(G,z)},$ for a unique $z\in Z_G,$ as in Proposition \ref{prop:cells_and_kasteleyn}. We therefore set $\oCM_{(G,[\K])}=\oCM_{(G,z)}.$
Define analogously $\CM_{(G,[\K])}(\pp),$ and $\oCM_{(G,[\K])}(\pp).$
%
\end{nn}

\begin{ex}\label{ex:macheta}
Fix a connected component $C$ of $\oRCM_{g,k,l}.$ 
Suppose that smooth surfaces in $C$ have $b$ boundary components and write $g_s = \frac{g-b+1}{2}.$ Let $k_j,~j=1,\ldots,b,$ be the number of boundary marked points on boundary component $j,$ for some locally defined numbering of the boundary components.
One ribbon graph which corresponds to surfaces in $C$ is the following graph $G\in\oR^0_{g,k,l},$ see also Figure \ref{fig:macheta}
\begin{align*}
V =& \{v_{j,j+1}^-\}_{j=2,\ldots,b}\cup\{v_{j,j+1}^+\}_{j\in[b-1]}\cup\{p_{j,i}\}_{j\in[b],i\in[k_j]} \\
&\cup\{v_i^\pm\}_{i=2,\ldots,l}\cup\{u^\pm_i,w^\pm_i\}_{i\in[g_s]}.
\end{align*}
Only $v_i^-$ are internal vertices, the vertices $p_{j,i},v_{j,j+1}^+,v_{j-1,j}^-$ belong to the $j^{th}$ boundary component. The other boundary vertices belong to the first boundary.
\[
H^I = \bigcup_{i\in[b]}H_{\text{bdry},i}\cup H_{\text{bridges}}\cup H_{\text{genus}}\cup H_{\text{internal marked}},
\]
where
\begin{enumerate}
\item
For $j\neq 1,~H_{\text{bdry},j} = \{e_{j,i}\}_{0\leq i\leq k_j+(1-\delta_{jb})}$ are the boundary edges of boundary component $j$ and of face $1.~e_{j,i}/s_0=p_{j,i}$
for $1\leq i\leq k_j.$
In addition, $e_{j,0}/s_0=v^+_{j,j+1},~(s_1e_{j,0})/s_0=p_{j,1}.$
For $j\neq b,1~e_{j, k_j}$ connects $p_{j, k_j}$ to $v_{j-1,j}^-,$ and
$e_{j, k_j+1}/s_0=v_{j-1,j}^-,$\\$s_1(e_{j, k_j+1})/s_0=v_{j,j+1}^+.$
For $j=b,~e_{j, k_j}/s_0=v_{b-1, b}^-.$
They are ordered so that $e_{j,i+1}=s_2' e_{j,i},$
where $s'_2(e) := s_1(s_2^{-1}(s_1(e))),$ for $e\in s_1 H^B.$
\item
\[H_{\text{bdry},1}=a_1,b_1,c_1,d_1,a_2,\ldots,d_{g_s},h_2,\ldots,h_{l},e_{10},e_{1,1},\ldots,e_{1,k_1}\] are the boundary edges of the first boundary, which all belong to face $1,$ ordered by $s'_2$ order.
The boundary vertices, in counterclockwise order starting from $v_{1,2}^+,$ the vertex of the bridge, are
\[v_{1,2}^+,u_1^+,w_1^+,u_1^-,w_1^-,u_2^+,\ldots,w_{g_s}^-,v_2^+,\ldots,v_l^+,p_{1,1},\ldots,p_{1,k_1}.\]
The adjacency relation is therefore
$a_1/s_0 = v_{1,2}^+.~a_i/s_0 = w_{i-1}^-,$ for $i>1,$ while $b_i/s_0 = u_i^+, c_1/s_0 = w_i^+,d_1/s_0 = u_i^-.$
Next, $h_2/s_0 = w_{g_s}^-, h_i/s_0 = v_{i-1}^+,$ for $i>1.$
Finally $e_{1,0}/s_0 = v_l^+,$ and for $i>0,~e_{1,i}/s_0 = p_{1,i}.$
\item
$H_{\text{bridges}} = \{b_{j,j+1}, \bar b_{j,j+1}\}_{j\in[b-1]}$ is the set of bridges between consecutive boundaries.
$$b_{j,j+1}/s_0 = v_{j,j+1}^+, ~\bar b_{j,j+1} = s_1 b_{j,j+1}, ~\bar b_{j,j+1}/s_0 = v_{j,j+1}^-.$$
\item
$H_{\text{genus}} = \{f_i,\bar f_i,g_i,\bar g_i\}_{i\in g_s}$ is a set of internal half edges of face $1,$ such that $f_i$ goes from $u_i^+$ to $u_i^-,\bar f_i = s_1 f_i,$ and $g_i$ goes from $w_i^+$ to $w_i^-,\bar g_i = s_1 g_i.$
\item
$H_{\text{internal marked}} = \{x_i,\bar x_i,y_i, \bar y_i\}_{i=2,\ldots,l},$ is the following set. $y_i$ is the unique edge of face $i,~y_i/s_0=v_i^-,$ and $\bar y_i = s_1y_i.$
The third half edge of $v_i^-$ is $x_i,$ and $\bar x_i = s_1 x_i,~\bar x_i/s_0 = v_i^+.$
\end{enumerate}
\begin{figure}
\centering
\includegraphics[scale=.4]{./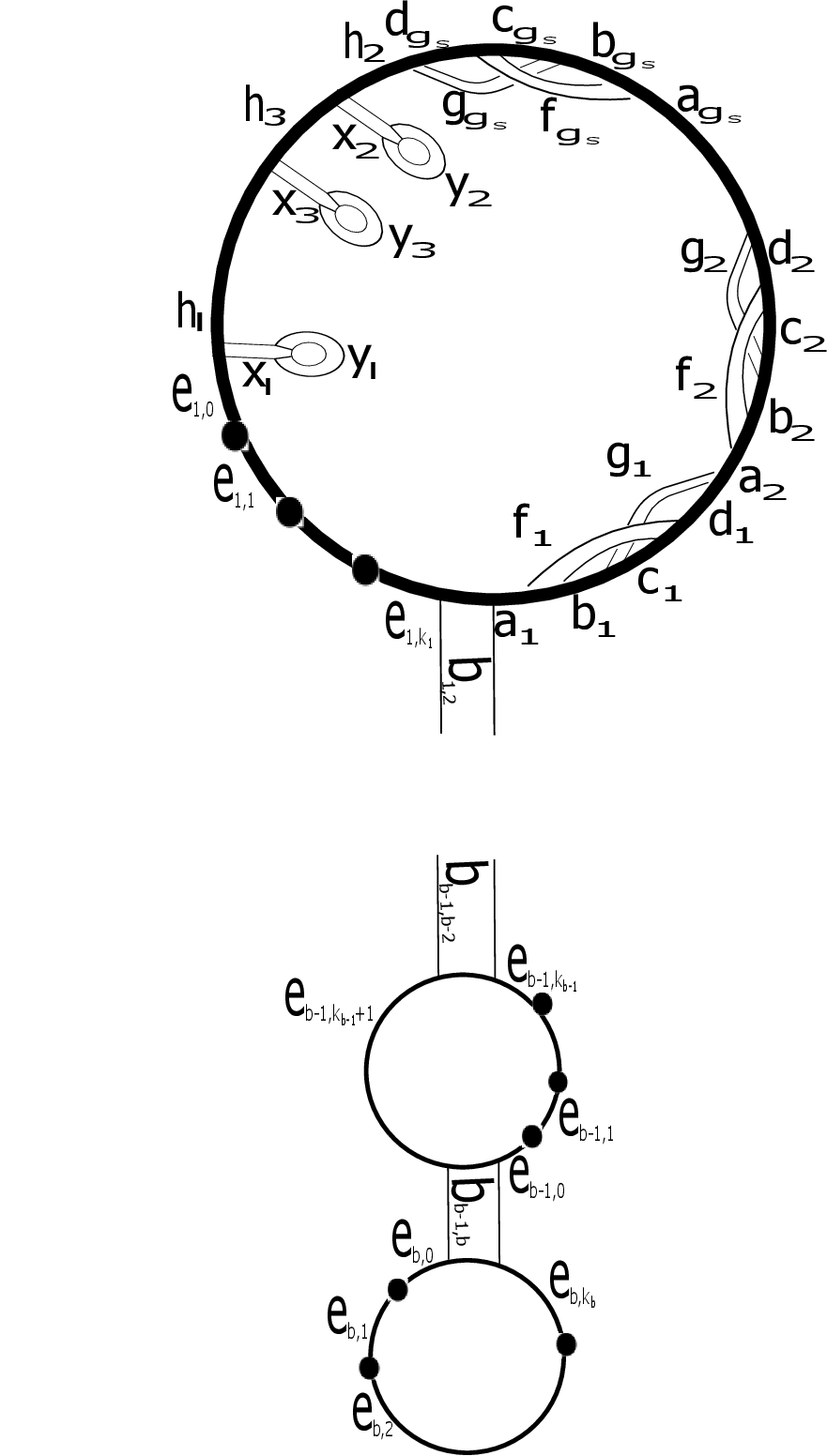}
\caption{}
\label{fig:macheta}
\end{figure}
We now describe $\K(G).$ First of all, $\K(h) = 1$ if $s_1h\in H^B$ or $h=y_i.$ There is no constraint on $\K(x_i),$ but different values are equivalent by flips in $v_i^-.$ Since there are no more internal vertices, for all other edges there are no constraints and no relations. Thus there is a total number of $2^{2g_s+b-1}=2^g$ different graded spin structures in this case. Since this is a topological invariant, for any generic open genus $g$ surface in $C$ there are $2^g$ graded structures. Thus, for any generic open genus $g$ surface which satisfies \eqref{eq:restrictions} there are $2^g$ graded structures.
\end{ex}
\begin{rmk}\label{rmk:machetta}
In \cite{ST0} a notion of parity is defined for smooth graded surfaces with an odd number of boundary point for each component.
It is defined as follows. Given such a graded surface $(\Sigma,\SL,s),$
choose a symplectic basis $\{\alpha_i,\beta_i\}_{i\in[g_s]}$ to $H_1(\Sigma,\Z_2)/H_0(\partial\Sigma,\Z_2).$ The quadratic form $\q$ factors through this quotient. Define $\text{Arf}(\Sigma) = \sum q(\alpha_i)q(\beta_i)(\text{mod } 2).$ This is an isotopy invariant. A spin structure is said to be even if the Arf is $0,$ otherwise it is odd. This notion is generalized, also in \cite{ST0}, to give the \emph{open Arf invariant}, which is defined for any graded surface, and specializes to the parity if there is an odd number of marking on each boundary.

For example, with the notation of Example \ref{ex:macheta}, suppose that each $k_j$ is odd. A possible choice for the symplectic basis is
\[
\alpha_i = b_i\to c_i\to\bar f_i\to b_i,~~\beta_i = c_i\to d_i\to \bar g_i\to c_i.
\]
Now, by Proposition \ref{prop:arf}, \[q(\alpha_i) = 1+ \K(b_i)+\K(c_i)+\K(\bar f_i)+BT(\alpha_i) = \K(\bar f_i),\]
since there is one bad turn. Similarly, $q(\beta_i) = \K(\bar g_i).$
Therefore,
\[
\text{Arf}(\Sigma) = \sum_{i\in [g_s]}\K(\bar f_i)\K(\bar g_i).
\]
A simple calculation now shows that the difference between even and odd spin gradings in this case is $2^{g_s+b-1}=2^{\frac{g+b-1}{2}}.$ 
\end{rmk}

\begin{rmk}
Kasteleyn orientation are named after W. Kasteleyn, who used them to analyze dimer statistics, see for example \cite{Kastel}.
The connection between Kasteleyn orientations and spin structures on closed surfaces is obtained in \cite{Kuper,Cimasoni}.
\end{rmk}
\subsubsection{Adjacent Kasteleyn orientations}
Recall Construction \ref{cons:z_e}. By Proposition \ref{prop:G_and_G_e}, in the cell structure of $\oCM_{g,k,l}^{\text{comb}},$
the cell $(G,[\K])$ is adjacent to cells of the form
$(G_e,[\K_e]),$ for some edge $e\notin\text{Br}(G)\cup\text{Loop}(G),$ $[\K_e]\in[\K(G_e)].$
We now describe $[\K_e]$ explicitly in terms of $[\K].$

Fix a Kasteleyn orientation $\K\in[\K].$
Write $h$ for the unique half edge with $\K(h)=1,h/s_1=e.$
\begin{figure}
\centering
\includegraphics[scale=.4]{./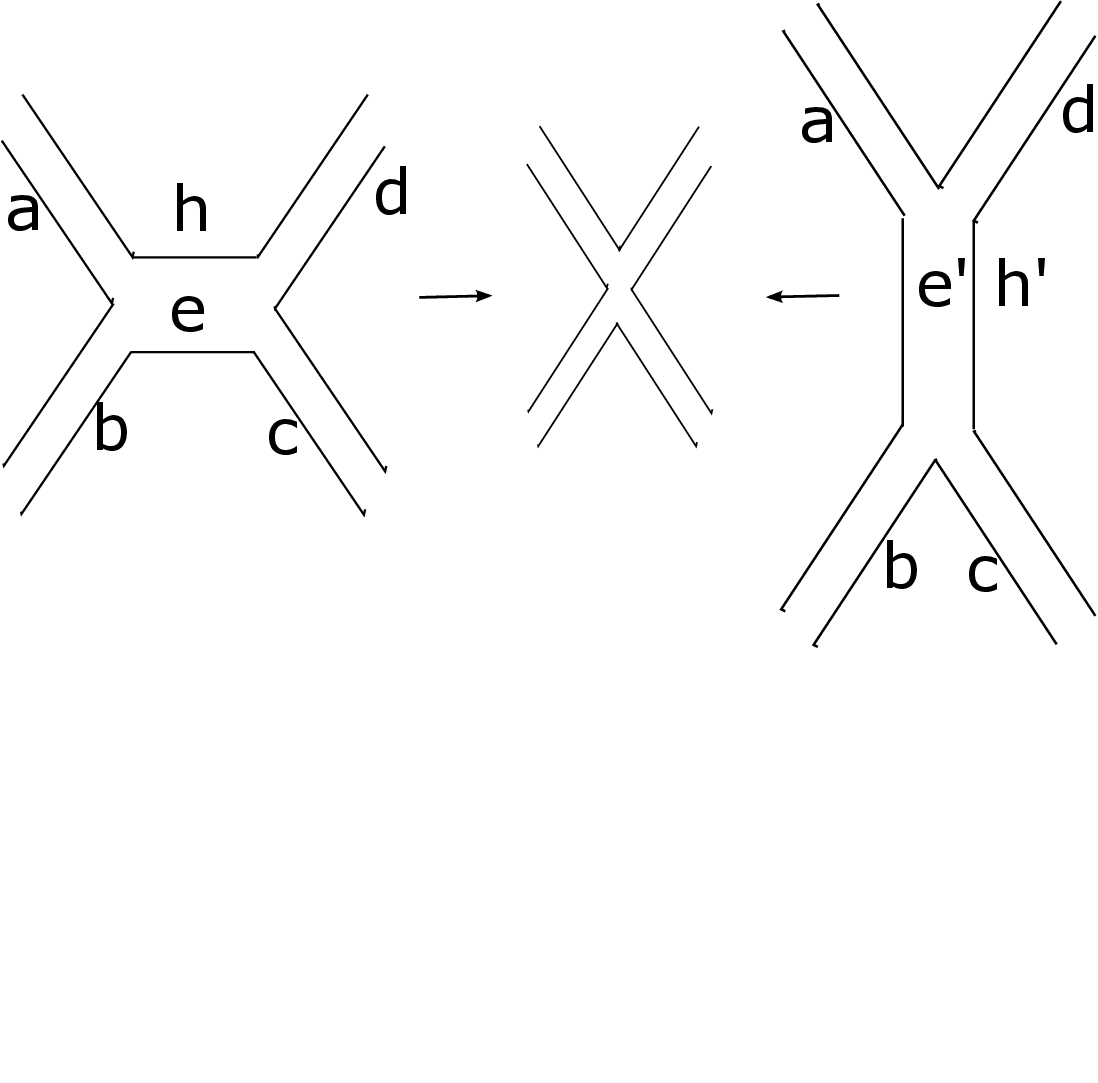}
\caption{$G,\partial_eG,$ and $G_e.$ The middle graph is $\partial_e G.$ We draw an half edge inside the face which contains it.}
\label{fig:neighboring}
\end{figure}
Write $a = s_0(h),b=s_0^2(h),c=s_1(s_0(s_1(h))),d=s_1(s_0^2(s_1(h))),$ see Figure \ref{fig:neighboring}. For shortness write $\bar{x}$ for $s_1(x).$ Apart from some borderline cases which may be treated separately, we may assume all these vertices and half edges are distinct, and then, using vertex flips if needed, we may also restrict ourself to the case where $\K(\bar{d})=1.$
Note that $E(G)\setminus e=E(G_e)\setminus e'$ canonically, for some $e'\in E(G_e).$
We therefore identify these sets, and also identify $H(G)\setminus\{h,s_1h\}$ and $H(G_e)\setminus s_1^{-1}e'.$
In $G_e,$ let $v'_1$ be the vertex from which $a,\bar d$ issue, and $v'_2$ be the vertex from which $b,\bar c$ issue. We may take the half edge $h'$ to be the third half edge from $v'_1.$
Define the assignment $\K':H^I(G)\to\Z_2$ by
\[
\K'(h')=1, ~\K'(\bar h') = 0,~\K'(d) = \K(d)+1=1,~\K'(\bar d) = \K(\bar d)+1=0,
\]
and $\K'(f)=\K(f)$ for any other half edge $f.$

For later purposes, define, for a boundary loop $e,$ and a Kasteleyn orientation $\K\in[\K],$ an assignment $\K',$ by $\K'(h)=\K(h)$ for any $h$ with $h/s_1\neq f,$ where $f$ is the unique edge which shares a vertex with $e,$ and otherwise $\K'(h)=\K(h)+1.$
\begin{lemma}\label{lem:adjacent_spin_cells}
In both cases, $\K'\in[\K(G_e)],$ and moreover $\K'\in[\K_e].$
\end{lemma}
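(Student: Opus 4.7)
The plan divides into two parts: first verify that $\K'$ is a valid Kasteleyn orientation on $G'=G_e$, then identify its equivalence class with $[\K_e]$.

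For the first part, conditions \ref{it:K1} and \ref{it:K2} are essentially bookkeeping. Condition \ref{it:K1} is immediate since every modification introduced in passing from $\K$ to $\K'$ is on internal half-edges (the new $h',\bar h'$ and the preexisting $d,\bar d$), so boundary half-edges still carry value $1$. Condition \ref{it:K2} holds on the new edge $e'$ because $\K'(h')+\K'(\bar h')=1+0=1$; it holds on the edge $d/s_1$ because we flipped both $\K(d)$ and $\K(\bar d)$ and so their sum is preserved; and on every other edge $\K'$ agrees with $\K$. The substantive condition is \ref{it:K3}. I would trace the face permutation $s'_2=(s'_0)^{-1}s_1$ on $G'$ explicitly, comparing it with $s_2$ on $G$ at the six half-edges $h,\bar h,a,b,\bar c,\bar d$ that are affected (everywhere else $s'_2=s_2$). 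Starting from $h'$ one finds that $s'_2(h')=\bar c$ and $(s'_2)^{-1}(h')=\bar a$, while in $G$ the face of $h$ passes through $\bar d$ before reaching $\bar c$ and likewise returns via $\bar a\to h$. Matching up the face(s) of $G'$ with the face(s) of $G$ in a small neighborhood of $e$, the change in $\K$-sum is controlled by $\K(h)+\K(\bar h)=1$ together with the simultaneous flip of $\K(d)$ and $\K(\bar d)$, and these cancellations force each face sum of $\K'$ to remain odd. A brief case split is needed depending on whether the two sides of $e$ border the same face or distinct faces in $G$, but each case reduces to the same identity.

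For the second part I would invoke Proposition \ref{prop:G_and_G_e}: the cell adjacent to $\CM_{(G,[\K])}$ across the codimension-one face $\CM_{\partial_e(G,[\K])}$ is unique, and is of the form $\CM_{(G_e,[\K_e])}$. It therefore suffices to exhibit $\K'$ as the Kasteleyn orientation induced by that limiting graded spin structure. By Proposition \ref{prop:cells_and_kasteleyn} an equivalence class in $[\K(G')]$ records a graded spin structure, and by Lemma \ref{lem:Q_q_and_bridges} the graded structure is detected by the isotopy invariants $\q$ and $\Q$ on closed curves and bridges away from the degenerating edge. Using Proposition \ref{prop:arf} to express $\q_\K$ and $\Q_\K$ combinatorially, I would pick any closed curve or bridge $\gamma$ in $\partial_e G$, realize it in $G$ and in $G'$ as non-backtracking piecewise linear paths (possibly rerouting through the Feynman move), and compute $\q_\K(\gamma_G)-\q_{\K'}(\gamma_{G'})$. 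The calculation records two contributions — the difference in the sum $\sum\K(h_i)$ near $e$, and the difference $BT(\gamma_G)-BT(\gamma_{G'})$ in bad turns — and both changes are governed exactly by the flips at $d,\bar d$ and the values assigned to $h',\bar h'$. A short case check (depending on whether $\gamma$ uses $0$, $1$, or $2$ of the half-edges at the two vertices adjacent to $e$) shows that the two contributions cancel, so $\q_\K=\q_{\K'}$ and $\Q_\K=\Q_{\K'}$ on curves avoiding $e$. Hence $[\K']$ determines the same graded spin structure as $[\K_e]$ by Lemma \ref{lem:Q_q_and_bridges}.

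The main obstacle will be the face-sum analysis in condition \ref{it:K3} combined with the $\q,\Q$ comparison: both demand careful enumeration of how the cyclic orders $s_0,s'_0$ at $v_1',v_2'$ reorganize the faces of $G$ through the Feynman move, and both require the case distinction on whether $h,\bar h$ lie on the same face or on different faces of $G$. Once that analysis is in place, both parts of the lemma follow from the combinatorial identities already assembled.
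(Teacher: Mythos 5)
Your proposal is correct and follows essentially the same route as the paper: you construct a path-matching bijection between non-backtracking curves in $G$ and in $G'=G_e$ via the Feynman move, use Proposition \ref{prop:arf} to show $\q_\K,\Q_\K$ transform to $\q_{\K'},\Q_{\K'}$ under that bijection, and then apply Proposition \ref{prop:cells_and_kasteleyn}, Lemma \ref{lem:Q_q_and_bridges} (part \ref{it:qQ_determine_spin}), and Theorem \ref{thm:Kesteleyn_is_spin} to conclude that $[\K']$ is the class of the limiting graded spin structure, i.e.\ $[\K_e]$. The only differences are cosmetic: you spell out the face-sum verification that $\K'\in[\K(G')]$, which the paper dismisses as "simple," while omitting the boundary-loop case of $e$, which the paper handles in one line via Proposition \ref{prop:G_and_G_e}.
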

\begin{proof}
The claim is straight forward when $e$ is a boundary loop. 
Suppose $e\notin\text{Br}(G)\cup\text{Loop}(G).$ The first assertion is simple, we focus the second one.
Write $C(G),C(G')$ for the set of closed paths without backtracking in $G,G'$ respectively.
Write $O(G),O(G')$ for the set of open directed paths without backtracking in $G,G'$ respectively, which connect boundary vertices which are not marked points.
We have bijections $f_C:C(G)\to C(G'), f_O:O(G)\to O(G')$ defined as follows.
For a path $(e_1\to e_2\to\ldots e_m)\in C(G),$ the path $f_C(e_1\to e_2\to\ldots e_m)\in C(G')$ is defined by erasing any appearance of $e$ in the sequence and adding $e'$ any time we have a move $f\to f'$ where the third edge of the vertex between $f$ and $f'$ is $e.$
The inverse map is defined similarly, but with changing the roles of $e,e'.$ The map $f_O$ is defined in the same way.

Using Proposition \ref{prop:arf} it is straightforward to verify that for any $\gamma\in C(G),~\q_{\K}(\gamma)=\q_{\K'}(f_C(\gamma)),$ and for any $\gamma\in O(G),~\Q_{\K}(\gamma)=\Q_{\K'}(f_C(\gamma)).$

Now, let $(\Sigma_t,\SL_t,s_t)_{t\in[0,1]}$ be a continuous path in $\oCM_{g,k,l}^{\text{comb}},$ with
$(\Sigma_t,\SL_t,s_t)\in\comb^{-1}(\CM_{(G_t,z_t)}),$ where
\[
G_t=\begin{cases}
G,&\mbox{if }t<\frac{1}{2},\\
\partial_eG,&\mbox{if }t=\frac{1}{2},\\
G',&\mbox{if }t>\frac{1}{2},
\end{cases}
\]
and the graded structure $z_0\in Z_{G}$ corresponds to the Kasteleyn orientation $[\K].$
In light of Lemma \ref{lem:Q_q_and_bridges}, Proposition \ref{prop:cells_and_kasteleyn} and isotopy arguments, the Kasteleyn orientation on $G'$ defined by $(\Sigma_t,\SL_t,s_t)_{t\in(\frac{1}{2},1)}$ is the unique class of Kasteleyn orientation for which for any continuous family $(\gamma_t\subseteq\Sigma_t)$ of closed paths or bridges,
$\q(\gamma_t),$ or $\Q(\gamma_t)$ is constant.
By performing an isotopy, we may assume that $\gamma_t$ is in fact a path in the graph $G_t.$
It is easy to see that for $\varepsilon$ small enough, $f_C(\gamma_{\frac{1}{2}-\varepsilon})=\gamma_{\frac{1}{2}+\varepsilon},$ in case $\gamma_t$ are closed, or
$f_O(\gamma_{\frac{1}{2}-\varepsilon})=\gamma_{\frac{1}{2}+\varepsilon},$ in case they are open.
In the first case, $\q_{[\K]}(\gamma_{\frac{1}{2}-\varepsilon})=\q_{[\K']}(\gamma_{\frac{1}{2}+\varepsilon}),$ while in the second the same equation holds for $\Q.$ By Lemma \ref{lem:Q_q_and_bridges}, part \ref{it:qQ_determine_spin}, and Theorem \ref{thm:Kesteleyn_is_spin},
the graded structure $z_{t},t>\frac{1}{2}$ must correspond to $[\K'].$
\end{proof}

\subsubsection{Trivalent graphs}\label{subsec:codim_geq_1}
\begin{definition}
Recall Definition \ref{def:ghost_effective_trivalent}.
Let $G$ be a trivalent graph. Recall that a half node is a $(N^{B})^{-1}-$preimage of a node, and that their collection is denoted $\HN(G).$
An \emph{extended Kasteleyn orientation} on $G$ is a map $\K:H(G)\cup \HN(G)\to\Z_2,$ which satisfies
\begin{enumerate}
\item
For any $h\in H^B,~\K(h)=0.$
\item
For any $h\in H,~\K(h)+\K(s_1h)=1.$
\item
For any node $v,$ if $|N^{-1}(v)|=3,$ then $\K|_{N^{-1}(v)}=1.$ Otherwise $\K(v_{i,1})+\K(v_{i,2})=1,$ where $N^{-1}(v)=\{v_{i,1},v_{i,2}\}.$
\item
For any face $f,~\sum\K(x)=1,$ where the variable $x$ is taken from the set of half edges with $x/s_2=f,$ together with the set of half nodes which belong to $f.$
\end{enumerate}
Two extended Kasteleyn orientations are equivalent if they differ by the action of internal vertex flips. Write $[\K]$ for the equivalence class of $\K.$
Define $\K(G),[\K(G)]$ as the sets of extended Kasteleyn orientations and the set of equivalence classes of extended Kasteleyn orientations. Write $\text{Aut}(G,[\K])$ as the automorphism subgroup of $G$ which preserves $[\K].$
\end{definition}
Item (c) above deals with the case $v$ is a contracted component, whose normalization contains at least three half nodes. In the trivalent case this can only happen if the unique contracted component in $\Norm^{-1}(v)$ is a ghost, and its three marked points are legal. Therefore there are exactly three corresponding half nodes in the non contracted parts, and they are illegal.

With the exact same techniques of Subsection \ref{sec:kasteleyn}, together with Corollary \ref{cor:spin_on_comps}, we obtain
\begin{lemma}\label{lem:equivalence_kasteleyn_critical}
For a trivalent $G$ and a metric $\ell,$ there is a natural bijection between $[\K(G)]$ and $\Spin((\Rcomb)^{-1}(G,\ell)).$
The induced map \[\coprod_{z\in Z_{G}/\text{Aut}(G)}\CM_{(G,z)}\to\coprod_{[\K]\in[\K(G)]/\text{Aut}(G)}\R_+^{E(G)}/\text{Aut}(G,[\K]),\] is a homeomorphism.
In particular, $Z_{G}\simeq[\K(G)]$ canonically.
A half node $v$ in $(G,z)$ is \emph{illegal} if and only if $\K(v)=1$ for any $\K\in[\K]$ which corresponds to $z.$
\end{lemma}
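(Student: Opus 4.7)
The strategy is to reduce the statement to the smooth case treated in Theorem \ref{thm:Kesteleyn_is_spin} and Proposition \ref{prop:cells_and_kasteleyn}, by passing to the normalization. Fix a generic metric $\ell$ on $G$ and $(\Sigma,\SL,s) \in \comb^{-1}(G,\ell)$. By Proposition \ref{prop:for_base_JS}, the normalization $\NNN(\Sigma)$ corresponds to $\NNN(G,\ell)$, whose connected components are smooth trivalent ribbon graphs $G_i$, with each half-node of $G$ becoming a new boundary marking on one of the $G_i$. Because a trivalent graph has no internal Ramond nodes (all nodes are boundary or arise from ghost attachments, by effectivity), the hypotheses of Corollary \ref{cor:spin_on_comps} hold: $\Spin(\Sigma)$ is in bijection with tuples of spin structures with a lifting on the smooth components of $\NNN(\Sigma)$ for which every original boundary marking is legal and for every boundary node exactly one half-node is legal.

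I would then apply Theorem \ref{thm:Kesteleyn_is_spin} componentwise, identifying the spin structures with a lifting on each $(G_i,\ell_i)$ with $[\K(G_i)]$, and using Proposition \ref{prop:kasteleyn_in_vertices} to read off the legality of each boundary marking from the Kasteleyn value on its adjacent internal half-edge. This packages a graded spin structure on $\Sigma$ into Kasteleyn data on each $G_i$ together with legality records at the new boundary markings, which glue to an assignment $\K\colon H(G) \cup HN(G) \to \Z_2$. An axiom-by-axiom verification would then show that such an assignment is exactly an extended Kasteleyn orientation on $G$ and that, conversely, every extended Kasteleyn orientation arises this way: the two-sided node constraint $\K(v_{i,1})+\K(v_{i,2}) = 1$ corresponds to ``exactly one legal half-node'' from Corollary \ref{cor:spin_on_comps}, the case $|N^{-1}(v)|=3$ encodes the effectivity prescription at ghost-type attachments, the face condition on $G$ reduces to the smooth face condition on each $G_i$ once half-nodes are counted as boundary markings, and vertex-flip equivalence on $G$ decomposes as independent vertex-flip equivalences on the internal vertices of each $G_i$.

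With the bijection on sets established, the homeomorphism of moduli follows exactly as in Proposition \ref{prop:cells_and_kasteleyn}: both $\coprod_z \CM_{(G,z)}$ and $\coprod_{[\K]} \R_+^{E(G)}/Aut(G,[\K])$ sit over $\RCM_G = \R_+^{E(G)}/Aut(G)$ with finite fibers identified by the bijection above, and the same isotopy and path-lifting argument using the invariance of $\q$ and $\Q$ along families (Lemma \ref{lem:Q_q_and_bridges}, part \ref{it:qQ_determine_spin}) identifies the two $Aut(G)$-actions. The legality characterization of half-nodes is read off directly from the construction, which assigns $\K(v)=1$ precisely when the corresponding new boundary marking on $\NNN(G)$ is non-legal. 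The main technical obstacle I foresee is the bookkeeping of the face condition at ghost attachments, where the correspondence between a half-node of $G$ and a new boundary marking of $\NNN(G)$ must be checked to translate the mod-$2$ face count faithfully between the two viewpoints; this is a finite combinatorial check, but it demands consistent orientation conventions to ensure the parity signs match.
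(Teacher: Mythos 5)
Your strategy --- pass to $\NNN(\Sigma)$ via Corollary \ref{cor:spin_on_comps}, apply the smooth Kasteleyn theory componentwise, record legality at the new boundary markings, glue, and rerun the path-lifting argument of Proposition \ref{prop:cells_and_kasteleyn} for the homeomorphism --- is exactly the route the paper's one-line citation to Subsection \ref{sec:kasteleyn} and Corollary \ref{cor:spin_on_comps} has in mind, and your endpoints (the gluing to a map $H(G)\cup HN(G)\to\Z_2$, the legality characterization, the identification of the $Aut(G)$-actions) are all correct.

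However, the intermediate step of ``applying Theorem \ref{thm:Kesteleyn_is_spin} componentwise, identifying the spin structures with a lifting on each $(G_i,\ell_i)$ with $[\K(G_i)]$'' is not valid as stated, and neither is the claim that Proposition \ref{prop:kasteleyn_in_vertices} lets you ``read off the legality of each boundary marking from the Kasteleyn value on its adjacent internal half-edge.'' Theorem \ref{thm:Kesteleyn_is_spin} concerns graded structures, but the induced structure with a lifting on a component $G_i$ is generically not graded: the half-nodes on illegal sides are illegal boundary marked points. For such a non-graded lifting, the assignment $h\mapsto\K(h)$ on $H^I(G_i)$ need not satisfy the face condition in the sense of Definition~\ref{sec:kasteleyn}: rerunning the argument of Lemma \ref{lem:kasteleyn_props}, a legal boundary marked point contributes an extra $R_{2\pi}$ (via requirement (3) of a boundary lifting) while an illegal one contributes none, so the parity of $\sum_{h/s_2=f}\K(h)$ is shifted by the number of illegal half-nodes incident to $f$. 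That shift is precisely what the extended face condition absorbs through the terms $\K(v)$ for half-nodes $v$. In the same vein, for $h$ with $s_1h\in H^B$ one always has $\K(h)=1$ independently of whether the adjacent degree-two vertex is legal, so legality cannot be read from edge values; it is genuinely additional data, which is why the extended orientation is defined on $H(G)\cup HN(G)$ rather than on $H(G)$ alone. Your subsequent sentence, in which you package componentwise edge data together with separate legality records and glue them, is the correct picture and self-corrects the slip; but that step should be established by rerunning Lemma \ref{lem:kasteleyn_props} and Theorem \ref{thm:Kesteleyn_is_spin} in the nodal setting with the lift rule at an illegal half-node taken to be that of a non-marked boundary vertex, not by quoting the graded Theorem \ref{thm:Kesteleyn_is_spin} on the components as a black box.
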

From now on we denote trivalent graphs $(G,z)$ by $(G,[\K]),$ for the corresponding $[\K]\in[\K(G)].$
\begin{definition}
Define $\CM_{(G,[\K])}=\R_+^{E(G)}/\text{Aut}(G,[\K]),$ the moduli of metrics on $\CM_{G},$ together with a fixed equivalence class of Kasteleyn orientations.
Define $\oCM_{(G,[\K])}=\oCM_{(G,z)},$ for the unique $z$ which corresponds to $[\K]$ by the above lemma.
For $f_1,\ldots,f_s\in E(G),$ set $\partial_{f_1,\ldots,f_s}\oCM_{(G,[\K])}$ as the face of $\oCM_{(G,[\K])}$ defined by setting the coordinates $\ell_{f_1},\ldots,\ell_{f_s}$ to $0.$ For $p_1,\ldots,p_l>0$ define $\CM_{(G,[\K])}(\pp),~\oCM_{(G,[\K])}(\pp)$ by setting the perimeters to these values.
\end{definition}

Suppose $G$ is a trivalent graph $\K\in\K(G),$ and $e\in \text{Br}(G).$ In case $e$ is a boundary edge, let $h_1$ be its internal half edge, $h_1/s_1=e,~h_1\in H^I.$
In case $e$ is an internal edge, write $s_1^{-1}(e)=\{h_1,h_2\},$ where $\K(h_i)=i(\text{mod } 2).$
Define $\partial_e\K$ to be the unique map $\partial_e\K:H(\partial_eG)\cup \HN(\partial_eG)\to\Z_2,$ which agrees with $\K$ on any half edge $h'\notin s_1^{-1}e,$
and $\partial_e\K(\partial_e h_i)=i(\text{mod } 2).$
In a similar way, one can define $\partial_{e_1,\ldots,e_r}\K$ for a compatible sequence of bridges.
%
%

It is straightforward that
\begin{obs}\label{obs:degenerating_kasteleyn_comb}
For any trivalent $(G,[\K]),$ and a bridge $e,$ the graph $(\partial_eG,[\partial_e\K])$ is a well defined trivalent graph, in particular $\partial_e\K\in[\K(\partial_eG)].$
Moreover, $\partial_e:[\K(G)]\to[\K(\partial_e G)]$ is a bijection.

In addition, for any trivalent connected graph $(G,[\K]),$ there is a unique smooth trivalent $(G',[\K']),$ and a unique, up to order, compatible sequence of bridges $e_1,\ldots,e_r$
with $(G,[\K])=\partial_{e_1,\ldots,e_r}(G',[\K']).$
\end{obs}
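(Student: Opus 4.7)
The plan is to verify the three claims in order: (i) $\partial_e \K \in \K(\partial_e G)$ (well-definedness at the level of orientations), (ii) the induced map on equivalence classes is a bijection, and (iii) the uniqueness of decomposition of a trivalent graph over a smooth one.

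For (i), I will go through the four axioms defining an extended Kasteleyn orientation on $\partial_e G$. The boundary axiom $\K'(h) = 0$ on $h \in H^B$ and the axiom $\K'(h) + \K'(s_1 h) = 1$ are immediate on half edges surviving the contraction, since those values are inherited unchanged from $\K$. The content is to check the new node axiom and the face axiom. For the node created by contracting $e$: if $e$ is a boundary edge bridge, the contraction creates a ghost/shrunk component whose three half-node labels $\partial_e h_1, \partial_e h_2$ and the surviving half node must all equal $1$; this follows from $\K(h) = 0$ on boundary halves and the face-sum axiom on $G$ for the face containing $e$. If $e$ is an internal bridge with two boundary vertices, the two new half-nodes receive the values $0$ and $1$ by the prescription $\partial_e\K(\partial_e h_i) = i \pmod 2$, exactly matching the required node relation $\K(v_{i,1}) + \K(v_{i,2}) = 1$. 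The face axiom is verified by noting that on each face of $\partial_e G$, the sum of Kasteleyn values on half-edges and half-nodes is precisely the corresponding sum on the original face of $G$ (the contributions $\K(h_1) + \K(h_2) = 1$ are replaced by a single half-node contribution equal to $1$, or redistributed between two faces in exactly the right way). Equivalence-class well-definedness follows because internal vertex flips in $G$ correspond one-to-one with internal vertex flips in $\partial_e G$ (no internal vertex is destroyed by contracting a bridge, since bridges have boundary vertex endpoints or internal half-edges with boundary vertex endpoints).

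For (ii), I will construct an explicit inverse $\partial_e^{-1}$. Given $[\K'] \in [\K(\partial_e G)]$, I reopen the contracted bridge: reintroduce the edge $e$ with its two half-edges $h_1, h_2$, and assign $\K(h_i) \equiv i \pmod 2$ using the values of $\K'$ on the corresponding half-nodes. The remaining values are inherited from $\K'$ unchanged. I will check that the result lies in $\K(G)$ (an axiom-by-axiom verification parallel to (i)) and that the map descends to equivalence classes. The compositions $\partial_e \circ \partial_e^{-1}$ and $\partial_e^{-1} \circ \partial_e$ are the identity on $[\K(\cdot)]$ essentially by definition. The main obstacle here is being careful about the borderline cases (when the two half-nodes coincide because of a self-loop, or when the contraction creates a shrunk component with three half-nodes rather than two); these need to be handled as separate sub-cases, using the explicit description of $\partial_e G$ in Definition \ref{def:bridges_and_cont_half_edges} and Figure \ref{fig:Feynman}.

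For (iii), existence and uniqueness of the smooth trivalent model: I use induction on the number $r$ of nodes plus shrunk boundaries of $(G, [\K])$. The base case $r = 0$ is tautological: $G$ is already smooth trivalent. For the inductive step, I use part \ref{it:trivalent} of Observation \ref{obs:highest_bdry_cels}, which is the analogous statement for the underlying graphs (without spin structure): every trivalent $G$ is uniquely $\partial_{e_1,\ldots,e_r} G'$ for a smooth trivalent $G'$ and a compatible sequence of bridges (up to order). Given $(G, [\K])$, let $G = \partial_{e_1,\ldots,e_r} G'$ be this decomposition at the graph level, and by the bijection of part (ii) applied iteratively, there is a unique equivalence class $[\K'] \in [\K(G')]$ with $[\K] = \partial_{e_1,\ldots,e_r}[\K']$. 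Uniqueness follows from the graph-level uniqueness combined with the injectivity from (ii). The only subtle point is verifying that the contractions commute (compatibility of the sequence), which is a formal check: contracting two distinct bridges in either order yields the same extended Kasteleyn orientation because the definitions only involve the local data at each bridge.
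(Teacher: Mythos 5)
Your verification is correct; the paper offers no proof of this observation (it is introduced with ``It is straight forward that''), and your axiom-by-axiom check of the extended Kasteleyn conditions, the explicit reopening inverse on equivalence classes (well defined because half-node and bridge values are invariant under internal vertex flips), and the reduction of the decomposition claim to the graph-level uniqueness of Observation \ref{obs:highest_bdry_cels}, part \ref{it:trivalent}, combined with the bijectivity from your part (ii), is precisely the intended argument. The only minor imprecision is that Observation \ref{obs:highest_bdry_cels}, part \ref{it:trivalent}, is stated for graded graphs $(G,z)$ rather than for bare ribbon graphs, but the underlying-graph statement you invoke follows from it by forgetting the grading, so nothing is lost.
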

With the same techniques of the proof of Lemma \ref{lem:adjacent_spin_cells} one obtains
\begin{lemma}\label{lem:degenerating_kasteleyn}
Let $G$ be a trivalent graph, $e_1,\ldots,e_r$ a compatible sequence of bridges. Under the identification of Lemma \ref{lem:equivalence_kasteleyn_critical} between $Z_H,[\K(H)],$ for $H=G,\partial_{e_r}G,\ldots,\partial_{e_1,\ldots,e_r}G,$
\[
\overline{\CM}_{\partial_{e_1,\ldots,e_r}(G,[\K])}\simeq
\partial_{e_1,\ldots,e_s}\oCM_{\partial_{e_{s+1},\ldots,e_r}(G,[\K])},
\]
canonically.
\end{lemma}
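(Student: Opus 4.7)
The plan is to reduce the general statement to the single-bridge case ($r=1$, $s=0$) by induction on $r$, and then invoke the compatibility of Kasteleyn data with the moduli cell structure. Given that Observation \ref{obs:degenerating_kasteleyn_comb} already provides a well-defined combinatorial bijection $\partial_e:[\K(G)]\to[\K(\partial_e G)]$, what remains is to verify that this bijection agrees with the geometric identification of cells coming from $\comb$: namely, that the cell $\CM_{(\partial_e G,[\partial_e\K])}$ is precisely the face $\{\ell_e=0\}$ of $\CM_{(G,[\K])}$ after passing through $\comb^{-1}$ and $\comb$.

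First I would handle the bookkeeping. For a single bridge $e$, the graphs $G$ and $\partial_e G$ share the same edge set apart from $e$, and there is a canonical homeomorphism $\{\ell_e=0\}\subset\R_{\ge 0}^{E(G)}\simeq\R_{\ge 0}^{E(\partial_e G)}$, which descends to an injection of the quotient cells once one knows the two automorphism groups $Aut(G,[\K])$ and $Aut(\partial_e G,[\partial_e\K])$ are compatible under this identification. The latter is essentially formal: an automorphism of $G$ fixing a bridge descends to an automorphism of $\partial_e G$, and an automorphism of $(G,[\K])$ necessarily preserves the legal/illegal labels on the two half-edges of $e$ (Observation \ref{obs:bridges_under_equivalent_kasteleyn}), so it descends to an automorphism preserving $[\partial_e\K]$.

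The geometric content is the identification of the Kasteleyn equivalence classes under degeneration, and this is the main obstacle. I would take a continuous path $(\Sigma_t,\SL_t,s_t)_{t\in[0,1]}$ in $\oCM_{g,k,l}$ whose image under $\comb$ traces a path that for $t<1$ lies in $\CM_{(G,[\K])}$ and whose limit at $t=1$ is a surface with an extra node or shrunk boundary along which the bridge $e$ degenerates (in the sense of Definition \ref{def:smoothing-top} and Lemma \ref{lem:Q_q_and_bridges}(\ref{it:legal_illegal})). As in the proof of Lemma \ref{lem:adjacent_spin_cells}, the limiting graded structure is uniquely characterized by the invariance of $\q$ on all closed paths and $\Q$ on all bridges disjoint from $e$. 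Those paths extend canonically from $G$ to $\partial_e G$ (since we only contract one edge to a special point), and Proposition \ref{prop:arf} lets me compute $\q_{[\K]}$ and $\q_{[\partial_e\K]}$ (resp.\ $\Q$) purely combinatorially; the equality $\q_{[\K]}(\gamma)=\q_{[\partial_e\K]}(\gamma)$ follows from the definition of $\partial_e\K$, because on any path $\gamma$ avoiding $e$ the values $\K(h)$ and bad-turn counts are unchanged. This forces $[\partial_e\K]$ to be exactly the Kasteleyn class attached to the limiting surface.

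Finally, to conclude the nesting $\overline{\CM}_{(\partial_e G,[\partial_e\K])}\hookrightarrow\partial_e\oCM_{(G,[\K])}$ I would observe that by Lemma \ref{lem:equivalence_kasteleyn_critical} both sides parametrize the same topological face of the combinatorial moduli of graded graphs, and the continuous map $\comb$ together with Lemma \ref{lem:minimal_compacs_spin} identifies the cell structure of $\oCM_{g,k,l}^{comb}$ with the one on the surface side. Induction on $r$ then yields the full statement: at the $j$-th step I apply the single-bridge result to $(\partial_{e_1,\ldots,e_{j-1}}G,[\partial_{e_1,\ldots,e_{j-1}}\K])$ using that $e_j$ remains a bridge there by the definition of compatible sequence of bridges (Definition \ref{def:bridges_and_cont_half_edges}), so $\partial_{e_j}$ is well-defined at each stage and the iterated face $\partial_{e_1,\ldots,e_s}$ matches the iterated cell contraction. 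The main technical obstacle is precisely the invariance-of-$\q$-and-$\Q$ argument above, which requires careful handling of the case when $e$ is an internal edge between two boundary vertices versus when $e$ is a boundary edge between two legal special points, and in the latter case additionally checking the ghost component produced in $\partial_e G$ (as in Definition \ref{def:bridges_and_cont_half_edges} and Figure \ref{fig:Feynman}(d)) carries the expected legal/illegal labeling forced by $\partial_e\K$.
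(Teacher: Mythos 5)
Your proposal is correct and follows essentially the route the paper itself takes: the paper dispatches this lemma with the single remark that it follows ``with the same techniques of the proof of Lemma \ref{lem:adjacent_spin_cells},'' and your argument is precisely that technique spelled out — a continuous degenerating family of graded surfaces, the isotopy invariance of $\q$ and $\Q$ from Lemma \ref{lem:Q_q_and_bridges}, their combinatorial computation via Proposition \ref{prop:arf}, and the uniqueness statement of Lemma \ref{lem:Q_q_and_bridges}(\ref{it:qQ_determine_spin}) together with Theorem \ref{thm:Kesteleyn_is_spin} to pin down the limiting class, plus induction on the length of the compatible sequence. Your closing remark about matching the legal/illegal half-node labels of $\partial_e\K$ with the geometric degeneration of the illegal side (Lemma \ref{lem:Q_q_and_bridges}(\ref{it:legal_illegal})) is exactly the one point that needs care beyond the verbatim repetition of that earlier proof, so nothing essential is missing.
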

In what follows we shall identify $\oCM_{(G,z)}$ and the corresponding $\oCM_{(G,[\K])}$ without further notice.
%
%
%
%

\subsection{Orientation}\label{sec:or}
In this subsection we construct an orientation to $\oCM_{g,k,l}^{\text{comb}}.$
We do it by writing an explicit formula for the orientation of each highest dimensional cell of $\oCM_{g,k,l}^{\text{comb}}(\mathbf{p}),$ that is, for cells $\CM_{(G,[\K])}(\pp)$ where $G\in\oR^0,[\K]\in[\K(G)],$ and then showing that on codimension $1$ faces between two such cells, the induced orientations disagree. We also discuss the induced orientation on the boundary, and prove that these orientations are the ones induced from $\oCM_{g,k,l}$ by $\comb_*.$

For $G\in\oR^{0}_{g,k,l},$ we have a map
\begin{equation}\label{eq:or_exact_seq}
A_G:\R_+^{E(G)}\to\R^{F(G)}=\R^{[l]},
\end{equation}
which takes as input a collection of edge length and outputs the face perimeters. \[\CM_{(G,[\K])}(\mathbf{p}) = A_G^{-1}(\mathbf{p})/{\text{Aut}(G,[\K])}.\]
In particular, orienting $\CM_{(G,[\K])}$ is equivalent to orienting $\text{ker}(A_G)/{\text{Aut}(G,[\K])}.$
Using the exact sequence
\begin{equation}\label{eq:induced_or}
0\to \text{ker}(A_G)\to\R^{E(G)}\to\R^{F(G)}=\R^{[l]}\to 0,
\end{equation}
we see that orienting $\R^{E(G)},\R^{[l]},$ or equivalently, ordering $E(G),[l],$ up to even permutations, gives an orientation to $\CM_{(G,[\K])}(\mathbf{p}),$
as long as the action of $\text{Aut}(G,[\K])$ preserves the orientation.

Fix any order for $[l],$ for example $1,2,\ldots, l.$ Choose any Kasteleyn orientation $\K\in [\K].$ Define $\mathfrak{o}_i = \mathfrak{o}_{(G,\K,i)}$ by
\[
\bigwedge_{\K(h)=1, h/s_2=i} d\ell_h,
\]
that is, we take the wedge of $d\ell_h$ over half edges $h$ of face $i,$ with $\K(h)=1.$ The wedge is taken counterclockwise. Because there is an odd number of half edges of the $i^{th}$ face with $\K=1,$ the element $\mathfrak{o}_i$ is well defined, and independent on which half edge appears first. In addition, $\mathfrak{o}_i$ is an odd degree form.

\begin{definition}
Choose any Kasteleyn orientation $\K.$ Put \[\mathfrak{o}_{(G,\K)}= \bigwedge_{i=1}^l\mathfrak{o}_i.\]
Define $\bar{\mathfrak{o}}_{(G,\K)}$ as the orientation on $\text{ker}(A_G)$ induced from exact sequence \eqref{eq:induced_or}, when $\R^{E(G)}$ is oriented by $\mathfrak{o}_{(G,\K)},$ and $\R^{[l]}$ by $\bigwedge_{i=1}^l dp_i.$
\end{definition}
\begin{rmk}
Because both $dp_i$ and $\mathfrak{o}_i$ are odd variables, choosing another order on $[l]$ does not change $\bar{\mathfrak{o}}_G.$
\end{rmk}
\begin{lemma}\label{lem:or_indep}
$\bar{\mathfrak{o}}_{(G,\K)}$ depends only on $[\K].$
\end{lemma}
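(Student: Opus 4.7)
Proof plan.

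The orientation $\bar{\mathfrak{o}}_{(G,\K)}$ on $\ker(A_G)$ is induced from $\mathfrak{o}_{(G,\K)}$ on $\R^{E(G)}$ via the exact sequence \ref{eq:induced_or} together with the fixed orientation $\bigwedge_{i=1}^l dp_i$ on $\R^{[l]}$, both of which are manifestly $\K$-independent. Thus it suffices to show that $\mathfrak{o}_{(G,\K)}$ itself is invariant under internal vertex flips, and since $[\K]$ is the orbit of $\K$ under such flips, we reduce to verifying $\mathfrak{o}_{(G,\K)} = \mathfrak{o}_{(G, f_v \K)}$ for an arbitrary internal vertex $v$ of $G$.

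Fix such a $v$; since $G$ is trivalent, $v$ carries three half-edges $h_1, h_2, h_3$ in $s_0$-cyclic order. Write $\epsilon_j = \K(h_j)$ and $i_j = h_j/s_2$ for the face containing $h_j$. The relation $\K(h) + \K(s_1 h) = 1$ on internal edges forces every edge $e$ of $G$ to contribute the single factor $d\ell_e$ to exactly one of the face wedges $\mathfrak{o}_{i(h)}$, namely the one determined by its $\K = 1$ half. The flip $f_v$ simultaneously toggles $\K(h_j)$ and $\K(s_1 h_j)$ for $j=1,2,3$ (in the generic case where $v$ has no loops); equivalently, each incident edge $e_j = \{h_j, s_1 h_j\}$ migrates between the wedges $\mathfrak{o}_{i_j}$ and $\mathfrak{o}_{i_{j-1}}$ (indices modulo $3$), while every half-edge not incident to $v$ is untouched.

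In the generic subcase where $i_1, i_2, i_3$ are distinct and $v$ has no loops, decompose $\mathfrak{o}_{i_j}^\K = \alpha_j \wedge V_j^\K \wedge \beta_j$, where $\alpha_j, \beta_j$ collect the face-$i_j$ contributions away from the $v$-corner (these are $f_v$-invariant), and $V_j^\K$ is the ordered sub-wedge of those among $d\ell_{e_{j+1}}, d\ell_{e_j}$ present at the $v$-corner, in the $s_2$-cyclic order placing $e_{j+1}$ before $e_j$. After pulling $d\ell_{e_1} \wedge d\ell_{e_2} \wedge d\ell_{e_3}$ to the front of $\mathfrak{o}_{(G,\K)}$ and of $\mathfrak{o}_{(G, f_v \K)}$, comparison reduces to matching two residual signs depending on $|\alpha_j|, |\beta_j|$ modulo $2$ and on the permutation of $\{e_1, e_2, e_3\}$ induced by the migration. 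The crucial observation is that $|V_j^\K| + |V_j^{f_v \K}| = 2$, so the position shifts from shrinking or enlarging $V_j$ always contribute even changes that cancel modulo $2$; a direct check of the $2^3$ configurations of $(\epsilon_1, \epsilon_2, \epsilon_3)$ (reducible by the symmetry $\K \leftrightarrow f_v \K$ and cyclic relabeling) then verifies the equality. The degenerate subcases, where some $i_j$ coincide or $v$ carries a loop, are governed by the same local principle: loop half-edges are $\K$-invariant under $f_v$, and coincident faces merely require reading two or three of the $V_j$-blocks consecutively within one $\mathfrak{o}_{i_j}$. The principal obstacle is the combinatorial bookkeeping across the wedge product in the coincident-face subcases, where the $s_2$-cyclic order inside a single face must be tracked carefully; the parity cancellation above, however, reduces the argument to a finite check of local patterns.
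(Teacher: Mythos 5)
You take a genuinely different route from the paper. The paper converts the sign-comparison into a chord-diagram count: it lays the half-edges of $H_\K$ and $H_{\K'}$ on two parallel lines, joins matching edges by chords, identifies the sign difference with the parity of the number of chord crossings, and shows this parity is even via the clean structural lemma that for exactly one $j$ one has $I_j = I_{j+1}\sqcup I_{j+2}$, so each vertical chord meets the three exceptional chords $c_{h_j}$ either zero or two times. You instead decompose each $\mathfrak{o}_{i_j}^\K = \alpha_j\wedge V_j^\K\wedge\beta_j$ at the $v$-corner, pull the $d\ell_{e_j}$ to the front in both $\mathfrak{o}_{(G,\K)}$ and $\mathfrak{o}_{(G,f_v\K)}$, note the remainders coincide, and reduce to comparing the extraction signs. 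Both approaches are local at $v$ and are valid in spirit; the chord-diagram picture is arguably cleaner because the cancellation is visible structurally, whereas your bookkeeping requires explicit position tracking.

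That said, your proposal has a genuine gap at its endpoint. The crucial observation $|V_j^\K|+|V_j^{f_v\K}|=2$ is correct and does imply that all downstream position shifts (of $\alpha_{j'},\beta_{j'}$ and of faces other than $i_1,i_2,i_3$) are even. But this controls only part of the extraction sign. The extraction sign of $d\ell_{e_1}\wedge d\ell_{e_2}\wedge d\ell_{e_3}$ splits into $(-1)^{p_1+p_2+p_3}$ times the sign of the permutation sorting $(p_1,p_2,p_3)$ into increasing order, where $p_j$ is the position of $d\ell_{e_j}$. Neither the cancellation of $\sum_j p_j - \sum_j p_j' \pmod 2$ (which requires a cyclic telescoping of the base positions $B_j$ together with $\sum_j\epsilon_j - \sum_j\epsilon_{j+1}=0$) nor the equality of the two sorting-permutation signs is actually established in your writeup; the latter in particular is not a consequence of the even-degree observation and depends on the relative order of the $V_j$-slots. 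The claim that a ``direct check of the $2^3$ configurations'' finishes is asserted but not performed, and the reduction to that check is not stated precisely enough to know what the check is. You also explicitly flag the coincident-face and loop subcases as the ``principal obstacle'' and leave them unresolved; this is in line with the paper's own deferral of borderline cases, but combined with the unperformed generic-case verification it leaves the proof incomplete as written.
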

Before we get to the proof, we add a few auxiliary definitions.
\begin{definition}\label{def:good_order}
Let $G$ be any open ribbon graph.
A \emph{good ordering} is a bijection $n:H^I\to|H^I|,$ which satisfies the following properties. First, if $i(h) < i(h'),$ that is $h$ belongs to face marked $i,$ and $h'$ to face marked $i'>i,$ then $n(h)<n(h').$ Thus, half edges of the same face are clustered together. Second, the ordering $n,$ when restricted to half edges of a single face, agrees with the counterclockwise ordering.

Let $n$ be a good ordering, as in Definition \ref{def:good_order}, and $\K\in\K(G)$ a Kasteleyn orientation. Define $H_\K = \{h\in H^I| \K(h)=1\}.$
We also define $n_\K:|H^I|\to \Z$ by
\[
n_\K(i) = |\{h\in H_\K|n(h)<i\}|.
\]

\end{definition}
\begin{figure}
\centering
\includegraphics[scale=.4]{./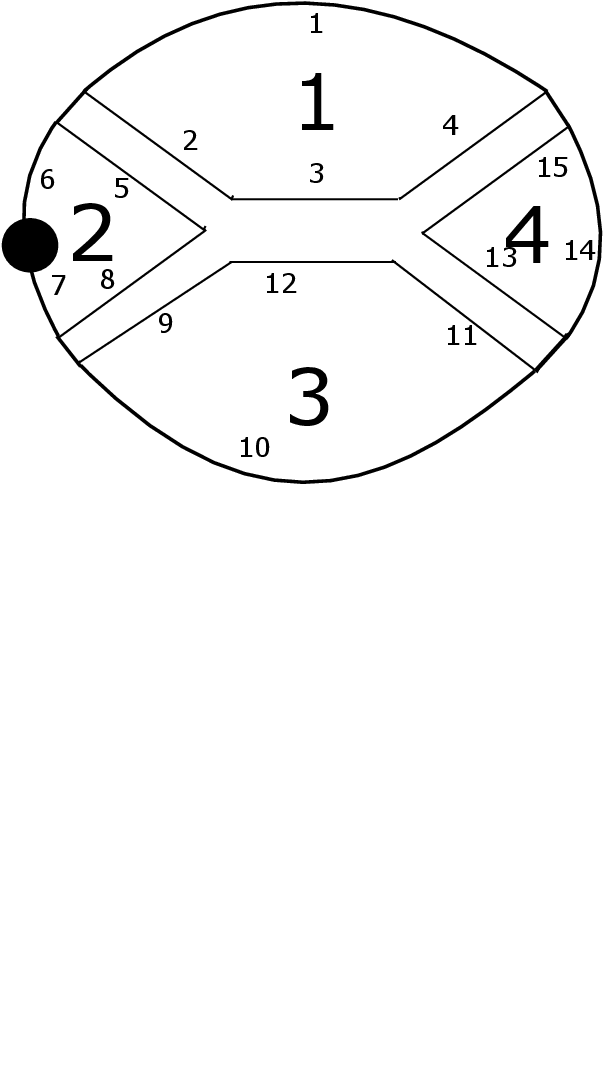}
\caption{A good ordering of internal half edges: the large numbers in the middles of the faces are the labels of the faces, the smaller ones next to the half edges are the half edges numbers in the ordering. The numbers of half edges in face $i$ are smaller than those of face $j$ if $i<j.$ In each face the numbers of half edges agree with the cyclic order induced by the face's orientation.}
\label{fig:good_order_fig}
\end{figure}
Figure \ref{fig:good_order_fig} illustrates a good ordering.
Note that the restriction of a good ordering to a subset of $H^I$ induces an order on its elements.
\begin{proof}[Proof of Lemma \ref{lem:or_indep}]
Take any $\K\in[\K].$ We recall from Lemma \ref{lem:vertex_flip} that any other element of $[\K]$ can be obtained from $\K$ by successive flips in vertices. It will thus be suffices to prove that the orientations induced by $\K,\K'$ are the same when $\K,\K'$ differ by a single flip in vertex $v.$ It will be enough to prove that $\mathfrak{o}_{(G,\K)}=\mathfrak{o}_{(G,\K')}$

Fix a good ordering $n.$ By definition $$\mathfrak{o}_{(G,\K)} = \bigwedge_{e\in H_\K} d\ell_e,$$ where the order of the wedging is the order $n$ restricted to $H_\K.$
The sign difference between $\mathfrak{o}_{(G,\K)},\mathfrak{o}_{(G,\K')}$ can be found geometrically by the following procedure, also illustrated in Figure \ref{fig:good_order_pf}.
Define
\[
L_\K = \{(n(h),0)|h\in H_\K\}, L_{\K'}=\{(n(h),1)|h\in H_{\K'}\}\subseteq \mathbb{R}^2.
\]
For any $e\in E$ draw the chord $c(e)$ between $(n(h_0),0)\in L_\K,(n(h_1),1)\in L_{\K'}$ where $h_0/s_1=h_1/s_1.$
By definition the change of signs between $\mathfrak{o}_{G,K},\mathfrak{o}_{G,K'}$ is just the parity of the number of intersections of these chords (slightly perturbed, if necessary).
We shall prove that this number is always even.
\begin{figure}
\centering
\includegraphics[scale=.4]{./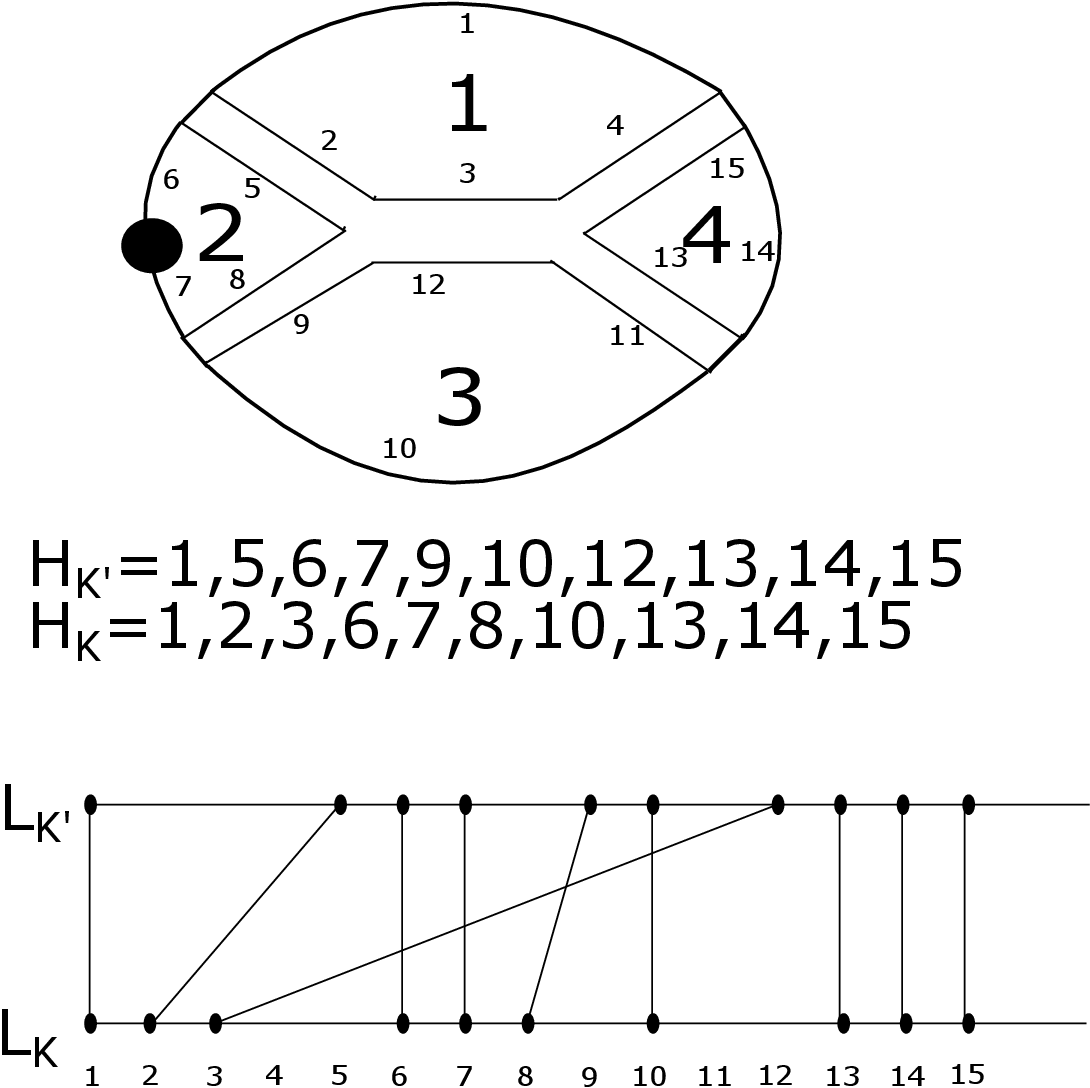}
\caption{In this diagram $H_\K,H_{\K'}$ are listed, for two Kasteleyn orientations $\K,\K'$ for the graph $G$ in the picture, whose half edges are identified with their $n-$value for some good order $n.$ The Kasteleyn orientations $\K,\K'$ can be read from $H_\K,H_{\K'},$ and they differ by a flip in the left internal vertex. Below the chord diagram of $L_\K,L_{\K'}$ is drawn, and the number of intersections is indeed even.}
\label{fig:good_order_pf}
\end{figure}
Note that for all edges except for those issuing from $v,$ the chords are parallel and vertical.

Let $h_1$ be a half edge of $v.$ Put $h_2=s_0(h_1),h_3=s_0^2(h_1),$ and $\bar{h}_j=s_1(h_j).$
Apart from some borderline cases which can be treated separately, we may assume that we are in the following scenario,
\begin{align*}
n(\bar{h}_2)=i_1, n(h_1)=i_1+1,n(\bar{h}_3)=i_2,
\\ n(h_2)=i_2+1,n(\bar{h}_1)=i_3, n(h_3)=i_3+1.
\end{align*}
Thus, the chord $c_{h_j}$ is either the chord between $(i_j+1,0)$ and $(i_{j-1},1),$ or the chord between $(i_j+1,1)$ and $(i_{j-1},0).$
It is easy to see that the number of vertical chords it intersects is the size of
\[
I_j = \{h\in H_K\setminus\{h_i,\bar{h_i}\}_{i=1,2,3}|n(h)\in(a_j,b_j)\},
\]
where $a_j = min(n_\K(i_j+1),n_\K(i_{j-1})), b_j = max(n_\K(i_j+1),n_\K(i_{j-1})).$
For exactly one $j\in \{1,2,3\}$ we have $I_j=I_{j+1}\cup I_{j+2},$ where addition is modulo $3,$ and the union is disjoint. Thus, any vertical chord either misses the chords $c_{h_j}$ or meets exactly two of them. In addition, it can be checked directly that the chords $c_{h_j}$ intersect each other an even number of times. And the lemma follows.
\end{proof}
\begin{cor}
For any $G\in\oR^0_{g,k,l},[\K]\in[\K(G)],$ the group $\text{Aut}(G,[\K])$ acts in an orientation preserving manner.
In particular, the orientation $\bar{\mathfrak{o}}_{(G,\K)}$ induces, for any $\pp$ an orientation on $\CM_{(G,[\K])}.$
\end{cor}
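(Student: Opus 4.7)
The plan is to reduce the corollary to the pull-back calculation of $\mathfrak{o}_{(G,\K)}$ under an automorphism and then invoke Lemma~\ref{lem:or_indep}. Fix $\phi\in Aut(G,[\K])$. By the very definition of a ribbon graph automorphism $\phi$ commutes with $s_0,s_1,s_2$ and preserves the markings on $B(G)$ and on the set of faces $H^I/s_2$. Hence the induced permutation of face indices on $\R^{[l]}$ is the identity, so the action on $\R^{[l]}$ is orientation preserving for the chosen orientation $\bigwedge_{i=1}^l dp_i$. Via the exact sequence~\ref{eq:induced_or} it will therefore suffice to show that the induced linear map $\phi^*\colon\R^{E(G)}\to\R^{E(G)}$ preserves $\mathfrak{o}_{(G,\K)}$.

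For the main computation, note that $\phi^*(d\ell_h)=d\ell_{\phi^{-1}(h)}$. Fix a face index $i\in[l]$ and consider the factor
\[
\mathfrak{o}_i=\bigwedge_{\substack{h/s_2=i\\ \K(h)=1}}d\ell_h,
\]
where the wedge is taken in the counterclockwise ($s_2$-)order around face $i$. Substituting $g=\phi^{-1}(h)$ one obtains
\[
\phi^*\mathfrak{o}_i=\bigwedge_{\substack{g/s_2=i\\ \K(\phi(g))=1}}d\ell_g,
\]
the wedge being in the counterclockwise order of $g$ around face $i$: this is legitimate because $\phi$ commutes with $s_2$ and fixes the face marking, so it restricts to a cyclic-order-preserving self-bijection of the half edges of face $i$. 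Recalling $(\phi^{-1})_*\K(g)=\K(\phi(g))$, the right hand side is exactly $\mathfrak{o}_i^{((\phi^{-1})_*\K)}$. Taking the wedge over $i$ yields $\phi^*\mathfrak{o}_{(G,\K)}=\mathfrak{o}_{(G,(\phi^{-1})_*\K)}$.

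Since $\phi$ preserves $[\K]$, so does $\phi^{-1}$, and hence $(\phi^{-1})_*\K$ lies in the vertex-flip equivalence class of $\K$. Lemma~\ref{lem:or_indep} therefore gives $\mathfrak{o}_{(G,(\phi^{-1})_*\K)}=\mathfrak{o}_{(G,\K)}$, so that $\phi^*\mathfrak{o}_{(G,\K)}=\mathfrak{o}_{(G,\K)}$. Combined with the first paragraph and the exact sequence~\ref{eq:induced_or}, the action of $Aut(G,[\K])$ on $\ker(A_G)$ preserves $\bar{\mathfrak{o}}_{(G,\K)}$. Restricting to a fibre of $A_G$ over $\pp$ gives a well defined orientation on the quotient $\CM_{(G,[\K])}(\pp)=A_G^{-1}(\pp)/Aut(G,[\K])$, as claimed.

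The only slightly delicate point is the bookkeeping of the counterclockwise order in the reshuffling above; once one observes that a ribbon automorphism acts by a cyclic rotation on the half edges of each face and fixes each face, the identification $\phi^*\mathfrak{o}_i=\mathfrak{o}_i^{((\phi^{-1})_*\K)}$ is automatic and no extra sign appears.
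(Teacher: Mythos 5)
Your proof is correct and is the natural fill-in that the paper leaves implicit (the corollary is stated with no written proof, as an immediate consequence of Lemma~\ref{lem:or_indep}). Your computation $\phi^*\mathfrak{o}_{(G,\K)}=\mathfrak{o}_{(G,(\phi^{-1})_*\K)}$, together with the observation that a ribbon-graph automorphism fixes each marked face and rotates its half edges so no sign enters the cyclic wedge, and then the appeal to Lemma~\ref{lem:or_indep}, is exactly the intended argument.
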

Denote this orientation by $\bar{\mathfrak{o}}_{(G,[\K])}.$
The main theorem of this subsection is
\begin{thm}\label{thm:orientability}
The orientations $\bar{\mathfrak{o}}_{(G,[\K])}$ induce a canonical orientation on the space $\oCM_{g,k,l}^{\text{comb}}(\mathbf{p}).$
\end{thm}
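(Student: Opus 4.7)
My plan is to verify that the locally-defined orientations $\bar{\mathfrak{o}}_{(G,[\K])}$ on the top-dimensional cells glue across every shared codimension-$1$ face. By Observation \ref{obs:highest_bdry_cels}\eqref{it:bdry}, the non-boundary codimension-$1$ faces of $\oCM^{comb}_{g,k,l}(\pp)$ that are shared by two top cells are precisely those of the form $\partial_e\CM_{(G,[\K])}(\pp)$ for $G\in\oR^0_{g,k,l}$ and $e\in E(G)\setminus Br(G)$; by Proposition \ref{prop:G_and_G_e} the neighboring top cell is $\CM_{(G_e,[\K_e])}$. It therefore suffices to show that, with the outward-normal convention, the orientations induced on $\partial_e\CM_{(G,[\K])}(\pp)$ from the two sides are opposite. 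This reduces the theorem to a purely combinatorial comparison of $\bar{\mathfrak{o}}_{(G,[\K])}$ and $\bar{\mathfrak{o}}_{(G_e,[\K_e])}$ at the face $\{\ell_e=0\}=\{\ell_{e'}=0\}$ where $e'$ is the edge of $G_e$ produced by the Feynman move.

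The main technical step is to compute the sign relating $\mathfrak{o}_{(G,\K)}$ and $\mathfrak{o}_{(G_e,\K_e)}$ for a chosen representative $\K$ of $[\K]$, using the explicit description of $\K_e$ from Lemma \ref{lem:adjacent_spin_cells}: only the assignments on the two half-edges of $e$ and on the half-edges of one neighboring edge $d$ (the one with $\K(\bar d)=1$) change. Fixing a good ordering $n$ of $H^I$ that places the modified half-edges adjacently, I will apply exactly the chord-diagram counting technique used in the proof of Lemma \ref{lem:or_indep}: place $H_\K$ on the bottom line, $H_{\K_e}$ on the top, join matching edges by chords, and count intersections modulo $2$. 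Away from the local picture around $e$ and $d$ the chords are vertical and contribute trivially, so the sign is determined by the local configuration at $e$. A direct case check (whether $e$ is internal, has a boundary vertex, or whether the Feynman move interchanges $s_2$-orbits on one vs.\ two faces) will show the parity of crossings equals $1$; combined with the fact that the outward normal $-d\ell_e$ switches to $-d\ell_{e'}$ across the face, this yields opposite induced orientations.

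Two special cases need separate attention. First, when $e$ is a boundary loop, $G_e=G$ and $[\K_e]$ differs from $[\K]$ by flipping $\K$ on a single neighboring edge $f$ (the unique edge sharing a vertex with $e$); here the chord-counting reduces to a one-edge local check in the face containing $e$, and one directly verifies the parity. Second, various borderline situations (the two half-edges of $e$ and $d$ not being distinct, or the Feynman move touching a special boundary vertex) are handled in the same way as the borderline cases in Lemma \ref{lem:or_indep}, by listing the possibilities.

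The hardest part, and the one requiring most care, will be organizing the local case analysis so that signs coming from three sources are tracked consistently: the change in $H_\K$ versus $H_{\K_e}$, the reshuffling of the counterclockwise order of half-edges at the two endpoints of $e$ produced by the Feynman flip, and the switch of the outward normal from $-d\ell_e$ to $-d\ell_{e'}$. Once the gluing check is complete, the orientation is globally defined on the union $\bigcup_{(G,[\K])\in\oSR^0_{g,k,l}} \CM_{(G,[\K])}(\pp)$, which has complement of real codimension at least $1$ in $\oCM^{comb}_{g,k,l}(\pp)$, and hence extends uniquely to the whole orbifold with corners, proving the theorem. As a final remark I will note that this orientation coincides with the one pushed forward by $\comb_\pp$ from the canonical orientation of $\oCM_{g,k,l}$ supplied by Lemma \ref{lem:geometric_orientability}, since both satisfy the same inductive gluing rule and agree on $0$-dimensional cells.
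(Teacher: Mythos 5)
Your plan matches the paper's argument essentially step for step: reduce orientability to a gluing check across codimension-$1$ faces $\partial_e\CM_{(G,[\K])}(\pp)$ for $e\notin Br(G)$, use Proposition \ref{prop:G_and_G_e} and Lemma \ref{lem:adjacent_spin_cells} to identify the neighboring cell $(G_e,[\K_e])$, and compare the two induced orientations via the outward-normal convention and a parity count on half-edges in a good ordering. The only cosmetic difference is that the paper's proof of the theorem carries out the sign computation directly in terms of the counting functions $n_\K(\cdot)$ from Notation \ref{nn:Kasteleyn_good_order} rather than re-running the chord-diagram picture of Lemma \ref{lem:or_indep}, but these are the same parity bookkeeping and your outline, if executed, would give the same result.
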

\begin{proof}
We shall show that the orientations $\mathfrak{o}_G$ for $G\in \oSR^0_{g,k,l}$ are compatible on codimension $1$ faces. This will show that a suborbifold of $\CM_{g,k,l}^{\text{comb}},$ which differs from $\CM_{g,k,l}^{\text{comb}}$ in codimension $2$ cells is oriented, hence also $\CM_{g,k,l}^{\text{comb}}$ is.
Since $\CM^{\text{comb}}_{g,k,l}$ itself differs from $\oCM^{\text{comb}}_{g,k,l}$ by codimension $2$ strata in the interior, and in codimension $1$ boundary, this argument will show that $\oCM_{g,k,l}^{\text{comb}}$ is also endowed with a canonical orientation.

We therefore have to show that for any $(G,[\K])\in \oSR^0_{g,k,l},e\notin \text{Br}(G)\cup\text{Loop}(G),$ $(G',[\K'])=(G_e,[\K_e])$ the orientation induced on $\partial_e\CM_{(G,[\K])}$ by $\CM_{(G,[\K])}$ and by $\CM_{(G',[\K'])}$ \emph{disagree}.

Put $H^I=H^I(G), H^{'I}=H^I(G').$ Note that we have a natural identification of $E(G)\setminus e$ and $E(G')\setminus e',$ for some edge $e',$ so from now on we treat them as the same set.
Choose an good ordering $n$ for $H^I.$ There exists a good ordering $n'$ of $H^{'I},$ which, when restricted to $H^{'I}\setminus s_1^{-1}(e'),$ defines the same order as the restriction of $n$ to $H^{'I}\setminus s_1^{-1}(e') \simeq H^{I}\setminus s_1^{-1}(e).$
Fix a Kasteleyn orientation $\K\in\K(G),$ set $h\in s_1^{-1}(e)$ with $\K(h)=1.$
Write $a = s_0(h),b=s_0^2(h),c=s_1(s_0(s_1(h))),d=s_1(s_0^2(s_1(h))),$ see Figure \ref{fig:neighboring2}. For shortness write $\bar{x}$ for $s_1(x).$ Apart from some borderline cases which may be treated separately, we may assume all these vertices and half edges are distinct, and then, using vertex flips if needed, we may also restrict ourself to the case where $\K(\bar{d})=1.$
In this case we can assume $n$ was chosen in such a way that
\begin{align*}
n(\bar a)&=i,~n(h)=i+1,~n(\bar d) = i+2,\\
n(d) &= m,~n(\bar c)= m+1,\\
n(c) &= p,~n(\bar h) = p+1,~n(b) = p+2,\\
n(\bar{b}) &= j,~ n(a) = j+1,
\end{align*}
for some $i,m,p,j,$ as in Figure \ref{fig:neighboring2}.
\begin{figure}
\centering
\includegraphics[scale=.4]{./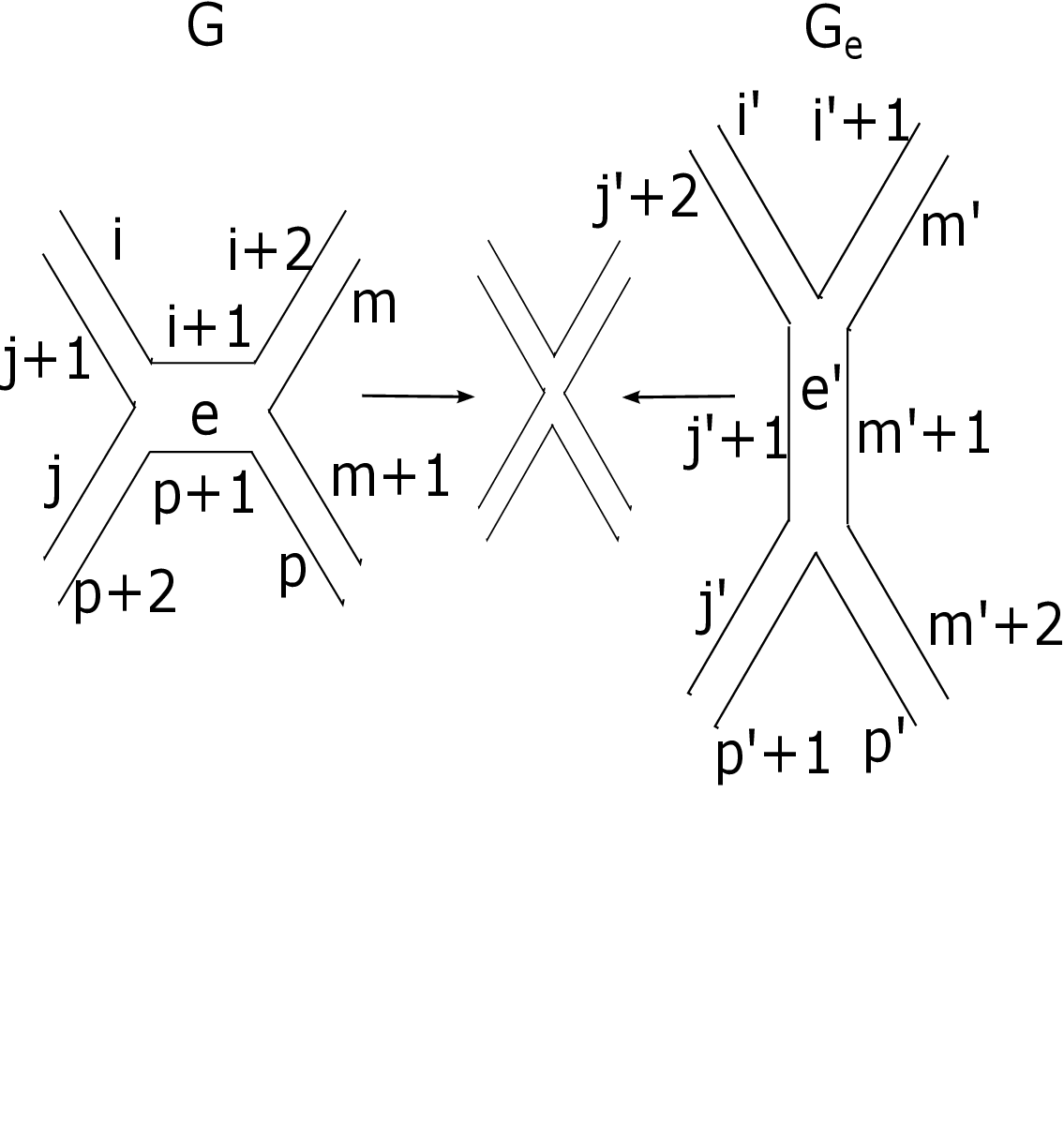}
\caption{The restrictions of the good orderings $n,n'$ to the half edges of $G,G_e.$}
\label{fig:neighboring2}
\end{figure}

A canonical outward normal for $\CM_{\partial_e G}\hookrightarrow\oCM_G$ is just $-d\ell_e.$ We see that the induced orientation on $\CM_{\partial_e G}$ is just
\begin{equation}\label{eq:or_1}
(-1)^{n_\K (n(h))+1}\bigwedge_{f\in H_\K\setminus\{h\}} d\ell_f = (-1)^{n_\K (i+1)+1}\bigwedge_{f\in H_\K\setminus\{h\}} d\ell_f,
\end{equation}
where as usual the wedge is taken in the order $n_\K$ induced by $n.$

In $G',$ let $v'_1$ be the vertex from which $a,\bar d$ issue, and $v'_2$ be the vertex from which $b,\bar c$ issue. We may take the half edge $h'$ to be the third half edge from $v'_1.$
Then, for some $i',m',p',j'$ we have
\begin{align*}
n'(\bar a)&=i',~n'(\bar d) = i'+1,\\
n'(d) &= m',~n'(h') = m'+1, n'(\bar c)= m'+2,\\
n'(c) &= p',~n'(b) = p'+1,\\
n'(\bar{b}) &= j', ~n'(\bar h') = j'+1,~n'(a) = j'+2.
\end{align*}
By Lemma \ref{lem:adjacent_spin_cells} we have a representative $\K'$ of $[\K_e],$ described by
\[
K'(h')=1, ~K'(\bar h') = 0,~K'(d) = K(d)+1=1,~K'(\bar d) = K(\bar d)+1=0,
\]
and $K'(f)=K(f)$ for any other half edge $f.$
As above, a canonical outward normal for $\CM_{\partial_{e'} G'}\hookrightarrow\oCM_{G'}$ is just $-d\ell_{e'}.$ We see that the induced orientation on $\CM_{\partial_{e'} G'}$ is just
\begin{equation}\label{eq:or_2}
(-1)^{n_{\K'} (n'(h'))+1}\bigwedge_{f\in H_{\K'}\setminus\{h'\}} d\ell_f = (-1)^{n'_{\K'} (m'+1)+1}\bigwedge_{f\in H_{\K'}\setminus\{h'\}} d\ell_f.
\end{equation}
The choice of $n,n',\K',$ makes the terms $\bigwedge_{f\in H_\K\setminus\{h\}} d\ell_f,\bigwedge_{f\in H_{\K'}\setminus\{h'\}} d\ell_f$ differ only in the relative location of $d\ell_{d}.$
By our assumptions on $\K(\bar d), \K'(\bar d)$ the difference is just the difference between $n_\K(\bar d)-1 = n_\K(i+2)-1$ and $n'_{\K'}(d) = n'_{K'}(m').$ We subtracted $1$ from $n_\K(\bar d)$ because we did not want to count $h$ which occurs before $\bar d$ in the order $n.$
Now, $n_\K(i+2)-1 = n_\K(i+1),$ as $n(h)=i,\K(h)=1.$ Similarly, $n'_{K'}(m') = n'_{K'}(m'+1)-1,$ since $n'(d)=m',\K'(d)=1.$

The total difference between the two orientations is thus $$(-1)^{n'_{\K'} (m'+1)+1 +n'_{K'}(m'+1)-1 + n_\K (i+1)+1 +n_\K(i+1)}=-1,$$
as claimed.
\end{proof}
\begin{rmk}
The spaces $\CM_{g,k,l},\CM_{g,k,l}^{\text{comb}}(\pp)$ are homeomorphic, therefore the last theorem gives, in fact, another proof that $\oCM_{g,k,l}$ is oriented.
Later we shall see that the orientation constructed here agrees with the orientation of \cite{ST}.
\end{rmk}

\begin{cor}\label{cor:G_e orientation1}
For $G\in\oSR^0_{g,k,l}$ and an internal edge $e$ which is not a bridge, the orientations on $\partial_e\oCM_{(G,[\K])}(\mathbf{p})\simeq\partial_e\oCM_{(G_e,[\K_e])}(\mathbf{p}),$ induced as boundaries of $\CM_{(G,[\K])}(\pp),\CM_{(G_e,[\K_e])}(\pp)$ are opposite.
\end{cor}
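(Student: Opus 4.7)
The plan is to observe that this corollary is essentially the content of what was proven in Theorem \ref{thm:orientability}. Recall that the proof of the theorem established that the orientations $\bar{\mathfrak{o}}_{(G,[\K])}$ glue to a consistent orientation on $\oCM^{comb}_{g,k,l}(\mathbf{p})$ by showing that whenever two top-dimensional cells $\CM_{(G,[\K])}$ and $\CM_{(G_e,[\K_e])}$ meet along a codimension-one face $\partial_e\CM_{(G,[\K])}$ (with $e \in E(G) \setminus Br(G)$), the induced boundary orientations from the two sides are opposite. Since internal non-bridge edges form a subset of $E(G) \setminus Br(G)$, the corollary follows immediately.

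To make the argument self-contained, I would unpack the relevant piece of the proof. First, the canonical outward normal to $\partial_e\CM_{(G,[\K])}$ inside $\CM_{(G,[\K])}$ is $-d\ell_e$, and similarly $-d\ell_{e'}$ on the other side. Choose compatible good orderings $n$ on $H^I(G)$ and $n'$ on $H^I(G_e)$ which agree on the common subset $H^I(G)\setminus s_1^{-1}(e) \simeq H^I(G_e)\setminus s_1^{-1}(e')$. Pick a Kasteleyn representative $\K \in [\K]$ with $\K(h)=1$ on the distinguished half-edge of $e$, and the representative $\K'$ of $[\K_e]$ provided by Lemma \ref{lem:adjacent_spin_cells}, which differs from $\K$ only by flipping the values on $d$ and $\bar d$ (in the notation of Figure \ref{fig:neighboring}).

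Following the calculation in Theorem \ref{thm:orientability}, the induced orientation on the boundary from $\CM_{(G,[\K])}$ is $(-1)^{n_\K(n(h))+1} \bigwedge_{f \in H_\K \setminus \{h\}} d\ell_f$, and from the $(G_e,[\K_e])$ side it is $(-1)^{n'_{\K'}(n'(h'))+1} \bigwedge_{f \in H_{\K'} \setminus \{h'\}} d\ell_f$. The two wedge products contain the same set of $d\ell_f$'s and differ only in the relative position of $d\ell_d$. Using the identities $n_\K(i+2)-1 = n_\K(i+1)$ (since $\K(h)=1$) and $n'_{\K'}(m') = n'_{\K'}(m'+1)-1$ (since $\K'(d)=1$), the sign discrepancy works out exactly to $-1$, as required.

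The main point is that nothing new needs to be proven here: the corollary is simply the extraction, for internal edges, of the key local sign computation that is the heart of Theorem \ref{thm:orientability}. There is no real obstacle—the only reason to state it as a separate corollary is that this specific form will be used later, presumably in Section \ref{sec:last} when applying Stokes' theorem to interior codimension-one faces, where pairs of adjacent cells contribute with opposite orientations and hence cancel when one integrates an exact form across the interior of $\oCM^{comb}_{g,k,l}(\mathbf{p})$.
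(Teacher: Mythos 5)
Your proposal is correct and takes the same approach as the paper: the corollary is stated without a separate proof precisely because it is the content of the sign computation carried out in the proof of Theorem \ref{thm:orientability} for $e\in E(G)\setminus Br(G)$, and internal non-bridge edges are a subset of that case. You correctly identify this and accurately reproduce the relevant local argument (the boundary orientations $(-1)^{n_\K(n(h))+1}\bigwedge_{f\in H_\K\setminus\{h\}}d\ell_f$ and $(-1)^{n'_{\K'}(n'(h'))+1}\bigwedge_{f\in H_{\K'}\setminus\{h'\}}d\ell_f$, the identities $n_\K(i+2)-1=n_\K(i+1)$ and $n'_{\K'}(m')=n'_{\K'}(m'+1)-1$, and the resulting total sign of $-1$), as well as the downstream use in Section \ref{sec:last}.
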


\subsection{Critical nodal graphs and their moduli}
\subsubsection{Critical nodal ribbon graphs}
In this subsection we describe effective and critical nodal graphs. They will parameterize strata which will participate in the analysis of the intersection numbers and that will contribute to the combinatorial formula. For completeness we first describe slightly more general graphs.
\begin{definition}\label{def:nodal1}
A \emph{nodal spin ribbon graph with a lifting (graded nodal ribbon graph)}, or a \emph{nodal graph} for shortness, is a spin ribbon graph with a lifting (graded ribbon graph) $(G,z),$ together with a subset $\N$ of legal points in $B(\Norm(G))\setminus B(G).$ We call $\N$ the set of \emph{legal nodes} of the nodal graph and $s_1\N$ the illegal nodes, where $s_1$ was defined in Notation \ref{nn:extended_s_1}.
The vertices and edges of the nodal graph are the vertices and edges of $\Norm(G,z)$ after forgetting the illegal nodes $s_1\N.$ A metric is a metric on these edges. If $e$ is an edge in the nodal graph $(G,z,\N)$, contracting the edge $e$ yields the nodal graph $\partial_e(G,z,\N),$ whose underlying graph is $\partial_e(G,z),$ and its legal nodes are those legal nodes in $\partial_e(G,z)$ which remain special points in $\Norm(\partial_e(G,z))$ after the contraction, where we use the natural correspondence between special points in $\Norm(G,z)$ and in $\Norm(\partial_e(G,z)).$

The \emph{components} of the nodal graph are the connected components created after removing $s_1\N.$ More precisely, define an equivalence relation $\sim_N$ on the components of $\Norm(G,z)$ as follows. Components $C_1,C_2\in \pi_0(\Norm(G,z))$ are neighbours if one of them contains a legal point $u\notin \N$ such that $s_1u$ belong to the other component. We write $C_1\sim_N C_2,$ for $C_1,C_2\in \pi_0(\Norm(G,z)),$ if they can be connected in a path of neighboring components. The components of the nodal graph are defined to be the $\Norm-$image of $\sim_N-$equivalence classes.
\end{definition}

In case the underlying graph is effective we have a more convenient definition.
\begin{definition}\label{def:nodal2}
An \emph{effective nodal spin ribbon graph with a lifting (effective graded nodal ribbon graph)}, or an \emph{effective nodal graph} for shortness, is a tuple $(G_i,z_i,m,\N=\{\N_e\}),$ or $(G,z)$ for shortness, where
\begin{enumerate}
\item
$(G_i,z_i)$ is an effective spin ribbon graph with a lifting (effective graded ribbon graph).
\item
$m:\bigcup_i s_1H^B(G_i)
\to\Z_{\geq 0}.$
\item
$\N_e:[m(e)]\to \bigcup_i B(G_i),~e\in\bigcup_i s_1H^B(G_i)
$ are injections.
\end{enumerate}
We require the sets $\N_e=\N_e([m(e)])$ to be disjoint. Denote by $C(G_i,z_i,m,\{\N_e\})$ the different graded components of the graph, that is the collection of $(G_i,z_i).$

Let $G$ be the graph obtained by choosing $m(e)$ points $p_{e,1},\ldots,p_{e,m(e)}$ on $e,$ ordered according to the orientation of the boundary and identifying $p_{e,i}$ with $\N_e(i).$ The effective nodal graph is said to be connected if $G$ is connected.
%

Write $E(G)=\cup_i E(G_i),$ similarly define $H^I(G),H^B(G),V(G),F(G).$
For a boundary edge $e=h/s_1,$ where $h\in s_1H^B$ we sometimes write $m(e)=m(h).$
Vertices in the image of $\N_e$ are called legal nodes and their set is denoted by $\N(G).$
The boundary marked points of~$G$ are boundary marked points of~$G_i$'s which are not legal nodes. Denote them by $B(G).$ Define $I(G)=\cup I(G_i).$

An effective nodal ribbon graph is naturally embedded into the (topological) nodal surface $\Sigma=\left(\coprod_i\Sigma_i\right)/\sim,$ defined as follows. $\Sigma_i$ is the topological open marked surface to which $G_i$ embeds, and in case $G_i$ is a ghost it is a point. We identify $G_i$ with its image in $\Sigma_i.$ We add $m(e)$ points on the edge $e,~p_{e,1},\ldots,p_{e,m(e)},$ and quotient by $p_{e,i}\sim\N_e(i).$ The genus of the graph is defined to be the (doubled) genus of $\Sigma.$

A \emph{marked effective nodal graph} is an effective nodal graph together with markings \\$\mmm^B:B(G)\to\Z,~\mmm^I:I(G)\to\Z.$

A \emph{graded critical nodal ribbon graph} is an effective nodal graph such that each $(G_i,z_i)\in\oSR^0.$ In this case we use the Kasteleyn notation for components, $(G_i,[\K_i])$ rather than $(G_i,z_i),$ and we denote the whole graph by $(G,[\K])$ for shortness.

A graded critical nodal graph~$G$ is \emph{odd}, if each $G_i\in\oSRc^0.$

The notion of an isomorphism is the expected one.
Write $\oSR^{m}_{g,k,l}$ for the collection of isomorphism classes of marked critical nodal graded ribbon graphs $G$ with $m$ nodes, genus $g,$ such that \\$\mmm^B:B(G)\simeq[k],~\mmm^I:I(G)\simeq[l].$ Let $\oSRc^m_{g,k,l}$ be the subset of such graphs which are odd. Write $\text{Aut}(G,[\K])$ for the group of automorphisms of $(G,[\K])\in\oSR^m_{g,k,l}.$

Define \emph{non graded critical nodal ribbon graphs} $G=(G_i,m,\N),$ in the same way, only without the data of Kasteleyn orientations, so that each $G_i$ belongs to $\oR^0,$ rather than to $\oSR^0.$
Denote by $\oR^m_{g,k,l}$ the collection of isomorphism classes non graded critical nodal ribbon graphs $G$ with $m$ nodes, genus $g,$
such that $\mmm^B:B(G)\simeq[k],~\mmm^I:I(G)\simeq[l].$
Let $\oRc^m_{g,k,l}$ be the subset of such graphs which are odd. Write $\text{Aut}(G)$ for the group of automorphisms of $G\in\oR^m_{g,k,l}.$

A metric on a nodal ribbon graph is an assignment of positive length to its edges.

A bridge $e\in E(G)$ is an edge which is a bridge in one component $G_i$ of $G.$ An \emph{effective bridge} is a bridge with $m(e)=0,$ if $m$ is defined.
Let $\text{Br}(G,[\K])$ to be the collection of bridges, and $\text{Br}^{\text{eff}}(G,[\K])$ the collection of effective bridges. As in the non nodal case, for shortness we shall usually omit $[\K]$ from the notations of $\text{Br},\text{Br}^{\text{eff}}.$ We similarly define boundary loops as boundary loops in one component $G_i$ of $G,$ and effective loops are boundary loops $e$ with $m(e)=0.$ Write $\text{Loop}(G),~\text{Loop}^{\text{eff}}(G)$ for the collection of boundary loops and effective loops respectively. 
%
\end{definition}

When it is understood from context whether or not the critical nodal graph is graded or non graded, we omit the words graded$\backslash$non graded, and just say critical nodal.
 %
\begin{rmk}
It is simple to verify that when $(G,z,m,\N)$ is effective the two definitions \ref{def:nodal1},~\ref{def:nodal2} are equivalent. We shall therefore use Definition \ref{def:nodal2} which is more explicit whenever possible.
It is also straightforward to verify that the definition of $\oRc_{g,k,l}^m$ agrees with the one given in \ref{nn:oR_intro}.

Note that in a metric effective nodal ribbon graph the data of distances between illegal nodes to other vertices is absent. On the other hand, the discrete data of which illegal node lays on which edge, and the relative order of illegal nodes on a given edge, are included. See the example in the bottom of Figure \ref{fig:nodal}.
\end{rmk}
\begin{obs}\label{obs:oR_and_oSR}
Under ${\text{for}}_{\text{spin}}:\oSR^m_{g,k,l}\to\oR^m_{g,k,l},$ which forgets the Kasteleyn orientation, odd graphs go to odd graphs, and the preimage of $G$ is canonically $[\K(G)]/\text{Aut}(G).$
\end{obs}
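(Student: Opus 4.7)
The observation has two parts, both of which follow formally from definitions combined with results already proved in the paper, so I expect no substantial obstacle; the argument is essentially bookkeeping.

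For the first assertion (preservation of oddness), observe that ``odd'' in Definition~\ref{def:nodal2} is a condition on the underlying ribbon graph components $G_i$: each boundary component of each $G_i$ must contain an odd number of points which are either marked boundary points or legal sides of nodes. Since $for_{spin}$ forgets only the Kasteleyn data and preserves each component $G_i$ together with its set of boundary markings and the nodal data $\{\N_e\}$ (which already records which side of each node is legal), oddness is tautologically preserved. Hence $for_{spin}$ restricts to a well-defined map $\oSRc^m_{g,k,l}\to\oRc^m_{g,k,l}$.

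For the second assertion, fix a representative $G=(G_i,m,\{\N_e\})\in\oR^m_{g,k,l}$. By Definition~\ref{def:nodal2}, an element lying above $G$ under $for_{spin}$ is the additional data of a graded spin structure $z_i$ on each smooth trivalent component $G_i\in\oR^0$ so that $(G_i,z_i)\in\oSR^0$; no compatibility condition is imposed between the $z_i$ at nodes, since the legal/illegal designation at each node is already encoded in $\{\N_e\}$ and automatically holds on each side because all boundary marked points of a graded trivalent graph are legal. By Theorem~\ref{thm:Kesteleyn_is_spin} (together with Proposition~\ref{prop:cells_and_kasteleyn}, which ensures naturality in continuous paths of metrics), for each component $G_i$ there is a canonical bijection between graded spin structures on the corresponding surface and the set $[\K(G_i)]$. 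Setting $[\K(G)]:=\prod_i[\K(G_i)]$, we obtain a canonical bijection between the lifts of $G$ (before passing to isomorphism classes) and $[\K(G)]$.

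Finally, passing to isomorphism classes is the standard orbit formalism: an isomorphism of $(G,[\K])$ as a graded critical nodal ribbon graph is by definition an isomorphism $\phi$ of the underlying nodal ribbon graph $G$ satisfying $\phi_*[\K]=[\K]$, so $Aut(G)$ acts on $[\K(G)]$ by pullback and the isomorphism classes in the $for_{spin}$-fiber over the class of $G$ are exactly $[\K(G)]/Aut(G)$. Naturality of the whole construction---product decomposition over components, the bijection of Theorem~\ref{thm:Kesteleyn_is_spin}, and the induced $Aut(G)$-action---gives the asserted canonical identification.
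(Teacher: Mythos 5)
Your proposal is correct and is essentially the unpacking the paper has in mind: the paper states this as an Observation with no proof, since by the redefinition $\oSR^0_{g,k,l}=\{(G,[\K])\}$ and Definition of graded critical nodal graphs the fiber description and preservation of oddness are immediate from the definitions together with Theorem \ref{thm:Kesteleyn_is_spin}. Your bookkeeping (oddness depends only on the underlying non-graded data; lifts are independent choices of $[\K_i]$ on components; isomorphism classes are $Aut(G)$-orbits) is exactly the intended argument.
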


\subsubsection{Trivalent graphs versus graded critical nodal graphs}
In the analysis required for proving Theorem \ref{thm:comb_model} we will mainly need to analyze critical graded nodal graphs, and effective graphs which are obtained from them by contracting a single edge and possibly forgetting some data. We will now describe operations between nodal and non nodal ribbon graphs. Although these operations can be defined in full generality, we are interested only in cases where their output is trivalent or effective. We will therefore restrict our definitions to this setting, leaving the relatively straight forward details of the more general setting to the interested reader.

Given a connected effective spin ribbon graph with a lifting $(G,z),$ we define an effective nodal graph $\Y(G,z).$ Its components are the components of $\NNN(G,z),$
after erasing every illegal boundary point and concatenating its two edges to one edge.
Note that under this map a contracted boundary becomes a Ramond marking of perimeter $0.$ Suppose $e$ is an edge obtained by concatenating $e_1,\ldots,e_{m+1}$ in the described process, and in this order. Define $m(e)=m.$ Suppose $v_i$ is the vertex between $e_i,e_{i+1},$ then $\N_e(i)=s_1v_i,$ where we use Notation \ref{nn:extended_s_1}.
When $(G,z)=(G,[\K])$ critical trivalent, we denote $\Y(G,z)$ by $\Y(G,[\K]).$
It is easy to verify that
\begin{obs}\label{obs:correspondence1}
The map $\Y$ is a surjection from the collection of connected effective spin ribbon graphs 
to the collection of nodal connected effective spin ribbon graphs 
all of whose components are smooth.
It restricts to a bijection between connected trivalent graphs and connected graded critical nodal ribbon graphs.
For any connected effective spin ribbon graph 
$(G,z),$ there is a bijection between bridges (boundary loops) in $(G,z)$ and effective bridges (effective loops) in $\Y(G,z).$
\end{obs}
We now extend the definition of $\Y$ to metric effective spin ribbon graphs. 
For such a graph $(G,z,\ell)$ define the effective nodal metric graph $\Y(G,z,\ell)=(\Y(G,z),\Y\ell),$ by $\Y\ell_e=\ell_e,$ if the edge $e$ is an edge of $\NNN(G,z).$ Otherwise, if $e$ is the union of $e_1,\ldots,e_{m+1},$ define $\Y\ell_e=\sum_{i=1}^{m+1}\ell_{e_i}.$ Note that the perimeters are left unchanged.

We also define a map from effective nodal graphs to effective spin ribbon graphs: 
given an effective nodal graph $(G,z,m,\N),$ define the spin ribbon graph 
$\CBB(G,z)$ as the graph obtained by forgetting the data of $m,~\N,$ and applying $\CBB$ to each component $(G_i,z_i).$ The analogous definition holds for metric effective nodal graphs.

If $(G,z,m,\N)$ is an effective nodal graph and $e$ is either en internal edge or a boundary edge with $m(e)=0,$
$\partial_e(G,z,m,\N)$ is the nodal graph whose underlying ribbon graph is the graph obtained by contracting $e,$ and the data of $m,~\N$ is induced from $G$ by the usual identification of edges of $\partial_eG$ as a subset of edges of $G.$ Similarly, when $(G,[\K],m,\N)$ is critical trivalent and $e$ either an internal edge or an effective loop, we define $(G_e,[\K_e],m',\N')$ as the critical trivalent graph whose underlying graph is $(G_e,[\K_e])$ and $m'=m,~\N'=\N,$ where we again use the identification between edges of $G$ and $G_e.$

\begin{nn}
Suppose $(G,[\K])\in\oSR^m_{g,k,l}(\pp),~e=\{h_1,h_2=s_1h_1\}\in \text{Br}^{\text{eff}}(G)\cup\text{Loop}^{\text{eff}}(G),$ with $\K(h_1)=0.$
Define the nodal ribbon graph $\CB_e(G,[\K])$ 
as follows.
Suppose $G$ is made of the components $G_1,\ldots, G_n.$ Without loss of generality assume $e$ is an edge of component $G_n.$
Write $v_i=\partial_e(h_i),$ the vertex obtained by contracting $h_i$ in $\partial_e G_n.$
Write $x=s_2h_1, y=s_1(s_2^{-1}h_1)\in H^I(\partial_e G_n).$

The first $n-1$ components of the graph $\CB_e (G,[\K])$, are $G_i'=G_i,~i\leq n-1,$ and $\K'_i=\K_i,m'=m,\{\N'_f\}=\{\N_f\}$ for these components.

When $e$ is a boundary loop, $(G'_n,z_n)=\CBB\partial_e(G_n,[\K_n]),$ and also in this component $m'=m,~\{\N'_f\}=\{\N_f\},$ where we use the natural identifications between edges of $G_n$ other than $e$ and edges of $G_n.$

If $e$ is an effective bridge, then in case the normalization $\NNN(\partial_e G_n)$ is disconnected, let $G'_n$ be the component which does not contain $v_2,~\K',m',\N'$ will be the induced maps. Note that $G'_n$ may be a ghost.
Define the component $G'_{n+1}$ as the graph obtained by the component of $v_2$ in $\NNN(\partial_eG_n),$ after gluing the half edges $x/s_1,y/s_1$ to a new edge $xy,$ and removing the vertex $v_2.$
The updated Kasteleyn orientation is the unique Kasteleyn orientation which gives any internal half edge its value under $\K_n.$ For any half edge $e'\neq xy,~m'(e')=m(e'),~m(xy)=m(x)+m(y)+1.$ Similarly $\N'(e')=\N(e')$ for $e'\neq xy,$ while
\begin{equation}
\N'_{xy}(a)=
\begin{cases}
\N_y(a),~~&a\leq m(y)\\
v_1,& a=m(y)+1\\
\N_x(a-m(y)-1), &a>m(y)+1.
\end{cases}
\end{equation}

If $\partial_e G_n\setminus\{v_e\}$ is connected, set $G'_n$ to be the component of $v_1$ in the normalization, where again edges $x,y$ are glued and $v_2$ is removed, and $\K',m',\N'$ are defined in the same way as above.

There is a canonical surjection, which we shall also denote by $\CB_e,$
\[E(G)\cup\N(G)\to E(\CB_e G)\cup \N(\CB_e G).\] It takes $e$ to $v_1,$ and all other edges to the corresponding edges, so that it is one to one, except on the edges $x,y$ which go to $xy.$

Given a metric $\ell$ on the graph, with $\ell_e=0,$ the graph $\CB_e(G,[\K],\ell)$ is the graded nodal ribbon graph with underlying graph $\CB_e(G,[\K]),$ and the metric is induced from $\ell$ if $e$ is a boundary loop, while if $e$ is a bridge, then with the same notations as above, $(\CB_e\ell)_{e'}=\ell_{e'}$ for $e'\neq x,y,$ and $\CB_e\ell_{xy}=\ell_x+\ell_y.$ For convenience we usually denote $\CB_e\ell$ by $\ell$ as well.

A \emph{compatible sequence of effective bridges}, $e_1,\ldots,e_r$ is a sequence of bridges such that $e_{i+1}$ is an effective bridge in $\CB_{e_i}\ldots\CB_{e_1}G$ for all $i.$ For such a sequence define $\CB_{e_1,\ldots,e_r}(G,[\K],\ell)=\CB_{e_r}\ldots\CB_{e_1}(G,[\K],\ell),$ and the map $\CB_{e_1,\ldots,e_r}=\CB_{e_r}\circ\cdots\circ\CB_{e_1}.$
\end{nn}
\begin{figure}
\centering
\includegraphics[scale=.4]{./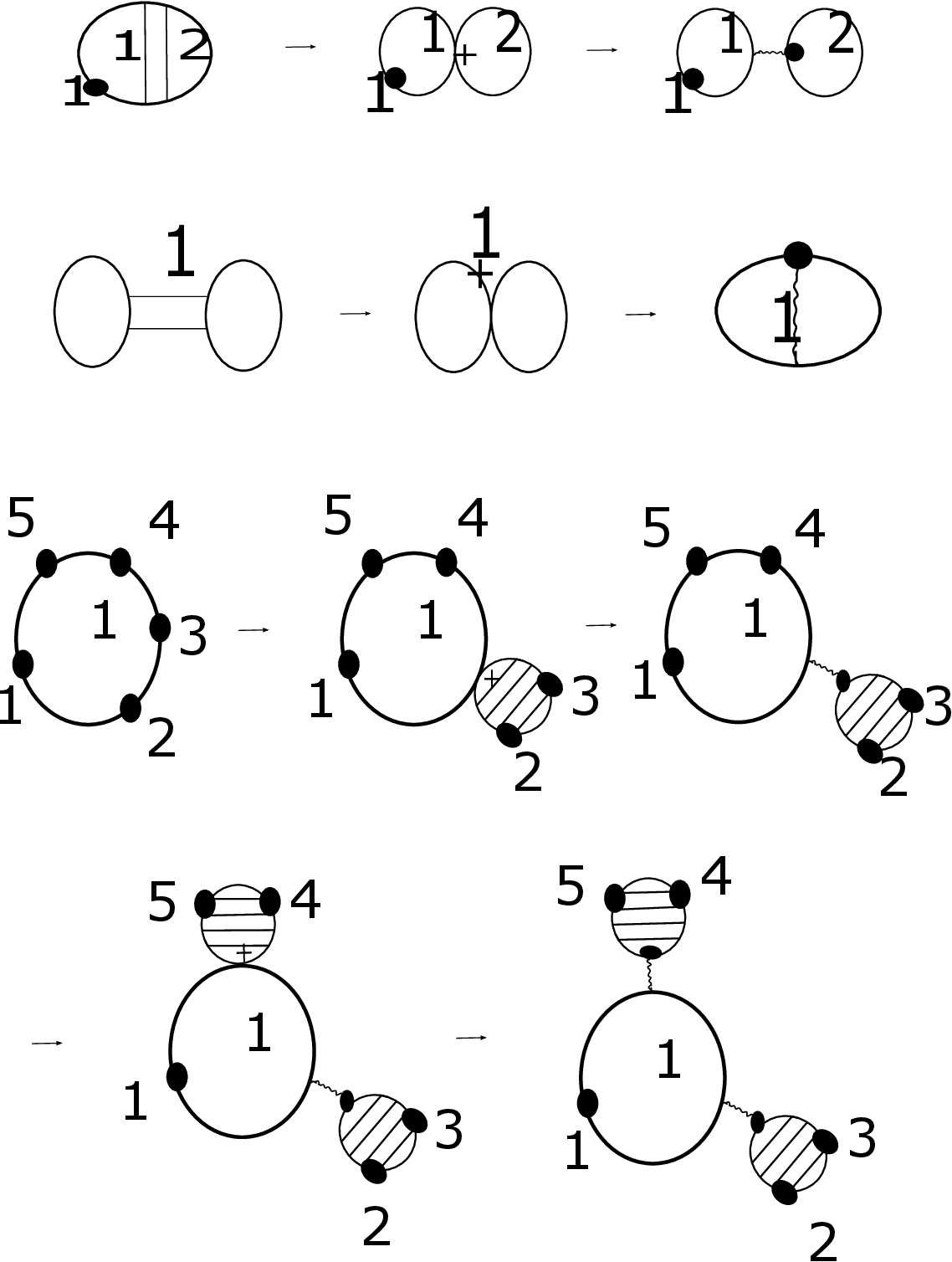}
\caption{This diagram presents trivalent graphs, their effective bridge contractions and the operation $\CB.$ $+$ represents a legal side of node, and, after performing $\CB,$ the wigly lines contain the data of $\N,$ namely, which edges contain which legal nodes, and at what order.
In the upper left corner a an effective trivalent smooth graph $(G,[\K])$ on a disk is shown, to the right of it its bridge $e$ is contracted, and then in the rightmost corner $\CB_e(G,[\K])$ is drawn. The second row descibes a similar scenario, but for a graph on a cylinder. The third row presents a graph on a disk. First the bridge between boundary markings $2,3$ is contracted, and then the bridge between $4,5$ is contracted. These bridges are compatible. The bridges between $2,3$ and $3,4,$ on the other hand, are not compatible with each other.}
\label{fig:nodal}
\end{figure}
The next observation follows easily from Observations \ref{obs:correspondence1} and \ref{obs:degenerating_kasteleyn_comb}.
\begin{obs}\label{obs:base_of_nodal_graphs}
If $(G,[\K])\in\oSR^m_{g,k,l}$ and $e\in \text{Loop}^{\text{eff}}(G),$ then $\CB_e G$ is an effective nodal ribbon graph.

If $(G,[\K])\in\oSR^m_{g,k,l}$ and $e\in \text{Br}^{\text{eff}}(G),$ then $\CB_e G\in \oSR^{m+1}_{g,k,l}.$
Moreover, for any $(G,[\K])\in \oSR^{m+1}_{g,k,l},$ and any legal node $v,$ there exists a unique graph $(H,[\K'])\in\oSR^{m}_{g,k,l}$ and an edge $e\in \text{Br}^{\text{eff}}(H)$ with $\CB_e (H,[\K']) = (G,[\K]),$ and $\CB_e e=v.$
In addition, if $(G,[\K])$ is connected trivalent, $e\in \text{Br}(G,[\K])$
\[\Y(\partial_e(G,[\K]))=\CB_e(\Y(G,[\K])),\] where we use the identification of bridges of Observation \ref{obs:correspondence1}.
\end{obs}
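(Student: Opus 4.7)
The plan is to verify the three assertions of the observation by unwinding the definitions of $\CB_e$, $\partial_e$, and the correspondence $\Y$, using the two cited observations as the essential input. Since the paper flags this as an observation following easily from Observations \ref{obs:correspondence1} and \ref{obs:degenerating_kasteleyn_comb}, I expect no substantive new argument, only careful bookkeeping.

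For the first assertion, suppose $(G,[\K])\in\oSR^m_{g,k,l}$ and $e\in Br^{eff}(G)$. By definition of $\oSR^m$, each component $G_i$ lies in $\oSRc^0$, i.e.\ is a smooth trivalent graded ribbon graph with the odd boundary condition. Since $e$ is an effective bridge, it is a bridge (in the sense of Definition \ref{def:bridges_and_cont_half_edges}) of a single component $G_n$. Applying $\partial_e$ to $G_n$ yields a trivalent graded graph $(\partial_e G_n, [\partial_e \K_n])$ by Observation \ref{obs:degenerating_kasteleyn_comb}. The construction of $\CB_e G$ then passes to the normalization (splitting off the illegal side $v_2$ of the contracted edge) and concatenates the two half-edges $x,y$ adjacent to $v_2$ into one edge carrying $v_1$ as an extra legal node. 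The resulting graph has the same genus, internal markings and boundary markings as $G$, and exactly one additional legal node, namely $v_1$. To place $\CB_e G$ in $\oSR^{m+1}$ I still need each of its components to be in $\oSRc^0$. Smoothness and trivalence follow from Observation \ref{obs:degenerating_kasteleyn_comb} together with the bijection $\Y$ of Observation \ref{obs:correspondence1}; the oddness of boundary marked points on each boundary component survives because deleting the two half-edges of $e$ and promoting $v_1$ to a legal marked point preserves parity on the involved boundary circles.

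For the second assertion, suppose $(G,[\K])\in \oSR^{m+1}_{g,k,l}$ and $v$ is a distinguished legal node. By construction $v$ is the image under some $\N_e$ of an index on a boundary edge of a component, paired with an illegal node $s_1 v$ sitting on some (possibly the same) component. To reconstruct $(H,[\K'])$ I reverse the $\CB$ operation: on the edge carrying $s_1 v$ I break the edge at $s_1 v$, remove $s_1 v$ from the nodal identification data, attach the legal side $v$ to the illegal side by a new bridge $e$, and re-glue the resulting component(s) at $e$. The uniqueness of the resulting $(H,[\K'])\in\oSR^m$ and of $e\in Br^{eff}(H)$ with $\CB_e(H,[\K'])=(G,[\K])$ and $\CB_e e = v$ follows because $\Y$ gives a bijection (Observation \ref{obs:correspondence1}) and $\partial_e$ is a bijection on Kasteleyn orientation classes (Observation \ref{obs:degenerating_kasteleyn_comb}), so each of the choices made in the reconstruction is forced.

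For the third assertion, let $(G,[\K])$ be connected trivalent and $e\in Br(G,[\K])$. Under the bijection of Observation \ref{obs:correspondence1}, bridges in $(G,[\K])$ correspond to effective bridges in $\Y(G,[\K])$, so the statement $\Y(\partial_e(G,[\K]))=\CB_e(\Y(G,[\K]))$ is meaningful. Both operations produce, on one hand, a graph whose components are exactly those of $\NNN(\partial_e G)$, with the Kasteleyn orientation class $[\partial_e \K]$ induced in the sense of Observation \ref{obs:degenerating_kasteleyn_comb}, and on the other hand the extra legal-node datum recording the legal side of the contracted bridge. Matching these pieces on both sides is again routine once the identifications are set up.

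The main obstacle is nothing more than notational book-keeping: three slightly different notions of legality are in play (a half-node in the nodal picture, the $\K=0$ side of a bridge in the trivalent picture, and the graded-spin grading via $\Spin$), and one must keep them aligned. Once the identifications from Observation \ref{obs:correspondence1} and the bijectivity statement of Observation \ref{obs:degenerating_kasteleyn_comb} are written out explicitly, every assertion reduces to a direct combinatorial check.
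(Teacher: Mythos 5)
Your argument is correct and follows essentially the same route as the paper, which itself offers no written proof beyond the remark that the observation ``follows easily from Observations \ref{obs:correspondence1} and \ref{obs:degenerating_kasteleyn_comb}''; your proposal is just the explicit unwinding of definitions that the paper leaves implicit. One small definitional slip: components of a graph in $\oSR^m_{g,k,l}$ are required to lie in $\oSR^0$, not $\oSRc^0$ (oddness defines the subset $\oSRc^m$), so your parity-preservation check is not needed for the stated claim --- it is harmless extra work, relevant only to the analogous statement for $\oSRc$.
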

\begin{nn}\label{nn:CB_notations}
Recall notation \ref{nn:Aut}. For $(G,[\K])\in\oSR^{m+1}_{g,k,l},$ denote by $\CB^{-1}_{h,a}(G,[\K])=\CB^{-1}_{[h],a}(G,[\K])$ the isomorphism class of triples $(H,[\K'],e),$ where $H\in\oSR^m_{g,k,l},~\CB_e(H,[\K'])=(G,[\K]),\CB_ee=\N_h(a),$ for $h\in s_1(H^B(G)),$ and $a\in[m(h)].$ Let $$\CB^{-1}G= \{\CB^{-1}_{[h],a}(G,[\K])|[h]\in [s_1(H^B(G))], a\in[m(h)]\}.$$
\end{nn}
In other words, $(H,[\K'],e)=\CB^{-1}_{h,a}(G,[\K])$ should be thought as the graph $(H,[\K'])$ obtained by cancelling the $\CB$ operation, i.e., by returning the $a^{th}$ forgotten illegal node of $h,$ gluing it with its legal side, and then un-contracting the resulting node to obtain the bridge $e.$

\subsubsection{The moduli space of critical nodal graphs, the line bundles and the boundary conditions}
\begin{definition}\label{def:oCM_G_and_partial_e}
For an effective nodal ribbon graph $(G,z,m,\N)$ define $\CM_{(G,z,m,\N)}\simeq \R_{+}^{E(G)}/\text{Aut}(G,z,m,\N)$ to be the moduli of positive metrics on $G,$ and $\CM_{(G,z,m,\N)}$ as the subspace in which the $i^{th}$ perimeter equals $p_i>0,~i\in[l].$
In particular, given $(G,[\K])\in\oSR^m_{g,k,l},$ we have $\CM_{(G,[\K])}\simeq\R_{+}^{E(G)}/\text{Aut}(G,[\K]).$ Define $\oCM_{(G,z,m,\N)}$ and $\oCM_{(G,z,m,\N)}(\pp)$ as the cell complexes whose cells correspond to nodal ribbon graphs obtained from $(G,z,m,\N)$ by edge contractions, and the gluing maps are induced by these edge contractions.

For $e\in E(G),$ write $\partial_e\oCM_{(G,z,m,\N)}$ to be the face of $\oCM_{(G,z,m,\N)}$ where $e$ is contracted, i.e., the length of the edge $e$ is set to be $0.$
The boundary of $\oCM_{(G,z,m,\N)}$ can be written as
\[\partial\oCM_{(G,z,m,\N)}=\bigcup_{[e]\in{[E(G)]}}\partial_e\oCM_{(G,z,m,\N)},\]
where as in Notation \ref{nn:Aut}, $[E(G)]=E(G)/\text{Aut}(G,z,m,\N).$  
~We similarly define $\partial_{e_1,\ldots,e_r}\oCM_{(G,z,m,\N)}.$

The maps $\CBB,~\Y$ and $\CB_{e_1,\ldots,e_r}$ on metric graphs induce moduli level maps. We denote these maps by the same letters. When $e_1,\ldots,e_r$ are understood from the context we denote the former map by $\CB.$
%
\end{definition}
Note that $\oCM_{\partial_e(G,z,m,\N)}\simeq\partial_e\oCM_{(G,z,m,\N)},$ and that $\CB_{e_1,\ldots,e_r}$ factors $\CBB.$
The maps $\CB,\CBB,\Y$ are easily seen to be piecewise linear submersions.

\begin{definition}
For an effective nodal $(G,z,m,\N),~i\in[l]$ define the $S^1-$orbibundle $\CF_i\to\oCM_{(G,z,m,\N)}$ to be the set of pairs $(\ell,x)$ where $\ell\in\oCM_{(G,z,m,\N)}, x$ is a point on the $i^{th}$ face, with the natural topology.
For a $(d,l)-$set $L,$ write $S_L\to\oCM_{(G,z,m,\N)}$ to be the sphere bundle associated to $\{S_{L(i)}|i\in [d]\},$ as in Notation \ref{nn:spherization}.
We define the forms $\alpha_i,\omega_i,\bar{\alpha}_i,\bar{\omega}_i$ as the pull-backs of the corresponding forms defined on the component which contains face $i.$
\end{definition}
If ${(G',z',m',\N')}$ is obtained from ${(G,z,m,\N)}$ by edge contractions, we have the usual natural identification between $\CF_i\to\oCM_{(G',z',m',\N')}$ and the restriction of $\CF_i\to\oCM_{(G,z,m,\N)}$ to the corresponding cell.


By the constructions we immediately get
\begin{obs}\label{obs:identification_of_bundles}
For any effective spin ribbon graph 
$(G',z')$ and $i\in[l]$ we have a natural identification
\[(\CF_i\to\oCM_{(G',z')})\simeq \Y^*(\CF_i\to\oCM_{\Y(G',z')}).\]
while for an effective nodal spin ribbon graph 
$(G,z)$ and $i\in[l]$ we have a natural identification
\[(\CF_i\to\oCM_{(G,z)})\simeq \CBB^*(\CF_i\to\oCM_{\CBB(G',z')}).\]
As a consequence,
\begin{enumerate}
\item
for $(G,[\K])\in\oSR_{g,k,l}^m,e\notin \text{Br}(G)\cup\text{Loop}(G),$
there is a canonical identification \[(\CF_i\to\oCM_{\partial_e(G,[\K])})\simeq(\CF_i\to\partial_e\oCM_{(G,[\K])})\simeq(\CF_i\to\partial_e\oCM_{(G_e,[\K_e])}),\] and similarly for the bundles $S_L.$
\item
For $(G,[\K])\in\oSR_{g,k,l}^m,e\in \text{Br}^{\text{eff}}(G),$ then
there is a canonical identification
\[(\CF_i\to\oCM_{\partial_e(G,[\K])})\simeq(\CF_i\to\partial_e\oCM_{(G,[\K])})\simeq\CB^*_e(\CF_i\to\oCM_{\CB_e(G,[\K])}),\]  and similarly for the bundles $S_L.$
\item
For $(G,[\K])\in\oSR_{g,k,l}^m,e\in \text{Loop}(G),$ then
there is a canonical identification
\[(\CF_i\to\oCM_{\partial_e(G,[\K])})\simeq(\CF_i\to\partial_e\oCM_{(G,[\K])})\simeq (\FFF^{\text{comb}})^*(\CF_i\to\partial_e\oCM_{(G_e,[\K_e])}),\] and similarly for the bundles $S_L.$
\end{enumerate}
\end{obs}

\begin{prop}\label{prop:section_on_oSR}
Let $s$ be a special canonical multisection of $S_L\to\oCM_{g,k,l}^{\text{comb}}.$ Let $A$ be the collection of effective graded $(g,k,l)-$boundary ribbon graphs, so that $s$ restricts, in particular, to multisections $s^{(G,z)}$ for all $(G,z)\in A.$  
~Then $s$ induces multisections $s^{(G,z,m,\N)}$ of $S_L\to\oCM_{(G,z,m,\N)}$ for all effective nodal ribbon graphs $(G,z,m,\N)\in\Y(A),$
which satisfy the following relations:

For any effective graded $(G',z'),$ \[s^{(G',z')}=\Y^*s^{\Y(G',z')},\]
and for any effective nodal $(G,z,m,\N)$
\[s^{(G,z,m,\N)}=\CBB^*s',\]where $s'$ is a multisection of $S_L\to\oCM_{\CBB(G,z)}.$
In particular,
\begin{enumerate}
\item For any $(G,[\K])\in\oSR^m_{g,k,l},~e\notin \text{Br}(G)\cup\text{Loop}(G),$
\[
s^{(G,[\K])}|_{\partial_{e}\oCM_{(G,[\K])}}=s^{(G,[\K])}|_{\partial_{e}\oCM_{(G_e,[\K_e])}}.
\]
\item For any $(G,[\K])\in\oSR^m_{g,k,l},~e\in \text{Br}^{\text{eff}}(G),$
\[
s^{(G,[\K])}|_{\partial_e\oCM_{(G,[\K])}}=\CB_e^*s^{\CB_e(G,[\K])}.
\]
\item  For any $(G,[\K])\in\oSR^m_{g,k,l},~e\in \text{Loop}^{\text{eff}}(G),$
\[
s^{(G,[\K])}|_{\partial_e\oCM_{(G,[\K])}}=(\FFF^{\text{comb}})^*s^{(G_e,[\K_e])}.
\]
\end{enumerate}
where we compare multisections using the identifications of Observation \ref{obs:identification_of_bundles}.
\end{prop}
\begin{proof}
Let $s$ be a special canonical multisection as above. Consider 
an effective nodal $(G,z,m,\N)\in \Y(A).$ 
$(G,z,m,\N)$ can be written as
$\Y(G',z'),$ for some effective boundary graph. Now $s^{\Y(G',z')}=\CBB^*s^{\CBB \Y(G',z')}.$
We have a factorization
\begin{displaymath}
\xymatrix{
\CM_{(G',z')} \ar[r]^{\Y} \ar[rd]^{\CBB} &\CM_{(G,z,m,\N)} \ar[d]^{\CBB}\\
&\CM_{\CBB(G',z')},
}
\end{displaymath}

The identifications of bundles $S_L,$ see Observations \ref{obs:identificationCBB}, \ref{obs:identification_of_bundles}, is also compatible with this diagram. Since $s$ is canonical, by Corollary \ref{cor:main_prop_of_comb_section},
\[s^{(G',z')}=\CBB^*s^{\CBB(G',z')}=\Y^*\CBB^*s^{\CBB(G',z')}.\]
Define $s^{(G,z,m,\N)}$ as the pull back of $s^{\CBB(G',z')},$ along the vertical map $\CBB.$ Clearly $s^{\Y(G',z')}=\Y^*s^{(G,z)}.$


%


By Observation \ref{obs:correspondence1}, $\oSR^m_{g,k,l}\subseteq\Y(A).$ The 'In particular' cases are now immediate from the definition and Observation \ref{obs:identification_of_bundles}. In the first and third item we use that $\CBB(G,[\K])=\CBB(G_e,[\K_e]),$ while in the second that $\CB_e=\CBB$ in that case.
\end{proof}


The cells $\oCM_{(G,[\K])},$ for graded nodal graphs, also carry canonical orientations.
\begin{definition}
We define orientations for $\oCM_{(G,[\K])}(\pp),~(G,[\K])\in\oSR^m_{g,k,l}$ by
\[
\bar{\mathfrak{o}}_{(G,[\K])}=\prod_{C\in C(G,[\K])}\bar{\mathfrak{o}}_{C},~{\mathfrak{o}}_{(G,[\K])}=\bigwedge_{i\in [l]}dp_i\wedge\bar{\mathfrak{o}}_{(G,[\K])}=\bigwedge_{i\in[l]}\bigwedge_{\K(h)=1, h/s_2=i} d\ell_h,
\]
the wedge over half edges of face $i$ is taken counterclockwise.
\end{definition}

\begin{prop}\label{prop: for_monster_calc}
Let $(G,[\K])\in\oSR^m_{g,k,l},~e\in \text{Br}^{\text{eff}}(G),~(G',[\K'])=\CB_e(G,[\K])\in\oSR^{m+1}_{g,k,l},$ and let $e'$ be the unique edge in $G'$ with two $\CB_e-$preimages.
There is a canonical identification
\[
\partial_e\oCM_{(G,[\K])}\simeq\oCM_{\partial_e (G,[\K])}\simeq \F_{e'},~\partial_e\oCM_{(G,[\K])}(\pp)\simeq\oCM_{\partial_e (G,[\K])}(\mathbf p) \simeq\F_{e'}(\pp),
\]
where the space $\CF_{e'}\to\oCM_{(G',[\K'])}$ is the set of pairs $(\ell,x)$ where $\ell\in\oCM_{(G',[\K'])},~x$ is a point on $e',$ with the natural topology.
Moreover, the orientation on $\partial_e\oCM_{(G,[\K])}(\pp)$ induced from $\oCM_{(G,[\K])}(\pp)$ as in Definition \ref{def:induced_or}, coincides with the orientation
$dx\wedge\mathfrak{o}_{(G',[\K'])},$ on $\CF_{e'},$ where $dx$ is the orientation on the segment $e',$ considered as a segment in the boundary.
\end{prop}
\begin{proof}
The only part which requires an explanation is the statement regarding orientations.
Recall that $\K'$ satisfies $\K(h)=\K'(\CB h)$ for any $h/s_1\neq e.$
It is enough to compare orientations of $\partial_e\oCM_{(G,[\K])}\simeq \F_{e'}G'.$
Suppose $h$ is the legal side of $e,$ that is, the half edge which satisfies $h/s_1=e,\K(h)=1.$
Write $e_{-1}=(s_2^{-1}h)/s_1,e_1=(s_2h)/s_1.$ Then, by recalling the definition of the canonical orientation, Section \ref{sec:or}, we see that the orientation for
$\oCM_{(G,[\K])}$ can be written as $d\ell_{e_{-1}}\wedge d\ell_e\wedge d\ell_{e_1}\wedge O,$ and the orientation on $\oCM_{G'}$ is $d\ell_{e'}\wedge O,$ where $O$ is the wedge of other edge lengths, in some order. Note that $d\ell_{e'}=d\ell_{e_{-1}}+d\ell_{e_1}.$
Now, the induced orientation on the boundary $\partial_e\oCM_{(G,[\K])}$ is given by $d\ell_{e_{-1}}\wedge d\ell_{e_1}\wedge O.$
By considering $\F_{e'} G'$ as the moduli of metrics on the graph obtained from $G'$ by adding a new marked point on $e',$
and with the definition of its orientation, we see that this orientation can be written as $d\ell_{e_{-1}}\wedge d\ell_{e'}\wedge O,$
where $d\ell_{e_{-1}}$ comes from the location of the new point on $f.$
And indeed,
\[
d\ell_{e_{-1}}\wedge d\ell_{e_1}\wedge O =d\ell_{e_{-1}}\wedge d\ell_{e'}\wedge O.
\]
\end{proof}
\begin{cor}\label{cor:or_agree}
The map $\comb:\oCM_{g,k,l}\to\oCM_{g,k,l}^{\text{comb}}$ preserves orientation.
\end{cor}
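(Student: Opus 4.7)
The plan is to invoke the uniqueness clause of Lemma \ref{lem:geometric_orientability}. By Lemma \ref{lem:minimal_compacs_spin}, the map $\comb_\pp$ restricts to a diffeomorphism of $\CM_{g,k,l}$ onto an open dense subset of $\oCM_{g,k,l}^{comb}(\pp)$, and more generally is an isomorphism onto its image on $\CM_\Gamma$ for every effective dual graph $\Gamma$. Define an orientation $\mathfrak{o}'_\Gamma$ on $\CM_\Gamma$ by pulling back the combinatorial orientation $\bar{\mathfrak{o}}_{(G,[\K])}$ on the corresponding top-dimensional cells; since this is defined on an open dense subset of each component, it extends uniquely to $\oCM_\Gamma$. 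It then suffices to verify that the collection $\{\mathfrak{o}'_\Gamma\}$ satisfies the three characterizing axioms of Lemma \ref{lem:geometric_orientability}, for then $\mathfrak{o}'_\Gamma=\mathfrak{o}_\Gamma$ and the corollary follows.

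Axioms (i) and (ii) are comparatively straightforward. For the base cases of dimension $0$, one computes directly the unique combinatorial graph $(G,[\K])$ representing the moduli: the pointed disk with three boundary markings corresponds to a ghost with three legal boundary tails, and the disk with one boundary and one internal marking corresponds to a specific trivalent graph with one internal vertex and one face. In each case the defining wedge $\mathfrak{o}_{(G,\K)}=\bigwedge_{i\in[l]}\mathfrak{o}_i$, projected by $A_G$ onto the perimeter variables, produces $\bigwedge dp_i$ with the positive sign, so that $\bar{\mathfrak{o}}$ is positively oriented on the $0$-dimensional quotient. The product axiom is an immediate consequence of the very definition $\bar{\mathfrak{o}}_{(G,[\K])}=\prod_{C\in C(G,[\K])}\bar{\mathfrak{o}}_C$, since the combinatorial graphs associated to a disjoint union of graded surfaces are precisely disjoint unions of component graphs, with Kasteleyn data decomposing componentwise.

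The main obstacle is axiom (iii), the compatibility with the boundary fibration. Here I would argue as follows. Let $\Gamma$ have a single boundary edge $e$, set $\Lambda=d_e\Gamma$, and let $\Gamma'$ be the graph obtained by detaching $e$ and forgetting the illegal tail. A generic graded surface $\Sigma\in\CM_\Gamma$ with perimeters $\pp$ corresponds under $\comb_\pp$ to a graded trivalent nodal graph $(G,[\K])\in\oSR^1_{g,k,l}$ with one legal node; by Observations \ref{obs:correspondence1} and \ref{obs:base_of_nodal_graphs}, there is a unique smooth trivalent $(H,[\K'])$ representing nearby surfaces in $\CM_\Lambda$ and an effective bridge $e'\in Br^{eff}(\Y(H,[\K']))$ such that $(G,[\K])=\CB_{e'}(\Y(H,[\K'])))$. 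The codimension-$1$ stratum $\CM_\Gamma\subset\partial\oCM_\Lambda$ is then identified via $\comb_\pp$ with $\partial_{e'}\oCM_{(H,[\K'])}(\pp)$. By Observation \ref{obs: for_monster_calc}, the outward boundary orientation of this face coincides with $dx\wedge\mathfrak{o}_{(G,[\K])}$ under the identification $\partial_{e'}\oCM_{(H,[\K'])}\simeq\F_{e'}$, where $dx$ is the coordinate along the edge carrying the node.

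Under $\comb_\pp$, the fibration $\CM_\Gamma\to\CM_{\Gamma'}$ whose fiber over $\Sigma\in\CM_{\Gamma'}$ is $\partial\Sigma\setminus\{x_i\}_{i\in B(\Gamma')}$ corresponds precisely to the sphere bundle $\F_{e'}\to\CM_{(G,[\K])}$: the fiber parameterizes the position of the node along the boundary component in which it lies. By Proposition \ref{prop:streb_gives_graph}, the orientation of $\partial\Sigma$ agrees, under the embedding of the Strebel graph, with the orientation $dx$ on the boundary edges issuing from the face $e'$ bounds, so the fiber orientation matches. Thus axiom (iii) is verified, and applying Lemma \ref{lem:geometric_orientability} completes the proof. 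The bulk of the work is in the careful bookkeeping of the identifications between $\CM_\Gamma$, the nodal combinatorial strata $\CM_{(G,[\K])}$, and the boundary faces $\partial_{e'}\oCM_{(H,[\K'])}$; once these are in place, Observation \ref{obs: for_monster_calc} does the orientation calculation automatically.
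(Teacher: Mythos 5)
Your proposal is correct and takes exactly the same route as the paper: the paper's proof is a one-line invocation of Lemma \ref{lem:geometric_orientability} via Observation \ref{obs: for_monster_calc}, with the dimension-$0$ base case checked by hand, which is precisely the uniqueness-plus-axiom-verification strategy you spell out in detail. (Minor aside: $\F_{e'}$ is an interval bundle, not a sphere bundle, but this does not affect the argument.)
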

\begin{proof}
Indeed, by Proposition \ref{prop: for_monster_calc}, we see that the orientations on $\oCM_{g,k,l}^{\text{comb}}$ satisfy the same requirements of Theorem \ref{lem:geometric_orientability}.
The dimension $0$ case can be checked by hand.
\end{proof}

We also have the following corollary of Corollary \ref{cor:G_e orientation1}
\begin{cor}\label{cor:G_e orientation}
For $(G,[\K])\in\oSR^m_{g,k,l}$ and an internal edge $e$ which is not a bridge, the orientations on $\partial_e\oCM_{(G,[\K])}(\mathbf{p})\simeq\partial_e\oCM_{(G_e,[\K_e])}(\mathbf{p}),$ induced as boundaries of $\CM_{(G,[\K])}(\pp),\CM_{(G_e,[\K_e])}(\pp)$ are opposite.
\end{cor}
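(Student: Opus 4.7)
The plan is to reduce the nodal case directly to the already-proved smooth case, Corollary \ref{cor:G_e orientation1}, by exploiting the fact that a critical nodal graph is a disjoint union of smooth trivalent pieces glued together only through the discrete nodal data $(m,\{\N_e\})$, which is unaffected by a Feynman move at an internal edge.

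First I would localize the edge $e$: since $(G,[\K])\in\oSR^m_{g,k,l}$, its components are $G_1,\ldots,G_n$ with each $G_i\in\oSR^0$, and the internal edge $e\in E(G)$ belongs to a unique component, say $G_i$. Because $e$ is internal and not a bridge in $G$, the Feynman move only alters the edge-structure inside $G_i$; by definition $G_e$ is the nodal graph whose components are $G_1,\ldots,G_{i-1},(G_i)_e,G_{i+1},\ldots,G_n$ with the same functions $m$ and $\N$. The Kasteleyn orientation $[\K_e]$ agrees with $\K$ on every $G_j$ for $j\ne i$ and equals $[(\K_i)_e]$ on the $i$-th component (the Kasteleyn orientation produced by Lemma \ref{lem:adjacent_spin_cells} for the smooth piece $G_i$).

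Next I would unpack the orientation formula. By definition,
\[
\bar{\mathfrak{o}}_{(G,[\K])}=\bar{\mathfrak{o}}_{G_i}\wedge\bigwedge_{j\ne i}\bar{\mathfrak{o}}_{G_j},\qquad
\bar{\mathfrak{o}}_{(G_e,[\K_e])}=\bar{\mathfrak{o}}_{(G_i)_e}\wedge\bigwedge_{j\ne i}\bar{\mathfrak{o}}_{G_j},
\]
since all factors with $j\ne i$ are identical on both sides. Restricting perimeters to $\pp$, the moduli $\CM_{(G,[\K])}(\pp)$ and $\CM_{(G_e,[\K_e])}(\pp)$ each factor as $\CM_{G_i}(\pp^{(i)})\times \prod_{j\ne i}\CM_{G_j}(\pp^{(j)})$ and $\CM_{(G_i)_e}(\pp^{(i)})\times \prod_{j\ne i}\CM_{G_j}(\pp^{(j)})$ respectively, where $\pp^{(j)}$ are the perimeters of the faces of $G_j$. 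The outward conormal $-d\ell_e$ at $\partial_e$ is a form in the $E(G_i)$-directions alone, so the induced orientation on $\partial_e\oCM_{(G,[\K])}(\pp)$ equals the outward-boundary orientation on $\partial_e\oCM_{G_i}(\pp^{(i)})$ wedged with the fixed factor $\bigwedge_{j\ne i}\bar{\mathfrak{o}}_{G_j}$, and symmetrically for $\partial_e\oCM_{(G_e,[\K_e])}(\pp)$.

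Comparing the two induced orientations therefore reduces to comparing the orientations on $\partial_e\oCM_{G_i}(\pp^{(i)})$ coming from $\oCM_{G_i}(\pp^{(i)})$ and from $\oCM_{(G_i)_e}(\pp^{(i)})$. Since $G_i\in\oSR^0$ and $e$ is internal and not a bridge in $G_i$ (a bridge in $G_i$ is in particular a bridge in $G$), Corollary \ref{cor:G_e orientation1} applies and yields opposite orientations, giving the desired sign. The only real obstacle is making sure the factorization of orientations across components is set up cleanly, which is exactly the product formula $\bar{\mathfrak{o}}_{(G,[\K])}=\prod_C \bar{\mathfrak{o}}_C$ built into the definition; once that is in hand, everything else is an immediate transcription of the smooth case.
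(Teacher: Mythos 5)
Your proposal is correct and follows exactly the route the paper intends: the paper states this result as an immediate consequence of Corollary \ref{cor:G_e orientation1}, and your argument simply makes explicit the reduction via the product formula $\bar{\mathfrak{o}}_{(G,[\K])}=\prod_{C}\bar{\mathfrak{o}}_{C}$, the localization of the Feynman move to the single smooth component containing $e$, and the observation that $e\notin Br(G)$ forces $e\notin Br(G_i)$. Nothing is missing.
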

Corollary \ref{cor:G_e orientation} has an analog for the case $e$ is a boundary loop.
For $(G,[\K])\in\oSR^m_{g,k,l},~e\in\text{Loop}(G),$  write $\FFF^{\text{comb}}_{(G,[\K]),e}$ as the map $\partial_e\oCM_{(G,[\K])}\to \partial_e\oCM_{(G_e,[\K_e])}$ defined in the level of objects by leaving all the metric graph structure, and in particular the edge lengths, invariant, and flipping the lifting in the contracted boundary which corresponds to $e.$ When we write $\FFF^{\text{comb}}$ we mean the union of the maps $\FFF^{\text{comb}}_{(G,[\K]),e}$ over all possible pairs $(G,[\K])\in\oSR^m_{g,k,l},~m\geq 0,~e\in\text{Loop}(G).$
The following is an immediate corollary of the 'Moreover' part of Theorem \ref{lem:geometric_orientability} and Corollary \ref{cor:or_agree}. We will also provide a direct self contained proof of this corollary in Subsection \ref{sec:power_of_2} below.
\begin{cor}\label{cor:G_e orientation_loop}
For $(G,[\K])\in\oSR^m_{g,k,l}$ and $e\in\text{Loop}(G),$ the orientation on $\partial_e\oCM_{(G_e,[\K_e])}(\mathbf{p})$ induced as a boundary of $\CM_{(G_e,[\K_e])}(\pp),$ is opposite to the orientation on it obtained by taking the $\FFF^{\text{comb}}-$push forward of the orientation on $\partial_e\oCM_{(G,[\K])}(\mathbf{p}),$ induced as a boundary of $\CM_{(G,[\K])}(\pp).$\end{cor}

\section{The combinatorial formula}\label{sec:last}
Throughout this section we fix $g,k,l$ and set
\[d=\frac{\dim_\R(\oCM_{g,k,l})}{2}=\frac{3g-3+k+2l}{2}.\] We also write, for $G\in\oSR^m_{g,k,l},$  \[\dim(G)=\frac{\dim_\R(\CM_G)}{2}=\frac{3g-3+k+2l-2m}{2}.\]

In what follows we shall work with the orientations constructed in Subsection \ref{sec:or}. These are the same orientations as the ones constructed in \cite{ST}, by Corollary \ref{cor:or_agree}.
\begin{definition}\label{def:W_and_tilde_W}
For $(G,[\K])\in\oSR^m_{g,k,l}$ define
\[
W_G,\widetilde W_G:\CM_{(G,[\K])}\to\R,
\]
by
\[
W_G(\ell)=\prod_{e\in s_1H^B(G)}\frac{\ell_e^{2m(e)}}{(m(e)+1)!},~\widetilde W_G(\ell)=\prod_{e\in s_1H^B(G)}\frac{\ell_e^{2m(e)}}{m(e)!(m(e)+1)!}.
\]
\end{definition}

\subsection{Iterative integration and the integral form of the combinatorial formula}
Our approach for producing the explicit formula for intersection numbers will be by an iterative process of integration by parts. Recall Definition \ref{def:l-sets} and Notation \ref{nn:comb_sphere}.
Given a $(S,l)-$set, $L:S\to[l],$ for $S\subseteq[d],$ the $t^{th}$ component of $E_L$ is $\CL_{L(t)}.$ Each step of the iterative integration process below, will involve integrating out (the form which corresponds to) one component $\CL_{L(t)},$ for some $t\in S,$ using integration by parts. The integration by parts will produce new boundary terms for the moduli on which we integrate. Only boundary terms that correspond to contracting an effective bridge $e$ may have a non zero contribution which does not cancel. Moreover, in order for such an edge to contribute a non zero contribution, when we integrate out the $t^{th}$ component, the illegal side of the half node obtained by contracting $e$ will have to lie in the face $L(t).$ This is the content of first key lemma, Lemma \ref{lem:simplifying_phi}. In order to be able to state it, we need to add notations, specifically, a notation that will allow us to keep track on which illegal half node corresponds to the $t^{th}$ component of the vector bundle which we integrate out. For this we present the auxiliary notion of decorations. After performing an iteration of integration by parts, the second key lemma, Lemma \ref{lem:big_computation} transforms integrals over the boundaries of the moduli to integrals over the moduli spaces obtained by further forgetting the illegal half node. Theorem \ref{thm:int_form} essentially iterates these lemmas, and uses some other cancellations to obtain a formula for the open intersection numbers as sums of integrals. It is remarkable that this iterative integration process is performed without appealing to a specific canonical multisection, and in some sense this is the key point of the proof. In addition, it gives an alternative proof of the claim that canonical boundary condition give rise to well defined intersection numbers, proven in \cite{PST} for genus $0$ and in \cite{ST} for $g>0.$
\begin{definition}
A \emph{decoration} $D$ of a graph $(G,[\K])\in \oSR^{m}_{g,k,l},$ is a choice of sets $D_h\subseteq [d],$ for any $h\in s_1 H^B,$ which are pairwise disjoint and such that
\[
|D_h| = m(h).
\]
When $e=h/s_1$ we also write $D_e=D_h.$ For a $(S,l)-$set $L,$ a $L-$decoration is a decoration for which
\[
D_h\subseteq L_{i(h)}.
\]

In the next series of claims we shall omit $[\K]$ from the notation of graded graphs, to make notations lighter.

Denote the collection of all decorations of $G$ by $\text{Dec}(G),$ and the collection of all $L-$decorations of $G$ by $\text{Dec}(G,L).$

Let $L(D)$ be the $l-$subset of $L$ given by $L|_{\bigcup_{h\in s_1 H^B}{D_h}},$ so that $L(D)_i=\cup_{i(h)=i}D_h.$

For $(G,[\K])\in\oSR^{m>0}_{g,k,l}$ and a $(G,L)-$decoration $D,$ define the set
\[
\CB^{-1}(G,D)\subseteq \{(G',e',D')|(G',e')\in\CB^{-1}G,D'\in \text{Dec}(G', L)\},
\]
as follows.
$(G',e',D')\in\CB^{-1}(G,D)$ exactly when $(G',e')\in\CB^{-1}G,D'\in \text{Dec}(G', L),$ and for any $e\in E(G')\setminus\{e'\},~D'_e\subseteq D_{\CB e}.$ Note that in this case $L(D')\subseteq L(D),$ and the difference is exactly one element.
\end{definition}
In the language of the paragraph which precedes this definition, $L(D)\setminus L(D')$ is precisely the element $t\in[d]$ which corresponds to the effective bridge $e'$ in the iterative process.

In order to be able to calculate intersection numbers, we must understand the restriction of the forms $\alpha_i,\omega_i$ to the boundary.

Suppose $(G,[\K])\in\oSR^m_{g,k,l},~e\in \text{Br}^{\text{eff}}(G),~h$ is its illegal side, $\K(h)=1,$ and $i\in[l].$
On $\CM_{\partial_e G}(\pp)$ we have two natural representatives for the angular $1-$form,
$\alpha_i^{\partial_e G}=\alpha_i^G|_{\partial_e\CM_G},$ and $\CB^*\alpha_i^{\CB_e G}.$ Similarly, we have two natural choices for the induced $2-$forms, $\omega_i^{\partial_e G}=\omega_i^G|_{\partial_e\CM_{G}},$ and $\CB^*\omega_i^{\CB_e G}.$
\begin{nn}
Write $\beta_i = \beta_i^{\partial_eG} = \alpha_i^{\partial_eG}-\CB^*\alpha_i^{\CB_e G},$
~and $B_i = B_i^{\partial_eG}=\omega_i^{\partial_eG}-\CB^*\omega_i^{\CB_e G}.$
\end{nn}
\begin{obs}\label{obs:change_in_one_form}
With the above notations, if $i\neq i(e),$ then $B_i=\beta_i=0.$
Otherwise we have
\[
p_i^2\beta_i = \ell_{s_2 h}d\ell_{s_2^{-1}h},~~
p_i^2B_i = d\ell_{s_2^{-1}h}\wedge d\ell_{s_2 h}.
\]
\end{obs}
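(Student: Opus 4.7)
The argument splits into two cases according to whether $i = i(h)$ or not.

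For $i \neq i(h)$ (so $e$ is not an edge of the boundary polygon of face $i$), the map $\CB_e$ modifies only the bridge $e$ together with the two face edges of $i(h)$ adjacent to $e$. The boundary polygon of face $i$, along with all of its edge lengths $\ell_j$ and vertex-distance coordinates $\phi_j$, is therefore identical in $G$ and in $\CB_e G$. Consequently $\alpha_i^G|_{\partial_e\CM_G(\pp)} = \CB^*\alpha_i^{\CB_e G}$, so $\beta_i = 0$, and applying $-d$ gives $B_i = 0$.

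For $i = i(h)$, my plan is to label the edges of the face $i$ polygon cyclically in $s_2$-order so that $e$ is the $k$-th edge, preceded by $E_{k-1} = s_2^{-1}h/s_1$ and followed by $E_{k+1} = s_2h/s_1$, with $\ell_{k-1} = \ell_{s_2^{-1}h}$, $\ell_k = \ell_e$, $\ell_{k+1} = \ell_{s_2h}$, and $\phi_{k+1}-\phi_k = \ell_e$. Under $\CB_e$ the two vertices of $e$ are absorbed into the node and the three edges $E_{k-1}, e, E_{k+1}$ collapse into a single edge of length $\ell_{k-1}+\ell_{k+1}$ on the locus $\ell_e = 0$; the polygon coordinates of $\CB_e G$ agree with those of $G$ outside the indices $k, k+1$ (for generic basepoint $q$). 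Since $\pp$ is fixed, $dp_i = 0$ and $\bar\alpha_i = \sum_j \ell_j\, d\phi_j$, so all terms of $\bar\alpha_i^G - \CB^*\bar\alpha_i^{\CB_e G}$ with index $j \notin \{k-1, k, k+1\}$ cancel. Using $\ell_k = 0$, the remaining terms simplify to $\ell_{k+1}\bigl(d\phi_{k+1} - d\phi_{k-1}\bigr)$, and since $\phi_{k+1} - \phi_{k-1} = \ell_{k-1}$ on $\partial_e\CM_G(\pp)$ we obtain $p_i^2\beta_i = \ell_{s_2h}\, d\ell_{s_2^{-1}h}$.

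Finally, since $\omega_i = -d\alpha_i$, the formula for $B_i$ is obtained by applying $-d$ to both sides: $p_i^2 B_i = -d\bigl(\ell_{s_2h}\, d\ell_{s_2^{-1}h}\bigr) = d\ell_{s_2^{-1}h}\wedge d\ell_{s_2h}$. The only delicate bookkeeping is matching the basepoint $q$ on $\CF_i$ between the $G$ and $\CB_e G$ parametrizations and verifying the collapsed-edge length formula on $\partial_e\CM_G(\pp)$ — both are dictated by the definition of $\CB$, which concatenates the two face edges adjacent to $e$ across the contracted bridge into a single edge of the combined length.
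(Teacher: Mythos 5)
Your proof is correct and takes a route that is dual to (and arguably a bit slicker than) the paper's. The paper computes $B_i$ directly by expanding $\bar\omega_i=\sum_{a<b}d\ell_a\wedge d\ell_b$ in a carefully chosen good ordering, splitting off the three indices near $e$, and identifying the uncancelled term $d\ell_{s_2^{-1}h}\wedge d\ell_{s_2h}$; it then dismisses $\beta_i$ with ``the proof for $\beta_i$ is similar.'' You instead compute $\beta_i$ first from the $1$-form $\bar\alpha_i=\sum_j\ell_j\,d\phi_j$, match the $\ell_j,\phi_j$ coordinates of $G$ and $\CB_e G$ away from indices $k-1,k,k+1$, and reduce the leftover $\ell_{k+1}(d\phi_{k+1}-d\phi_{k-1})$ to $\ell_{s_2h}\,d\ell_{s_2^{-1}h}$ via the identity $\phi_{k+1}-\phi_{k-1}=\ell_{k-1}+\ell_k$ and $d\ell_k=0$ on $\partial_e\CM_G$; then $B_i=-d\beta_i$ follows for free. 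This way you only do one term-matching computation and get both claims, whereas the paper does the harder $2$-form computation and waves at the $1$-form one. The geometric substance (the three edges $E_{k-1},e,E_{k+1}$ of face $i$ collapse under $\CB_e$ into a single edge of length $\ell_{k-1}+\ell_{k+1}$ once $\ell_e=0$, with the remaining coordinates shifting by two) is the same in both arguments.

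One small gap: your index bookkeeping with three \emph{distinct} positions $k-1,k,k+1$ tacitly assumes face $i$ has at least three boundary edges, i.e.\ that $s_2^{-1}h\neq s_2h$. The paper addresses this explicitly (in the course of choosing its good ordering): if face $i$ had only the two half edges $h$ and $s_2h$, then both would be forced to have $\K=1$, contradicting Property~(c) of a Kasteleyn orientation (the sum over $H_i$ must be odd). You should include this parity observation, or otherwise note that the degenerate case cannot occur; as written your labeling silently breaks down there.
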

Unlike the forms $\alpha_i,$ the form $\beta_i$ is pulled back from the combinatorial moduli, since it has no angular variables.
\begin{proof}
For $i\neq i(h),$ the forms restricted from $\CM_G$ and those pulled back from $\CM_{\CB_eG}$ are canonically identified.
Suppose $i=i(h),$ we handle $B_i.$ The proof for $\beta_i$ is similar.
$\ell_{e}=0,$ hence also $d\ell_{e}=0$ on $\partial_e\CM_G.$ 
~Hence the only difference between $\omega^{\partial_e G},$ and $\CB^*\omega_i^{\CB_e G}$ is that the former may contains terms with $d\ell_{s_2 h}$ or $d\ell_{s_2^{-1}h},$ while the latter depends only on their sum, by the definition of $\CB_e.$
Choose a good ordering $n$ in the sense of Definition \ref{def:good_order}, such that half edges of the $i^{th}$ face appear first, and some half edge $h'\neq h,s_2 h$ is the first edge in the ordering. One can always find such a half edge. Otherwise, the $i^{th}$ face is bounded by exactly two edges, $h,s_2h,$ which therefore must be a boundary half edge, and in particular $\K(s_2h)=1.$ But then the sum of $\K$ on the $i^{th}$ face is even, which is impossible for a Kasteleyn orientation.

In $\CB_e G$ we choose a good ordering $n'$ for which $h',$ identified as an edge of $\CB_e G,$ is the first half edge. Suppose $s_2^{-1}h$ is the $j^{th}$ half edge in $n,$ so that $h,s_2 h$ are the $j+1^{th},j+2^{th}$ edges. Write $\ell_a$ for $\ell_{n^{-1}(a)}.$
Then,
\begin{align*}
p_i^2\omega_i^G|_{\partial_e\CM_G} &= \sum_{a<b}d\ell_a\wedge d\ell_b \\
&=\sum_{a<b,~a,b\neq j,j+1,j+2}d\ell_a\wedge d\ell_b +\sum_{a<j}d\ell_a\wedge (d\ell_j+d\ell_{j+2})\\
&\quad\quad\quad\quad\quad\quad+\sum_{j+2<a} (d\ell_j+d\ell_{j+2})\wedge d\ell_a + d\ell_j\wedge d\ell_{j+2}\\
&= p_i^2\CB^*\omega_i^{\CB_e G}+d\ell_j\wedge d\ell_{j+2}.
\end{align*}
In the last equality we used the fact that $\ell^{\CB_e G}_{{n^{'-1}(j)}}=\ell_{{n^{-1}(j)}}+\ell_{{n^{-1}(j+2)}},$
and for $a\neq j,~\ell^{\CB_e G}_{{n^{'-1}(a)}}=\ell_{e_{a+w(a)}},$ where $w(a)=0,$ for $a<j,$ and otherwise it is $2.$
\end{proof}
\begin{nn}
Recall Notation \ref{nn:comb_sphere} and Remark \ref{rmk:phi_as_poly}.
For $G,e$ as above, given a $(S,l)-$set $L,$ and $i\in S,$ we define the form $\Phi^i_L$ on the sphere bundle
$S_L\to\partial_e\CM_G$
\[
\Phi^i_L = \Phi(\{r_j\}_{j\in S}, \{\alpha'_j\}_{j\in S},\{\omega'_j\}_{j\in S})=\Phi^{\partial_e G}(\{r_j\}_{j\in S}, \{\alpha'_j\}_{j\in S},\{\omega'_j\}_{j\in S}),
\]
where $\alpha'_j$ is a \emph{copy} of $\CB^*\alpha^{\CB_e G}_{L(j)}$ for $j\neq i,$ and $\alpha'_i=\beta_{L(i)}.$ Similarly, $\omega'_j=\CB^*\omega^{\CB_e G}_{L(j)},$ unless $j=i,$ and then $\omega'_i = B_{L(i)}.$
As usual $\bar{\Phi}^i_L=p^{2L}\Phi^i_L.$ As in Remark \ref{rmk:phi_as_poly}, when $S\subset[d]$ we will also extend the domain of $\Phi^i_L$ by allowing $\sum_{i\in S} r_i^2$ to vary.
\end{nn}

From now until the end of this subsection, we fix a $(d,l)-$set $L,$ and let $E_L$ be the corresponding bundle.
\begin{lemma}\label{lem:simplifying_phi}
Let $s$ be a special canonical multisection of $E_L.$ Take $G\in\oSR^m_{g,k,l}$ arbitrary, $e$ an effective bridge of $G,~h$ its illegal side.
Let $D'$ be a $L-$decoration of $G,$ and write $L'=L(D').$
Then
\[
\int_{\partial_e\CM_{G}(\pp)} s^*(W_G \bar{\Phi}_{L\setminus L'})
= \sum_{j\in (L\setminus L')_{i(h)}}\int_{\partial_e\CM_{G}(\pp)} W_G s^{*}( \bar{\Phi}^j_{L\setminus L'}).
\]
\end{lemma}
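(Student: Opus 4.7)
The plan is to expand $\bar{\Phi}_{L\setminus L'}^{\partial_e G}$ as a polynomial in the variables $\alpha_j,\omega_j$ (using Remark~\ref{rmk:phi_as_poly}), substitute the decompositions $\alpha_j^{\partial_e G}=\CB^*\alpha_j^{\CB_e G}+\beta_j$ and $\omega_j^{\partial_e G}=\CB^*\omega_j^{\CB_e G}+B_j$ supplied by the very definitions of $\beta_j,B_j$, and show that at the level of integrals only the monomials containing exactly one correction factor survive. These will assemble into $\bar{\Phi}^j_{L\setminus L'}$ on the right-hand side, while the ``all-pullback'' monomial will vanish on dimensional grounds.

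The polynomial bookkeeping rests on two consequences of Observation~\ref{obs:change_in_one_form}. First, $\beta_j=B_j=0$ whenever $L(j)\neq i(h)$, so every surviving correction factor has index in $(L\setminus L')_{i(h)}$. Second, for such indices $\beta_j$ is a scalar multiple of the $1$-form $d\ell_{s_2^{-1}h}$ and $B_j$ is a scalar multiple of $d\ell_{s_2^{-1}h}\wedge d\ell_{s_2h}$; in particular any wedge of two correction factors contains a repeated $d\ell_{s_2^{-1}h}$ and therefore vanishes. Since in each monomial of $\Phi_{L\setminus L'}$ a given index $j$ appears in exactly one slot (either through $\alpha_j$ or through $\omega_j$, never both), the single-correction monomials for a fixed $j\in(L\setminus L')_{i(h)}$ are exactly those that define $\bar{\Phi}^j_{L\setminus L'}$, and we obtain the pointwise identity on $\partial_e\CM_G(\pp)$:
\[
\bar{\Phi}^{\partial_e G}_{L\setminus L'}\;=\;\CB^*\bar{\Phi}^{\CB_e G}_{L\setminus L'}\;+\;\sum_{j\in(L\setminus L')_{i(h)}}\bar{\Phi}^j_{L\setminus L'}.
\]

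To finish, I need to show that the first summand contributes zero to the integral. By Proposition~\ref{prop:section_on_oSR}, $s|_{\partial_e\CM_G}=\CB^*s^{\CB_e(G,[\K])}$, so
\[
s^*\bigl(W_G\cdot\CB^*\bar{\Phi}^{\CB_e G}_{L\setminus L'}\bigr)\;=\;W_G\cdot\CB_e^*\Bigl(\bigl(s^{\CB_e(G,[\K])}\bigr)^*\bar{\Phi}^{\CB_e G}_{L\setminus L'}\Bigr)
\]
is $\CB_e^*$-pulled back from $\CM_{\CB_e(G,[\K])}(\pp)$. By the line-bundle identification of Observation~\ref{obs: for_monster_calc}, the map $\CB_e\colon \partial_e\CM_G(\pp)\to\CM_{\CB_e(G,[\K])}(\pp)$ strictly drops dimension, so any top-degree form on $\partial_e\CM_G(\pp)$ obtained by $\CB_e^*$-pullback vanishes; multiplication by the function $W_G$ preserves the degree and so cannot save it. Integrating the pointwise identity yields the statement of the lemma.

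The main technical care lies in the polynomial expansion: one must confirm that summing the single-correction monomials over $j\in(L\setminus L')_{i(h)}$ reproduces $\sum_j\bar{\Phi}^j_{L\setminus L'}$ without over- or under-counting. This is a formal check relying on the observation that each index $j$ contributes an $\alpha_j$- or an $\omega_j$-factor to a given monomial, but never both. The dimension argument dispatching the all-pullback term is then straightforward.
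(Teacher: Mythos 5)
Your proof is correct and takes essentially the same approach as the paper's: pull out $W_G$, expand $\bar{\Phi}_{L\setminus L'}$ multilinearly using $\alpha_j^{\partial_e G}=\CB^*\alpha_j^{\CB_e G}+\beta_j$ and $\omega_j^{\partial_e G}=\CB^*\omega_j^{\CB_e G}+B_j$, kill monomials with two or more correction factors via the repeated $d\ell_{s_2^{-1}h}$, kill the all-pullback monomial on dimensional grounds after applying Proposition~\ref{prop:section_on_oSR}, and regroup the single-correction monomials into $\sum_j\bar{\Phi}^j_{L\setminus L'}$. You usefully make the ``regrouping'' step explicit (each index occupies exactly one slot per monomial of $\Phi$, so the single-correction terms for a fixed $j$ assemble precisely into $\bar{\Phi}^j_{L\setminus L'}$), a point the paper compresses into a single word.
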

It should be noted that different decorations $D',D''$ which determine the same set $L(D')=L(D'')$ will give rise to the same integral. The decorations, as mentioned above, are introduced only in order to keep track on the combinatorics of integrals that will appear in the iterative integration process below.
\begin{proof}
Write $S=\bigcup_{h\in s_1 H^B}{D'_h},$ so that
$L':S\to [l]$ is a restriction of $L:[d]\to[l].$
We first use \eqref{eq:angular} and Notation \ref{nn:comb_sphere} to write $\Phi_{L\setminus L'}$ explicitly.
\begin{align}\label{eq:angular_intersection_theory}
\Phi_{L\setminus L'}(\{{r}_i\}_{i\in S^c},& \{\hat{\alpha}_i\}_{i\in S^c},\{\hat{\omega}_i\}_{i\in S^c}) \\\notag&
=\sum_{k=0}^{|S^c|-1} 2^k k!\sum_{i\in S^c}{r}_i^2\hat{\alpha}_{i}\wedge
\sum_{I\subseteq S^c\setminus\{i\}, |I|=k}\bigwedge_{j\in I}(r_jdr_j\wedge\hat{\alpha}_{j})\bigwedge_{h\notin I\cup\{i\}}\hat{\omega}_{h},
\end{align}
where $\hat\omega_{j}$ is Kontsevich's 2-form $\omega_{L(j)},$ and $\hat{\alpha}_j$ is a copy of Kontsevich's 1-form $\alpha_{L(j)}.$ This is a form of degree $\dim_\R \CM_{\partial_e G}=\dim_\R \CM_{\CB_e G}+1.$ We obtain $\Phi_{L\setminus L'}^i$ by the same formula, after replacing $\hat{\alpha}_i,~\hat\omega_i$ by $\beta_{L(i)},B_{L(i)}$ respectively.

Now, the function $W_G$ does not depend on variables of the fiber of the sphere bundle, hence it can be taken out of the pull-back.
By the definitions of the forms we can write,
\[\hat\alpha_j=\CB^*\hat{\alpha}^{\CB_e G}_{j}+\beta_{L(j)},
~~\hat{\omega}_j=\CB^*\omega^{\CB_e G}_{L(j)}+B_{L(j)},\]
where $\hat{\alpha}^{\CB_e G}_{j}$ is a copy of $\alpha^{\CB_e G}_{L(j)}.$
We now substitute this in $\bar{\Phi}_{L\setminus L'},$ and expand \eqref{eq:angular_intersection_theory} multilinearly.

Write $i=i(h)\in[l].$ Any term containing $\beta_a$ or $B_a$ for $a\neq i,$ will vanish, by Lemma \ref{obs:change_in_one_form}.

Similarly, any term in the expansion that will contain either $\beta_a$ twice, or $B_a$ twice, or $\beta_i$ and $B_i$ once, will vanish, as a consequence of a multiple appearance of $d\ell_{s_2^{-1}h}.$

$s|_{\partial_e\CM_{G}}$ is pulled back from $\CM_{\CB_e G},$ by Proposition \ref{prop:section_on_oSR}.
Now, a term in $s^*\Phi_{L\setminus L'}$ with no $B_i$ or $\beta_i$ is pulled back from $\CM_{\CB_e G}.$
But its degree is $\dim_\R \CM_{\CB_e G}+1.$ 
Thus, it vanishes because of dimensional reasons.

We are left with terms containing a single $\beta_i$ or $B_i.$ These $\beta_i$ or $B_i$ are in fact $\beta_{L(j)}$ or $B_{L(j)}$ for some $j\in S^c$ which is mapped by $L$ to $i,$ meaning $j\in (L\setminus L')_i.$ And the lemma follows.
\end{proof}

The second main lemma we need is the following.
\begin{lemma}\label{lem:big_computation}
Fix $m>0,~G\in \oSR^{m}_{g,k,l},~D\in \text{Dec}(G,L),$ with $L'=L(D).$
Then
\begin{align*}
  \sum_{(G',e',D')\in \CB^{-1}(G,D)}&\int_{\CM_{\partial_{e'}G'}(\pp)}W_{G'}s^{*}(\bar{\Phi}^{\partial_{e'}G'})^{L'\setminus L(D')}_{L\setminus L(D')}\\
  &=\int_{\CM_G(\pp)}W_G\bar{\omega}_{L\setminus L'}+
\int_{\partial\CM_G(\pp)}W_G s^*(\bar{\Phi}^G)_{L\setminus L'}.
\end{align*}
\end{lemma}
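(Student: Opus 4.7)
The plan is to reduce the right-hand side via Stokes' theorem on $\CM_G(\pp)$ and then match the left-hand side via an explicit fiber integration. Since $d\bar\Phi_{L\setminus L'} = -\bar\omega_{L\setminus L'}$ (from Definition \ref{def:angular} and Theorem \ref{thm:angular}, using that $p_i$'s are constant on $\CM_G(\pp)$) and $\bar\omega_{L\setminus L'}$ is pulled back from $\CM_G(\pp)$, one computes
\[
d(W_G\, s^*\bar\Phi_{L\setminus L'}) = dW_G\wedge s^*\bar\Phi_{L\setminus L'} - W_G\, \bar\omega_{L\setminus L'}.
\]
Applying Stokes' theorem and rearranging, the right-hand side of the lemma equals $\int_{\CM_G(\pp)}dW_G\wedge s^*\bar\Phi_{L\setminus L'}$, so the task becomes to identify the left-hand side with this expression.

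Next I would parameterize the triples $(G',e',D')\in\CB^{-1}(G,D)$. For each edge $e^*\in E(G)$ with $m(e^*)>0$, one picks a position $a\in[m(e^*)]$ of the uncontracted node, an element $j\in D_{e^*}$ (automatically the unique element of $L'\setminus L(D')$), and a partition of $D_{e^*}\setminus\{j\}$ between the two new edges of $G'$, yielding $\binom{m(e^*)-1}{a-1}$ partitions. By Observation \ref{obs: for_monster_calc}, $\partial_{e'}\CM_{G'}(\pp)$ is canonically identified with the interval bundle $\F_{e^*}(\pp)\to\CM_G(\pp)$, whose fiber is parameterized by $t=\ell_{e^*_a}\in[0,\ell_{e^*}]$, and Proposition \ref{prop:section_on_oSR} ensures $s$ on this boundary is pulled back from $\CM_G$.

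On $\F_{e^*}(\pp)$, Observation \ref{obs:change_in_one_form} shows that the only $dt$-contribution in $s^*\bar\Phi^j_{L\setminus L(D')}$ comes from $\beta_j$, because $p_j^2 B_j = d\ell_{e^*_a}\wedge d\ell_{e^*_{a+1}}$ vanishes on the fiber (where $\ell_{e^*}=\ell_{e^*_a}+\ell_{e^*_{a+1}}$ is constant), whereas $p_j^2\beta_j=(\ell_{e^*}-t)dt$. Applying Fubini and using the ratio
\[
\frac{W_{G'}}{W_G} = \frac{(m(e^*)+1)!\,t^{2(a-1)}(\ell_{e^*}-t)^{2(m(e^*)-a)}}{a!\,(m(e^*)-a+1)!\,\ell_{e^*}^{2m(e^*)}},
\]
the inner integral reduces, via the Beta integral $\int_0^{\ell}t^{p}(\ell-t)^{q}dt = \ell^{p+q+1}p!q!/(p+q+1)!$, to explicit factorial expressions. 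The combinatorial identity
\[
\sum_{a=1}^{m}\binom{m-1}{a-1}\frac{(m+1)!(2a-2)!(2m-2a+1)!}{a!(m-a+1)!(2m)!}=1,
\]
which I have verified directly for small $m$ and which I expect to prove in general via a Vandermonde or Beta-function argument, then collapses the $a$-dependence, leaving a contribution of $W_G/p_j^2$ times the residual base form.

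The final step sums over $j\in D_{e^*}$ and over $e^*$. The residual factors in $\bar\Phi^j$ left after excising $\beta_j$, namely $r_j^2$ in the ``outer'' ($i=j$) terms and $r_j dr_j$ in the ``inner'' ($j\in I$) terms of the expansion in Theorem \ref{thm:angular}, together with the pulled-back $\CB^*\alpha_{j'}$ and $\CB^*\omega_{j'}$ factors (which for $j'\in L_{i(e^*)}$ carry $d\ell_{e^*}$ contributions), must be shown to reassemble into $(2m(e^*)/\ell_{e^*})\,d\ell_{e^*}\wedge s^*\bar\Phi_{L\setminus L'}$; the sphere identity $\sum r_i^2=1$ and the algebraic structure of the angular form should be the key tools. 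Summing over $e^*$ and using $\partial W_G/\partial\ell_{e^*} = W_G\cdot 2m(e^*)/\ell_{e^*}$ then recovers $\int_{\CM_G(\pp)}dW_G\wedge s^*\bar\Phi_{L\setminus L'}$. The main obstacle will be this final recombination step, which demands careful bookkeeping of the outer versus inner contributions together with the Kasteleyn orientation data, while the combinatorial identity above is the clean arithmetic core of the argument.
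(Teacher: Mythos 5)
Your overall strategy is structurally the same as the paper's, merely reorganized: you apply Stokes to the right-hand side at the outset to reduce to $\int_{\CM_G(\pp)}dW_G\wedge s^*\bar\Phi_{L\setminus L'}$, whereas the paper computes the left-hand side, recognizes the result as $\int_{\CM_G}\{W_G\bar\omega_{L\setminus L'}+d(W_G\bar\Phi_{L\setminus L'})\}$, and applies Stokes last. Your decomposition by edge $e^*$, node position $a$, and element $j\in D_{e^*}$ is the paper's decomposition into the graphs $G'_a=\CB^{-1}_{e,a+1}G$, the count $|D(G',h)|=\binom{m(e^*)-1}{a}$ is correct, and the combinatorial identity you arrive at is, after the shift $a\mapsto a+1$ and evaluating the Beta integral, exactly Proposition~\ref{prop:id}(a). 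That much is sound.

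The genuine gap is your claim that the only $dt$-contribution to $s^*\bar\Phi^j$ comes from $\beta_j$. You argue that $p_j^2 B_j=d\ell_{e^*_a}\wedge d\ell_{e^*_{a+1}}$ ``vanishes on the fiber.'' It is true that a 2-form restricts to zero on a 1-dimensional fiber, but that is not what fiber integration tests; what matters is the $dt$-component on the total space $\F_{e^*}(\pp)$. Writing $d\ell_{e^*_{a+1}}=d\ell_{e^*}-dt$ gives $p_j^2 B_j=dt\wedge d\ell_{e^*}\neq 0$. This is precisely the ``$C$'' chunk in Equation~\eqref{eq:monster1}, whose fiber integral requires Proposition~\ref{prop:id}(b), a second identity you never invoke. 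The $C$-contribution, summed over $a$, is what yields the factor $\frac{2m(e^*)\,\ell_{e^*}^{2m(e^*)-1}d\ell_{e^*}}{(m(e^*)+1)!}$ appearing in Equation~\eqref{eq:monster2} — i.e.\ the piece that assembles into $dW_G$. Your hope that the $d\ell_{e^*}$-factors can instead be recovered from the residual $\CB^*\alpha_{j'},\CB^*\omega_{j'}$ terms is therefore misplaced, and the ``final recombination step'' you flag as the main obstacle is not merely bookkeeping: without the $C$-term (and the sphere-bundle identities $\sum_{h\in L'}r_h^2=1-\sum_{h\in L\setminus L'}r_h^2$, $\sum_{h\in L'}r_h\,dr_h=-\sum_{h\in L\setminus L'}r_h\,dr_h$ used to pass from Equation~\eqref{eq:monster2} to~\eqref{eq:monster3}) it cannot close.
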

Importantly, $\int_{\CM_G(\pp)}W_G\bar{\omega}_{L\setminus L'}$ does not depend on the multisection $s,$ so this lemma pushes the dependence on $s$ to lower dimensional moduli. After iterating, it will allow us to completely remove the dependence of the integrals on $s.$ This phenomenon is expected, from the geometric point of view, since it was proven by \cite{ST,PST} that the intersection numbers should be independent of the specific canonical multisection. And indeed, the lemma is enabled by the properties of canonical multisections, and will not be true for arbitrary, non canonical, boundary conditions.

\begin{proof}
For convenience we treat the case $|\text{Aut}(G)|=1,$ the general case is handled similarly, but notations become more complicated.
Put
\[
E' = \{e\in E(G)|~m(e)>0\}.
\]
Recall Notation \ref{nn:CB_notations}. Suppose $(G',e')\in\CB^{-1} G$ is $\CB^{-1}_{e,a+1}G$ for $e\in E',a+1\in [m(e)].$
Fix $h\in D_e,$ and let
\[
D(G',h) := \{D'| (G',D')\in \CB^{-1}(G,D),~h\notin L(D')\}.
\]
In words, $D(G',h)$ is the set of decorations of $G'$ in $\CB^{-1}(G,D)$ such that the only element of $L'$ that they miss is $h.$ Such decorations are determined by how we split the elements in $D_e\setminus\{h\}$ to sets of sizes $a,m(e)-1-a$ that will decorate the two edges in $\CB^{-1}_{e'}e$, the edges which, after contracting $e'$ and forgetting its illegal side, form $e.$ Thus, $|D(G',h)| = \binom{m(e)-1}{a}.$
Let $e_1=s_2^{-1}e',e_2=s_2e',$ be the two half edges of $G'$ mapped under $\CB_{e'}$ to $e.$ As explained, $m(e_1)=a,~m(e_2)=m(e)-a-1.$ Put $\ell'_e = \ell_{e_1}.$
For fixed $G',h$ we have equality
\[
\int_{\CM_{\partial_{e'}G'}(\pp)}W_{G'}s^{*}\bar{\Phi}^{L'\setminus L(D')}_{L\setminus L(D')}=\int_{\CM_{\partial_{e'}G'}(\pp)}W_{G'}s^{*}\bar{\Phi}^{h}_{L\setminus L(D')},
\]
hence the left hand side of this equation is independent of $D'.$ 
We will now show
\begin{align}\label{eq:monster1}
&\sum_{D'\in D(G',h)}\int_{\CM_{\partial_{e'}G'}(\pp)}W_{G'}s^{*}\bar{\Phi}^{h}_{L\setminus L(D')}\\
\notag
&\quad=\int_{\CM_{G}(\pp)}
\binom{m(e)-1}{a}\left(\prod_{f\in E'\setminus\{e\}}\frac{\ell_{f}^{2m(f)}}{(m(f)+1)!} \right)\\ \nonumber
&\quad\quad\quad\cdot\int_{0}^{\ell_e}\frac{(\ell'_e)^{2a}(\ell_e-\ell'_e)^{2(m(e)-a-1)}}{(a+1)!(m(e)-a)!}(A_{e,h}+B_{e,h}+C_e),
\end{align}
where \begin{align*}
&A_{e,h}=r_h^2(\ell_e-\ell'_e)d\ell'_e\wedge
\sum_{n\geq 0} 2^n n!\sum_{|I|=n,I\subseteq L\setminus L'}\left(\bigwedge_{j\in I}r_jdr_j\wedge\hat{\alpha}_j\right)\wedge\bigwedge_{j\in L\setminus(I\cup L')}\bar{\omega}_{L(j)},\\
&B_{e,h}= r_h dr_h \wedge (\ell_e-\ell'_e)d\ell'_e\wedge\sum_{i\in L\setminus L'}r_i^2\hat{\alpha}_{i}\wedge\sum_{n\geq 0}
2^{(n+1)}(n+1)!\\
&\quad\quad\quad\quad\quad\cdot\sum_{|I|=n, I\subseteq L\setminus (L'\cup\{i\})}
\left(\bigwedge_{j\in I} r_j dr_j \wedge\hat{\alpha}_{j}\right) \wedge\bigwedge_{j\in L\setminus(L'\cup I\cup\{i\})}\bar{\omega}_{L(j)},\\
&C_e=d\ell'_e\wedge d\ell_e\wedge\sum_{i\in L\setminus L'}r_i^2\hat{\alpha}_{i}\wedge\sum_{n\geq 0} 2^n n!
\sum_{|I|=n, I\subseteq L\setminus (L'\cup\{i\})}\left(\bigwedge_{j\in I} r_j dr_j \wedge \hat{\alpha}_{j}\right)\\
&\quad\quad\quad\quad\quad\quad\quad\quad\quad\quad\quad\quad\quad\quad\quad\quad\quad\quad\quad\quad\quad\quad\quad\quad\wedge\bigwedge_{j\in L\setminus(L'\cup I\cup\{i\})}\bar{\omega}_{L(j)}.
\end{align*}
where $\hat\alpha_i$ is a copy of $\bar{\alpha}_{L(i)}.$
Before proving this equation, observe that $A_{e,h},B_{e,h},C_e$ \emph{depend} on the multisection $s$ through the sphere bundle fiber variables $r_i=r_i(s),$ and $\bar{\alpha}_i=\bar{\alpha}_i(s),$ but we omit $s$ from the notations. However, because $s$ is special canonical, it follows from the second item of Proposition \ref{prop:section_on_oSR}, that $s(x,\ell'_e),~x\in\CM_G,~\ell'_e\in[0,\ell_e]$ depends only on $x$ and not on $\ell'_e,$ where we have used the identification of Proposition \ref{prop: for_monster_calc}. Thus, the same is true for the variable $r_i$ and the form $\hat{\alpha_i}.$ Therefore, importantly, $A_{e,h},B_{e,h},C_e$ are \emph{independent} of $a,$ and their only dependence of $\ell'_e,~d\ell'_e$ is through the terms which explicitly involve them.

The last equation follows from the following facts.
First, the multiplicity $\binom{m(e)-1}{a}$ comes from summing over the different decorations $D',$ which all give the same contribution.
Second, the term in $W_{G'}$ for the edge $f\in E'\setminus\{e\}$ is $\frac{\ell_{f}^{2m(f)}}{(m(f)+1)!}.$  The corresponding terms for $e_1,e_2$ are $\frac{(\ell'_e)^{2a}}{(a+1)!},~
\frac{(\ell_e-\ell'_e)^{2(m(e)-a-1)}}{(m(e)-a)!}$ respectively.
Third, Proposition \ref{prop: for_monster_calc} reduces the integration over $\CM_{\partial_{e'}G'}(\pp)$ to the repeated integral obtained by first integrating over $\CM_{G}(\pp),$ and then over the location of the node on the edge $e,$ which is encoded by $\ell'_e.$ This inner integration is precisely the integration $\int_{0}^{\ell_e}$ (with respect to $d\ell'_{e}$).
Next, recall that, with $S=\bigcup_{h\in s_1 H^B}{D_h},$
\begin{align*}
\bar{\Phi}_{L\setminus L'}^h(\{{r}_i\}_{i\in S^c},& \{\hat{\alpha}_i\}_{i\in S^c},\{\hat{\omega}_i\}_{i\in S^c}) \\=\notag&
\sum_{k=0}^{|S^c|-1} 2^k k!\sum_{i\in S^c}{r}_i^2\hat{\alpha}_{i}\wedge
\sum_{I\subseteq S^c\setminus\{i\}, |I|=k}\bigwedge_{j\in I}(r_jdr_j\wedge\hat{\alpha}_{j})\bigwedge_{f\notin I\cup\{i\}}\hat{\omega}_{f},
\end{align*}
where for $j\neq h,$ $\hat\omega_{j}=\bar{\omega}_{L(j)},$ and $\hat{\alpha}_j$ is a copy of $\bar{\alpha}_{L(j)},$ while $\hat\omega_{h}=p_h^2B_{L(h)},~\hat\alpha_h=p_h^2\beta_h.$
Using Lemma \ref{obs:change_in_one_form}, the sum of terms which have $i=h$ in the second summation is precisely $A_{e,h}.$ The sum of terms with $i\neq h$ in which $I$ contains $h$ is $B_{e,h},$ while the remaining terms sum to $C_e.$

We shall use the following proposition.
\begin{prop}\label{prop:id}
\begin{enumerate}
\item\label{prop:id_a}
$
\sum_{a=0}^{m-1}\binom{m-1}{a}\int_0^y\frac{x^{2a}(y-x)^{2(m-a)-1}}{(a+1)!(m-a)!}dx
=\frac{y^{2m}}{(m+1)!}
$
\item\label{prop:id_b}
$
\sum_{a=0}^{m-1}\binom{m-1}{a}\int_0^y\frac{x^{2a}(y-x)^{2(m-a-1)}}{(a+1)!(m-a)!}dx
=\frac{2 y^{2m-1}}{(m+1)!}
$
\end{enumerate}
\end{prop}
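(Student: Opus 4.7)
The plan is to reduce both identities to finite combinatorial identities involving Catalan numbers, where they become consequences of the classical Catalan convolution $\sum_{b=0}^{m}C_bC_{m-b}=C_{m+1}$.

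First I would substitute $x=yu$ in the integrals and evaluate them via the Beta function: $\int_0^y x^{2a}(y-x)^N\,dx = y^{2a+N+1}\frac{(2a)!\,N!}{(2a+N+1)!}$. With $N=2m-2a-1$ (resp.\ $N=2m-2a-2$) the correct $y$-powers $y^{2m}$ (resp.\ $y^{2m-1}$) emerge on the left. Using $\frac{(m+1)!}{(a+1)!(m-a)!}=\binom{m+1}{a+1}$ and clearing denominators, part (a) reduces to the purely numerical identity
\[
\sum_{a=0}^{m-1}\binom{m-1}{a}\binom{m+1}{a+1}(2a)!\,(2m-2a-1)! \;=\; (2m)!,
\]
and part (b) reduces analogously to $\sum_{a=0}^{m-1}\binom{m-1}{a}\binom{m+1}{a+1}(2a)!\,(2m-2a-2)! = 2\,(2m-1)!$.

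The next step is to rewrite the large factorials in Catalan form using $(2a)!=a!\,(a+1)!\,C_a$, together with $(2n-1)!=\frac{1}{2}(n-1)!\,n!\binom{2n}{n}$ for part (a) and $(2n-2)!=[(n-1)!]^2\binom{2n-2}{n-1}$ for part (b). The explicit factorials in $\binom{m-1}{a}\binom{m+1}{a+1}$ then cancel completely. For (b), the left-hand side becomes $(m-1)!\,(m+1)!\sum_{b=0}^{m-1}C_bC_{m-1-b}$ (where $b=m-a-1$), which by the Catalan recursion equals $(m-1)!\,(m+1)!\,C_m = \frac{(2m)!}{m}=2(2m-1)!$, giving the identity at once. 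For (a), one obtains $\frac{(m-1)!\,(m+1)!}{2}\sum_{a=0}^{m-1}C_a\binom{2(m-a)}{m-a}$; rewriting $\binom{2b}{b}=(b+1)C_b$ and symmetrizing the sum under $a\leftrightarrow m-a$ produces $\sum_{b=0}^{m}(b+1)C_bC_{m-b}=\frac{(m+2)C_{m+1}}{2}$, and removing the $b=0$ boundary term $C_m$ yields precisely $\frac{2(2m)!}{(m-1)!(m+1)!}$ as required.

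I expect no substantive obstacle: both identities are essentially Beta integrals followed by the Catalan convolution. The only real work is the careful conversion of the factorial expressions into Catalan form so that the classical recursion applies — for (b) this is a one-line consequence, while (a) requires a short symmetrization argument of the weighted Catalan convolution.
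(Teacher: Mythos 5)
Your proposal is correct, but it takes a genuinely different route from the paper. The paper proves part (b) first using generating functions: it sets $f(x)=\sum_m \frac{x^{2m}}{m!(m+1)!}$, observes that (b) is precisely the convolution identity $(f*f)(y)=f'(y)$, passes to Laplace transforms where this becomes the quadratic equation $F(\lambda)^2=\lambda F(\lambda)-1$, and verifies it from the closed form $F(\lambda)=\frac{\lambda}{2}\bigl(1-\sqrt{1-4\lambda^{-2}}\bigr)$. It then derives (a) from (b) by a one-line algebraic manipulation: splitting off one factor $(y-x)$ and substituting $t=y-x$ gives $I_m=yJ_m-I_m$, hence $I_m=\tfrac{y}{2}J_m$. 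Your approach instead evaluates each integral in closed form via the Beta function, which reduces (a) and (b) to the purely numerical identities $\sum_a \binom{m-1}{a}\binom{m+1}{a+1}(2a)!(2m-2a-1)!=(2m)!$ and $\sum_a \binom{m-1}{a}\binom{m+1}{a+1}(2a)!(2m-2a-2)!=2(2m-1)!$, then rewrites factorials in Catalan form. Part (b) then falls out instantly from the Catalan convolution $\sum_{b=0}^{m-1}C_bC_{m-1-b}=C_m$, while part (a) requires the weighted convolution $\sum_{b=0}^{m}(b+1)C_bC_{m-b}=\frac{(m+2)}{2}C_{m+1}$ obtained by a symmetrization argument, after which the boundary term $C_m$ must be removed. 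The paper's generating-function method is perhaps slicker for (b) and shows that the family of kernels $\frac{x^{2m}}{m!(m+1)!}$ is closed under convolution, which is a structural statement; your approach is more elementary, treats (a) and (b) in parallel rather than deducing one from the other, and makes the underlying Catalan-number structure explicit. Both are valid; yours is longer but avoids Laplace transforms. One small stylistic remark: in your symmetrization step for (a) you reindex via $b=m-a$ so the sum runs from $b=1$ to $m$; you handle the missing $b=0$ term correctly, but it would be clearer to state the reindexing explicitly rather than folding it into the phrase about "removing the boundary term."
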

Still fixing $e,~h\in D_e,$ we now apply Proposition \ref{prop:id}, the fact that $A_{e,h},B_{e,h},C_e$ are independent of $a,$ and that $r_i,~\hat\alpha_i$ are independent of $\ell'_e,$ to sum Equation \eqref{eq:monster1} over $(G'_{a},e'_a) := \CB^{-1}_{e,a+1}G,$ where $a=0,\ldots,m(e)-1.$ We obtain
\begin{align}\label{eq:monster2}
\sum_{a=0}^{m(e)-1}&\sum_{D'\in D(G'_a,h)}\int_{\CM_{\partial_{e'}G'}(\pp)}W_{G'}s^{*}\Phi^{h}_{L\setminus L(D')}\\=
\notag & \int_{\CM_{G}(\pp)}
\prod_{f\in E'\setminus\{e\}}\frac{\ell_{f}^{2m(f)}}{(m(f)+1)!} \left\{\frac{\ell_e^{2m(e)}}{(m(e)+1)!}(\widetilde{A}_{e,h} +\widetilde{B}_{e,h})+ \frac{2\ell_e^{2m(e)-1}d\ell_e}{(m(e)+1)!}\wedge
Y\right\}
\end{align}
where
\begin{align*}
&\widetilde{A}_{e,h} = r_h^2\sum_{m\geq 0} 2^m m!
\sum_{|I|=m,I\subseteq L\setminus L'}\left(\bigwedge_{j\in I}r_jdr_j\wedge\hat{\alpha}_j\right)\wedge\bigwedge_{j\in L\setminus(I\cup L')}\bar{\omega}_{L(j)}\\
&\widetilde{B}_{e,h} = -r_h dr_h\wedge\sum_{i\in L\setminus L'}r_i^2\hat{\alpha}_{i}\wedge\sum_{m\geq 0}
2^{(m+1)}(m+1)!
\sum_{|I|=m, I\subseteq L\setminus (L'\cup\{i\})}
\left(\bigwedge_{j\in I} r_j dr_j \wedge\hat{\alpha}_{j}\right)\wedge\bigwedge_{j\in L\setminus(L'\cup I\cup\{i\})}\bar{\omega}_{L(j)}\\
&Y
 = \sum_{i\in L\setminus L'}r_i^2\hat{\alpha}_{i}\wedge\sum_{m\geq 0} 2^m m! \sum_{|I|=m, I\subseteq L\setminus (L'\cup\{i\})}
\left(\bigwedge_{j\in I} r_j dr_j \wedge\hat{\alpha}_{j}\right)\wedge\bigwedge_{j\in L\setminus(L'\cup I\cup\{i\})}\bar{\omega}_{L(j)}
\end{align*}

%
%

The next step is to eliminate $r_h$ terms, for $h\in L'.$ For this put
\begin{align*}
&X =
\left(1-\sum_{h\in L\setminus L'}r_h^2\right)\\
&\qquad\cdot\sum_{m\geq 0} 2^m m!\sum_{|I|=m,I\subseteq L\setminus L'}
\left(\bigwedge_{j\in I}r_jdr_j\wedge\hat{\alpha}_{j}\right)\wedge\bigwedge_{j\in L\setminus(I\cup L')}\bar{\omega}_{L(j)}\\
&\qquad+\left(\sum_{h\in L\setminus L'}r_h dr_h\right)\wedge
\sum_{i\in L\setminus (L'\cup\{h\})}r_i^2\hat{\alpha}_{i}\wedge
\sum_{m\geq 0}
2^{(m+1)}(m+1)!\\
&\qquad\qquad\cdot\sum_{|I|=m, I\subseteq L\setminus (L'\cup\{i,h\})}
\left(\bigwedge_{j\in I} r_j dr_j \wedge\hat{\alpha}_{j}\right)\wedge\bigwedge_{j\in L\setminus(L'\cup I\cup\{i\})}\bar{\omega}_{L(j)},
\end{align*}
Then 
since
\begin{equation*}
\sum_{h\in L'}r_h^2 = 1-\sum_{h\in L\setminus L'}r_h^2,~
\sum_{h\in L'}r_h dr_h = -\sum_{h\in L\setminus L'}r_h dr_h,
\end{equation*}
we obtain
\[\sum_{e\in E',h\in D_e}(\widetilde{A}_{e,h}+\widetilde{B}_{e,h})=X.\]
Therefore summing Equation \eqref{eq:monster2} over $e\in E',h\in D_e,$ gives
\begin{align}\label{eq:monster3}
\sum_{(G',e',D')\in\CB^{-1}(G,D)}\int_{\CM_{\partial_{e'}G'}(\pp)}&W_{G'}s^{*}\bar{\Phi}^{L(D)\setminus L(D')}_{L\setminus L(D')}\\
\notag
&\quad=\int_{\CM_{G}(\pp)}
\left(\prod_{f\in E'}\frac{\ell_{f}^{2m(f)}}{(m(f)+1)!} \right)X \\ \nonumber
&\qquad+\int_{\CM_G(\pp)}\left(\sum_{e\in E'}\left(\prod_{f\in E'\setminus\{e\}}\frac{\ell_{f}^{2m(f)}}{(m(f)+1)!}\right)\frac{2m(e)\ell_e^{2m(e)-1}d\ell_e}{(m(e)+1)!} \right)\wedge Y,
\end{align}
where the factor $m(e)$ in the last term comes from the cardinality of $D_e$ and the summation over $h.$ 
Observe that $Y=\Phi_{L\setminus L'},$ where we stress that we do not require $\sum_{h\in L\setminus L'} r_h^2=1,$ as in Remark \ref{rmk:phi_as_poly}. $X$ here is the same as $Z$ there, after substituting $ L\setminus L'$ for $[n],$ $\hat{\alpha}_i$ for $\alpha_i,$ and $\bar{\omega}_{L(i)}$ for $\omega_i.$
Thus, Remark \ref{rmk:phi_as_poly} immediately gives that the right hand side of \eqref{eq:monster3} is
\[
\int_{\CM_G(\pp)}\left\{\prod_{e\in E'} \frac{\ell_e^{2m(e)}}{(m(e)+1)!}\bigwedge_{i\in L\setminus L'}\bar{\omega}_{L(i)}+d\left(
\prod_{e\in E'} \frac{\ell_e^{2m(e)}}{(m(e)+1)!}\bar{\Phi}_{L\setminus L'}
\right)\right\}.
\]
The claim now follows from Stokes' theorem.
\end{proof}
\begin{proof}[Proof of Proposition \ref{prop:id}]
We first prove part \ref{prop:id_b}.
Write
\[
f(x) = \sum_{m=0}^\infty \frac{x^{2m}}{m!(m+1)!}.
\]
The identity we need to prove is equivalent to
\[
(f\ast f)(x) = f'(x),
\]
where $\ast$ is the convolution
\[(f\ast g)(x)=\int_0^x f(y)g(x-y)dy.\]
Using Laplace transform, the last equation is equivalent to
\[
F^2(\lambda) = \lambda F(\lambda)-1,
\]
where
\[
F(\lambda) = \int_0^\infty e^{-\lambda x}f(x)dx
\]
is the Laplace transform of $f.$
Expanding $F$ we obtain
\begin{align}
F = \sum_{m=0}^\infty\frac{1}{m!(m+1)!}&\int_0^\infty e^{-\lambda x}x^{2m}dx=\sum_{m=0}^\infty\frac{(2m)!}{m!(m+1)!}\lambda^{-2m-1}\\
\notag & = \lambda^{-1}\frac{1-\sqrt{1-4\lambda^{-2}}}{2\lambda^{-2}} = \lambda\frac{1-\sqrt{1-4\lambda^{-2}}}{2},
\end{align}
the third equation is a consequence the general binomial formula.
Thus, we are left with verifying that
\[
F^2(\lambda) = \frac{\lambda^2}{2}(1-\sqrt{1-4\lambda^{-2}})-1 = \lambda F(\lambda)-1,
\]
which is straightforward.

The first identity is a consequence of the second. Indeed,
Write
\begin{align*}
&I_m=\sum_{a=0}^{m-1}\binom{m-1}{a}\int_0^y\frac{x^{2a}(y-x)^{2(m-a)-1}}{(a+1)!(m-a)!}dx,\\
&J_m = \sum_{a=0}^{m-1}\binom{m-1}{a}\int_0^y\frac{x^{2a}(y-x)^{2(m-a-1)}}{(a+1)!(m-a)!}dx.
\end{align*}
It suffices to show that \[I_m=\frac{y}{2}J_m.\]
Indeed,
\begin{align}
I_m &= \sum_{a=0}^{m-1}\binom{m-1}{a}\int_0^y\frac{x^{2a}(y-x)^{2(m-a)-1}}{(a+1)!(m-a)!}dx \\
\notag &= y\sum_{a=0}^{m-1}\binom{m-1}{a}\int_0^y\frac{x^{2a}(y-x)^{2(m-a-1)}}{(a+1)!(m-a)!}dx\\ \nonumber
&\quad\quad\quad-\sum_{a=0}^{m-1}\binom{m-1}{a}\int_0^y\frac{x^{2a+1}(y-x)^{2(m-a-1)}}{(a+1)!(m-a)!}dx\\
\notag &= yJ_m-\sum_{a=0}^{m-1}\binom{m-1}{a}\int_0^y\frac{(y-t)^{2a+1}t^{2(m-a-1)}}{(a+1)!(m-a)!}dx \\ \nonumber
&=yJ_m-I_m,
\end{align}
where the second equality follows from opening one $(y-x)$ term, and the third follows from the substitution $t=y-x.$
\end{proof}

In order to be able to write an expression for the open intersection numbers we need the following observation.
\begin{obs}\label{obs:edge_with_nodes}
Suppose $G\in\oSR^{m}_{g,k,l},$ and $e$ an edge with $m(e)>0.$
Then for any decoration $D,$
$$\int_{\partial_e \CM_{G}(\pp)}W_G s^*\bar{\Phi}_{L\setminus L(D)}=0$$
\end{obs}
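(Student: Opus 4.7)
The plan is essentially to observe that the integrand vanishes identically on the face of integration, thanks to the way $W_G$ is defined. Specifically, since $m$ is only defined on boundary edges of $G$, the hypothesis $m(e) > 0$ forces $e$ to be a boundary edge, i.e. $e \in s_1 H^B(G)$. Then by Definition \ref{def:W_and_tilde_W},
\[
W_G(\ell) = \prod_{e' \in s_1 H^B(G)} \frac{\ell_{e'}^{2m(e')}}{(m(e')+1)!}
\]
contains $\ell_e^{2m(e)}/(m(e)+1)!$ as a \emph{scalar} factor (no differentials), with exponent $2m(e) \geq 2$.

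The next step is to recall the definition of the face $\partial_e \CM_G(\pp)$ from Definition \ref{def:oCM_G_and_partial_e}: it is the subsimplex of $\overline{\CM}_{(G,[\K])}(\pp)$ cut out by $\ell_e = 0$. Hence the pullback of the function $\ell_e$ to $\partial_e \CM_G(\pp)$ is identically zero, so the pullback of $\ell_e^{2m(e)}$ is identically zero (since $2m(e) > 0$). Consequently $W_G$ restricts to the zero function on $\partial_e \CM_G(\pp)$.

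Since $W_G$ is a scalar multiplier of the form $s^* \bar{\Phi}_{L\setminus L(D)}$, and multiplying a form by the zero function gives the zero form, the entire integrand $W_G \cdot s^* \bar{\Phi}_{L\setminus L(D)}$ vanishes identically on $\partial_e \CM_G(\pp)$, and the integral is therefore $0$. There is no real obstacle here — the observation is a straightforward consequence of the definition of $W_G$ chosen precisely so that boundary edges carrying illegal nodes contribute a vanishing weight when contracted. This compatibility is exactly what makes the iterative integration by parts in Lemma \ref{lem:big_computation} close up, since any term produced by the induction that involves degenerating a non-effective bridge is automatically killed by this observation.
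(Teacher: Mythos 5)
Your argument is correct and is exactly the paper's proof, just written out in more detail: the factor $\ell_e^{2m(e)}$ (with $2m(e)\geq 2$) in $W_G$ vanishes identically on the face $\{\ell_e=0\}$, killing the integrand. No difference in approach.
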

\begin{proof}
It follows from the definition of $W_G$ that $W_G|_{\CM_{\partial_e G}(\pp)} = 0$ identically.
\end{proof}

We can now state and prove the integral form of the combinatorial formula. We recall that $d=\frac{3g-3+k+2l}{2}.$
\begin{thm}\label{thm:int_form}
Let $L:[d]\to[l]$ be a $(d,l)-$set, with $a_i=|L_i|,$ for $i\in [l].$ Then
\begin{align}\label{eq:int_form}
\pp^{2L}2^{\frac{g+k-1}{2}}
&\langle  \tau_{a_1}\ldots\tau_{a_l}\sigma^k\rangle \\
=\notag &\sum_{G\in \oSRc^{*}_{g,k,l}}\sum_{D\in \text{Dec}(G,L)}\int_{\CM_G(\pp)}W_G\bar{\omega}_{L\setminus L(D)},
\end{align}
where the collection $\oSRc^m_{g,k,l},~m\geq 0$ is defined in Definition \ref{def:nodal2}.
\end{thm}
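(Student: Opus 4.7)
The plan is to derive the integral form iteratively, starting from Lemma \ref{lem:combinatorial_translation} and successively absorbing boundary contributions into bulk integrals by means of Lemmas \ref{lem:simplifying_phi} and \ref{lem:big_computation}. First I would multiply the identity of Lemma \ref{lem:combinatorial_translation} by $p^{2L}$ to obtain
\[
p^{2L}2^{\frac{g+k-1}{2}}\langle\tau_{a_1}\cdots\tau_{a_l}\sigma^k\rangle=\sum_{G\in\oSR^0_{g,k,l}}\int_{\CM_G(\pp)}\bar\omega_L+\sum_{\substack{G\in\oSR^0_{g,k,l}\\ [e]\in[Br(G)]}}\int_{\partial_e\oCM_G(\pp)}s^*\bar\Phi_L.
\]
For smooth trivalent $G\in\oSR^0$ the only $L$-decoration is the empty one and $W_G\equiv1$, so the first sum already matches the $m=0$ contribution to the right-hand side of Theorem \ref{thm:int_form}. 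The remainder of the argument processes the boundary integrals.

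Next, I would reorganize the boundary sum using the $\CB$-operation. By Observation \ref{obs:base_of_nodal_graphs}, pairs $(G,[e])$ with $G\in\oSR^0$ and $e\in Br(G)$ are in bijection with pairs $(G',v)$ with $G'\in\oSR^1_{g,k,l}$ and $v\in\N(G')$, via $G'=\CB_e G$. Applying Lemma \ref{lem:simplifying_phi} with empty decoration expands each boundary integral as a sum over $j\in L_{i(h)}$ of integrals of $\bar\Phi^j$-forms. These sums assemble precisely into the left-hand side of Lemma \ref{lem:big_computation} (with $L'=\emptyset$ there), ranged over singleton $L$-decorations $D'\in Dec(G',L)$ for $G'\in\oSR^1_{g,k,l}$. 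Running Lemma \ref{lem:big_computation} in the forward direction then rewrites them as
\[
\sum_{G'\in\oSR^1_{g,k,l}}\sum_{D'\in Dec(G',L)}\int_{\CM_{G'}(\pp)}W_{G'}\bar\omega_{L\setminus L(D')}+\sum_{G',D'}\int_{\partial\CM_{G'}(\pp)}W_{G'}s^*\bar\Phi_{L\setminus L(D')},
\]
which produces the $m=1$ terms of the target formula plus a new layer of boundary integrals, now over $\oSR^1$-graphs.

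I would then iterate this procedure. At each stage the new boundary integrals split into three types of faces $\partial_{e'}\CM_{G'}$: (i) boundary edges with $m(e')>0$, which contribute $0$ by Observation \ref{obs:edge_with_nodes} because $W_{G'}|_{\partial_{e'}\CM_{G'}}=0$; (ii) internal non-bridge edges, whose contributions cancel in pairs between neighbouring cells $\CM_{G'}$ and $\CM_{G'_{e'}}$ owing to the opposite induced orientations on shared faces (Corollary \ref{cor:G_e orientation}) together with the boundary-compatibility of special canonical multisections (Proposition \ref{prop:section_on_oSR}); and (iii) effective bridges, which enter the next iteration as input for another $\CB$-reorganization followed by Lemma \ref{lem:big_computation}. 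The process terminates after finitely many steps because each application strictly reduces the number of remaining effective bridges available in the boundary integrand.

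Finally I would restrict the resulting sum to odd graphs $\oSRc^*_{g,k,l}$: contributions from graphs with an even boundary component vanish by a symmetry argument analogous to the even-boundary-loop case discussed in the proof of Theorem \ref{thm:orientability} (a natural $\Z_2$ action on the relevant boundary component reverses orientation while preserving the integrand, forcing the integral to be zero). The main obstacle in the plan is the combinatorial bookkeeping at the inductive step: one must verify that the collection of boundary integrals produced from $\oSR^{m-1}$-level graphs assembles \emph{exactly} into the shape required to apply Lemma \ref{lem:big_computation} over $\oSR^m$, with the decorations, the $W_G$-weights and the Beta-integral identity of Proposition \ref{prop:id} conspiring to generate the correct $1/(m(e)+1)!$ factor for each boundary edge carrying nodes.
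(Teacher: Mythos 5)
Your proposal follows essentially the same route as the paper: start from Lemma \ref{lem:combinatorial_translation}, define bulk and boundary contributions at each nodal depth, split boundary integrals into non-bridges (cancel via Corollary \ref{cor:G_e orientation}), bridges with $m(e)>0$ (vanish via Observation \ref{obs:edge_with_nodes}), and effective bridges (push to the next level via Lemmas \ref{lem:simplifying_phi} and \ref{lem:big_computation}), iterating via the $\CB$-bijection of Observation \ref{obs:base_of_nodal_graphs} until only the bulk terms remain, then dropping even graphs. The one place you are slightly vague is the final restriction to $\oSRc^*_{g,k,l}$: what the paper actually proves (Proposition \ref{prop:parity_bdries} via Lemma \ref{lem:vanishing_for_even_bdries} and Lemma \ref{lem:power_of_2_number_1}) is not a $\Z_2$-symmetry of a single cell but a pairing of distinct cells $\CM_{(G,[\K_1])}$, $\CM_{(G,[\K_2])}$ coming from gradings that differ only by the lifting over an even boundary component; the Kasteleyn-independent integrand is the same while the induced orientations $\mathfrak{o}_{(G,[\K_i])}$ are opposite, so $c_{spin}(G)=0$ and the whole sum over $[\K(G)]$ cancels -- but your argument points at this mechanism correctly, just summarily.
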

\begin{proof}
Define
\begin{align*}
&A_m=\sum_{(G,[\K])\in \oSR^{m}_{g,k,l}}\sum_{D\in \text{Dec}(G,L)}\int_{\CM_{(G,[\K])}(\pp)}W_G\bar{\omega}_{L\setminus L(D)}\\
&S_m =\sum_{(G,[\K])\in\oSR^{m}_{g,k,l}}\sum_{D\in \text{Dec}(G,L)}\int_{\partial\CM_{(G,[\K])}(\pp)}W_G s^*\bar{\Phi}_{L\setminus L(D)},
\end{align*}
where $s$ is a nowhere vanishing special canonical multisection.
We will begin by showing that
\begin{equation}\label{eq:1_for_proof}
S_m= A_{m+1}+S_{m+1},
\end{equation}
and that
\begin{equation}\label{eq:2_for_proof}p^{2L}2^{\frac{g+k-1}{2}}
\langle\tau_{a_1}\ldots\tau_{a_l}\sigma^k\rangle =A_0+S_0.\end{equation}

For the first claim, consider $S_m.$ Recall that for any $G,$
\[
\partial\CM_{(G,[\K])}=\bigcup_{[e]\in [E(G)]}\partial_e\oCM_{(G,[\K])}=\bigcup_{[e]\in [E(G)]}\oCM_{\partial_e (G,[\K])}.
\]
Since for different edges the boundary cells intersect in positive codimension, the integral over the union is just the sum over the edges $e$ of the integrals over $\partial_e\oCM_{(G,[\K])}.$

For an edge $e$ which is not a bridge or a boundary loop, by Corollary \ref{cor:G_e orientation}, we know that
$\partial_e \oCM_{(G,[\K])}(\pp)=-\partial_e \oCM_{(G_e,[\K_e])}(\pp)$ considered as oriented orbifolds, with the orientation induced as a boundary.

Now, $\text{Dec}(G,L),\text{Dec}(G_e,L)$ are the same sets, and it is easy to see that
\[
W_G|_{\partial_e\oCM_{(G,[\K])}}=W_{G_e}|_{\partial_e\oCM_{(G_e,[\K_e])}}.
\]
Thus, given a decoration $D,$ and using the first item of Proposition \ref{prop:section_on_oSR},
\[
\int_{\partial_e\oCM_{(G,[\K])}(\pp)}W_G s^*\bar{\Phi}_{L\setminus L(D)} = -\int_{\partial_e\oCM_{(G_e,[\K])}(\pp)}W_{G_e} s^*\bar{\Phi}_{L\setminus L(D)}.
\]

For an effective loop $e,$ the same argument, only with using Corollary \ref{cor:G_e orientation_loop} instead of Corollary \ref{cor:G_e orientation}, and the third item of Proposition \ref{prop:section_on_oSR} instead of the first item, shows that
given a decoration $D,$
\[
\int_{\partial_e\oCM_{(G,[\K])}(\pp)}W_G s^*\bar{\Phi}_{L\setminus L(D)} = -\int_{\partial_e\oCM_{(G_e,[\K])_e}(\pp)}W_{G_e} s^*\bar{\Phi}_{L\setminus L(D)}.
\]
We should note that this is the second place that we use $s$ being special canonical.

If $e$ is a bridge or a boundary loop which is not effective. From Observation \ref{obs:edge_with_nodes}, for any decoration $D$
\[
\int_{\partial_e\CM_{(G,[\K])}(\pp)}W_G s^*\bar{\Phi}_{L\setminus L(D)}=0.
\]

Thus, we can write,
\[
S_m = \sum_{(G,[\K])\in\oSR^{m}_{g,k,l}}\sum_{D\in \text{Dec}(G,L)}\sum_{[e]\in [\text{Br}^{\text{eff}}(G)]}\int_{\CM_{\partial_e (G,[\K])}(\pp)}W_G s^*\bar{\Phi}_{L\setminus L(D)}.
\]
Applying Lemma \ref{lem:simplifying_phi}, we obtain
\begin{align*}
S_m &= \sum_{(G,[\K])\in\oSR^{m}_{g,k,l}}\sum_{D\in \text{Dec}(G,L)}\sum_{[e]\in [\text{Br}^{\text{eff}}(G)]}\sum_{j\in (L\setminus L(D))_{i(e)}}\int_{\CM_{\partial_e (G,[\K])}(\pp)}W_G s^*\bar{\Phi}^j_{L\setminus L(D)}.
\end{align*}

Note that when $e$ is an effective bridge, then $G' =\CB_e (G,[\K])\in \oSR^{m+1}_{g,k,l}$. We should note that this operation is also responsible for the appearance of ghosts components, which result from contracting a boundary edge between two legal boundary tails. In addition, $j\in (L\setminus L(D))_{i(e)}$ induces a single decoration $D'$ of $G',$ which is defined by $(G,D)\in\CB^{-1}(G',D')$ and $j\in L(D').$
Moreover, any $(G',[\K'])\in\oSR^{m+1}_{g,k,l},~D'\in \text{Dec}(G',L)$ is obtained in this way, see Observation \ref{obs:base_of_nodal_graphs}.
Hence, we can apply Lemma \ref{lem:big_computation} and get
\begin{align*}
S_m &= \sum_{(G,[\K])\in\oSR_{g,k,l}^{m+1}}\sum_{D\in \text{Dec}(G,L)}\int_{\CM_{(G,[\K])}(\pp)}W_G\bar{\omega}_{L\setminus L(D)}\\
&\quad+\sum_{(G,[\K])\in\oSR_{g,k,l}^{m+1}}\sum_{D\in \text{Dec}(G,L)}\int_{\partial\oCM_{(G,[\K])}(\pp)}W_G s^*\bar{\Phi}_{L\setminus L(D)} \\
&=A_{m+1}+S_{m+1},
\end{align*}
as claimed.

For the second claim,
using Lemma \ref{lem:combinatorial_translation}, we can write
\begin{align*}
p^{2L}2^{\frac{g+k-1}{2}}
&\langle\tau_{a_1}\ldots\tau_{a_l}\sigma^k\rangle \\
&=\sum_{(G,[\K])\in \oSR^{0}_{g,k,l}}\int_{\CM_{(G,[\K])}(\pp)}\bar{\omega}_L \\
&\quad+\sum_{(G,[\K])\in\oSR^{0}_{g,k,l}}\sum_{[e]\in [\text{Br}(G)\cup\text{Loop}(G)]}\int_{\CM_{\partial_e (G,[\K])}(\pp)}s^*\bar{\Phi}_L.
\end{align*}
Note that this is the non nodal case, so all bridges and boundary loops are effective and the decorations are empty. The cancellation-in-pairs argument used above for the contribution of the integrals over edges which are neither boundary loops nor bridges shows, in particular, that
\[\sum_{(G,[\K])\in\oSR^{0}_{g,k,l}}\sum_{[e]\in [\text{Br}(G)\cup\text{Loop}(G)]}\int_{\CM_{\partial_e (G,[\K])}(\pp)}s^*\bar{\Phi}_L=\sum_{(G,[\K])\in\oSR^{0}_{g,k,l}}\sum_{[e]\in [E(G)]}\int_{\CM_{\partial_e (G,[\K])}(\pp)}s^*\bar{\Phi}_L=S_0,
\]
which, combined with the previous equation, gives \eqref{eq:2_for_proof}.
%

Iterating \eqref{eq:1_for_proof} for $m\geq 0$ and using \eqref{eq:2_for_proof}, we see that the left hand side of Equation \eqref{eq:int_form} is $\sum_{m\geq 0}A_m.$

We now claim
\begin{prop}\label{prop:parity_bdries}
If $G$ is a nodal graph such that on at least one boundary component there is an even total number of boundary marked points and legal nodes on,
then
\[
\int_{\CM_{(G,[\K])}(\pp)}W_G\bar{\omega}_{L\setminus L(D)}=0.
\]
\end{prop}
The proof is given in Subsection \ref{sec:power_of_2}, see Lemma \ref{lem:vanishing_for_even_bdries}.
Thus,
\[
\sum_{m\geq 0}A_m = \sum_{m\geq 0}\sum_{(G,[\K])\in\oSRc^m_{g,k,l}}\sum_{D\in \text{Dec}(G,L)}\int_{\CM_{(G,[\K])}(\pp)}W_G\bar{\omega}_{L\setminus L(D)},
\]
as claimed.
\end{proof}
\begin{obs}
\begin{align*}
|\text{Dec}(G,L)| &= \binom{L_i}{{\{m(e)|e\in E,~i(e)=i\}}}\\
&=\prod_{i\in [l]}\frac{L_i!}{\left(\prod_{\{e\in E|i(e)=i\}}m(e)!\right)(L_i-\sum_{\{e\in E|i(e)=i\}}m(e))!}.
\end{align*}
\end{obs}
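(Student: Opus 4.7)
The plan is to prove this by a direct enumeration, broken up face by face. First I would unwind the definition: an $L$-decoration assigns to each $h\in s_1H^B$ a subset $D_h\subseteq L_{i(h)}$ of size $m(h)$, with the $D_h$'s pairwise disjoint. Since the collection $\{L_i\}_{i\in[l]}$ is a partition of $L$ (the $L_i$ being pairwise disjoint by definition of an $l$-set), the constraint $D_h\subseteq L_{i(h)}$ forces any two $D_{h_1},D_{h_2}$ with $i(h_1)\neq i(h_2)$ to be automatically disjoint, and hence the disjointness condition only bites among half edges belonging to the same face. Consequently, the specification of an $L$-decoration decomposes as an independent choice, face by face.

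Next I would fix a face $i\in[l]$ and count the partial data $\{D_h\}_{h\in s_1H^B,\ i(h)=i}$. Using the bijection $h\mapsto h/s_1$ between $\{h\in s_1H^B : i(h)=i\}$ and $\{e\in E : i(e)=i\}$ (each boundary edge has a unique internal half edge in face $i$), and recalling $m(e)=m(h)$ under this identification, the task reduces to selecting disjoint subsets of $L_i$ indexed by $\{e\in E : i(e)=i\}$, of prescribed sizes $m(e)$. The number of such selections is exactly the multinomial coefficient
\[
\frac{L_i!}{\Bigl(\prod_{\{e\in E\,|\,i(e)=i\}}m(e)!\Bigr)\bigl(L_i-\sum_{\{e\in E\,|\,i(e)=i\}}m(e)\bigr)!},
\]
where the last factor in the denominator accounts for the unused elements of $L_i$. (This is well-defined because for any decoration to exist we must have $\sum_{i(e)=i}m(e)\le L_i$; otherwise both sides of the formula vanish by the standard convention.)

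Finally, I would multiply across faces. By the independence established in the first step, $|Dec(G,L)|$ is the product of the face-by-face counts, which is precisely the claimed formula. The compact multinomial notation $\binom{L_i}{\{m(e):i(e)=i\}}$ in the observation is just shorthand for this product of multinomial coefficients.

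There is no real obstacle; the only thing to double-check is that I have correctly identified the index set $\{h\in s_1H^B:i(h)=i\}$ with the set of boundary edges incident to face $i$ (so that no $m(e)$ is counted twice and none is omitted), and that the disjointness across faces is automatic. Both are immediate from the definitions of $s_1H^B$, $i(h)=h/s_2$, and of an $l$-set, so the proof is essentially a one-paragraph counting argument.
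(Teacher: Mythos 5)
Your counting argument is correct and is precisely the justification the paper leaves implicit (the statement is given as an Observation with no proof): the $L_i$ are disjoint by definition of an $l$-set, so the decoration factors face by face, and each face contributes the stated multinomial coefficient. The identification of $\{h\in s_1H^B: i(h)=i\}$ with the boundary edges of face $i$ via $h\mapsto h/s_1$ is exactly right.
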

Thus, with the above notations we have,
\begin{align*}
&2^{\frac{g+k-1}{2}}\prod_{i\in [l]}p_i^{2a_i}\langle  \tau_{a_1}\ldots\tau_{a_l}\sigma^k\rangle \\
&=\sum_{m\geq 0}\sum_{(G,[\K])\in \oSRc^{m}_{g,k,l}}\left(\prod_{i\in [l]}\binom{a_i}{{\{m(e)|e\in E,~i(e)=i\}}}\right)\int_{\CM_{(G,[\K])}(\pp)}W_G\bar{\omega}_{L\setminus L(D)}\\
&=\sum_{m\geq 0}\sum_{(G,[\K])\in \oSRc^{m}_{g,k,l}}\left(\prod_{i\in [l]}\frac{a_i!}{(a_i-\sum_{\{e\in E|i(e)=i\}}m(e))!}\right)\int_{\CM_{(G,[\K])}(\pp)}\widetilde W_G\bar{\omega}_{L\setminus L(D)},
\end{align*}
where $\widetilde W_G$ is defined in Definition \ref{def:W_and_tilde_W}, and $D\in D(G,L)$ are arbitrary decorations.
Summing over all possible $L,$ and dividing by $d!,$ we get
\begin{align}\label{eq:almost_last}
2^{\frac{g+k-1}{2}}\sum_{\sum a_i = d}&\prod_{i\in [l]}\frac{p_i^{2a_i}}{a_i !}\langle  \tau_{a_1}\ldots\tau_{a_l}\sigma^k\rangle \\
\notag =&\sum_{m\geq 0}\sum_{(G,[\K])\in \oSRc^{m}_{g,k,l}}\int_{\CM_{(G,[\K])}(\pp)}\widetilde W_G\frac{\bar{\omega}^{d-m}}{(d-m)!}
\end{align}
Dimensional reasons give,
\begin{obs}
Let $L'$ be a $l-$set, $(G,[\K])\in\oSRc^*_{g,k,l}.$ Suppose that for some component $C\in C(G,[\K]),$
\[
\dim(C)<\sum_{i\in I(C)}L'_i.
\]
then $\int_{\CM_G}f\omega_{L'}=0,$ for any function $f.$
\end{obs}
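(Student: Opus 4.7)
The strategy is a clean decomposition of $\omega_{L'}$ along the connected components of the nodal moduli, followed by a dimension count. First I would observe that $\CM_G$ is, up to the orientation-preserving finite-group action of $Aut(G,[\K])$, the product $\prod_{C \in C(G,[\K])} \CM_C$: a metric on a nodal ribbon graph is just a collection of edge-length assignments on its components, the nodes imposing no continuous constraints between different components.

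Second, from the definition $\omega_{L'} = \bigwedge_i \omega_{L'(i)}$ I would group the wedge factors by the component to which the face $L'(i)$ belongs. Since $\omega_{L'(i)}$ is built out of the edge-length and perimeter variables of the single face $L'(i)$ and involves no variables from any other face, and since each face lies entirely in one component $C$, each factor $\omega_{L'(i)}$ is pulled back from $\CM_C$. Consequently $\omega_{L'}$ factors as $\bigwedge_C \omega^C_{L'}$, where $\omega^C_{L'}$ is a form on $\CM_C$ of degree $2\sum_{i\in I(C)} |L'_i|$.

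The main (and essentially only) step is then a dimension count on the offending component $C$: the hypothesis $\dim(\CM_C) < \sum_{i\in I(C)} |L'_i|$ certainly implies $\dim(\CM_C) < 2\sum_{i\in I(C)} |L'_i|$, which in turn forces $\omega^C_{L'}$ to vanish identically on $\CM_C$ for purely degree-theoretic reasons. Wedging with the pieces coming from the other components and multiplying by the degree-zero function $f$ cannot revive this vanishing, so $f\omega_{L'}\equiv 0$ on $\CM_G$ and the integral is therefore zero.

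The only technical point worth spelling out — and I do not expect it to cause any real difficulty — is that each $\omega_j$ with $j\in I(C)$ really does pull back through the projection to the $\CM_C$ factor, which is immediate from its local edge-length definition in~\eqref{eq:alpha_i_omega_i}, and that the factorization is compatible with the $Aut(G,[\K])$-action used to form $\CM_G$ as a quotient. With these observations the claim is reduced to the elementary fact that a form of degree exceeding the dimension of the ambient manifold is zero.
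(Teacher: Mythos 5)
Your argument is correct and supplies exactly the dimension count the paper alludes to with the remark ``Dimensional reasons give''---the paper itself records no separate proof of this observation, and your factorization of $\omega_{L'}$ over $C(G,[\K])$ followed by the degree comparison on the offending component is precisely the intended justification. One caveat on conventions: for the degree threshold to be sharp, $\dim(C)$ must here denote $\tfrac{1}{2}\dim_{\R}\CM_C(\pp)$ (as is forced by Corollary~\ref{cor:int_form}, where $(\bar{\omega}^C)^{\dim(C)}$ is a top-degree form on $\CM_C(\pp)$), so the hypothesis reads exactly $\deg\omega^C_{L'}=2\sum_{i\in I(C)}|L'_i|>2\dim(C)=\dim_{\R}\CM_C(\pp)$; your reading---taking $\dim(C)$ to be the full dimension of $\CM_C$ and treating the leftover factor of two as slack---imposes a strictly stronger hypothesis but of course still forces the vanishing.
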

Now, $\bar\omega = \sum_{C\in C(G)}\bar{\omega}^C,$ where $\bar{\omega}^C=\sum_{i\in I(C)}\bar{\omega}_i.$ Thus, together with the observation we get,
\begin{cor}\label{cor:int_form}
$\widetilde W_G\frac{\bar{\omega}^{d-m}}{(d-m)!}=\prod_{C\in C(G)}\widetilde W_C\frac{({\bar{\omega}^C)}^{\dim(C)}}{\dim(C)!}.$
Thus,
\begin{align}
\sum_{\sum a_i = d}&\prod_{i\in [l]}\frac{p_i^{2a_i}}{a_i !}2^{\frac{k-1}{2}}\langle  \tau_{a_1}\ldots\tau_{a_l}\sigma^k\rangle \\
\notag &=\sum_{m\geq 0}\sum_{(G,[\K])\in \oSRc^{m}_{g,k,l}}\int_{\CM_{(G,[\K])}(\pp)}\widetilde W_G\prod_{C\in C(G,[\K])}\frac{(\bar{\omega}^C)^{\dim(C)}}{\dim(C)!}\\
\notag &=\sum_{m\geq 0}\sum_{(G,[\K])\in \oSRc^{m}_{g,k,l}}\prod_{C\in C(G,[\K])}\int_{\CM_C}\widetilde W_C\frac{(\bar{\omega}^C)^{\dim(C)}}{\dim(C)!}
\end{align}
\end{cor}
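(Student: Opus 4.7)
The proof breaks naturally into two stages, matching the pointwise identity and the two integral equalities in the statement.

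For the pointwise identity, the weight $\widetilde W_G$ is defined as a product over the boundary edges of $G$, and these are partitioned among the components by Definition~\ref{def:nodal2}, so $\widetilde W_G = \prod_{C\in C(G,[\K])} \widetilde W_C$ on the nose. For the form part, I will use the decomposition $\bar\omega = \sum_{C} \bar\omega^C$ noted just before the corollary, and apply the multinomial theorem:
\[
\frac{\bar\omega^{d-m}}{(d-m)!} \;=\; \sum_{\{n_C\}:\,\sum_C n_C = d-m} \;\prod_{C} \frac{(\bar\omega^C)^{n_C}}{n_C!}.
\]
Each $\bar\omega^C$ is pulled back from the $C$-factor of $\CM_{(G,[\K])}(\pp)$, so $(\bar\omega^C)^{n_C}$ involves only variables living on $\CM_C(\pp_C)$. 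The observation immediately preceding the corollary then implies $(\bar\omega^C)^{n_C}$ is killed as a form on $\CM_G(\pp)$ as soon as $n_C$ exceeds the degree count $\dim(C)$ (the number of $\bar\omega^C$ factors required to land in top degree on $\CM_C(\pp_C)$). Since $\sum_C \dim(C) = d-m$ by the real-dimension count for nodal graphs in $\oSRc^m_{g,k,l}$, the constraint $\sum_C n_C = d-m$ together with the inequalities $n_C \leq \dim(C)$ forces $n_C = \dim(C)$ for every component $C$. Only that single term survives, producing the claimed factorization.

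For the two integral equalities, the middle equation is immediate: substitute the pointwise identity into the already-established Equation~\eqref{eq:almost_last}. For the last equation, I observe that $\CM_{(G,[\K])}(\pp)$ factors (modulo automorphism bookkeeping) as $\prod_{C} \CM_C(\pp_C)$, because $E(G) = \sqcup_C E(C)$ by the construction of a nodal graph, and the perimeter constraint on the $i$-th face is local to the component containing that face. Since the integrand $\prod_C \widetilde W_C\,(\bar\omega^C)^{\dim(C)}/\dim(C)!$ is a product of forms each pulled back from a single factor, Fubini turns the integral over the product into the product of per-component integrals, yielding the last equation.

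The main potential obstacle is the dimension bookkeeping: I need to verify cleanly both that $\sum_C \dim(C) = d-m$ and that the dimensional vanishing observation applies component-wise to eliminate every multinomial term except the balanced one. The rest is a multinomial expansion, a factorization of the cell, and an application of Fubini. One notational subtlety worth flagging is that the symbol $\dim(C)$ in the exponent $(\bar\omega^C)^{\dim(C)}$ is consistent with the exponent $d-m$ on the left-hand side, i.e.\ half the real dimension of $\CM_C(\pp_C)$ (the number of $2$-forms needed to reach top degree on the $C$-factor); once this convention is fixed, the entire argument is essentially a one-line multinomial collapse followed by Fubini.
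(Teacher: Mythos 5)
Your proof is correct and follows essentially the same route as the paper: the paper's entire argument is the dimensional-vanishing observation stated immediately before the corollary, combined with the (left implicit) multinomial expansion of $\bar{\omega}^{d-m}$ over the components, the forced balancing $n_C=\dim(C)$ from $\sum_C\dim(C)=d-m$, and the factorization of $\CM_{(G,[\K])}(\pp)$ into its components — exactly the steps you spell out, including the convention that $\dim(C)$ here means half the real dimension of $\CM_C(\pp_C)$. The only discrepancy is that substituting into Equation~\ref{eq:almost_last} produces the prefactor $2^{\frac{g+k-1}{2}}$ rather than the $2^{\frac{k-1}{2}}$ printed in the corollary's left-hand side; that is a typo in the statement (the subsequent Laplace-transform computation again uses $2^{\frac{g+k-1}{2}}$), so your derivation gives the intended form.
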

Note that in the above forumla there may appear component $C$ with $\dim(C)=0.$ These are precisely the ghost components and the genus $0$ components with one internal tail and one legal boundary tail.

\subsection{Power of $2$}\label{sec:power_of_2}
The aim of this subsection is to gain a better understanding of the forms $\bigwedge dp_i\wedge\frac{\bar{\omega}^d}{d!},\mathfrak{o}_{(G,[\K])}$ and their ratio.
\begin{definition}
For $(G,[\K])\in\oSR^{*}_{g,k,l}$ define $s(G,[\K]),$ to be the sign of
\[
\bigwedge dp_i\wedge\frac{\bar{\omega}^d}{d!}: \mathfrak{o}_{(G,[\K])}.
\]
For $G\in\oR^*_{g,k,l}$ define
\[
c_{\text{spin}}(G) = \sum_{[\K]\in[\K(G)]}s(G,[\K]).
\]
\end{definition}
\begin{lemma}\label{lem:power_of_2_number_1}
For $G\in\oSR^*_{g,k,l},$
\[\bigwedge dp_i\wedge\frac{\bar{\omega}^d}{d!}: \mathfrak{o}_{(G,[\K])} = s(G,[\K])c_{\text{spin}}(G)2^{|V^I(G)|}.\]
In particular, $c_{\text{spin}}(G)\geq 0.$
\end{lemma}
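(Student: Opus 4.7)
Plan: First, the statement factorizes over the graded components $G_j$ of the nodal graph $G$. Since each $\bar\omega_i$ involves only data of the component containing face $i$, dimensional reasons force $\bar\omega^d/d! = \prod_j \bar\omega_{G_j}^{d_j}/d_j!$ with $d_j=(3g_j-3+k_j+2l_j)/2$ the ``$d$'' of $G_j$ (note $d=\sum d_j$ since $|E(G)|=\sum|E(G_j)|$ and $l=\sum l_j$). Together with $\mathfrak{o}_{(G,[\K])}=\prod_j\mathfrak{o}_{(G_j,[\K_j])}$, $|V^I(G)|=\sum|V^I(G_j)|$, and the multiplicativity $s(G,[\K])=\prod s(G_j,[\K_j])$ and $c_{spin}(G)=\prod c_{spin}(G_j)$, this reduces the lemma to the case $G\in\oSR^0_{g,k,l}$ of a smooth trivalent single-component graph, which I address below.

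For the smooth case, the strategy is to establish the sharper identity
\begin{equation}\label{eq:key_identity_proof}
\bigwedge_{i=1}^{l} dp_i \wedge \frac{\bar\omega^d}{d!} = \sum_{\K\in\K(G)}\mathfrak{o}_{(G,\K)}.
\end{equation}
By Lemma \ref{lem:vertex_flip} each equivalence class $[\K]$ contains $2^{|V^I|}$ representatives, and by Lemma \ref{lem:or_indep} the form $\mathfrak{o}_{(G,\K)}$ depends only on $[\K]$, so \eqref{eq:key_identity_proof} rewrites as $\bigwedge dp_i\wedge\bar\omega^d/d! = 2^{|V^I|}\sum_{[\K]}\mathfrak{o}_{(G,[\K])}$. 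Letting $\sigma([\K])\in\{\pm1\}$ denote the orientation sign relative to a fixed volume form on $\R^{E(G)}$, one checks directly that $s(G,[\K])=\sigma([\K])\,\mathrm{sign}\bigl(\sum_{[\K']}\sigma([\K'])\bigr)$ and $c_{spin}(G)=\bigl|\sum\sigma\bigr|$, so $s(G,[\K])\,c_{spin}(G)\,2^{|V^I|}=\sigma([\K])\bigl(\sum\sigma\bigr)\,2^{|V^I|}$ matches the ratio produced by the identity.

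For \eqref{eq:key_identity_proof} I would argue face by face. On face $i$ set $\Omega_i:=\sum_{a<b,\,a,b\in H_i}d\ell_a\wedge d\ell_b$, the pure $d\ell$-part of $\bar\omega_i$; since $\bar\omega_i-\Omega_i$ is a multiple of $dp_i$ and $dp_i^2=0$, we get $dp_i\wedge\bar\omega_i^{m_i}=dp_i\wedge\Omega_i^{m_i}$. The anticommutative exponential identity $\exp\Omega_i=\prod_{a<b}(1+d\ell_a\wedge d\ell_b)$ is valid because the $2$-forms pairwise commute and square to zero; expanding and grouping by matchings on $H_i$, together with the fact that the Pfaffian of the upper-triangular all-ones alternating matrix equals $+1$ on every even-sized subset, yields $\Omega_i^{m_i}/m_i!=\sum_{|S|=2m_i}d\ell_S^{cc}$ (counterclockwise $s_2$-order). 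The telescoping cancellation $\sum_{j=1}^{2m_i+1}(-1)^{j-1}=1$ then gives $dp_i\wedge\Omega_i^{m_i}/m_i!=\sum_{|T|=2m_i+1}d\ell_T^{cc}$. Because each $\bar\omega_i^{m_i}$ has even degree and involves only face-$i$ data, the multi-face wedge factorizes: $\bigwedge_j dp_j\wedge\prod_i\bar\omega_i^{m_i}/m_i! = \prod_i(dp_i\wedge\bar\omega_i^{m_i}/m_i!)$. Summing over $(m_i)$ with $\sum m_i=d$ produces $\bigwedge dp_j\wedge\bar\omega^d/d! = \sum_{(T_i)}\prod_i d\ell_{T_i}^{cc}$, with $T_i\subseteq H_i$ odd-sized and $\sum|T_i|=|E(G)|$. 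A term is nonzero precisely when every edge of $G$ appears as $d\ell_e$ in exactly one $T_i$; setting $\K(h)=1$ iff $h\in\cup_i T_i$ for $h\in H^I$ (and $\K\equiv 0$ on $H^B$), the constraints on $(T_i)$ correspond via Kasteleyn axioms \ref{it:K1}--\ref{it:K3} bijectively with $\K\in\K(G)$, under which $\prod d\ell_{T_i}^{cc}=\mathfrak{o}_{(G,\K)}$, proving \eqref{eq:key_identity_proof}. The main obstacle will be the sign bookkeeping throughout: matching the counterclockwise convention of $\mathfrak{o}_i$ with the wedge order from the face computation, and verifying the bijection cleanly for loop edges and boundary edges (where the Kasteleyn axiom forces $\K=1$ on the unique internal half edge, consistent with our nonvanishing constraint on $(T_i)$).
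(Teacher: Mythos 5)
Your proposal is correct and follows essentially the same route as the paper: both reduce to the smooth single-component case by multiplicativity, both pivot on the identity $\bigwedge dp_i\wedge\bar\omega^d/d! = \sum_{\K\in\K(G)}\mathfrak{o}_{(G,\K)}$, and both establish it by expanding in the $d\ell_h$ variables (paper works modulo the ideal $(d\ell_h - d\ell_{s_1h})$), matching nonvanishing monomials with Kasteleyn orientations via the odd-parity constraint per face, and regrouping via the identity $(\sum_1^{2m+1}x_i)\wedge\frac{(\sum_{i<j}x_i\wedge x_j)^m}{m!}=x_1\wedge\cdots\wedge x_{2m+1}$ — which is exactly your Pfaffian-plus-telescoping computation $dp_i\wedge\Omega_i^{m_i}/m_i!=\sum_{|T|=2m_i+1}d\ell_T^{cc}$ phrased differently.
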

\begin{proof}
Both the left hand side and the right hand side are multiplicative with respect to taking non-nodal components, by the first statement in \ref{cor:int_form} and the construction of $\mathfrak{o}_{(G,[\K])},$
thus, it is enough to prove the lemma for graphs in $\oSR^{0}_{g,k,l}.$

Recall that any class $[K]$ of Kasteleyn orientations is of size $2^{|V^I(G)|},$ by Lemma \ref{lem:vertex_flip}. In addition,
by Lemma \ref{lem:or_indep}, $\mathfrak{o}_{(G,\K)}$ for different $\K\in[\K]$ are equal. Thus, the lemma is equivalent to the following equation
\begin{equation}\label{eq:simplifying1}
\bigwedge dp_i\wedge\frac{\bar{\omega}^d}{d!}= \sum_{\K\in \K(G)}\mathfrak{o}_{(G,[\K])}.
\end{equation}
Recall $\bar{\omega} = \sum_{i=1}^l\bar{\omega}_i.$ Fix a good ordering $n.$
In order to prove Equation \eqref{eq:simplifying1}, it will be more comfortable to work with new variables $\ell_h,h\in H^I,$ instead of $\ell_e,e\in E.$ Set
\begin{align*}
H_{\K,i}&=\{h\in H_\K| h/s_2=i\},\\
d_{\K,i} &= \frac{|H_{\K,i}|-1}{2},\\
p_{\K,i} &= \sum_{h\in H_{\K,i}} \ell_{h},\\
\bar{\omega}_{\K,i} &= \sum_{h_1,h_2\in H_{\K,i},~n(h_1)<n(h_2)}d\ell_{h_1}\wedge d_{\ell_{h_2}}.
\end{align*}
\begin{rmk}
Note that only $\bar{\omega}_{\K,i}$ depends on the ordering $n.$ For different orders the change in $\bar{\omega}_{\K,i}$ is of the form $dp_{\K,i}\wedge dx,$ where $x$ is a linear combination of $\{d\ell_h\}_h\in{H_{\K,i}}.$ Thus, for any $a$ the form $dp_{\K,i}\wedge\bar{\omega}_{\K,i}^a$ is independent of $n.$
\end{rmk}

Express each $dp_i$ by $\sum_{h\in H_i} d\ell_h,$ and express also each $\bar{\omega}_i$ in the $\{d\ell_h\}_{h\in H^I}$ basis as above.
Our next aim is to show that
\begin{equation}\label{eq:simplifying2}
\bigwedge dp_i\wedge\frac{\bar{\omega}^d}{d!} = \sum_{\K\in\K(G)}\bigwedge_{i\in[l]}dp_{\K,i}\wedge\frac{\bar{\omega}_{\K,i}^{d_{\K,i}}}{d_{\K,i}!} ~(\mod ~I),
\end{equation}
where $I$ is the ideal $(d\ell_h-d\ell_{s_1h})_{h\in H^I}.$
In order to show Equation \eqref{eq:simplifying2} expand
$\bigwedge dp_i\wedge\frac{\bar{\omega}^d}{d!}$
multilinearly, in terms of $\{d\ell_h\}_{h\in H^I},$ without cancellations.
Any monomial which appears in this expression, and contains exactly one of $d\ell_h,d\ell_{s_1h}$ for any $h\in H^I,$ defines a unique Kasteleyn orientation $\K,$ defined by $\K(h)=1$ if and only if $d\ell_h$ appears in the monomial. This is indeed a Kasteleyn orientation since any $h\in s_1H^B$ has $\K(h)=1,$ and for any $i\in[l],$ an odd number of variables of half edges appear, one comes from $dp_i,$ and the others come in pairs via powers of $\bar{\omega}_i.$

It is transparent that any Kasteleyn orientation $\K\in\K(G),$ is generated this way.
Moreover, regrouping all terms which correspond to the same Kasteleyn orientation, and using the identity
\[
\left(\sum_{i=1}^{2m+1}x_i\right)\wedge\frac{(\sum_{i<j}x_i\wedge x_j)^m}{m!}=x_1\wedge x_2\wedge\ldots\wedge x_{2m+1},
\]
we get Equation \eqref{eq:simplifying2}.

The 'In particular' follows from the fact that $\bigwedge dp_i\wedge\frac{\bar{\omega}^d}{d!},~ s(G,[\K])\mathfrak{o}_{(G,[\K])}$ have the same sign.
\end{proof}
\begin{prop}\label{prop:neighboring_c_spin}
For $G\in\oSR^0_{g,k,l},e\notin \text{Br}(G)\cup\text{Loop}(G),$
\[c_{\text{spin}}(G)=c_{\text{spin}}(G_e)\]
\end{prop}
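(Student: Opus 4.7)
The plan is to exploit the characterization
\[
c_{spin}(G)=\Bigl|\sum_{[\K]\in[\K(G)]}\epsilon^{G}_{[\K]}\Bigr|,
\]
which falls out of (the proof of) Lemma~\ref{lem:power_of_2_number_1}: after fixing an ordering of $E(G)$, one writes $\mathfrak{o}_{(G,[\K])}=\epsilon^{G}_{[\K]}\bigwedge_{e\in E(G)}d\ell_e$ for some sign $\epsilon^{G}_{[\K]}\in\{\pm1\}$, and combining equation~\eqref{eq:simplifying1} with the definition of $s(G,[\K])$ forces the absolute-value formula above. Hence it suffices to produce a bijection $\psi_e\colon[\K(G)]\to[\K(G_e)]$ under which $\epsilon^{G}_{[\K]}=-\epsilon^{G_e}_{\psi_e([\K])}$; summing and taking absolute values then yields $c_{spin}(G)=c_{spin}(G_e)$.

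For the bijection I would take the construction already spelled out in Lemma~\ref{lem:adjacent_spin_cells}: pick a representative $\K$ with $\K(\bar d)=1$, define $\K'$ by flipping the $\K$-values at $d$ and $\bar d$, setting $\K'(h')=1,\,\K'(\bar h')=0$ on the pair of half-edges newly produced by the Feynman move, and leaving every other value unchanged. The verification that $\K'$ is again a Kasteleyn orientation is entirely local near $e$ and carries over verbatim, modulo one point that is special to bridges: both half-edges of $e$ belong to a common face, and one has to check that the flips at $d,\bar d$ compensate for this in the face-sum condition~\ref{it:K3}. Independence of the choice of representative and invertibility of the construction (applying the same recipe to $(G_e,e')$ recovers $(G,e)$) give the required bijection on equivalence classes.

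For the sign identity I would redo the local computation performed in the proof of Theorem~\ref{thm:orientability}. Choose a good ordering $n$ of $H^I(G)$ adapted to the four half-edges $a,b,\bar c,\bar d$ around $e$, and an ordering $n'$ on $H^I(G_e)$ agreeing with $n$ outside the affected block; write $\mathfrak{o}_{(G,[\K])}$ and $\mathfrak{o}_{(G_e,\psi_e([\K]))}$ explicitly and compare them term by term, exactly as in equations~\eqref{eq:or_1}--\eqref{eq:or_2}. After cancelling the common outward-normal factor $d\ell_e\leftrightarrow d\ell_{e'}$, the accumulated sign coming from the flip at $d$ (which moves $d\ell_d$ past one edge) together with the insertion of $d\ell_{h'}$ produces the total sign $-1$, as already computed there.

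The main obstacle is that for a bridge $e$ the two cells $\CM_{(G,[\K])}$ and $\CM_{(G_e,[\K_e])}$ are not adjacent in the moduli space, so one cannot simply quote Corollary~\ref{cor:G_e orientation} or invoke the orientation-gluing mechanism of Theorem~\ref{thm:orientability} off the shelf. Both the Kasteleyn-compatibility check and the sign identity have to be reverified locally, and particular care is needed with the face-sum condition because both sides of a bridge share a common face; this combinatorial bookkeeping is the technical heart of the argument.
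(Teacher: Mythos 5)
Your approach (bijection $[\K]\leftrightarrow[\K_e]$ plus a sign-flip) is in substance the same as the paper's, so the first three paragraphs are on the right track. The key thing you missed is that the statement as printed has a typo: it should read $e\in E(G)\setminus Br(G)$, not $e\in Br(G)$. This is forced by the definitions and by how the proposition is actually used. Feynman moves and the graph $G_e$ are only defined for $e\in E\setminus Br(G)$ (see the Construction preceding Proposition~\ref{prop:G_and_G_e}); for a bridge, contracting $e$ produces a boundary/nodal cell, not a codimension-$1$ interior face between two top cells, so there is no ``unique other reopening.'' Consistent with this, both Lemma~\ref{lem:vanishing_for_even_bdries} and Lemma~\ref{lem:c_spin} invoke this proposition with the hypothesis $e\in E(G)\setminus Br(G)$.

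Once that is recognised, your final paragraph is chasing a ghost. The worry ``the two cells are not adjacent for a bridge, so the orientability argument doesn't apply'' does not arise: $G_e$ is simply not defined for bridges, and the proposition is not claiming anything about them. There is no extra combinatorial bookkeeping to do. Conversely, in the non-bridge case the two cells \emph{are} adjacent (Proposition~\ref{prop:G_and_G_e}), and the paper's proof then takes a cleaner route than yours: rather than redoing the local sign computation of Theorem~\ref{thm:orientability} by hand, it directly invokes the conclusion of that theorem --- compatibility of the chosen cell orientations with a single global orientation --- to deduce $\mathfrak{o}_{(G,[\K])}:\mathfrak{o}_{(G,[\K'])}=\mathfrak{o}_{(G_e,[\K_e])}:\mathfrak{o}_{(G_e,[\K'_e])}$ for any two classes $[\K],[\K']$, and then combines this with Lemma~\ref{lem:power_of_2_number_1} and $c_{spin}\ge0$. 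Your hands-on reduction to equations~\eqref{eq:or_1}--\eqref{eq:or_2} also works in the non-bridge case, but it duplicates work the orientability theorem already encodes. In short: fix the hypothesis on $e$, drop the last paragraph, and (optionally) replace the explicit sign-chase by a citation of Theorem~\ref{thm:orientability}; then you recover the paper's proof.
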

\begin{proof}
It follows from Lemma \ref{lem:power_of_2_number_1} that $$c_{\text{spin}}(G) = \pm\sum_{[\K']\in[\K(G)]}\mathfrak{o}_{(G,[\K'])}:\mathfrak{o}_{(G,[\K])},$$ for any fixed $[\K]\in[\K(G)].$
If $\K,\K'\in\K(G),$ then by the orientability of the moduli, Theorem \ref{thm:orientability}, we see that
\[\mathfrak{o}_{(G,[\K])}:\mathfrak{o}_{(G,[\K'])}=\mathfrak{o}_{(G_e,[\K_e])}:\mathfrak{o}_{(G_e,[\K'_e])},\]
as $(G,[\K]),(G_e,[\K_e])$ and $(G,[\K']),(G_e,[\K'_e])$ parameterize adjacent cells. Thus, $c_{\text{spin}}(G)=\pm c_{\text{spin}}(G_e).$
But $c_{\text{spin}}\geq 0,$ hence the equality.
\end{proof}
\begin{lemma}\label{lem:vanishing_for_even_bdries}
If $G\in\oR^m_{g,k,l}\setminus\oRc^m_{g,k,l},~c_{\text{spin}}=0.$
\end{lemma}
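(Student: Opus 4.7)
The plan is to prove $c_{spin}(G) = 0$ via a sign-reversing involution on $[\K(G)]$. First I would reduce to a single connected smooth trivalent component. For a nodal $G = \coprod_i G_i$ the set $\K(G)$ factors as $\prod_i \K(G_i)$, and both the top form $\bigwedge dp_i \wedge \bar\omega^d/d!$ and the orientation $\mathfrak{o}_{(G,[\K])}$ split as products over components (the former by the identity of Corollary \ref{cor:int_form}, the latter by the definition of $\mathfrak{o}$). So $c_{spin}$ is multiplicative. The hypothesis $G \notin \oRc^m$ forces some component $G_i \in \oR^0$ to have a boundary whose total count of marked points together with legal sides of nodes is even; since from $G_i$'s internal perspective legal nodes are just boundary marked points of $G_i$, this reduces to showing $c_{spin}(G) = 0$ for a connected $G \in \oR^0_{g',k',l'} \setminus \oRc^0_{g',k',l'}$ with a boundary $\partial_b$ of even count $k_b$.

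For such $G$, fix $\partial_b$ and define the involution $\tau_b : [\K(G)] \to [\K(G)]$ which, under the bijection of Theorem \ref{thm:Kesteleyn_is_spin} between equivalence classes of Kasteleyn orientations and graded spin structures, corresponds to flipping the lifting of the spin bundle along $\partial_b$. At the combinatorial level this amounts to shifting $\K$ by an $s_1$-symmetric $\Z_2$-cocycle $c_b \in \Z_2^{H^I}$ dual to $[\gamma_b] \in H_1(\Sigma;\Z_2)$, where $\gamma_b$ is a cycle isotopic to $\partial_b$. Legality at every boundary marking is preserved for any $k_b$, since the alternation of the lifting at a marking depends on the simultaneous behavior of the lifting on both local branches of $\partial\Sigma$, which flip together; the face-constraint $\sum_{h\in H_i}c_b(h)=0$ follows face-by-face from a parity count around each face adjacent to $\partial_b$, analogous to the computation in the proof of Lemma \ref{lem:kasteleyn_props}; and nontriviality of $c_b$ modulo vertex flips is an application of Poincar\'e-Lefschetz duality in the spirit of the simplicity argument at the end of Lemma \ref{lem:vertex_flip}, using that $[\gamma_b] \ne 0$ in the relative homology.

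The hard part will be to show $s(G,\tau_b[\K])/s(G,[\K]) = (-1)^{k_b+1}$, which is $-1$ precisely when $k_b$ is even. The plan is to adapt the chord-diagram technique of Lemma \ref{lem:or_indep}: fix a good ordering $n$ and use Notation \ref{nn:Kasteleyn_good_order}, so that the ratio $\mathfrak{o}_{(G,[\K+c_b])}:\mathfrak{o}_{(G,[\K])}$ equals the parity of the number of crossings among chords connecting $L_\K$ and $L_{\K+c_b}$. The chords arising from the support of $c_b$ are localized around the faces meeting $\partial_b$, and their crossing count reduces, via a winding-number analysis along $\gamma_b$, to $k_b+1\pmod 2$; this matches the formula $\q_{\K}(\gamma_b)=k_b+1$ of Proposition \ref{prop:arf} combined with Remark \ref{rmk:boundary_and_arf}. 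Once this sign reversal is established, pairing classes by $\tau_b$ yields $c_{spin}(G) = \sum s(G,[\K]) = \sum s(G,\tau_b[\K]) = -c_{spin}(G)$, forcing $c_{spin}(G)=0$. A fallback, should the direct chord computation prove stubborn, is to use Proposition \ref{prop:neighboring_c_spin} (read with $e\in E(G)\setminus Br(G)$) to reduce $G$ by Feynman moves to a graph with a boundary loop $e$ on $\partial_b$, for which $G_e = G$ but $[\K_e]\ne[\K]$ realizes $\tau_b$ as the adjacency involution; the requisite sign then follows from the same good-ordering computation that drives the proof of Theorem \ref{thm:orientability}.
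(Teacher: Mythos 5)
Your overall skeleton — reducing to connected non-nodal components via multiplicativity of $c_{spin}$, pairing Kasteleyn classes by the involution $\tau_b$ that flips the lifting along $\partial_b$, and deriving $c_{spin}(G) = -c_{spin}(G) = 0$ from a sign reversal — matches the paper's proof exactly. But the crux, the computation $s(G,\tau_b[\K])/s(G,[\K]) = -1$ for $k_b$ even, is not actually established in either branch of your plan. In the main branch you assert via a "winding-number analysis" that the chord-crossing parity is $k_b+1$ and cite $\q_\K(\gamma_b) = k_b+1$ (Remark \ref{rmk:boundary_and_arf}) as corroboration; but that remark computes the quadratic form $\q$ of a cycle, which is not the same quantity as the sign of the ratio $\mathfrak{o}_{(G,[\K])} : \mathfrak{o}_{(G,\tau_b[\K])}$, and the bridge between the two is precisely the missing step. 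Your fallback has a concrete flaw: a boundary loop $e$ on $\partial_b$ forces $\partial_b$ to carry zero marked points, yet Feynman moves preserve the number of marked points on each boundary component, so for $k_b = 2a > 0$ you cannot reach a boundary loop by Feynman moves, and (separately) the boundary-loop case of Theorem \ref{thm:orientability} is itself deferred to the present lemma, so invoking it here would be circular.

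What the paper does instead is similar in spirit but with a different reduction target. Using Proposition \ref{prop:neighboring_c_spin}, it brings $G$ (without changing $k_b$) to a graph where exactly one internal edge $\{h, s_1h\}$ touches $\partial_b$, picks a good ordering in which the $2a+1$ boundary-edge half edges of $\partial_b$ sit between $h$ and $s_1h$ in the face cycle, and notes that the two representatives $\K_1, \K_2$ of $[\K]$ and $\tau_b[\K]$ can be chosen to differ only by $\K(h)\leftrightarrow\K(s_1h)$. Swapping them moves the single variable $d\ell_{h/s_1}$ across $2a+1$ positions in the wedge, yielding the sign $(-1)^{2a+1} = -1$. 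That explicit $(2a+1)$-slot count is the computation your plan leaves open.
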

\begin{proof}
Again, as $c_{\text{spin}}$ is multiplicative in non-nodal components, it is enough to consider the case of non nodal graphs. Let $\partial\Sigma_b$ be a boundary with an even number of boundary marked point.
Note that given a surface $\Sigma,$ and a boundary component $\partial\Sigma_b,$ graded spin structures on $\Sigma$ can be partitioned into pairs which differ exactly in the lifting of $\partial\Sigma_b.$ Thus, we can partition $[\K(G)]$ into pairs which differ exactly in the boundary conditions at $\partial\Sigma_b.$
In combinatorial terms, for any pair $\{(G,[\K_1]),(G,[\K_2])\}$ in the partition we can find $\K_1\in[\K_1],\K_2\in[\K_2]$ which agree everywhere, except on edges with exactly one vertex in $\partial\Sigma_b,$ where they disagree. We shall show that $s(G,[\K_1])=-s(G,[\K_2]).$

As a consequence of the Proposition \ref{prop:neighboring_c_spin} $c_{\text{spin}}(G,[\K])=c_{\text{spin}}(G_e,[\K_e]),$\\$G\in\oR^0_{g,k,l},e\notin \text{Br}(G)\cup\text{Loop}(G).$
By performing enough such Feynman moves at boundary edges of $G,$ see Figure \ref{fig:Feynman}, moves (b),(c),
we may assume only one non-boundary edge emanates from $\partial\Sigma_b.$
Let $2a$ denote the number of the boundary marked points on $\partial\Sigma_b.$ Note that $\partial\Sigma_b$ is part of the boundary of a single face, say face $1.$ Let $h,s_1(h)$ be the internal half edges which touch $\partial\Sigma_b.$ Choose a good ordering $n$ on $G,$ so that $n(h)=1,n(h_1)=2,\ldots,n(h_{2a+1})=2a+2,n(s_1h)=2a+3$ where $h_i\in H^I$ are the other half edges on $\partial\Sigma_b.$ This can always be done, possibly after interchanging $h$ and $s_1h.$ Choose any $\K_1\in [\K_1],$ and $\K_2\in[\K_2],$ which differ only in their values at $h,s_1h.$ Thus, the sign difference between $\mathfrak{o}_{(G,[\K_1])}$ and $\mathfrak{o}_{(G,[\K_2])}$ is just $(-1)^{2a+1}=-1,$ since we change only the location of the variable $d\ell_{h/s_1},$ by $2a+1$ spots. As claimed.
\end{proof}
We can now prove Proposition \ref{prop:parity_bdries}
\begin{proof}
By Lemma \ref{lem:power_of_2_number_1} the proposition is equivalent to $c_{\text{spin}}(G)=0.$ But $c_{\text{spin}}(G)=\prod_{C\in C(G)}c_{spin(C)},$ which is $0$ by Lemma \ref{lem:vanishing_for_even_bdries}.
\end{proof}
We can now also prove Corollary \ref{cor:G_e orientation_loop}.
\begin{proof}
As above, it is enough to prove for smooth $G.$ The case where $e$ is a boundary loop is a special case of the graph considered in the proof of Lemma \ref{lem:vanishing_for_even_bdries}, and in particular we see that the orientation expressions for $(G,[\K])$ and $(G_e,[\K_e])$ are opposite. Recall that the map $\FFF^{\comb}$ preserves the edge lengths of all edges, but changes the Kasteleyn orientation to $[\K_e].$
By contracting these orientation expressions with the vector $-\frac{\partial}{\partial\ell_e},$ we see that the induced orientation on $\partial_e\oCM_{(G_e,[\K_e])},$ and the $(\FFF^{\text{comb}})^*-$push forward of the induced orientation on $\partial_e\oCM_{(G,[\K])}$ are opposite.
\end{proof}

\begin{lemma}\label{lem:c_spin}
For $G\in\oRc^{0}_{g,k,l},$ we have
\[
c_{\text{spin}}(G) = 2^{\frac{g+b-1}{2}},
\]
where $g$ is the genus of $G,$ and $b,$ is the number of boundaries. For $G\in\oRc^m_{g,k,l},~c_{\text{spin}}(G)=\prod c_{\text{spin}}(G_i),$ where $G_i$ are the smooth components of $G.$
\end{lemma}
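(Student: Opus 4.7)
The nodal statement follows almost formally from the definitions. Given $G\in\oRc^m_{g,k,l}$ with smooth components $G_1,\ldots,G_r$, the orientation $\mathfrak{o}_{(G,[\K])}$ factors by construction as $\prod_i \mathfrak{o}_{(G_i,[\K_i])}$, and the top form $\bigwedge dp_i\wedge \bar\omega^d/d!$ factors analogously over components (as observed in Corollary \ref{cor:int_form}, since $\bar\omega$ splits as a sum over components whose cross-terms vanish for dimensional reasons). Hence $s(G,[\K])=\prod_i s(G_i,[\K_i])$. Combining this with the fact that specifying an equivalence class $[\K]\in[\K(G)]$ is the same as specifying $[\K_i]\in[\K(G_i)]$ for each component (immediate from the nodal definition together with the vanishing of $\K$ on half-nodes produced by contracted bridges), the sum defining $c_{spin}$ factors as claimed.

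For the smooth case, the plan is to reduce, via Feynman-move invariance, to the model graph $G_{\text{mod}}$ of Example \ref{ex:macheta} and then compute directly there. Proposition \ref{prop:neighboring_c_spin} shows $c_{spin}(G)=c_{spin}(G_e)$ for every non-bridge edge $e$. Standard ribbon-graph theory (a trivalent ribbon graph representing a fixed topological surface is unique up to Feynman moves) then reduces the computation to any preferred graph with the prescribed $(g,b,k_1,\ldots,k_b)$. The model $G_{\text{mod}}$ of Example \ref{ex:macheta} is such a graph, and its Kasteleyn enumeration is already laid out there: the equivalence classes $[\K(G_{\text{mod}})]$ are parametrized freely by the binary values $\K(\bar f_i),\K(\bar g_i)$ for $i\in[g_s]$ and $\K(\bar b_{j,j+1})$ for $j\in[b-1]$ (where $g_s=(g-b+1)/2$), giving $|[\K(G_{\text{mod}})]|=2^{2g_s+b-1}=2^g$.

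The heart of the proof is the sign identity
\[
s(G_{\text{mod}},[\K]) \;=\; (-1)^{\operatorname{Arf}([\K])} \;=\; (-1)^{\sum_{i\in[g_s]}\K(\bar f_i)\K(\bar g_i)},
\]
which I would establish by choosing a concrete good ordering $n$ adapted to the model, then comparing $\mathfrak{o}_{(G_{\text{mod}},\K)}$ to a reference orientation by the chord-diagram sign-counting technique already used in Lemma \ref{lem:or_indep}. Specifically, fix the "zero" Kasteleyn orientation $\K_0$ (all $\K(\bar f_i)=\K(\bar g_i)=\K(\bar b_{j,j+1})=0$); verify $s(G_{\text{mod}},[\K_0])=+1$ by a direct check that $\mathfrak{o}_{(G_{\text{mod}},\K_0)}$ agrees with $\bigwedge dp_i\wedge \bar\omega^d/d!$ up to an even permutation of $d\ell$-factors; then, for any other $[\K]$, pass from $\K_0$ to $\K$ by a sequence of flips that toggle $\K(\bar f_i)$ or $\K(\bar g_i)$ or $\K(\bar b_{j,j+1})$, and count the parity of the induced sign change using the chord diagrams. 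Flips on bridge half-edges $\bar b_{j,j+1}$ should contribute an even sign change (because a bridge separates the graph into two pieces so the chords pair up trivially), whereas each simultaneous toggle of $\K(\bar f_i)$ and $\K(\bar g_i)$ contributes a single crossing, accounting for the $\K(\bar f_i)\K(\bar g_i)$ term in Arf. Granted this identity, the classical Arf sum
\[
\sum_{[\K]}(-1)^{\operatorname{Arf}([\K])} \;=\; 2^{b-1}\prod_{i=1}^{g_s}\sum_{a,b\in\{0,1\}}(-1)^{ab} \;=\; 2^{b-1+g_s} \;=\; 2^{(g+b-1)/2}
\]
delivers the lemma.

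The main obstacle is the last sign identity $s(G_{\text{mod}},[\K])=(-1)^{\operatorname{Arf}([\K])}$: the chord-diagram bookkeeping is local around each handle pair $(\bar f_i,\bar g_i)$ and each bridge $\bar b_{j,j+1}$, but making the "handles are independent" and "bridges contribute trivially" slogans into a clean, sign-correct calculation requires choosing the good ordering carefully so that the contributions decouple. Once this decoupling is achieved, the individual handle computation is a six-edge check and the bridge computation is immediate, so the rest of the proof is mechanical.
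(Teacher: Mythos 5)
Your proposal follows the paper's route almost exactly: reduce to smooth components by multiplicativity, use Feynman-move invariance (Proposition \ref{prop:neighboring_c_spin}) to pass to the model graph of Example \ref{ex:macheta}, show that $s(G_{\text{mod}},[\K])$ depends only on the Arf invariant $\sum_i\K(\bar f_i)\K(\bar g_i)$, and finish with the Arf counting $N_0-N_1 = 2^{g_s+b-1}$. The paper's analysis of the $W_i$ wedges and the bridge half-edges is precisely the "chord-diagram bookkeeping localized around each handle pair and bridge" that you describe.

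The one genuine difference is in the endgame, and it is worth noting because the paper's version is simpler. You aim for the stronger identity $s(G_{\text{mod}},[\K]) = (-1)^{\operatorname{Arf}([\K])}$, which requires pinning down the \emph{absolute} sign via a direct verification that $s(G_{\text{mod}},[\K_0])=+1$; you correctly flag this as the main obstacle, and it would require a careful comparison of $\mathfrak{o}_{(G_{\text{mod}},\K_0)}$ with $\bigwedge dp_i\wedge\bar\omega^d/d!$ for the chosen good ordering. The paper sidesteps this entirely: since by Lemma \ref{lem:power_of_2_number_1} the quantity $c_{spin}(G)$ is $2^{-|V^I(G)|}$ times the \emph{magnitude} of a ratio of top forms, it is manifestly nonnegative. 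Hence once you know $s$ is a nontrivial function of Arf alone, $c_{spin}(G_{\text{mod}})$ must equal $\pm(N_0-N_1)=\pm 2^{g_s+b-1}$, and nonnegativity forces the $+$ sign. This makes the "direct check" you plan entirely unnecessary — you only need to establish that the two Arf classes give \emph{opposite} signs, not which one gives $+1$.

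One small imprecision in your sketch: the Kasteleyn value on each of $\bar f_i$, $\bar g_i$, $\bar b_{j,j+1}$ can be toggled independently (all of these belong to face $1$, and each toggle simultaneously flips the partner $f_i$, $g_i$, $b_{j,j+1}$ on the same face, so face-parity is preserved); the Arf invariant changes only for the one combination $\K(\bar f_i)=\K(\bar g_i)=1$ out of the four possibilities per handle, which is why the paper checks all four cases per $W_i$ rather than following a chain of single toggles. Your phrase "simultaneous toggle of $\K(\bar f_i)$ and $\K(\bar g_i)$" obscures that three of the four states all carry the same sign and the fourth is opposite; this is the content of the hand-check the paper performs.
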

\begin{proof}
Again it is enough to consider non-nodal graphs.
By Lemma \ref{lem:power_of_2_number_1}, 
$c_{\text{spin}}(G)\geq 0.$
By Proposition \ref{prop:neighboring_c_spin} $c_{\text{spin}}(G,[\K])=c_{\text{spin}}(G_e,[\K_e]),$ whenever $G\in\oSRc^0_{g,k,l},e\notin \text{Br}(G)\cup\text{Loop}(G).$
Thus, it is enough to calculate $c_{\text{spin}}$ for the graph $\bar G,$ where $G$ is the graph constructed in Example \ref{ex:macheta}, see Figure \ref{fig:macheta}. We shall work with the notation of that example. We shall order the faces according to their labels, and we choose an ordering $n$ of the edges of face $1$ such that $a_1$ is the first edge.
Choose a Kasteleyn orientation and write
\begin{align*}
\mathfrak{o}_G = W_1\wedge W_2&\wedge\dots W_{g_s}\wedge d\ell_{h_2}\wedge d\ell_{x_2}\ldots d\ell_{h_l}\wedge d\ell_{x_l}\wedge\\
&\wedge d\ell_{e_{1,0}}\ldots\wedge d\ell_{e_{1,k_1}}\wedge R\wedge d\ell_{y_2}\ldots d\ell_{y_l},
\end{align*}
where $W_i$ is the wedge of $d\ell_{a_i},d\ell_{b_i},d\ell_{c_i},d\ell_{d_i},d\ell_{f_i},d\ell_{g_i},$ according to the order induced by $\K,~R$ is the wedge of the remaining variables, according to the ordering.
The ordering $n,$ restricted to the half edges which are involved in $W_i,$ is
\[
a_i,f_i,d_i,\bar g_i, c_i,\bar f_i,b_i, g_i.
\]
There are four possibilities for $\K(\bar f_i),\K(\bar g_i).$ Let $\K_i^0$ denote the set of possibilities with $\K(\bar f_i)\K(\bar g_i)=0.$ Let $K_i^1$ be the singleton made of the remaining possibility.
One can check by hand that the form $W_i$ is constant in $K_i^0,$ and minus that constant in the forth possibility.

The ordering restricted to the remaining edges is
\begin{align*}
{b_{1,2}},& e_{2,k_2+1},{b_{2,3}}, e_{3,k_3+1}\ldots,b_{b-1,b},e_{b,0},e_{b,1},\ldots,e_{b,k_b},\\
&{\bar b_{b-1,b}},e_{b-1,0},e_{b-1,1},\ldots,e_{b-1,k_{b-1}},{\bar b_{b-2,b-1}},e_{b-2,0}\ldots,e_{2, k_2}{\bar b_{1,2}}.
\end{align*}
The only freedom in $\K$ is in the values of $\K(b_{j,j+1}).$ The relative order of these edges is
\[
b_{1,2}, b_{2,3},\ldots,b_{b-1,b},\bar b_{b-1,b},\ldots, \bar b_{1,2}.
\]
Observe that between $b_{j,j+1}$ and $\bar b_{j,j+1}$ in the ordering, there is an even number of half edges.
Thus, different assignments of $\K(b_{j,j+1})$ do not change the orientation $\mathfrak{o}_G.$ There are $2^{b-1}$ such assignments, where $b$ is the number of boundary components.

To summarize, $s(G,[\K])$ depends only on $\sum_i\K(\bar f_i)\K(\bar g_i),$ which is just the parity of the graded spin structure, see Remark \ref{rmk:machetta}, and different parities give rise to different signs. By the calculation in Remark \ref{rmk:machetta} we see that $c_{\text{spin}}(G) = \pm2^{\frac{g-b+1}{2}+b-1},$ but since it cannot be negative we end with
$c_{\text{spin}}(G) = 2^{\frac{g+b-1}{2}}.$
\end{proof}
\begin{rmk}
An analogous power of $2$ appears in \cite{Kont} when one wants to calculate the Laplace transform of the integral combinatorial formula.
The method developed in this paper is also applicable to that calculation.
It shows exactly where this power of $2$ comes from, and how is it connected to spin structures. In fact, our $c_{\text{spin}}$ can be thought as an open analog of the push down of the $r=2-$spin Witten's class to the spinless moduli, see \cite{Witten2}.
\end{rmk}
\begin{cor}\label{cor:power_of_2}
For $G\in\oSR^0_{g,k,l},$
\[
\bigwedge dp_i\wedge\frac{\bar{\omega}^d}{d!}: \mathfrak{o}_{(G,[\K])}=s(G,[\K])2^{|V^I(G)|+\frac{g(G)+b(G)-1}{2}}.
\]
\end{cor}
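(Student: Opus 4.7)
The corollary is essentially a direct combination of the two preceding results, Lemma \ref{lem:power_of_2_number_1} and Lemma \ref{lem:c_spin}, so my plan is simply to assemble these. First, I would invoke Lemma \ref{lem:power_of_2_number_1}, which already gives the identity
\[
\bigwedge_{i} dp_i \wedge \frac{\bar{\omega}^d}{d!} : \mathfrak{o}_{(G,[\K])} = s(G,[\K]) \, c_{spin}(G) \, 2^{|V^I(G)|}.
\]
This reduces the problem to computing $c_{spin}(G)$ for a smooth graph $G \in \oSR^0_{g,k,l}$.

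Next, I would note that since $G \in \oSR^0_{g,k,l}$ admits a graded Kasteleyn orientation at all, by Corollary \ref{cor:parity_oR} it is forced to lie in $\oRc^0_{g,k,l}$, i.e.\ each boundary component carries an odd number of boundary marked points. Thus Lemma \ref{lem:c_spin} applies and yields $c_{spin}(G) = 2^{\frac{g(G) + b(G) - 1}{2}}$.

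Substituting this into the displayed identity above produces
\[
\bigwedge_{i} dp_i \wedge \frac{\bar{\omega}^d}{d!} : \mathfrak{o}_{(G,[\K])} = s(G,[\K]) \, 2^{\frac{g(G)+b(G)-1}{2}} \, 2^{|V^I(G)|} = s(G,[\K]) \, 2^{|V^I(G)| + \frac{g(G)+b(G)-1}{2}},
\]
which is the desired formula. There is no real obstacle here — the work has all been done in the two lemmas. The only minor point worth emphasizing in the write-up is the appeal to Corollary \ref{cor:parity_oR} to ensure we are in the odd case and may legitimately apply Lemma \ref{lem:c_spin}.
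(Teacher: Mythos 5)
Your plan of assembling Lemma \ref{lem:power_of_2_number_1} and Lemma \ref{lem:c_spin} is the right one for the cells that matter, but your parity argument has a genuine error. Corollary \ref{cor:parity_oR} asserts only that $\oR^0_{g,k,l}$ is non\-empty precisely when $2\mid g+k-1$, and that in that case it consists of \emph{all} smooth trivalent graphs. It does \emph{not} assert that each graph in $\oR^0_{g,k,l}$ carries an odd number of boundary marked points on every boundary component; that is precisely the distinguishing feature of the strictly smaller subset $\oRc^0_{g,k,l}$. The global parity constraint $2\mid g+k-1$ from Proposition \ref{prop:parity_disks} is a single condition, not one per boundary component. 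Concretely, the paper's own Example $(b)$ in Subsection \ref{opencomb} (the computation of $\langle\tau_1\rangle_1$) exhibits a graph in $\oR^0_{1,0,1}$ whose one boundary component has zero boundary marked points — even — and this graph explicitly ``is not odd.'' So your claim that $G\in\oSR^0_{g,k,l}$ forces $\text{for}_{spin}(G)\in\oRc^0_{g,k,l}$ is false, and Lemma \ref{lem:c_spin} may not be applied across all of $\oSR^0_{g,k,l}$.

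The corollary is nonetheless true for all $G\in\oSR^0_{g,k,l}$, but the correct justification requires splitting cases. For $G$ with all boundary components odd, your computation is correct: Lemma \ref{lem:power_of_2_number_1} plus Lemma \ref{lem:c_spin} give the displayed formula. For $G$ with some even boundary component, Lemma \ref{lem:vanishing_for_even_bdries} gives $c_{spin}(G)=0$, so by Lemma \ref{lem:power_of_2_number_1} the ratio $\bigwedge dp_i\wedge\bar{\omega}^d/d!\,:\,\mathfrak{o}_{(G,[\K])}$ vanishes, hence $s(G,[\K])=0$, and both sides of the asserted identity are zero. You need that second branch (and the citation should be Lemma \ref{lem:vanishing_for_even_bdries}, not Corollary \ref{cor:parity_oR}) for the argument to be complete as a proof of the corollary as stated.
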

\subsection{Laplace transform and the combinatorial formula}
As in the closed case, a more compact formula may be obtained after performing a Laplace transform to \ref{cor:int_form}.

Let $\lambda_i$ be the variable dual to $p_i$ and write, for $e=\{h_1,h_2=s_1h_1\},$
\[
\lambda(e)=
\begin{cases}
\frac{1}{\lambda_i+\lambda_j} & i(h_1)=i,i(h_2)=j\\
\frac{1}{m(e)+1}\binom{2m(e)}{m(e)}\lambda_i^{-2m(e)-1} & i(h_1)=i,h_2\in H^B.
\end{cases}
\]
We also define $\widetilde\lambda(e) = \frac{1}{\lambda(e)}$ for an internal edge and $\widetilde\lambda(e) = {\lambda_{i(e)}}$ for a boundary edge of face $i.$

Applying the transform to the left hand side of \ref{cor:int_form} gives
\begin{align*}
\int_{p_1,\ldots,p_l > 0}\bigwedge dp_i& e^{-\sum \lambda_i p_i}\sum_{\sum a_i = d}\prod_{i\in [l]}\frac{p_i^{2a_i}}{a_i !}2^{\frac{g+k-1}{2}}\langle  \tau_{a_1}\ldots\tau_{a_l}\sigma^k\rangle  \\
&=2^{d+\frac{g+k-1}{2}}\sum_{\sum a_i = d}\prod_{i\in[l]}\frac{(2a_i-1)!!}{\lambda_i^{2a_i +1}}\langle  \tau_{a_1}\ldots\tau_{a_l}\sigma^k\rangle,
\end{align*}
where $d=\frac{k+2l+3g-3}{2}.$

Transforming the right hand side leaves us with
\begin{align*}
\sum_{m\geq 0}&\sum_{G\in \oSRc^{m}_{g,k,l}}\int_{p_1,\ldots,p_l > 0}\bigwedge dp_i e^{-\sum \lambda_i p_i}\prod_{C\in C(G,[\K])}\int_{\CM_C}\widetilde W_C\frac{({\bar{\omega}^C)}^{\dim(C)}}{\dim (C)!} \\
& =\sum_{m\geq 0}\sum_{G\in \oSRc^{m}_{g,k,l}}\int_{p_1,\ldots,p_l > 0}\bigwedge dp_i e^{-\sum \widetilde\lambda(e) \ell_e}\prod_{C\in C(G,[\K])}\int_{\CM_C}\widetilde W_C\frac{{(\bar{\omega}^C)}^{\dim (C)}}{\dim (C)!}
\end{align*}
where we have used the fact that the perimeter of a face is the sum of its edges' lengths.

Recall that
\[
\prod_{C\in C(G,[\K])}\widetilde W_C = \prod_{e\in E^B(G)} \frac{\ell_e^{2m(e)}}{(m(e))!(m(e)+1)!}.
\]
By Corollary \ref{cor:power_of_2}, applied to $(G,[\K])\in \oSRc^0_{g,k,l},$
we have
\[
\frac{\left(\bigwedge_{i\in [l]} dp_i\right) \bar{\omega}^d/d!}{\bigwedge_{e\in E(G)} d\ell_e}=s(G,[\K])2^{|V^I(G)|+\frac{g(G)+b(G)-1}{2}},
\]
the variables in the denominator are ordered by $\mathfrak{o}_{(G,[\K])},$ and $|V^I|,g,b$ are the number of internal vertices of $G,$ its genus and the number of boundary components, respectively.
In addition, $\sum_{[\K]\in[\K(G)]}s(G,[\K])=c_{\text{spin}}=2^{\frac{g+b-1}{2}},$ by Lemma \ref{lem:c_spin}.
Moreover, since $\text{Aut}(G)$ acts on $[\K(G)],$ and is sign preserving,
we see that
$$\sum_{[\K]\in[\K(G)]}s(G,[\K])/|\text{Aut}(G)|=\sum_{[\K]\in[\K(G)]/\text{Aut}(G)}s(G,[\K])/|\text{Aut}(G,[\K])|.$$

Thus, for a fixed $G\in\oRc_{g,k,l}^{m},$ summing over ${\text{for}}_{\text{spin}}^{-1}(G)$ using Observation \ref{obs:oR_and_oSR}, and recalling that $\CM_{(G,[\K])}\simeq\R^{E(G)}/|\text{Aut}(G,[\K])|,$ we get
\begin{align*}
&\sum_{[\K]}\frac{1}{|\text{Aut}(G,[\K])|}\int_{p_1,\ldots,p_l > 0}\bigwedge  dp_i e^{-\sum \widetilde\lambda(e) \ell_e}\prod_{C\in C(G,[\K])}\int_{\R^{E(C)}}\widetilde W_C\frac{(\bar{\omega}^C)^{\dim (C)}}{\dim (C)!} \\
&=\frac{\prod_{C\in C(G)}c(C)}{|\text{Aut}(G)|}\prod_{e\in E\setminus E^B}\int_0^\infty e^{-\widetilde\lambda(e)\ell_e}d\ell_e
\prod_{e\in E^B}\int_0^\infty e^{-\widetilde\lambda(e)\ell_e}\frac{\ell_e^{2m(e)}}{m(e)!(m(e)+1)!}d\ell_e \\
&\quad\quad\quad\quad\quad\quad\quad\quad\quad\quad\quad\quad\quad\quad\quad\quad\quad\quad\quad\quad\quad\quad\quad
=\frac{\prod_{C\in C(G)}c(C)}{|\text{Aut}(G)|}\prod_{e\in E} \lambda(e),
\end{align*}
where $c(C)=2^{|V^I(C)|+{g(C)+b(C)-1}}.$ 
Summing over all $G\in\oRc^*_{g,k,l},$
\[
2^{d+\frac{g+k-1}{2}}\sum_{\sum a_i = d}\prod_{i=1}^l\frac{(2a_i-1)!!}{\lambda_i^{2a_i +1}}\langle  \tau_{a_1}\ldots\tau_{a_l}\sigma^k\rangle = \sum_{G\in\oRc^*_{g,k,l}}\frac{\prod_{C\in C(G)}c(C)}{|\text{Aut}(G)|}\prod_{e\in E} \lambda(e).
\]
And theorem \ref{thm:comb_model} is proven.

\begin{open}\label{rmk:open1}
The moduli space $\oCM_{g,k,l}$ is disconnected, and is composed of components which parameterize different topologies, partitions of boundary markings along boundary components and graded structures.
The boundary conditions of \cite{ST,PST} define in fact an intersection number on each such component, and their sum is what we denote in this work by $\langle\tau_{a_1}\ldots\tau_{a_l}\sigma^k\rangle_g.$
Using the techniques presented in this section one can actually calculate all these refined intersection numbers (see \cite{ABT}).
The intersection numbers $\langle\tau_{a_1}\ldots\tau_{a_l}\sigma^k\rangle_g$ are related to the KdV wave function, and therefore satisfy many recursion relations.
A natural question is whether the refined numbers also satisfy interesting recursion relations, and whether they are related to an integrable hierarchy. \cite{ABT} proposes a conjecture in this direction.
\end{open}

\appendix
\section{Properties of the stratification}
\subsubsection{Proposition \ref{prop:streb_gives_graph}}
%

Fix sets $\I,\B,\PP.$
For a stable open ribbon graph $G,$ write $\CM_G=\R^{E(G)}_+/\text{Aut}(G).$ Let $G_{g,\B,(\I,\PP)}$ be the set of all such graphs with boundary markings, internal markings and internal markings of perimeter $0$ being $\B,\I,\PP$ respectively.
We will show that $\Rcomb$ maps $\oRCM_{g,\B,\I\cup\PP}\to\coprod_{G_{g,\B,(\I,\PP)}}\CM_G(\pp),$ surjectively, and that it is $1:1$ on smooth or effective loci.

\begin{st}
\end{st}
An anti holomorphic involution $\varrho$ of a connected stable curve $X$ is \emph{separating} if $X/\varrho$ is a connected orientable stable surface with boundary.
$X^\varrho$ is called the \emph{real} locus. A \emph{half} of $X$ is a stable connected subsurface with boundary $\Sigma\subseteq X$ such that the composition $\Sigma\\hookrightarrow X\to X/\varrho$ is a homeomorphism.

A \emph{doubled $(g,\B,\I\cup\PP)$-surface} is a closed stable marked surface $X,$ with markings $\{x_i\}_{i\in\B},$ $\{z_i,\bar{z}_i\}_{i\in\I\cup\PP}$ together with a separating anti holomorphic involution $\varrho$ and a preferred half $\Sigma$ which satisfies the following
\begin{enumerate}
\item $\forall i,~x_i\in X^\varrho.$
\item $\forall i,~z_i\in \text{int}(\Sigma).$
\end{enumerate}
\begin{obs}\label{obs:ap1}
There is a natural one to one correspondence between open stable $(g,\B,\I\cup\PP-$surfaces $\Sigma$ and doubled $(g,\B,\I\cup\PP)-$surfaces $(X,\varrho,\Sigma),$
given by $\Sigma\to(D(\Sigma),\Sigma)$, where $\Sigma$ is taken as a subset of $D(\Sigma).$
\end{obs}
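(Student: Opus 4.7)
The plan is to construct explicit maps in both directions and verify they are mutually inverse. The forward map $\Sigma \mapsto (D(\Sigma),\varrho,\Sigma)$ is essentially given to us by the doubling construction of Definition \ref{def:prestable}: the Schwarz reflection produces a closed nodal curve with an anti-holomorphic involution $\varrho$ whose fixed locus is $\partial\Sigma$, and $\Sigma$ itself sits inside $D(\Sigma)$ as a preferred half. First I would check that $(D(\Sigma),\varrho,\Sigma)$ satisfies the required conditions: that $D(\Sigma)$ is a connected closed stable $(g,B,I\cup\PP)$-surface with the prescribed markings (using the doubling of marked points $z_i\mapsto\{z_i,\bar z_i\}$ and $x_i\mapsto x_i$), that $\varrho$ is separating (the quotient is $\Sigma$, which by hypothesis is a connected oriented stable surface with boundary), and that stability of $D(\Sigma)$ follows from stability of $\Sigma$ (for each component, the doubling inequality $2g_\C-2+|I_\C|>0$ follows from $2g+2|I|+|B|>2$).

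For the inverse, given a doubled surface $(X,\varrho,\Sigma)$, I would define the open surface simply as the quotient $X/\varrho$, equipped with the complex structure and marked points inherited from $X$. The involution being separating guarantees that $X/\varrho$ is a connected oriented surface with boundary $X^\varrho/\varrho$, and the anti-holomorphicity of $\varrho$ ensures that the quotient has a well-defined complex structure on its interior (away from $X^\varrho$) while the boundary acquires the standard collar from the fixed locus. The markings transfer naturally: $x_i\in X^\varrho$ descends to a boundary marked point, while each pair $\{z_i,\bar z_i\}$ descends to a single interior marked point (using that $\Sigma$ selects one representative). I would then verify that the open surface so obtained is stable, by inverting the inequality used above.

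The verification that these maps are mutually inverse is essentially formal. Starting from $\Sigma$ and forming $D(\Sigma)/\varrho$ returns $\Sigma$ tautologically from the construction of $D(\Sigma)$ via gluing along $\partial\Sigma$. Conversely, starting from $(X,\varrho,\Sigma)$, the preferred half $\Sigma\subseteq X$ satisfies $\Sigma\cup\varrho(\Sigma)=X$ and $\Sigma\cap\varrho(\Sigma)=X^\varrho=\partial\Sigma$, which is precisely the characterization of $D(\Sigma)$ by Schwarz reflection, so $D(\Sigma)\simeq X$ as doubled surfaces with involution.

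The main subtlety, and essentially the only nontrivial point, is the treatment of nodes and shrunk boundaries. I would need to check that nodes of the open surface correspond correctly to $\varrho$-invariant nodes of $X$: a boundary node of $\Sigma$ lies in $X^\varrho$ and is fixed by $\varrho$, while an internal node of $\Sigma$ gives rise to a $\varrho$-orbit of two nodes in $X$. Shrunk boundaries require special care since they correspond to $\varrho$-fixed nodes whose two branches are interchanged by $\varrho$; one must verify that the ``preferred half'' data of $\Sigma$ correctly records which such fixed nodes come from shrunk boundaries versus genuine boundary nodes. This is really just a matter of unwinding the definitions in Definition \ref{def:prestable}, and poses no conceptual difficulty once the cases are enumerated.
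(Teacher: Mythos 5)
Your proposal is correct: the paper states this as an unproved Observation, and your argument (doubling via Schwarz reflection one way, quotienting by $\varrho$ the other way, checking stability, markings, and the nodal/shrunk-boundary cases against Definition \ref{def:prestable}) is exactly the definitional verification the paper implicitly relies on. The only imprecision is the line $\Sigma\cap\varrho(\Sigma)=X^\varrho=\partial\Sigma$, which fails at the isolated fixed points coming from shrunk boundaries and at interchanged closed components, but you flag and handle precisely these cases in your final paragraph.
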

Note that all components of $X^\varrho$ which are not isolated points are canonically oriented as boundaries of the distinguished half.
\begin{st}
\end{st}
Fix positive $\{p_i\}_{i\in \I}.$ For convenience we denote by $\bar \I,\bar \PP$ the markings of $\bar{z}_i,$ for $i\in \I, \PP.$
We now analyze the image of doubled surfaces $(X,\varrho,\Sigma)$ under the (closed) map $\comb_{\mathbf{q}}$ defined on $\oCM_{g,k+2l},$ where the perimeters $\mathbf{q}$ are defined so that the faces of $z_i,\bar{z}_i,~i\in \I$ have perimeter $p_i$ and the other points are boundary marked points or internal marked with perimeter $0.$
By the construction for closed surfaces, the image is a stable ribbon graph $G$ in the sense of Definition \ref{def:stable_JS}, embedded in $\widetilde X = K_{\B\cup\PP\cup\bar\PP}(X).$
Moreover, $\varrho$ induces an involution, which we also denote by $\varrho,$ on $\widetilde X, G,$ and by Lemma \ref{lem:sym_JS}, $\widetilde X^\varrho\subseteq G.$
Faces and vertices marked by $\I\cup\PP$ are in one distinguished half, $\widetilde\Sigma,$ of $\widetilde X,$ where a half is defined analogously to above.

Write $E^B$ for $\varrho-$invariant edges. Let $H^B$ be their halves which do not agree with the orientation induced by $\widetilde\Sigma.$ Write $V^B$ for $\varrho-$invariant vertices.
Let $V^I$ be vertices in $\text{int}(\widetilde\Sigma)$, and $H^I$ either half edges in $s_1H^B$ or half edges which intersect $\text{int}(\widetilde\Sigma),~E^I=(H^I\setminus s_1H^B)/s_1.$
\begin{obs}
$s_1$ leaves $H^I\cup H^B$ invariant, and that $s_0$ takes $H^I$ to $H^I\cup H^B.$
\end{obs}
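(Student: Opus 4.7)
The two assertions are local, vertex-by-vertex statements about the closed embedded graph $G\hookrightarrow \tilde X$, together with the involution $\varrho$. The plan is to exploit the structure of $\tilde X$ near a point of $\tilde X^\varrho$ and away from it, and then track how the definitions of $H^B, H^I$ distinguish halves.

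First I would handle the $s_1$-invariance. An edge $e$ of $G$ is either $\varrho$-invariant, in which case it is contained in $\tilde X^\varrho$ (this uses Lemma \ref{lem:sym_JS}, which forces $\tilde X^\varrho\subseteq G$), or it is not $\varrho$-invariant, in which case its interior is disjoint from $\tilde X^\varrho$ and lies either in $\mathrm{int}(\tilde\Sigma)$ or in $\mathrm{int}(\varrho(\tilde\Sigma))$. In the first case $e\in E^B$, and by definition one half lies in $H^B$ and the other in $s_1H^B\subseteq H^I$, so $s_1$ swaps these two sets. In the second case, if $e$ has its interior in $\mathrm{int}(\tilde\Sigma)$, both halves intersect $\mathrm{int}(\tilde\Sigma)$ and hence both lie in $H^I$; if its interior is in $\mathrm{int}(\varrho(\tilde\Sigma))$, neither half is in $H^I\cup H^B$. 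Combining these cases gives $s_1(H^I\cup H^B)\subseteq H^I\cup H^B$.

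Next I would treat the $s_0$-claim by splitting on the basepoint $v=h/s_0$ of $h\in H^I$. If $v\in V^I$, then $v\in \mathrm{int}(\tilde\Sigma)$, so every half-edge based at $v$ (including $s_0 h$) intersects $\mathrm{int}(\tilde\Sigma)$, hence lies in $H^I$. If $v\in V^B$, then $v\in \tilde X^\varrho$ and a neighborhood of $v$ in $\tilde X$ is cut by $\tilde X^\varrho$ into two half-disks, with $\tilde\Sigma$ on one side. The half-edges at $v$ split into: the two $\varrho$-invariant boundary halves (one in $H^B$, its image under $s_1$ in $s_1H^B\subseteq H^I$), and the non-invariant ones which appear in two arcs of the cyclic order at $v$ — one arc entirely in $\tilde\Sigma$ (contributing to $H^I$), the mirror arc entirely in $\varrho(\tilde\Sigma)$ (contributing neither to $H^I$ nor $H^B$). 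The two boundary halves separate these two arcs. Therefore, starting at any $h\in H^I$ based at $v$ (either the half in $s_1H^B$ or one of the ``up'' halves), the counterclockwise successor $s_0h$ is either another ``up'' half (still in $H^I$) or the $H^B$ half (when $h$ is the last ``up'' half); in no case does it land in the mirror arc. This yields $s_0(H^I)\subseteq H^I\cup H^B$.

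The only genuinely delicate point is the cyclic-order argument at a boundary vertex: one must invoke that $\varrho$ is orientation-reversing and has $v$ as a fixed point, so its action on the cyclic order at $v$ is an orientation-reversing involution fixing exactly the two boundary halves, forcing the ``up'' halves to form a contiguous arc bounded by the two boundary halves. Once this is in place, both assertions follow immediately from the case analysis outlined above.
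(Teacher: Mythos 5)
Your proof is correct, but it follows a genuinely different route from the paper's. The paper gives a one-line \emph{face}-based argument for the $s_0$ claim: if $h\in H^I$ and $h'=s_0h\notin H^I\cup H^B$, then $s_2(s_1h')=s_0^{-1}h'=h$ shows $s_1h'$ lies in the same $s_2$-face as $h$; that face lies entirely inside $\mathrm{int}(\tilde\Sigma)$ (since $h$ belongs to it) but also in $\mathrm{int}(\varrho(\tilde\Sigma))$ (since $s_1h'$ belongs to it), which is impossible because a face of the Jenkins--Strebel decomposition lives on a single side of the real locus. Your argument instead analyzes the \emph{local} cyclic structure at a vertex $v\in V^B$: you use that $\varrho$ acts on the cyclic order at $v$ as an orientation-reversing involution fixing exactly the two boundary half-edges, so the half-edges on the $\tilde\Sigma$ side form a contiguous arc bounded by $s_1H^B$ and $H^B$. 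This is essentially what the paper \emph{concludes after} the Observation (the $s_0$-ordering $h_1\in s_1H^B,\ldots,h_{r+2}\in H^B,\ldots$), so your argument proves the Observation by what the paper treats as its corollary. Each approach buys something: the face argument is shorter and automatically handles degenerate vertices, while your local argument makes the geometry near the real locus fully explicit and simultaneously proves the $s_1$-invariance, which the paper's printed proof does not address.

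One small point worth tightening: you speak of ``the cyclic order at $v$'' as if $s_0$ restricted to the half-edges at $v$ is a single cycle. For the closed stable ribbon graphs appearing here, $s_0$ at a vertex may have several cycles (one per local branch of $\tilde X$ at a node or at the image of a contracted unmarked component). The same orientation-reversal argument works, but it should be applied cycle-by-cycle, as the paper does when it writes ``inside a cycle in $B_v$''; within each $s_0$-cycle the $\varrho$-fixed half-edges again number exactly two (or zero, when the whole branch lies inside one half), and the rest splits into a $\tilde\Sigma$-arc and its mirror. With this per-cycle reading your argument is complete.
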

Indeed, if there were $h\in H^I,h'\notin H^I\cup H^B,$ with $s_0h=h',$  then there was a common face which contained $h,s_1h'.$ But then this face would intersect both $\text{int}(\widetilde\Sigma),\varrho(\text{int}(\widetilde\Sigma))$, which is impossible.

Let $v$ be a vertex, consider its half edges. The permutation $s_0$ acts on them, and also $\varrho.$ Write $B_v$ for the set of $s_0-$cycles which contain an element of $H^B,$ write $I_v$ for those cycles in $H^I.$
It is easy to see that no $s_0-$cycle contains more than two boundary edges. It follows from the observation that inside a cycle in $B_v$ the half edges are $s_0-$ordered as $h_1,\ldots,h_{2r+2}$ so that 
\[h_1\in s_1H^B,~h_i\in H^I\setminus s_1 H^B,~i\in[r+1]\setminus\{1\},~
h_i=\varrho(h_{i-r-1})
~i\in[2r+2]\setminus[r+1].\]
In particular, $h_{r+2}\in H^B,~h_i\notin (H^I\cup H^B),
~i\in[2r+2]\setminus[r+2].$
Define a permutation $\widetilde s_0$ of $H^I\cup H^B$ which is $s_0$ on $H^I,$ and otherwise, we are in the scenario just described, $\widetilde s_0 h_{r+2}=h_1.$

Define new marking assignments, $f^{I},f^{B},f^{\PPP}$ as follows. $f^{I}$ maps $i\in \I$ to the face which contains $z_i,~f^{B}$ maps $i\in \B$ to the vertex $x_i$ is mapped to. $f^{\PPP}$ is defined similarly.

Recall Notation \ref{nn:isotopy_classes}. Define $\widetilde \D(g,I,B)$ to be the set of isotopy types of smooth doubled $(g,I,B)-$surfaces. Write $\widetilde \D(g,I)=\D(g,I).$
Clearly there exists a canonical identification $\alpha:\widetilde \D(g,I,B)\simeq \D(g,I,B) .$

We can enrich the graph $(G,\varrho)$ with a defect function $d$ on $V^I\cup V^B$ defined as follows. Let $v\in V^I\cup V^B$ be a vertex, consider its preimage $X_v$ in $X.$
If $X_v$ is not a point, then it is a pointed nodal surface, doubled in case $v\in V^B,$ and otherwise just a usual closed one, without $z_i,\bar{z}_i$ for $i\in \I.$
Some of the special points of $X_v$ correspond to nodes whose two halves belong to $X_v.$ Smooth $X_v$ along these nodes.
~There is a unique topological way to perform the smoothing process on a doubled surface, which is consistent with the choice of a half and such that the resulting surface is doubled.
~Define $d(v) \in \D(g(v),I_v\cup(f^{\PPP})^{-1}(v), B_v\cup(f^{B})^{-1}(v))$ to be the class of the smoothed $X_v$ in the doubled case. Otherwise $d(v)$ is the unique element in $\D(g(v),I_v\cup(f^{\PPP})^{-1}(v)).$

The ribbon graph $G,$ together with the involution $\varrho,$ and the \emph{doubled data}, which consists of the sets $H^I,H^B,V^I,V^B,$ and the maps $d,f^{I},f^{B},f^{\PPP}$ is called a \emph{doubled ribbon graph}.
We see that any doubled surface, together with perimeters as above, is associated with a doubled graph. Call this association $\text{Dcomb}$
It now follows from definitions that
\begin{obs}\label{obs:ap2}
There is a canonical bijection $\text{Half}$ between doubled $(g,\B,(\I,\PP))-$metric ribbon graphs, and open $(g,\B,(\I,\PP))-$metric ribbon graphs.
$\text{Half}(G)$ is the graph spanned by $H^I,H^B,V^I,V^B,$ permutations $\widetilde s_0,s_1,$ maps $f^{I},f^{B},f^{\PPP},$ the same genus defect of $G$ and topological defect $\alpha(d).$

$\text{Half}(G)$ is embedded in $\widetilde \Sigma,$ which, after defining the corresponding defects, is exactly $K_{\B,\PP}\Sigma.$
\end{obs}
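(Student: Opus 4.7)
The plan is to prove the observation by exhibiting an explicit inverse map $Double$ and verifying it is inverse to $Half$ on the nose. First I would check that the tuple $(V^I\cup V^B, H^I\cup H^B, \tilde s_0, s_1, f^{\I}, f^{\B}, f^{\PP}, g, \alpha(d))$ extracted from a doubled ribbon graph $G$ satisfies every clause of Definition \ref{def:open_stable_ribbon}. The axioms involving $s_0, s_1$ on $H^B\cup H^I$ follow immediately from the observation inside the construction: $s_1$ preserves $H^I\cup H^B,$ and $s_0$ carries $H^I$ into $H^I\cup H^B,$ so the only place the construction is nontrivial is at a $\varrho$-invariant vertex $v$, where the boundary $s_0$-cycles $h_1,\dots,h_{2r+2}$ contain exactly one pair of boundary half-edges $(h_1, h_{r+2})$; the new permutation $\tilde s_0$ simply collapses each such cycle by erasing the $\varrho$-conjugate arc, and this is manifestly a well-defined cyclic permutation on the remaining half-edges. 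The stability axioms, the requirement that every $h\in H^B$ satisfy $s_1 h\notin H^B$, and the marking axioms are all direct consequences of the corresponding properties of the closed ribbon graph $G$ together with the requirement $x_i\in X^\varrho$ and $z_i\in\mathring\Sigma$ in the doubled surface data.

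Next I would construct the candidate inverse $Double$. Given an open ribbon graph $G'$, form $G_\C'$ by taking two copies $G', \overline{G'}$, defining half-edges as $H(G_\C') := H^I\sqcup H^B\sqcup \overline{H^I\setminus s_1H^B}$, and re-assembling the $s_0$-cycle at each $v\in V^B$ by inserting the $\varrho$-image arc at the unique spot indicated by $\tilde s_0$. The involution $\varrho$ on the resulting object is the obvious interchange of the two copies; the doubled data $(H^I, H^B, V^I, V^B)$ and the markings are inherited by construction. The nontrivial part is the topological defect: at each $v$, I would use that $\alpha^{-1}:D(g(v),I_v,B_v)\to\tilde D(g(v),I_v,B_v)$ is a bijection, so the open defect lifts to a unique doubled defect, and this choice is forced by demanding that the smoothed surface be doubled in the sense reviewed just before Observation \ref{obs:ap2}. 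By inspection $Half\circ Double = \mathrm{id}$ and $Double\circ Half = \mathrm{id}$, because both compositions preserve every piece of defining data.

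For the embedding and metric part of the statement, I would note that when $G$ is embedded in $\tilde X = K_{\B\cup\PP\cup\bar\PP} X$, the $\varrho$-fixed locus $\tilde X^\varrho$ is (by Lemma \ref{lem:sym_JS}, used during the construction of $G$) contained in the $1$-skeleton, hence equals the union of edges in $E^B$ together with vertices in $V^B$. The distinguished half $\tilde\Sigma\subseteq\tilde X$ therefore intersects $G$ precisely in $H^I\cup H^B\cup V^I\cup V^B$, i.e., in $Half(G)$; and the quotient $\tilde X/\varrho$ is canonically $K_{\B,\PP}\Sigma$ by Observation \ref{obs:ap1}. The metric transfers because every edge of $Half(G)$ is either a $\varrho$-invariant edge of $G$ (in which case $\ell_e$ is inherited unchanged) or an internal edge with a unique $\varrho$-partner identified via $s_1$ (again with a canonical length).

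The main obstacle I anticipate is careful bookkeeping at $\varrho$-invariant vertices $v\in V^B$ whose $s_0$-orbit structure mixes internal and boundary half-edges. Specifically, I must verify that the local model $h_1,\dots,h_{2r+2}$ described in the construction is the \emph{only} possibility (ruling out, for instance, cycles with three or more boundary half-edges, or with an odd number of $\varrho$-conjugate pairs between the two boundary half-edges). This is where I expect to need the full force of $\varrho$-invariance together with the symmetry of the JS decomposition around $v$, and where the passage $d\mapsto \alpha(d)$ really has content: the smoothed doubled surface at $v$ must admit a preferred half compatible with the given cyclic arrangement, and this compatibility is what forces $\tilde s_0$ to be a single cyclic orbit rather than a disjoint union of orbits of the open data.
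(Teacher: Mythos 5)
Your proposal is correct and follows essentially the route the paper intends: the paper treats the observation as an immediate consequence of the preceding construction (``It now follows from definitions''), and your argument simply writes out that verification, together with the explicit inverse $Double$ obtained by re-inserting the $\varrho$-conjugate arcs into the boundary $s_0$-cycles. The one point you flag as a potential obstacle --- that a $\varrho$-invariant $s_0$-cycle meeting the real locus contains exactly two boundary half-edges --- is exactly the fact the paper asserts as ``easy to see'' (an anti-holomorphic involution of a local disk fixing its center has a diameter as fixed locus), so no genuine gap remains.
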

Thus, by Observations \ref{obs:ap1},\ref{obs:ap2}, for any $\Sigma\in\oRCM_{g,k,l}$ and perimeters $\pp,$ the symmetric JS differential indeed defines a stable open ribbon graph with perimeters $\pp$ embedded in $K_{\B,\PP}\Sigma.$
\begin{st}
\end{st}
We now show that
\begin{prop}
 $\Rcomb:\oRCM_{g,\B,\I\cup\PP}\times\R^{\I}\to\coprod_{G_{g,\B,(\I,\PP)}}\CM_G,$ is a surjection, and in the smooth case, or more generally when unmarked components are not adjacent and form a moduli of dimension $0,$ it is in fact a bijection on its image. 
\end{prop}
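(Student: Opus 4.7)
The strategy is to reduce the open statement to the known closed analogue (Theorem \ref{thm:minimal_compacs_closed}) through the doubling construction developed in Step 2, and then descend back using uniqueness of the symmetric JS differential.

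First, for surjectivity, I take a metric graph $(G,\ell)\in\CM_G$ for some $G\in G_{g,\B,(\I,\PP)}$ with perimeters $\pp$. By the bijection $Half$ of Observation~\ref{obs:ap2}, $(G,\ell)$ arises as the halving of a unique doubled metric ribbon graph $(DG,D\ell)$, which is in particular a closed stable metric ribbon graph in the sense of Definition~\ref{def:stable_JS}, equipped with a combinatorial involution $\varrho_G$, a chosen half, and perimeters $\mathbf q$ obtained from $\pp$ as in Step~2. By the surjectivity part of the closed theorem (Theorem~\ref{thm:minimal_compacs_closed}, extended to the case where some prescribed perimeters are zero), there exists a closed stable marked surface $(X,\mathbf z,\bar{\mathbf z},\mathbf x)\in\oCM_{g,k+2l+|\PP|}$ whose symmetric JS differential $\gamma_X$ realizes $(DG,D\ell)$. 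The involution $\varrho_G$ then lifts to a unique anti-holomorphic involution $\varrho$ of $X$: indeed, $\varrho^*\gamma_X$ is another JS differential on $X$ with the same poles and residues as $\gamma_X$, hence equals $\gamma_X$ by the uniqueness part of Theorem~\ref{thm:JS}, so $\varrho$ exists as a homeomorphism and is forced to be anti-holomorphic by the compatibility with $\gamma_X$. The preferred half, together with the marked points, defines a doubled $(g,\B,\I\cup\PP)$-surface, and Observation~\ref{obs:ap1} converts it into a stable open surface $\Sigma$. Tracing the construction shows $\Rcomb(\Sigma,\pp)=(G,\ell)$.

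Next, for injectivity on smooth and on effective loci, suppose $\Sigma_1,\Sigma_2\in\RCM_\Gamma$, with $\Gamma$ smooth or effective, satisfy $\Rcomb_\pp(\Sigma_1)=\Rcomb_\pp(\Sigma_2)=(G,\ell)$. Doubling produces closed surfaces $X_i=D(\Sigma_i)$ with involutions $\varrho_i$, each realizing the doubled metric ribbon graph $(DG,D\ell)$ under $\comb_{\mathbf q}$. In the smooth case, or more generally when the associated closed dual graph has no adjacent unmarked components and those unmarked components contribute only isolated points in the graph (which is precisely the effective hypothesis), Theorem~\ref{thm:minimal_compacs_closed} gives an isomorphism $\phi:X_1\to X_2$ of marked closed stable surfaces. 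Applying uniqueness of the symmetric JS differential to both $\phi_*\gamma_{X_1}$ and $\gamma_{X_2}$ shows that $\phi^*\gamma_{X_2}=\gamma_{X_1}$, and therefore $\phi\circ\varrho_1$ and $\varrho_2\circ\phi$ are two anti-holomorphic involutions of $X_2$ both preserving the JS differential and inducing the same combinatorial involution on the ribbon graph; uniqueness again forces $\phi\circ\varrho_1=\varrho_2\circ\phi$. Thus $\phi$ descends to an isomorphism $\Sigma_1\cong\Sigma_2$.

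The main technical obstacle is the second step: recovering the anti-holomorphic involution $\varrho$ on $X$ from the combinatorial involution on the ribbon graph. This is precisely where the uniqueness of the JS differential does the heavy lifting, but one must be careful to track that (i) the perimeters are $\varrho_G$-invariant so that $\varrho^*\gamma_X$ has the same polar data as $\gamma_X$, (ii) the preferred half is well-defined after the lift (equivalently, $\varrho$ fixes the correct boundary trajectories pointwise by Lemma~\ref{lem:sym_JS}), and (iii) in the non-smooth effective case, the unmarked components which were contracted in passing to $\tilde\Sigma$ are recoverable up to isomorphism from the genus and topological defects attached to their images, which is guaranteed by the effectiveness hypothesis (isolated unmarked vertices of genus zero with exactly three special points have no moduli). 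The remaining verifications are bookkeeping checks that the assignments $f^\I,f^\B,f^\PP$, the permutations $s_0,s_1$, and the defect data on $\Sigma$ recovered by the above construction agree with the ones originally prescribed on $G$.
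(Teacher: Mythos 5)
Your overall strategy — reduce to the closed theorem via the doubling constructions of Steps 1--2, then transport both existence and uniqueness of JS differentials across the double — is the same as the paper's. But the key surjectivity step, the recovery of the anti-holomorphic involution $\varrho$, is presented circularly: you write that $\varrho^*\gamma_X$ is a JS differential with the same polar data as $\gamma_X$, hence equal to it, ``so $\varrho$ exists.'' That reasoning presupposes $\varrho$ in order to form $\varrho^*\gamma_X$; it proves that \emph{if} an anti-holomorphic involution $\varrho$ lifting $\varrho_G$ exists then it preserves $\gamma_X$, not that one exists. You need to construct $\varrho$ first. The way the paper does this is to (a) reconstruct only the contracted surface $\tilde X$ from the doubled metric graph, where the closed bijection (Theorem~\ref{thm:minimal_compacs_closed}) really does hold and recovers the complex structure on marked components cell by cell; (b) observe that the combinatorial involution $\varrho_G$ acts on the metric cell decomposition, hence gives a homeomorphism of $\tilde X$ that is anti-holomorphic on marked components because it exchanges the conjugate faces of $z_i$ and $\bar z_i$ and respects the horizontal/vertical foliations; and only then (c) glue in the unmarked components along the singular vertices, choosing representatives for the topological defects. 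Your argument collapses (a)--(c) into one step ``there exists $X$ realizing $(DG,D\ell)$,'' but closed surjectivity alone hands you a surface $X$ that need not carry any compatible anti-involution, and when the unmarked components have moduli there is genuine freedom to get the choice wrong.

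A secondary point: your injectivity argument ends with ``uniqueness again forces $\phi\circ\varrho_1=\varrho_2\circ\phi$,'' but what uniqueness of the JS differential actually gives you is that $\psi:=\phi\circ\varrho_1\circ\phi^{-1}\circ\varrho_2^{-1}$ is a holomorphic automorphism of $X_2$ preserving $\gamma_{X_2}$ and acting trivially on the combinatorial data. Concluding $\psi=\id$ needs one more input — that in the smooth/effective case the combinatorial model is faithful, which is exactly the ``no freedom in the reconstruction'' the paper invokes. You should state this explicitly rather than attributing it to uniqueness of the differential. Your technical-obstacle paragraph shows you see where the difficulties lie, but the argument as written does not overcome the first of them.
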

This proposition is true in the closed case.
By the above construction, it will be enough to show these properties for $\text{Dcomb}.$
By the closed theory, from the doubled metric graph $(G,\ell)$ one can reconstruct the unique surface with extra structure $\widetilde X,$ in which it embeds, including the complex structure on its marked components. Write $\mathbf{q}$ for the set of perimeters of faces of $G.$ It is evident that the perimeters of faces $i,\bar{i}$ are the same.
The involution on $(G,\ell)$ lifts to an involution on $\widetilde X.$
For any singular point $v\in\widetilde X,$ which corresponds the vertex $v$ of the graph, any $s_0-$cycle $\widetilde v$ of half edges corresponds a new marked point labelled $\widetilde v$ in the normalization of $\widetilde\Sigma.$
We define a surface $X$ is follows.
For a singular $v,$ if $v\in V^B,$ replace $v$ by a doubled surface $\Sigma_v,$ in the isotopy class $d(v).$ For a singular $v\in V^I,$ replace $v,\varrho(v)$ by two conjugate closed surfaces $\Sigma_v,\bar{\Sigma}_v,~\Sigma_v$ is in the class of $d(v).$ Note that $\Sigma_v$ is not necessarily stable.
Let $\Sigma_1,\ldots,\Sigma_r$ be the marked components of $\widetilde\Sigma.$ Define
\[X=\text{Stab}((\coprod{X_i}\cup\coprod {X_v})/\sim)\]
where the $\sim$ identifies a marked point in some $\Sigma_v$ which corresponds to a $s_0-$cycle $\widetilde v$ with the corresponding point in some $\Sigma_i.$
$\text{Stab}$ is the stabilization map which contracts an unstable component to a point.

One can easily extend $\varrho$ and the choice of a half to $X,$ and \\$\text{Dcomb}(X,\mathbf{q})=(G,\ell),$ where $\mathbf{q}$ is the set of perimeters.

In the smooth or the more general case described in the statement, we have no freedom in the reconstruction of $X.$

\bibliographystyle{amsabbrvc}
\bibliography{bibli}

\providecommand{\bysame}{\leavevmode\hbox to3em{\hrulefill}\thinspace}
\providecommand{\MR}{\relax\ifhmode\unskip\space\fi MR }
\providecommand{\MRhref}[2]{%
  \href{http://www.ams.org/mathscinet-getitem?mr=#1}{#2}
}
\providecommand{\href}[2]{#2}
\begin{thebibliography}{10}

\bibitem{Streb}
Quadratic Differentials, Ergebnisse der Mathematik und ihrer Grenzgebiete,
  vol.~5, 1984.

\bibitem{Alexandrov}
A.~Alexandrov, \emph{Open intersection numbers, matrix models and {MKP}
  hierarchy}, Journal of High Energy Physics \textbf{3} (2015), 42.

\bibitem{ABT}
B.~A. Alexandrov, A. and R.~J. Tessler, \emph{Refined open intersection numbers
  and the {K}ontsevich-{P}enner matrix model}, Journal of High Energy Physics
  \textbf{2017:123} (2017).

\bibitem{Rim}
A.~{Bawane}, H.~{Muraki}, and C.~{Rim}, \emph{{Open KdV hierarchy and minimal
  gravity on disk}}, Physics Letters B \textbf{783} (2018), 183--185.

\bibitem{BottTu}
R.~Bott and L.~W. Tu, \emph{Differential forms in algebraic topology}, vol.~82,
  Springer Science \& Business Media, New York, 1982.

\bibitem{BPTZ}
A.~Buryak, A.~Netser~Zernik, R.~Pandharipande, and R.~J. Tessler,
  \emph{Stationary open $({C} {P}^1,{R} {P}^1)$ intersection theory and the
  open {G}{W}/{H} correspondence}, to appear.

\bibitem{BT}
A.~Buryak and R.~J. Tessler, \emph{Matrix models and a proof of the open analog
  of {W}itten's conjecture}, Communications in Mathematical Physics
  \textbf{353} (2017), no.~3, 1299--1328.

\bibitem{Bur2}
A.~Buryak, \emph{Open intersection numbers and the wave function of the {K}d{V}
  hierarchy}, 2014, \href {http://arxiv.org/abs/1409.7957}
  {\path{arXiv:1409.7957}}.

\bibitem{Bur}
\bysame, \emph{Equivalence of the open {K}d{V} and the open {V}irasoro
  equations for the moduli space of {R}iemann surfaces with boundary}, Letters
  in Mathematical Physics (2015), 1--22.

\bibitem{BCT}
A.~Buryak, E.~Clader, and R.~J. Tessler, \emph{Open $r$-spin theory and the
  {G}elfand--{D}ickey wave function}, 2018, \href
  {http://arxiv.org/abs/1809.02536} {\path{arXiv:1809.02536}}.

\bibitem{Cimasoni}
D.~Cimasoni and N.~Reshetikhin, \emph{Dimers on surface graphs and spin
  structures. {I}}, Comm. Math. Phys. \textbf{275} (2007), no.~1, 187--208.

\bibitem{DM}
P.~Deligne and D.~Mumford, \emph{The irreducibility of the space of curves of
  given genus}, Inst. Hautes \'Etudes Sci. Publ. Math. (1969), no.~36, 75--109.

\bibitem{DVV}
R.~Dijkgraaf, H.~Verlinde, and E.~Verlinde, \emph{Loop equations and {V}irasoro
  constraints in non-perturbative two-dimensional quantum gravity}, Nuclear
  Physics~B~348 (1991), no.~3, 435--456.

\bibitem{DijkWit}
R.~Dijkgraaf and E.~Witten, \emph{Developments in {T}opological {G}ravity},
  2018, \href {http://arxiv.org/abs/1804.03275} {\path{arXiv:1804.03275}}.

\bibitem{HM}
J.~Harris and I.~Morrison, \emph{Moduli of curves}, Graduate Texts in
  Mathematics, vol. 187, Springer-Verlag, New York, 1998.

\bibitem{Jarvispin}
T.~J. Jarvis, T.~Kimura, and A.~Vaintrob, \emph{Moduli spaces of higher spin
  curves and integrable hierarchies}, Compositio Math. \textbf{126} (2001),
  no.~2, 157--212.

\bibitem{Johnson}
D.~Johnson, \emph{Spin structures and quadratic forms on surfaces}, Journal of
  the London Mathematical Society \textbf{2} (1980), no.~2, 365--373.

\bibitem{Joyce}
D.~Joyce, \emph{On manifolds with corners}, to appear in proceedings of ``The
  Conference on Geometry,'' in honour of S.-T. Yau, Advanced Lectures in
  Mathematics Series, International Press, 2011, \href
  {http://arxiv.org/abs/0910.3518} {\path{arXiv:0910.3518}}.

\bibitem{Joyce2}
\bysame, \emph{D-manifolds, d-orbifolds and derived differential geometry: a
  detailed summary},  (2012), \href {http://arxiv.org/abs/1208.4948}
  {\path{arXiv:1208.4948}}.

\bibitem{Kastel}
P.~W. Kasteleyn, \emph{Dimer statistics and phase transitions}, J. Mathematical
  Phys. \textbf{4} (1963), 287--293.

\bibitem{KatzLiu}
S.~Katz and C.-C.~M. Liu, \emph{Enumerative geometry of stable maps with
  {L}agrangian boundary conditions and multiple covers of the disc}, The
  interaction of finite-type and {G}romov-{W}itten invariants ({BIRS} 2003),
  Geom. Topol. Monogr., vol.~8, Geom. Topol. Publ., Coventry, 2006, pp.~1--47.

\bibitem{Kont}
M.~Kontsevich, \emph{Intersection theory on the moduli space of curves and the
  matrix {A}iry function}, Comm. Math. Phys. \textbf{147} (1992), no.~1, 1--23.

\bibitem{Kuper}
G.~Kuperberg, \emph{An exploration of the permanent-determinant method},
  Journal of combinatorics \textbf{5} (1998), 707--740.

\bibitem{Looij}
E.~Looijenga, \emph{Cellular decompositions of compactified moduli spaces of
  pointed curves}, The moduli space of curves ({T}exel {I}sland, 1994), Progr.
  Math., vol. 129, Birkh\"auser Boston, Boston, MA, 1995, pp.~369--400.

\bibitem{Mir}
M.~Mirzakhani, \emph{Weil-{P}etersson volumes and intersection theory on the
  moduli space of curves}, J. Amer. Math. Soc. \textbf{20} (2007), no.~1, 1--23
  (electronic).

\bibitem{ZernikPrep}
A.~{Netser Zernik}, \emph{Equivariant {G}romov-{W}itten theory for
  $(\mathbb{CP}^{2m},\mathbb{RP}^{2m})$},  (2017), \href
  {http://arxiv.org/abs/1703.02950} {\path{arXiv:1703.02950}}.

\bibitem{Zernik}
\bysame, \emph{{M}oduli of {O}pen {S}table {M}aps to a {H}omogeneous {S}pace},
  2017, \href {http://arxiv.org/abs/1709.07402} {\path{arXiv:1709.07402}}.

\bibitem{OP}
A.~Okounkov and R.~Pandharipande, \emph{Gromov-{W}itten theory, {H}urwitz
  numbers, and matrix models}, Algebraic geometry---{S}eattle 2005. {P}art 1,
  Proc. Sympos. Pure Math., vol.~80, Amer. Math. Soc., Providence, RI, 2009,
  pp.~325--414.

\bibitem{PST}
R.~Pandharipande, J.~P. Solomon, and R.~J. Tessler, \emph{Intersection theory
  on moduli of disks, open {K}d{V} and {V}irasoro}, 2014, \href
  {http://arxiv.org/abs/1409.2191} {\path{arXiv:1409.2191}}.

\bibitem{Saf16b}
B.~{Safnuk}, \emph{{Combinatorial models for moduli spaces of open Riemann
  surfaces}}.

\bibitem{Saf16a}
\bysame, \emph{{Topological recursion for open intersection numbers}}, 2016,
  \href {http://arxiv.org/abs/1601.04049} {\path{arXiv:1601.04049}}.

\bibitem{ST0}
J.~P. Solomon and R.~J. Tessler, \emph{Graded spin surfaces and the open {A}rf
  invariant}, to appear.

\bibitem{ST2}
\bysame, \emph{The tautological ring for open intersection theory}, to appear.

\bibitem{ST}
J.~P. Solomon and R.~J. Tessler, \emph{Intersection theory on the moduli space
  of graded riemann surfaces with boundary}, to appear.

\bibitem{Witten}
E.~Witten, \emph{Two-dimensional gravity and intersection theory on moduli
  space}, Surveys in differential geometry ({C}ambridge, {MA}, 1990), Lehigh
  Univ., Bethlehem, PA, 1991, pp.~243--310.

\bibitem{Witten2}
\bysame, \emph{Algebraic geometry associated with matrix models of
  two-dimensional gravity}, Topological methods in modern mathematics ({S}tony
  {B}rook, {NY}, 1991), Publish or Perish, Houston, TX, 1993, pp.~235--269.

\bibitem{Zvonk}
D.~Zvonkine, \emph{{Strebel differentials on stable curves and Kontsevich's
  proof of Witten's conjecture}},  (2002), \href
  {http://arxiv.org/abs/math/0209071} {\path{arXiv:math/0209071}}.

\end{thebibliography}


\providecommand{\bysame}{\leavevmode\hbox to3em{\hrulefill}\thinspace}
\providecommand{\MR}{\relax\ifhmode\unskip\space\fi MR }
\providecommand{\MRhref}[2]{%
  \href{http://www.ams.org/mathscinet-getitem?mr=#1}{#2}
}
\providecommand{\href}[2]{#2}
\begin{thebibliography}{10}

\bibitem{Alexandrov}
A.~Alexandrov, \emph{Open intersection numbers, matrix models and {MKP}
  hierarchy}, Journal of High Energy Physics \textbf{3} (2015), 42, \href
  {http://dx.doi.org/10.1007/JHEP03(2015)042}
  {\path{doi:10.1007/JHEP03(2015)042}}.

\bibitem{ABT}
A.~Alexandrov, A.~Buryak, and R.~J. Tessler, \emph{Refined open intersection
  numbers and the {K}ontsevich-{P}enner matrix model}, Journal of High Energy
  Physics \textbf{2017:123} (2017).

\bibitem{Rim}
A.~{Bawane}, H.~{Muraki}, and C.~{Rim}, \emph{{Open KdV hierarchy and minimal
  gravity on disk}}, Physics Letters B \textbf{783} (2018), 183--185, \href
  {http://dx.doi.org/10.1016/j.physletb.2018.06.063}
  {\path{doi:10.1016/j.physletb.2018.06.063}}.

\bibitem{BottTu}
R.~Bott and L.~W. Tu, \emph{Differential forms in algebraic topology}, vol.~82,
  Springer Science \& Business Media, New York, 1982.

\bibitem{BPTZ2}
A.~Buryak, A.~Netser~Zernik, R.~Pandharipande, and R.~J. Tessler,
  \emph{Stationary open $\mathbb{CP}^1$ descendent theory {II}: {T}he open
  {G}{W}/{H} correspondence}, to appear.

\bibitem{BPTZ}
\bysame, \emph{Stationary open $\mathbb{CP}^1$ descendent theory {I}: {T}he
  stationary sector},  (2020), \href {http://arxiv.org/abs/2003.00550}
  {\path{arXiv:2003.00550}}.

\bibitem{BT}
A.~Buryak and R.~J. Tessler, \emph{Matrix models and a proof of the open analog
  of {W}itten's conjecture}, Communications in Mathematical Physics
  \textbf{353} (2017), no.~3, 1299--1328.

\bibitem{Bur2}
A.~Buryak, \emph{Open intersection numbers and the wave function of the {K}d{V}
  hierarchy}, 2014, \href {http://arxiv.org/abs/1409.7957}
  {\path{arXiv:1409.7957}}.

\bibitem{Bur}
\bysame, \emph{Equivalence of the open {K}d{V} and the open {V}irasoro
  equations for the moduli space of {R}iemann surfaces with boundary}, Letters
  in Mathematical Physics (2015), 1--22, \href
  {http://dx.doi.org/10.1007/s11005-015-0789-3}
  {\path{doi:10.1007/s11005-015-0789-3}}.

\bibitem{BCT2}
A.~Buryak, E.~Clader, and R.~J. Tessler, \emph{Open $r$-spin theory {II}: {T}he
  analogue of {W}itten's conjecture for $ r $-spin disks},  (2018), \href
  {http://arxiv.org/abs/1809.02536} {\path{arXiv:1809.02536}}.

\bibitem{BCT1}
\bysame, \emph{Open $r$-spin theory {I}: {F}oundations}, International
  Mathematics Research Notices (2021), \href
  {http://dx.doi.org/10.1093/imrn/rnaa345} {\path{doi:10.1093/imrn/rnaa345}}.

\bibitem{Cimasoni}
D.~Cimasoni and N.~Reshetikhin, \emph{Dimers on surface graphs and spin
  structures. {I}}, Comm. Math. Phys. \textbf{275} (2007), no.~1, 187--208,
  \href {http://dx.doi.org/10.1007/s00220-007-0302-7}
  {\path{doi:10.1007/s00220-007-0302-7}}.

\bibitem{DM}
P.~Deligne and D.~Mumford, \emph{The irreducibility of the space of curves of
  given genus}, Inst. Hautes \'Etudes Sci. Publ. Math. (1969), no.~36, 75--109.

\bibitem{DVV}
R.~Dijkgraaf, H.~Verlinde, and E.~Verlinde, \emph{Loop equations and {V}irasoro
  constraints in non-perturbative two-dimensional quantum gravity}, Nuclear
  Physics~B~348 (1991), no.~3, 435--456.

\bibitem{DijkWit}
R.~Dijkgraaf and E.~Witten, \emph{Developments in {T}opological {G}ravity},
  2018, \href {http://arxiv.org/abs/1804.03275} {\path{arXiv:1804.03275}}.

\bibitem{Fuk}
K.~Fukaya and K.~Ono, \emph{Arnold conjecture and gromov-witten invariant for
  general symplectic manifolds}, The Arnoldfest (Toronto, ON, 1997) \textbf{24}
  (1999), 173--190.

\bibitem{HM}
J.~Harris and I.~Morrison, \emph{Moduli of curves}, Graduate Texts in
  Mathematics, vol. 187, Springer-Verlag, New York, 1998.

\bibitem{Jarvispin}
T.~J. Jarvis, T.~Kimura, and A.~Vaintrob, \emph{Moduli spaces of higher spin
  curves and integrable hierarchies}, Compositio Math. \textbf{126} (2001),
  no.~2, 157--212, \href {http://dx.doi.org/10.1023/A:1017528003622}
  {\path{doi:10.1023/A:1017528003622}}.

\bibitem{Jarvis2}
T.~Jarvis, \emph{Geometry of the moduli of higher spin curves}, International
  Journal of Mathematics \textbf{11} (2000), no.~5, 637--663.

\bibitem{Johnson}
D.~Johnson, \emph{Spin structures and quadratic forms on surfaces}, Journal of
  the London Mathematical Society \textbf{2} (1980), no.~2, 365--373.

\bibitem{Joyce}
D.~Joyce, \emph{On manifolds with corners}, to appear in proceedings of ``The
  Conference on Geometry,'' in honour of S.-T. Yau, Advanced Lectures in
  Mathematics Series, International Press, 2011, \href
  {http://arxiv.org/abs/0910.3518} {\path{arXiv:0910.3518}}.

\bibitem{Joyce2}
\bysame, \emph{D-manifolds, d-orbifolds and derived differential geometry: a
  detailed summary},  (2012), \href {http://arxiv.org/abs/1208.4948}
  {\path{arXiv:1208.4948}}.

\bibitem{Kastel}
P.~W. Kasteleyn, \emph{Dimer statistics and phase transitions}, J. Mathematical
  Phys. \textbf{4} (1963), 287--293.

\bibitem{KatzLiu}
S.~Katz and C.-C.~M. Liu, \emph{Enumerative geometry of stable maps with
  {L}agrangian boundary conditions and multiple covers of the disc}, The
  interaction of finite-type and {G}romov-{W}itten invariants ({BIRS} 2003),
  Geom. Topol. Monogr., vol.~8, Geom. Topol. Publ., Coventry, 2006, pp.~1--47,
  \href {http://dx.doi.org/10.2140/gtm.2006.8.1}
  {\path{doi:10.2140/gtm.2006.8.1}}.

\bibitem{Kont}
M.~Kontsevich, \emph{Intersection theory on the moduli space of curves and the
  matrix {A}iry function}, Comm. Math. Phys. \textbf{147} (1992), no.~1, 1--23.

\bibitem{Kuper}
G.~Kuperberg, \emph{An exploration of the permanent-determinant method},
  Journal of combinatorics \textbf{5} (1998), 707--740.

\bibitem{Liu}
C.-C.~M. Liu, \emph{Moduli of {J}-holomorphic curves with {L}agrangian boundary
  conditions and open {G}romov-{W}itten invariants for an {$S^1$}-equivariant
  pair}, preprint, \href {http://arxiv.org/abs/math/0210257v2}
  {\path{arXiv:math/0210257v2}}.

\bibitem{Looij}
E.~Looijenga, \emph{Cellular decompositions of compactified moduli spaces of
  pointed curves}, The moduli space of curves ({T}exel {I}sland, 1994), Progr.
  Math., vol. 129, Birkh\"auser Boston, Boston, MA, 1995, pp.~369--400, \href
  {http://dx.doi.org/10.1007/978-1-4612-4264-2_13}
  {\path{doi:10.1007/978-1-4612-4264-2_13}}.

\bibitem{Mir}
M.~Mirzakhani, \emph{Weil-{P}etersson volumes and intersection theory on the
  moduli space of curves}, J. Amer. Math. Soc. \textbf{20} (2007), no.~1, 1--23
  (electronic), \href {http://dx.doi.org/10.1090/S0894-0347-06-00526-1}
  {\path{doi:10.1090/S0894-0347-06-00526-1}}.

\bibitem{ZernikPrep}
A.~{Netser Zernik}, \emph{Equivariant {G}romov-{W}itten theory for
  $(\mathbb{CP}^{2m},\mathbb{RP}^{2m})$},  (2017), \href
  {http://arxiv.org/abs/1703.02950} {\path{arXiv:1703.02950}}.

\bibitem{Zernik}
\bysame, \emph{{M}oduli of {O}pen {S}table {M}aps to a {H}omogeneous {S}pace},
  arXiv:1709.07402, 2017.

\bibitem{OP}
A.~Okounkov and R.~Pandharipande, \emph{Gromov-{W}itten theory, {H}urwitz
  numbers, and matrix models}, Algebraic geometry---{S}eattle 2005. {P}art 1,
  Proc. Sympos. Pure Math., vol.~80, Amer. Math. Soc., Providence, RI, 2009,
  pp.~325--414.

\bibitem{PST}
R.~Pandharipande, J.~P. Solomon, and R.~J. Tessler, \emph{Intersection theory
  on moduli of disks, open {K}d{V} and {V}irasoro},  (2014), \href
  {http://arxiv.org/abs/1409.2191} {\path{arXiv:1409.2191}}.

\bibitem{Saf16b}
B.~{Safnuk}, \emph{Combinatorial models for moduli spaces of open {R}iemann
  surfaces}, arXiv:1609.07226, 2016.

\bibitem{Saf16a}
\bysame, \emph{{Topological recursion for open intersection numbers}},
  arXiv:1601.04049, 2016.

\bibitem{ST0}
J.~P. Solomon and R.~J. Tessler, \emph{Graded spin surfaces and the open {A}rf
  invariant}, to appear.

\bibitem{ST}
J.~P. Solomon and R.~J. Tessler, \emph{Intersection theory on the moduli space
  of graded riemann surfaces with boundary}, to appear.

\bibitem{SaraJake}
J.~P. Solomon and S.~B. Tukachinsky, \emph{Differential forms on orbifolds with
  corners}, 2021, \href {http://arxiv.org/abs/2011.10030}
  {\path{arXiv:2011.10030}}.

\bibitem{Streb}
K.~Strebel, \emph{Quadratic differentials}, Quadratic Differentials, Ergebnisse
  der Mathematik und ihrer Grenzgebiete, vol.~5, Springer Berlin Heidelberg,
  1984, pp.~16--26 (English), \href
  {http://dx.doi.org/10.1007/978-3-662-02414-0_2}
  {\path{doi:10.1007/978-3-662-02414-0_2}}.

\bibitem{Witten}
E.~Witten, \emph{Two-dimensional gravity and intersection theory on moduli
  space}, Surveys in differential geometry ({C}ambridge, {MA}, 1990), Lehigh
  Univ., Bethlehem, PA, 1991, pp.~243--310.

\bibitem{Witten2}
\bysame, \emph{Algebraic geometry associated with matrix models of
  two-dimensional gravity}, Topological methods in modern mathematics ({S}tony
  {B}rook, {NY}, 1991), Publish or Perish, Houston, TX, 1993, pp.~235--269.

\bibitem{Zvonk}
D.~Zvonkine, \emph{{Strebel differentials on stable curves and Kontsevich's
  proof of Witten's conjecture}},  (2002), \href
  {http://arxiv.org/abs/math/0209071} {\path{arXiv:math/0209071}}.

\end{thebibliography}

\end{document}